\newcommand{\CAMERA}[1]{}
\newcommand{\REPORT}[1]{#1}
\newcommand{\CHANGED}[1]{#1}
\newcommand{\REMOVED}[1]{}
\newcommand{\qslemb}[1]{\ensuremath{\mathtt{qsl}\llbracket #1 \rrbracket}}
\newcommand{\SLsingleton}[2]{#1 \mapsto #2}
\newcommand{\SLvalidpointer}[1]{#1 \mapsto \,{-}\,}
\newcommand{\SLwp}[2]{\underline{\wp{#1}{#2}}}
\newcommand{\emp}{\iverson{\sfsymbol{\textbf{emp}}}}
\newcommand{\singleton}[2]{\iverson{#1 \mapsto #2}}
\newcommand{\containsPointer}[2]{\iverson{#1 \hookrightarrow #2}}
\newcommand{\validpointer}[1]{\iverson{#1 \mapsto \,{-}\,}}
\newcommand{\heapSize}{\sfsymbol{\textbf{size}}}
\newcommand{\sepcon}{\mathbin{{\star}}}
\newcommand{\sepimp}{\mathbin{\text{\raisebox{-0.1ex}{$\boldsymbol{{-}\hspace{-.55ex}{-}}$}}\hspace{-1ex}\text{\raisebox{0.13ex}{\rotatebox{-17}{$\star$}}}}}
\newcommand{\nil}{0}
\newcommand{\bbigsepcon}[2]{\ensuremath{\overset{#2}{\underset{#1}{\bigstar}}}}
\newcommand{\QSL}{\sfsymbol{QSL}\xspace}
\newcommand{\SL}{\sfsymbol{SL}\xspace}
\newcommand{\hoare}[3]{\left\langle \,{#1}\vphantom{#3}\, \right\rangle \mathrel{#2} \left\langle \, {#3}\vphantom{#1} \, \right\rangle}
\newcommand{\sfsymbol}[1]{\textsf{\upshape {#1}}}
\newcommand{\ttsymbol}[1]{\texttt{\upshape {#1}}}
\newcommand{\Assert}[1]{\ensuremath{\text{\footnotesize ~\texttt{//}~$\displaystyle #1$} }}
\newcommand{\wpsymbol}{\sfsymbol{wp}}
\renewcommand{\wp}[2]{\wpsymbol\llbracket#1\rrbracket\left(#2\right)}
\newcommand{\extwp}[2]{\Ext{\wpsymbol}\llbracket#1\rrbracket\left(#2\right)}
\newcommand{\wpC}[1]{\wpsymbol\llbracket#1\rrbracket}
\newcommand{\awpsymbol}{\sfsymbol{awp}}
\newcommand{\awp}[2]{\awpsymbol\llbracket#1\rrbracket\left(#2\right)}
\newcommand{\wlpsymbol}{\sfsymbol{wlp}}
\newcommand{\wlp}[2]{\wlpsymbol\llbracket#1\rrbracket\left(#2\right)}
\newcommand{\esepcon}{\mathbin{{\bullet}}}
\newcommand{\esepimp}{\mathbin{\text{\raisebox{-0.1ex}{$\boldsymbol{{-}\hspace{-.55ex}{-}}$}}\hspace{-1ex}\text{\raisebox{0.13ex}{\rotatebox{-17}{$\bullet$}}}}}
\newcommand{\wepsymbol}{\sfsymbol{wep}}
\newcommand{\wep}[2]{\wepsymbol\llbracket#1\rrbracket\left(#2\right)}
\newcommand{\awlpsymbol}{\sfsymbol{awlp}}
\newcommand{\awlp}[2]{\awlpsymbol\llbracket#1\rrbracket\left(#2\right)}
\newcommand{\awepsymbol}{\sfsymbol{awep}}
\newcommand{\awep}[2]{\awepsymbol\llbracket#1\rrbracket\left(#2\right)}
\newcommand{\awlepsymbol}{\sfsymbol{awlep}}
\newcommand{\awlep}[2]{\awlepsymbol\llbracket#1\rrbracket\left(#2\right)}
\newcommand{\wlepsymbol}{\sfsymbol{wlep}}
\newcommand{\wlep}[2]{\wlepsymbol\llbracket#1\rrbracket\left(#2\right)}
\newcommand{\conditionalPair}[2]{{\let\oldarraystretch\arraystretch}\renewcommand{\arraystretch}{1}~\holter{~\raisebox{.5ex}{${#1}$}~}{~\raisebox{.125ex}{${#2}$}~}~\renewcommand{\arraystretch}{\oldarraystretch}}
\newcommand{\cc}{\ensuremath{c}} 
\newcommand{\guard}{\ensuremath{b}} 
\newcommand{\ee}{\ensuremath{e}} 
\newcommand{\preda}{\ensuremath{\varphi}} 
\newcommand{\predb}{\ensuremath{\psi}} 
\newcommand{\predc}{\ensuremath{\vartheta}}
\newcommand{\hh}{\ensuremath{h}} 
\newcommand{\sk}{\ensuremath{s}} 
\newcommand{\pp}{\ensuremath{p}} 
\newcommand{\ff}{\ensuremath{X}} 
\newcommand{\fg}{\ensuremath{Y}}
\newcommand{\fh}{\ensuremath{Z}}
\newcommand{\fk}{\ensuremath{R}}
\newcommand{\inv}{\ensuremath{I}} 
\newcommand{\oa}{\ensuremath{\alpha}} 
\newcommand{\ob}{\ensuremath{\beta}} 
\newcommand{\oc}{\ensuremath{\delta}} 
\newcommand{\za}{\ensuremath{\alpha}} 
\newcommand{\zb}{\ensuremath{\beta}}
\newcommand{\zc}{\ensuremath{\gamma}}
\newcommand{\SKIP}{\ttsymbol{skip}}
\newcommand{\AssignSymbol}{\mathrel{\textnormal{\texttt{:=}}}}
\newcommand{\ASSIGN}[2]{\ensuremath{#1 \AssignSymbol #2}}
\newcommand{\UNIFORM}[2]{\ensuremath{\mathtt{uniform}\left(#1,#2\right)}}
\newcommand{\ASSIGNUNIFORM}[3]{\ensuremath{#1 \AssignSymbol \UNIFORM{#2}{#3}}}
\newcommand{\ALLOC}[2]{\ensuremath{{#1} \AssignSymbol \mathtt{new}\left( #2 \right)}}
\newcommand{\AVAILLOC}[1]{\PosNats}
\newcommand{\dereference}[1]{\texttt{<}\,#1\,\texttt{>}}
\newcommand{\HASSIGN}[2]{\ensuremath{\dereference{#1} \AssignSymbol #2}}
\newcommand{\ASSIGNH}[2]{\ensuremath{#1 \AssignSymbol \dereference{#2}}}
\newcommand{\FREE}[1]{\ensuremath{\mathtt{free}(#1)}}
\newcommand{\SEMI}{\ensuremath{\,;\,}}
\newcommand{\COMPOSE}[2]{\ensuremath{{#1}{\,;}~ {#2}}}
\newcommand{\PCHOICE}[3]{\ensuremath{\left\{\, {#1} \,\right\}\mathrel{\left[\,#2\,\right]}\left\{\, {#3} \,\right\}}}
\newcommand{\IFSYMBOL}{\ensuremath{\textnormal{\texttt{if}}}}
\newcommand{\IF}[1]{\ensuremath{\IFSYMBOL\,\left(\, {#1} \,\right)\,\{}}
\newcommand{\ELSESYMBOL}{\ensuremath{\textnormal{\texttt{else}}}}
\newcommand{\ELSE}{\ensuremath{\}\,\ELSESYMBOL\,\{}}
\newcommand{\ITE}[3]{\ensuremath{\IFSYMBOL\,\left(\, {#1} \,\right)\,\left\{\, {#2} \,\right\}\,\ELSESYMBOL\,\left\{\, {#3} \,\right\}}}
\newcommand{\WHILESYMBOL}{\ensuremath{\textnormal{\texttt{while}}}}
\newcommand{\WHILE}[1]{\ensuremath{\WHILESYMBOL \left(\, {#1} \,\right)\left\{\right.}}
\newcommand{\WHILEDO}[2]{\ensuremath{\WHILESYMBOL \left(\, {#1} \,\right)\left\{\, {#2} \,\right\}}}
\newcommand{\hpgcl}{\textnormal{\sfsymbol{hpGCL}}\xspace}   
\newcommand{\boldhpgcl}{\textnormal{\textbf{\sfsymbol{hpGCL}}}\xspace}   
\newcommand{\Vars}{\ensuremath{\mathsf{Vars}}\xspace}   
\newcommand{\Nats}{\ensuremath{\mathbb{N}}\xspace}
\newcommand{\PosNats}{\ensuremath{\mathbb{N}_{>0}}\xspace}
\newcommand{\Ints}{\ensuremath{\mathbb{Z}}\xspace}
\newcommand{\Rats}{\ensuremath{\mathbb{Q}}\xspace}
\newcommand{\Reals}{\mathbb{R}}
\newcommand{\PosReals}{\mathbb{R}_{\geq 0}}
\newcommand{\PosRealsInf}{\mathbb{R}_{\geq 0}^\infty}
\newcommand{\E}{\mathbb{E}}
\newcommand{\Eone}{\mathbb{E}_{\leq 1}}
\newcommand{\Mod}[1]{\ensuremath{\textit{Mod}\left(#1\right)}}
\newcommand{\Permutations}{\ensuremath{\Pi}}
\newcommand{\FinPermutations}[1]{\ensuremath{\Pi_{#1}}}
\newcommand{\dom}[1]{\sfsymbol{dom}\left({#1}\right)}
\newcommand{\iverson}[1]{\left[ {#1} \right]}
\newcommand{\Max}[2]{\max\left\{\,{#1},\: {#2}\,\right\}}
\newcommand{\subst}[2]{\left[ {#1} \middle/ {#2}\right]}
\newcommand{\statesubst}[2]{\left[ {#1} \middle/ {#2}\right]}
\newcommand{\charwpsym}{\Phi}
\newcommand{\charwp}[3]{\charwpsym\llbracket#1,#2,#3\rrbracket} 
\newcommand{\charwpn}[4]{\charwpsym^{#4}\llbracket#1,#2,#3\rrbracket} 
\newcommand{\pot}[1]{\mathcal{P}\left({#1}\right)}
\newcommand{\Stacks}{\mathcal{S}}
\newcommand{\Heaps}{\mathcal{H}}
\newcommand{\emptyheap}{h_\emptyset}
\newcommand{\disjoint}{\mathrel{\bot}}
\newcommand{\States}{\Sigma}
\newcommand{\To}{\rightarrow}
\newcommand{\true}{\mathsf{true}}
\newcommand{\false}{\mathsf{false}}
\newcommand{\mydot}{\text{{\Large\textbf{.}}~}}
\newcommand{\qiff}{\quad\textnormal{iff}\quad}
\newcommand{\qqiff}{\qquad\textnormal{iff}\qquad}
\newcommand{\qand}{\quad\textnormal{and}\quad}
\newcommand{\qor}{\quad\textnormal{or}\quad}
\newcommand{\qimplies}{\quad\textnormal{implies}\quad}
\newcommand{\ppreceq}{~{}\preceq{}~}
\newcommand{\ssucceq}{~{}\succeq{}~}
\newcommand{\eeq}{~{}={}~}
\newcommand{\nneq}{~{}\neq{}~}
\newcommand{\qeq}{\quad{}={}\quad}
\newcommand{\lleq}{~{}\leq{}~}
\newcommand{\ggeq}{~{}\geq{}~}
\newcommand{\iimplies}{~{}\implies{}~}
\newcommand{\mmodels}{~{}\models{}~}
\newcommand{\iin}{~{}\in{}~}
\newcommand{\setcomp}[2]{\left\{\, {#1} ~\middle|~ {#2} \,\right\}}
\newcommand{\underdot}[1]{%
    \tikz[baseline=(todotted.base)]{
        \node[inner sep=1pt,outer sep=0pt] (todotted) {#1};
        \draw[densely dotted] (todotted.south west) -- (todotted.south east);
    }%
}%
\newcommand{\cloze}[1]{\underdot{\phantom{#1}}}
\newcommand{\nicepar}[1]{\vcenter{\hbox{$\displaystyle #1$}}}
\newcounter{computationarrowsone}
\newcounter{computationarrowstwo}
\newcounter{sarrow}
\newcommand{\lfp}{\ensuremath{\textnormal{\sfsymbol{lfp}}~}}
\newcommand{\gfp}{\ensuremath{\textnormal{\sfsymbol{gfp}}~}}
\newcommand{\Tree}[1]{\ensuremath{\iverson{\mathsf{tree}\left(#1\right)}}}
\newcommand{\Sll}[2]{\ensuremath{\mathsf{sll}\left(#1,#2\right)}}
\newcommand{\Path}[1]{\ensuremath{\mathsf{path}\left(#1\right)}}
\newcommand{\ExecSymbol}{\ensuremath{\rightarrow}}
\newcommand{\Exec}[8]{\ensuremath{#1,#2,#3 \,\xrightarrow{#4, #5}\, #6, #7, #8 }}
\newcommand{\ExecSimple}[3]{\ensuremath{#1 \,\xrightarrow{#2}\, #3}}
\newcommand{\Fault}{\ensuremath{\text{\Lightning}}}
\newcommand{\Term}{\ensuremath{\Downarrow}}
\newcommand{\Update}[2]{\ensuremath{\subst{#1}{#2}}} 
\newcommand{\HeapSet}[1]{\ensuremath{\{ #1 \}}}
\newcommand{\Fsymbol}{\ensuremath{\Phi}}
\newcommand{\F}[2]{\ensuremath{\Fsymbol\llbracket#1\rrbracket\left(#2\right)}}
\newcommand{\Ext}[1]{\ensuremath{\tilde{#1}}}
\newcommand{\EF}[2]{\ensuremath{\Ext{\Fsymbol}\llbracket#1\rrbracket\left(#2\right)}}
\newcommand{\Opsymbol}{\ensuremath{\mathsf{op}}}
\newcommand{\Opf}[2]{\ensuremath{\mathsf{op}\llbracket#1\rrbracket\left(#2\right)}}
\newcommand{\EOpf}[2]{\ensuremath{\Ext{\mathsf{op}}\llbracket#1\rrbracket\left(#2\right)}}
\newcommand{\OpStates}{\ensuremath{\mathsf{Conf}}}
\newcommand{\OpActions}{\ensuremath{\mathsf{Act}}}
\newcommand{\OpAct}[1]{\ensuremath{\mathsf{Act}(#1)}}
\newcommand{\OpRew}{\ensuremath{\mathsf{rew}}}
\newcommand{\OpInit}{\ensuremath{\mathsf{init}}}
\newcommand{\ExpRewC}[2]{\ensuremath{\mathsf{ExpRew}\llbracket #1 \rrbracket\left(#2\right)}}
\newcommand{\ProbSymbol}{\ensuremath{\mathsf{Prob}}}
\newcommand{\Prob}[1]{\ensuremath{\ProbSymbol(#1)}}
\newcommand{\Target}{\ensuremath{\mathcal{G}}}
\newcommand{\PathsFromTo}[2]{\ensuremath{\Pi[#1]\left(#2\right)}}
\newcommand{\Scheduler}{\ensuremath{\rho}}
\newcommand{\Ord}{\textit{Ord}}
\newcommand{\itag}[1]{\left\llbracket~\textnormal{#1}~\right\rrbracket\notag}
\newcommand{\eeqtag}[1]{\eeq & \itag{#1} \\}
\newcommand{\lleqtag}[1]{\lleq & \itag{#1} \\}
\newcommand{\ggeqtag}[1]{\ggeq & \itag{#1} \\}
\newcommand{\ppreceqtag}[1]{\ppreceq & \itag{#1} \\}
\newcommand{\ssucceqtag}[1]{\ssucceq & \itag{#1} \\}
\newcommand{\iimpliestag}[1]{\iimplies& \itag{#1} \\}
\newcommand{\leftrighttag}[1]{~\Longleftrightarrow~& \itag{#1} \\}
\newcommand{\Lssymbol}{\ensuremath{\mathsf{ls}}}
\newcommand{\Ls}[2]{\ensuremath{\iverson{\Lssymbol\left(#1,#2\right)}}}
\newcommand{\Lensymbol}{\ensuremath{\mathsf{len}}}
\newcommand{\Len}[2]{\ensuremath{\Lensymbol\left(#1,#2\right)}}
\newcommand{\npath}[2]{\ensuremath{\mathsf{path}\llbracket #1 \rrbracket\left(#2\right)}}
\newcommand{\rhpgcl}{\textnormal{\sfsymbol{rhpGCL}}\xspace}
\newcommand{\ProcNames}{\ensuremath{\mathsf{ProcNames}}}
\newcommand{\ProcName}[1]{\ensuremath{\mathtt{#1}}}
\newcommand{\ProcDeclIs}[4]{\ensuremath{\texttt{procedure}~\ProcName{#1}\left(#2\right) \left\{ \, #4 \, \right\}}}
\newcommand{\ProcDecl}[3]{\ensuremath{\texttt{procedure}~\ProcName{#1}\left(#2\right) \{ }}
\newcommand{\ProcCall}[3]{\ensuremath{\texttt{call}~\ProcName{#1}\left(#2\right)}}
\newcommand{\ProcPsi}[2]{\ensuremath{\Psi_{#1,#2}}}
\newcommand{\DeclPSym}{\ensuremath{\mathsf{DProc}}}
\newcommand{\DeclP}[3]{\ensuremath{\DeclPSym\llbracket #1, #2 \rrbracket\left(#3\right)}}
\newcommand{\Stores}{\ensuremath{\mathsf{Stores}}}
\newcommand{\Next}{\ensuremath{\mathsf{next}}}
\newcommand{\VarEnv}{\ensuremath{\mathsf{VarEnv}}}
\newcommand{\ProcEnv}{\ensuremath{\mathsf{ProcEnv}}}
\newcommand{\store}{\ensuremath{\tau}}
\newcommand{\varenv}{\ensuremath{\nu}}
\newcommand{\procenv}{\ensuremath{\rho}}
\newcommand{\procenvP}[3]{\ensuremath{\rho}\llbracket \ProcName{#1}, #2 \rrbracket}
\newcommand{\procenvPrime}[3]{\ensuremath{\rho'}\llbracket \ProcName{#1}, #2 \rrbracket}
\newcommand{\lookup}{\ensuremath{(\store \circ \varenv)}}
\newcommand{\stack}{\ensuremath{\mathsf{stack}}}
\newcommand{\ev}[2]{\ensuremath{#1 \circ \stack(#2)}} 
\newcommand{\evv}[1]{\ensuremath{\ev{#1}{\varenv}}}
\newcommand{\Estore}{\ensuremath{\E^{\store}}}
\newcommand{\Eadm}{\ensuremath{\E^{\store}_{\varenv}}}
\newcommand{\csPLossyReversal}{lossyReversal}
\newcommand{\HeapPartitions}[3]{\textit{Partitions}(#1,#2,#3)}
\newcommand{\lsHead}{\textit{hd}}
\newcommand{\lsRev}{r}
\newcommand{\lsTmp}{t}
\newcommand{\perm}[2]{\text{Perm}\left( #1 , #2 \right) }
\newcommand{\aarray}{\texttt{array}}
\newcommand{\crand}{\cc_{\text{loop}}}
\newcommand{\cbody}{\cc_{\text{body}}}
\begin{document}

\addtocontents{toc}{\protect\setcounter{tocdepth}{-1}}

\title[Quantitative Separation Logic]{Quantitative Separation Logic} 
\REPORT{\titlenote{This technical report supplements a paper of the same title published at POPL 2019.}}             
\subtitle{A Logic for Reasoning about Probabilistic Pointer Programs}                     


\author[Batz]{Kevin Batz}
\affiliation{
  \institution{RWTH Aachen University, Germany}            
}
\email{kevin.batz@rwth-aachen.de}          

\author[Kaminski]{Benjamin Lucien Kaminski}
\affiliation{
  \institution{RWTH Aachen University, Germany}            
}
\email{benjamin.kaminski@cs.rwth-aachen.de}          

\author[Katoen]{Joost-Pieter Katoen}
\affiliation{
  \institution{RWTH Aachen University, Germany}            
}
\email{katoen@cs.rwth-aachen.de}          

\author[Matheja]{Christoph Matheja}
\affiliation{
  \institution{RWTH Aachen University, Germany}            
}
\email{matheja@cs.rwth-aachen.de}          

\author[Noll]{Thomas Noll}
\affiliation{
  \institution{RWTH Aachen University, Germany}            
}
\email{noll@cs.rwth-aachen.de}          

\begin{abstract}
	We present \emph{quantitative separation logic} (\QSL).
	In contrast to classical separation logic, \QSL employs quantities which evaluate to real numbers instead of predicates which evaluate to Boolean values.
    The connectives of classical separation logic, separating conjunction 
    and separating implication, 
    are lifted from predicates to quantities.
    This extension is conservative: Both connectives 
    are backward compatible to their classical analogs and 
    obey the same laws,
    e.g.\ modus ponens, adjointness, 
    etc.

    Furthermore, we develop a weakest precondition calculus for quantitative reasoning about \emph{probabilistic pointer programs} in \textsf{QSL}.
    This calculus is a conservative extension of both \CHANGED{Ishtiaq's, O'Hearn's and Reynolds'} separation logic for heap-manipulating programs and Kozen's / McIver and Morgan's weakest preexpectations for probabilistic programs.
    Soundness is proven with respect to an operational semantics based on Markov decision processes.
    Our calculus preserves O'Hearn's \emph{frame rule}, which enables local reasoning. 
    We demonstrate that our calculus enables reasoning about quantities such as 
    the probability of terminating with an empty heap, the probability of reaching a certain array permutation, or the expected length of a list.
%
\end{abstract}

\begin{CCSXML}
<ccs2012>
<concept>
<concept_id>10003752.10003753.10003757</concept_id>
<concept_desc>Theory of computation~Probabilistic computation</concept_desc>
<concept_significance>500</concept_significance>
</concept>
<concept>
<concept_id>10003752.10003790.10002990</concept_id>
<concept_desc>Theory of computation~Logic and verification</concept_desc>
<concept_significance>500</concept_significance>
</concept>
<concept>
<concept_id>10003752.10003790.10003806</concept_id>
<concept_desc>Theory of computation~Programming logic</concept_desc>
<concept_significance>500</concept_significance>
</concept>
<concept>
<concept_id>10003752.10003790.10011742</concept_id>
<concept_desc>Theory of computation~Separation logic</concept_desc>
<concept_significance>500</concept_significance>
</concept>
<concept>
<concept_id>10003752.10010124.10010131</concept_id>
<concept_desc>Theory of computation~Program semantics</concept_desc>
<concept_significance>500</concept_significance>
</concept>
<concept>
<concept_id>10003752.10010124.10010138</concept_id>
<concept_desc>Theory of computation~Program reasoning</concept_desc>
<concept_significance>500</concept_significance>
</concept>
</ccs2012>
\end{CCSXML}

\ccsdesc[500]{Theory of computation~Probabilistic computation}
\ccsdesc[500]{Theory of computation~Logic and verification}
\ccsdesc[500]{Theory of computation~Programming logic}
\ccsdesc[500]{Theory of computation~Separation logic}
\ccsdesc[500]{Theory of computation~Program semantics}
\ccsdesc[500]{Theory of computation~Program reasoning}

\keywords{quantitative separation logic, probabilistic programs, randomized algorithms, formal verification,  quantitative reasoning}

\maketitle


\section{Introduction}\label{sec:intro}

Randomization plays an important role in the construction of 
algorithms.
It typically improves average-case performance at the cost of a worse best-case performance or at the cost of incorrect results occurring with low probability.
The former is observed when, e.g., randomly picking the pivot in quicksort~\cite{DBLP:journals/cj/Hoare62}.
A prime example of the latter is Freivalds' matrix multiplication \mbox{verification algorithm~\cite{freivalds1977probabilistic}}.

Sophisticated algorithms often make use of \emph{randomized data structures}.
For instance, Pugh states that randomized skip lists enjoy ``the same asymptotic expected time bounds as balanced trees and are faster and use less space''~\cite{DBLP:journals/cacm/Pugh90}.
Other examples of randomized data structures include randomized splay trees~\cite{DBLP:journals/ipl/AlbersK02}, treaps~\cite{DBLP:conf/spaa/BlellochR98} and randomized search trees~\cite{DBLP:conf/focs/AragonS89,DBLP:journals/jacm/MartinezR98}.

Randomized algorithms are conveniently described by probabilistic programs, i.e. programs with the ability to sample from a probability distribution, e.g. by flipping coins. 
While randomized algorithms have desirable properties, their verification often requires reasoning about programs that mutate dynamic data structures \emph{and} behave probabilistically.
Both tasks are challenging on their own and have been the subject of intensive research, see e.g.~\cite{DBLP:conf/focs/Kozen79,DBLP:journals/pacmpl/McIverMKK18,DBLP:conf/cav/ChakarovS13,DBLP:series/natosec/OHearn12,DBLP:conf/popl/ChatterjeeFNH16,bartheprogram,DBLP:conf/popl/KrebbersTB17,DBLP:conf/pldi/NgoC018}.
However, to the best of our knowledge, work on formal verification of programs that are \emph{both} randomized \emph{and} heap-manipulating is scarce.
To highlight the need for \emph{quantitative properties} and their formal verification in this setting let us consider three examples.
\begin{figure}[t]
\begin{subfigure}[b]{0.45\textwidth}
\centering
\begin{align*}
        & \ProcDecl{randomize}{\aarray, n}{} \\
        & \qquad \ASSIGN{i}{0}\SEMI \\
        & \qquad \WHILE{0 \leq i < n} \\
        & \qquad \qquad \ASSIGNUNIFORM{j}{i}{n-1}\SEMI \\
        & \qquad \qquad \ProcCall{swap}{\aarray, i , j}{} \SEMI \\
        & \qquad \qquad \ASSIGN{i}{i+1} \\
        & \} ~ \}
\end{align*}
\caption{Procedure to randomize an array of length $n$}
\label{fig:randomize-array}
\end{subfigure}
\hfill
\begin{subfigure}[b]{0.45\textwidth}
\centering
\begin{align*}
        & \ProcDecl{\csPLossyReversal}{\lsHead}{} \\
        & \qquad \ASSIGN{\lsRev}{0}\SEMI \\
        & \qquad \WHILE{\lsHead \neq 0} \\
        & \qquad \qquad \ASSIGNH{\lsTmp}{\lsHead}\SEMI \\
        & \qquad \qquad \left\{\,
        \begin{aligned}
           & \HASSIGN{\lsHead}{\lsRev}\SEMI \\
           & \ASSIGN{\lsRev}{\lsHead}
        \end{aligned}
        \,\right\}\,\left[\sfrac{1}{2}\right]\,\left\{\,
        \begin{aligned}
           & \FREE{\lsHead} \\
           &
        \end{aligned}
        \,\right\} \\
        & \qquad \qquad \ASSIGN{\lsHead}{\lsTmp} \\
        & \} \, \}
\end{align*}
\caption{Lossy reversal of a list with head $\lsHead$}
\label{fig:lossy-list-reversal}
\end{subfigure}
\caption{Examples of probabilistic programs. We write $\dereference{e}$ to access the value stored at address $e$.}
\end{figure}

\paragraph{Example 1: Array randomization}
A common approach to design randomized algorithms is to randomize the input and process it in a deterministic manner. 
For instance, the only randomization involved in algorithms solving the famous Secretary Problem (cf.~\cite[Chapter 5.1]{DBLP:books/daglib/0023376}) is computing a random permutation of its input array.
A textbook implementation (cf.~\cite[Chapter 5.3]{DBLP:books/daglib/0023376}) of such a procedure $\ProcName{randomize}$ for an array of length $n$ is depicted in
Figure~\ref{fig:randomize-array}.
For each position in the array, the procedure uniformly samples a random number $j$ in the remaining array between the current position $i$ and the last position $n - 1$. 
After that, the elements at position $i$ and $j$ are swapped.
The procedure $\ProcName{randomize}$ is correct precisely if all outputs are equally likely.
Thus, to verify correctness of this procedure, we inevitably have to reason about a probability, hence a \emph{quantity}.
In fact, each of the $n!$ possible permutations of the input array is computed by procedure $\ProcName{randomize}$ with probability at most $\sfrac{1}{n!}$. 
%

\paragraph{Beyond randomized algorithms}

Probabilistic programs are a powerful modeling tool that is not limited to randomized algorithms.
Consider, for instance, approximate computing:
Programs running on unreliable hardware, where instructions may occasionally return incorrect results, are naturally captured by probabilistic programs~\cite{DBLP:journals/cacm/CarbinMR16}.
Since incorrect results are unavoidable in such a scenario, the notion of a program's correctness becomes blurred:
That is, quantifying (and minimizing) the probability of encountering a failure or the expected error of a program becomes crucial.
The need for quantitative reasoning is also stressed by~\cite{DBLP:journals/ife/Henzinger13} who argues that 
``the Boolean partition of software into correct and incorrect programs falls short of the practical need to assess the behavior of software in a more nuanced fashion [\,\ldots].''

\paragraph{Example 2: Faulty garbage collector}

Consider a procedure $\ProcName{delete}(x)$ that takes a tree with root $x$ and recursively deletes all of its elements.
This is a classical example due to~\cite{DBLP:conf/lics/Reynolds02,DBLP:series/natosec/OHearn12}.
However, our procedure fails with some probability $p \in [0,1]$ to continue deleting subtrees, i.e. 
running $\ProcName{delete}(x)$ on a tree with root $x$ does not necessarily result in the empty heap.
If failures of $\ProcName{delete}(x)$ are caused by unreliable hardware, they are unavoidable.
Instead of proving a Boolean correctness property, we are thus interested in evaluating the reliability of the procedure by \emph{quantifying the probability of collecting all garbage}.
%
In fact, the probability of completely deleting a tree with root $x$ containing $n$ nodes is at least $(1-p)^{n}$. 
Thus, to guarantee that a tree containing $100$ elements is deleted at least with probability $0.90$, the probability $p$ must \mbox{be below $0.00105305$}.

\paragraph{Example 3: Lossy list reversal}

A prominent benchmark when analyzing heap-manipulating programs
is in-place list-reversal (cf.~\cite{DBLP:conf/popl/KrebbersTB17, magill2006inferring, DBLP:journals/corr/abs-1104-1998}).
Figure~\ref{fig:lossy-list-reversal} depicts a \emph{lossy} list reversal: The procedure $\ProcName{\csPLossyReversal}$ traverses a list with head $\lsHead$ and attempts to move each element to the front of an initially empty list with head $\lsRev$. However, during each iteration, the current element is dropped with probability $\sfrac{1}{2}$. This is modeled by a \emph{probabilistic choice}, which either updates the value at address $\lsHead$ or disposes that address:
\begin{align*}
        \PCHOICE{\HASSIGN{\lsHead}{\lsRev}\SEMI\ASSIGN{\lsRev}{\lsHead}}{\sfrac{1}{2}}{\FREE{\lsHead}}
\end{align*}
%
The procedure $\ProcName{\csPLossyReversal}$ is not functionally correct in the sense that, upon termination, $\lsRev$ is the head of the reversed initial list:
Although the program never crashes due to a memory fault and indeed produces a singly-linked list, the length of this list varies between zero and the length of the initial list.
%
%
A more sensible quantity of interest is the \emph{expected}, i.e. average, \emph{length of the reversed list}.
In fact, the expected list length is \emph{at most half} of the length of the original list. 

\paragraph{Our approach}

We develop a \emph{quantitative separation logic} (\QSL) for
quantitative reasoning about heap-manipulating \emph{and} probabilistic programs at source code level.
Its distinguished features are:
\begin{itemize}
\item
\emph{\QSL is quantitative}: It evaluates to a real number instead of a Boolean value. 
It is capable of specifying values of program variables, heap sizes, list lengths, etc. 
%
\item
\emph{\QSL is probabilistic}: It enables reasoning about probabilistic programs, in particular about the \emph{probability of terminating with a correct result}. 
It allows to express \emph{expected values} of quantities, such as expected heap size or expected list length in a natural way. 
\item 
\emph{\QSL is a separation logic}: It conservatively extends separation logic (\SL)~\cite{DBLP:conf/popl/IshtiaqO01,DBLP:conf/lics/Reynolds02,DBLP:conf/fossacs/YangO02}.
Our quantitative analogs of \SL's key operators, i.e. separating conjunction $\sepcon$ and separating implication $\sepimp$, 
preserve virtually all properties of their Boolean versions. 
%
\end{itemize}%
For program verification, separation logic is often used in a (forward) Floyd-Hoare style.
For probabilistic programs, however, backward reasoning is more common.
In fact, certain forward-directed predicate transformers do not exist when reasoning about probabilistic programs~\cite[p. 135]{DBLP:phd/ethos/Jones90}.
%
We develop a (backward) weakest-precondition style calculus that uses \QSL to verify probabilistic heap-manipulating programs.
This calculus is a marriage of the weakest preexpectation calculus by~\cite{DBLP:series/mcs/McIverM05} and separation logic \`{a} la~\cite{DBLP:conf/popl/IshtiaqO01,DBLP:conf/lics/Reynolds02}.
In particular:

%
\begin{itemize}

    \item Our calculus is a \emph{conservative extension of two approaches}: 
          For programs that never access the heap, we obtain the calculus of McIver and Morgan.
          Conversely, for Boolean properties of ordinary programs, we recover exactly the $\wpsymbol$-rules of \CHANGED{Ishtiaq, O'Hearn, and Reynolds}.
          \QSL preserves virtually all properties of classical separation logic---including the \emph{frame rule}. 
    
    \item Our calculus is \emph{sound} with respect to an operational semantics based on Markov decision processes.
            While this has been shown before for simple probabilistic languages (cf.~\cite{DBLP:journals/pe/GretzKM14}), heap-manipulating statements introduce new technical challenges.
          In particular, allocating fresh memory yields \emph{countably infinite nondeterminism}, which breaks continuity and rules out 
          standard constructions for loops. 

    \item We apply our calculus to analyze all aforementioned examples.
\end{itemize}

\paragraph{Outline.} 

		In Section~\ref{sec:hpgcl}, we 
        present a probabilistic programming language with pointers 
        together with an operational semantics.
		Section~\ref{sec:qsl} introduces \QSL as an assertion language.
		In Section~\ref{sec:wp}, we develop a $\wpsymbol$-style calculus
        for the quantitative verification of (probabilistic) programs with \QSL.
		Furthermore, we prove soundness of our calculus and develop a \emph{frame rule} for \QSL.
        Section~\ref{sec:wp:landscape} discusses alternative design choices for $\wpsymbol$-style calculi and Section~\ref{sec:extensions} briefly addresses how recursive procedures are incorporated.
		In Section~\ref{sec:case-studies}, we apply \QSL to four case studies, including the three introductory examples. 
        Finally, we discuss related work in Section~\ref{sec:related-work} and conclude in Section~\ref{sec:conclusion}.

        \REPORT{\emph{Detailed proofs of all theorems are found in the appendix for the reader's convenience.}}%
        \CAMERA{\emph{Detailed proofs of all theorems are found in a separate technical report~\cite{DBLP:journals/corr/abs-1802-10467}.}}%

%
%
%
%
%
%

\section{Probabilistic Pointer Programs}
\label{sec:hpgcl}

We use 
a simple, imperative language \`{a} la Dijkstra's guarded command language with two distinguished features:
First, we endow our programs with a probabilistic choice instruction.
Second, we allow for statements that allocate, mutate, access, and dispose memory.

\subsection{Syntax}
The set of programs in \emph{heap-manipulating probabilistic guarded command language}, denoted $\hpgcl$, is given by the grammar
\begin{align*}
    \begin{aligned}
    \cc  ~~\longrightarrow~~ &\SKIP & \text{(effectless program)} \\
    & |~~ \ASSIGN{x}{\ee} & \text{(assignment)} \\
    & |~~ \COMPOSE{\cc}{\cc} & \text{(seq. composition)} \\
    & |~~ \ITE{\guard}{\cc}{\cc} & \text{(conditional choice)} \\
    & |~~ \WHILEDO{\guard}{\cc} & \text{(loop)} \\
    \end{aligned}
    & \quad
    \begin{aligned}
            & \quad |~~ \PCHOICE{\cc}{\pp}{\cc} & \text{(prob. choice)}\\
            & \quad |~~ \ALLOC{x}{\ee_1,\, \ldots,\, \ee_n} & \text{(allocation)} \\
            & \quad |~~ \HASSIGN{\ee}{\ee'} & \text{(mutation)} \\
            & \quad |~~ \ASSIGNH{x}{\ee} & \text{(lookup)} \\
            & \quad |~~ \FREE{\ee}, & \text{(deallocation)}
    \end{aligned}
\end{align*}
%
%
where $x$ is a variable in the set $\Vars$, $\ee, \ee',  \ee_1, \ldots, \ee_n$ are arithmetic expressions, 
$\guard$ is a predicate, i.e. an expression over variables evaluating to either true or false, 
and $\pp \in [0,\, 1] \cap \Rats$ is a probability.

\subsection{Program states}
A \emph{program state} $(s,\, h)$ consists of a \emph{stack} $s$, i.e.\ a valuation of variables by integers, and a \emph{heap} $h$ modeling dynamically allocated memory.
Formally, the set of \emph{stacks} is given by
$
	{\Stacks} = \setcomp{\sk}{\sk \colon \Vars \To \Ints}
$. 
Like in a standard RAM model, a heap consists of memory addresses that each store a value and is thus a \emph{finite} mapping from addresses (i.e.\ natural numbers) to values (which may themselves be allocated addresses in the heap).
Formally, the set of \emph{heaps} is given by
\begin{align*}
	{\Heaps} \eeq \setcomp{\hh}{\hh \colon N \To \Ints,~ N \subseteq \Nats_{>0},~ |N| < \infty}.
\end{align*}
%
The $0$ is excluded as a valid address in order to model e.g.\ null-pointer terminated lists.
The set of \emph{program states} is given by
$
	{\States} = \setcomp{(\sk, \hh)}{\sk \in \Stacks,~ \hh \in \Heaps}
$. 
Notice that expressions $\ee$ and guards $\guard$ 
may depend on variables only (i.e.\ they may \emph{not} depend upon the heap) and thus their evaluation never causes any side effects.
Side effects such as dereferencing unallocated memory can only occur \emph{after} evaluating an expression and trying to access the memory at \mbox{the evaluated address}. 

Given a program state $(\sk, \hh)$, we denote by \textbf{$\sk(\ee)$} the evaluation of expression $\ee$ in $\sk$, i.e.\ the value that is obtained by evaluating $\ee$ after replacing any occurrence of any variable $x$ in $\ee$ by the value $\sk(x)$.
\CHANGED{By slight abuse of notation, we also denote the evaluation of a Boolean expression $\guard$ by $\sk(\guard)$.}
Furthermore, we write $\sk\subst{x}{v}$ to indicate that we set variable $x$ to value $v \in \Ints$ in stack $\sk$, i.e.\footnote{We use $\lambda$-expressions to denote functions: Function $\lambda X \mydot f$ applied to an argument $\alpha$ evaluates to $f$ in which every occurrence of $X$ is replaced by $\alpha$.}	
\begin{align*}
	\sk\statesubst{x}{v} \eeq \lambda\, y\mydot \begin{cases}
		v, & \textnormal{if } y = x\\
		\sk(y), & \textnormal{if } y \neq x.
	\end{cases}
\end{align*}
For heap $\hh$, $\hh\Update{u}{v}$ is defined analogously. 
For a given heap $\hh \colon N \To \Ints$, we denote by $\dom{\hh}$ its \emph{domain} $N$. 
Furthermore, we write $\HeapSet{u \mapsto v_1,\ldots,v_n}$ as a shorthand for the heap \mbox{$\hh$ given by}
\begin{align*}
\dom{\hh} \eeq \{ u, u+1,\ldots,u+n-1\}, \quad \forall k \in \{0,\ldots,n-1\} \colon \hh(u+k) \eeq v_{k+1}.
\end{align*}
%
%
Two heaps $\hh_1$, $\hh_2$ are \emph{disjoint}, denoted $\hh_1 \disjoint \hh_2$, if their domains do not overlap, i.e.\  $\dom{\hh_1} \cap \dom{\hh_2} = \emptyset$.
%
%
The \emph{disjoint union} of two disjoint heaps $\hh_1 \colon N_1 \To \Ints$ and $\hh_2 \colon N_2 \To \Ints$ \mbox{is given by}
\begin{align*}
    \hh_1 \sepcon \hh_2\colon \dom{\hh_1} \mathrel{\dot{\cup}} \dom{\hh_2} \To \Ints, \quad 
	\bigl(\hh_1 \sepcon \hh_2 \bigr) (n) \eeq \begin{cases}\hh_1(n), & \textnormal{if } n \in \dom{\hh_1} \\ \hh_2(n), & \textnormal{if } n \in \dom{\hh_2}. \end{cases}
\end{align*}
We denote by $\emptyheap$ the \emph{empty heap} with $\dom{\emptyheap} = \emptyset$. 
%
%
%
Note that $\hh \sepcon \emptyheap = \emptyheap \sepcon \hh = \hh$ for any heap $\hh$.
%
We define \emph{heap inclusion} as 
%
	$\hh_1 \subseteq \hh_2$ iff $\exists\, \hh_1' \disjoint \hh_1 \colon~ \hh_1 \sepcon \hh_1' = \hh_2$.
%
%
Finally, we use the \emph{Iverson bracket}~\cite{knuth1992two} notation $\iverson{\preda}$
to associate with predicate $\preda$ its indicator function.
Formally, 
\begin{align*}
	\iverson{\preda} \colon\quad \States \To \{0,\, 1\},\quad \iverson{\preda}(\sk, \hh) \eeq \begin{cases}
		1, & \textnormal{if $(\sk,\, \hh) \models \preda$}\\
		0, & \textnormal{if $(\sk,\, \hh) \not\models \preda$},
	\end{cases}
\end{align*}
where $(\sk,\hh) \models \preda$ denotes that $\preda$ evaluates to true in $(\sk,\hh)$.
Notice that while predicates may generally speak about stack-heap pairs, guards in \hpgcl-programs may only refer to the stack.
\subsection{Semantics}
We assign meaning to \hpgcl-statements in terms of a small-step operational semantics, i.e. an execution relation $\ExecSymbol$ between \emph{program configurations}, which consist of a program state and either a program that is still to be executed, a symbol $\Term$ indicating successful termination, or a symbol $\Fault$ indicating a memory fault.
Formally, the set of program configurations is given by
\begin{align*}
        \OpStates \eeq \left(\hpgcl \cup \{\,\Term,\,\Fault\,\}\right) \,\times\, \States~.
\end{align*}
Since our programming language admits memory allocation and probabilistic choice,
our semantics has to account for both \emph{nondeterminism} (due to the fact that memory is allocated at nondeterministically chosen addresses) 
and \emph{execution probabilities}.
Our execution relation is hence \mbox{of the form}
\begin{align*}
        \ExecSymbol ~\subseteq~ \OpStates \,\times\, \Nats \,\times\, ([0,1] \cap \Rats) \,\times\, \OpStates~,
\end{align*}
where the second component is an \emph{action} labeling the nondeterministic choice taken in the execution step
and the third component is the execution step's probability.%
\footnote{For simplicity, we tacitly distinguish between the probabilities $0.5$ and $1-0.5$ to deal with the corner case of two identical executions between the same configurations.} %
We usually write $\Exec{\cc}{\sk}{\hh}{n}{\pp}{\cc'}{\sk'}{\hh'}$ instead of $((\cc,(\sk,\hh)),n,\pp,(\cc',(\sk',\hh'))) \in\,\ExecSymbol$.
The operational semantics of \hpgcl-programs, i.e.\ the execution relation \ExecSymbol, is determined by the rules in Figure~\ref{table:op}.
\begin{figure}[t]
\begin{align*}
  &
  \infer{
    \Exec{\SKIP}{\sk}{\hh}{0}{1}{\Term}{\sk}{\hh}
  }{
  }
  \quad
  \infer{
    \Exec{\ASSIGN{x}{\ee}}{\sk}{\hh}{0}{1}{\Term}{\sk\subst{x}{v}}{\hh}
  }{
    \sk(\ee) = v
  }
  \\[1ex]
  &
  \infer{
    \Exec{\COMPOSE{\cc_1}{\cc_2}}{\sk}{\hh}{a}{\pp}{\Fault}{\sk}{\hh}
  }{
    \Exec{\cc_1}{\sk}{\hh}{a}{p}{\Fault}{\sk}{\hh}
  }
  \quad
  \infer{
    \Exec{\COMPOSE{\cc_1}{\cc_2}}{\sk}{\hh}{a}{\pp}{\cc_2}{\sk'}{\hh'}
  }{
    \Exec{\cc_1}{\sk}{\hh}{a}{p}{\Term}{\sk'}{\hh'}
  }
  \quad
  \infer{
    \Exec{\COMPOSE{\cc_1}{\cc_2}}{\sk}{\hh}{a}{\pp}{\COMPOSE{\cc_1'}{\cc_2}}{\sk'}{\hh'}
  }{
    \Exec{\cc_1}{\sk}{\hh}{a}{\pp}{\cc_1'}{\sk'}{\hh'}
  }
  \\[1ex]
  &
  \infer{
    \Exec{\ITE{\guard}{\cc_1}{\cc_2}}{\sk}{\hh}{0}{1}{\cc_1}{\sk}{\hh}
  }{
    \sk(\guard) = \true
  }
  \quad
  \infer{
    \Exec{\ITE{\guard}{\cc_1}{\cc_2}}{\sk}{\hh}{0}{1}{\cc_2}{\sk}{\hh}
  }{
    \sk(\guard) = \false
  }
  \\[1ex]
  &
  \infer{
    \Exec{\WHILEDO{\guard}{\cc}}{\sk}{\hh}{0}{1}{\Term}{\sk}{\hh}
  }{
    \sk(\guard) = \false
  }
  \quad
  \infer{
  \Exec{\WHILEDO{\guard}{\cc}}{\sk}{\hh}{0}{1}{\COMPOSE{\cc}{\WHILEDO{\guard}{\cc}}}{\sk}{\hh}
  }{
    \sk(\guard) = \true
  }
  \\[1ex]
  &
  \infer{
    \Exec{\PCHOICE{\cc_1}{\pp}{\cc_2}}{\sk}{\hh}{0}{\pp}{\cc_1}{\sk}{\hh}
    }{
    }
  \quad
  \infer{
    \Exec{\PCHOICE{\cc_1}{\pp}{\cc_2}}{\sk}{\hh}{0}{1-\pp}{\cc_2}{\sk}{\hh}
    }{
    }
  \\[1ex]
  &
  \infer{
          \Exec{\ALLOC{x}{\ee_1,\ldots,\ee_n}}{\sk}{\hh}{u}{1}{\Term}{\sk\Update{x}{u}}{\hh \sepcon \HeapSet{u \mapsto v_1, \ldots, v_n}}
  }{
  u,u+1,\ldots,u+n-1 \in \Nats_{> 0} \setminus \dom{\hh} \quad \sk(\ee_1) = v_1, \ldots, \sk(\ee_n) = v_n
  }
  \\[1ex]
  &
  \infer{
          \Exec{\HASSIGN{\ee}{\ee'}}{\sk}{\hh}{0}{1}{\Term}{\sk}{\hh\Update{u}{v}}
  }{
  \sk(\ee) = u \in \dom{\hh} \quad \sk(\ee') = v
  }
  \quad
  \infer{
    \Exec{\HASSIGN{\ee}{\ee'}}{\sk}{\hh}{0}{1}{\Fault}{\sk}{\hh}
  }{
    \sk(\ee) \notin \dom{\hh}
  }
  \\[1ex]
  &
  \infer{
    \Exec{\ASSIGNH{x}{\ee}}{\sk}{\hh}{0}{1}{\Term}{\sk\Update{x}{v}}{\hh}
  }{
    \sk(\ee) = u \in \dom{\hh} \quad \hh(u) = v
  }
  \quad
  \infer{
    \Exec{\ASSIGNH{x}{\ee}}{\sk}{\hh}{0}{1}{\Fault}{\sk}{\hh}
  }{
    \sk(\ee) \notin \dom{\hh}
  }
  \\[1ex]
  &
  \infer{
          \Exec{\FREE{x}}{\sk}{\hh \sepcon \HeapSet{u \mapsto v}}{0}{1}{\Term}{\sk}{\hh}
  }{
    \sk(x) = u
  }
  \quad
  \infer{
    \Exec{\FREE{x}}{\sk}{\hh}{0}{1}{\Fault}{\sk}{\hh}
  }{
    \sk(x) \notin \dom{\hh}
  }
\end{align*}
\caption{Inference rules determining the execution relation $\ExecSymbol$.}
\label{table:op}
\end{figure}

Let us briefly go over those rules.
The rules for $\SKIP$, assignments, conditionals, and loops are standard.
In each case, the execution proceeds deterministically, hence all actions are labeled $0$ and the execution probability is $1$.
For a probabilistic choice $\PCHOICE{\cc_1}{\pp}{\cc_2}$ there are two possible executions: With probability $\pp$ we execute $\cc_1$ and with probability $1-\pp$, we execute $\cc_2$.

The remaining statements access or manipulate memory.
$\ALLOC{x}{\ee_1,\ldots,\ee_n}$ allocates a block of $n$ memory addresses and stores the first allocated address in variable $x$.
Since allocated addresses are chosen nondeterministically by the memory allocator, there are countably infinitely many possible executions, which are each labeled by an action corresponding to the first allocated address.
Under the assumption that an infinite amount of memory is available, \emph{memory allocation cannot fail}.
$\HASSIGN{\ee}{\ee'}$ attempts to write the value of $\ee'$ to address $\ee$. If address $\ee$ has not been allocated before, we encounter a memory fault, i.e. move to a configuration marked by $\Fault$.
Conversely, $\ASSIGNH{x}{\ee}$ assigns the value at address $\ee$ to variable $x$. Again, failing to find address $\ee$ on the heap leads to an error.
Finally, $\FREE{\ee}$ disposes the memory cell at address $\ee$ if it is present and fails otherwise.

Notice that no statement other than memory allocation introduces nondeterminism, i.e.\ entails an action label different from $0$
Moreover, for every action $n \in \Nats$, we have
\begin{align*}
  \sum_{\Exec{\cc}{\sk}{\hh}{n}{\pp}{\cc'}{\sk'}{\hh'}} \pp ~\in~ \{0,1\},
\end{align*}
where we set $\sum_{\emptyset} = 0$.
Our execution relation thus describes a Markov Decision Process, which is an established model for probabilistic systems (cf.~\cite{DBLP:books/daglib/0020348, puterman2005markov}).




\section{Quantitative Separation Logic}\label{sec:qsl}

The term \emph{separation logic} refers to both a logical assertion language as well as a Floyd-Hoare-style proof system for reasoning about pointer programs (cf.~\cite{DBLP:conf/popl/IshtiaqO01,DBLP:conf/lics/Reynolds02}).
In this section, we develop \QSL in the sense of an assertion language.
A 
proof system for reasoning about $\hpgcl$ programs is introduced in Section~\ref{sec:wp}.
The rationale of \QSL is to combine concepts from two worlds:
\begin{enumerate}
	\item From separation logic (\SL): \emph{separating conjunction}~($\sepcon$) and \emph{separating implication} ($\sepimp$).
	
	\item From probabilistic program verification: \emph{expectations}.
\end{enumerate}
Separating conjunction and implication
are the two distinguished logical connectives featured in \SL~\cite{DBLP:conf/popl/IshtiaqO01,DBLP:conf/lics/Reynolds02}.
Expectations~\cite{DBLP:series/mcs/McIverM05} on the other hand take over the role of logical formulae when doing \emph{quantitative reasoning} about probabilistic programs.
In what follows, we gradually develop both a \emph{quantitative separating conjunction} and a \emph{quantitative separating implication} which each connect expectations instead of formulae (as in the classical setting).

\subsection{Expectations}
\label{sec:expectations}

Floyd-Hoare logic~\cite{DBLP:journals/cacm/Hoare69} as well as Dijkstra's weakest preconditions~\cite{DBLP:books/ph/Dijkstra76} employ first-order logic for reasoning about the correctness of programs.
For probabilistic programs, Kozen in his PPDL~\cite{DBLP:conf/stoc/Kozen83} was the first to generalize from predicates to measurable functions (or random variables). 
Later,~\cite{DBLP:series/mcs/McIverM05} coined the term \emph{expectation} for such functions. 
Here, we define the set $\E$ of expectations and the set $\Eone$ of one-bounded expectations as
\begin{align*}
	\E \eeq \setcomp{\ff}{\ff\colon \States \To \PosRealsInf} \qand
	\Eone \eeq \setcomp{\fg}{\fg\colon \States \To [0,\, 1]}.
\end{align*}
An expectation $\ff$ maps every program state to a non-negative real number or $\infty$. 
$\Eone$ allows for reasoning about probabilities of events whereas $\E$ allows for reasoning about expected values of more general random variables such as the expected value of a variable $x$, the expected height of a tree (in the heap), etc. 
Notice that a predicate is a particular expectation, namely its Iverson bracket, that maps only to $\{0,\ 1\}$.
In contrast to~\cite{DBLP:series/mcs/McIverM05}, our expectations are not necessarily bounded. Hence, $(\E,\, {\preceq})$ and $(\Eone,\,{\preceq})$, where 
%
	$\ff \preceq \fg$ iff  $\forall (\sk,\, \hh) \in \States\colon~ \ff(\sk,\, \hh) \leq \fg(\sk,\, \hh)$
%
each form a complete lattice with least element $0$ and greatest element $\infty$ \mbox{and 1, respectively}.\footnote{By slight abuse of notation, for any constant $k \in \PosRealsInf$, we write $k$ for $\lambda (\sk,\, \hh) \mydot k$.
}
%
\CHANGED{%
    We present most of our results with respect to the domain $(\E,\,{\preceq})$, i.e. we develop a logic for reasoning about expected values.
    A logic for reasoning about probabilities of events can be constructed analogously by using the complete lattice $(\Eone,{\preceq})$ instead.
}%

Analogously to~\cite{DBLP:conf/lics/Reynolds02}, we call an expectation $\ff \in \E$ \emph{domain-exact} iff for all stacks $\sk \in \Stacks$ and heaps $\hh,\hh' \in \Heaps$,
$\ff(\sk,\, \hh) > 0$ and $\ff(\sk,\, \hh') > 0$ together implies that $\dom{\hh} = \dom{\hh'}$,
%
i.e.\ for a fixed stack, the domain of all heaps such that the quantity $\ff$ does not \mbox{vanish is constant}.

We next lift the atomic formulas of \SL to a quantitative setting:
The \emph{empty-heap predicate} $\emp$, which evaluates to $1$ iff the heap is empty, is defined as
\begin{align*}
	\emp \eeq \lambda (\sk,\hh)\mydot 
	\begin{cases}
		1, &\textnormal{if } \dom{\hh} = \emptyset, \\
		0, &\textnormal{otherwise.}
	\end{cases}
\end{align*}
%
The \emph{points-to predicate} $\singleton{\ee}{\ee'}$, evaluating to $1$ iff the heap consists of exactly one cell with address $\ee$ and content $\ee'$, \mbox{is defined as}
\begin{align*}
	\singleton{\ee}{\ee'} \eeq \lambda (\sk,\hh)\mydot 
	\begin{cases}
		1, &\textnormal{if } \dom{\hh} = \{\sk(\ee)\} \text{ and } \hh(\sk(\ee)) = \sk(\ee') \\
		0, &\textnormal{otherwise.}
	\end{cases}
\end{align*}
%
Notice that if $\sk(\ee) \not\in \Nats_{>0}$ then automatically $\dom{\hh} \neq \{\sk(\ee)\}$.
As a shorthand, we denote by $\singleton{\ee}{ \ee_1',\ldots, \ee_n'}$
the predicate 
that evaluates to $1$ on $(\sk,\hh)$ iff the heap $\hh$ contains exactly $n$ cells with addresses $\sk(\ee),\, \ldots,\, \ss(\ee) + n-1$ and \mbox{respective contents $\sk(\ee_1'),\, \ldots,\, \sk(\ee_n')$}.
		
The \emph{allocated pointer predicate} $\validpointer{\ee}$, which evaluates to $1$ iff the heap consists of a single cell with address $\ee$ (but arbitrary content), \mbox{is defined as}
\begin{align*}
	\validpointer{\ee} \eeq \lambda (\sk,\hh)\mydot 
	\begin{cases}
		1, &\textnormal{if } \dom{\hh} =  \{\sk(\ee) \}, \\
		0, &\textnormal{otherwise.}
	\end{cases}
\end{align*}
All of the above predicates are domain-exact expectations evaluating to either zero or one.

As an example of a truly \emph{quantitative} expectation 
consider the \emph{heap size quantity}
\begin{align*}
    \heapSize \eeq \lambda (\sk,\hh)\mydot |\dom{\hh}|,
\end{align*}
where $|\dom{\hh}|$ denotes the cardinality of $\dom{\hh}$, which measures the number of allocated cells in a heap $h$.
In contrast to the standard \SL predicates, $\heapSize$ is neither domain-exact nor a predicate.

\subsection{Separating Connectives between Expectations}

We now develop quantitative versions of \SL's connectives.
\emph{Standard conjunction} ($\wedge$) is modeled by pointwise multiplication. 
This is backward compatible as for any two predicates $\preda$ and $\predb$ we have
%
	$\iverson{\preda \wedge \predb} = \iverson{\preda} \cdot \iverson{\predb} = \lambda (\sk,\, \hh) \mydot\, \iverson{\preda}(\sk,\, \hh) \cdot \iverson{\predb}(\sk,\, \hh)$.
%
Towards a \emph{quantitative separating conjunction}, 
let us first examine the classical case, 
which is defined for two predicates $\preda$ and $\predb$ as
\begin{align*}
	(\sk, \hh) \mmodels \preda \sepcon \predb \qqiff
	\exists \, \hh_1, \hh_2 \colon\quad \hh \eeq \hh_1 \sepcon \hh_2 ~\text{ and }~ (\sk, \hh_1) \mmodels \preda ~\text{ and }~ (\sk, \hh_2) \mmodels \predb.
\end{align*}
In words, a state $(\sk, \hh)$ satisfies $\preda \sepcon \predb$ iff there exists a \emph{partition} of the heap $\hh$ into two heaps $\hh_1$ and $\hh_2$ such that 
the stack $\sk$ together with heap $\hh_1$ satisfies $\preda$, and $\sk$ together with $\hh_2$ satisfies $\predb$.

How should we connect two expectations $\ff$ and $\fg$ in a similar fashion? 
As logical ``and'' corresponds to a multiplication, we need to find a partition of the heap $\hh$ into $\hh_1 \sepcon \hh_2$, measure $\ff$ in $\hh_1$, measure $\fg$ in $\hh_2$, and finally multiply these two measured quantities.
The naive approach,
\begin{align*}
	& \bigl( \ff \sepcon \fg \bigr)(\sk,\, \hh) \qeq \exists \, \hh_1, \hh_2\colon~ \iverson{\hh = \hh_1 \sepcon \hh_2} \cdot \ff(\sk, \hh_1) \cdot \fg(\sk, \hh_2),
\end{align*}
is not meaningful.
At the very least, it is ill-typed.
Moreover, what precisely determined quantity would the above express?
After all, the existentially quantified partition of $\hh$ need not be unique.

Our key redemptive insight here is that $\exists$ should correspond to $\max$.
\CHANGED{%
        From an algebraic perspective, this corresponds to the usual interpretation of existential quantifiers in a complete Heyting algebra or Boolean algebra as a disjunction (cf.~\cite{scott2008algebraic} for an overview), which we will interpret as a maximum in the realm of expectations.
}%
In first-order logic, the effect of the quantified predicate $\exists v\colon \preda(v)$ is so-to-speak to ``maximize the truth of $\preda(v)$'' by a suitable choice of $v$.
In \QSL, instead of truth, we maximize a quantity:
Out of all partitions $\hh = \hh_1 \sepcon \hh_2$, we choose the one---out of finitely many for any given $\hh$---that maximizes the product $\ff(\sk, \hh_1) \cdot \fg(\sk, \hh_2)$.
We thus define the quantitative $\sepcon$ as follows:
\begin{definition}[Quantitative Separating Conjunction]	
    The \emph{quantitative separating conjunction} $\ff \sepcon \fg$ of two expectations $\ff,\fg \in \E$ is defined as 
	%
	\begin{align*}
		\ff \sepcon \fg \eeq \lambda (\sk, \hh)\mydot \max_{\hh_1, \hh_2} \setcomp{\vphantom{\big(}\ff(\sk, \hh_1) \cdot \fg(\sk, \hh_2)}{\hh = \hh_1 \sepcon \hh_2}. \tag*{$\triangle$}
	\end{align*}
\end{definition}
\noindent%
As a first sanity check, notice that this definition is backward compatible to the qualitative setting: For predicates $\preda$ and $\predb$, we have $\left(\iverson{\preda} \sepcon \iverson{\predb}\right)(\sk,\hh) \in \{0,1\}$ and moreover
$
   \left( \iverson{\preda} \sepcon \iverson{\predb} \right)(\sk,\hh) = 1
$ 
holds in \QSL if and only if 
$
   (\sk,\hh) \mmodels \preda \sepcon \predb
$ 
holds in \SL\REPORT{ (a proof is found in Appendix~\ref{app:back-comp-sep-con}, p.~\pageref{app:back-comp-sep-con})}. 
%

%
Next, we turn to \emph{separating implication}.
For \SL, this is defined for predicates $\preda$ and $\predb$ as
\begin{align*}
	(\sk, \hh) \mmodels \preda \sepimp \predb \qqiff
    \forall \, \hh'\colon \quad \hh' \disjoint \hh ~\text{ and }~ (\sk, \hh') \models \preda ~~\text{ implies }~~ (\sk, \hh \sepcon \hh') \models \predb.
\end{align*}
So $(\sk, \hh)$ satisfies $\preda \sepimp \predb$ iff the following holds:
Whenever we can find a heap $\hh'$ disjoint from $\hh$ such that stack $\sk$ together with heap $\hh'$ satisfies $\preda$, 
then $\sk$ together with the \emph{conjoined} heap $\hh \sepcon \hh'$ must satisfy $\predb$.
In other words: 
We measure the truth of $\predb$ in \emph{extended} heaps $\hh \sepcon \hh'$, where all admissible extensions $\hh'$ \mbox{must satisfy $\preda$}.

How should we connect expectations $\fg$ and $\ff$ in a similar fashion?
Intuitively, $\fg \sepimp \ff$ intends to measure $\ff$ in extended heaps, subject to the fact that the extensions satisfy $\fg$.
\REMOVED{
\emph{Satisfying an expectation $\fg$}, however, is not a meaningful notion in general, since $\fg$ could represent any quantity.
We thus restrict $\fg$ to predicates.
}
\CHANGED{
    Since the least element of our complete lattice, i.e. $0$, corresponds to $\false$ when evaluating a predicate, we interpret \emph{satisfying an expectation $\fg$} as measuring some positive quantity, i.e. $\fg(\sk,\hh) > 0$.
}

As for the universal quantifier, our key insight is now that---dually to $\exists$ corresponding to $\max$---$\forall$ should correspond to $\min$:
Whereas in first-order logic the predicate $\forall v\colon \preda(v)$ ``minimizes the truth of $\preda(v)$'' by requiring that $\preda(v)$ must be true for all choices of $v$, in \QSL we minimize a quantity:
Out of all heap extensions $\hh'$ disjoint from $\hh$ that satisfy a given \REMOVED{predicate $\preda$}\CHANGED{expectation $\fg$}, we choose an extension that minimizes the quantity $\ff(\sk, \hh \sepcon \hh')$.
Intuitively speaking, we pick the smallest possible\footnote{In terms of measuring $\ff(\sk, \hh \sepcon \hh')$.} extension $\hh'$ that barely satisfies \REMOVED{$\preda$}\CHANGED{$\fg$}.
Since for given \REMOVED{$\preda$}\CHANGED{$\fg$} and $\hh$, there may be infinitely many (or no) admissible choices for $\hh'$, we define the quantitative $\sepimp$ by an infimum: 
\CHANGED{
        \begin{align*}
		    \fg \sepimp \ff
            \eeq
            \lambda (\sk, \hh)\mydot \inf_{\hh'}~ \setcomp{\frac{\ff(\sk, \hh \sepcon \hh')}{\fg(\sk,\hh')} }{\hh' \disjoint \hh ~\textnormal{ and }~ \fg(\sk, \hh') > 0}~.
        \end{align*}
        This definition is well-behaved with $(\Eone,{\preceq})$ as the underlying lattice.\footnote{In particular, quantitative separating implication and quantitative separating conjunction are adjoint.}
        However, for the domain $(\E,{\preceq})$ the above definition of $\sepimp$ is not well-defined if $\fg(\sk,\hh') = \infty$ holds.
        We thus restrict $\fg$ to predicates. The above definition then simplifies as follows:
}
\begin{definition}[Quantitative Separating Implication]	
    The \emph{quantitative separating implication} $\iverson{\preda} \sepimp \ff$ of predicate $\preda$ and expectation $\ff \in \E$ is defined as
	\begin{align*}
		\iverson{\preda} \sepimp \ff
        \eeq
        \lambda (\sk, \hh)\mydot \inf_{\hh'}~ \setcomp{\ff(\sk, \hh \sepcon \hh')}{\hh' \disjoint \hh ~\textnormal{ and }~ (\sk, \hh') \models \preda}. \tag*{$\triangle$}
	\end{align*}
    %
	%
\end{definition}
\noindent%
Unfortunately, backward compatibility for quantitative separating implication comes with certain reservations:
Suppose for a particular state $(\sk,\, \hh)$ there exists no heap extension $\hh'$ such that $(\sk,\, \hh') \models \preda$.
Then $\setcomp{\ff(\sk, \hh \sepcon \hh')}{\hh' \disjoint \hh \textnormal{ and } (\sk, \hh') \models \preda}$ is empty, and the greatest lower bound (within our domain $\PosRealsInf$) of the empty set is $\infty$ and not 1.
In particular, 
$
	\false \sepimp \predb \equiv \true
$ 
holds in $\SL$, but
$
	0 \sepimp \iverson{\predb} = \infty
$ 
holds in \QSL. 
Since $0 = \iverson{\false}$ but $\infty \neq \iverson{\true}$, backward compatibility of quantitative separating implication breaks here.
As a silver lining, however, we notice that $\true$ is the greatest element in the complete lattice of predicates and correspondingly $\infty$ is the greatest element in $\E$. 
In this light, the above appears not at all surprising.
\REMOVED{In fact, if we restrict the codomain of expectations to $[0,1]$ instead of $\PosRealsInf$ (which is sufficient to reason about probabilities of events; see also $\Eone$ in Section~\ref{sec:expectations}) we achieve full backward compatibility (cf.\ Appendix~\ref{app:back-comp-sep-imp}, p.~\pageref{app:back-comp-sep-imp}).}
\CHANGED{
In fact, if we restrict ourselves to the domain $(\Eone,{\preceq})$ to reason about probabilities, we achieve full backward compatibility.
To be precise, let us explicitly embed classical separation logic (\SL) into \QSL.
\begin{definition}[Embedding of \SL into \QSL]\label{def:embedding-sl-qsl}
Formulas in classical separation logic (\SL) are embedded into quantitative separation logic by a function
$\qslemb{.}\colon \SL \to \Eone$ mapping formulas in \SL to expectations in $\Eone$. 
This function is defined inductively as follows:
\begin{align*}
        & \qslemb{\preda} \eeq \iverson{\preda}~\text{for any atomic formula $\varphi \in \SL$}
        \qquad \qslemb{\neg \preda} \eeq 1 - \qslemb{\preda} \\
        & \qslemb{\preda_1 \sepcon \preda_2} \eeq \qslemb{\preda_1} \sepcon \qslemb{\preda_2} 
        \qquad \qslemb{\preda_1 \sepimp \preda_2} \eeq \qslemb{\preda_1} \sepimp \qslemb{\preda_2} \\
        & \qslemb{\exists x\colon \preda} \eeq \sup_{v \in \Ints}~\qslemb{\preda}\subst{x}{v} 
        \qquad \qslemb{\preda_1 \wedge \preda_2} \eeq \qslemb{\preda_1} \cdot \qslemb{\preda_2}
        \tag*{$\triangle$}
\end{align*}
\end{definition}
Every atomic separation logic formula is thus interpreted as its Iverson bracket in \QSL.
Furthermore, every connective is replaced by its quantitative variant. 
We then obtain that \QSL---as an assertion language---is a conservative extension of classical separation logic.
\begin{theorem}[Conservativity of \QSL as an assertion language]\label{thm:qsl:conservativity:language}
    For all classical separation logic formulas $\preda \in \SL$ and all states $(\sk,\hh) \in \States$, we have
    \begin{enumerate}
        \item $\qslemb{\preda}(\sk,\hh) \in \{0,1\}$, and 
        \label{thm:qsl:conservativity:language:0-1}
        \item $(\sk,\hh) \models \varphi$~~if and only if~~$\qslemb{\preda}(\sk,\hh) \eeq 1$.
        \label{thm:qsl:conservativity:language:equivalence}
    \end{enumerate}
\end{theorem}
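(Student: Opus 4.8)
The plan is to prove both parts simultaneously by structural induction on the \SL-formula $\preda$, since the two claims reinforce each other: part~\ref{thm:qsl:conservativity:language:0-1} (that $\qslemb{\preda}$ is $\{0,1\}$-valued) will be needed to make the inductive step for part~\ref{thm:qsl:conservativity:language:equivalence} go through, especially for the negation and separating-implication cases. Throughout, I would freely use the backward-compatibility facts already established in the excerpt: for atomic formulas $\qslemb{\preda} = \iverson{\preda}$ by definition, so both claims are immediate; for separating conjunction, the excerpt already states $\left(\iverson{\preda}\sepcon\iverson{\predb}\right)(\sk,\hh)\in\{0,1\}$ and that it equals $1$ iff $(\sk,\hh)\models\preda\sepcon\predb$ in \SL.

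For the inductive step I would treat each connective of the embedding in Definition~\ref{def:embedding-sl-qsl}. For $\neg\preda$: by the induction hypothesis $\qslemb{\preda}(\sk,\hh)\in\{0,1\}$, so $1-\qslemb{\preda}(\sk,\hh)\in\{0,1\}$, giving part~1; and $1-\qslemb{\preda}(\sk,\hh)=1$ iff $\qslemb{\preda}(\sk,\hh)=0$ iff $(\sk,\hh)\not\models\preda$ iff $(\sk,\hh)\models\neg\preda$, giving part~2. For $\preda_1\wedge\preda_2$: the product of two $\{0,1\}$-values lies in $\{0,1\}$, and equals $1$ iff both factors are $1$, which by the IH happens iff $(\sk,\hh)\models\preda_1$ and $(\sk,\hh)\models\preda_2$. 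For $\exists x\colon\preda$: $\qslemb{\preda}\subst{x}{v}(\sk,\hh)=\qslemb{\preda}(\sk\subst{x}{v},\hh)\in\{0,1\}$ by the IH (I would note the standard fact that the substitution inside the Iverson-bracket world corresponds to shifting the stack), so the supremum over $v\in\Ints$ of $\{0,1\}$-values is again in $\{0,1\}$ — here I'd use that $\sup$ of a subset of $\{0,1\}$ is $1$ iff some element is $1$ — and this supremum equals $1$ iff there is some $v$ with $(\sk\subst{x}{v},\hh)\models\preda$, i.e. iff $(\sk,\hh)\models\exists x\colon\preda$. The separating-conjunction case is handled by citing the already-proven backward-compatibility statement (replacing the arbitrary $\preda,\predb$ there with $\qslemb{\preda_1},\qslemb{\preda_2}$, which by the IH are Iverson brackets of $\preda_1,\preda_2$).

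The separating-implication case is the one I expect to require the most care, and it is where the $\Eone$ domain matters. Given $\qslemb{\preda_1}=\iverson{\preda_1}$ by the IH, the definition specializes to $\left(\iverson{\preda_1}\sepimp\qslemb{\preda_2}\right)(\sk,\hh)=\inf_{\hh'}\setcomp{\qslemb{\preda_2}(\sk,\hh\sepcon\hh')}{\hh'\disjoint\hh,\ (\sk,\hh')\models\preda_1}$. By the IH each term $\qslemb{\preda_2}(\sk,\hh\sepcon\hh')$ is in $\{0,1\}$, so the infimum is in $\{0,1\}$ — this gives part~1. For part~2, the infimum is $1$ iff the index set is empty (infimum of $\emptyset$ in $[0,1]$ is $1$) or every admissible extension gives value $1$; in either case, this is exactly the condition ``for all $\hh'\disjoint\hh$ with $(\sk,\hh')\models\preda_1$ we have $(\sk,\hh\sepcon\hh')\models\preda_2$'', i.e. $(\sk,\hh)\models\preda_1\sepimp\preda_2$. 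The crucial point, which I would emphasize, is that the empty-index-set case now works in our favor: the \SL-semantics of $\sepimp$ is vacuously true when no extension satisfies $\preda_1$, and the $[0,1]$-infimum of the empty set is $1=\iverson{\true}$ — precisely the mismatch described in the excerpt for the $\PosRealsInf$ domain disappears. I would close by remarking that this is exactly why the theorem is phrased over $\Eone$.
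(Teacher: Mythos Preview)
Your proposal is correct and follows essentially the same approach as the paper: structural induction on the \SL-formula, with the separating-conjunction case delegated to the already-established backward-compatibility lemma and the separating-implication case relying on the $\Eone$ domain so that $\inf\emptyset=1$. The only cosmetic difference is that the paper proves part~\ref{thm:qsl:conservativity:language:0-1} and part~\ref{thm:qsl:conservativity:language:equivalence} as two separate inductions (citing part~\ref{thm:qsl:conservativity:language:0-1} where needed in the second), whereas you do both at once---but the underlying case analysis is identical.
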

\REPORT{
\begin{proof}
    See Appendix~\ref{app:qsl:conservativity:language}, p.~\pageref{app:qsl:conservativity:language}.
\end{proof}
}
The same result is achieved for the expectation domain $(\E,{\preceq})$ if we define the embedding of separating implication as $\qslemb{\preda_1 \sepimp \preda_2} = \min\{1,\qslemb{\preda_1} \sepimp \qslemb{\preda_2}\}$.
}

\subsection{Properties of Quantitative Separating Connectives}

Besides backward compatibility, the separating connectives of \QSL are well-behaved in the sense that they satisfy most properties of their counterparts in \SL.
To justify this claim, we now present a collection of quantitative analogs of properties of classical separating conjunction and implication.
Most of those properties originate from the seminal papers on classical separation logic~\cite{DBLP:conf/popl/IshtiaqO01,DBLP:conf/lics/Reynolds02}.
We start with algebraic laws for quantitative separating conjunction:
%
%
\begin{theorem}
\label{thm:sep-con-monoid}
	$(\E,\, {\sepcon},\, \emp)$ is a commutative monoid, i.e.\ for all $\ff,\fg,\fh \in \E$ the following holds:
	\begin{enumerate}
		\item 
		\label{thm:sep-con-monoid:ass}
			\emph{Associativity:} \quad $\ff \sepcon (\fg \sepcon \fh) \eeq (\ff \sepcon \fg) \sepcon \fh$
		
		\item 
		\label{thm:sep-con-monoid:neut}
			\emph{Neutrality of ${\emp}$:} \quad $\ff \sepcon \emp \eeq \emp \sepcon \ff \eeq \ff$
			
		\item 
		\label{thm:sep-con-monoid:comm}
			\emph{Commutativity:} \quad $\ff \sepcon \fg \eeq \fg \sepcon \ff$
	\end{enumerate}
\end{theorem}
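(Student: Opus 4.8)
The plan is to verify each of the three laws by a pointwise calculation, unfolding the definition of $\sepcon$ at an arbitrary state $(\sk,\hh)$ and manipulating the set over which the maximum is taken. Throughout, I would rely on two structural facts about heaps: (i) for any fixed finite heap $\hh$, there are only finitely many ways to write $\hh = \hh_1 \sepcon \hh_2$ (since $\hh_1$ is determined by a choice of subset of $\dom{\hh}$), so the maxima in the definition are always attained and there are no measure-theoretic subtleties; and (ii) disjoint union of heaps is itself commutative and associative as a partial operation, with $\emptyheap$ as unit. These are exactly the properties established in Section~\ref{sec:hpgcl}.

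\emph{Commutativity} (item \ref{thm:sep-con-monoid:comm}) is the easiest and I would do it first: for every $(\sk,\hh)$,
\begin{align*}
  (\ff \sepcon \fg)(\sk,\hh)
  &= \max_{\hh_1,\hh_2}\setcomp{\ff(\sk,\hh_1)\cdot\fg(\sk,\hh_2)}{\hh = \hh_1 \sepcon \hh_2} \\
  &= \max_{\hh_2,\hh_1}\setcomp{\fg(\sk,\hh_2)\cdot\ff(\sk,\hh_1)}{\hh = \hh_2 \sepcon \hh_1}
   = (\fg \sepcon \ff)(\sk,\hh),
\end{align*}
using commutativity of multiplication on $\PosRealsInf$ and commutativity of $\sepcon$ on heaps (the map $(\hh_1,\hh_2)\mapsto(\hh_2,\hh_1)$ is a bijection between the two index sets). \emph{Neutrality of $\emp$} (item \ref{thm:sep-con-monoid:neut}) I would handle next: since $\emp(\sk,\hh_2)=1$ exactly when $\hh_2 = \emptyheap$ and $0$ otherwise, every summand with $\hh_2\neq\emptyheap$ contributes $0$, so the maximum is achieved (if at all) at $\hh_2=\emptyheap$, $\hh_1=\hh$, giving $(\ff\sepcon\emp)(\sk,\hh) = \ff(\sk,\hh)\cdot 1 = \ff(\sk,\hh)$; one also checks the degenerate case where $\ff(\sk,\hh)=0$ works too, and the left-unit law follows by commutativity.

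\emph{Associativity} (item \ref{thm:sep-con-monoid:ass}) is the main obstacle and the step deserving real care. The idea is to show both sides equal the ``ternary'' quantity
\[
  \max_{\hh_1,\hh_2,\hh_3}\setcomp{\ff(\sk,\hh_1)\cdot\fg(\sk,\hh_2)\cdot\fh(\sk,\hh_3)}{\hh = \hh_1 \sepcon \hh_2 \sepcon \hh_3}.
\]
For the left side, I would expand $(\ff \sepcon (\fg \sepcon \fh))(\sk,\hh)$ as a max over splittings $\hh = \hh_1 \sepcon \hh'$, each inner term being $\ff(\sk,\hh_1)$ times a max over splittings $\hh' = \hh_2 \sepcon \hh_3$; the key lemma is that $\max$ distributes over a product of a constant and a max in $\PosRealsInf$ — i.e. $a \cdot \max_i b_i = \max_i (a\cdot b_i)$ for $a,b_i \in \PosRealsInf$ — which lets me collapse the nested maxima into a single max over all triples $(\hh_1,\hh_2,\hh_3)$ with $\hh_1 \sepcon (\hh_2 \sepcon \hh_3) = \hh$. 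By associativity of disjoint union on heaps, this index set is exactly $\{(\hh_1,\hh_2,\hh_3) : \hh_1 \sepcon \hh_2 \sepcon \hh_3 = \hh\}$. The right side $((\ff \sepcon \fg) \sepcon \fh)(\sk,\hh)$ is treated symmetrically (using commutativity to pull the constant $\fh(\sk,\hh_3)$ inside, or simply re-running the same argument), yielding the same ternary quantity, and we are done. The one point to watch is the distributivity of multiplication over $\max$ when $\infty$ and $0$ are involved: since $0\cdot\infty$ does not arise here (we are in $\PosRealsInf$ with the convention $0\cdot\infty = 0$ if adopted, or else these products are simply elements of the set), I would note explicitly that $a\cdot\max_i b_i = \max_i(a \cdot b_i)$ still holds in $\PosRealsInf$ under the standard conventions, so the collapse of nested maxima is legitimate.
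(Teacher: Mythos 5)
Your proposal is correct and follows essentially the same route as the paper's proof: pointwise unfolding of $\sepcon$, collapsing the nested maxima into a single maximum over triples $\hh = \hh_1 \sepcon \hh_2 \sepcon \hh_3$ for associativity, observing that $\emp$ vanishes on all non-empty subheaps for neutrality, and re-indexing the partition set for commutativity. Your explicit remark that $a \cdot \max_i b_i = \max_i (a \cdot b_i)$ holds in $\PosRealsInf$ (the step the paper labels simply ``algebra'') is a welcome extra precision, not a deviation.
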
%
\REPORT{%
\begin{proof} %
    See Appendix~\ref{app:sep-con-monoid}, p.~\pageref{app:sep-con-monoid}.%
\end{proof}%
}%
\begin{theorem}[(Sub)distributivity Laws]
\label{thm:sep-con-distrib}
	Let $\ff,\fg,\fh \in \E$ and let $\preda$ be a predicate. Then:
	%
	\begin{enumerate}
		\item 
		\label{thm:sep-con-distrib:sepcon-over-max}
			$\ff \sepcon \Max{\fg}{\fh} \eeq \Max{\ff \sepcon \fg}{\ff \sepcon \fh}$

		\item 
		\label{thm:sep-con-distrib:sepcon-over-plus}
			$\ff \sepcon (\fg + \fh) \ppreceq \ff \sepcon \fg + \ff \sepcon \fh$
		
		\item 
		\label{thm:sep-con-distrib:sepcon-over-times}
			$\iverson{\preda} \sepcon (\fg \cdot \fh) \ppreceq \big(\iverson{\preda} \sepcon \fg \big) \cdot \big(\iverson{\preda} \sepcon \fh \big)$

	\end{enumerate}
	Furthermore, if $\ff$ and $\iverson{\preda}$ are domain-exact, we obtain full distributivity laws: 
	\begin{enumerate}
	\setcounter{enumi}{3}
		\item 
		\label{thm:sep-con-distrib:sepcon-over-plus-full}
			$\ff \sepcon (\fg + \fh) \eeq \ff \sepcon \fg + \ff \sepcon \fh$
		
		\item 
		\label{thm:sep-con-distrib:sepcon-over-times-full}
			$\iverson{\preda} \sepcon (\fg \cdot \fh) \eeq \big(\iverson{\preda} \sepcon \fg \big) \cdot \big(\iverson{\preda} \sepcon \fh \big)$
	\end{enumerate}
\end{theorem}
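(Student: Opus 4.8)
The plan is to establish the five claims in the order they are stated, since the full-distributivity items (4) and (5) are refinements of the sub-distributivity items (2) and (3). For item~(1), I would unfold the definition of $\sepcon$ on both sides: $\bigl(\ff \sepcon \Max{\fg}{\fh}\bigr)(\sk,\hh)$ is the supremum over partitions $\hh = \hh_1 \sepcon \hh_2$ of $\ff(\sk,\hh_1)\cdot\Max{\fg(\sk,\hh_2)}{\fh(\sk,\hh_2)}$, and since multiplication by the non-negative scalar $\ff(\sk,\hh_1)$ distributes over $\max$, this equals the supremum of $\Max{\ff(\sk,\hh_1)\cdot\fg(\sk,\hh_2)}{\ff(\sk,\hh_1)\cdot\fh(\sk,\hh_2)}$. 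A supremum of a pointwise maximum of two families equals the maximum of the two suprema, which is exactly $\Max{\ff \sepcon \fg}{\ff \sepcon \fh}$. (One must check the boundary case where the set of partitions only yields the value $0$, and handle $\infty$-valued factors, but $\PosRealsInf$ is a complete lattice and $0\cdot\infty$ conventions are fixed, so this is routine.)

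For item~(2), fix $(\sk,\hh)$ and let $\hh = \hh_1 \sepcon \hh_2$ be any partition. Then $\ff(\sk,\hh_1)\cdot\bigl(\fg(\sk,\hh_2)+\fh(\sk,\hh_2)\bigr) = \ff(\sk,\hh_1)\cdot\fg(\sk,\hh_2) + \ff(\sk,\hh_1)\cdot\fh(\sk,\hh_2) \leq \bigl(\ff\sepcon\fg\bigr)(\sk,\hh) + \bigl(\ff\sepcon\fh\bigr)(\sk,\hh)$, because each summand is bounded by the corresponding supremum over \emph{the same} partition. Taking the supremum over all partitions on the left yields $\ff\sepcon(\fg+\fh) \ppreceq \ff\sepcon\fg + \ff\sepcon\fh$. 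The point is that the two sub-separating-conjunctions on the right may each pick a \emph{different} optimal partition, which is why equality can fail in general. Item~(3) is entirely analogous: for $\iverson{\preda}$, the factor $\iverson{\preda}(\sk,\hh_1)$ is either $0$ or $1$, so $\iverson{\preda}(\sk,\hh_1)\cdot\fg(\sk,\hh_2)\cdot\fh(\sk,\hh_2) \leq \bigl(\iverson{\preda}\sepcon\fg\bigr)(\sk,\hh)\cdot\bigl(\iverson{\preda}\sepcon\fh\bigr)(\sk,\hh)$ over a common partition, and one takes the supremum over partitions; the nonnegativity of all quantities is what makes the product of the two suprema an upper bound.

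For items~(4) and~(5), the new ingredient is domain-exactness of $\ff$ (respectively $\iverson{\preda}$): for a fixed stack $\sk$, all heaps $\hh_1$ on which $\ff(\sk,\hh_1) > 0$ share one and the same domain. Hence, given $(\sk,\hh)$, \emph{either} no partition contributes a positive value (and both sides are $0$), \emph{or} there is essentially a unique relevant split of $\dom{\hh}$ into the part that $\ff$ "sees" and its complement --- and crucially this split is forced to be the same whether we are optimizing $\ff\sepcon\fg$, $\ff\sepcon\fh$, or $\ff\sepcon(\fg+\fh)$. Under this forced common partition, the inequality in~(2), resp.~(3), becomes an equality because the optimal choices on the two right-hand summands/factors must coincide with the one on the left. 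I expect the main obstacle to be making the "forced common partition" argument fully rigorous, in particular treating the degenerate cases cleanly: when $\ff$ (resp.\ $\iverson{\preda}$) vanishes on every subheap of $\hh$, so that all four expressions collapse to $0$; and when $\ff$ takes the value $\infty$ on the relevant subheap, so that one must argue about $\infty\cdot(\fg+\fh)$ versus $\infty\cdot\fg + \infty\cdot\fh$ in $\PosRealsInf$. The rest is bookkeeping with suprema of non-negative reals.
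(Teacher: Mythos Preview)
Your proposal is correct and matches the paper's approach essentially step for step: items (1)--(3) are handled by unfolding the definition of $\sepcon$, distributing the scalar multiplication, and using that a max of a sum (resp.\ product) is bounded by the sum (resp.\ product) of maxes; items (4)--(5) rely on the observation that domain-exactness forces at most one partition $(\hh_1,\hh_2)$ of $\hh$ with $\ff(\sk,\hh_1)>0$, which the paper isolates as a separate lemma but which you describe inline.
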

\REPORT{%
\begin{proof}%
    See Appendix~\ref{app:sep-con-distrib}, p.~\pageref{app:sep-con-distrib}.%
\end{proof}%
}%
\noindent%
The $\max$ in Theorem~\ref{thm:sep-con-distrib}.\ref{thm:sep-con-distrib:sepcon-over-max} corresponds to a disjunction ($\vee$) in the classical setting as for any two predicates $\preda$ and $\predb$ we have $\iverson{\preda \vee \predb} = \Max{\iverson{\preda}}{\iverson{\predb}}$, where the $\max$ is taken pointwise.
\CHANGED{Moreover, if $\preda$ and $\predb$ are mutually exclusive, i.e. $\iverson{\preda} \cdot \iverson{\predb} = 0$, their maximum coincides with their sum. That is, we have $\max \{ \iverson{\preda}, \iverson{\predb} \} = \iverson{\preda} + \iverson{\predb}$.}
Theorem~\ref{thm:sep-con-distrib}.\ref{thm:sep-con-distrib:sepcon-over-max} shows that $\sepcon$ distributes over $\max$.
Unfortunately, for $+$ we only have sub-distributivity (Theorem~\ref{thm:sep-con-distrib}.\ref{thm:sep-con-distrib:sepcon-over-plus}).
We recover full distributivity in case that $\ff$ is \mbox{domain-exact (Theorem~\ref{thm:sep-con-distrib}.\ref{thm:sep-con-distrib:sepcon-over-plus-full})}.

A further important analogy to $\SL$ is that quantitative separating conjunction is monotonic:
\begin{theorem}[Monotonicity of $\sepcon$] 
\label{thm:sepcon-monotonic}
	%
	$
    \ff \preceq \ff' ~\text{and}~ \fg \preceq \fg'  ~\text{implies}~ \ff \sepcon \fg \preceq \ff' \sepcon \fg'
	$. 
\end{theorem}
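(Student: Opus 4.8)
The plan is to establish the inequality pointwise. Fix an arbitrary program state $(\sk,\hh) \in \States$; it suffices to show $(\ff \sepcon \fg)(\sk,\hh) \leq (\ff' \sepcon \fg')(\sk,\hh)$. Unfolding Definition of quantitative separating conjunction, the left-hand side is the maximum of the products $\ff(\sk,\hh_1) \cdot \fg(\sk,\hh_2)$ taken over all partitions $\hh = \hh_1 \sepcon \hh_2$, a maximum over a finite index set since any heap has only finitely many sub-heaps.

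The one fact I would isolate first is that multiplication on $\PosRealsInf$ is monotone in both arguments: for $a \leq a'$ and $b \leq b'$ in $\PosRealsInf$ one has $a \cdot b \leq a' \cdot b'$, which is immediate once one checks the corner cases involving $0$ and $\infty$ under the convention $0 \cdot \infty = 0$. Applying this to a fixed partition $\hh = \hh_1 \sepcon \hh_2$, the hypotheses $\ff \preceq \ff'$ and $\fg \preceq \fg'$ yield
\[
  \ff(\sk,\hh_1) \cdot \fg(\sk,\hh_2) ~\leq~ \ff'(\sk,\hh_1) \cdot \fg'(\sk,\hh_2) ~\leq~ (\ff' \sepcon \fg')(\sk,\hh),
\]
where the second inequality holds because $(\ff' \sepcon \fg')(\sk,\hh)$ is by definition the maximum of $\ff'(\sk,\hh_1) \cdot \fg'(\sk,\hh_2)$ over \emph{all} partitions, in particular over the one currently fixed.

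Since this bound holds for every partition $\hh = \hh_1 \sepcon \hh_2$, the maximum defining $(\ff \sepcon \fg)(\sk,\hh)$ is also bounded by $(\ff' \sepcon \fg')(\sk,\hh)$. As $(\sk,\hh)$ was arbitrary, we conclude $\ff \sepcon \fg \preceq \ff' \sepcon \fg'$. I do not anticipate any genuine obstacle; the only step meriting explicit care is the monotonicity of the product at the boundary values $0$ and $\infty$, which is why I would record that observation separately before the partition-wise argument.
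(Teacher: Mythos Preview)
Your proof is correct and follows essentially the same approach as the paper: unfold the definition of $\sepcon$, use monotonicity of multiplication on $\PosRealsInf$ to bound each term in the maximum pointwise, and conclude. The paper's version is terser (it simply writes the $\max$-over-partitions chain with the justification ``by $\ff \preceq \ff'$, $\fg \preceq \fg'$, and monotonicity of $\cdot$''), but your explicit remark about the $0 \cdot \infty$ corner case is a reasonable addition.
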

\REPORT{%
\begin{proof}%
    See Appendix~\ref{app:sepcon-monotonic}, p.~\pageref{app:sepcon-monotonic}.%
\end{proof}%
}%
\noindent%
Next, we look at a quantitative analog to \emph{modus ponens}.
The classical modus ponens rule states that $\preda \sepcon (\preda \sepimp \predb)$ implies $\predb$.
In a quantitative setting, implication generalizes to $\preceq$, i.e.\ the partial order we defined in Section~\ref{sec:expectations}~(see also \cite{DBLP:series/mcs/McIverM05}).
%
%
\begin{theorem}[Quantitative Modus Ponens]
	\label{thm:modus-ponens}
	$ 
		\iverson{\preda} \sepcon \bigl( \iverson{\preda} \sepimp \ff \bigr) \ppreceq \ff
	$. 
\end{theorem}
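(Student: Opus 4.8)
The plan is to prove the inequality pointwise. Fix an arbitrary state $(\sk,\hh)\in\States$; I must show
$\bigl(\iverson{\preda} \sepcon ( \iverson{\preda} \sepimp \ff )\bigr)(\sk,\hh) \leq \ff(\sk,\hh)$.
Unfolding the definition of quantitative separating conjunction, the left-hand side equals
$\max_{\hh_1,\hh_2}\setcomp{\iverson{\preda}(\sk,\hh_1)\cdot(\iverson{\preda}\sepimp\ff)(\sk,\hh_2)}{\hh=\hh_1\sepcon\hh_2}$,
so it suffices to show that for \emph{every} partition $\hh=\hh_1\sepcon\hh_2$ the product $\iverson{\preda}(\sk,\hh_1)\cdot(\iverson{\preda}\sepimp\ff)(\sk,\hh_2)$ is at most $\ff(\sk,\hh)$.

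First I would dispose of the trivial case: if $(\sk,\hh_1)\not\models\preda$, then $\iverson{\preda}(\sk,\hh_1)=0$ and the product is $0\leq\ff(\sk,\hh)$ since expectations are non-negative. So assume $(\sk,\hh_1)\models\preda$, whence $\iverson{\preda}(\sk,\hh_1)=1$ and the product reduces to $(\iverson{\preda}\sepimp\ff)(\sk,\hh_2)$. Now unfold the definition of quantitative separating implication:
$(\iverson{\preda}\sepimp\ff)(\sk,\hh_2) = \inf_{\hh'}\setcomp{\ff(\sk,\hh_2\sepcon\hh')}{\hh'\disjoint\hh_2 ~\text{and}~ (\sk,\hh')\models\preda}$.
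The key observation is that $\hh_1$ itself is an admissible witness in this infimum: since $\hh=\hh_1\sepcon\hh_2$ is a well-defined disjoint union we have $\hh_1\disjoint\hh_2$, and we have just assumed $(\sk,\hh_1)\models\preda$. Therefore the infimum is bounded above by the value of its defining expression at $\hh'=\hh_1$, namely $\ff(\sk,\hh_2\sepcon\hh_1) = \ff(\sk,\hh_1\sepcon\hh_2) = \ff(\sk,\hh)$, using commutativity of $\sepcon$ on heaps. This gives $(\iverson{\preda}\sepimp\ff)(\sk,\hh_2)\leq\ff(\sk,\hh)$, completing the bound for this partition, and hence for the maximum.

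There is essentially no hard part here — the argument is a direct unfolding of the two definitions, with the only subtlety being to notice that the separated heap $\hh_1$ is precisely the extension that "witnesses" the infimum in the separating implication, mirroring exactly how classical modus ponens works in $\SL$. The only place requiring a moment's care is the edge case where the infimum in $\iverson{\preda}\sepimp\ff$ ranges over a nonempty set — which is guaranteed here precisely because $\hh_1$ is always available as a witness whenever $\iverson{\preda}(\sk,\hh_1)=1$, so the $\infty$ phenomenon discussed after the definition of $\sepimp$ never arises in the relevant case.
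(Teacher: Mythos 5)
Your proposal is correct and follows essentially the same route as the paper's proof: unfold both connectives, dispose of the case where $\iverson{\preda}(\sk,\hh_1)=0$, and otherwise observe that $\hh_1$ itself is an admissible witness in the infimum defining $\iverson{\preda}\sepimp\ff$, yielding the bound $\ff(\sk,\hh)$. The only cosmetic difference is that you bound every partition's contribution while the paper fixes a maximizing partition, which makes no mathematical difference.
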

\REPORT{%
\begin{proof}%
    See Appendix~\ref{app:modus-ponens}, p.~\pageref{app:modus-ponens}.%
\end{proof}%
}%
\noindent
Analogously to the qualitative setting, quantitative $\sepcon$ and $\sepimp$ are adjoint operators: 
%
%
\begin{theorem}[Adjointness of $\sepcon$ and $\sepimp$]
\label{thm:adjointness}
	$
		\ff \sepcon \iverson{\preda} \ppreceq \fg \qiff \ff \ppreceq  \iverson{\preda} \sepimp \fg
	$. 
\end{theorem}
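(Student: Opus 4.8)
The plan is to prove the two implications separately, unfolding the definitions of the quantitative connectives and reasoning pointwise at an arbitrary state $(\sk,\hh)$. Fix such a state throughout. For the forward direction, assume $\ff \sepcon \iverson{\preda} \ppreceq \fg$; I want $\ff(\sk,\hh) \leq (\iverson{\preda} \sepimp \fg)(\sk,\hh)$. If there is no heap $\hh'$ disjoint from $\hh$ with $(\sk,\hh') \models \preda$, the infimum defining $(\iverson{\preda} \sepimp \fg)(\sk,\hh)$ is over the empty set, hence equals $\infty$, and the inequality is trivial. Otherwise, take any such $\hh'$; since $(\hh \sepcon \hh') = \hh \sepcon \hh'$ is a valid partition and $\iverson{\preda}(\sk,\hh') = 1$, the definition of $\sepcon$ gives $(\ff \sepcon \iverson{\preda})(\sk, \hh \sepcon \hh') \geq \ff(\sk,\hh)\cdot\iverson{\preda}(\sk,\hh') = \ff(\sk,\hh)$. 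By the hypothesis applied at the state $(\sk, \hh\sepcon\hh')$ this is $\leq \fg(\sk,\hh\sepcon\hh')$. Since $\hh'$ was an arbitrary admissible extension, $\ff(\sk,\hh)$ is a lower bound for the set $\setcomp{\fg(\sk,\hh\sepcon\hh')}{\hh'\disjoint\hh,\ (\sk,\hh')\models\preda}$, hence $\ff(\sk,\hh) \leq (\iverson{\preda}\sepimp\fg)(\sk,\hh)$.

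For the reverse direction, assume $\ff \ppreceq \iverson{\preda} \sepimp \fg$; I want $(\ff \sepcon \iverson{\preda})(\sk,\hh) \leq \fg(\sk,\hh)$. Unfold the left side: it equals $\max_{\hh_1,\hh_2}\setcomp{\ff(\sk,\hh_1)\cdot\iverson{\preda}(\sk,\hh_2)}{\hh = \hh_1\sepcon\hh_2}$. The factor $\iverson{\preda}(\sk,\hh_2)$ is either $0$ or $1$, so only partitions with $(\sk,\hh_2)\models\preda$ contribute a nonzero term; if no such partition exists the max is $0 \leq \fg(\sk,\hh)$ and we are done. Otherwise pick any witnessing partition $\hh = \hh_1 \sepcon \hh_2$ with $(\sk,\hh_2)\models\preda$. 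Now $\hh_2 \disjoint \hh_1$ and $(\sk,\hh_2)\models\preda$, so $\hh_2$ is an admissible extension of $\hh_1$ in the infimum defining $(\iverson{\preda}\sepimp\fg)(\sk,\hh_1)$, giving $(\iverson{\preda}\sepimp\fg)(\sk,\hh_1) \leq \fg(\sk, \hh_1 \sepcon \hh_2) = \fg(\sk,\hh)$. Combining with the hypothesis at $(\sk,\hh_1)$: $\ff(\sk,\hh_1) \leq (\iverson{\preda}\sepimp\fg)(\sk,\hh_1) \leq \fg(\sk,\hh)$, and the contributing term is $\ff(\sk,\hh_1)\cdot 1 = \ff(\sk,\hh_1) \leq \fg(\sk,\hh)$. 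Since every contributing term is bounded by $\fg(\sk,\hh)$, so is their maximum, which closes this direction.

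The only genuinely delicate points are bookkeeping rather than conceptual: one must be careful that the $\infty$-valued empty-infimum and $0$-valued empty-maximum cases are handled at the right moment (they make each inequality trivial exactly where the relevant set of extensions/partitions is empty), and one must be careful that $\sepcon$ on heaps is being used both as the partition relation on the program-state side and as disjoint union on the heap side in a mutually consistent way — i.e. that $\hh = \hh_1 \sepcon \hh_2$ with $\hh_1\disjoint\hh_2$ is literally the same data as "$\hh_2$ is a disjoint extension of $\hh_1$ and $\hh_1 \sepcon \hh_2 = \hh$". I expect the main obstacle, such as it is, to be presenting the quantifier juggling cleanly: the forward direction needs $\ff(\sk,\hh)$ to be a lower bound (a $\forall$ over extensions), and the reverse needs $\fg(\sk,\hh)$ to be an upper bound (a $\forall$ over partitions), so the proof is really two small applications of "$\inf$ is the greatest lower bound / $\max$ is the least upper bound", dualizing the two connectives against each other. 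No fixpoint machinery or additional lemmas beyond the definitions are needed.
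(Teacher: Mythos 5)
Your proof is correct and follows essentially the same route as the paper's: a pointwise argument that handles the empty-extension case via $\inf\emptyset=\infty$ and the empty-partition case via a zero maximum, and otherwise instantiates the hypothesis at the combined heap $\hh\sepcon\hh'$ (forward direction) respectively at the sub-heap carrying $\ff$ (reverse direction), using that $\inf$ is a greatest lower bound and the maximum over partitions is bounded by any common upper bound on its terms. The only cosmetic difference is that the paper first invokes commutativity of $\sepcon$ and spells out the specialization of a generic state, which your bookkeeping folds into choosing the witnessing partition directly.
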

\REPORT{%
\begin{proof}%
    See Appendix~\ref{app:adjointness}, p.~\pageref{app:adjointness}.%
\end{proof}%
}%
\noindent
Intuitively, a separating conjunction $\underdot{\phantom{\ff}}\sepcon \iverson{\preda}$ carves out a portion of the heap, since $\ff \sepcon \iverson{\preda}$ splits of a part of the heap satisfying $\preda$ and measures $\ff$ in the remaining heap.
Conversely, $\iverson{\preda} \sepimp \underdot{\phantom{\fg}}$ extends the heap by a portion satisfying $\preda$.
Adjointness now tells us that instead of carving out something on the left-hand side of an inequality, we can extend something on the right-hand side and vice versa.
This is analogous to
$
	a - \epsilon \leq b$ iff 
    $a \leq \epsilon + b
$ 
\mbox{in standard calculus}.

\begin{example}
    \label{ex:qsl:formulas}
    Let us consider a few examples to gain more intuition on quantitative separating connectives.
    For that, let $\sk$ be any stack 
    and let heap $\hh = \{ 1 \mapsto 2, 2 \mapsto 3, 4 \mapsto 5 \}$. Then: 
    \begin{align*}
            \left( \singleton{1}{2} \sepcon \heapSize \right)(\sk,\hh) \eeq 2 \eeq & \heapSize(\sk,\hh) - 1 \\
            \left( \singleton{3}{4} \sepimp \heapSize \right)(\sk,\hh) \eeq 4 \eeq & \heapSize(\sk,\hh) + 1 \\
            \left( \singleton{3}{4} \sepcon \heapSize \right)(\sk,\hh) \eeq 0 \eeq & \left( \singleton{1}{2} \sepcon \singleton{1}{2} \sepcon \heapSize \right)(\sk,\hh) \\
            \left( \singleton{1}{2} \sepcon (\singleton{1}{2} \sepimp \heapSize) \right)(\sk,\hh) \eeq 3 \eeq & \heapSize(\sk,\hh) 
            \eeq \left( \singleton{3}{4} \sepimp (\singleton{3}{4} \sepcon \heapSize) \right)(\sk,\hh) \\
            \left( \singleton{1}{2} \sepimp \heapSize \right)(s,h) \eeq \infty \eeq & \left( \singleton{3}{4} \sepimp (\singleton{3}{4} \sepimp \heapSize) \right)(s,h) 
        %
        \tag*{$\triangle$}
    \end{align*}
    %
    %
    %
\end{example}

\subsection{Pure Expectations}

In $\SL$, a predicate is called \emph{pure} iff its truth does not depend on the heap but only on the stack.
Analogously, in $\QSL$ we call an expectation $\ff$ pure iff
\begin{align*}
	\forall\, \sk, \hh_1, \hh_2\colon\quad  \ff(\sk, \hh_1) \eeq \ff(\sk, \hh_2)~.
\end{align*}
For pure expectations, several of~\cite{DBLP:conf/lics/Reynolds02} laws for \SL hold as well:
\begin{theorem}[Algebraic Laws for $\sepcon$ under Purity]
\label{thm:sep-con-algebra-pure}
	Let $\ff,\fg,\fh \in \E$ and let $\ff$ be pure. Then
	\begin{enumerate}
		\item $\ff \cdot \fg \preceq \ff \sepcon \fg$,
		\item $\ff \cdot \fg \eeq \ff \sepcon \fg$, if additionally $\fg$ is also pure, and
		\item $(\ff \cdot \fg) \sepcon \fh = \ff \cdot (\fg \sepcon \fh)$.
	\end{enumerate}
\end{theorem}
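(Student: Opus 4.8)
The plan is to prove all three statements pointwise. I would fix an arbitrary state $(\sk,\hh)\in\States$ and unfold the definition of $\sepcon$ as a maximum ranging over the finitely many partitions $\hh = \hh_1\sepcon\hh_2$ of the (finite) heap $\hh$. The single observation driving everything is purity of $\ff$: in any such partition, $\ff(\sk,\hh_1)$ does not depend on the choice of $\hh_1$, so it equals the constant $\ff(\sk,\hh)$ and can be pulled out of the maximum. Since the set of partitions of a finite heap is finite and nonempty (the trivial partition always exists), each maximum is attained.

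For (1) I would use the trivial partition $\hh = \emptyheap\sepcon\hh$. Its term $\ff(\sk,\emptyheap)\cdot\fg(\sk,\hh)$ lies in the set over which $\sepcon$ maximizes, hence
\begin{align*}
  (\ff\sepcon\fg)(\sk,\hh) \;\geq\; \ff(\sk,\emptyheap)\cdot\fg(\sk,\hh) \;=\; \ff(\sk,\hh)\cdot\fg(\sk,\hh) \;=\; (\ff\cdot\fg)(\sk,\hh),
\end{align*}
where the first equality is purity of $\ff$; this gives $\ff\cdot\fg\preceq\ff\sepcon\fg$. For (2), adding purity of $\fg$, every partition contributes the same value, so $\setcomp{\ff(\sk,\hh_1)\cdot\fg(\sk,\hh_2)}{\hh=\hh_1\sepcon\hh_2}$ is the nonempty singleton $\{(\ff\cdot\fg)(\sk,\hh)\}$, and its maximum is $(\ff\cdot\fg)(\sk,\hh)$; alternatively, combine the inequality from (1) with the matching reverse inequality, which follows from the same remark.

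For (3) I would expand the left-hand side and factor out the constant $\ff(\sk,\hh)$:
\begin{align*}
  \bigl((\ff\cdot\fg)\sepcon\fh\bigr)(\sk,\hh)
  &\eeq \max_{\hh_1,\hh_2}\setcomp{\ff(\sk,\hh_1)\cdot\fg(\sk,\hh_1)\cdot\fh(\sk,\hh_2)}{\hh=\hh_1\sepcon\hh_2} \\
  &\eeq \max_{\hh_1,\hh_2}\setcomp{\ff(\sk,\hh)\cdot\fg(\sk,\hh_1)\cdot\fh(\sk,\hh_2)}{\hh=\hh_1\sepcon\hh_2} \\
  &\eeq \ff(\sk,\hh)\cdot\max_{\hh_1,\hh_2}\setcomp{\fg(\sk,\hh_1)\cdot\fh(\sk,\hh_2)}{\hh=\hh_1\sepcon\hh_2} \\
  &\eeq \ff(\sk,\hh)\cdot(\fg\sepcon\fh)(\sk,\hh) \eeq \bigl(\ff\cdot(\fg\sepcon\fh)\bigr)(\sk,\hh),
\end{align*}
where the second step is purity of $\ff$ and the third is that scalar multiplication distributes over finite maxima in $\PosRealsInf$.

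I do not expect a genuine obstacle here. The only delicate point is the behavior of the extreme values $0$ and $\infty$ when the constant $\ff(\sk,\hh)$ is pulled out of the maximum in step three of (3): one should check the cases $\ff(\sk,\hh)\in\{0,\infty\}$ separately against the standard convention $0\cdot\infty = 0$, after which the identity $\ff(\sk,\hh)\cdot\max_i x_i = \max_i\bigl(\ff(\sk,\hh)\cdot x_i\bigr)$ over a finite nonempty set holds in all cases. Everything else is a routine unfolding of the definitions.
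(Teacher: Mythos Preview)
Your proposal is correct and follows essentially the same approach as the paper: pointwise unfolding of $\sepcon$, then using purity of $\ff$ to replace $\ff(\sk,\hh_1)$ by the constant $\ff(\sk,\hh)$ and pull it out of the maximum (the paper likewise appeals to the trivial partition with $\hh_2=\hh$ for part~(1)). Your remark on handling the $0\cdot\infty$ convention when factoring a scalar out of a finite $\max$ is a nice bit of extra care that the paper leaves implicit.
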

\REPORT{%
\begin{proof}%
    See Appendix~\ref{app:sep-con-algebra-pure}, p.~\pageref{app:sep-con-algebra-pure}.%
\end{proof}%
}%

\subsection{Intuitionistic Expectations}
\label{sec:qsl:intuitionistic}

In \SL, a predicate $\preda$ is called \emph{intuitionistic}, iff for all stacks $\sk$ and heaps $\hh, \hh'$ with $\hh \subseteq \hh'$,
$%
	(\sk,\, \hh) \models \preda$ implies $(\sk,\, \hh') \models \preda
$. %
So as we extend the heap from $\hh$ to $\hh'$, an intuitionistic predicate can only get \emph{``more true''}.
Analogously, in \QSL, as we extend the heap from $\hh$ to $\hh'$, the quantity measured by an \emph{intuitionistic expectation} can only \emph{increase}.
Formally, an expectation $\ff$ is called \emph{intuitionistic} iff
\begin{align*}
	\forall\,\sk, \hh \subseteq \hh' \colon \quad \ff(\sk, \hh) \lleq \ff(\sk, \hh')~.
\end{align*}
A natural example of an intuitionistic expectation is the heap size quantity 
\begin{align*}
    \heapSize \eeq \lambda (\sk,\hh)\mydot |\dom{\hh}|~.
\end{align*}
%

\cite{DBLP:conf/lics/Reynolds02} describes a systematic way to construct intuitionistic predicates from possibly non-intuitionistic ones: 
For any predicate $\preda$, 
%
		$\preda \sepcon \true$ is the strongest intuitionistic predicate weaker than $\preda$, and 
		$\true \sepimp \preda$ is the weakest intuitionistic predicate stronger than $\preda$.
%
In \QSL: 
\begin{theorem}[Tightest Intuitionistic Expectations]
\label{thm:intuitionistification}
	Let $\ff \in \E$. Then:
	\begin{enumerate}
		\item
			$\ff \sepcon 1$ is the smallest intuitionistic expectation that is greater than $\ff$. Formally, $\ff \sepcon 1$ is intuitionistic, $\ff \preceq \ff \sepcon 1$, and 
            for all intuitionistic $\ff'$ satisfying $\ff \preceq \ff'$, we have $\ff \sepcon 1 \preceq \ff'$.
			%
			%
		\item
			$1 \sepimp \ff$ is the greatest intuitionistic expectation that is smaller than $\ff$. Formally,  $1 \sepimp \ff$ is intuitionistic,  $1 \sepimp \ff \preceq \ff$, and 
            for all intuitionistic $\ff'$ satisfying $\ff' \preceq \ff$, we have $\ff' \preceq 1 \sepimp \ff$.
			%
			%
	\end{enumerate}
\end{theorem}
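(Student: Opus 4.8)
The plan is to prove the two items of Theorem~\ref{thm:intuitionistification} by unfolding the definitions of $\sepcon$ and $\sepimp$ and exploiting the monoid and adjointness results already established (Theorems~\ref{thm:sep-con-monoid} and~\ref{thm:adjointness}).

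For item (1), I first show $\ff \sepcon 1$ is intuitionistic. Let $\hh \subseteq \hh'$, so $\hh' = \hh \sepcon \hh_0$ for some $\hh_0 \disjoint \hh$. Unfolding, $(\ff \sepcon 1)(\sk, \hh') = \max_{\hh_1 \sepcon \hh_2 = \hh'} \ff(\sk,\hh_1)$. Any partition $\hh = \hh_a \sepcon \hh_b$ witnessing the value of $(\ff\sepcon 1)(\sk,\hh)$ extends to the partition $\hh' = \hh_a \sepcon (\hh_b \sepcon \hh_0)$, so $(\ff \sepcon 1)(\sk,\hh) \leq (\ff\sepcon 1)(\sk,\hh')$. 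Next, $\ff \preceq \ff \sepcon 1$ follows from neutrality of $\emp$ (Theorem~\ref{thm:sep-con-monoid}.\ref{thm:sep-con-monoid:neut}) and monotonicity of $\sepcon$ (Theorem~\ref{thm:sepcon-monotonic}): $\ff = \ff \sepcon \emp \preceq \ff \sepcon 1$, since $\emp \preceq 1$. For the minimality part, let $\ff'$ be intuitionistic with $\ff \preceq \ff'$. For any state $(\sk,\hh)$ and any partition $\hh = \hh_1 \sepcon \hh_2$, we have $\ff(\sk,\hh_1) \leq \ff'(\sk,\hh_1) \leq \ff'(\sk,\hh)$ using $\ff \preceq \ff'$ and intuitionisticity of $\ff'$ (with $\hh_1 \subseteq \hh$). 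Taking the max over all partitions gives $(\ff \sepcon 1)(\sk,\hh) \leq \ff'(\sk,\hh)$.

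For item (2), I show $1 \sepimp \ff$ is intuitionistic by a similar domain-extension argument: for $\hh \subseteq \hh'$ with $\hh' = \hh \sepcon \hh_0$, every heap $\hh''$ disjoint from $\hh'$ is also disjoint from $\hh$, and $\hh_0 \sepcon \hh''$ is disjoint from $\hh$ with $\hh \sepcon (\hh_0 \sepcon \hh'') = \hh' \sepcon \hh''$; since the embedding $\hh'' \mapsto \hh_0 \sepcon \hh''$ maps the index set for $(1\sepimp\ff)(\sk,\hh')$ into that for $(1\sepimp\ff)(\sk,\hh)$ (note $1(\sk,\cdot) > 0$ always), the infimum over the smaller set is at least the infimum over the larger, yielding $(1\sepimp\ff)(\sk,\hh) \leq (1\sepimp\ff)(\sk,\hh')$. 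For $1 \sepimp \ff \preceq \ff$, instantiate the infimum in the definition with $\hh' = \emptyheap$, using $\hh \sepcon \emptyheap = \hh$ and $1(\sk,\emptyheap) > 0$. For the maximality part, let $\ff'$ be intuitionistic with $\ff' \preceq \ff$. Fix $(\sk,\hh)$; for any $\hh'$ disjoint from $\hh$ we have $\ff'(\sk,\hh) \leq \ff'(\sk, \hh \sepcon \hh') \leq \ff(\sk, \hh \sepcon \hh')$ by intuitionisticity of $\ff'$ and $\ff' \preceq \ff$; taking the infimum over such $\hh'$ gives $\ff'(\sk,\hh) \leq (1 \sepimp \ff)(\sk,\hh)$.

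The main obstacle I anticipate is the intuitionisticity of $1 \sepimp \ff$, specifically getting the infimum monotonicity direction right: one must check that shrinking the heap \emph{enlarges} the set of admissible extensions (after reindexing through $\hh_0$), so the infimum over a superset can only be smaller, which is the direction needed. Care is also needed with the corner case where some index set is empty (infimum $= \infty$); but since the predicate here is $1$, which holds on every heap including $\emptyheap$, the relevant sets are always nonempty, so this subtlety does not actually arise. A minor point worth stating is that item (2) as phrased for the domain $(\E,\preceq)$ should implicitly use the embedding of $\true$ as the constant $1$, consistent with the footnote convention; alternatively one may note $1 \sepimp \ff = \min\{1, 1\sepimp\ff\} \sepimp \ff$ is unnecessary here since $\ff$ is the quantity being measured, not a predicate.
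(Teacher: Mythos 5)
Your proof is correct and follows essentially the same pointwise, definition-unfolding route as the paper's appendix proof: extend partitions (resp.\ heap extensions) along $\hh \subseteq \hh'$ and use intuitionisticity of $\ff'$ for the extremality claims, with the only cosmetic difference that you obtain $\ff \preceq \ff \sepcon 1$ from neutrality of $\emp$ plus monotonicity of $\sepcon$ instead of restricting the maximum to the partition $\hh_1 = \hh$. Your reindexing argument ($\hh'' \mapsto \hh_0 \sepcon \hh''$) for the intuitionisticity of $1 \sepimp \ff$ is exactly the inequality that claim requires and is, if anything, stated more directly than the corresponding step in the appendix.
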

\REPORT{%
\begin{proof}
    See Appendix~\ref{app:intuitionistification}, p.~\pageref{app:intuitionistification}.%
\end{proof}%
}%
\noindent
For example, the \emph{contains-pointer predicate} $\containsPointer{\ee}{\ee'}$ defined by
\begin{align*}
	\containsPointer{\ee}{\ee'} \eeq \singleton{\ee}{\ee'} \sepcon 1
\end{align*}
is an intuitionistic version of the points-to predicate $\singleton{\ee}{\ee'}$: 
Whereas $\singleton{\ee}{\ee'}$ evaluates to $1$ iff the heap consists of \emph{exactly} one cell with value $\ee'$ at address $\ee$ and no other cells, $\containsPointer{\ee}{\ee'}$ evaluates to $1$ iff the heap \emph{contains} a cell with value $\ee'$ at address $\ee$ but possibly also other allocated memory.

Analogously, the fact that some cell with address $\ee$ exists on the heap is formalized by
\begin{align*}
        \containsPointer{\ee}{-} \eeq \validpointer{\ee} \sepcon 1~.
\end{align*}
With intuitionistic versions of points-to predicates at hand, we can derive specialized laws when dealing with the heap size quantity, which we already observed for a concrete heap in Example~\ref{ex:qsl:formulas}.
\begin{theorem}[Heap Size Laws]\label{thm:qsl:heap-size}
Let $\ff,\fg \in \E$ and $\ee,\ee'$ be arithmetic expressions. Then:
\begin{enumerate}
   \item $\singleton{\ee}{\ee'} \sepcon \heapSize \eeq \containsPointer{\ee}{\ee'} \cdot (\heapSize - 1)$ \label{thm:qsl:heap-size:sepcon}
   \item $\singleton{\ee}{\ee'} \sepimp \heapSize \eeq 1 + \heapSize + \containsPointer{\ee}{-} \cdot \infty$ \label{thm:qsl:heap-size:sepimp}
   \item $\left(\ff \sepcon \fg\right) \cdot \heapSize \ppreceq \left(\ff \cdot \heapSize\right) \sepcon \fg + \ff \sepcon \left(\fg \cdot \heapSize\right)$ \label{thm:qsl:heap-size:dist}
    \item $\left(\ff \sepcon \fg\right) \cdot \heapSize \eeq \left(\ff \cdot \heapSize\right) \sepcon \fg + \ff \sepcon \left(\fg \cdot \heapSize\right)$, if $X$ or $Y$ is domain-exact. \label{thm:qsl:heap-size:dist-full}
\end{enumerate}
\end{theorem}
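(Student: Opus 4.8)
The plan is to prove the four parts largely independently, each by unfolding the definitions of $\sepcon$, $\sepimp$, $\containsPointer{\cdot}{\cdot}$, and $\heapSize$ and performing an explicit case analysis on whether the relevant cell is present in the heap. For part~\ref{thm:qsl:heap-size:sepcon}, fix a state $(\sk,\hh)$ and write $a = \sk(\ee)$, $b = \sk(\ee')$. By definition $(\singleton{\ee}{\ee'}\sepcon\heapSize)(\sk,\hh)$ is the maximum over partitions $\hh = \hh_1\sepcon\hh_2$ of $\singleton{\ee}{\ee'}(\sk,\hh_1)\cdot|\dom{\hh_2}|$. The first factor is nonzero (and then $=1$) only for the unique partition with $\hh_1 = \{a\mapsto b\}$, which exists iff $a\in\dom{\hh}$ and $\hh(a)=b$; in that case $|\dom{\hh_2}| = |\dom{\hh}|-1$, otherwise the max is $0$. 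This is exactly $\containsPointer{\ee}{\ee'}(\sk,\hh)\cdot(\heapSize(\sk,\hh)-1)$, using that $\containsPointer{\ee}{\ee'} = \singleton{\ee}{\ee'}\sepcon 1$ detects precisely that cell. One should note $\heapSize - 1$ is interpreted as truncated subtraction, or simply observe the $\containsPointer{\cdot}{\cdot}$ factor vanishes whenever $\hh$ is empty, so the product is well-defined either way.

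For part~\ref{thm:qsl:heap-size:sepimp}, unfold $(\singleton{\ee}{\ee'}\sepimp\heapSize)(\sk,\hh) = \inf_{\hh'}\{|\dom{\hh\sepcon\hh'}| : \hh'\disjoint\hh,\ \hh'=\{a\mapsto b\}\}$ where again $a=\sk(\ee),b=\sk(\ee')$. If $a\notin\dom{\hh}$, the unique admissible extension is $\hh' = \{a\mapsto b\}$, giving $|\dom{\hh}|+1$; this matches $1 + \heapSize(\sk,\hh) + 0$ since $\containsPointer{\ee}{-}(\sk,\hh)=0$. If $a\in\dom{\hh}$, there is no disjoint $\hh'$ containing $a$, the set is empty, the infimum is $\infty$, and indeed $\containsPointer{\ee}{-}(\sk,\hh)=1$ so the right-hand side is $\infty$. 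For part~\ref{thm:qsl:heap-size:dist}, the plan is to expand both sides over partitions: the left side maximizes $\ff(\sk,\hh_1)\cdot\fg(\sk,\hh_2)\cdot(|\dom{\hh_1}|+|\dom{\hh_2}|)$ over $\hh=\hh_1\sepcon\hh_2$, which splits additively into the max of $\ff(\sk,\hh_1)\cdot|\dom{\hh_1}|\cdot\fg(\sk,\hh_2)$ plus $\ff(\sk,\hh_1)\cdot\fg(\sk,\hh_2)\cdot|\dom{\hh_2}|$ over a \emph{single} partition; bounding $\max(u+v)\le\max u + \max v$ gives the $\preceq$. Part~\ref{thm:qsl:heap-size:dist-full} then follows because domain-exactness of $\ff$ (or, by commutativity, $\fg$) forces the two maxima to be attained at the \emph{same} partition — this is precisely the mechanism already exploited in Theorem~\ref{thm:sep-con-distrib}.\ref{thm:sep-con-distrib:sepcon-over-plus-full}, so I would invoke or mirror that argument rather than redo it.

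The main obstacle I anticipate is purely bookkeeping around the degenerate/boundary cases: the truncated subtraction in part~\ref{thm:qsl:heap-size:sepcon} when $\hh=\emptyheap$, the empty-set infimum convention ($\inf\emptyset = \infty$) in part~\ref{thm:qsl:heap-size:sepimp}, and — for part~\ref{thm:qsl:heap-size:dist-full} — carefully articulating why domain-exactness of one operand pins down the maximizing partition's domain sizes so that the $+$ can be pulled out of the $\max$. None of the steps is mathematically deep; the care lies in making the $\infty$-arithmetic and the interaction with Theorem~\ref{thm:sep-con-distrib}.\ref{thm:sep-con-distrib:sepcon-over-plus-full} airtight. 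Detailed proofs are deferred to the appendix.
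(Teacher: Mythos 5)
Your proposal is correct and follows essentially the same route as the paper's proof in Appendix~\ref{app:qsl:heap-size}: pointwise unfolding of $\sepcon$ and $\sepimp$ with a case split on whether $\sk(\ee)\in\dom{\hh}$ (using $\inf\emptyset=\infty$) for parts~\ref{thm:qsl:heap-size:sepcon} and~\ref{thm:qsl:heap-size:sepimp}, the additive split $|\dom{\hh}|=|\dom{\hh_1}|+|\dom{\hh_2}|$ plus the triangle inequality for part~\ref{thm:qsl:heap-size:dist}, and the uniqueness-of-partition argument for domain-exact operands (the paper's Lemma~\ref{thm:domain-exact:unique-partitioning}, the same mechanism behind Theorem~\ref{thm:sep-con-distrib}.\ref{thm:sep-con-distrib:sepcon-over-plus-full}) for part~\ref{thm:qsl:heap-size:dist-full}. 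Your handling of the boundary case in part~\ref{thm:qsl:heap-size:sepcon} via the vanishing $\containsPointer{\ee}{\ee'}$ factor is also consistent with the paper.
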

\REPORT{%
\begin{proof}%
  See Appendix~\ref{app:qsl:heap-size}, p.~\pageref{app:qsl:heap-size}.%
\end{proof}%
}%
The first two rules illustrate the role of $\sepcon$ and $\sepimp$:
$\sepcon$ removes a part of the heap that is measured and consequently decreases the size of the remaining heap.
Dually, $\sepimp$ extends the heap and hence increases its size.
If the heap cannot be extended appropriately, the infimum in the definition of $\sepimp$ yields $\infty$.
The third and fourth rule intuitively state that the size of the heap captured by $\ff \sepcon \fg$ is the sum of the sizes of the heap captured by $\ff$, i.e. $\ff \cdot \heapSize$, and of the heap captured by $\fg$, i.e. $\fg \cdot \heapSize$.
However, in both cases we have to account for parts of the heap whose size is not measured, i.e. $\fg$ if we measure the size of $\ff$ and vice versa.
These parts are ``absorbed'' by an additional separating conjunction with $\fg$ and $\ff$, respectively.

\subsection{Recursive Expectation Definitions}
\label{sec:qsl:recursive}

To reason about unbounded data structures such as lists, trees, etc., separation logic relies on inductive predicate definitions (cf.~\cite{DBLP:conf/lics/Reynolds02,DBLP:conf/sas/Brotherston07}).
In \QSL, quantitative properties of unbounded data structures are specified similarly using recursive equations of the form
\begin{align}
  P(\vec{\za}) \eeq \ff_{P}(\vec{\za}), \label{eq:recursive-definition} 
\end{align}
where 
$\vec{\za} \in \Ints^n$,
$P\colon \Ints^n \To \E$, and $\ff_{\:\cdot\:}({\:\cdot\:}) \colon (\Ints^n \To \E) \to (\Ints^n \To \E)$ is a monotone function.
\begin{example}\label{ex:qsl:ls}
Consider a recursive predicate definition from standard separation logic: A singly-linked list segment with head $\za$ and tail $\zb$ is given by the equation
\begin{align*}
        \Ls{\za}{\zb} \eeq \underbrace{ \iverson{\za = \zb} \cdot \emp + \iverson{\za \neq \zb} \cdot {\textstyle \sup_{\zc}}~ \singleton{\za}{\zc} \sepcon \Ls{\zc}{\zb} }_{ {} \eqqcolon \ff_{\Lssymbol}(\za,\zb) }.
\end{align*}
Clearly, 
$\ff_{P}(\za,\zb)$ is monotone, i.e. $P \preceq P'$ implies $\ff_{P}(\za,\zb) \preceq \ff_{P'}(\za,\zb)$.
Hence, all list segments between $\alpha$ and $\beta$ are given by the least fixed point of the above equation.
\hfill$\triangle$
\end{example}%
The semantics of (\ref{eq:recursive-definition}) is defined as the least fixed point of a monotone expectation transformer
\begin{align*}
        \Psi_{P}\colon \quad \left(\Ints^n \To \E\right) \to \left(\Ints^n \To \E\right),\quad Q \mapsto \lambda \vec{\za} \mydot \ff_{Q}(\vec{\za}).
\end{align*}
Thus, we define the expectation given by recursive equation (\ref{eq:recursive-definition}) as
        $P(\vec{\za}) = \bigl( \lfp Q \mydot \Psi_{P}(Q) \bigr)(\vec{\za})$,
where $\lfp Q\mydot \Psi(Q)$ denotes the least fixed point of $\Psi$.
Existence of the least fixed point is guaranteed due to Tarski and Knaster's fixed point theorem (cf.~\cite{cousot1979constructive}).

This notion of recursive definitions coincides with the semantics of inductive predicates in \SL~\cite{DBLP:conf/sas/Brotherston07} if expectations are restricted to predicates. 
For instance, $\Ls{\za}{\zb}(\sk,\hh) = 1$ iff $\hh$ consists \emph{exactly} of a singly-linked list with head $\za$ and tail $\zb$. 

Recursive expectation definitions in $\QSL$ are, however, not limited to predicates.
For example, 
the \emph{length} of a singly-linked list segment can be defined as follows:
\begin{align*}
        \textstyle \Len{\za}{\zb} \eeq \iverson{\za \neq \zb} \cdot \sup_{\zc}~ \singleton{\za}{\zc} \sepcon \left( \Ls{\zc}{\zb} + \Len{\zc}{\zb} \right)
\end{align*}
If the heap exclusively consists of a singly-linked list from $\za$ to $\zb$, then the expectation $\Len{\za}{\zb}$ evaluates to the length of that list, and to zero otherwise.
We next collect a few properties of the two closely related expectations $\Lensymbol$ and $\Lssymbol$ that simplify reasoning about programs.
\begin{lemma}[Properties of List Segments and Lengths of List Segments] We have:
\label{thm:ls-props}
	\begin{enumerate}
		\item 
		\label{thm:ls-props:char}
			$\Len{\za}{\zb} \eeq \Ls{\za}{\zb} \cdot \heapSize$ 
			
		\item
		\label{thm:ls-props:lsls}
			$\Ls{\za}{\zb} \eeq \sup_{\zc} \Ls{\za}{\zc} \sepcon \Ls{\zc}{\zb}$
	\end{enumerate}
\end{lemma}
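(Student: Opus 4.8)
The plan is to prove both identities by fixed-point induction, exploiting that $\Lssymbol$ and $\Lensymbol$ are defined as least fixed points of the monotone transformers $\Psi_{\Lssymbol}$ and $\Psi_{\Lensymbol}$ from Section~\ref{sec:qsl:recursive}. For part~\ref{thm:ls-props:char}, I would first show that the expectation $\za,\zb \mapsto \Ls{\za}{\zb} \cdot \heapSize$ is a fixed point of $\Psi_{\Lensymbol}$, which by the least-fixed-point characterization of $\Lensymbol$ immediately gives $\Len{\za}{\zb} \preceq \Ls{\za}{\zb} \cdot \heapSize$; for the reverse inequality I would run fixed-point induction on $\Lssymbol$, showing that $\{(\za,\zb,\sk,\hh) : \Ls{\za}{\zb}(\sk,\hh) \cdot \heapSize(\sk,\hh) \leq \Len{\za}{\zb}(\sk,\hh)\}$ is closed under $\Psi_{\Lssymbol}$. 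The key algebraic ingredient here is the heap-size distribution law, Theorem~\ref{thm:qsl:heap-size}.\ref{thm:qsl:heap-size:sepcon}: unfolding $\Ls{\za}{\zb}$ in the case $\za \neq \zb$ gives a term $\sup_{\zc}\,\singleton{\za}{\zc} \sepcon \Ls{\zc}{\zb}$, and multiplying by $\heapSize$ and pushing the multiplication inside (using that $\singleton{\za}{\zc}$ is domain-exact, so Theorem~\ref{thm:qsl:heap-size:sepcon} applies and $\singleton{\za}{\zc} \sepcon \heapSize = \containsPointer{\za}{\zc} \cdot (\heapSize - 1)$) should reproduce exactly the $\Psi_{\Lensymbol}$-unfolding $\iverson{\za \neq \zb} \cdot \sup_{\zc}\,\singleton{\za}{\zc}\sepcon(\Ls{\zc}{\zb} + \Len{\zc}{\zb})$, where the "$+\Len{\zc}{\zb}$" contributes the length of the tail and the "$\Ls{\zc}{\zb}$" accounts for the single removed cell contributing $1$ to the size.

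For part~\ref{thm:ls-props:lsls}, I would again argue in two directions. The inclusion $\sup_{\zc}\,\Ls{\za}{\zc} \sepcon \Ls{\zc}{\zb} \preceq \Ls{\za}{\zb}$ follows by fixed-point induction on the \emph{first} copy of $\Lssymbol$: one shows that $\lambda \za,\zc \mathbin{.}\, \Ls{\za}{\zc}$ can be replaced by any pre-fixed-point approximant and the resulting family stays below $\Lssymbol$, using associativity of $\sepcon$ (Theorem~\ref{thm:sep-con-monoid}.\ref{thm:sep-con-monoid:ass}) to regroup $\singleton{\za}{\zd} \sepcon (\Ls{\zd}{\zc} \sepcon \Ls{\zc}{\zb})$ and the base case $\Ls{\za}{\za} = \emp$ together with neutrality of $\emp$ (Theorem~\ref{thm:sep-con-monoid}.\ref{thm:sep-con-monoid:neut}). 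The converse $\Ls{\za}{\zb} \preceq \sup_{\zc}\,\Ls{\za}{\zc} \sepcon \Ls{\zc}{\zb}$ is immediate by instantiating $\zc := \zb$ on the right and using $\Ls{\zb}{\zb} \succeq \emp$ together with neutrality of $\emp$ and monotonicity of $\sepcon$ (Theorem~\ref{thm:sepcon-monotonic}).

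The main obstacle I anticipate is the bookkeeping in part~\ref{thm:ls-props:char}: one must be careful that the $\heapSize$ multiplication distributes correctly across the separating conjunction and the supremum, since in general $\sepcon$ only \emph{sub}distributes over $+$ (Theorem~\ref{thm:sep-con-distrib}.\ref{thm:sep-con-distrib:sepcon-over-plus}) and we only get \emph{full} distributivity when one operand is domain-exact. Here $\singleton{\za}{\zc}$ is domain-exact, so Theorem~\ref{thm:qsl:heap-size}.\ref{thm:qsl:heap-size:dist-full} applies and the identity is exact rather than merely an inequality, but verifying that the domain-exactness side conditions line up at each unfolding step — and that pulling $\heapSize$ past the supremum over $\zc$ is justified — is where the care is needed. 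A secondary subtlety is making the fixed-point inductions precise: $\Lssymbol$ and $\Lensymbol$ are defined by a \emph{mutual-looking} recursion (the definition of $\Lensymbol$ refers to $\Lssymbol$), so one should first fix $\Lssymbol$ as the least fixed point of $\Psi_{\Lssymbol}$ and only then treat $\Psi_{\Lensymbol}$ as a transformer on $\Ints^2 \To \E$ with $\Lssymbol$ held constant, which keeps both inductions single-variable and routine.
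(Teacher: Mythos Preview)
Your plan is sound and would go through, but for part~\ref{thm:ls-props:char} it takes a genuinely different route from the paper. You argue via two fixed-point inductions (one on $\Psi_{\Lensymbol}$ to get $\Len{\za}{\zb}\preceq\Ls{\za}{\zb}\cdot\heapSize$, one on $\Psi_{\Lssymbol}$ for the reverse), whereas the paper proves the equality in a single pass by induction on the \emph{heap size} $|\dom{\hh}|$: since heaps are finite, one can unfold $\Ls{\za}{\zb}\cdot\heapSize$ once, apply Theorem~\ref{thm:qsl:heap-size}.\ref{thm:qsl:heap-size:dist-full} (not item~\ref{thm:qsl:heap-size:sepcon}, which is a different identity) to distribute $\heapSize$ across $\singleton{\za}{\zc}\sepcon\Ls{\zc}{\zb}$, and then invoke the induction hypothesis on the strictly smaller remaining heap. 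This buys the paper a direct equality argument with no need to separately establish the two inequalities, no admissibility checks, and no appeal to continuity of $\Psi_{\Lssymbol}$; your approach, on the other hand, is the more ``textbook'' domain-theoretic one and would port unchanged to settings where heap-size induction is unavailable. For part~\ref{thm:ls-props:lsls} your plan essentially coincides with the paper's: it too establishes continuity of $\Psi_{\Lssymbol}$, uses Kleene to get $\Ls{\za}{\zb}=\sup_n\ff^n_0(\za,\zb)$, and then proves $\sup_\zc\,\ff^n_0(\za,\zc)\sepcon\Ls{\zc}{\zb}\preceq\Ls{\za}{\zb}$ by induction on $n$, with the easy direction implicit via the instantiation $\zc=\za$ (or $\zc=\zb$) you describe.
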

\REPORT{%
\begin{proof}%
  See Appendix~\ref{app:ls-props}, p.~\pageref{app:ls-props}.%
\end{proof}%
}%
\noindent%
The first property gives an alternative characterization of list lengths which exploits the fact that $\iverson{\Lssymbol}$ ensures that nothing but a list is contained in the heap. 
Consequently, the length of that list is given by the size of the specified heap. 
The second property shows that lists can be split into multiple lists or merged into a single list at any address in between.

The list-length quantity $\Lensymbol$ actually serves two purposes: It ensures that the heap is a list and if so determines the longest path through the heap. 
The latter part can be generalized to other data structures. 
To this end, assume the heap is organized into fixed-size, successive blocks of memory representing \emph{records}, for example the left and right pointer of a binary tree.
If the size of records is a constant $n \in \Nats$, then the longest path through these records starting in $\za$ is given by
\begin{align*}
        \npath{n}{\za} \eeq \textstyle \sup_{\zb \in \Nats}~ \Bigl(\left(\max_{0 \leq k < n} \singleton{\za + k}{\zb}\right) 
                               \sepcon \left(1+\npath{n}{\zb}\right)\Bigr)~.
\end{align*}
Intuitively, $\npath{n}{\za}$ always selects the successor address $\zb$ among the possible pointers in the record belonging to $\za$ which is the source of the longest path through the remaining heap. 
\CHANGED{
Notice that no explicit base case is needed, because the length of empty paths is zero.
Moreover, the use of the separating conjunction prevents selecting the same pointer twice. 
}
The quantity $\textsf{path}$ is more liberal than $\Lensymbol$ in the sense that heaps may contain pointers that do not lie on the specified path.
The $\textsf{path}$ quantity can then be easily combined with stricter \mbox{data structure specifications}.
\begin{example}
Consider a classical recursive \SL predicate specifying binary trees with root $\za$:
\begin{align*}
  \Tree{\za} \eeq \textstyle \iverson{\za = \nil} \cdot \emp ~+~ \sup_{\zb,\zc \in \Nats}~ \singleton{\za}{\zb,\zc} \sepcon \Tree{\zb} \sepcon \Tree{\zc}~.
\end{align*}
Combining $\Tree{\za}$ with $\npath{2}{\za}$, we can measure the height of binary trees \mbox{with root $\za$}:
\begin{align*}
  \mathsf{treeHeight}(\za) \eeq \Tree{\za} \cdot \npath{2}{\za}~.
\end{align*}
This is illustrated in Figure~\ref{fig:path-equation}, where two heaps are graphically depicted as directed graphs.
The left graph contains a cycle and thus does \emph{not} constitute a binary tree. Consequently, $\Tree{\za} = 0$. 
The longest path through this heap is $\za \zb_1 \ldots \zb_4$, i.e. $\npath{2}{\za} = 5$.
In contrast, the right graph \emph{is} a binary tree with root $\za$, i.e. $\Tree{\za} = 1$. The longest path through this heap is of length two, e.g. $\za \zb_1 \zb_2$.
Hence, the height of the tree 
is given by $\mathsf{treeHeight}(\za) = \Tree{\za} \cdot \npath{2}{\za} = 2$.
\hfill$\triangle$
\end{example}
\begin{figure}[t]
\begin{tikzpicture}[->,>=stealth',shorten >=1pt,auto,node distance=0.8cm]

  \begin{scope}
  \node (a) {$\alpha$};
  \node (b1) [node distance=1cm, below left of=a] {$\beta_1$};
  \node (b2) [below of=b1] {$\beta_2$};
  \node (b3) [node distance=1cm, below right of=a] {$\beta_3$};
  \node (b4) [below of=b3] {$\beta_4$};

  \node (text) [below of = a, node distance = 2cm] {$\Tree{\alpha} = 0,~\Path{\alpha} = 5$};

  \path[->]
    (a) edge (b1)
    (b1) edge (b2)
    (a) edge (b3)
    (b3) edge (b4)
    (b2) edge[bend right] (a)
    ;
  \end{scope}
  \begin{scope}[shift={(5,0)}]
  \node (a) {$\alpha$};
  \node (c1) [node distance=1cm, below left of=a] {$0$};
  \node (b1) [node distance=1cm, below right of=a] {$\beta_1$};
  \node (c2) [node distance=1cm, below right of=b1] {$0$};
  \node (b2) [node distance=1cm, below left of=b1] {$\beta_2=0$};

  \node (text) [below of = a, node distance = 2cm] {$\Tree{\alpha} = 1,~\Path{\alpha} = 2$};

  \path[->]
    (a) edge (b1)
    (a) edge (c1)
    (b1) edge (c2)
    (b1) edge (b2)
    ;
  \end{scope}
\end{tikzpicture}
\caption{Evaluation of $\Tree{\alpha}$ and $\Path{\alpha}$ for two heaps depicted as graphs. Here, an edge $x \to y$ denotes $h(s(x)) = s(y)$ or $h(s(x+1))=s(y)$.}
\label{fig:path-equation}
\end{figure}
%
%
%

%


\section{Reasoning about Programs} \label{sec:wp}

We now turn from \QSL as an assertion language to program verification.
Classical separation logic is commonly applied as a basis for Floyd-Hoare-style correctness proofs.
The main concept in Floyd-Hoare logic are \emph{Hoare triples}.
A Hoare triple $\hoare{\preda}{\cc}{\predb}$ consists of a precondition $\preda$, a non--probabilistic program $\cc$, and a postcondition $\predb$.

One approach to proving a triple $\hoare{\preda}{\cc}{\predb}$ valid is to 
determine whether precondition $\preda$ is covered by all initial states that --- executed on $\cc$ --- reach a final state satisfying postcondition $\predb$.
This kind of \emph{backward reasoning} corresponds to Dijkstra's weakest preconditions.
More precisely, the \emph{weakest precondition of $\cc$ with respect to postcondition $\predb$} is the weakest predicate $\wp{\cc}{\predb}$, such that the triple $\hoare{\wp{\cc}{\predb}}{\cc}{\predb}$ is valid, i.e.~$\wp{\cc}{\predb}$ is the predicate such that
\begin{align*}
	\forall~ \preda\colon \qquad \preda \implies \wp{\cc}{\predb} \qiff \hoare{\preda}{\cc}{\predb} \text{ is valid}~.
\end{align*}
For \SL, validity of Hoare triples usually includes that ``correct programs do not fail''~\cite{DBLP:conf/fossacs/YangO02,DBLP:conf/lics/Reynolds02}, i.e. no execution satisfying the precondition may lead to a memory fault.

Reasoning about probabilistic programs is more subtle.
Running a probabilistic program on an initial state does not yield one or more final states, but a \emph{sub}distribution of final states.
The missing probability mass corresponds to the \emph{probability of nontermination or encountering a memory fault}.
Furthermore, when performing \emph{quantitative} reasoning, the notion of correctness becomes blurred.
For instance, it might be acceptable that a program fails with some small probability.

In order to account for probabilistic behavior,~\cite{DBLP:conf/stoc/Kozen83} generalized weakest precondition reasoning from predicates to measurable functions and later~\cite{DBLP:series/mcs/McIverM05} (re)introduced nondeterminism and coined the term \emph{weakest preexpectation}.
To incorporate dynamic memory, we extend their approach by lifting the backward reasoning rules of~\cite{DBLP:conf/popl/IshtiaqO01,DBLP:conf/lics/Reynolds02} to a quantitative setting.
To be precise, our calculus is designed for \emph{total correctness}, asserts that \emph{no memory faults} happen during any execution (with positive probability), and assumes a \emph{demonic} interpretation of nondeterminism.
Alternative design choices are discussed in Section~\ref{sec:wp:landscape}.

Notice that forward reasoning in the sense of strongest postexpectations is not an option as in general strongest postexpectations do not exist for probabilistic programs~\cite{DBLP:phd/ethos/Jones90}.
This also justifies our need for the separating implication in \QSL which --- in classical approaches based on separation logic --- is not needed when applying forward reasoning.

\begin{table*}[t]
\caption{Rules for the weakest preexpectation transformer. Here $\ff \in \E$ is a (post)expectation, 
$\ff\subst{x}{v} =  \lambda (\sk,\hh)\mydot \ff(\sk\subst{x}{\sk(v)}, \hh)$ is the ``syntactic replacement'' of $x$ by $v$ in $\ff$, and
$\vec{\ee} = (\ee_1,\ldots,\ee_n)$ is a tuple of expressions. 
\CHANGED{Moreover, $\lfp \fg\mydot \Phi(\fg)$ is the least fixed point of $\Phi$.}
\REMOVED{Moreover, $\AVAILLOC{\vec{\ee}} = \lambda(\sk,\hh)\mydot \{ v \in \Nats ~|~ v,v+1,\ldots,v+|\vec{\ee}|-1 \notin \dom{\hh} \}$ collects all suitable memory locations for allocation of $\vec{\ee}$ in heap $\hh$ and 
$\lfp \fg\mydot \Phi(\fg)$ is the least fixed point of $\Phi$.}
}
\label{table:wp}
\renewcommand{\arraystretch}{1.5}
\begin{tabular}{@{\hspace{1em}}l@{\hspace{2em}}l}
	\hline
	$\boldsymbol{\cc}$			& $\boldsymbol{\textbf{\textsf{wp}}\,\left \llbracket \cc\right\rrbracket  \left(\ff \right)}$ \\
	\hline
	$\SKIP$					& $\ff$ 																					\\
	$\ASSIGN{x}{\ee}$			& $\ff\subst{x}{\ee}$ \\
	$\COMPOSE{\cc_1}{\cc_2}$		& $\wp{\cc_1}{\vphantom{\big(}\wp{\cc_2}{\ff}}$ \\
	$\ITE{\guard}{\cc_1}{\cc_2}$		& $\iverson{\guard} \cdot \wp{\cc_1}{\ff} + \iverson{\neg \guard} \cdot \wp{\cc_2}{\ff}$ \\
	$\WHILEDO{\guard}{\cc'}$		& $\lfp \fg\mydot \iverson{\neg \guard} \cdot \ff + \iverson{\guard} \cdot \wp{\cc'}{\fg}$ \\
	$\PCHOICE{\cc_1}{\pp}{\cc_2}$		& $\pp \cdot \wp{\cc_1}{\ff} + (1- \pp) \cdot \wp{\cc_2}{\ff}$ \\
    $\ALLOC{x}{\vec{\ee}}$	& $\displaystyle\inf_{\CHANGED{v \in \AVAILLOC{\vec{\ee}}}} \singleton{v}{\vec{\ee}} \sepimp \ff\subst{x}{v}$ \\
	$\ASSIGNH{x}{\ee}$			& $\displaystyle\sup_{v \in \Ints} \singleton{\ee}{v} \sepcon \bigl( \singleton{\ee}{v} \sepimp \ff\subst{x}{v} \bigr)$ \\
	$\HASSIGN{\ee}{\ee'}$			& $\validpointer{\ee} \sepcon \bigl(\singleton{\ee}{\ee'} \sepimp \ff \bigr)$ \\
	$\FREE{\ee}$				& $\validpointer{\ee} \sepcon \ff$ \\
	\hline
\end{tabular}
\end{table*}
\subsection{Weakest Preexpectations}
The \emph{weakest preexpectation} 
of program $c$ with respect to \emph{postexpectation} $\ff \in \E$
is an expectation $\wp{\cc}{\ff} \in \E$, such that $\wp{\cc}{\ff}(\sk,\, \hh)$ is the \emph{least expected value} of $\ff$ (measured in the final states) \emph{after successful termination}, i.e.\ \emph{no memory faults} during execution, of $\cc$ on initial state $(\sk,\, \hh)$.
In particular, if $\ff$ is a predicate 
then $\wp{\cc}{\ff}(\sk,\hh)$ is the least probability that $\cc$ executed on initial state $(\sk,\, \hh)$ does not cause a memory fault and terminates successfully in a final state satisfying $\ff$.
In the following, we extend the weakest preexpectation calculus of~\cite{DBLP:series/mcs/McIverM05} to heap-manipulating programs, i.e.\ $\hpgcl$ as presented \mbox{in Section \ref{sec:hpgcl}}.
\begin{definition}[Weakest Preexpectation Transformer]
    The \emph{weakest preexpectation $\wp{\cc}{\ff}$ of $\cc \in \hpgcl$ with respect to postexpectation 
    $\ff \in \E$} 
    is defined according to the rules in \autoref{table:wp}.
\hfill $\triangle$
\end{definition}%
\noindent%
Let us go over the individual rules for $\wpsymbol$ stated in \autoref{table:wp}.
We start with briefly considering the non-heap-manipulating constructs.
$\wpC{\SKIP}$ behaves as the identity since $\SKIP$ does not modify the program state. 
For $\wp{\ASSIGN{x}{\ee}}{\ff}$ we return $\ff\subst{x}{\ee}$ which is obtained from $\ff$ by ``syntactically replacing'' $x$ with $\ee$. 
More formally, $\ff\subst{x}{\ee} = \lambda (\sk,\hh)\mydot \ff(\sk\subst{x}{\sk(\ee)},\hh)$.
For sequential composition, $\wp{\COMPOSE{\cc_1}{\cc_2}}{\ff}$ obtains a preexpectation of the program $\COMPOSE{\cc_1}{\cc_2}$ by applying $\wpC{\cc_1}$ to
the intermediate expectation obtained from $\wp{\cc_2}{\ff}$.
For conditional choice, $\wp{\ITE{\guard}{\cc_1}{\cc_2}}{\ff}$ selects either $\wp{\cc_1}{\ff}$ or $\wp{\cc_2}{\ff}$ by multiplying them accordingly with the indicator function of $\guard$ or the indicator function of $\neg \guard$ and adding those two products.
For the probabilistic choice, $\wp{\PCHOICE{\cc_1}{\pp}{\cc_2}}{\ff}$ is a convex sum that weighs $\wp{\cc_1}{\ff}$ and $\wp{\cc_2}{\ff}$ by probabilities $\pp$ and $(1-\pp)$, respectively.
For loops, $\wp{\WHILEDO{\guard}{\cc'}}{\ff}$ is characterized as a least fixed point of loop unrollings. We discuss loops and corresponding proof rules in Section~\ref{sec:wp:loops}.
For a detailed treatment of weakest preexpectations for these standard constructs, confer~\cite{DBLP:series/mcs/McIverM05}.
Before we consider the remaining statements, let us collect a few basic properties of $\wpsymbol$:
\begin{theorem}[Basic Properties of \textnormal{$\wpsymbol$}]\label{thm:wp:basic}
  For all \hpgcl-programs $\cc$, expectations $\ff,\fg \in \E$, predicates $\preda$ and constants $k \in \PosReals$, we have:
  \begin{enumerate}
        \CHANGED{
        \item Monotonicity:\quad $\ff \ppreceq \fg \qimplies \wp{\cc}{\ff} \ppreceq \wp{\cc}{\fg}$
              \label{thm:wp:basic:monotonicity}
        \item Super--linearity:\quad $\wp{\cc}{k \cdot \ff + \fg} \ppreceq k \cdot \wp{\cc}{\ff} + \wp{\cc}{\fg}$
              \label{thm:wp:basic:super-linearity}
        }
        \item Strictness:\quad $\wp{\cc}{0} \eeq 0$ 
              \label{thm:wp:basic:preservation-of-0}
        \item $1$--Boundedness of Predicates:\quad $\wp{\cc}{\iverson{\preda}} \ppreceq 1$.
              \label{thm:wp:basic:1-boundedness}
  \end{enumerate}
  Additionally, if $\cc$ does not contain an allocation statement $\ALLOC{x}{\vec{\ee}}$, we have:
  \begin{enumerate}
    \setcounter{enumi}{4}
    \item $\omega$-continuity: For every increasing $\omega$-chain $\ff_1 \preceq \ff_2 \ppreceq \ldots$ in $\E$, we have
    \belowdisplayskip=0pt
    \begin{align*}
    	\textstyle \sup_{n} \wp{\cc}{\ff_n} \eeq \wp{\cc}{\sup_{n} \ff_n}~.
    \end{align*}
          \label{thm:wp:basic:continuity}
    \item Linearity: $\wp{\cc}{k \cdot \ff + \fg} \eeq k \cdot \wp{\cc}{\ff} + \wp{\cc}{\fg}$
          \label{thm:wp:basic:linearity}
  \end{enumerate}
\end{theorem}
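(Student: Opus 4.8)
The plan is to prove all six items simultaneously by induction on the structure of $\cc$, establishing items~\ref{thm:wp:basic:monotonicity}--\ref{thm:wp:basic:1-boundedness} for every \hpgcl-program $\cc$ and items~\ref{thm:wp:basic:continuity}--\ref{thm:wp:basic:linearity} only for allocation-free $\cc$ --- so that in the latter case every subprogram reached by the induction hypothesis is itself allocation-free. Before the induction I would record a few elementary facts: that $\iverson{\preda} \sepimp (\cdot)$ is monotone in its second argument and satisfies $\iverson{\preda} \sepimp (k \cdot \ff) = k \cdot (\iverson{\preda} \sepimp \ff)$ for $k \in \PosReals$ (with $0 \cdot \infty = 0$); that on $\PosRealsInf$ the infimum is superadditive, the supremum is subadditive, both are homogeneous under nonnegative scalars, and both commute with suprema of increasing chains; and that, although the infimum in $\wp{\ALLOC{x}{\vec{\ee}}}{\cdot}$ ranges over all of $\AVAILLOC{\vec{\ee}}$, only finitely many addresses are blocked by the finite current heap, so infinitely many $v$ admit a disjoint placement of the fresh block. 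The heavy lifting in the base cases is done by monotonicity of $\sepcon$ (Theorem~\ref{thm:sepcon-monotonic}) and the (sub)distributivity laws of Theorem~\ref{thm:sep-con-distrib}, in particular the full distributivity over $+$ available for domain-exact expectations.

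For the base cases: $\wpC{\SKIP}$ is the identity and $\wpC{\ASSIGN{x}{\ee}}$ is the substitution $\ff \mapsto \ff\subst{x}{\ee}$, both linear maps on $\E$, so all of items~\ref{thm:wp:basic:monotonicity}--\ref{thm:wp:basic:linearity} hold trivially. For $\FREE{\ee}$ and $\HASSIGN{\ee}{\ee'}$ I would use that $\validpointer{\ee}$ and $\singleton{\ee}{\ee'}$ are domain-exact: then $\validpointer{\ee} \sepcon (\cdot)$ is linear (Theorem~\ref{thm:sep-con-distrib}.\ref{thm:sep-con-distrib:sepcon-over-plus-full} plus homogeneity), and $\singleton{\ee}{\ee'} \sepimp W$ evaluates pointwise either to a single evaluation of $W$ on the unique admissible extension or to $\infty$, hence is monotone and linear; composing yields items~\ref{thm:wp:basic:monotonicity}--\ref{thm:wp:basic:linearity}, and $1$-boundedness holds because an indicator remains an indicator under substitution and $W \preceq 1$ is preserved. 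For $\ASSIGNH{x}{\ee}$ the point is that at a fixed state $(\sk,\hh)$ the summand $\singleton{\ee}{v} \sepcon (\singleton{\ee}{v} \sepimp \ff\subst{x}{v})$ can be nonzero for at most one $v$ (namely $v = \hh(\sk(\ee))$, and for none if $\sk(\ee) \notin \dom{\hh}$), so $\sup_v$ collapses pointwise and the analysis reduces to that of mutation. For $\ALLOC{x}{\vec{\ee}}$, monotonicity follows from monotonicity of $\cdot\subst{x}{v}$, $\sepimp$, and $\inf$, and strictness from $\singleton{v}{\vec{\ee}} \sepimp 0 = 0$ at the infinitely many $v$ admitting a placement; $1$-boundedness and super-linearity follow from superadditivity and homogeneity of $\inf$ together with the existence of such $v$ (so the infimum is bounded by a finite, $1$-bounded value for predicates); items~\ref{thm:wp:basic:continuity} and~\ref{thm:wp:basic:linearity} are not claimed here, and indeed fail because an infimum of sums can strictly exceed the sum of the infima.

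For the inductive cases, $\COMPOSE{\cc_1}{\cc_2}$, $\ITE{\guard}{\cc_1}{\cc_2}$, and $\PCHOICE{\cc_1}{\pp}{\cc_2}$ preserve every property, since the transformer is, respectively, a composition or a pointwise indicator-/convex-weighted combination of the transformers of the subprograms. For the loop, write
\begin{align*}
  \Phi_{\ff}(\fg) \eeq \iverson{\neg \guard} \cdot \ff + \iverson{\guard} \cdot \wp{\cc'}{\fg},
\end{align*}
so that $\wp{\WHILEDO{\guard}{\cc'}}{\ff} = \lfp \fg\mydot \Phi_{\ff}(\fg)$, which exists by the Knaster--Tarski theorem because $\Phi_{\ff}$ is monotone (induction hypothesis~\ref{thm:wp:basic:monotonicity} for $\cc'$). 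Items~\ref{thm:wp:basic:monotonicity}--\ref{thm:wp:basic:1-boundedness} then follow from Park-style (pre-)fixed-point arguments requiring only monotonicity: $\ff \preceq \fg$ gives $\Phi_{\ff} \preceq \Phi_{\fg}$ pointwise, hence $\lfp \Phi_{\ff} \preceq \lfp \Phi_{\fg}$ (item~\ref{thm:wp:basic:monotonicity}); $0$ is a fixed point of $\Phi_0$ (item~\ref{thm:wp:basic:preservation-of-0}); $1$ is a pre-fixed point of $\Phi_{\iverson{\preda}}$ since $\wp{\cc'}{1} \preceq 1$ by the induction hypothesis (item~\ref{thm:wp:basic:1-boundedness}); and $k \cdot \lfp \Phi_{\ff} + \lfp \Phi_{\fg}$ is a pre-fixed point of $\Phi_{k \cdot \ff + \fg}$ by super-linearity of $\wp{\cc'}$ (item~\ref{thm:wp:basic:super-linearity}). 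If $\cc'$ is allocation-free, then $\wp{\cc'}$ is $\omega$-continuous (induction hypothesis~\ref{thm:wp:basic:continuity}), hence so is $\Phi_{\ff}$ and $\lfp \Phi_{\ff} = \sup_n \Phi_{\ff}^{\,n}(0)$; an inner induction on $n$ shows $\ff \mapsto \Phi_{\ff}^{\,n}(0)$ is $\omega$-continuous and linear, and both properties pass to the directed supremum over $n$ (commuting suprema; continuity of $+$ and scalar multiplication on $\PosRealsInf$ along increasing chains), giving items~\ref{thm:wp:basic:continuity} and~\ref{thm:wp:basic:linearity}.

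The main obstacle is the loop case: one must push exactly the right amount of structure through the least fixed point. Since allocation introduces countably infinite nondeterminism --- an infimum over $\AVAILLOC{\vec{\ee}}$ --- and thereby destroys $\omega$-continuity, items~\ref{thm:wp:basic:monotonicity}--\ref{thm:wp:basic:1-boundedness} must be obtained purely from monotonicity and Knaster--Tarski via Park induction on pre-fixed points, never from Kleene iteration; conversely, items~\ref{thm:wp:basic:continuity} and~\ref{thm:wp:basic:linearity} genuinely require the allocation-free hypothesis, and the subtle step there is verifying that $\omega$-continuity and linearity of the finite unrollings $\Phi_{\ff}^{\,n}(0)$ survive passage to the limit. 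A minor but pervasive nuisance is the uniform treatment of the value $\infty$ and of the degenerate case $k = 0$ (using $0 \cdot \infty = 0$) throughout the (sub/super)linearity computations, together with the bookkeeping ensuring that the infimum defining $\wp{\ALLOC{x}{\vec{\ee}}}{\cdot}$ is controlled by its values at the placement-admitting addresses despite ranging over all of $\AVAILLOC{\vec{\ee}}$.
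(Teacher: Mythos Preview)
Your structural-induction plan matches the paper's, and in several places your arguments are cleaner. The paper proves each item by its own structural induction and handles the loop case uniformly by transfinite induction on the ordinal iterates $\Phi^{\alpha}(0)$; it then derives strictness from super-linearity and $1$-boundedness from monotonicity plus a separate induction that $\wp{\cc}{1}\preceq 1$. Your Park/pre-fixed-point treatment of monotonicity, strictness, and $1$-boundedness in the loop case is shorter and perfectly valid, and your Kleene-iteration treatment of items~\ref{thm:wp:basic:continuity} and~\ref{thm:wp:basic:linearity} (justified by $\omega$-continuity of $\Phi_\ff$ in the allocation-free case) is essentially the paper's transfinite induction specialised to $\omega$.

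There is, however, a genuine gap in item~\ref{thm:wp:basic:super-linearity}. The inequality as stated, $\wp{\cc}{k\ff+\fg}\preceq k\,\wp{\cc}{\ff}+\wp{\cc}{\fg}$, is a typo; the correct direction is $\succeq$, and that is what the paper's proof in the appendix actually establishes (e.g.\ for $\ALLOC{x}{\vec\ee}$ it uses $\inf\{a+b\}\geq\inf\{a\}+\inf\{b\}$). Your own allocation argument agrees: invoking ``superadditivity of $\inf$'' yields $\succeq$, not $\preceq$. But once the direction is $\succeq$, your Park argument for the loop no longer works: exhibiting $k\,\lfp\Phi_\ff+\lfp\Phi_\fg$ as a pre-fixed point of $\Phi_{k\ff+\fg}$ would give $\lfp\Phi_{k\ff+\fg}\preceq k\,\lfp\Phi_\ff+\lfp\Phi_\fg$, an \emph{upper} bound, whereas you now need a \emph{lower} bound, and least-fixed-point reasoning gives no handle on that side. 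Here you genuinely need the paper's transfinite induction, proving $\Phi_{k\ff+\fg}^{\alpha}(0)\succeq k\,\Phi_\ff^{\alpha}(0)+\Phi_\fg^{\alpha}(0)$ for all ordinals $\alpha$; the limit step uses that the iterates are increasing so that $\sup_\beta(A_\beta+B_\beta)=\sup_\beta A_\beta+\sup_\beta B_\beta$. (Relatedly, the paper's one-line derivation of strictness from super-linearity uses the $\preceq$ reading and does not survive the correction; your direct argument that $0$ is a fixed point of $\Phi_0$ is the right fix.)
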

\REPORT{%
\begin{proof}%
    By induction on the program structure.
    See Appendix~\ref{app:wp:basic}, p.~\pageref{app:wp:basic}.
\end{proof}%
}%

\subsection{Deallocation, Heap Mutation, and Lookup} 

We now go over the definitions for deterministic heap--accessing language constructs in \autoref{table:wp}.

\paragraph{Memory deallocation.}
A memory cell is deleted from the current heap using the $\FREE{\ee}$ construct as illustrated in \autoref{fig:wp-free}.
%
%
\begin{figure}[t]
	\begin{adjustbox}{max width=0.6\linewidth}
		\includegraphics{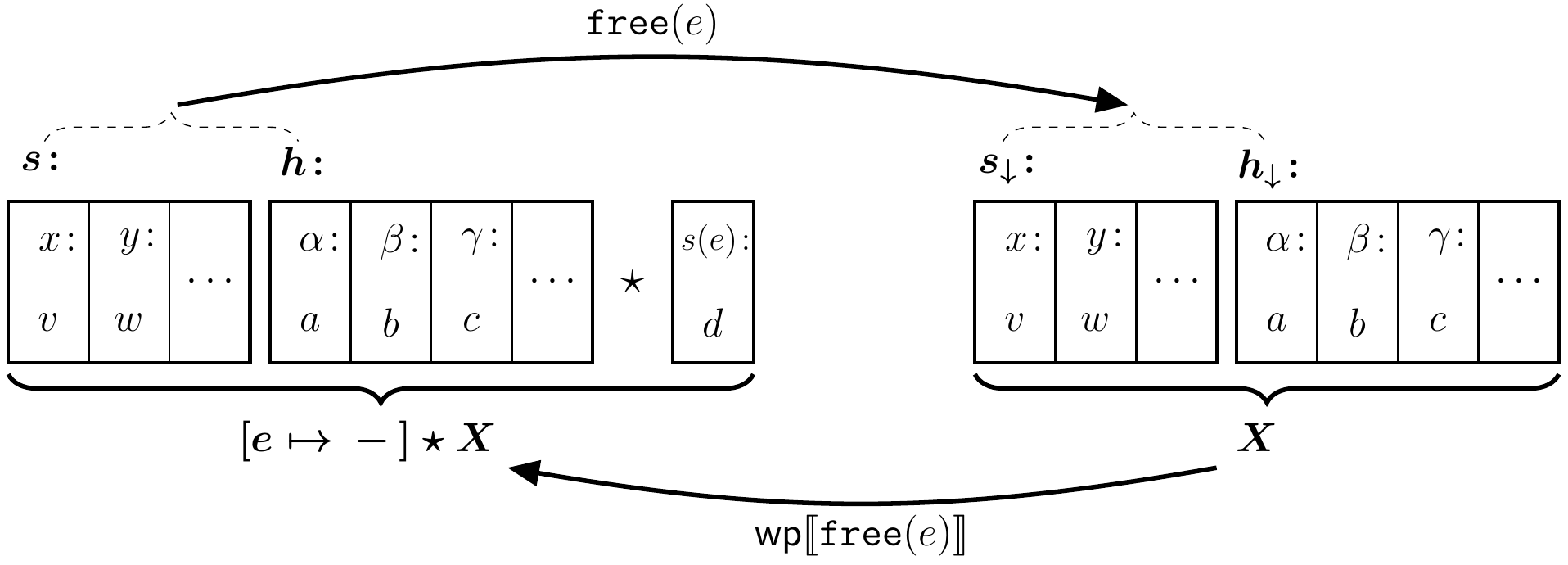}
	\end{adjustbox}
	\caption{Weakest preexpectation of memory deallocation.}
	\label{fig:wp-free}
\end{figure}
$\FREE{\ee}$ starts on some initial state $(\sk,\, \hh)$ shown on the left-hand side and tries to deallocate the memory cell with address $\sk(\ee)$.
In case that $\sk(\ee)$ is a valid address (as depicted in \autoref{fig:wp-free}), i.e.\ $\sk(\ee) \in \dom{\hh}$, $\FREE{\ee}$ removes the corresponding cell from the heap and terminates in a final state $(\sk_\downarrow,\, \hh_\downarrow)$ shown on the right-hand side.
In case that $\sk(\ee)$ is not a valid address (\emph{not} depicted in \autoref{fig:wp-free}), i.e.\ \mbox{$\sk(\ee) \not\in \dom{\hh}$, $\FREE{\ee}$ crashes}.

What is the weakest preexpectation of $\FREE{\ee}$ with respect to a postexpectation $\ff$?
For answering that, we need to construct an expectation $\wp{\FREE{\ee}}{\ff}$, such that the quantity $\wp{\FREE{\ee}}{\ff}$ measured in the initial state coincides with quantity $\ff$ measured in the final state.
The way we will construct $\wp{\FREE{\ee}}{\ff}$ is to \emph{measure $\ff$ in the initial state} and successively rectify the difference to \emph{measuring $\ff$ in \mbox{the final state}}.
%
So what is that difference? We need to dispose the allocated memory cell with address $\sk(\ee)$ in the initial state.
We can rectify this through (a) ensuring that this memory cell actually exists and (b) notionally separating it from the rest of the heap and measuring $\ff$ only in that rest.
Both (a) and (b) are achieved by separatingly conjoining $\ff$ with $\validpointer{\ee}$, thus obtaining $\validpointer{\ee} \sepcon \ff$.
Notice that only heaps consisting of a single cell with address $\sk(\ee)$ make $\validpointer{e}$ evaluate to 1 and are hence the only  
possible choices such that $\validpointer{\ee} \sepcon \ff$ is evaluated to some quantity possibly \mbox{larger than 0}.
This also means that if $\FREE{\ee}$ crashes because address $\sk(\ee)$ is not allocated, then $\wp{\FREE{\ee}}{\ff}(\sk,\, \hh)$ correctly yields $0$.

\CHANGED{
\paragraph{Memory allocation.}
The memory allocation statement $\ALLOC{x}{\ee}$ deserves special attention as it is the only statement that exhibits nondeterministic behavior.
For simplicity, let us consider $\ALLOC{x}{\ee}$ instead of $\ALLOC{x}{\vec{\ee}}$, i.e.~we only allocate a \emph{single} memory cell.
The situation is illustrated in \autoref{fig:wp-new}.
\begin{figure}[t]
	\begin{adjustbox}{max width=0.6\linewidth}
		\includegraphics{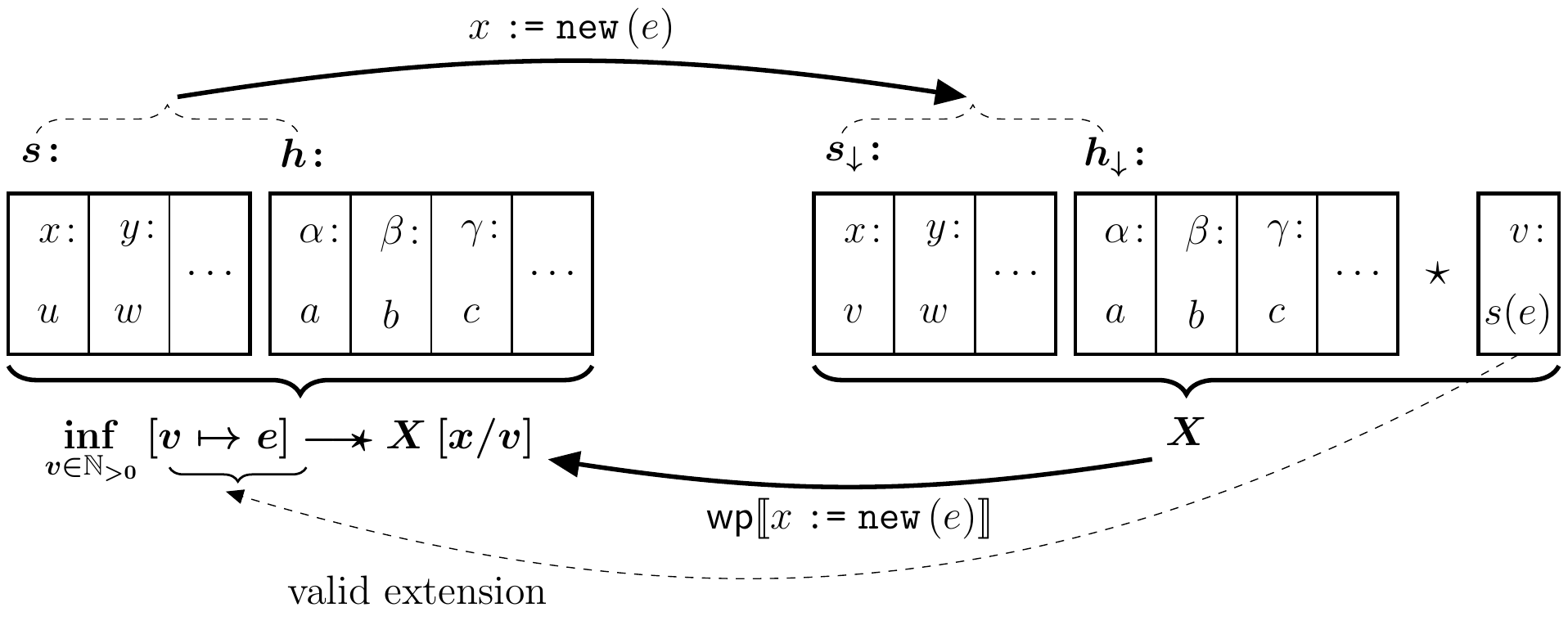}
	\end{adjustbox}
	\caption{Weakest preexpectation of memory allocation.}
	\label{fig:wp-new}
\end{figure}
Operationally, the instruction $\ALLOC{x}{\ee}$ starts on some initial state $(\sk,\, \hh)$ shown on the left-hand side, adds (allocates) to the domain of heap $\hh$ a single \emph{fresh} 
address $v$, and stores at this address content $\sk(\ee)$.
After allocating memory at address $v$, the address $v$ is stored in variable $x$.
The statement $\ALLOC{x}{\ee}$ then terminates in a final state $(\sk_\downarrow,\, \hh_\downarrow)$ shown on the right-hand side.
Since $v$ is chosen \emph{nondeterministically}, we cannot give any a-priori guarantees on $v$ except for $v \not\in \dom{\hh}$.
Furthermore, notice that in our memory model there are at any point infinitely many free addresses available for allocation.
Allocation thus never causes a memory fault. 

What is now the weakest preexpectation of $\ALLOC{x}{\ee}$ with respect to a postexpectation $\ff$?
Again, we construct $\wp{\ALLOC{x}{\ee}}{\ff}$ by measuring $\ff$ in the initial state and rectifying the differences to measuring $\ff$ in the final state.
So what are those differences?
The first difference is that we are missing in the initial state the newly allocated memory cell with address $v$ and content $\sk(\ee)$ which is present in the final state. 
We can rectify this through notionally extending the heap of the initial state by measuring $\singleton{v}{\ee} \sepimp \ff$ instead of $\ff$.
Notice that a heap consisting of a single cell with address $v$ and content $\sk(\ee)$ is the \emph{only} valid extension \mbox{that satisfies $\singleton{v}{\ee}$}.
The next difference is that in the final state variable $x$ has value $v$.
We can mimic this by a syntactic replacement of $x$ by $v$ in $\ff$, thus obtaining
$\singleton{v}{\ee} \sepimp \ff\subst{x}{v}$.
Finally, we have to account for the fact that the newly allocated address $v$ is chosen nondeterministically.
Following McIver and Morgan's demonic nondeterminism school of thought, we select by $\inf_{v \in \AVAILLOC{\ee}}$ any address that \emph{minimizes} the sought-after quantity.
\REMOVED{, where $\AVAILLOC{\ee}$ represents the set of all suitable free addresses.}
We thus obtain $\wp{\ALLOC{x}{\ee}}{\ff} = \inf_{v \in \AVAILLOC{\ee}} \singleton{v}{\ee} \sepimp \ff\subst{x}{v}$.
}

\paragraph{Heap mutation.}

\autoref{fig:wp-mutate} illustrates how the heap is mutated by a statement $\HASSIGN{\ee}{\ee'}$. 
%
%
%
%
Operationally, we can dissect this instruction 
into two parts:
Starting in some initial state $(\sk,\, \hh)$ shown on the left-hand side, we first deallocate the memory at address $\sk(\ee)$ by $\FREE{\ee}$ and thereby obtain an intermediate state $(\sk',\, \hh')$.
Second, we allocate a new memory cell with content $\sk(\ee')$.
In contrast to the statement $\ALLOC{x}{\ee'}$, which is addressed in the next section, the address of that cell is fixed to $\sk(\ee)$.
This is achieved by the instruction $\texttt{new}(\ee')\texttt{@}\ee$, which we introduce here ad-hoc just for illustration purposes.
Consequently, the weakest preexpectation of $\texttt{new}(\ee')\texttt{@}\ee$ coincides with the weakest preexpectation of $\ALLOC{x}{\ee'}$ except that (a) the allocated address $v$ is fixed to $\sk(\ee)$ and (b) we do not perform an assignment to $x$.
Thus, $\wp{\texttt{new}(\ee')\texttt{@}\ee}{\ff} = \singleton{\ee}{\ee'} \sepimp \ff$.

Since $\HASSIGN{\ee}{\ee'}$ has the same effect as $\COMPOSE{\FREE{\ee}}{\texttt{new}(\ee')\texttt{@}\ee}$, its weakest preexpectation is given by
\begin{align*}
	\wp{\HASSIGN{\ee}{\ee'}}{\ff} & \eeq \wp{\COMPOSE{\FREE{\ee}}{\texttt{new}(\ee')\texttt{@}\ee}}{\ff} \tag{see above}\\
	%
	& \eeq \wp{\FREE{\ee}}{\wp{\texttt{new}(\ee')\texttt{@}\ee}{\ff}} \tag{see \autoref{table:wp}}\\
	& \eeq \wp{\FREE{\ee}}{\singleton{\ee}{\ee'} \sepimp \ff} \tag{see above}\\
	& \eeq \validpointer{\ee} \sepcon \bigl( \singleton{\ee}{\ee'} \sepimp \ff \bigr). \tag{see \autoref{table:wp}}
\end{align*}
Another explanation of $\validpointer{\ee} \sepcon \bigl( \singleton{\ee}{\ee'} \sepimp \ff \bigr)$ from a syntactic point of view is as follows:
By $\validpointer{\ee} \sepcon \cloze{\ff}$, we ensure that the heap contains a cell with address $\ee$ and carve it out from the heap.
Thereafter, by $\singleton{\ee}{\ee'} \sepimp \cloze{\ff}$, we extend the heap by a single cell with address $\ee$ and content $\ee'$.
After performing the aforementioned two operations, \mbox{we measure $\ff$}.
\begin{figure*}[t]
	\begin{adjustbox}{max width=\linewidth}
		\includegraphics{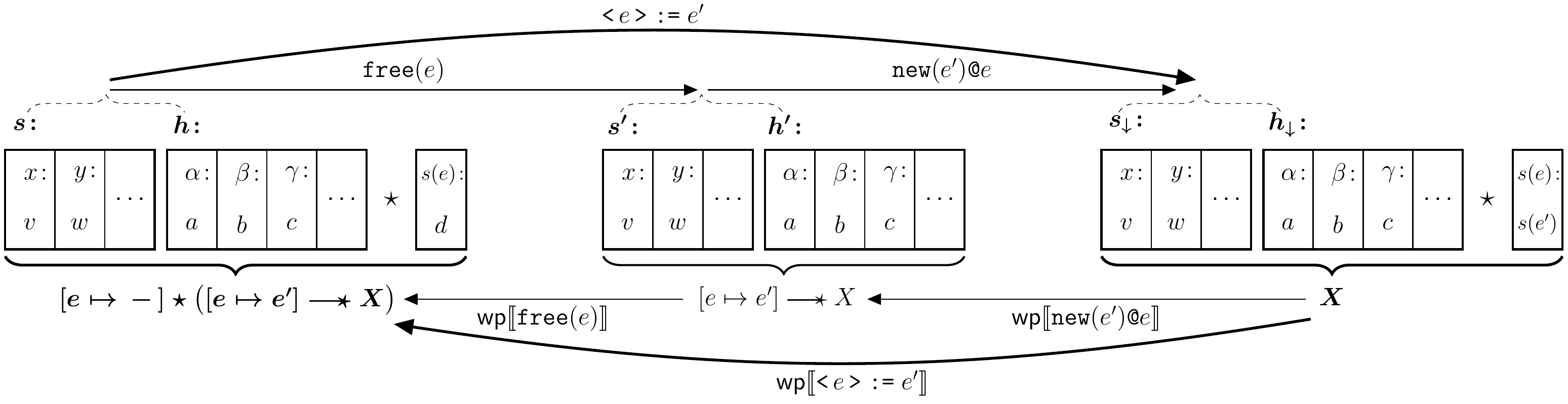}
	\end{adjustbox}
	\caption{Weakest preexpectation of heap mutation.}
	\label{fig:wp-mutate}
\end{figure*}

\paragraph{Heap lookup.}
The statement $\ASSIGNH{x}{\ee}$ determines the value at address $\ee$ and stores it in variable $x$.
Its weakest preexpectation is defined as $\sup_{v \in \Ints} \singleton{\ee}{v} \sepcon \bigl( \singleton{\ee}{v} \sepimp \ff\subst{x}{v} \bigr)$.
We give an intuition on this preexpectation on a syntactic level.
By $\singleton{\ee}{v} \sepcon \cloze{\ff}$, we ensure that the heap contains a cell with address $\ee$ and content $v$, and carve it out from the heap.
It is noteworthy that the value $v$ at address $\ee$ is really \emph{selected} (rather than maximized) by $\sup_{v \in \Ints}$.
This is because either address $e$ is not allocated at all (i.e.\ $\singleton{\ee}{v}$ becomes $0$ for all choices of $v$), or there is a \emph{unique} value $v$ at address $\ee$ which is selected by $\sup_{v \in \Ints}$ (i.e.\ $\singleton{\ee}{v}$ becomes $1$).
We can thus think of the $\sup$ here as taking the role of a $\exists!$-quantifier.
After carving out the cell with address $\ee$ and content $v$, this very cell is put back into the \mbox{heap by $\singleton{\ee}{v} \sepimp \cloze{\ff}$}.
The aforementioned two operations serve only as a mechanism for selecting $v$ at address $\ee$ as we can now measure $\ff$ in a state where variable $x$ has value $v$ through \mbox{finally measuring $\ff\subst{x}{v}$}.
Notice that $\sup_{v \in \Ints} \containsPointer{\ee}{v} \cdot \ff\subst{x}{v}$ is equivalent to $\wp{\ASSIGNH{x}{\ee}}{\ff}$ 
(cf. \REPORT{Appendix~\ref{app:wand-reynolds}}\CAMERA{\cite{DBLP:journals/corr/abs-1802-10467}}).

\subsection{On continuity of $\wpsymbol$}\label{sec:wp:alloc}
For an initially empty heap, the allocation instruction $\ALLOC{x}{\ee}$ nondeterministically assigns a positive natural number to variable $x$. 
It is thus a \emph{countably infinitely branching} nondeterministic assignment. 
The presence of countably infinite nondeterminism in our semantics has dire consequences: Our $\wpsymbol$-calculus is \emph{not} continuous.
Consider, for instance, an $\omega$-chain of expectations $\ff_n = \iverson{1 \leq x \leq n}$. 
Moreover, let $\emptyheap$ be the empty heap. 
Then, for an arbitrary stack $\sk$, 
\begin{align*}
    \wp{\ALLOC{x}{0}}{\textstyle \sup_{n} \ff_n}(\sk,\emptyheap) \eeq 1 \neq 0 \eeq \sup_{n \in \Nats} \wp{\ALLOC{x}{0}}{\ff_n}(\sk,\emptyheap).
\end{align*}
\REPORT{Detailed calculations are found in Appendix~\ref{app:wp:continuity-counterexample}.}
Why do we not attempt to find an alternative semantics of $\ALLOC{x}{\ee}$ that restores continuity?
There are two main reasons:

First, \cite{DBLP:conf/fossacs/YangO02} argue that nondeterministic allocation in \SL is \emph{essential} to enable local reasoning in the presence of address arithmetic.
Alternative approaches for allocation, such as always picking the smallest available memory cell, would invalidate the frame rule (cf. Section~\ref{sec:wp:frame-rule}).

Second,~\cite{apt1986countable} show that it is \emph{impossible} to define a (fully abstract) continuous least fixed point semantics, such as our $\wpsymbol$-style calculus, that exhibits countably infinite nondeterministic assignments. 
Without further restrictions, e.g. limiting ourselves to a finite total amount of available memory, there is thus no hope for a continuous weakest preexpectation transformer.


%

\subsection{Weakest Preexpectations of Loops}
\label{sec:wp:loops}

As is standard in denotational semantics, the weakest preexpectation of a loop $\WHILEDO{\guard}{\cc}$ is characterized as a least fixed point of the loop's unrollings.
That is, the weakest preexpectation of program $\WHILEDO{\guard}{\cc}$ with respect to postexpectation $\ff$ is given by the least fixed point of 
\begin{align*}
        \charwp{\guard}{\cc}{\ff}(\fg) \eeq \iverson{\neg \guard} \cdot \ff + \iverson{\guard} \cdot \wp{\cc}{\fg}~.
\end{align*}
Unfortunately, since our $\wpsymbol$ transformer is \emph{not continuous} in general (see Section~\ref{sec:wp:alloc}),  we cannot rely on 
Kleene's fixed point theorem.
However, due to \autoref{thm:wp:basic}, both $\wpsymbol$ and $\charwp{\guard}{\cc}{\ff}$ are monotone.
%
We may thus resort to a constructive version of the more general fixed point theorem 
due to Tarski and Knaster (cf.~\cite{cousot1979constructive}) for (countable) ordinals:
\begin{theorem}
For every loop $\WHILEDO{\guard}{\cc}$ and $\ff \in \E$, there exists an ordinal $\oa$ such that
\[ \wp{\WHILEDO{\guard}{\cc}}{\ff} \eeq \lfp \fg\mydot \charwp{\guard}{\cc}{\ff}(\fg) \eeq \charwpn{\guard}{\cc}{\ff}{\oa}(0)~.\]
\end{theorem}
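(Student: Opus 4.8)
The plan is to invoke the constructive (transfinite) Knaster--Tarski fixed point theorem for monotone functions on complete lattices. Recall that the set of expectations $(\E,\preceq)$ was established earlier to be a complete lattice with least element $0$. First I would argue that the loop characteristic function $\charwp{\guard}{\cc}{\ff}\colon \E \to \E$, $\fg \mapsto \iverson{\neg \guard}\cdot\ff + \iverson{\guard}\cdot\wp{\cc}{\fg}$, is monotone in $\fg$: this follows directly from monotonicity of $\wpC{\cc}$ (Theorem~\ref{thm:wp:basic}.\ref{thm:wp:basic:monotonicity}) together with the fact that addition and pointwise multiplication by the nonnegative indicators $\iverson{\guard}$, $\iverson{\neg\guard}$ are monotone operations on $\E$. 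The equality $\wp{\WHILEDO{\guard}{\cc}}{\ff} = \lfp \fg \mydot \charwp{\guard}{\cc}{\ff}(\fg)$ is the definition of $\wpC{\WHILEDO{\guard}{\cc}}$ from \autoref{table:wp}, so the content of the theorem is the second equality: the least fixed point is \emph{reached} by iterating $\charwp{\guard}{\cc}{\ff}$ from $0$ along the ordinals, and in fact stabilizes at some ordinal $\oa$.

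The key steps are then as follows. Define the transfinite iteration sequence in the usual way: $\charwpn{\guard}{\cc}{\ff}{0}(0) = 0$; $\charwpn{\guard}{\cc}{\ff}{\lambda+1}(0) = \charwp{\guard}{\cc}{\ff}\bigl(\charwpn{\guard}{\cc}{\ff}{\lambda}(0)\bigr)$ for successor stages; and $\charwpn{\guard}{\cc}{\ff}{\lambda}(0) = \sup_{\mu < \lambda} \charwpn{\guard}{\cc}{\ff}{\mu}(0)$ for limit ordinals $\lambda$ (the supremum exists because $\E$ is a complete lattice). By monotonicity of $\charwp{\guard}{\cc}{\ff}$ and a routine transfinite induction, this sequence is $\preceq$-increasing, i.e. $\mu \leq \lambda$ implies $\charwpn{\guard}{\cc}{\ff}{\mu}(0) \preceq \charwpn{\guard}{\cc}{\ff}{\lambda}(0)$. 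A strictly increasing ordinal-indexed chain inside $\E$ cannot have length exceeding the cardinality of $\E$ (a standard cardinality argument: otherwise one would inject a proper class, or at least an ordinal larger than $|\E|$, into $\E$ injectively). Hence there is some ordinal $\oa$ at which the chain stabilizes: $\charwpn{\guard}{\cc}{\ff}{\oa+1}(0) = \charwpn{\guard}{\cc}{\ff}{\oa}(0)$, which exactly says $\charwpn{\guard}{\cc}{\ff}{\oa}(0)$ is a fixed point of $\charwp{\guard}{\cc}{\ff}$. Finally, one shows it is the \emph{least} fixed point: given any fixed point $\fg^{\ast}$ with $\charwp{\guard}{\cc}{\ff}(\fg^{\ast}) = \fg^{\ast}$, a transfinite induction using $0 \preceq \fg^{\ast}$, monotonicity at successors, and the supremum property at limits shows $\charwpn{\guard}{\cc}{\ff}{\lambda}(0) \preceq \fg^{\ast}$ for all $\lambda$, in particular for $\oa$. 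Therefore $\lfp \fg\mydot\charwp{\guard}{\cc}{\ff}(\fg) = \charwpn{\guard}{\cc}{\ff}{\oa}(0)$, as claimed. One should also note that $\oa$ can be taken to be a countable ordinal, since $\wpC{\cc}$ is $\omega$-continuous on the subprograms when restricted appropriately and the only source of non-continuity is allocation; more carefully, one appeals to the fact that each $\charwp{\guard}{\cc}{\ff}$ preserves suprema of chains of sufficiently large cofinality, bounding the closure ordinal---but for the bare statement as phrased, merely exhibiting \emph{some} ordinal $\oa$ suffices.

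The main obstacle is not the fixed-point machinery itself---that is entirely standard---but rather being careful about two points. First, one must make sure the characteristic function genuinely maps $\E$ into $\E$ and is well-defined even though $\wpC{\cc}$ may fail continuity; this is fine because $\wpC{\cc}$ is still a total monotone map on $\E$ by Theorem~\ref{thm:wp:basic}. Second, if one wants the sharper claim that $\oa$ is countable (which the surrounding text emphasizes, since continuity fails and Kleene's $\omega$-iteration is insufficient), one needs the observation that although allocation introduces countably infinite branching, each individual application of $\charwp{\guard}{\cc}{\ff}$ still commutes with suprema of $\omega$-chains \emph{except} across allocation, and a more delicate analysis---or simply citing the constructive Knaster--Tarski bound of~\cite{cousot1979constructive} which gives closure at an ordinal of cardinality at most $|\E|$, refined to $\omega_1$ via the structure of $\wpsymbol$---yields a countable $\oa$. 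For the purposes of this theorem as stated, I would present the transfinite iteration argument above and remark that $\oa < \omega_1$ can be extracted, deferring the detailed cofinality bookkeeping to the appendix.
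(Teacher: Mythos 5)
Your proposal is correct and follows essentially the same route as the paper: monotonicity of $\charwp{\guard}{\cc}{\ff}$ (via Theorem~\ref{thm:wp:basic}) plus the constructive Tarski--Knaster theorem, which the paper simply cites from~\cite{cousot1979constructive} and you instead re-prove by transfinite iteration, stabilization via a cardinality bound, and a transfinite induction showing the stabilized value is below every fixed point. Your side remark that $\oa$ can be taken countable is neither needed for the statement nor fully justified as sketched, but since you explicitly defer it, the argument for the theorem as stated is complete.
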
%
\noindent%
Hence, weakest preexpectations of loops are well-defined.
Reasoning about the exact least fixed point of a loop may, however, require transfinite arguments.
%
Fortunately, we have an invariant-based rule for reasoning about \emph{upper bounds} on preexpectations of loops, which is easier to discharge.

\begin{theorem}\label{thm:upper-invariant}
For loop $\WHILEDO{\guard}{\cc}$ and expectations $\ff, \inv \in \E$, we have
\begin{align*}
	\charwp{\guard}{\cc}{\ff}(\inv) \ppreceq \inv \qimplies \wp{\WHILEDO{\guard}{\cc}}{\ff} \ppreceq \inv~. 
\end{align*}
In this case, we call $\inv$ an \emph{invariant} with respect to program $\WHILEDO{\guard}{\cc}$ and expectation $\ff$.
\end{theorem}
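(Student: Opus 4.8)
The plan is to exploit the fact, recorded just above, that $\wp{\WHILEDO{\guard}{\cc}}{\ff}$ is \emph{by definition} the least fixed point of the characteristic functional $\charwp{\guard}{\cc}{\ff}$ on the complete lattice $(\E,\preceq)$, and then to invoke the Knaster--Tarski characterization of a least fixed point as the \emph{least pre-fixed point}. Concretely, the statement is an instance of the purely order-theoretic fact that for any monotone self-map $\Phi$ of a complete lattice, $\Phi(\inv) \preceq \inv$ already forces $\lfp \Phi \preceq \inv$, so the task is to check that $\charwp{\guard}{\cc}{\ff}$ qualifies.

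The first step is to verify monotonicity of $\charwp{\guard}{\cc}{\ff}$. Given $\fg \preceq \fg'$ in $\E$, monotonicity of $\wpsymbol$ (Theorem~\ref{thm:wp:basic}.\ref{thm:wp:basic:monotonicity}) yields $\wp{\cc}{\fg} \preceq \wp{\cc}{\fg'}$; multiplying pointwise by the nonnegative factor $\iverson{\guard}$ and adding the fixed summand $\iverson{\neg\guard}\cdot\ff$ both preserve $\preceq$, hence $\charwp{\guard}{\cc}{\ff}(\fg) \preceq \charwp{\guard}{\cc}{\ff}(\fg')$. Since $(\E,\preceq)$ is a complete lattice, Knaster--Tarski (cf.~\cite{cousot1979constructive}) then guarantees both that $\lfp\fg\mydot\charwp{\guard}{\cc}{\ff}(\fg)$ exists and that it equals $\inf\setcomp{\fg \in \E}{\charwp{\guard}{\cc}{\ff}(\fg) \preceq \fg}$, i.e.\ the infimum of the set of all pre-fixed points.

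The second step is then immediate: the hypothesis $\charwp{\guard}{\cc}{\ff}(\inv) \ppreceq \inv$ says precisely that $\inv$ is a pre-fixed point, hence lies in the set whose infimum is the least fixed point, so
\begin{align*}
  \wp{\WHILEDO{\guard}{\cc}}{\ff} \eeq \lfp\fg\mydot\charwp{\guard}{\cc}{\ff}(\fg) \ppreceq \inv~.
\end{align*}
If one prefers to stay closer to the transfinite fixed-point theorem stated just before, one can alternatively show by transfinite induction on $\ob$ that $\charwpn{\guard}{\cc}{\ff}{\ob}(0) \ppreceq \inv$: the base case is $0 \ppreceq \inv$; at a successor $\ob+1$, monotonicity of $\charwp{\guard}{\cc}{\ff}$ and the induction hypothesis give $\charwpn{\guard}{\cc}{\ff}{\ob+1}(0) = \charwp{\guard}{\cc}{\ff}(\charwpn{\guard}{\cc}{\ff}{\ob}(0)) \ppreceq \charwp{\guard}{\cc}{\ff}(\inv) \ppreceq \inv$; at a limit ordinal the iterate is a supremum of iterates all below $\inv$, hence still below $\inv$. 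Instantiating $\ob$ with the ordinal $\oa$ from the preceding theorem then gives $\wp{\WHILEDO{\guard}{\cc}}{\ff} = \charwpn{\guard}{\cc}{\ff}{\oa}(0) \ppreceq \inv$.

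There is no genuine obstacle in this argument; the one point that requires care is that, because $\wpsymbol$ is \emph{not} $\omega$-continuous in the presence of allocation (Section~\ref{sec:wp:alloc}), the proof must not appeal to Kleene's $\omega$-chain iteration but instead to the full Knaster--Tarski characterization of the least fixed point as the least pre-fixed point (equivalently, to transfinite iteration). Since that characterization holds for arbitrary monotone maps on complete lattices, and monotonicity of $\charwp{\guard}{\cc}{\ff}$ is all we need, the proof goes through unchanged.
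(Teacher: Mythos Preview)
Your proof is correct and follows essentially the same approach as the paper: both invoke the Knaster--Tarski characterization of the least fixed point as the least pre-fixed point of the monotone functional $\charwp{\guard}{\cc}{\ff}$ on the complete lattice $(\E,\preceq)$. Your version is simply more explicit about verifying monotonicity and additionally offers the transfinite-iteration variant, but the underlying argument is identical.
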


\begin{proof}
        By the Tarski and Knaster fixed point theorem, 
        $\lfp \fg\mydot \charwp{\guard}{\cc}{\ff}(\fg)$ is the smallest pre-fixed point of $\charwp{\guard}{\cc}{\ff}$ (cf.~\cite{cousot1979constructive}).
        It is thus the smallest $\inv$ satisfying $\charwp{\guard}{\cc}{\ff}(\inv) \preceq \inv$.
        Consequently, by \autoref{table:wp}, $\wp{\WHILEDO{\guard}{\cc}}{\ff} = \lfp \fg\mydot \charwp{\guard}{\cc}{\ff}(X) \preceq \inv$.
\end{proof}

\subsection{Soundness of Weakest Preexpectations}
\label{sec:wp:soundness}

We prove the soundness of our weakest preexpectation semantics with respect to the operational semantics introduced in Section~\ref{sec:hpgcl}.
To capture the expected value expectation $\ff \in \E$, we assign a \emph{reward} to every program configuration.
Our operational model is a special case of Markov decision process with rewards (cf.~\cite{DBLP:books/daglib/0020348,puterman2005markov}).
Let 
$
    \Target \eeq \{ (\Term, \sk, \hh) ~|~ (\sk,\hh) \in \States \}
$ 
be the collection of all (goal) configurations indicating successful program termination.
Given $\ff \in \E$, goal configuration $(\Term, \sk, \hh)$ is assigned reward $\ff(\sk,\hh)$.
All other configurations are assigned zero reward.
Formally, the \emph{reward function} for expectation $\ff$ is given by
\begin{align*}
        \OpRew \colon \OpStates \to \PosRealsInf, \quad (\cc, \sk, \hh) \mapsto \iverson{\cc =\, \Term} \cdot \ff(\sk,\hh)~.
\end{align*}
%
We are interested in the minimal (due to demonic nondeterminism) \emph{expected reward} of reaching a goal configuration in $\Target$ (and thus successfully terminating) from an initial configuration $\OpInit \in \OpStates$.
Intuitively, the expected reward is given by the minimal (for all resolutions of nondeterminism) sum over all finite paths $\pi$ from $\OpInit$ to a configuration in $\Target$ weighted by the probability of path $\pi$ and the reward of the reached goal configuration.

Formally, nondeterminism is resolved by a \emph{scheduler} $\Scheduler \colon \OpStates^{+} \to \Nats$ mapping finite sequences of visited configurations to the next action.
Moreover, let $\ProbSymbol$ be a function collecting the total probability mass of execution steps ($\ExecSymbol$) between two configurations for a given action:
\begin{align*}
    \ProbSymbol \colon \OpStates \times \Nats \times \OpStates \to [0,1] \cap \Rats, \quad (t, n, t') \mapsto \sum_{\ExecSimple{t}{n,p}{t'}} p~.
\end{align*}
The set of \emph{finite paths} from $t \in \OpStates$ to some goal configuration using scheduler $\Scheduler$ is given by
\begin{align*}
        \PathsFromTo{t}{\Scheduler} \eeq \{ t_1 \ldots t_m ~|~ & m \in \Nats,\, t_1 =  t,\, t_m \in \Target, \\
                                                                   & \forall k \in \{1,\ldots,m-1\} \,\colon\, \Prob{t_k,\,\Scheduler(t_1 \ldots t_{k}),\,t_{k+1}} > 0 \}~.
\end{align*}
The \emph{probability} of a path $t_1 \ldots t_m \in \PathsFromTo{t}{\Scheduler}$ is the product of its transition probabilities, i.e.
\begin{align*}
        \Prob{t_1 \ldots t_m} \eeq \prod_{1 \leq k < m} \Prob{t_k,\,\Scheduler(t_1 \ldots t_{k}),\,t_{k+1}}~.
\end{align*}
With these notions at hand, the \emph{expected reward of successful termination} 
with respect to expectation $\ff \in \E$
when starting execution in configuration $t \in \OpStates$ \mbox{is defined as}
\begin{align*}
    \ExpRewC{\ff}{t} \eeq \inf_{\Scheduler} \, \sum_{t_1 \ldots t_m \in \PathsFromTo{t}{\Scheduler}} \Prob{t_1 \ldots t_m} \cdot \OpRew(t_m)~.
\end{align*}
The main result of this subsection asserts that our weakest preexpectation calculus for \hpgcl programs is sound with respect to our operational model.
\begin{theorem}[Soundness of Weakest Preexpectation Semantics]
\label{thm:wp:soundness}
For all \hpgcl-programs $\cc$, expectations $\ff \in \E$, and initial states $(\sk,\hh) \in \Sigma$, we have
    $
    \wp{\cc}{\ff}(\sk,\hh) = \ExpRewC{\ff}{\cc, \sk, \hh}
    $.
\end{theorem}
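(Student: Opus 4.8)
The plan is to establish the equality by induction on the structure of $\cc$, treating the operational quantity as an \emph{operational weakest preexpectation}: I write $\sfsymbol{owp}\llbracket\cc\rrbracket(\ff) \eeq \lambda(\sk,\hh)\mydot \ExpRewC{\ff}{\cc,\sk,\hh}$ and show that $\sfsymbol{owp}$ satisfies \emph{exactly} the defining equations of $\wpsymbol$ listed in \autoref{table:wp}. Since those equations determine $\wpsymbol$ uniquely (the loop clause via a least fixed point), this forces $\sfsymbol{owp}\llbracket\cc\rrbracket(\ff) = \wp{\cc}{\ff}$ for all $\cc$ and $\ff$, which is the claim.

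For every non-loop statement this is a direct, if laborious, computation on the execution relation $\ExecSymbol$ of \autoref{table:op}. For $\SKIP$, $\ASSIGN{x}{\ee}$, $\FREE{\ee}$, $\HASSIGN{\ee}{\ee'}$, $\ASSIGNH{x}{\ee}$ and $\ALLOC{x}{\vec{\ee}}$ there is essentially a single execution step, so for any scheduler $\Scheduler$ the set $\PathsFromTo{(\cc,\sk,\hh)}{\Scheduler}$ either consists of one length-two path ending in a $\Term$-configuration — whose reward one checks equals the corresponding right-hand side of \autoref{table:wp} after unfolding the definitions of $\sepcon$ and $\sepimp$ — or is empty because the step leads to $\Fault$, in which case $\ExpRewC{\ff}{\cc,\sk,\hh}=0$, matching the fact that the relevant separating conjunction with $\validpointer{\ee}$ or $\singleton{\ee}{v}$ vanishes. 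The only statement with genuine nondeterminism is $\ALLOC{x}{\vec{\ee}}$: its countably many steps are indexed by the allocated address $v$, so $\inf_\Scheduler$ becomes $\inf_v$, and for each $v$ the unique (or no) admissible heap extension turns $\singleton{v}{\vec{\ee}}\sepimp\ff\subst{x}{v}$ into $\ff$ evaluated on the extended final state (resp.\ into $\infty$), reproducing the table entry. Conditional and probabilistic choice follow at once from the two corresponding rules. The workhorse for sequential composition is a \emph{concatenation lemma}, proven once and for all directly from the rules for $\COMPOSE{\cc_1}{\cc_2}$: a goal path of $(\COMPOSE{\cc_1}{\cc_2},\sk,\hh)$ never visits $\Fault$ and factors uniquely into a goal path of $(\cc_1,\sk,\hh)$ reaching some $(\Term,\sk',\hh')$ followed by a goal path of $(\cc_2,\sk',\hh')$; dually, a scheduler for $\COMPOSE{\cc_1}{\cc_2}$ amounts to a scheduler for $\cc_1$ together with, for each terminating $\cc_1$-history, a residual scheduler for $\cc_2$. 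Since in a sum of non-negative terms each summand may be minimised independently, the outer $\inf$ distributes over this decomposition and yields $\sfsymbol{owp}\llbracket\COMPOSE{\cc_1}{\cc_2}\rrbracket(\ff) = \sfsymbol{owp}\llbracket\cc_1\rrbracket(\sfsymbol{owp}\llbracket\cc_2\rrbracket(\ff))$, which by the induction hypothesis equals $\wp{\cc_1}{\wp{\cc_2}{\ff}}$.

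The loop $\cc = \WHILEDO{\guard}{\cc'}$ is the heart of the argument and I would treat it in two steps. First, unrolling the loop once — using the operational $\WHILESYMBOL$-rules, then the already-established clauses for conditional choice, $\SKIP$ and sequential composition, and the induction hypothesis for $\cc'$ — shows that $\sfsymbol{owp}\llbracket\WHILEDO{\guard}{\cc'}\rrbracket(\ff)$ is a \emph{fixed point} of $\charwp{\guard}{\cc'}{\ff}$; since $\wp{\WHILEDO{\guard}{\cc'}}{\ff} = \lfp \fg\mydot \charwp{\guard}{\cc'}{\ff}(\fg)$ is the \emph{least} fixed point (Knaster--Tarski), this already gives $\wp{\WHILEDO{\guard}{\cc'}}{\ff} \ppreceq \sfsymbol{owp}\llbracket\WHILEDO{\guard}{\cc'}\rrbracket(\ff)$. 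Second, for the converse I would prove an operational version of \autoref{thm:upper-invariant}: if $\charwp{\guard}{\cc'}{\ff}(\inv) \ppreceq \inv$, then $\sfsymbol{owp}\llbracket\WHILEDO{\guard}{\cc'}\rrbracket(\ff) \ppreceq \inv$. Fixing a scheduler $\Scheduler$ and an initial state, one shows by induction on $k \in \Nats$ that $\inv$ dominates the expected reward already collected along goal paths that exit the loop within $k$ iterations, plus the $\inv$-weighted probability mass still inside the loop after $k$ iterations; the step uses the invariant inequality $\charwp{\guard}{\cc'}{\ff}(\inv) \ppreceq \inv$, the induction hypothesis for $\cc'$ (to rewrite $\wp{\cc'}{\inv}$ as the operational value), and the concatenation lemma to push $\Scheduler$'s residual through one more body execution. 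Letting $k \to \infty$ and invoking monotone convergence, $\inv$ dominates the entire expected reward under $\Scheduler$, hence also its infimum over schedulers. Instantiating $\inv$ with $\lfp \fg\mydot\charwp{\guard}{\cc'}{\ff}(\fg)$ then gives $\sfsymbol{owp}\llbracket\WHILEDO{\guard}{\cc'}\rrbracket(\ff) \ppreceq \wp{\WHILEDO{\guard}{\cc'}}{\ff}$, and the two inequalities yield equality.

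I expect the loop case to be the main obstacle, for two intertwined reasons. First, because $\wpsymbol$ is \emph{not} continuous in the presence of $\ALLOC{x}{\vec{\ee}}$ (see Section~\ref{sec:wp:alloc}), one cannot identify $\lfp \fg\mydot\charwp{\guard}{\cc'}{\ff}(\fg)$ with the Kleene iteration $\sup_n \charwpn{\guard}{\cc'}{\ff}{n}(0)$ nor argue naively by interchanging $\inf_\Scheduler$ with a supremum over iterations; the proof must stay at the level of fixed points, using only that the least fixed point is a fixed point, Knaster--Tarski monotonicity, and monotonicity of $\wpsymbol$ (\autoref{thm:wp:basic}). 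Second, all of the scheduler bookkeeping — matching configuration histories across the dangling sequential-composition suffix in the concatenation lemma, splitting a scheduler into a body part and a family of residual loop schedulers, and justifying that $\inf$ distributes over the induced path decompositions — is routine in spirit but delicate to set up, and it is reused both in the sequential-composition case and throughout the $k$-iteration induction for loops.
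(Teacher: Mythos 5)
Your overall architecture is sensible, and part of it matches the paper: the base cases by direct computation on $\ExecSymbol$, a concatenation lemma for sequential composition, and, for loops, the inequality $\wp{\WHILEDO{\guard}{\cc'}}{\ff} \ppreceq \ExpRewC{\ff}{\WHILEDO{\guard}{\cc'},\cdot}$ obtained from the fact that the operational value is a (pre-)fixed point of $\charwp{\guard}{\cc'}{\ff}$ are all present in the paper's proof (its Lemmas~\ref{thm:op:compose}, \ref{thm:op:while} and~\ref{thm:wp-leq-op}). The genuine gap is in your converse loop direction. You propose to prove an operational version of Theorem~\ref{thm:upper-invariant} by \emph{fixing a scheduler} $\Scheduler$ and showing, by induction on the number $k$ of executed iterations, that $\inv$ dominates the reward collected within $k$ iterations plus the $\inv$-weighted mass still inside the loop, concluding that $\inv$ dominates the expected reward under \emph{every} scheduler and hence its infimum. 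That per-scheduler statement is strictly stronger than what you need and is false in general: nondeterminism is demonic, so the hypothesis $\charwp{\guard}{\cc'}{\ff}(\inv) \ppreceq \inv$ only bounds $\wp{\cc'}{\inv}$, i.e.\ the \emph{minimal} expected value of $\inv$ after the body, whereas your induction step has to bound the $\inv$-expectation after the body \emph{under the residual of the fixed scheduler} $\Scheduler$, which may exceed $\wp{\cc'}{\inv}$. Concretely, for $\WHILEDO{c=1}{\COMPOSE{\ASSIGN{c}{0}}{\ALLOC{x}{0}}}$ with postexpectation $x$, the invariant $\inv = \wp{\WHILEDO{c=1}{\cdot}}{x}$ equals the least free address on states with $c=1$, while a scheduler allocating at a huge address has arbitrarily large expected reward; so the $k$-induction cannot close, and the closing phrase ``hence also its infimum over schedulers'' rests on a claim that already fails. (Your opening remark that the Table-\ref{table:wp} equations ``determine $\wpsymbol$ uniquely'' is loose for the same reason, but you do treat the loop by two inequalities, so the per-scheduler step is the real defect.)

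To repair it you must argue at the level of the infimum itself. One option is to construct, for every $\varepsilon>0$, an $\varepsilon$-optimal scheduler witnessing $\ExpRewC{\ff}{\WHILEDO{\guard}{\cc'},\sk,\hh} \leq \inv(\sk,\hh)+\varepsilon$; this is delicate because with countably infinite allocation nondeterminism the infima need not be attained and the error budget has to be spread over unboundedly many steps. The paper avoids this entirely: it invokes the characterization of the \emph{minimal} expected reward as the \emph{least} solution of the one-step (Bellman) equations on configurations (Theorem~\ref{thm:exprew} and Lemma~\ref{thm:op-least}), verifies that $\wpsymbol$, extended by $\ff$ on $\Term$ and $0$ on $\Fault$, also solves these equations (Lemma~\ref{thm:wp-functional}), and thereby obtains $\ExpRewC{\ff}{\cc,\sk,\hh} \leq \wp{\cc}{\ff}(\sk,\hh)$ for all programs at once, with no per-loop iteration counting and no scheduler surgery. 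Adopting that least-solution argument (or a correct $\varepsilon$-optimal construction) in place of your per-scheduler invariant rule would close the gap; the rest of your plan then lines up with the paper's proof.
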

\REPORT{%
\begin{proof}%
        See Appendix~\ref{app:wp:soundness}, p.~\pageref{app:wp:soundness}.%
\end{proof}%
}%
%
%

\CHANGED{
\subsection{Conservativity}
\label{sec:wp:conservativity}

\QSL is a conservative extension of both the weakest preexpectation calculus of~\cite{DBLP:series/mcs/McIverM05} and classical separation logic as developed in~\cite{DBLP:conf/popl/IshtiaqO01,DBLP:conf/lics/Reynolds02}.
Since, for programs that never access the heap, we use the same expectation transformer as~\cite{DBLP:series/mcs/McIverM05}, it is immediate that \QSL conservatively extends weakest preexpectations.

To show that \QSL is also a conservative extension of separation logic, recall from Definition~\ref{def:embedding-sl-qsl}, p.~\pageref{def:embedding-sl-qsl}, the embedding $\qslemb{.}\colon \SL \to \QSL$ of \SL formulas into \QSL.
We then obtain conservativity with respect to separation logic in the following sense:
\begin{theorem}[Conservativity of \QSL as a verification system]\label{thm:qsl:conservativity:wp}
    Let $\cc \in \hpgcl$ be a non-probabilistic program.
    Then, for all classical separation logic formulas $\preda,\predb \in \SL$,
    \begin{align*}
            \text{the Hoare triple}~\{ \,\preda\, \}\,\cc\,\{\,\predb\,\}~\text{is valid for total correctness} \quad\text{iff}\quad \qslemb{\preda} \ppreceq \wp{\cc}{\qslemb{\predb}}.
    \end{align*}
\end{theorem}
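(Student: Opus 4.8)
The plan is to reduce the biconditional to the operational characterization of $\wpsymbol$ provided by Theorem~\ref{thm:wp:soundness}, together with the $\{0,1\}$-valuedness and the satisfaction correspondence of the embedding $\qslemb{.}$ from Theorem~\ref{thm:qsl:conservativity:language}. The first step is an \emph{operational collapse lemma} for non-probabilistic programs. Since $\cc$ contains no probabilistic choice, every execution step of $\cc$ carries probability $1$, and the only nondeterminism stems from allocation (which, moreover, never faults). Hence, once a scheduler $\Scheduler$ is fixed, the execution of $\cc$ from a state $(\sk,\hh)$ is a \emph{unique} deterministic run that either (i) reaches a goal configuration $(\Term,\sk',\hh')$, (ii) reaches a $\Fault$-configuration, or (iii) diverges. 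In cases (ii) and (iii) the path set $\PathsFromTo{(\cc,\sk,\hh)}{\Scheduler}$ is empty, so its contribution is $0$; in case (i) it consists of the single run ending in $(\Term,\sk',\hh')$, whose probability is $1$ and whose reward is $\qslemb{\predb}(\sk',\hh')$. Writing $r_{\Scheduler}$ for this value, we obtain $\ExpRewC{\qslemb{\predb}}{\cc,\sk,\hh} = \inf_{\Scheduler} r_{\Scheduler}$, and by Theorems~\ref{thm:wp:soundness} and~\ref{thm:qsl:conservativity:language}.\ref{thm:qsl:conservativity:language:0-1} we have $\wp{\cc}{\qslemb{\predb}}(\sk,\hh) = \inf_{\Scheduler} r_{\Scheduler}$ with each $r_{\Scheduler} \in \{0,1\}$.

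Second, I would record the standard reading of total-correctness validity of $\{\,\preda\,\}\,\cc\,\{\,\predb\,\}$ for separation logic: for every state $(\sk,\hh)\models\preda$, every execution of $\cc$ from $(\sk,\hh)$ is fault-free, terminates, and ends in a state satisfying $\predb$ (recall that allocation never faults, so non-termination can only arise from loops, but this is irrelevant to the argument). By the collapse lemma this is \emph{equivalent} to the statement: for every $(\sk,\hh)\models\preda$ and every scheduler $\Scheduler$, the run is of type (i) and its final state $(\sk',\hh')$ satisfies $\predb$; by Theorem~\ref{thm:qsl:conservativity:language}.\ref{thm:qsl:conservativity:language:equivalence} this in turn is equivalent to $r_{\Scheduler}=1$ for all $(\sk,\hh)\models\preda$ and all $\Scheduler$.

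Third, both implications then follow pointwise in $(\sk,\hh)$. For ``$\Rightarrow$'': if $(\sk,\hh)\not\models\preda$ then $\qslemb{\preda}(\sk,\hh)=0$ by Theorem~\ref{thm:qsl:conservativity:language}.\ref{thm:qsl:conservativity:language:equivalence} and the inequality holds trivially; if $(\sk,\hh)\models\preda$, validity yields $r_{\Scheduler}=1$ for every scheduler (the scheduler set is nonempty), so $\wp{\cc}{\qslemb{\predb}}(\sk,\hh)=\inf_{\Scheduler}r_{\Scheduler}=1=\qslemb{\preda}(\sk,\hh)$, hence $\qslemb{\preda}\ppreceq\wp{\cc}{\qslemb{\predb}}$. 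For ``$\Leftarrow$'': fix $(\sk,\hh)\models\preda$; then $1=\qslemb{\preda}(\sk,\hh)\lleq\wp{\cc}{\qslemb{\predb}}(\sk,\hh)=\inf_{\Scheduler}r_{\Scheduler}$, and since every $r_{\Scheduler}\in\{0,1\}$, an infimum of at least $1$ forces $r_{\Scheduler}=1$ for every $\Scheduler$; by the collapse lemma this means every execution of $\cc$ from $(\sk,\hh)$ is fault-free, terminating, and ends in a state satisfying $\predb$, i.e. $\{\,\preda\,\}\,\cc\,\{\,\predb\,\}$ is valid for total correctness.

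I expect the main obstacle to be the careful formulation and proof of the operational collapse lemma in step one: arguing that a fixed scheduler induces a unique run despite the countably infinite branching of allocation (the scheduler still selects exactly one successor configuration per step), that runs reaching $\Target$ halt there so each scheduler contributes at most one goal-path (with probability $1$), and that faulting or diverging runs contribute reward $0$ because the corresponding path set is empty. Once this lemma is in place, the equivalence with the classical definition of Hoare-triple validity — and hence the theorem — is immediate from Theorems~\ref{thm:wp:soundness} and~\ref{thm:qsl:conservativity:language}.
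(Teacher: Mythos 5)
Your proof is correct, but it takes a genuinely different route from the paper's. The paper never leaves the transformer level: it invokes the standard fact that $\{\,\preda\,\}\,\cc\,\{\,\predb\,\}$ is valid for total correctness iff $\preda \Rightarrow \SLwp{\cc}{\predb}$, proves by structural induction (with a transfinite induction for loops) the compatibility lemma $\qslemb{\SLwp{\cc}{\predb}} = \wp{\cc}{\qslemb{\predb}}$ (Lemma~\ref{thm:sl:conservativity}), and then finishes with the same pointwise $\{0,1\}$ case analysis you use, via Theorem~\ref{thm:qsl:conservativity:language}. You instead go through the operational model: Theorem~\ref{thm:wp:soundness} turns $\wp{\cc}{\qslemb{\predb}}$ into a minimal expected reward, and your collapse lemma observes that for a non-probabilistic program each scheduler induces a unique probability-one run, so the expected reward is an infimum of $\{0,1\}$-valued quantities $r_{\Scheduler}$, which you then match against the operational reading of total-correctness validity. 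What each buys: the paper's route yields the stronger, reusable statement that the embedding commutes with the two weakest-precondition transformers and needs no MDP reasoning, at the price of a long induction; your route is much shorter given the soundness theorem, but it leans on two points the paper handles differently or leaves implicit. First, you take the operational definition of validity (every execution from a $\preda$-state is fault-free, terminating, and ends in a $\predb$-state), whereas the paper reduces validity to the classical $\SLwp{\cc}{\predb}$; both are ``standard facts'' left unproved, and yours is the more natural reading, but it should be stated as the definition you work with. Second, your trichotomy (terminate in $\Target$ / reach $\Fault$ / diverge) tacitly assumes schedulers only ever pick enabled actions — otherwise the path set is empty for a spurious scheduler and the infimum collapses to $0$, which would break the forward implication; this is the intended reading (it is what the characterization in Theorem~\ref{thm:exprew} uses), but it belongs in the formulation of your collapse lemma, together with the observation you already make that goal configurations have no outgoing transitions, so each scheduler contributes at most one goal path.
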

\REPORT{
\begin{proof}
See Appendix~\ref{app:qsl:conservativity:wp}, p.~\pageref{app:qsl:conservativity:wp}.
\end{proof}
}

A key principle underlying separation logic is that correct programs must be memory safe (cf.~\cite{DBLP:conf/lics/Reynolds02}), i.e. all executions of a program do not lead to a memory error.
By the above theorem, the same holds for our $\wpsymbol$ calculus when considering non-probabilistic programs.
For probabilistic programs, however, we get a more fine-grained view as we can quantify the probability of encountering a memory error. 
This allows to evaluate programs if failures are unavoidable, for example due to unreliable hardware.
In particular, the weakest preexpecation $\wp{\cc}{1}$ measures the probability that program $\cc$ terminates without a memory fault.
Does this mean that---for probabilistic programs---our calculus can only prove memory safety with probability one, but is unable to prove that a program is \emph{certainly} memory safe?
After all, there might exist an execution of program $\cc$ that encounters a memory error with probability zero.
The answer to this question is \emph{no}:
Assume there is some execution of a program $\cc$ that encounters a memory error.
By the correspondence between $\wpsymbol$ and our operational semantics (cf.\ Theorem~\ref{thm:wp:soundness}),
there is a path from some initial state to an error state $(\Fault,\sk,\hh)$.
Since such a path must be finite and thus has a positive probability, the probability of encountering a memory error must be positive.
In other words,
\begin{corollary}
    An \hpgcl program is memory safe with probability one iff it is memory safe.
\end{corollary}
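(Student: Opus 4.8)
The plan is to prove both directions of the equivalence, where the nontrivial direction is: if an \hpgcl\ program $\cc$ is memory safe with probability one, then it is memory safe (i.e.\ \emph{no} execution encounters a memory error). The converse is immediate, since an execution encountering a memory error with probability zero contributes nothing to the total fault probability, so memory safety (no faulting execution at all) trivially implies memory safety with probability one.

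For the hard direction I would argue by contraposition, exactly as sketched in the paragraph preceding the corollary. Suppose $\cc$ is \emph{not} memory safe, i.e.\ there is some initial state $(\sk,\hh)$ and some resolution of nondeterminism under which execution of $\cc$ reaches a configuration of the form $(\Fault,\sk',\hh')$. First I would observe that such a faulting computation, by the structure of the operational semantics in Figure~\ref{table:op}, must be reached after \emph{finitely many} execution steps: each rule consumes part of the program syntax or performs a single atomic heap/stack operation, so a fault is witnessed by a finite path $t_1 \rightarrow^{*} t_m$ with $t_1 = (\cc,\sk,\hh)$ and $t_m = (\Fault,\sk',\hh')$. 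Along this finite path every transition has probability in $(0,1]\cap\Rats$, so the product of transition probabilities is strictly positive. Hence, under a scheduler $\Scheduler$ extending the choices made along this finite prefix, the probability of encountering a memory fault starting from $(\sk,\hh)$ is bounded below by this positive product, and is therefore positive.

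To turn this into the statement about $\wp$, I would relate the fault probability to $\wp{\cc}{1}$ via Theorem~\ref{thm:wp:soundness} (soundness). Concretely, $\wp{\cc}{1}(\sk,\hh) = \ExpRewC{1}{\cc,\sk,\hh}$ is the infimum over schedulers of the probability of reaching a goal configuration in $\Target$, i.e.\ of terminating successfully without faulting. Since every maximal finite path either ends in $\Target$ or ends in a $\Fault$ configuration (or the computation diverges), and the path-probabilities from a fixed configuration under a fixed scheduler sum to at most one, we get, for the scheduler $\Scheduler$ above,
\begin{align*}
  \sum_{t_1\ldots t_m \in \PathsFromTo{(\cc,\sk,\hh)}{\Scheduler}} \Prob{t_1 \ldots t_m} \;\leq\; 1 - (\text{positive fault probability under }\Scheduler) \;<\; 1,
\end{align*}
hence $\wp{\cc}{1}(\sk,\hh) \leq \ExpRewC{1}{\cc,\sk,\hh} < 1$. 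Thus $\cc$ is not memory safe with probability one, completing the contrapositive. Combining with the easy direction yields the corollary.

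The main obstacle is making precise the bookkeeping that "the path probabilities from a fixed configuration under a fixed scheduler sum to at most one, and the mass not reaching $\Target$ is split between faulting paths and divergence." This requires a little care about the Markov decision process induced by $\ExecSymbol$ and the definition of $\ExpRewC{\cdot}{\cdot}$, in particular handling the countably-infinite branching of allocation — but allocation never faults, so the faulting prefix we fixed is finite and does not interact with the infinite branching, and the needed inequality is just the standard fact that the reachability probability of a \emph{subset} of the terminal configurations is at most one. Everything else is routine, and indeed the argument is essentially already spelled out in the discussion immediately before the corollary; the proposal merely organizes it and pins down the use of Theorem~\ref{thm:wp:soundness}.
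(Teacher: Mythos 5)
Your proposal is correct and takes essentially the same route as the paper: the paper's argument is exactly the contrapositive you give — any faulting execution is witnessed by a finite path, which has positive probability, so the probability of a memory fault is positive, and the converse direction is immediate. The only caveat is that your closing detour through $\wp{\cc}{1} < 1$ is superfluous and slightly imprecise (that inequality can also stem from divergence alone); the positive fault probability you establish directly is already the conclusion the paper relies on.
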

}

\subsection{The Quantitative Frame Rule}
\label{sec:wp:frame-rule}

In classical \SL (in the sense of a proof system), the \emph{frame rule} is a distinguished feature that allows for local reasoning~\cite{DBLP:conf/fossacs/YangO02}.
Intuitively, it states that a part of the heap that is not explicitly modified by a program is unaffected by that program.
Consequently, it suffices to reason locally only on the subheap that is actually mutated.
The frame rule 
reads as follows:
\begin{align*}
        \infer[~\text{if}~\Mod{\cc} \cap \Vars(\predc) = \emptyset.]
        {
            \hoare{\preda \sepcon \predc}{\cc}{\predb \sepcon \predc}
        }{
            \hoare{\preda}{\cc}{\predb}
        }
\end{align*}
Here, $\Mod{\cc}$ is the set of variables updated by a program $\cc$, i.e.\ all variables appearing on a left-hand side of an assignment in $\cc$.%
\footnote{More formally, $\Mod{\cc} = \{x\}$ if $\cc$ is of the form $\ASSIGN{x}{\ee}$, $\ALLOC{x}{\vec{\ee}}$, or $\ASSIGNH{x}{\ee}$, and $\Mod{\cc} = \emptyset$ if $\cc$ is $\SKIP$, 
$\FREE{\ee}$, or $\HASSIGN{\ee}{\ee'}$.
For the composed programs, we have $\Mod{\cc} = \Mod{\cc_1} \cup \Mod{\cc_2}$ if $\cc$ is either $\ITE{\guard}{\cc_1}{\cc_2}$, $\PCHOICE{\cc_1}{\pp}{\cc_2}$, or $\COMPOSE{\cc_1}{\cc_2}$.
For loops, \mbox{we have $\Mod{\WHILEDO{\guard}{\cc}} = \Mod{\cc}$}.
}
Moreover, $\Vars(\predc)$ collects all variables that ``occur'' in $\predc$.%
\footnote{Formally, $x \in \Vars(\predc)$ iff $\exists\, (\sk,\hh) \in \States ~ \exists\, v,v' \in \Ints\colon \predc(\sk\subst{x}{v},\hh) \nneq \predc(\sk\subst{x}{v'},\hh)$.}

\paragraph{Towards a \underline{quantitative} frame rule}
Let us first translate the above Hoare-style rule into an equivalent version for weakest preconditions.
To this end, we use the well-established fact that
\begin{align*}
    \hoare{\preda}{\cc}{\predb} ~\text{is valid} \quad\text{iff}\quad \preda \Rightarrow \wp{\cc}{\predb}~.
\end{align*}
Notice that this fact remains valid for memory-fault avoiding interpretations of Hoare triples as used by~\cite{DBLP:conf/fossacs/YangO02}. 
Based on this fact, we obtain a suitable formulation of the frame rule in the setting of weakest preconditions:
Assume that $\Mod{\cc} \cap \Vars(\predc) = \emptyset$. Then
\begin{align*}
        & \infer
        {
            \hoare{\preda \sepcon \predc}{\cc}{\predb \sepcon \predc}
        }{
            \hoare{\preda}{\cc}{\predb}
        } \\
        \qqiff & 
        \left(\preda \Rightarrow \wp{\cc}{\predb}\right) \quad\Rightarrow\quad \left(\preda \sepcon \predc \Rightarrow \wp{\cc}{\predb \sepcon \predc }\right) \tag{$\clubsuit$}\\
        \qqiff & 
        \wp{\cc}{\predb} \sepcon \predc \quad\Rightarrow\quad \wp{\cc}{\predb \sepcon \predc} \tag{$\spadesuit$}~.
\end{align*}
To understand the last equivalence, assume that $(\clubsuit)$ holds and choose $\preda = \wp{\cc}{\predb}$. Then replacing $\preda$ by $ \wp{\cc}{\predb}$ in the conclusion of $(\clubsuit)$ immediately yields the implication $(\spadesuit)$.
Conversely, assume $(\spadesuit)$ holds and let $\preda \Rightarrow \wp{\cc}{\predb}$. By monotonicity of $\sepcon$, we obtain
$
        \preda \sepcon \predc \Rightarrow \wp{\cc}{\predb} \sepcon \predc
$. 
Then $(\spadesuit)$ yields that $\wp{\cc}{\predb} \sepcon \predc$ implies $\wp{\cc}{\predb \sepcon \predc}$, i.e.\ $(\clubsuit)$ holds.

In a quantitative setting 
the analog to implication $\Rightarrow$ is 
$\preceq$.
Hence, the frame rule for \QSL is: 
\begin{theorem}[Quantitative Frame Rule]\label{thm:frame-rules}
  For every $\hpgcl$-program $\cc$ and expectations $\ff,\fg \in \E$ with $\Mod{\cc} \cap \Vars(\fg) = \emptyset$, we have
  $
          \wp{\cc}{\ff} \sepcon \fg \preceq  \wp{\cc}{\ff \sepcon \fg}
  $.
\end{theorem}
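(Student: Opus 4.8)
The plan is to prove the quantitative frame rule by induction on the structure of the program $\cc$, establishing the inequality $\wp{\cc}{\ff} \sepcon \fg \preceq \wp{\cc}{\ff \sepcon \fg}$ for every postexpectation $\ff \in \E$ and every frame expectation $\fg$ with $\Mod{\cc} \cap \Vars(\fg) = \emptyset$. The side condition on modified variables is crucial and must be threaded through the induction: it guarantees that syntactic substitutions $\subst{x}{\ee}$ and $\subst{x}{v}$ arising in the rules for assignment, lookup, and allocation commute with separating conjunction by $\fg$, i.e. $(\ff \subst{x}{v}) \sepcon \fg = (\ff \sepcon \fg)\subst{x}{v}$ whenever $x \notin \Vars(\fg)$, since evaluating $\fg$ is insensitive to the value of $x$ and $\fg$ does not depend on the split of the heap in a way that interacts with the stack update.

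First I would dispatch the easy base cases. For $\SKIP$ the claim is trivial. For $\ASSIGN{x}{\ee}$ it reduces to the substitution-commutation fact just mentioned, giving equality rather than mere inequality. For $\FREE{\ee}$, $\wp{\FREE{\ee}}{\ff} \sepcon \fg = (\validpointer{\ee} \sepcon \ff) \sepcon \fg = \validpointer{\ee} \sepcon (\ff \sepcon \fg) = \wp{\FREE{\ee}}{\ff \sepcon \fg}$ by associativity and commutativity of $\sepcon$ (Theorem~\ref{thm:sep-con-monoid}), and since $\Mod{\FREE{\ee}} = \emptyset$ there is no side condition to worry about. The composite constructs are handled by the induction hypothesis: for $\COMPOSE{\cc_1}{\cc_2}$, apply the hypothesis for $\cc_2$ inside, then monotonicity of $\wpC{\cc_1}$ (Theorem~\ref{thm:wp:basic}.\ref{thm:wp:basic:monotonicity}), then the hypothesis for $\cc_1$, noting $\Mod{\cc_1}, \Mod{\cc_2} \subseteq \Mod{\cc}$. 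For $\ITE{\guard}{\cc_1}{\cc_2}$ and $\PCHOICE{\cc_1}{\pp}{\cc_2}$, one uses that $\sepcon$ subdistributes over (in fact, here, distributes over) the relevant combinations; the key technical point is that since $\guard$ refers only to the stack, $\iverson{\guard} \sepcon \fg = \iverson{\guard} \cdot \fg$, and more importantly $(\iverson{\guard} \cdot \fk) \sepcon \fg = \iverson{\guard} \cdot (\fk \sepcon \fg)$ because $\iverson{\guard}$ is pure (Theorem~\ref{thm:sep-con-algebra-pure}.3), so the guarded sum factors through. Then $\sepcon$-over-$+$ subdistributivity (Theorem~\ref{thm:sep-con-distrib}.\ref{thm:sep-con-distrib:sepcon-over-plus}) combined with the induction hypotheses for the two branches closes these cases, with the $\pp$ coefficients handled the same way. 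For $\WHILEDO{\guard}{\cc'}$ I would show that $\inv := \wp{\WHILEDO{\guard}{\cc'}}{\ff \sepcon \fg}$ is a prefixed point of the characteristic functional $\charwp{\guard}{\cc'}{\ff \sepcon \fg}(\cdot)$ when fed $\wp{\WHILEDO{\guard}{\cc'}}{\ff} \sepcon \fg$ from below; more precisely, use the loop fixed-point characterization together with the $\ITE$-style reasoning at each unrolling and Theorem~\ref{thm:upper-invariant} (invariant rule) to conclude $\wp{\WHILEDO{\guard}{\cc'}}{\ff} \sepcon \fg \preceq \wp{\WHILEDO{\guard}{\cc'}}{\ff \sepcon \fg}$.

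The genuinely delicate cases are the heap-accessing statements whose rules involve $\sepimp$, namely heap mutation $\HASSIGN{\ee}{\ee'}$, heap lookup $\ASSIGNH{x}{\ee}$, and allocation $\ALLOC{x}{\vec\ee}$. For mutation, $\wp{\HASSIGN{\ee}{\ee'}}{\ff} \sepcon \fg = (\validpointer{\ee} \sepcon (\singleton{\ee}{\ee'} \sepimp \ff)) \sepcon \fg$; after reassociating this is $\validpointer{\ee} \sepcon ((\singleton{\ee}{\ee'} \sepimp \ff) \sepcon \fg)$, so the crux is the inequality $(\singleton{\ee}{\ee'} \sepimp \ff) \sepcon \fg \preceq \singleton{\ee}{\ee'} \sepimp (\ff \sepcon \fg)$. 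This is a ``frame law for the magic wand'' and is exactly where the real work lies: one proves it by unfolding both sides pointwise at a state $(\sk,\hh)$, splitting $\hh = \hh_1 \sepcon \hh_2$ witnessing the outer $\sepcon$, then for any extension $\hh'$ with $\hh' \disjoint \hh$ and $(\sk,\hh') \models \singleton{\ee}{\ee'}$, observing that $\hh' \disjoint \hh_1$ and $(\hh \sepcon \hh') = (\hh_1 \sepcon \hh') \sepcon \hh_2$, so $\ff(\sk, \hh_1 \sepcon \hh') \cdot \fg(\sk, \hh_2)$ is a lower bound for $(\ff \sepcon \fg)(\sk, \hh \sepcon \hh')$; taking the infimum over $\hh'$ and then the max over the split gives the claim. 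For lookup and allocation the same magic-wand frame law is applied inside a $\sup_{v}$ resp. $\inf_{v}$, and one pushes the $\sepcon \fg$ through the quantifier: for lookup, $\sepcon$ distributes over $\sup$ (Theorem~\ref{thm:sep-con-distrib}.\ref{thm:sep-con-distrib:sepcon-over-max} generalizes to suprema) and one additionally uses the mutation case's argument together with the substitution-commutation $(\ff\subst{x}{v}) \sepcon \fg = (\ff \sepcon \fg)\subst{x}{v}$, valid since $x = \Mod{\ASSIGNH{x}{\ee}} \notin \Vars(\fg)$; for allocation, the subtlety is that one gets $\inf_{v}$ on the left, and one needs $(\inf_v \fh_v) \sepcon \fg \preceq \inf_v (\fh_v \sepcon \fg)$, which holds by monotonicity of $\sepcon$, plus again the substitution-commutation for $x \notin \Vars(\fg)$, plus the wand frame law. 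I expect the main obstacle to be precisely the magic-wand frame law $(\iverson{\preda} \sepimp \ff) \sepcon \fg \preceq \iverson{\preda} \sepimp (\ff \sepcon \fg)$: getting the heap-algebra bookkeeping exactly right (in particular confirming $\hh' \disjoint \hh_1$ and the reassociation, and checking that the empty-witness-set case where $\sepimp$ yields $\infty$ does not break the inequality) is the technical heart, and everything else is routine structural induction leaning on the algebraic laws already proven in Section~\ref{sec:qsl}.
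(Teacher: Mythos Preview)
Your structural-induction strategy and your treatment of the base cases, sequential composition, conditional, probabilistic choice, and the heap-manipulating statements are essentially the paper's proof. In particular, your identification of the ``wand frame law'' $(\iverson{\preda} \sepimp \ff) \sepcon \fg \preceq \iverson{\preda} \sepimp (\ff \sepcon \fg)$ as the crux of the mutation/lookup/allocation cases, together with your pointwise argument for it, is correct and matches what the paper does (the paper just unfolds it in place rather than isolating it as a lemma).

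The genuine gap is the $\WHILEDO{\guard}{\cc'}$ case. Your appeal to Theorem~\ref{thm:upper-invariant} does not go through: that theorem yields upper bounds on $\wp{\WHILEDO{\guard}{\cc'}}{\ff}$ from a prefixed point of $\charwp{\guard}{\cc'}{\ff}$, but there is no evident invariant $I$ with $\charwp{\guard}{\cc'}{\ff}(I) \preceq I$ and $I \sepcon \fg \preceq \wp{\WHILEDO{\guard}{\cc'}}{\ff\sepcon\fg}$. The calculation your wording suggests---unfold $L := \wp{\WHILEDO{\guard}{\cc'}}{\ff}$ as $\charwp{\guard}{\cc'}{\ff}(L)$, push $\sepcon\,\fg$ through, and apply the inductive hypothesis for $\cc'$---only shows that $L \sepcon \fg$ is a \emph{post}-fixed point of $\charwp{\guard}{\cc'}{\ff\sepcon\fg}$, which does not bound it above by the \emph{least} fixed point.

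The paper instead proves by transfinite induction on ordinals $\alpha$ that $\charwpn{\guard}{\cc'}{\ff}{\alpha}(0) \sepcon \fg \preceq \charwpn{\guard}{\cc'}{\ff\sepcon\fg}{\alpha}(0)$; the successor step uses exactly the $\ITE$-style reasoning you describe, and the limit step uses that $\sepcon$ commutes with suprema pointwise (finitely many heap partitions). The transfinite argument is not avoidable by a Kleene-style $\omega$-iteration, since $\wpsymbol$ is \emph{not} $\omega$-continuous in the presence of allocation (Section~\ref{sec:wp:alloc}); this is why the paper insists on ordinal iteration here.
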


\begin{proof}
  By structural induction on $\hpgcl$ programs.
  For loops, we additionally have to perform a transfinite induction on the number of iterations.
  \REPORT{See Appendix~\ref{app:frame-rule}, p.~\pageref{app:frame-rule} for a full proof.}
\end{proof}

\paragraph{What about the converse direction?}
Can we also obtain a frame rule of the form $\wp{\cc}{\ff} \sepcon \fg \succeq \wp{\cc}{\ff \sepcon \fg}$?
In the quantitative case, a converse frame rule breaks for probabilistic choice due to the fact that $\sepcon$ and $+$ are only subdistributive in general (\autoref{thm:sep-con-distrib}).
This problem can partially be avoided by requiring $\fg$ to be domain-exact.
However, the ``converse frame rule'' also breaks in the qualitative case, i.e.\ if $\ff$ and $\fg$ are predicates and $\succeq$ corresponds to $\Leftarrow$:
For $\ff = \emp$, we have
\begin{align*}
        \wp{\HASSIGN{x}{0}}{\emp} \eeq & \validpointer{x} \sepcon \left( \singleton{x}{0} \sepimp \emp \right) \eeq 0~.
\end{align*}
If we additionally choose $\fg = \containsPointer{x}{0}$, we also obtain $\Mod{\cc} \cap \Vars(\fg) = \emptyset$ and
\begin{align*}
        \wp{\HASSIGN{x}{0}}{\emp \sepcon \containsPointer{x}{0}} \eeq & \validpointer{x} \sepcon \left( \singleton{x}{0} \sepimp (\emp \sepcon \containsPointer{x}{0}) \right) \\
        \eeq & \containsPointer{x}{-}.
\end{align*}
Put together, this yields a counterexample---even in the \emph{qualitative} case:
\begin{align*}
        \wp{\HASSIGN{x}{0}}{\emp} \sepcon \containsPointer{x}{0} ~\not\succeq~
        \wp{\HASSIGN{x}{0}}{\emp \sepcon \containsPointer{x}{0}}~.
\end{align*}
Hence, there is no converse version of the frame rule for a conservative extension of \SL. 

\section{A Landscape of Weakest Preexpectation Calculi}\label{sec:wp:landscape}

Our weakest preexpectation calculus for \QSL is for total correctness with intrinsic memory safety and demonic nondeterminism.
We now briefly discuss alternative possibilities. 

\paragraph{Angelic nondeterminism}
For a program $\cc$ and $\ff \in \E$, instead of the \emph{least} expected value $\wp{\cc}{\ff}$,
we are now interested in the \emph{largest} expected value $\awp{\cc}{\ff}$ (read: angelic weakest preexpectation) of $\ff$ after execution of $\cc$.
How does angelic nondeterminism affect the inductive definition of $\wpsymbol$ in \autoref{table:wp}?
For nondeterministic statements, we now have to maximize instead of minimize the expected value. 
As $\ALLOC{x}{\vec{\ee}}$ is the only statement that exhibits nondeterminism, we get 
\CHANGED{
\begin{align*}
        \awp{\ALLOC{x}{\vec{\ee}}}{\ff} \eeq 
        \lambda(\sk,\hh)\mydot
        \sup_{v \in \PosNats : v,v+1,\ldots,v+|\vec{\ee}|-1 \notin \dom{\hh}} 
        \left(\singleton{v}{\vec{\ee}} \sepimp \ff\subst{x}{v}\right)(\sk,\hh)~, 
\end{align*}
where, since allocation never fails, we only choose from locations that are not already allocated.
}
For all other statements, $\awpsymbol$ is defined just as $\wpsymbol$ in \autoref{table:wp} (except that $\wpsymbol$ is replaced by $\awpsymbol$).

Since a question like ``what is the expected value of $x$ after execution of program $\cc$'' does not make much sense if there is some positive probability such that
$\cc$ does not terminate or encounters a memory fault, the remainder of this section considers expectations in $\Eone$ only.

\paragraph{Partial correctness}
The weakest \emph{liberal} preexpectation $\wlp{\cc}{\ff}(\sk,\,\hh)$ of program $\cc$ and expectation $\ff \in \Eone$ for an initial state $(\sk,\,\hh)$ corresponds to the weakest preexpectation $\wp{\cc}{\ff}(\sk,\,\hh)$ 
plus the probability that $\cc$ does not terminate on state $(\sk,\,\hh)$.
%
How does shifting to partial correctness affect the inductive definition of $\wpsymbol$ in \autoref{table:wp}?
Following~\cite{DBLP:series/mcs/McIverM05}, we consider the \emph{greatest} fixed point for loops: 
\begin{align*}
  \wlp{\WHILEDO{\guard}{\cc'}}{\ff} \eeq & \gfp \fg\mydot \underbrace{\iverson{\neg \guard} \cdot \ff + \iverson{\guard} \cdot \wlp{\cc'}{\fg}}_{\eeq \charwp{\guard}{\cc}{\ff}(\fg)}~.
\end{align*}
For weakest liberal preexpectations, our quantitative frame rule also applies:
\begin{theorem}[Quantitative Frame Rule for \textnormal{$\wlpsymbol$}]\label{thm:wlp-frame-rules}
  For every $\hpgcl$-program $\cc$ and expectations $\ff,\fg \in \Eone$ with $\Mod{\cc} \cap \Vars(\fg) = \emptyset$, we have
          $\wlp{\cc}{\ff} \sepcon \fg \preceq  \wlp{\cc}{\ff \sepcon \fg}$.
\end{theorem}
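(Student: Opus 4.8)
The plan is to prove the statement by structural induction on the program $\cc$, following the proof of the quantitative frame rule for $\wpsymbol$ (\autoref{thm:frame-rules}) as closely as possible. The key observation that keeps most cases routine is that, on the domain $\Eone$, the $\wlpsymbol$-rules for every statement \emph{except} the loop coincide with the corresponding $\wpsymbol$-rules. Consequently, for $\SKIP$, assignment, heap lookup, heap mutation, deallocation, and allocation the claim is precisely \autoref{thm:frame-rules} (instantiated on $\Eone \subseteq \E$); and for sequential composition, conditional choice, and probabilistic choice the inductive step proceeds exactly as in the $\wpsymbol$-proof. There one only uses: the induction hypothesis on the subprograms (valid since $\Mod{\cc_i} \subseteq \Mod{\cc}$, so the side condition $\Mod{\cc_i} \cap \Vars(\fg) = \emptyset$ is inherited); monotonicity of $\wlpsymbol$ (a routine structural induction, as for $\wpsymbol$ in \autoref{thm:wp:basic}); sub-distributivity of $\sepcon$ over $+$ (\autoref{thm:sep-con-distrib}.\ref{thm:sep-con-distrib:sepcon-over-plus}) for the probabilistic choice, where the constants $\pp, 1-\pp$ pull out of $\sepcon$ because they are pure (\autoref{thm:sep-con-algebra-pure}); and, for the conditional, purity of a guard $\iverson{\guard}$, so that $\big(\iverson{\neg \guard}\cdot\ff + \iverson{\guard}\cdot\fh\big)\sepcon\fg = \iverson{\neg\guard}\cdot(\ff\sepcon\fg) + \iverson{\guard}\cdot(\fh\sepcon\fg)$ --- an \emph{exact} equality, since $\iverson{\guard}$ and $\iverson{\neg\guard}$ are mutually exclusive so their sum is a pointwise maximum, over which $\sepcon$ distributes by \autoref{thm:sep-con-distrib}.\ref{thm:sep-con-distrib:sepcon-over-max}, combined with \autoref{thm:sep-con-algebra-pure}.

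The genuinely new case is the loop $\WHILEDO{\guard}{\cc'}$, where $\wlpsymbol$ takes a \emph{greatest} fixed point. Write $W$ for $\wlp{\WHILEDO{\guard}{\cc'}}{\ff}$, which equals $\gfp \fh\mydot \charwp{\guard}{\cc'}{\ff}(\fh)$, and note $\wlp{\WHILEDO{\guard}{\cc'}}{\ff\sepcon\fg} = \gfp \fh\mydot \charwp{\guard}{\cc'}{\ff\sepcon\fg}(\fh)$ (both exist since $(\Eone,\preceq)$ is a complete lattice). I would avoid transfinite iteration over loop unrollings (which is what the $\wpsymbol$-proof needs, working with an ascending chain from $0$) and instead invoke the Tarski--Knaster coinduction principle: since $\gfp$ is the \emph{greatest} post-fixed point, it suffices to show
\begin{align*}
   W \sepcon \fg \quad\preceq\quad \charwp{\guard}{\cc'}{\ff\sepcon\fg}\bigl(W \sepcon \fg\bigr)~.
\end{align*}
Unfolding $W = \iverson{\neg\guard}\cdot\ff + \iverson{\guard}\cdot\wlp{\cc'}{W}$ and applying the exact guard distributivity above gives $W\sepcon\fg = \iverson{\neg\guard}\cdot(\ff\sepcon\fg) + \iverson{\guard}\cdot\bigl(\wlp{\cc'}{W}\sepcon\fg\bigr)$. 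The induction hypothesis applied to the loop body $\cc'$ (note $\Mod{\cc'} = \Mod{\WHILEDO{\guard}{\cc'}}$, so the side condition carries over) yields $\wlp{\cc'}{W}\sepcon\fg \preceq \wlp{\cc'}{W\sepcon\fg}$; combining this with monotonicity of $\wlpsymbol$ and of multiplication by $\iverson{\guard}$ gives $W\sepcon\fg \preceq \iverson{\neg\guard}\cdot(\ff\sepcon\fg) + \iverson{\guard}\cdot\wlp{\cc'}{W\sepcon\fg} = \charwp{\guard}{\cc'}{\ff\sepcon\fg}(W\sepcon\fg)$, which is exactly the required post-fixed point inequality. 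Hence $W\sepcon\fg \preceq \gfp \fh\mydot \charwp{\guard}{\cc'}{\ff\sepcon\fg}(\fh) = \wlp{\WHILEDO{\guard}{\cc'}}{\ff\sepcon\fg}$.

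I expect the loop case to be the main obstacle, though a mild one: the crucial point is that the frame rule is an \emph{inequality}, so the argument can be routed so that only monotonicity of $\wlpsymbol$ and of $\sepcon$ (\autoref{thm:sepcon-monotonic}) is ever needed --- no (co)continuity, which matters because $\wlpsymbol$, like $\wpsymbol$, is discontinuous in the presence of allocation (Section~\ref{sec:wp:alloc}). The one place demanding care is verifying that the guard-distributivity step is a genuine equality rather than mere sub-distributivity; as noted, this is exactly where purity of $\iverson{\guard}$ (together with mutual exclusivity of $\iverson{\guard}$ and $\iverson{\neg\guard}$) is used. If one prefers to stay parallel to the $\wpsymbol$-proof and avoid the coinduction principle, the alternative is a transfinite induction showing $\charwpn{\guard}{\cc'}{\ff}{\oa}(1)\sepcon\fg \preceq \charwpn{\guard}{\cc'}{\ff\sepcon\fg}{\oa}(1)$ for all ordinals $\oa$ (successor steps as above, limit steps pushing the infimum through $\sepcon$ by monotonicity), and then choosing an ordinal large enough that both iterations have stabilized at their respective greatest fixed points.
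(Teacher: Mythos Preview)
Your proof is correct. The structural induction framework and all non-loop cases match the paper exactly---both simply defer to the corresponding cases of the $\wpsymbol$ frame rule (\autoref{thm:frame-rules}). For the loop case, however, you take a genuinely different route: the paper carries out a transfinite induction on ordinals, showing $\charwpn{\guard}{\cc}{\ff}{\oa}(1)\sepcon\fg \preceq \charwpn{\guard}{\cc}{\ff\sepcon\fg}{\oa}(1)$ for every $\oa$ (base case $\oa=1$ using $\fg\in\Eone$, successor case via the inner and outer induction hypotheses, and a limit-ordinal argument pushing $\sepcon\fg$ through the iterate), whereas you bypass ordinal iteration entirely by exhibiting $W\sepcon\fg$ directly as a post-fixed point of $\charwp{\guard}{\cc'}{\ff\sepcon\fg}$ and invoking the Tarski--Knaster coinduction principle. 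Your argument is shorter and avoids the limit-ordinal bookkeeping; the paper's transfinite approach has the minor advantage of being fully parallel to the $\wpsymbol$ proof (where the least-fixed-point direction forces ordinal iteration from $0$), so the two proofs can be read side by side. Your observation that the guard-distributivity step is an exact equality (via mutual exclusivity and \autoref{thm:sep-con-distrib}.\ref{thm:sep-con-distrib:sepcon-over-max}) is the right ingredient, and your check that $W\in\Eone$ so the induction hypothesis applies to the body with postexpectation $W$ is the only subtle point---which you handle correctly.
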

\REPORT{%
\begin{proof}%
  See Appendix~\ref{app:wlp-frame-rule}, p.~\pageref{app:wlp-frame-rule}.
\end{proof}%
}%

\noindent%
Furthermore, a dual version of our proof rule for invariant-based reasoning about loops is available for weakest liberal preexpectations.
Its proof is analogous to the proof of Theorem~\ref{thm:upper-invariant}.

\begin{theorem}
For loop $\WHILEDO{\guard}{\cc}$, postexpectation $\ff \in \Eone$ and invariant $\inv \in \Eone$, we have
\begin{align*}
	\inv \ppreceq \charwp{\guard}{\cc}{\ff}(\inv) \qimplies \inv \ppreceq \wlp{\WHILEDO{\guard}{\cc}}{\ff}~.
\end{align*}
\end{theorem}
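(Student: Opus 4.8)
The statement is the partial-correctness (greatest-fixed-point) dual of Theorem~\ref{thm:upper-invariant}: if $\inv$ is a post-fixed point of the characteristic functional $\charwp{\guard}{\cc}{\ff}$, then $\inv$ lies below the weakest liberal preexpectation of the loop. The plan is to mirror the proof of Theorem~\ref{thm:upper-invariant} almost verbatim, but with the lattice order reversed in the relevant place, so I will invoke the Tarski--Knaster theorem on the \emph{greatest} fixed point.

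First I would recall that, by the definition of $\wlpsymbol$ for loops given just above in this subsection, $\wlp{\WHILEDO{\guard}{\cc}}{\ff} = \gfp \fg\mydot \charwp{\guard}{\cc}{\ff}(\fg)$. Next I would observe that $\charwp{\guard}{\cc}{\ff}$ is monotone on the complete lattice $(\Eone,\preceq)$: this follows from monotonicity of $\wlpsymbol$ (which in turn is established exactly as the corresponding clause of Theorem~\ref{thm:wp:basic}, using that $\iverson{\neg\guard}\cdot\ff + \iverson{\guard}\cdot(\,\cdot\,)$ is monotone and composition of monotone maps is monotone), together with the fact that all of $\Eone$ is closed under the pointwise operations involved. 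Here it is important that we work in $(\Eone,\preceq)$ rather than $(\E,\preceq)$, so that $\ff\in\Eone$ and the greatest fixed point exists as the top of a complete lattice bounded by the constant expectation $1$.

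Then the core argument is a one-line appeal to the constructive Tarski--Knaster theorem (cf.~\cite{cousot1979constructive}): for a monotone self-map on a complete lattice, the greatest fixed point is the \emph{greatest post-fixed point}, i.e.\ the supremum of all $\fg$ with $\fg \preceq \charwp{\guard}{\cc}{\ff}(\fg)$. Hence any $\inv$ satisfying $\inv \preceq \charwp{\guard}{\cc}{\ff}(\inv)$ is dominated by that greatest fixed point, i.e.\ $\inv \preceq \gfp \fg\mydot \charwp{\guard}{\cc}{\ff}(\fg) = \wlp{\WHILEDO{\guard}{\cc}}{\ff}$, which is exactly the claim.

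I do not expect a genuine obstacle here; the only point requiring a moment's care is the monotonicity of $\wlpsymbol$ (and of $\charwp{\guard}{\cc}{\ff}$) on $(\Eone,\preceq)$, which is not stated as an explicit lemma in the excerpt but is entirely routine — it is proved by the same structural induction as the $\wpsymbol$ case in Theorem~\ref{thm:wp:basic}, noting in particular that the greatest-fixed-point clause for loops preserves monotonicity because the supremum of post-fixed points of a monotone family is again monotone in the postexpectation parameter. Everything else is the standard ``post-fixed points lie below the greatest fixed point'' fact, applied dually to the proof of Theorem~\ref{thm:upper-invariant}.
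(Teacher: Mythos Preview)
Your proposal is correct and matches the paper's approach exactly: the paper simply states that the proof is analogous to that of Theorem~\ref{thm:upper-invariant}, and your argument is precisely the dual Tarski--Knaster argument (greatest fixed point equals greatest post-fixed point) applied on $(\Eone,\preceq)$.
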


\paragraph{Extrinsic memory safety}
Finally, we assume terminating with a memory fault is acceptable.
This is analogous to weakest liberal preexpectations, where nontermination is considered acceptable. 
The weakest \emph{extrinsic memory safe} preexpectation $\wep{\cc}{\ff}(\sk,\,\hh)$ 
corresponds to the weakest preexpectation
$\wp{\cc}{\ff}$ plus the probability that $\cc$ terminates with a memory fault on initial state $(\sk,\,\hh)$.
%
How does extrinsic memory safety affect the inductive definition of $\wpsymbol$ in \autoref{table:wp}?
We have to modify the connectives $\sepcon$ and $\sepimp$ to add the probability of memory faults.
The resulting connectives, denoted $\ff \esepcon \fg$ and $\iverson{\preda} \esepimp \fg$, where $\ff,\fg \in \Eone$ and $\preda$ is a predicate, are defined below.
\begin{align*}
    \ff \esepcon \fg \eeq & \lambda (\sk,\hh) \mydot \min \setcomp{1-\ff(\sk,\hh_1) + \ff(\sk,\hh_1) \cdot \fg(\sk,\hh_2)}{\hh = \hh_1 \sepcon \hh_2} \\
    \iverson{\preda} \esepimp \fg \eeq & \lambda (\sk,\hh) \mydot \sup_{\hh'} \left\{ \fg(\sk,\hh \sepcon \hh') ~|~ \hh \disjoint \hh' ~\text{and}~ (\sk,\hh') \models \preda \right\}
\end{align*}
The rules of $\wepsymbol$ are then obtained from the rules for $\wpsymbol$ in \autoref{table:wp} by replacing 
every occurrence of $\sepcon$ by $\esepcon$ and $\sepimp$ by $\esepimp$, respectively, 
and changing the rule for heap lookups to
\begin{align*}
        \wep{\ASSIGNH{x}{\ee}}{\ff} \eeq \inf_{v \in \Ints} \singleton{\ee}{v} \esepcon \left( \singleton{\ee}{v} \esepimp \ff\subst{x}{v} \right)~.
\end{align*}
Thus, we replaced the 
supremum by an infimum as encountering a memory fault is acceptable. 

\paragraph{The weakest preexpectation landscape}
The individual changes to $\wpsymbol$ can easily be combined.
Thus, apart from $\wpsymbol$, $\awpsymbol$, $\wlpsymbol$, and $\wepsymbol$, we also have transformers $\awlepsymbol$, $\awepsymbol$, $\wlepsymbol$, and $\awlpsymbol$.
How are these transformers related?
As a first observation, we note that for every $\hpgcl$-program $\cc$, we have $\wp{\cc}{0} = 0$ and $\awlep{\cc}{1} = 1$.
%
%
%
For each given initial state $(\sk,\,\hh)$ there are at most four possible outcomes: $\cc$ \emph{diverges}, $\cc$ encounters a memory \emph{fault}, $\cc$ successfully terminates in a state ``captured by $\ff$'', or $\cc$ terminates in some other state, which we denote by $\neg \ff$.
The total probability of these four outcomes is one.
Hence, we can describe the probability of successful termination and measuring $\ff$, i.e.\ $\wp{\cc}{\ff}$, as one minus the probability of the other three events. 
Similar dualities are obtained for all of the possible calculi:
\begin{theorem}[Duality principle for the weakest preexpectation landscape]
\label{thm:wp-duality}
  Let $\cc \in \hpgcl$ be a program.
  Moreover, let $\ff \in \Eone$. Then
  \begin{align*}
          \wp{\cc}{\ff} \eeq & 1 - \awlep{\cc}{1-\ff}~, \tag{probability of $\ff$} \\
          \wlp{\cc}{\ff} \eeq & 1 - \phantom{\textsf{l}}\awep{\cc}{1-\ff}~, \tag{probability of $\ff$ + divergence}\\
          \wep{\cc}{\ff} \eeq & 1 - \phantom{\textsf{e}}\awlp{\cc}{1-\ff}~,~\text{and} \tag{probability of $\ff$ + memory fault} \\
          \wlep{\cc}{\ff} \eeq & 1 - \phantom{\textsf{el}}\awp{\cc}{1-\ff}~. \tag{probability of $\ff$ + divergence + memory fault}
  \end{align*}
\end{theorem}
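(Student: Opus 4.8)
The plan is to prove the four identities by structural induction on the program $\cc$, establishing all four simultaneously (since the inductive cases for compound statements reference each other's transformers). Throughout I work in the one-bounded domain $(\Eone,{\preceq})$, which is essential: the complement $1 - \ff$ only makes sense and only lies in $\Eone$ when $\ff \in \Eone$, and the dualities of the separating connectives (see below) rely on one-boundedness. The overall strategy mirrors the classical duality $\wp{\cc}{\ff} = 1 - \wlp{\cc}{1-\ff}$ of McIver and Morgan, extended to the two-dimensional lattice of calculi indexed by (total vs.\ liberal) $\times$ (intrinsic vs.\ extrinsic memory safety) and by (demonic vs.\ angelic) resolution of nondeterminism. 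It suffices to prove one of the four equations, say $\wp{\cc}{\ff} = 1 - \awlep{\cc}{1-\ff}$, since the other three are obtained by the same argument after toggling the liberal flag (greatest vs.\ least fixed point for loops) and/or the extrinsic flag (which swaps $\sepcon,\sepimp$ for $\esepcon,\esepimp$ and an $\inf$ for a $\sup$ in the lookup rule); I would state this reduction explicitly and then carry out the induction once.

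The key lemma driving the base cases is a set of \emph{de Morgan–style dualities for the separating connectives}. Concretely, for any predicate $\preda$ and any $\ff \in \Eone$ one checks directly from the definitions that
\begin{align*}
  1 - \bigl(\validpointer{\ee} \sepcon \ff\bigr) &\eeq \validpointer{\ee} \esepcon (1 - \ff) \qquad\text{(on states where $\ee$ is allocated as a singleton; $=1$ otherwise, matching both sides)},\\
  1 - \bigl(\iverson{\preda} \sepimp \ff\bigr) &\eeq \iverson{\preda} \esepimp (1 - \ff),
\end{align*}
and dually with the roles of $(\sepcon,\sepimp)$ and $(\esepcon,\esepimp)$ exchanged. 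These follow because $\max$ and $\min$ swap under $x \mapsto 1-x$, $\inf$ and $\sup$ swap, and the arithmetic $1 - \ff(\sk,\hh_1)\cdot\fg(\sk,\hh_2) = (1-\ff(\sk,\hh_1)) + \ff(\sk,\hh_1)\cdot(1-\fg(\sk,\hh_2))$ is exactly the shape appearing in the definition of $\esepcon$. With these in hand, each atomic heap statement is a short computation: e.g.\ for $\FREE{\ee}$, $1 - \awlep{\FREE{\ee}}{1-\ff} = 1 - \bigl(\validpointer{\ee}\esepcon(1-\ff)\bigr) = \validpointer{\ee}\sepcon\ff = \wp{\FREE{\ee}}{\ff}$; mutation and allocation combine this with the $\sepimp$/$\esepimp$ duality and the fact that $\inf_v$ and $\sup_v$ swap under complementation; lookup uses that the inner $\sup_{v}/\inf_v$ in the respective rules also swap. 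The non-heap statements ($\SKIP$, assignment, sequencing, conditional, probabilistic choice) are immediate from linearity/affinity of the transformers in $\ff$ and the induction hypothesis—e.g.\ for $\PCHOICE{\cc_1}{\pp}{\cc_2}$, $1 - \bigl(\pp\cdot\awlep{\cc_1}{1-\ff} + (1-\pp)\cdot\awlep{\cc_2}{1-\ff}\bigr) = \pp\cdot(1-\awlep{\cc_1}{1-\ff}) + (1-\pp)\cdot(1-\awlep{\cc_2}{1-\ff}) = \pp\cdot\wp{\cc_1}{\ff} + (1-\pp)\cdot\wp{\cc_2}{\ff}$ by IH.

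The main obstacle is the loop case, and for two reasons. First, $\wp{\WHILEDO{\guard}{\cc'}}{\ff}$ is a \emph{least} fixed point of $\charwp{\guard}{\cc'}{\ff}$ while $\awlep{\WHILEDO{\guard}{\cc'}}{1-\ff}$ is a \emph{greatest} fixed point of the analogous characteristic functional for $\awlepsymbol$; one must show that the map $\fg \mapsto 1 - \fg$ is an order-reversing bijection $\Eone \to \Eone$ conjugating $\charwp{\guard}{\cc'}{\ff}$ with $\charwp{\guard}{\cc'}{1-\ff}^{\awlep}$, so that it sends least fixed points to greatest fixed points; this uses the induction hypothesis on the loop body $\cc'$ applied \emph{inside} the functional. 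Second, because allocation makes $\wpsymbol$ (and the other transformers) \emph{non-continuous} (Section~\ref{sec:wp:alloc}), one cannot argue via Kleene iteration over $\omega$; instead one must invoke the Tarski–Knaster transfinite iteration (as in Theorem~\ref{thm:upper-invariant}) and verify the duality is preserved at successor \emph{and} limit ordinals—the limit step requires that $x \mapsto 1-x$ exchanges suprema of ascending chains with infima of descending chains in $\Eone$, which it does since it is an anti-isomorphism of the complete lattice $(\Eone,{\preceq})$. I would handle this by a clean lemma: if $\Phi,\Psi\colon\Eone\to\Eone$ are monotone and $1 - \Phi(\fg) = \Psi(1-\fg)$ for all $\fg$, then $1 - \lfp\Phi = \gfp\Psi$; then the loop case is just an application, with the hypothesis of the lemma discharged by the body's induction hypothesis.
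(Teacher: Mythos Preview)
Your proposal is correct and follows essentially the same route as the paper: structural induction on $\cc$, with the heap-manipulating base cases handled by de~Morgan-style dualities between $(\sepcon,\sepimp)$ and $(\esepcon,\esepimp)$ under $\fg \mapsto 1-\fg$, and the loop case by a transfinite fixed-point argument exploiting that $\fg \mapsto 1-\fg$ is an order anti-isomorphism of $(\Eone,{\preceq})$. The paper carries out exactly this, proving $\wp{\cc}{\ff} = 1 - \awlep{\cc}{1-\ff}$ in detail and declaring the other three analogous; for the loop it does the transfinite induction on ordinals explicitly rather than isolating your clean ``$1-\Phi(\fg)=\Psi(1-\fg)$ implies $1-\lfp\Phi=\gfp\Psi$'' lemma, but that lemma is precisely what the ordinal induction establishes.

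Two small remarks. First, your opening sentence about needing to prove all four identities \emph{simultaneously} because ``the inductive cases for compound statements reference each other's transformers'' is not right: each transformer's rules are self-contained (e.g.\ $\wpC{\COMPOSE{\cc_1}{\cc_2}}$ only calls $\wpsymbol$), so the four dualities are independent---you correctly revert to ``it suffices to prove one'' two sentences later. Second, for allocation there is a superficial mismatch: the $\awlepsymbol$ rule quantifies only over fresh addresses while the $\wpsymbol$ rule quantifies over all of $\PosNats$. In $\Eone$ this is harmless (non-fresh $v$ contribute the neutral element $1$ to the $\inf$ on the $\wpsymbol$ side and $0$ to the $\sup$ on the $\awlepsymbol$ side), and the paper dispatches it by citing the same observation made in the soundness proof; you should make this step explicit rather than folding it into ``$\inf_v$ and $\sup_v$ swap under complementation.''
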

\REPORT{%
\begin{proof}
        By induction on the structure of $\hpgcl$ programs. See Appendix~\ref{app:wp-duality} for details.
\end{proof}
}%

\section{Beyond \boldhpgcl Programs}
\label{sec:extensions}

We presented our results in terms a simple probabilistic programming language.
Some of the case studies presented in the next section, however, additionally use procedure calls \CHANGED{and sample from discrete uniform probability distributions.}
Let us thus briefly discuss how our $\wpsymbol$ calculus is extended accordingly.%
\footnote{Detailed formalizations and extensions of previous proofs are found in\CAMERA{~\cite{DBLP:journals/corr/abs-1802-10467}}\REPORT{~Appendix~\ref{app:sec:extensions}}.}

%
%
We allow programs $\cc$ to contain \emph{procedure calls} of the form
        %
$\ProcCall{P}{\vec{\ee}}{}$,
        %
where $\ProcName{P}$ is a procedure name and 
$\vec{\ee}$ is a tuple of arithmetic expressions representing the values passed to the procedure.
Since assume parameters are passed by value, no variables are modified by a procedure call, i.e.\ $\Mod{\ProcCall{P}{\vec{\ee}}{}} = \emptyset$.
The meaning of procedure calls is determined by \emph{procedure declarations} of the form
        %
        $\ProcDeclIs{P}{\vec{x}}{}{\textit{body}(P)}$,
        %
where $\textit{body}(P) \in \hpgcl$ is the procedure's body that may contain (recursive) procedure calls
and $\vec{x}$ is a tuple of variables that are never changed by program $\textit{body}(P)$.
All variables in $\textit{body}(P)$ except for its parameters are considered local variables.

For non-recursive procedures, the weakest preexpectation of a procedure call coincides with the weakest preexpectation of its body.
The semantics of recursive procedure calls is determined by a least fixed point of a transformer on procedure environments mapping procedure names and parameters to expectations.
In particular, our previous results, such as linearity of $\wpsymbol$, monotonicity, and the frame rule, remain valid in the presence of recursive procedure calls.

Furthermore, we employ a standard proof rule to deal with recursion (cf.~\cite{DBLP:journals/fac/Hesselink94}):
\begin{align*}
\infer[\text{[rec]}]{
        \forall \vec{\ee} \colon \wp{\ProcCall{P}{\vec{\ee}}{}}{\ff} \ppreceq \inv(\vec{\ee}),
  }{
  \forall \vec{\ee} \colon \wp{\ProcCall{P}{\vec{\ee}}{}}{\ff} \ppreceq \inv(\vec{\ee}) ~\Vdash~ \wp{\textit{body}(\ProcName{P})}{\ff} \ppreceq \inv(\vec{\ee})
  }
\end{align*}
where 
$\ff \in \E$ is a postexpectation, $\inv(\vec{\ee}) \in \E$ is an invariant, and $\vec{\ee}$ is a tuple of expression passed to the called procedure.
Intuitively, for proving that a procedure call satisfies a specification, 
it suffices to show that the procedures body satisfies the specification---assuming that all recursive calls in the procedure's body do so, too.
An analogous rule is obtained for weakest \emph{liberal} preexpectations by replacing all occurrences of $\preceq$ by $\succeq$.
%

\CHANGED{
Moreover, we support sampling from arbitrary discrete distributions instead of flipping coins.
While these sampling instructions, such as $\ASSIGNUNIFORM{x}{\ee}{\ee'}$, which samples an integer in the interval $[\ee,\ee']$ uniformly at random, can be simulated with coin flips, it is more convenient to directly derive their semantics. For example, the weakest preexpecation of the \emph{uniform random assignment} $\ASSIGNUNIFORM{x}{\ee}{\ee'}$ with respect to postexpectation $\ff \in \E$ is given by
\begin{align*}
        \wp{\ASSIGNUNIFORM{x}{\ee}{\ee'}}{\ff} \eeq \lambda(\sk,\hh)\mydot \frac{1}{\sk(\ee')-\sk(\ee)+1} \cdot \sum_{k=\sk(\ee)}^{\sk(\ee')} \ff\subst{x}{k}(\sk,\hh)~.
\end{align*}
}


\section{Case Studies} \label{sec:case-studies}

We examine a few examples---including the programs presented in Section~\ref{sec:intro}---to demonstrate 
\QSL's applicability to reason about probabilities and expected values of \hpgcl programs. 


\subsection{Array Randomization}\label{sec:case-studies:randomize-array}
For our first example, recall the procedure $\ProcName{randomize}(\aarray, n)$ in Section~\ref{sec:intro}, Figure~\ref{fig:randomize-array} that computes a random permutation of an array of size $n$.
%
To conveniently specify subarrays, we use iterated separating conjunctions (cf.~\cite{DBLP:conf/lics/Reynolds02}) given by
\begin{align*}
        \bbigsepcon{k=i}{n} \ff_k \eeq \lambda(\sk,\hh)\mydot
   \begin{cases}
           \left(\ff_{\sk(i)} \sepcon \ff_{\sk(i+1)} \sepcon \ldots \sepcon \ff_{\sk(n)}\right)(\sk,\hh) & ~\text{if}~\sk(i) \leq \sk(n) \\
           \emp(\sk,\hh) &~\text{otherwise.}
   \end{cases}
\end{align*}
Our goal is to show that no particular permutation of the input array has a higher probability than other ones.
Since there are $n!$ permutations of an array of length $n$, we prove that the probability of computing an arbitrary, but fixed, permutation is at most $\sfrac{1}{n!}$.
That is, we compute an upper bound of $\wp{\ProcCall{randomize}{\aarray, n}{}}{\singleton{\aarray}{\za_0,\ldots,\za_{n-1}}}$, 
\REMOVED{where $\za_0,\ldots,\za_{n-1}$ are arbitrary, but fixed, values.}
\CHANGED{where we use variables $\za_0,\ldots,\za_{n-1}$, which do not appear in the program, to keep track of the individual values in the array.}
To this end, we propose the invariant
\begin{align*}
   I \eeq & \iverson{0 \leq i < n} \cdot \frac{1}{(n-i)!} \cdot \bbigsepcon{k=0}{i-1} \singleton{\aarray+k}{\alpha_k}
             \sepcon \sum\limits_{\pi \in \perm{i}{n-1}} \bbigsepcon{k=i}{n-1} \singleton{\aarray+k}{\alpha_{\pi(k)}} \\
          &\quad + \iverson{\neg (0 \leq i < n)} \cdot \singleton{\aarray}{\alpha_0,\ldots, \alpha_{n-1}}
\end{align*}
for the loop $\crand$ in procedure $\ProcName{randomize}$, where $\perm{\ee}{\ee'}$ denotes the set of permutations over $\{\ee,\ee+1,\ldots,\ee'\}$.
Intuitively, $\inv$ describes the situation for $i$ \emph{remaining} loop iterations (since we reason backwards):
All but the first $i$ array elements are already known to be swapped consistently with our fixed permutation. 
In our preexpectation, the last $n-i$ elements are thus arbitrarily permuted and the probability of hitting the right permutation for these elements is $\sfrac{1}{(n-i)!}$.
The remaining $i$ iterations still have to be executed, i.e. the first $i$ array elements coincide with our postexpectation.
\REPORT{A detailed proof that $\inv$ is an invariant of $\crand$ in the sense of Theorem~\ref{thm:upper-invariant} is found in Appendix~\ref{app:case-studies:randomize-array}.}
For the whole procedure $\ProcName{randomize}$ we continue as follows:
\begin{align*}
  & \wp{\ProcCall{randomize}{\aarray, n}{}}{\singleton{\aarray}{\za_0,\ldots,\za_{n-1}}} \\
  \eeq & \wp{\ASSIGN{i}{0}}{\wp{\crand}{\singleton{\aarray}{\za_0,\ldots,\za_{n-1}}}} \tag{Definition of $\wpsymbol$ for procedure body} \\
  \ppreceq & \wp{\ASSIGN{i}{0}}{\inv} \eeq \inv\subst{i}{0} \tag{Theorem~\ref{thm:upper-invariant} for invariant $\inv$, Table~\ref{table:wp}} \\
  \eeq & \frac{1}{n!} \cdot \sum\limits_{\pi \in \perm{0}{n-1}} \bbigsepcon{k=0}{n-1} \singleton{\aarray+k}{\alpha_{\pi(k)}}. \tag{Algebra}
\end{align*}
The probability of computing exactly the permutation $\za_0, \ldots, \za_{n-1}$ is thus at most $\sfrac{1}{n!}$.
Moreover, if the initial heap is not some permutation of our fixed array, the probability becomes $0$.

%


\subsection{Faulty Garbage Collector}
\label{sec:case-studies:faulty-gc}

The next example is a garbage collector that is executed on cheap, but unreliable hardware (cf. Section~\ref{sec:intro}):
Procedure $\ProcName{delete}$ takes a binary tree with root $x$ and recursively deletes all elements in the tree.
However, with some probability $p \in [0,1]$, the condition $x \neq \nil$, which checks whether the tree is empty, is ignored although $x$ is the root of a non-empty tree.
This scenario is implemented by the probabilistic program in Figure~\ref{fig:faulty-gc},
where each node in a tree consists of two consecutive pointers:
$\dereference{\alpha}$ and $\dereference{\alpha+1}$ respectively represent the left and right child of $\alpha$.
%
%
%

Our goal is to establish a lower bound on the probability that the garbage collector successfully deletes the whole tree, i.e. 
$\wlp{\ProcCall{delete}{x}{}}{ \emp }$.
To this end we claim that
\begin{align*}
        \wlp{\ProcCall{delete}{x}{}}{\emp} \ssucceq \Tree{x} \cdot \left(1-p\right)^{\heapSize}~, \tag{$\dag$}
\end{align*}
where $\left(\nicepar{p^{\ff}}\right) (\sk,\, \hh) = p^{\ff(\sk,\hh)}$ for some rational $p$ and $\ff \in \Eone$.
%
%
%
%
The main steps of a proof of our claim are sketched in Figure~\ref{fig:faulty-gc}:
%
Starting with postexpectation $\emp$, step (1) results from applying the $\wlpsymbol$ rule for $\FREE{\cloze{\ee}}$ and the fact that
\begin{align*}
  \validpointer{x} \sepcon \validpointer{x+1} \sepcon \emp \eeq \singleton{x}{-,-} \sepcon \emp~.
\end{align*}
Step (2) deserves special attention.
We would like to apply rule $[rec]$ for recursive procedures (and $\wlpsymbol$) using the premise 
\begin{align*}
        \wlp{\ProcCall{delete}{r}{}}{\emp} \ssucceq t(r) \eeq \Tree{r} \cdot \left(1-p\right)^{\heapSize},
\end{align*}
but the postexpectation is $\singleton{x}{-,-} \sepcon \emp$ instead of $\emp$.
Here, the quantitative frame rule (Theorem~\ref{thm:frame-rules}) allows us to apply the rule $[rec]$ for recursive procedures to postexpectation $\emp$ and derive
$\singleton{x}{-,-} \sepcon t(r)$. 
Notice that the frame rule would not be applicable without the separating conjunction.
In particular, our proof would have to deal with \emph{aliasing}:  
It is not immediate that the heaps reachable from $l$ and $r$ do not share memory.

Step (3) first extends the postexpectation exploiting that $\fh = \fh \sepcon \emp$ for any $\fh \in \Eone$. We then proceed analogously to step (2).
Step (4) is an application of the lookup rule with minor simplifications to improve readability.
Steps (5) and (6) apply $\wlpsymbol$ to the probabilistic choice and the conditional.
Finally, we show that (6) is entailed by the expectation in step (7), i.e.\ $(6) \succeq (7)$.
\REPORT{A detailed proof 
is found in Appendix~\ref{app:tree-delete}.}

%
%
\begin{figure}[t]
\begin{align*}
& \ProcDecl{delete}{x}{} \Assert{ \Tree{x} \cdot \left(1-p\right)^{\heapSize} } \tag{7} \\
        & \quad \Assert{ \iverson{x \neq \nil} \cdot \left(p \cdot \emp + (1-p) \cdot f\right)+ \iverson{x = \nil} \cdot \emp } \tag{6} \\
        & \quad \IF{x \neq \nil} \Assert{p \cdot \emp + (1-p) \cdot f} \tag{5} \\
        & \qquad \{\, \SKIP \,\}\,\mathrel{\left[ p \right]}\,\{ \Assert{ \sup_{\alpha,\beta} \singleton{x}{\alpha,\beta} \sepcon \left( \singleton{x}{\alpha,\beta} \sepimp g\subst{l,r}{\alpha,\beta} \right) ~=: f } \tag{4} \\
            & \quad\qquad \ASSIGNH{l}{x}\SEMI\ASSIGNH{r}{x+1}\SEMI \Assert{ \singleton{x}{-,-} \sepcon t(l) \sepcon t(r) ~=: g } \tag{3} \\
            & \quad\qquad \ProcCall{delete}{l}{}\SEMI \Assert{ \singleton{x}{-,-} \sepcon t(r) } \tag{2} \\
            & \quad\qquad \ProcCall{delete}{r}{}\SEMI \Assert{ \singleton{x}{-,-} \sepcon \emp } \tag{1}\\
            & \quad\qquad \FREE{x}\SEMI\, \FREE{x+1}\,\} \Assert{\emp} \\
    & \quad \ELSE \SKIP \}~ \} \Assert{ \emp  } \\
    & \} \Assert{\emp}
\end{align*}
\caption{Faulty garbage collection procedure with a proof sketch, where $t(\alpha) \eeq \Tree{\alpha} \cdot \left(1-p\right)^{\heapSize}$.}
\label{fig:faulty-gc}
\end{figure}
%


\subsection{Lossy List Reversal}\label{sec:case-studies:lossy-reversal}

We analyze the lossy list reversal presented in Section~\ref{sec:intro}, Figure~\ref{fig:lossy-list-reversal}.
Our goal is to obtain an upper bound on the \emph{expected length of the reversed list} after successful termination, i.e. we compute an upper bound of
$\wp{\ProcCall{\csPLossyReversal}{\lsHead}{}}{\Len{\lsRev}{0}}$.
To this end, we propose the invariant
\begin{align*}
        \inv \eeq \Len{\lsRev}{0} \sepcon \Ls{\lsHead}{0} + \sfrac{1}{2} \cdot \iverson{\lsHead \neq 0} \cdot \left( \Len{\lsHead}{0} \sepcon \Ls{\lsRev}{0} \right).
\end{align*}
%
%
Intuitively, invariant $\inv$ states that during each loop iteration, the expected length of the list with head $\lsRev$ is its current length, i.e. $\Len{\lsRev}{0}$, plus half of the length of the remaining list with head $\lsHead$, i.e. $\Len{\lsHead}{0}$. 
To obtain a tight specification, i.e. describe the exact content of the heap, we additionally use predicates $\Ls{\lsHead}{0}$ and $\Ls{\lsRev}{0}$ to cover the remaining parts of the heap when measuring the length of a list.
\REPORT{A detailed proof that $\inv$ is an invariant in the sense of Theorem~\ref{thm:upper-invariant} is found in Appendix~\ref{app:case-studies:lossy-reversal}, p.~\pageref{app:case-studies:lossy-reversal}.}
We then continue as follows:
\begin{align*}
        & \wp{\ProcCall{\csPLossyReversal}{\lsHead}{}}{\Len{\lsRev}{0}} \\
        \eeq & \wp{\ASSIGN{\lsRev}{0}}{\wp{\WHILEDO{\lsHead \neq 0}{\ldots}}{\Len{\lsRev}{0}}}
        \tag{Definition of procedure body, Table~\ref{table:wp}} \\
        \ppreceq & \wp{\ASSIGN{\lsRev}{0}}{\inv} 
        \tag{Theorem~\ref{thm:upper-invariant}} \\
        \eeq & \underbrace{\Len{0}{0}}_{\eeq 0} \sepcon \Ls{\lsHead}{0} + \sfrac{1}{2} \cdot \iverson{\lsHead \neq 0} \cdot ( \Len{\lsHead}{0} \sepcon \underbrace{\Ls{0}{0}}_{\eeq 1} )
        \tag{Def. of $\inv$, Table~\ref{table:wp}} \\
        \eeq & \sfrac{1}{2} \cdot \iverson{\lsHead \neq 0} \cdot \Len{\lsHead}{0}.
\end{align*}
Hence, the expected length of the reversed list after successful termination is at most half of the length of the original list. 
%



\subsection{Randomized List Extension}\label{sec:list-length}

As a last example, we consider a program $\cc_{\textrm{list}}$ that inserts new elements at the beginning of a list with head $x$, but gradually loses interest in adding further elements:
%
%
\begin{align*}
        \cc_{\textrm{list}}\colon\quad
        \ASSIGN{c}{1}\SEMI\WHILEDO{c = 1}{\PCHOICE{\ASSIGN{c}{0}}{\sfrac{1}{2}}{\ASSIGN{c}{1}\SEMI\ALLOC{x}{x}}}
\end{align*}
%
Our goal is to compute an upper bound on the expected length of the list with head $x$ after termination of program $\cc_{\textrm{list}}$, i.e. we compute an upper bound of $\wp{\cc_{\textrm{list}}}{\Len{x}{\nil}}$.
To this end, we propose the loop invariant
%
$\inv = \Len{x}{\nil} + \iverson{c = 1}$,
which states that the length of the list is increased by one if variable $c$ equals one.
\REPORT{A detailed proof that $\inv$ is an invariant in the sense of Theorem~\ref{thm:upper-invariant} is found in Appendix~\ref{app:list-length}.}
For the full program we proceed as follows:
\begin{align*}
& \wp{\cc_{\textrm{list}}}{\Len{x}{\nil}} \\
\eeq & \wp{\ASSIGN{c}{1}}{\wp{\WHILEDO{c=1}{\ldots}}{\Len{x}{\nil}}} \tag{Definition of $\wpsymbol$} \\
\ppreceq & \wp{\ASSIGN{c}{1}}{\inv} \eeq \inv\subst{c}{1} \tag{Theorem~\ref{thm:upper-invariant} for invariant $\inv$} \\
\eeq & \Len{x}{\nil} + 1. \tag{Algebra}
\end{align*}
Hence, in expectation, program $\cc_{\textrm{lists}}$ increases the length of the initial list by at most one element.

\section{Related Work}\label{sec:related-work}

Although many algorithms rely on randomized data structures,
formal reasoning about probabilistic programs that mutate memory has received scarce attention.
To the best of our knowledge, there is little other work on formal verification of programs that are both probabilistic and heap manipulating. 
A notable exception is recent work by~\cite{tassarotti2017separation}
who combine concurrent separation logic with probabilistic relational Hoare logic (cf.~\cite{DBLP:conf/mpc/BartheGB12}).
Their focus is on program refinement. 
Verification is thus understood as establishing a relation between a program to be analyzed and a program which is known to be well-behaved.
In contrast to that, the goal of our logic is to directly measure quantitative program properties on source code level using a weakest-precondition style calculus.
\CHANGED{In particular, programs that do not \emph{certainly} terminate, e.g. the list extension example in Section~\ref{sec:list-length}, are outside the scope of their approach (cf.~\cite[Theorem 3.1]{tassarotti2017separation}). Furthermore, they do not consider unbounded expectations.}


\paragraph{Probabilistic program verification.}
Seminal work on semantics and verification of probabilistic programs is due to~\cite{DBLP:conf/focs/Kozen79,DBLP:conf/stoc/Kozen83}. 
\cite{DBLP:series/mcs/McIverM05,DBLP:journals/toplas/MorganMS96} developed the weakest preexpectation calculus to reason about a probabilistic variant of Dijkstra's guarded command language.
While variants of their calculus have been successfully applied to programs that access data structures, 
such as the coupon collector's problem~\cite{DBLP:conf/esop/KaminskiKMO16} and a probabilistic binary search~\cite{DBLP:conf/lics/OlmedoKKM16}, 
treatment of data structures is usually added in an ad--hoc manner.
In particular, proofs quickly get extremely complicated if programs do not only access but also mutate a data structure.
Our work extends the calculus of McIver and Morgan to formally reason about heap manipulating probabilistic programs.

\paragraph{Separation Logic.}
Apart from the backward reasoning rules in~\cite{DBLP:conf/popl/IshtiaqO01,DBLP:conf/lics/Reynolds02}, weakest preconditions are extensively used by~\cite{DBLP:conf/popl/KrebbersTB17}.
For ordinary 
programs, our calculus allows for reasoning about quantities of heaps, such as the length of lists. 
Such shape--numeric properties have been investigated before, see, e.g., \cite{DBLP:conf/popl/ChangR08, DBLP:journals/jar/BozgaIP10}.
\cite{DBLP:journals/scp/ChinDNQ12} use recursive predicate definitions together with fold/unfold reasoning to verify properties, such as balancedness of trees.
Furthermore,~\cite{DBLP:journals/corr/abs-1104-1998} developed a proof logic that combines separation logic with reasoning about consumable resources.
\CHANGED{%
His work supports reasoning about quantities by means of special predicates that that are evaluated by one or more resources in addition to the heap.
However, the amount of resources must be bounded. 
It is unclear how this approach can be extended to reason about expected values of probabilistic programs.
}
\REMOVED{In contrast to our approach, these logics evaluate to Boolean values, while our logic truly rests in a quantitative setting.}



\section{Conclusion}\label{sec:conclusion}

We presented \QSL{} --- a quantitative separation logic that evaluates to real numbers instead of truth values.
Our $\wpsymbol$ calculus built on top of \QSL is a conservative extension of both 
separation logic and Kozen's / McIver and Morgan's weakest preexpectations.
In particular, virtually all properties of separation logic remain valid.
We applied \QSL to reason about four examples, ranging from the success probability of a faulty garbage collector, over the expected list length of a list reversal algorithm to a textbook procedure to randomize arrays.

Our calculus provides a foundation for formal reasoning about randomized algorithms on source code level.
Future work includes developing \emph{proof systems for quantitative entailments} and analyzing more involved algorithms, e.g.~randomized skip lists or randomized splay trees.

\begin{acks}
We are grateful for the valuable and very constructive comments we received from the anonymous reviewers. This applies particularly to the formulation of Theorems~\ref{thm:qsl:conservativity:language} and~\ref{thm:qsl:conservativity:wp}.

Furthermore, we acknowledge the support of this work by DFG research training group 2236 UnRAVeL and by DFG grant NO 401/2-1.
\end{acks}


\bibliography{bibfile}


\begin{thebibliography}{46}


\ifx \showCODEN    \undefined \def \showCODEN     #1{\unskip}     \fi
\ifx \showDOI      \undefined \def \showDOI       #1{#1}\fi
\ifx \showISBNx    \undefined \def \showISBNx     #1{\unskip}     \fi
\ifx \showISBNxiii \undefined \def \showISBNxiii  #1{\unskip}     \fi
\ifx \showISSN     \undefined \def \showISSN      #1{\unskip}     \fi
\ifx \showLCCN     \undefined \def \showLCCN      #1{\unskip}     \fi
\ifx \shownote     \undefined \def \shownote      #1{#1}          \fi
\ifx \showarticletitle \undefined \def \showarticletitle #1{#1}   \fi
\ifx \showURL      \undefined \def \showURL       {\relax}        \fi
\providecommand\bibfield[2]{#2}
\providecommand\bibinfo[2]{#2}
\providecommand\natexlab[1]{#1}
\providecommand\showeprint[2][]{arXiv:#2}

\bibitem[\protect\citeauthoryear{Albers and Karpinski}{Albers and
  Karpinski}{2002}]%
        {DBLP:journals/ipl/AlbersK02}
\bibfield{author}{\bibinfo{person}{Susanne Albers} {and} \bibinfo{person}{Marek
  Karpinski}.} \bibinfo{year}{2002}\natexlab{}.
\newblock \showarticletitle{Randomized splay trees: Theoretical and
  experimental results}.
\newblock \bibinfo{journal}{\emph{Inf. Process. Lett.}} \bibinfo{volume}{81},
  \bibinfo{number}{4} (\bibinfo{year}{2002}), \bibinfo{pages}{213--221}.
\newblock


\bibitem[\protect\citeauthoryear{Apt and Plotkin}{Apt and Plotkin}{1986}]%
        {apt1986countable}
\bibfield{author}{\bibinfo{person}{Krzysztof~R Apt} {and}
  \bibinfo{person}{Gordon~D Plotkin}.} \bibinfo{year}{1986}\natexlab{}.
\newblock \showarticletitle{Countable nondeterminism and random assignment}.
\newblock \bibinfo{journal}{\emph{Journal of the ACM (JACM)}}
  \bibinfo{volume}{33}, \bibinfo{number}{4} (\bibinfo{year}{1986}),
  \bibinfo{pages}{724--767}.
\newblock


\bibitem[\protect\citeauthoryear{Aragon and Seidel}{Aragon and Seidel}{1989}]%
        {DBLP:conf/focs/AragonS89}
\bibfield{author}{\bibinfo{person}{Cecilia~R. Aragon} {and}
  \bibinfo{person}{Raimund Seidel}.} \bibinfo{year}{1989}\natexlab{}.
\newblock \showarticletitle{Randomized Search Trees}. In
  \bibinfo{booktitle}{\emph{FOCS}}. \bibinfo{pages}{540--545}.
\newblock


\bibitem[\protect\citeauthoryear{Atkey}{Atkey}{2011}]%
        {DBLP:journals/corr/abs-1104-1998}
\bibfield{author}{\bibinfo{person}{Robert Atkey}.}
  \bibinfo{year}{2011}\natexlab{}.
\newblock \showarticletitle{Amortised Resource Analysis with Separation Logic}.
\newblock \bibinfo{journal}{\emph{Logical Methods in Computer Science}}
  \bibinfo{volume}{7}, \bibinfo{number}{2} (\bibinfo{year}{2011}).
\newblock
\urldef\tempurl%
\url{https://doi.org/10.2168/LMCS-7(2:17)2011}
\showDOI{\tempurl}


\bibitem[\protect\citeauthoryear{Baier and Katoen}{Baier and Katoen}{2008}]%
        {DBLP:books/daglib/0020348}
\bibfield{author}{\bibinfo{person}{Christel Baier} {and}
  \bibinfo{person}{Joost{-}Pieter Katoen}.} \bibinfo{year}{2008}\natexlab{}.
\newblock \bibinfo{booktitle}{\emph{Principles of Model Checking}}.
\newblock \bibinfo{publisher}{{MIT} Press}.
\newblock


\bibitem[\protect\citeauthoryear{Barthe, Espitau, Gaboardi, Gr{\'{e}}goire,
  Hsu, and Strub}{Barthe et~al\mbox{.}}{2018}]%
        {bartheprogram}
\bibfield{author}{\bibinfo{person}{Gilles Barthe}, \bibinfo{person}{Thomas
  Espitau}, \bibinfo{person}{Marco Gaboardi}, \bibinfo{person}{Benjamin
  Gr{\'{e}}goire}, \bibinfo{person}{Justin Hsu}, {and}
  \bibinfo{person}{Pierre{-}Yves Strub}.} \bibinfo{year}{2018}\natexlab{}.
\newblock \showarticletitle{An Assertion-Based Program Logic for Probabilistic
  Programs}. In \bibinfo{booktitle}{\emph{{ESOP} 2018}}.
  \bibinfo{pages}{117--144}.
\newblock
\urldef\tempurl%
\url{https://doi.org/10.1007/978-3-319-89884-1\_5}
\showDOI{\tempurl}


\bibitem[\protect\citeauthoryear{Barthe, Gr{\'{e}}goire, and
  B{\'{e}}guelin}{Barthe et~al\mbox{.}}{2012}]%
        {DBLP:conf/mpc/BartheGB12}
\bibfield{author}{\bibinfo{person}{Gilles Barthe}, \bibinfo{person}{Benjamin
  Gr{\'{e}}goire}, {and} \bibinfo{person}{Santiago~Zanella B{\'{e}}guelin}.}
  \bibinfo{year}{2012}\natexlab{}.
\newblock \showarticletitle{Probabilistic Relational Hoare Logics for
  Computer-Aided Security Proofs}. In \bibinfo{booktitle}{\emph{MPC}}.
  \bibinfo{pages}{1--6}.
\newblock


\bibitem[\protect\citeauthoryear{Blelloch and Reid{-}Miller}{Blelloch and
  Reid{-}Miller}{1998}]%
        {DBLP:conf/spaa/BlellochR98}
\bibfield{author}{\bibinfo{person}{Guy~E. Blelloch} {and}
  \bibinfo{person}{Margaret Reid{-}Miller}.} \bibinfo{year}{1998}\natexlab{}.
\newblock \showarticletitle{Fast Set Operations Using Treaps}. In
  \bibinfo{booktitle}{\emph{{SPAA}}}. \bibinfo{pages}{16--26}.
\newblock


\bibitem[\protect\citeauthoryear{Bozga, Iosif, and Perarnau}{Bozga
  et~al\mbox{.}}{2010}]%
        {DBLP:journals/jar/BozgaIP10}
\bibfield{author}{\bibinfo{person}{Marius Bozga}, \bibinfo{person}{Radu Iosif},
  {and} \bibinfo{person}{Swann Perarnau}.} \bibinfo{year}{2010}\natexlab{}.
\newblock \showarticletitle{Quantitative Separation Logic and Programs with
  Lists}.
\newblock \bibinfo{journal}{\emph{J. Autom. Reasoning}} \bibinfo{volume}{45},
  \bibinfo{number}{2} (\bibinfo{year}{2010}), \bibinfo{pages}{131--156}.
\newblock


\bibitem[\protect\citeauthoryear{Brotherston}{Brotherston}{2007}]%
        {DBLP:conf/sas/Brotherston07}
\bibfield{author}{\bibinfo{person}{James Brotherston}.}
  \bibinfo{year}{2007}\natexlab{}.
\newblock \showarticletitle{Formalised Inductive Reasoning in the Logic of
  Bunched Implications}. In \bibinfo{booktitle}{\emph{SAS}}.
  \bibinfo{pages}{87--103}.
\newblock


\bibitem[\protect\citeauthoryear{Carbin, Misailovic, and Rinard}{Carbin
  et~al\mbox{.}}{2016}]%
        {DBLP:journals/cacm/CarbinMR16}
\bibfield{author}{\bibinfo{person}{Michael Carbin}, \bibinfo{person}{Sasa
  Misailovic}, {and} \bibinfo{person}{Martin~C. Rinard}.}
  \bibinfo{year}{2016}\natexlab{}.
\newblock \showarticletitle{Verifying quantitative reliability for programs
  that execute on unreliable hardware}.
\newblock \bibinfo{journal}{\emph{Commun. {ACM}}} \bibinfo{volume}{59},
  \bibinfo{number}{8} (\bibinfo{year}{2016}), \bibinfo{pages}{83--91}.
\newblock
\urldef\tempurl%
\url{https://doi.org/10.1145/2958738}
\showDOI{\tempurl}


\bibitem[\protect\citeauthoryear{Chakarov and Sankaranarayanan}{Chakarov and
  Sankaranarayanan}{2013}]%
        {DBLP:conf/cav/ChakarovS13}
\bibfield{author}{\bibinfo{person}{Aleksandar Chakarov} {and}
  \bibinfo{person}{Sriram Sankaranarayanan}.} \bibinfo{year}{2013}\natexlab{}.
\newblock \showarticletitle{Probabilistic Program Analysis with Martingales}.
  In \bibinfo{booktitle}{\emph{CAV}} \emph{(\bibinfo{series}{LNCS})},
  Vol.~\bibinfo{volume}{8044}. \bibinfo{publisher}{Springer},
  \bibinfo{pages}{511--526}.
\newblock


\bibitem[\protect\citeauthoryear{Chang and Rival}{Chang and Rival}{2008}]%
        {DBLP:conf/popl/ChangR08}
\bibfield{author}{\bibinfo{person}{Bor{-}Yuh~Evan Chang} {and}
  \bibinfo{person}{Xavier Rival}.} \bibinfo{year}{2008}\natexlab{}.
\newblock \showarticletitle{Relational inductive shape analysis}. In
  \bibinfo{booktitle}{\emph{POPL}}. \bibinfo{pages}{247--260}.
\newblock


\bibitem[\protect\citeauthoryear{Chatterjee, Fu, Novotn{\'{y}}, and
  Hasheminezhad}{Chatterjee et~al\mbox{.}}{2016}]%
        {DBLP:conf/popl/ChatterjeeFNH16}
\bibfield{author}{\bibinfo{person}{Krishnendu Chatterjee},
  \bibinfo{person}{Hongfei Fu}, \bibinfo{person}{Petr Novotn{\'{y}}}, {and}
  \bibinfo{person}{Rouzbeh Hasheminezhad}.} \bibinfo{year}{2016}\natexlab{}.
\newblock \showarticletitle{Algorithmic Analysis of Qualitative and
  Quantitative Termination Problems for Affine Probabilistic Programs}. In
  \bibinfo{booktitle}{\emph{POPL}}. \bibinfo{publisher}{{ACM}},
  \bibinfo{pages}{327--342}.
\newblock


\bibitem[\protect\citeauthoryear{Chin, David, Nguyen, and Qin}{Chin
  et~al\mbox{.}}{2012}]%
        {DBLP:journals/scp/ChinDNQ12}
\bibfield{author}{\bibinfo{person}{Wei{-}Ngan Chin}, \bibinfo{person}{Cristina
  David}, \bibinfo{person}{Huu~Hai Nguyen}, {and} \bibinfo{person}{Shengchao
  Qin}.} \bibinfo{year}{2012}\natexlab{}.
\newblock \showarticletitle{Automated verification of shape, size and bag
  properties via user-defined predicates in separation logic}.
\newblock \bibinfo{journal}{\emph{Sci. Comput. Program.}} \bibinfo{volume}{77},
  \bibinfo{number}{9} (\bibinfo{year}{2012}), \bibinfo{pages}{1006--1036}.
\newblock


\bibitem[\protect\citeauthoryear{Cormen, Leiserson, Rivest, and Stein}{Cormen
  et~al\mbox{.}}{2009}]%
        {DBLP:books/daglib/0023376}
\bibfield{author}{\bibinfo{person}{Thomas~H. Cormen},
  \bibinfo{person}{Charles~E. Leiserson}, \bibinfo{person}{Ronald~L. Rivest},
  {and} \bibinfo{person}{Clifford Stein}.} \bibinfo{year}{2009}\natexlab{}.
\newblock \bibinfo{booktitle}{\emph{Introduction to Algorithms, 3rd Edition}}.
\newblock \bibinfo{publisher}{{MIT} Press}.
\newblock
\showISBNx{978-0-262-03384-8}
\urldef\tempurl%
\url{http://mitpress.mit.edu/books/introduction-algorithms}
\showURL{%
\tempurl}


\bibitem[\protect\citeauthoryear{Cousot and Cousot}{Cousot and Cousot}{1979}]%
        {cousot1979constructive}
\bibfield{author}{\bibinfo{person}{Patrick Cousot} {and}
  \bibinfo{person}{Radhia Cousot}.} \bibinfo{year}{1979}\natexlab{}.
\newblock \showarticletitle{Constructive versions of Tarski’s fixed point
  theorems}.
\newblock \bibinfo{journal}{\emph{Pacific J. Math.}} \bibinfo{volume}{82},
  \bibinfo{number}{1} (\bibinfo{year}{1979}), \bibinfo{pages}{43--57}.
\newblock


\bibitem[\protect\citeauthoryear{Dijkstra}{Dijkstra}{1976}]%
        {DBLP:books/ph/Dijkstra76}
\bibfield{author}{\bibinfo{person}{Edsger~Wybe Dijkstra}.}
  \bibinfo{year}{1976}\natexlab{}.
\newblock \bibinfo{booktitle}{\emph{A Discipline of Programming}}.
\newblock \bibinfo{publisher}{Prentice--Hall}.
\newblock


\bibitem[\protect\citeauthoryear{Freivalds}{Freivalds}{1977}]%
        {freivalds1977probabilistic}
\bibfield{author}{\bibinfo{person}{Rusins Freivalds}.}
  \bibinfo{year}{1977}\natexlab{}.
\newblock \showarticletitle{Probabilistic Machines Can Use Less Running Time}.
  In \bibinfo{booktitle}{\emph{IFIP {C}ongress}}, Vol.~\bibinfo{volume}{839}.
  \bibinfo{pages}{842}.
\newblock


\bibitem[\protect\citeauthoryear{Gretz, Katoen, and McIver}{Gretz
  et~al\mbox{.}}{2014}]%
        {DBLP:journals/pe/GretzKM14}
\bibfield{author}{\bibinfo{person}{Friedrich Gretz},
  \bibinfo{person}{Joost{-}Pieter Katoen}, {and} \bibinfo{person}{Annabelle
  McIver}.} \bibinfo{year}{2014}\natexlab{}.
\newblock \showarticletitle{Operational versus Weakest Pre-Expectation
  Semantics for the Probabilistic Guarded Command Language}.
\newblock \bibinfo{journal}{\emph{Performance Evaluation}}
  \bibinfo{volume}{73} (\bibinfo{year}{2014}), \bibinfo{pages}{110--132}.
\newblock


\bibitem[\protect\citeauthoryear{Henzinger}{Henzinger}{2013}]%
        {DBLP:journals/ife/Henzinger13}
\bibfield{author}{\bibinfo{person}{Thomas~A. Henzinger}.}
  \bibinfo{year}{2013}\natexlab{}.
\newblock \showarticletitle{Quantitative reactive modeling and verification}.
\newblock \bibinfo{journal}{\emph{Computer Science - R{\&}D}}
  \bibinfo{volume}{28}, \bibinfo{number}{4} (\bibinfo{year}{2013}),
  \bibinfo{pages}{331--344}.
\newblock


\bibitem[\protect\citeauthoryear{Hesselink}{Hesselink}{1993}]%
        {DBLP:journals/fac/Hesselink94}
\bibfield{author}{\bibinfo{person}{Wim~H. Hesselink}.}
  \bibinfo{year}{1993}\natexlab{}.
\newblock \showarticletitle{Proof Rules for Recursive Procedures}.
\newblock \bibinfo{journal}{\emph{Formal Asp. Comput.}} \bibinfo{volume}{5},
  \bibinfo{number}{6} (\bibinfo{year}{1993}), \bibinfo{pages}{554--570}.
\newblock


\bibitem[\protect\citeauthoryear{Hoare}{Hoare}{1962}]%
        {DBLP:journals/cj/Hoare62}
\bibfield{author}{\bibinfo{person}{Charles Antony~Richard Hoare}.}
  \bibinfo{year}{1962}\natexlab{}.
\newblock \showarticletitle{Quicksort}.
\newblock \bibinfo{journal}{\emph{Comput. J.}} \bibinfo{volume}{5},
  \bibinfo{number}{1} (\bibinfo{year}{1962}), \bibinfo{pages}{10--15}.
\newblock


\bibitem[\protect\citeauthoryear{Hoare}{Hoare}{1969}]%
        {DBLP:journals/cacm/Hoare69}
\bibfield{author}{\bibinfo{person}{Charles Antony~Richard Hoare}.}
  \bibinfo{year}{1969}\natexlab{}.
\newblock \showarticletitle{An Axiomatic Basis for Computer Programming}.
\newblock \bibinfo{journal}{\emph{Commun. ACM}} \bibinfo{volume}{12},
  \bibinfo{number}{10} (\bibinfo{year}{1969}), \bibinfo{pages}{576--580}.
\newblock


\bibitem[\protect\citeauthoryear{Ishtiaq and O'Hearn}{Ishtiaq and
  O'Hearn}{2001}]%
        {DBLP:conf/popl/IshtiaqO01}
\bibfield{author}{\bibinfo{person}{Samin~S. Ishtiaq} {and}
  \bibinfo{person}{Peter~W. O'Hearn}.} \bibinfo{year}{2001}\natexlab{}.
\newblock \showarticletitle{{BI} as an Assertion Language for Mutable Data
  Structures}. In \bibinfo{booktitle}{\emph{{POPL}}}. \bibinfo{pages}{14--26}.
\newblock


\bibitem[\protect\citeauthoryear{Jones}{Jones}{1990}]%
        {DBLP:phd/ethos/Jones90}
\bibfield{author}{\bibinfo{person}{Claire Jones}.}
  \bibinfo{year}{1990}\natexlab{}.
\newblock \emph{\bibinfo{title}{Probabilistic Non--Determinism}}.
\newblock \bibinfo{thesistype}{Ph.D. Dissertation}. \bibinfo{school}{University
  of Edinburgh, {UK}}.
\newblock


\bibitem[\protect\citeauthoryear{Kaminski, Katoen, Matheja, and
  Olmedo}{Kaminski et~al\mbox{.}}{2016}]%
        {DBLP:conf/esop/KaminskiKMO16}
\bibfield{author}{\bibinfo{person}{Benjamin~Lucien Kaminski},
  \bibinfo{person}{Joost{-}Pieter Katoen}, \bibinfo{person}{Christoph Matheja},
  {and} \bibinfo{person}{Federico Olmedo}.} \bibinfo{year}{2016}\natexlab{}.
\newblock \showarticletitle{Weakest Precondition Reasoning for Expected
  Run--Times of Probabilistic Programs}. In \bibinfo{booktitle}{\emph{ESOP}}
  \emph{(\bibinfo{series}{LNCS})}, Vol.~\bibinfo{volume}{9632}.
  \bibinfo{publisher}{Springer}, \bibinfo{pages}{364--389}.
\newblock


\bibitem[\protect\citeauthoryear{Knuth}{Knuth}{1992}]%
        {knuth1992two}
\bibfield{author}{\bibinfo{person}{Donald~Ervin Knuth}.}
  \bibinfo{year}{1992}\natexlab{}.
\newblock \showarticletitle{Two Notes on Notation}.
\newblock \bibinfo{journal}{\emph{The American Mathematical Monthly}}
  \bibinfo{volume}{99}, \bibinfo{number}{5} (\bibinfo{year}{1992}),
  \bibinfo{pages}{403--422}.
\newblock


\bibitem[\protect\citeauthoryear{Kozen}{Kozen}{1979}]%
        {DBLP:conf/focs/Kozen79}
\bibfield{author}{\bibinfo{person}{Dexter Kozen}.}
  \bibinfo{year}{1979}\natexlab{}.
\newblock \showarticletitle{Semantics of Probabilistic Programs}. In
  \bibinfo{booktitle}{\emph{FOCS}}. \bibinfo{pages}{101--114}.
\newblock


\bibitem[\protect\citeauthoryear{Kozen}{Kozen}{1983}]%
        {DBLP:conf/stoc/Kozen83}
\bibfield{author}{\bibinfo{person}{Dexter Kozen}.}
  \bibinfo{year}{1983}\natexlab{}.
\newblock \showarticletitle{A Probabilistic {PDL}}. In
  \bibinfo{booktitle}{\emph{STOC}}. \bibinfo{pages}{291--297}.
\newblock


\bibitem[\protect\citeauthoryear{Krebbers, Timany, and Birkedal}{Krebbers
  et~al\mbox{.}}{2017}]%
        {DBLP:conf/popl/KrebbersTB17}
\bibfield{author}{\bibinfo{person}{Robbert Krebbers}, \bibinfo{person}{Amin
  Timany}, {and} \bibinfo{person}{Lars Birkedal}.}
  \bibinfo{year}{2017}\natexlab{}.
\newblock \showarticletitle{Interactive proofs in higher-order concurrent
  separation logic}. In \bibinfo{booktitle}{\emph{{POPL}}}.
  \bibinfo{pages}{205--217}.
\newblock


\bibitem[\protect\citeauthoryear{Magill, Nanevski, Clarke, and Lee}{Magill
  et~al\mbox{.}}{2006}]%
        {magill2006inferring}
\bibfield{author}{\bibinfo{person}{Stephen Magill}, \bibinfo{person}{Aleksandar
  Nanevski}, \bibinfo{person}{Edmund Clarke}, {and} \bibinfo{person}{Peter
  Lee}.} \bibinfo{year}{2006}\natexlab{}.
\newblock \showarticletitle{Inferring invariants in separation logic for
  imperative list-processing programs}.
\newblock \bibinfo{journal}{\emph{SPACE}} \bibinfo{volume}{1},
  \bibinfo{number}{1} (\bibinfo{year}{2006}), \bibinfo{pages}{5--7}.
\newblock


\bibitem[\protect\citeauthoryear{Mart{\'{\i}}nez and Roura}{Mart{\'{\i}}nez and
  Roura}{1998}]%
        {DBLP:journals/jacm/MartinezR98}
\bibfield{author}{\bibinfo{person}{Conrado Mart{\'{\i}}nez} {and}
  \bibinfo{person}{Salvador Roura}.} \bibinfo{year}{1998}\natexlab{}.
\newblock \showarticletitle{Randomized Binary Search Trees}.
\newblock \bibinfo{journal}{\emph{J. {ACM}}} \bibinfo{volume}{45},
  \bibinfo{number}{2} (\bibinfo{year}{1998}), \bibinfo{pages}{288--323}.
\newblock


\bibitem[\protect\citeauthoryear{McIver and Morgan}{McIver and Morgan}{2005}]%
        {DBLP:series/mcs/McIverM05}
\bibfield{author}{\bibinfo{person}{Annabelle McIver} {and}
  \bibinfo{person}{Carroll Morgan}.} \bibinfo{year}{2005}\natexlab{}.
\newblock \bibinfo{booktitle}{\emph{Abstraction, Refinement and Proof for
  Probabilistic Systems}}.
\newblock \bibinfo{publisher}{Springer}.
\newblock


\bibitem[\protect\citeauthoryear{McIver, Morgan, Kaminski, and Katoen}{McIver
  et~al\mbox{.}}{2018}]%
        {DBLP:journals/pacmpl/McIverMKK18}
\bibfield{author}{\bibinfo{person}{Annabelle McIver}, \bibinfo{person}{Carroll
  Morgan}, \bibinfo{person}{Benjamin~Lucien Kaminski}, {and}
  \bibinfo{person}{Joost{-}Pieter Katoen}.} \bibinfo{year}{2018}\natexlab{}.
\newblock \showarticletitle{A new proof rule for almost-sure termination}.
\newblock \bibinfo{journal}{\emph{{PACMPL}}} \bibinfo{volume}{2},
  \bibinfo{number}{{POPL}} (\bibinfo{year}{2018}),
  \bibinfo{pages}{33:1--33:28}.
\newblock
\urldef\tempurl%
\url{https://doi.org/10.1145/3158121}
\showDOI{\tempurl}


\bibitem[\protect\citeauthoryear{Morgan, McIver, and Seidel}{Morgan
  et~al\mbox{.}}{1996}]%
        {DBLP:journals/toplas/MorganMS96}
\bibfield{author}{\bibinfo{person}{Carroll Morgan}, \bibinfo{person}{Annabelle
  McIver}, {and} \bibinfo{person}{Karen Seidel}.}
  \bibinfo{year}{1996}\natexlab{}.
\newblock \showarticletitle{Probabilistic Predicate Transformers}.
\newblock \bibinfo{journal}{\emph{Trans.\ on Programming Languages and
  Systems}} \bibinfo{volume}{18}, \bibinfo{number}{3} (\bibinfo{year}{1996}),
  \bibinfo{pages}{325--353}.
\newblock


\bibitem[\protect\citeauthoryear{Ngo, Carbonneaux, and Hoffmann}{Ngo
  et~al\mbox{.}}{2018}]%
        {DBLP:conf/pldi/NgoC018}
\bibfield{author}{\bibinfo{person}{Van~Chan Ngo}, \bibinfo{person}{Quentin
  Carbonneaux}, {and} \bibinfo{person}{Jan Hoffmann}.}
  \bibinfo{year}{2018}\natexlab{}.
\newblock \showarticletitle{Bounded expectations: resource analysis for
  probabilistic programs}. In \bibinfo{booktitle}{\emph{PLDI}}.
  \bibinfo{pages}{496--512}.
\newblock


\bibitem[\protect\citeauthoryear{O'Hearn}{O'Hearn}{2012}]%
        {DBLP:series/natosec/OHearn12}
\bibfield{author}{\bibinfo{person}{Peter~W. O'Hearn}.}
  \bibinfo{year}{2012}\natexlab{}.
\newblock \showarticletitle{A Primer on Separation Logic (and Automatic Program
  Verification and Analysis)}.
\newblock In \bibinfo{booktitle}{\emph{Software Safety and Security - Tools for
  Analysis and Verification}}. \bibinfo{pages}{286--318}.
\newblock


\bibitem[\protect\citeauthoryear{Olmedo, Kaminski, Katoen, and Matheja}{Olmedo
  et~al\mbox{.}}{2016}]%
        {DBLP:conf/lics/OlmedoKKM16}
\bibfield{author}{\bibinfo{person}{Federico Olmedo},
  \bibinfo{person}{Benjamin~Lucien Kaminski}, \bibinfo{person}{Joost{-}Pieter
  Katoen}, {and} \bibinfo{person}{Christoph Matheja}.}
  \bibinfo{year}{2016}\natexlab{}.
\newblock \showarticletitle{Reasoning about Recursive Probabilistic Programs}.
  In \bibinfo{booktitle}{\emph{{LICS}}}. \bibinfo{pages}{672--681}.
\newblock


\bibitem[\protect\citeauthoryear{Pugh}{Pugh}{1990}]%
        {DBLP:journals/cacm/Pugh90}
\bibfield{author}{\bibinfo{person}{William Pugh}.}
  \bibinfo{year}{1990}\natexlab{}.
\newblock \showarticletitle{Skip Lists: {A} Probabilistic Alternative to
  Balanced Trees}.
\newblock \bibinfo{journal}{\emph{Commun. {ACM}}} \bibinfo{volume}{33},
  \bibinfo{number}{6} (\bibinfo{year}{1990}), \bibinfo{pages}{668--676}.
\newblock


\bibitem[\protect\citeauthoryear{Puterman}{Puterman}{2005}]%
        {puterman2005markov}
\bibfield{author}{\bibinfo{person}{Martin~Lee Puterman}.}
  \bibinfo{year}{2005}\natexlab{}.
\newblock \bibinfo{booktitle}{\emph{Markov Decision Processes: Discrete
  Stochastic Dynamic Programming}}.
\newblock \bibinfo{publisher}{John Wiley \& Sons}.
\newblock


\bibitem[\protect\citeauthoryear{Reynolds}{Reynolds}{2002}]%
        {DBLP:conf/lics/Reynolds02}
\bibfield{author}{\bibinfo{person}{John~Charles Reynolds}.}
  \bibinfo{year}{2002}\natexlab{}.
\newblock \showarticletitle{Separation Logic: {A} Logic for Shared Mutable Data
  Structures}. In \bibinfo{booktitle}{\emph{LICS}}. \bibinfo{publisher}{{IEEE}
  Computer Society}, \bibinfo{pages}{55--74}.
\newblock


\bibitem[\protect\citeauthoryear{Schechter}{Schechter}{1996}]%
        {schechter:1996}
\bibfield{author}{\bibinfo{person}{E. Schechter}.}
  \bibinfo{year}{1996}\natexlab{}.
\newblock \bibinfo{booktitle}{\emph{Handbook of Analysis and Its Foundations}}.
\newblock \bibinfo{publisher}{Elsevier Science}.
\newblock
\showISBNx{9780080532998}


\bibitem[\protect\citeauthoryear{Scott}{Scott}{2008}]%
        {scott2008algebraic}
\bibfield{author}{\bibinfo{person}{Dana Scott}.}
  \bibinfo{year}{2008}\natexlab{}.
\newblock \showarticletitle{The Algebraic Intepretation of Quantifiers.
  Intuitionistic and Classical}.
\newblock \bibinfo{journal}{\emph{Andrzej Mostowski and Foundational Studies}}
  (\bibinfo{year}{2008}), \bibinfo{pages}{289--312}.
\newblock


\bibitem[\protect\citeauthoryear{Tassarotti and Harper}{Tassarotti and
  Harper}{2018}]%
        {tassarotti2017separation}
\bibfield{author}{\bibinfo{person}{Joseph Tassarotti} {and}
  \bibinfo{person}{Robert Harper}.} \bibinfo{year}{2018}\natexlab{}.
\newblock \showarticletitle{A Separation Logic for Concurrent Randomized
  Programs}.
\newblock \bibinfo{journal}{\emph{CoRR}}  \bibinfo{volume}{abs/1802.02951}
  (\bibinfo{year}{2018}).
\newblock
\showeprint[arxiv]{1802.02951}
\urldef\tempurl%
\url{http://arxiv.org/abs/1802.02951}
\showURL{%
\tempurl}


\bibitem[\protect\citeauthoryear{Yang and O'Hearn}{Yang and O'Hearn}{2002}]%
        {DBLP:conf/fossacs/YangO02}
\bibfield{author}{\bibinfo{person}{Hongseok Yang} {and}
  \bibinfo{person}{Peter~W. O'Hearn}.} \bibinfo{year}{2002}\natexlab{}.
\newblock \showarticletitle{A Semantic Basis for Local Reasoning}. In
  \bibinfo{booktitle}{\emph{{FOSSACS}}}. \bibinfo{pages}{402--416}.
\newblock


\end{thebibliography}

\appendix

\newpage
\section*{Appendix}
\label{app}
The appendix contains all missing proofs ordered by occurrence of the respective theorem in the main part of the paper.
More precisely,
\begin{itemize}
  \item Appendix~\ref{app:sec:qsl} contains all proofs regarding \QSL as a logical language, 
  \item Appendix~\ref{app:sec:wp} contains all proofs regarding weakest preexpectations with \QSL, 
  \item Appendix~\ref{app:sec:extensions} is concerned with extensions of our programming language \hpgcl by recursion and more general probabilistic assignments.
  \item Appendix~\ref{app:sec:case-studies} contains all proofs regarding the case studies in Section~\ref{sec:case-studies}.
  \item Appendix~\ref{app:sec:misc} contains additional simple inference rules for computing with expectations in \QSL.
\end{itemize}

\renewcommand{\contentsname}{Appendix}
\tableofcontents
\addtocontents{toc}{\protect\setcounter{tocdepth}{2}}

\newpage
\section{Appendix to Section~\ref{sec:qsl} (Quantitative Separation Logic)}
\label{app:sec:qsl}
%

\subsection{Backward Compatibility of Separating Conjunction}
\label{app:back-comp-sep-con}

\begin{theorem}\label{thm:back-comp-sep-con}
    For \SL predicates $\preda,\predb$, we have
    \begin{enumerate}
        \item $\left(\iverson{\preda} \sepcon \iverson{\predb}\right)(\sk,\hh) \in \{0,1\}$, and
        \item $\left( \iverson{\preda} \sepcon \iverson{\predb} \right)(\sk,\hh) \eeq 1$ holds in \QSL if and only if $(\sk,\hh) \models \preda \sepcon \predb$ holds in \SL.
    \end{enumerate}
\end{theorem}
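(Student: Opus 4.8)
The plan is to unfold the definition of the quantitative separating conjunction and compare it directly with the classical semantics of $\sepcon$ on predicates. First I would recall that by definition,
\begin{align*}
    \left(\iverson{\preda} \sepcon \iverson{\predb}\right)(\sk,\hh) \eeq \max_{\hh_1,\hh_2}\setcomp{\iverson{\preda}(\sk,\hh_1)\cdot\iverson{\predb}(\sk,\hh_2)}{\hh = \hh_1\sepcon\hh_2}~,
\end{align*}
where the maximum ranges over the finitely many partitions of $\hh$. Since each factor $\iverson{\preda}(\sk,\hh_1)$ and $\iverson{\predb}(\sk,\hh_2)$ lies in $\{0,1\}$, every term in the set lies in $\{0,1\}$, and hence so does the maximum of this finite, non-empty set (it is non-empty because $\hh = \emptyheap \sepcon \hh$ is always a valid partition). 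This establishes part~(1).

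For part~(2), I would argue by a chain of equivalences. The maximum equals $1$ if and only if at least one term equals $1$, i.e.\ if and only if there exist heaps $\hh_1,\hh_2$ with $\hh = \hh_1 \sepcon \hh_2$ and $\iverson{\preda}(\sk,\hh_1)\cdot\iverson{\predb}(\sk,\hh_2) = 1$. Since both factors are in $\{0,1\}$, the product is $1$ exactly when both factors are $1$, i.e.\ when $\iverson{\preda}(\sk,\hh_1) = 1$ and $\iverson{\predb}(\sk,\hh_2) = 1$. By the definition of the Iverson bracket, these hold iff $(\sk,\hh_1)\models\preda$ and $(\sk,\hh_2)\models\predb$ respectively. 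Thus the condition becomes: there exist $\hh_1,\hh_2$ with $\hh = \hh_1\sepcon\hh_2$, $(\sk,\hh_1)\models\preda$ and $(\sk,\hh_2)\models\predb$ --- which is precisely the classical definition of $(\sk,\hh)\models\preda\sepcon\predb$ in \SL.

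There is essentially no obstacle here; the only point requiring a moment of care is that the "$\max$" is well-defined, which follows because for a fixed finite heap $\hh$ there are only finitely many ways to write $\hh = \hh_1\sepcon\hh_2$ (each corresponds to a subset of $\dom{\hh}$). One should also note that the disjoint-union partition in the \QSL definition and the one in the \SL definition are literally the same notion, so no reconciliation of the decomposition operators is needed. The proof is therefore just a careful rewriting of definitions, which is why I would present it compactly rather than belaboring the routine steps.
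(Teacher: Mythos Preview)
Your proof is correct and follows essentially the same approach as the paper: both unfold the definition of the quantitative $\sepcon$, observe that products of $\{0,1\}$-valued Iverson brackets lie in $\{0,1\}$ (with the set of partitions non-empty and finite), and then run the obvious chain of equivalences between ``the maximum equals $1$'' and the classical satisfaction relation for $\preda \sepcon \predb$.
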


\begin{proof}\label{proof:thm:back-comp-sep-con}
	For the first claim, consider the following:
	\begin{align}
		\bigl(\iverson{\preda} \sepcon \iverson{\predb}\bigr) (\sk, \hh) \eeq \underbrace{\max_{\hh_1, \hh_2} \underbrace{\bigl\{\underbrace{\underbrace{\iverson{\preda}(\sk, \hh_1)}_{{} \in \{0,\, 1\}} \cdot \underbrace{\iverson{\predb}(\sk, \hh_2)}_{{} \in \{0,\, 1\}}}_{{} \in \{0,\, 1\}} ~\big|~ \hh = \hh_1 \sepcon \hh_2 \bigr\}}_{{} \in \pot{\{0,\, 1\}} \setminus \emptyset}}_{{} \in \{0,\, 1\}} \iin \{0,\, 1\}~.
	\end{align}
    For the second claim, assume for all stack-heap pairs $(\sk,\hh)$ that 
    $\iverson{\preda}(\sk,\hh) = 1$ iff $(\sk,\hh) \models \preda$ 
    and
    $\iverson{\predb}(\sk,\hh) = 1$ iff $(\sk,\hh) \models \predb$.
    Then 
    \begin{align}
        & (\sk,\hh) \models \preda \sepcon \predb \\
        \leftrighttag{Definition of $\sepcon$ in \SL}
        & \exists \hh_1,\hh_2 \colon \hh = \hh_1 \sepcon \hh_2  \qand (\sk,\hh_1) \mmodels \preda \qand (\sk,\hh_2) \mmodels \predb \\
        \leftrighttag{assumption}
        & \exists \hh_1,\hh_2 \colon \hh = \hh_1 \sepcon \hh_2  \qand \iverson{\preda}(\sk,\hh_1) = 1 \qand \iverson{\predb}(\sk,\hh_2) = 1 \\
        \leftrighttag{$\preda,\predb$ are predicates}
        & \max_{\hh_1,\hh_2} \setcomp{\iverson{\preda}(\sk,\hh_1) \cdot \iverson{\predb}(\sk,\hh_2)}{\hh = \hh_1 \sepcon \hh_2} \eeq 1  \\
        \leftrighttag{Definition of $\sepcon$ in \QSL} 
        & \left(\iverson{\preda} \sepcon \iverson{\predb}\right)(\sk,\hh) \eeq 1. 
    \end{align}
\end{proof}

\subsection{Conservativity of \QSL as an assertion language}\label{app:qsl:conservativity:language}

\begin{proof}[Proof of Theorem~\ref{thm:qsl:conservativity:language}.\ref{thm:qsl:conservativity:language:0-1}]
Our goal is to show that for all classical separation logic formulas $\preda \in \SL$ and all states $(\sk,\hh) \in \States$, we have
\begin{align}
    \qslemb{\preda}(\sk,\hh) \in \{0,1\}~.
\end{align}
    By induction on the structure of the syntax of formulas in \SL.

    \emph{The case of atomic formulas $\preda \in \SL$:}
    By Definition of the embedding of \SL into \QSL, we have $\qslemb{\preda} = \iverson{\preda} \in \{0,1\}$.

%
%
%

    \emph{The case $\exists x\colon \preda$:}
    \begin{align}
            & \qslemb{\exists x\colon \preda} \\ 
            \eeqtag{applying embedding of \SL into \QSL}
            & \sup_{v \in \Ints}\, \underbrace{\qslemb{\preda}}_{~\in~ \{0,1\}~\text{by I.H.}}\subst{x}{v} \in \{0,1\}~.
    \end{align}

    \emph{The case $\preda_1 \wedge \preda_2$:}
    \begin{align}
            & \qslemb{\preda_1 \wedge \preda_2} \\ 
            \eeqtag{applying embedding of \SL into \QSL}
            & \underbrace{\qslemb{\preda_1}}_{\in \{0,1\}~\text{by I.H.}} ~\cdot~ \underbrace{\qslemb{\preda_2}}_{\in \{0,1\}~\text{by I.H.}} \in \{0,1\}~.
    \end{align}

    \emph{The case $\neg \preda$:}
    \begin{align}
            & \qslemb{\neg \preda} \\ 
            \eeqtag{applying embedding of \SL into \QSL}
            & 1 - \underbrace{\qslemb{\preda}}_{\in \{0,1\}~\text{by I.H.}} \in \{0,1\}~.
    \end{align}
    
    \emph{The case $\preda_1 \sepcon \preda_2$:}
    This is immediate by Theorem~\ref{thm:back-comp-sep-con}, p.~\pageref{app:back-comp-sep-con}.
    
    \emph{The case $\preda_1 \sepimp \preda_2$:}
        \begin{align}
                & \left(\iverson{\preda} \sepimp \iverson{\predb}\right)(\sk,\hh) \\
                \eeqtag{Definition of $\sepimp$}
                & \inf_{\hh'} \left\{ \iverson{\predb}(\sk,\hh \sepcon \hh') ~\middle|~ \hh \disjoint \hh', \sk,\hh' \models \preda \right\} \\
                ~\in~ & \itag{$\iverson{\predb}$ is a predicate and the domain is restricted to $\Eone$, i.e. $\inf \emptyset = 1$} \\
                & \{0,1\}.
        \end{align}
\end{proof}

\begin{proof}[Proof of Theorem~\ref{thm:qsl:conservativity:language}.\ref{thm:qsl:conservativity:language:equivalence}]
    Our goal is to show that for all classical separation logic formulas $\preda \in \SL$ and all states $(\sk,\hh) \in \States$, we have
    \begin{align}
        (\sk,\hh) \models \varphi \quad\text{iff}\quad \qslemb{\preda}(\sk,\hh) \eeq 1~.
    \end{align}
    By induction on the structure of the syntax of formulas in \SL.
    
    \emph{The case of atomic formulas $\preda \in \SL$:}
    By Definition of the embedding of \SL into \QSL, we have $\qslemb{\preda} = \iverson{\preda}$.
    Then $(\sk,\hh) \models \preda$ holds if and only if $\iverson{\preda}(\sk,\hh) = 1$, because
    \begin{align}
        \iverson{\preda}(\sk,\hh) \eeq \begin{cases} 1 & ~\text{if}~(\sk,\hh) \models \preda \\ 0 & ~\text{otherwise.} \end{cases}
    \end{align}

    \emph{The case $\exists x\colon \preda$:}
    \begin{align}
            & (\sk,\hh) \models \exists x\colon \preda \\
            \leftrighttag{SL semantics} 
            & \exists x \in \Ints\colon (\sk\subst{x}{v},\hh) \models \preda \\
            \leftrighttag{I.H.} 
            & \exists v \in \Ints\colon \qslemb{\preda}\subst{x}{v}(\sk,\hh) \eeq 1 \\
            \leftrighttag{$\qslemb{\preda} \in \{0,1\}$} 
            & \sup_{v \in \Ints} \qslemb{\preda}\subst{x}{v}(\sk,\hh) \eeq 1 \\
            \leftrighttag{applying embedding of \SL into \QSL}
            & \qslemb{\exists x\colon \preda}(\sk,\hh) \eeq 1~.
    \end{align}

    \emph{The case $\preda_1 \wedge \preda_2$:}
    \begin{align}
            & (\sk,\hh) \models \preda_1 \wedge \preda_2 \\
            \leftrighttag{SL semantics} 
            & (\sk,\hh) \models \preda_1 ~\text{and}~ (\sk,\hh) \models \preda_2 \\
            \leftrighttag{I.H.} 
            & \qslemb{\preda_1}(\sk,\hh) \eeq 1 ~\text{and}~ \qslemb{\preda_1}(\sk,\hh) \eeq 1 \\
            \leftrighttag{$\qslemb{\preda_1},\qslemb{\preda_2} \in \{0,1\}$} 
            & \qslemb{\preda_1}(\sk,\hh) \cdot \qslemb{\preda_1}(\sk,\hh) \eeq 1 \\
            \leftrighttag{applying embedding of \SL into \QSL}
            & \qslemb{\preda_1 \wedge \preda_2}(\sk,\hh) \eeq 1~.
    \end{align}

    \emph{The case $\neg \preda$:}
    \begin{align}
            & (\sk,\hh) \models \neg \preda \\
            \leftrighttag{SL semantics}
            & (\sk,\hh) \not \models \preda \\
            \leftrighttag{I.H.}
            & \preda(\sk,\hh) \neq 1 \\
            \leftrighttag{Theorem~\ref{thm:qsl:conservativity:language}.\ref{thm:qsl:conservativity:language:0-1}}
            & \preda(\sk,\hh) \eeq 0 \\
            \leftrighttag{algebra}
            & 1 - \preda(\sk,\hh) \eeq 1 \\
            \leftrighttag{applying embedding of \SL into \QSL}
            & \qslemb{\neg \preda}(\sk,\hh) \eeq 1~.
    \end{align}

    \emph{The case $\preda_1 \sepcon \preda_2$:}
    This is immediate by Theorem~\ref{thm:back-comp-sep-con}, p.~\pageref{app:back-comp-sep-con}.

    \emph{The case $\preda_1 \sepimp \preda_2$:}
        \begin{align}
                & \left(\iverson{\preda} \sepimp \iverson{\predb}\right)(\sk,\hh) \eeq 1 \\
                \leftrighttag{Definition of $\sepimp$ in \QSL}
                & \inf_{\hh'} \left\{ \iverson{\predb}(\sk,\hh \sepcon \hh') ~\middle|~ \hh \disjoint \hh', \sk,\hh' \models \preda \right\} \eeq 1\\
                \leftrighttag{$\iverson{\predb}(\sk,\hh \sepcon \hh') = 1$ iff $\sk, \hh \sepcon \hh' \models \predb$}
                & \underbrace{\inf_{\hh'} \left\{ \iverson{\sk, \hh \sepcon \hh' \models \predb} ~\middle|~ \hh \disjoint \hh', \sk,\hh \models \preda \right\}}_{\eeq \ff} \eeq 1 \\
                \leftrighttag{ $\ff = 0$ iff exists $\hh'$ s.t. $\hh \disjoint \hh'$ and $\sk,\hh \models \preda$ and $\sk,\hh \sepcon \hh' \not\models \predb$ }
                & \neg \exists \hh' : \hh \disjoint \hh' ~\text{and}~ \sk,\hh\models \preda ~\text{and}~ \sk,\hh \sepcon \hh' \not\models \predb \\
                \leftrighttag{pushing negation inside}
                & \forall \hh' : \neg \hh \disjoint \hh' ~\text{or}~ \sk,\hh\not\models \preda ~\text{or}~ \sk,\hh \sepcon \hh' \models \predb \\
                \leftrighttag{ first-order logic}
                & \forall \hh' : (\hh \disjoint \hh' ~\text{and}~ \sk,\hh\models \preda) ~\text{implies}~ \sk,\hh \sepcon \hh' \models \predb \\
                \leftrighttag{Definition of $\sepimp$ in $\SL$}
                & (\sk,\hh) \models \preda \sepimp \predb.
        \end{align}

\end{proof}


\subsection{Proof of Theorem~\ref{thm:sep-con-monoid} (Monoid Properties)}
\label{app:sep-con-monoid}

\begin{proof}\label{proof:thm:sep-con-monoid}
    (1). For \emph{associativity}, consider the following:
	\begin{align}
		& \bigl( \ff \sepcon ( \fg \sepcon \fk ) \bigr) (\sk, \hh)\\
        \eeqtag{Definition of $\sepcon$}
		& \max_{\hh_1, \hh_2}~ \setcomp{\ff(\sk, \hh_1) \cdot \max_{\hh_{21}, \hh_{22}}~ \setcomp{\fg(\sk, \hh_{21}) \cdot \fk(\sk, \hh_{22})}{\hh_2 = \hh_{21} \sepcon \hh_{22}}}{\hh = \hh_1 \sepcon \hh_2}\\
        \eeqtag{algebra}
		& \max_{\hh_1, \hh_2, \hh_3}~ \setcomp{\ff(\sk, \hh_1) \cdot \fg(\sk, \hh_2) \cdot \fk(\sk, \hh_3)}{\hh = \hh_1 \sepcon \hh_2 \sepcon \hh_3}\\
        \eeqtag{algebra}
		& \max_{\hh_1, \hh_2}~ \setcomp{\max_{\hh_{11}, \hh_{12}}~ \setcomp{\ff(\sk, \hh_{11}) \cdot \fg(\sk, \hh_{12})}{\hh_1 = \hh_{11} \sepcon \hh_{12}} \cdot \fk(\sk, \hh_2)}{\hh = \hh_1 \sepcon \hh_2}\\
        \eeqtag{Definition of $\sepcon$}
		& \eeq \bigl( ( \ff \sepcon \fg ) \sepcon \fk \bigr) (\sk, \hh).
	\end{align}
    (2). For \emph{neutrality of $\boldsymbol{\emp}$}, recall that $\hh \sepcon \emptyheap = \hh$ and consider the following:
	\begin{align}
		& \bigl( \ff \sepcon \emp \bigr) (\sk, \hh) \\
        \eeqtag{Definition of $\sepcon$}
		& \max_{\hh_1, \hh_2}~ \setcomp{\ff(\sk, \hh_1) \cdot \emp(\sk, \hh_2)}{\hh = \hh_1 \sepcon \hh_2}\\
        \eeqtag{by $\hh = \hh \sepcon \emptyheap$ and $\emp(\sk, \hh_2) = 0$ if $\hh_2 \neq \emptyheap$}
		& \ff(\sk, \hh) \cdot \emp(\sk, \emptyheap) \\
        \eeqtag{by commutativity, see below}
        & \emp(\sk, \emptyheap) \cdot \ff(\sk,\hh) \\
        \eeqtag{by $\emp(\sk, \emptyheap) = 1$}
		& 1 \cdot \ff(\sk, \hh) \\
        \eeqtag{algebra}
		& \ff(\sk, \hh).
	\end{align}
    (3). For \emph{commutativity}, consider the following:
	\begin{align}
        & \bigl( \ff \sepcon \fg \bigr) (\sk, \hh) \\
        \eeqtag{Definition of $\sepcon$}
        & \max_{\hh_1, \hh_2}~ \setcomp{\ff(\sk, \hh_1) \cdot \fg(\sk, \hh_2)}{\hh = \hh_1 \sepcon \hh_2}\\
        \eeqtag{algebra}
		& \max_{\hh_2, \hh_1}~ \setcomp{\fg(\sk, \hh_2) \cdot \ff(\sk, \hh_1)}{\hh = \hh_2 \sepcon \hh_1}\\
        \eeqtag{Definition of $\sepcon$}
		& \bigl( \fg \sepcon \ff \bigr) (\sk, \hh).
	\end{align}
\end{proof}
%

%
\subsection{Proof of Theorem~\ref{thm:sep-con-distrib} (Laws for Separating Conjunction)}
\label{app:sep-con-distrib}
%
%

\begin{proof}[Proof of Theorem~\ref{thm:sep-con-distrib}.\ref{thm:sep-con-distrib:sepcon-over-max}]

    For \emph{distributivity of $\sepcon$ over $\max$}, consider the following:
	\begin{align}
		& \bigl( \ff \sepcon \Max{\fg}{\fh} \bigr) (\sk, \hh) \\
        \eeqtag{Definition of $\sepcon$}
		& \max_{\hh_1, \hh_2}~ \setcomp{\ff(\sk, \hh_1) \cdot \bigl( \Max{\fg}{\fh} \bigr) (\sk, \hh_2)}{\hh = \hh_1 \sepcon \hh_2}\\
        \eeqtag{Definition of $\max$}
		& \max_{\hh_1, \hh_2}~ \setcomp{\ff(\sk, \hh_1) \cdot \Max{\fg(\sk, \hh_2)}{\fh(\sk, \hh_2)}}{\hh = \hh_1 \sepcon \hh_2}\\
        \eeqtag{algebra, $\ff(\sk,\hh_1) \in \Reals$}
		& \max_{\hh_1, \hh_2}~ \setcomp{\Max{\ff(\sk, \hh_1) \cdot \fg(\sk, \hh_2)}{\ff(\sk, \hh_1) \cdot \fh(\sk, \hh_2)}}{\hh = \hh_1 \sepcon \hh_2}\\
        \eeqtag{algebra}
		& \Max{\max_{\hh_1, \hh_2}~ \setcomp{\ff(\sk, \hh_1) \cdot \fg(\sk, \hh_2)}{\hh = \hh_1 \sepcon \hh_2}}{\max_{\hh_1', \hh_2'}~ \setcomp{\ff(\sk, \hh_1') \cdot \fh(\sk, \hh_2')}{\hh = \hh_1' \sepcon \hh_2'}}\\
        \eeqtag{Definition of $\sepcon$}
		& \Max{ \bigl( \ff \sepcon \fg \bigr) (\sk, \hh) }{ \bigl( \ff \sepcon \fh \bigr) (\sk, \hh) }\\
        \eeqtag{algebra}
		& \bigl( \Max{\ff \sepcon \fg }{ \ff \sepcon \fh } \bigr) (\sk, \hh).
	\end{align}
\end{proof}

\begin{proof}[Proof of Theorem~\ref{thm:sep-con-distrib}.\ref{thm:sep-con-distrib:sepcon-over-plus}]
    For \emph{sub-distributivity of $\sepcon$ over $+$}, consider the following:
	\begin{align}
		& \bigl( \ff \sepcon (\fg + \fh) \bigr) (\sk, \hh) \\
        \eeqtag{Definition of $\sepcon$}
		& \max_{\hh_1, \hh_2}~ \setcomp{\ff(\sk, \hh_1) \cdot \bigl( \fg + \fh \bigr) (\sk, \hh_2)}{\hh = \hh_1 \sepcon \hh_2}\\
        \eeqtag{distributivity of $\cdot$ and $+$}
		& \max_{\hh_1, \hh_2}~ \setcomp{\ff(\sk, \hh_1) \cdot \fg(\sk, \hh_2) + \ff(\sk, \hh_1) \cdot \fh(\sk, \hh_2)}{\hh = \hh_1 \sepcon \hh_2}\\
        \lleqtag{triangle inequality}
		& \max_{\hh_1, \hh_2}~ \setcomp{\ff(\sk, \hh_1) \cdot \fg(\sk, \hh_2)}{\hh = \hh_1 \sepcon \hh_2} + \max_{\hh_1', \hh_2'}~ \setcomp{\ff(\sk, \hh_1') \cdot  \fh(\sk, \hh_2')}{\hh = \hh_1' \sepcon \hh_2'}\\
        \eeqtag{Definition of $\sepcon$}
		& \bigl( \ff \sepcon \fg \bigr) (\sk, \hh) + \bigl( \ff \sepcon \fh \bigr) (\sk, \hh)\\
        \eeqtag{algebra}
		& \bigl( \ff \sepcon \fg +  \ff \sepcon \fh \bigr) (\sk, \hh).
	\end{align}
\end{proof}
\begin{proof}[Proof of Theorem~\ref{thm:sep-con-distrib}.\ref{thm:sep-con-distrib:sepcon-over-times}]
    For \emph{restricted sub-distributivity of $\sepcon$ over $\cdot$}, consider the following:
	\begin{align}
        & \bigl( \iverson{\varphi} \sepcon (\fg \cdot \fh) \bigr) (\sk, \hh) \\
        \eeqtag{Definition of $\sepcon$}
		& \max_{\hh_1, \hh_2}~ \setcomp{\iverson{\varphi}(\sk, \hh_1) \cdot \bigl( \fg \cdot \fh \bigr) (\sk, \hh_2)}{\hh = \hh_1 \sepcon \hh_2}\\
        \eeqtag{Definition of $\cdot$}
		& \max_{\hh_1, \hh_2}~ \setcomp{\iverson{\varphi}(\sk, \hh_1) \cdot \fg(\sk, \hh_2) \cdot \fh(\sk, \hh_2)}{\hh = \hh_1 \sepcon \hh_2}\\
        \eeqtag{$\iverson{\varphi}(\sk,\hh_1) \in \{0,1\}$}
		& \max_{\hh_1, \hh_2}~ \setcomp{\iverson{\varphi}(\sk, \hh_1) \cdot \iverson{\varphi}(\sk, \hh_1) \cdot \fg(\sk, \hh_2) \cdot \fh(\sk, \hh_2)}{\hh = \hh_1 \sepcon \hh_2}\\
        \lleqtag{triangle inequality}
		& \max_{\hh_1, \hh_2}~ \setcomp{\iverson{\varphi}(\sk, \hh_1) \cdot \fg(\sk, \hh_2)}{\hh = \hh_1 \sepcon \hh_2} \cdot \max_{\hh_1', \hh_2'}~ \setcomp{\iverson{\varphi}(\sk, \hh_1') \cdot  \fh(\sk, \hh_2')}{\hh = \hh_1' \sepcon \hh_2'} \\
        \eeqtag{Definition of $\sepcon$}
		& \bigl( \iverson{\varphi} \sepcon \fg \bigr) (\sk, \hh) \cdot \bigl( \iverson{\varphi} \sepcon \fh \bigr) (\sk, \hh)\\
        \eeqtag{algebra}
		& \Bigl( \bigl( \iverson{\varphi} \sepcon \fg \bigr) \cdot  \bigl( \iverson{\varphi} \sepcon \fh \bigr) \Bigr) (\sk, \hh).
	\end{align}
\end{proof}
\begin{lemma}\label{thm:domain-exact:unique-partitioning}
    Let $\ff \in \E$ domain-exact and $(\sk,\hh)$ be a stack-heap pair. Moreover, let
    \begin{align*}
            \HeapPartitions{\ff}{\sk}{\hh} \eeq \{ (\hh_1,\hh_2) ~|~ \hh = \hh_1 \sepcon \hh_2 ~\text{and}~ \ff(\sk,\hh_1) > 0 \}.
    \end{align*}
    Then $|\HeapPartitions{\ff}{\sk}{\hh}| \leq 1$.
\end{lemma}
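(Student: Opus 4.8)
The statement to prove is that for a domain-exact expectation $\ff$ and a fixed stack-heap pair $(\sk,\hh)$, there is at most one partition $\hh = \hh_1 \sepcon \hh_2$ with $\ff(\sk,\hh_1) > 0$; that is, $|\HeapPartitions{\ff}{\sk}{\hh}| \leq 1$. The plan is to assume there are two such partitions, say $(\hh_1,\hh_2)$ and $(\hh_1',\hh_2')$, both in $\HeapPartitions{\ff}{\sk}{\hh}$, and show $\hh_1 = \hh_1'$ (which then forces $\hh_2 = \hh_2'$ since the disjoint union with $\hh_1$ recovers $\hh$).

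First I would invoke domain-exactness: since $\ff(\sk,\hh_1) > 0$ and $\ff(\sk,\hh_1') > 0$ for the \emph{same} stack $\sk$, the definition of domain-exactness gives $\dom{\hh_1} = \dom{\hh_1'}$. So the two candidate subheaps already agree on their domains; it remains to see they agree as functions. This is where I would use that both are subheaps of the common heap $\hh$: from $\hh = \hh_1 \sepcon \hh_2$ we get that $\hh_1$ is just the restriction of $\hh$ to $\dom{\hh_1}$, i.e.\ $\hh_1(n) = \hh(n)$ for all $n \in \dom{\hh_1}$ (this is immediate from the definition of disjoint union, which sets $(\hh_1 \sepcon \hh_2)(n) = \hh_1(n)$ on $\dom{\hh_1}$). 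Likewise $\hh_1'(n) = \hh(n)$ for all $n \in \dom{\hh_1'}$. Since $\dom{\hh_1} = \dom{\hh_1'}$, for every $n$ in this common domain we get $\hh_1(n) = \hh(n) = \hh_1'(n)$, hence $\hh_1 = \hh_1'$. Then $\hh_2 = \hh_2'$ follows because $\dom{\hh_2} = \dom{\hh} \setminus \dom{\hh_1} = \dom{\hh} \setminus \dom{\hh_1'} = \dom{\hh_2'}$ and, analogously, both equal the restriction of $\hh$ to that domain. Therefore the two partitions coincide, proving $|\HeapPartitions{\ff}{\sk}{\hh}| \leq 1$.

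I do not expect any serious obstacle here; the lemma is essentially bookkeeping about the heap model. The only point requiring a little care is making explicit the observation that a heap $\hh_1$ occurring in a disjoint decomposition $\hh = \hh_1 \sepcon \hh_2$ is uniquely determined by its domain (namely it is $\left.\hh\right\rceil_{\dom{\hh_1}}$), so that agreement of domains plus being subheaps of a common heap yields genuine equality of the heaps. One should also handle the edge case $\HeapPartitions{\ff}{\sk}{\hh} = \emptyset$ trivially (then the cardinality is $0 \leq 1$), and note that the argument is vacuous unless $\ff$ is positive on at least one subheap, which is exactly when domain-exactness has bite.
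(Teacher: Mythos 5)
Your proof is correct and follows essentially the same route as the paper's: domain-exactness forces the two candidate subheaps to have equal domains, being subheaps of the common heap $\hh$ forces them to equal the restriction of $\hh$ to that domain, and the second component of each partition is then uniquely determined. The only cosmetic difference is that you argue directly that two partitions coincide, whereas the paper phrases the same argument as a contradiction.
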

\begin{proof}
        By definition, $(\hh_1,\hh_2) \in \HeapPartitions{\ff}{\sk}{\hh}$ implies $\hh_1,\hh_2 \subseteq \hh$.
        Moreover, for a fixed heap $\hh_1$, the corresponding heap $\hh_2$ is uniquely determined by $\hh = \hh_1 \sepcon \hh_2$.
        We distinguish two cases.

        First, assume there exists a heap $\hh' \subseteq \hh$ such that $\ff(\sk,\hh') > 0$, i.e. $|\HeapPartitions{\ff}{\sk}{\hh}| \geq 1$.
        This heap corresponds to heap $\hh$ restricted to $\dom{\hh'}$.
        Now assume there exists another heap $\hh'' \neq \hh'$ with $\hh'' \subseteq \hh$ such that $\ff(\sk,\hh'') > 0$.
        Since $\ff$ is domain-exact, we have $\dom{\hh''} = \dom{\hh'}$. 
        Then the restriction of heap $\hh$ to domain $\dom{\hh''} = \dom{\hh'}$ yields the heap $\hh'$, which contradicts our assumption. 
        Hence, $\HeapPartitions{\ff}{\sk}{\hh} = 1$.

        Second , assume there exists \emph{no} heap $\hh' \subseteq \hh$ such that $\ff(\sk,\hh') > 0$.
        Then $\HeapPartitions{\ff}{\sk}{\hh} = \emptyset$ and thus $|\HeapPartitions{\ff}{\sk}{\hh}| = 0$.
\end{proof}
\begin{proof}[Proof of Theorem~\ref{thm:sep-con-distrib}.\ref{thm:sep-con-distrib:sepcon-over-plus-full}]
    For \emph{domain-restricted distributivity of $\sepcon$ over $+$}, consider the following:
    \begin{align}
            & \ff \sepcon (\fg + \fh) \\
            \eeqtag{Definition of $\sepcon$} 
            & \lambda(\sk,\hh)\mydot \max_{\hh_1, \hh_2} \setcomp{\ff(\sk,\hh_1) \cdot (\fg(\sk,\hh_2) + \fh(\sk,\hh_2))}{\hh = \hh_1 \sepcon \hh_2} \\
            \eeqtag{algebra} 
            & \lambda(\sk,\hh)\mydot \max_{\hh_1,\hh_2} \setcomp{\ff(\sk,\hh_1) \cdot \fg(\sk,\hh_2) + \ff(\sk,\hh_1) \cdot \fh(\sk,\hh_2)}{\hh = \hh_1 \sepcon \hh_2} \\
            \eeqtag{Lemma~\ref{thm:domain-exact:unique-partitioning}, maximum over singleton} 
            & \lambda(\sk,\hh)\mydot \max_{\hh_1, \hh_2} \setcomp{\ff(\sk,\hh_1) \cdot \fg(\sk,\hh_2)}{\hh = \hh_1 \sepcon \hh_2} \\
            & \qquad + \max_{\hh_1, \hh_2} \setcomp{\ff(\sk,\hh_1) \cdot \fh(\sk,\hh_2)}{\hh = \hh_1 \sepcon \hh_2} \notag \\
            \eeqtag{Definition of $\sepcon$} 
            & \ff \sepcon \fg + \ff \sepcon \fh.
    \end{align}
\end{proof}
\begin{proof}[Proof of Theorem~\ref{thm:sep-con-distrib}.\ref{thm:sep-con-distrib:sepcon-over-times-full}]
    For \emph{domain-restricted distributivity of $\sepcon$ over $\cdot$}, consider the following:
    \begin{align}
            & \iverson{\varphi} \sepcon (\fg \cdot \fh) \\
            \eeqtag{Definition of $\sepcon$}
            & \lambda(\sk,\hh)\mydot \max_{\hh_1,\hh_2} \setcomp{\iverson{\varphi}(\sk,\hh_1) \cdot (\fg \cdot \fh)(\sk,\hh_2)}{\hh = \hh_1 \sepcon \hh_2} \\
            \eeqtag{algebra}
            & \lambda(\sk,\hh)\mydot \max_{\hh_1,\hh_2} \setcomp{\iverson{\varphi}(\sk,\hh_1) \cdot \fg(\sk,\hh_2) \cdot \fh(\sk,\hh_2)}{\hh = \hh_1 \sepcon \hh_2} \\
            \eeqtag{$\iverson{\varphi}(\sk,\hh_1) \in \{0,1\}$}
            & \lambda(\sk,\hh)\mydot \max_{\hh_1,\hh_2} \setcomp{\iverson{\varphi}(\sk,\hh_1) \cdot \iverson{\varphi}(\sk,\hh_1) \cdot \fg(\sk,\hh_2) \cdot \fh(\sk,\hh_2)}{\hh = \hh_1 \sepcon \hh_2} \\
            \eeqtag{algebra}
            & \lambda(\sk,\hh)\mydot \max_{\hh_1,\hh_2} \setcomp{(\iverson{\varphi}(\sk,\hh_1) \cdot \fg(\sk,\hh_2)) \cdot (\iverson{\varphi}(\sk,\hh_1) \cdot \fh(\sk,\hh_2))}{\hh = \hh_1 \sepcon \hh_2} \\
            %
            \eeqtag{Lemma~\ref{thm:domain-exact:unique-partitioning}, maximum over singleton} 
            & \lambda(\sk,\hh)\mydot \left(\max_{\hh_1, \hh_2} \setcomp{\iverson{\varphi}(\sk,\hh_1) \cdot \fg(\sk,\hh_2)}{\hh = \hh_1 \sepcon \hh_2}\right) \\
            & \qquad \cdot \left(\max_{\hh_1, \hh_2} \setcomp{\iverson{\varphi}(\sk,\hh_1) \cdot \fh(\sk,\hh_2)}{\hh = \hh_1 \sepcon \hh_2}\right) \notag \\
            \eeqtag{Definition of $\sepcon$}
            & \big(\iverson{\varphi} \sepcon \fg \big) \cdot \big(\iverson{\varphi} \sepcon \fh \big).
    \end{align}
\end{proof}
	%


\subsection{Proof of Theorem~\ref{thm:sepcon-monotonic} (Monotonicity of Separating Conjunction)}
\label{app:sepcon-monotonic}

\begin{proof}
\label{proof:thm:sepcon-monotonic}
	Consider the following:
	\begin{align}
        & \bigl( \ff \sepcon \fg \bigr) (\sk, \hh) \\
        \eeqtag{Definition of $\sepcon$}
        & \max_{\hh_1,\hh_2} \setcomp{\ff(\sk, \hh_1) \cdot \fg(\sk, \hh_2)}{\hh = \hh_1 \sepcon \hh_2} \\
        \lleqtag{by $\ff \preceq \ff'$, $\fg \preceq \fg'$, and monotonicity of ${}\cdot{}$~}
        & \max_{\hh_1,\hh_2} \setcomp{\ff'(\sk, \hh_1) \cdot \fg'(\sk, \hh_2)}{\hh = \hh_1 \sepcon \hh_2} \\
        \eeqtag{Definition of $\sepcon$}
		& \bigl( \ff' \sepcon \fg' \bigr) (\sk, \hh).
	\end{align}
\end{proof}
%


\subsection{Proof of Theorem~\ref{thm:modus-ponens} (Modus Ponens)}
\label{app:modus-ponens}

\begin{proof}\label{proof:thm:modus-ponens}
	Consider the following:
	\begin{align}
		& \bigl( \iverson{\varphi} \sepcon \big( \iverson{\varphi} \sepimp \ff) \bigr) (\sk, \hh) \\
        \eeqtag{Definition of $\sepcon$ and $\sepimp$}
		& \max_{\hh_1, \hh_2}~\setcomp{ \iverson{\varphi}(\sk, \hh_1) ~{}\cdot{}~ \inf_{\hh_2'}~ \setcomp{ \ff(\sk, \hh_2 \sepcon \hh_2') }{ \hh_2' \disjoint \hh_2 \text{ and } (\sk, \hh_2') \models \varphi } }{ \hh = \hh_1 \sepcon \hh_2 }
	\end{align}
	If there exists no partition $\hh_1 \sepcon \hh_2 = \hh$ such that $(\sk, \hh_1) \models \varphi$, then the above becomes 0 and trivially $0 \leq \ff(\sk, \hh)$.
	Otherwise, fix a partition $\hh_1 \sepcon \hh_2 = \hh$ such that $(\sk, \hh_1) \models \varphi$ and the above becomes maximal.
	In that case we are left with
	\begin{align}
		\eeq & \inf_{\hh_2'}~ \setcomp{ \ff(\sk, \hh_2 \sepcon \hh_2') }{ \hh_2' \disjoint \hh_2 \text{ and } (\sk, \hh_2') \models \varphi }~,
	\end{align}
	which is always smaller or equal than $\ff(\sk, \hh)$, since we can choose $\hh_2' = \hh_1$ because $(\sk, \hh_1) \models \varphi$, $\hh_1 \disjoint \hh_2$ and $\ff(\sk, \hh_2 \sepcon \hh_1) = \ff(\sk, \hh) \leq \ff(\sk, \hh)$.
\end{proof}
%


\subsection{Proof of Theorem~\ref{thm:adjointness} (Adjointness)}
\label{app:adjointness}
\begin{proof}
\label{proof:thm:adjointness}
	We first show that 
	\begin{align}
            \ff \sepcon \iverson{\preda} \ppreceq \fg \qimplies  \ff \ppreceq  \iverson{\preda} \sepimp \fg. \label{eq:proof:thm:adjointness:implication}
	\end{align}
	Assume $\ff \sepcon \iverson{\preda} \ppreceq \fg$. By commutativity of $\sepcon$, we have $\iverson{\preda} \sepcon \ff \ppreceq \fg$.
	By definition of $\sepcon$, this means that for \emph{any} state $(\hat{\sk}, \hat{\hh})$ it holds that
	\begin{align}
		 \max_{\hat{\hh}_1, \hat{\hh}_2} \setcomp{\iverson{\preda}(\hat{\sk}, \hat{\hh}_1) \cdot \ff(\hat{\sk}, \hat{\hh}_2)}{\hat{\hh} = \hat{\hh}_1 \sepcon \hat{\hh}_2} \lleq \fg(\hat{\sk}, \hat{\hh}).
	\end{align}
	Then, for \emph{any} partition of the heap $\hat{\hh}$ into $\hat{\hh} = \hat{\hh}_1' \sepcon \hat{\hh}_2'$ with $(\hat{\sk}, \hat{\hh}_1') \models \preda$, we have
	\begin{align}
		 \ff(\hat{\sk}, \hat{\hh}_2') \lleq \fg(\hat{\sk}, \hat{\hh}) ~. \label{eq:proof:thm:adjointness:1}
	\end{align}
	Consider now a state $(\sk, \hh)$.
	There are two cases:
	First, there exists no heap $\hh'$ with $\hh \disjoint \hh'$ and $(\sk, \hh') \models \preda$.
	Then
	\begin{align}
        & \bigl( \iverson{\preda} \sepimp \fg \bigr) (\sk, \hh) \\
        \eeqtag{Definition of $\sepimp$}
        & \inf_{\hh'}~ \setcomp{\fg(\sk, \hh \sepcon \hh')}{\hh' \disjoint \hh \textnormal{ and } (\sk, \hh') \models \preda} \\
        \eeqtag{by assumption} 
		& \inf~ \emptyset \\
        \eeqtag{algebra} 
		& \infty \\
        \ggeqtag{algebra} 
		& \ff(\sk, \hh).
	\end{align}
	The second case is that there does exist a heap $\hh'$ with $\hh \disjoint \hh'$ and $(\sk, \hh') \models \preda$.
	Let $\hh'$ be \emph{any} such heap.
	Then $\sk$, $\hh$, $\hh'$, and $\hh \sepcon \hh'$ satisfy all preconditions of Equation \ref{eq:proof:thm:adjointness:1} (choose $\hat{\sk} = \sk$, $\hat{\hh} = \hh \sepcon \hh'$, $\hat{\hh}_1' = \hh'$, and $\hat{\hh}_2' = \hh$).
    We then obtain
	\begin{align}
		\ff(\sk, \hh) \lleq \fg(\sk, \hh \sepcon \hh').
	\end{align}
	In particular, since the above is true for any heap $\hh'$ that satisfies $\hh \disjoint \hh'$ and $(\sk, \hh') \models \preda$, we have
	\begin{align}
            & \ff(\sk, \hh) \\
            \lleqtag{see above} 
            & \inf_{\hh'}~ \setcomp{\fg(\sk, \hh \sepcon \hh')}{\hh' \disjoint \hh \textnormal{ and } (\sk, \hh') \models \preda} \\
            \eeqtag{Definition of $\sepimp$}
            & \bigl( \iverson{\preda} \sepimp \fg \bigr) (\sk, \hh).
	\end{align}
    This proves one direction of the claim (see equation~(\ref{eq:proof:thm:adjointness:implication})).
	
	We next show the other direction, namely that
	\begin{align}
            \ff \ppreceq  \iverson{\preda} \sepimp \fg \qimplies \ff \sepcon \iverson{\preda} \ppreceq \fg. \label{eq:proof:thm:adjointness:converse}
	\end{align}
	Assume $\ff \ppreceq  \iverson{\preda} \sepimp \fg$.
	By definition of $\sepimp$, for any state $(\hat{\sk}, \hat{\hh})$ it then holds that
	\begin{align}
		\ff(\hat{\sk}, \hat{\hh}) \lleq \inf_{\hh'}~ \setcomp{\fg(\hat{\sk}, \hat{\hh} \sepcon \hat{\hh}')}{\hat{\hh}' \disjoint \hat{\hh} \textnormal{ and } (\hat{\sk}, \hat{\hh}') \models \preda}.
	\end{align}
    In particular, for \emph{any} disjoint extension $\hat{\hh}'$ of the heap $\hat{\hh}$ into $\hat{\hh} \sepcon \hat{\hh}'$ with $(\hat{\sk}, \hat{\hh}') \models \preda$ we have
	\begin{align}
		 \ff(\hat{\sk}, \hat{\hh}) \lleq \fg(\hat{\sk}, \hat{\hh} \sepcon \hat{\hh}') ~. \label{eq:proof:thm:adjointness:2}
	\end{align}
	Consider now a state $(\sk, \hh)$.
	There are two cases:
	First, there exists no partition of $\hh$ into $\hh = \hh_1 \sepcon \hh_2$ such that $(\sk, \hh_1) \models \preda$.
	Then
	\begin{align}
        & \bigl( \ff \sepcon \iverson{\preda} \bigr) (\sk, \hh) \\
        \eeqtag{Commutativity of $\sepcon$}
        & \bigl( \iverson{\preda} \sepcon \ff \bigr) (\sk, \hh) \\
        \eeqtag{Definition of $\sepcon$}
		& \max_{\hh_1, \hh_2} \setcomp{\iverson{\preda}(\sk, \hh_1) \cdot \ff(\sk, \hh_2)}{\hh = \hh_1 \sepcon \hh_2} \\
        \eeqtag{by assumption any partition leads to $\iverson{\preda}(\sk,\hh_1) = 0$}
        & \max \{ 0 \} \\
        \eeqtag{algebra}
		& 0 \\
        \lleqtag{algebra}
		& \fg(\sk, \hh).
	\end{align}
	The second case is that there does exist a partitioning of $\hh$ into $\hh = \hh_1 \sepcon \hh_2$ such that $(\sk, \hh_1) \models \preda$.
	Let $\hh_1 \sepcon \hh_2$ be \emph{any} such partitioning.
	Then $\sk$, $\hh_1$, and $\hh_2$ satisfy all preconditions of Equation \ref{eq:proof:thm:adjointness:2} (choose $\hat{\sk} = \sk$, $\hat{\hh} = \hh_2$, and $\hat{\hh}' = \hh_1$). 
    Then
	\begin{align}
		& \ff(\sk, \hh_2) \lleq \fg(\sk, \hh_2 \sepcon \hh_1) \\
        \leftrighttag{since $(\sk, \hh_1) \models \preda$ and $\hh_2 \sepcon \hh_1 =\hh$}
		& \iverson{\preda}(\sk, \hh_1) \cdot \ff(\sk, \hh_2) \lleq \fg(\sk, \hh).
	\end{align}
	Consequently, for any partitioning $\hh = \hh_1 \sepcon \hh_2$ that satisfies $(\sk, \hh_1) \models \preda$, we get 
	\begin{align}
            & \fg(\sk, \hh) \\
            \ggeqtag{see above for \emph{any} partitioning $\hh_1 \sepcon \hh_2 = \hh$}
            & \max_{\hh_1, \hh_2} \setcomp{\iverson{\preda}(\sk, \hh_1) \cdot \ff(\sk, \hh_2)}{\hh = \hh_1 \sepcon \hh_2} \\
            \eeqtag{Definition of $\sepcon$}
            & \bigl( \ff \sepcon \iverson{\preda} \bigr) (\sk, \hh).
	\end{align}
	This proves the second implication.
\end{proof}

%
%
\subsection{Proof of Theorem~\ref{thm:sep-con-algebra-pure} (Laws for Pure Expectations)}
\label{app:sep-con-algebra-pure}

\begin{proof}

(1). Let $\ff,\fg \in \E$, $\ff$ pure. Then
	\begin{align}
        & \bigl( \ff \cdot \fg \bigr) (\sk, \hh) \\
        \eeqtag{Definition of $\cdot$~}
        & \ff(\sk, \hh) \cdot \fg(\sk, \hh)\\
        \lleqtag{$\hh_2$ can be chosen as $\hh$}
		& \ff(\sk, \hh) \cdot \max_{\hh_1, \hh_2}~ \setcomp{\fg(\sk, \hh_2)}{\hh = \hh_1 \sepcon \hh_2}\\
        \eeqtag{$\ff(\sk,\hh) \in \Reals$ is a constant}
		& \max_{\hh_1, \hh_2}~ \setcomp{\ff(\sk, \hh) \cdot \fg(\sk, \hh_2)}{\hh = \hh_1 \sepcon \hh_2}\\
        \eeqtag{$\ff$ is pure}
		& \max_{\hh_1, \hh_2}~ \setcomp{\ff(\sk, \hh_1) \cdot \fg(\sk, \hh_2)}{\hh = \hh_1 \sepcon \hh_2} \\
        \eeqtag{Definition of $\sepcon$}
		& \bigl( \ff \sepcon \fg \bigr)(\sk, \hh).
	\end{align}
(2). Let $\ff,\fg \in \E$ and both $\ff$ and $\fg$ be pure. Then
\begin{align}
        & \bigl( \ff \sepcon \fg \bigr) (\sk, \hh) \\
        \eeqtag{Definition of $\sepcon$}
        & \max_{\hh_1, \hh_2}~ \setcomp{\ff(\sk, \hh_1) \cdot \fg(\sk, \hh_2)}{\hh = \hh_1 \sepcon \hh_2}\\
        \eeqtag{$\ff,\fg$ are pure}
		& \max_{\hh_1, \hh_2}~ \setcomp{\ff(\sk, \hh) \cdot \fg(\sk, \hh)}{\hh = \hh_1 \sepcon \hh_2} \\
        \eeqtag{algebra}
		& \ff(\sk, \hh) \cdot \fg(\sk, \hh) \\
        \eeqtag{Definition of $\cdot$~}
		& \bigl( \ff \cdot \fg \bigr) (\sk, \hh).
\end{align}
(3). Let $\ff$ be pure. Then
	\begin{align}
        & \bigl( ( \ff \cdot \fg ) \sepcon \fh \bigr) (\sk, \hh) \\
        \eeqtag{Definition of $\sepcon$}
        & \max_{\hh_1, \hh_2}~ \setcomp{\bigl( \ff \cdot \fg \bigr)(\sk,\, \hh_1) \cdot \fh(\sk,\, \hh_2) }{\hh = \hh_1 \sepcon \hh_2}\\
        \eeqtag{Definition of $\cdot$~}
		& \max_{\hh_1, \hh_2}~ \setcomp{\ff(\sk,\, \hh_1) \cdot \fg(\sk,\, \hh_1) \cdot \fh(\sk,\, \hh_2) }{\hh = \hh_1 \sepcon \hh_2}\\
        \eeqtag{$\ff$ is pure}
		& \max_{\hh_1, \hh_2}~ \setcomp{\ff(\sk,\, \hh) \cdot \fg(\sk,\, \hh_1) \cdot \fh(\sk,\, \hh_2) }{\hh = \hh_1 \sepcon \hh_2} \tag{purity of $\ff$}\\
        \eeqtag{$\ff(\sk,\hh) \in \Reals$ is a constant}
		& \ff(\sk,\, \hh) \cdot \max_{\hh_1, \hh_2}~ \setcomp{\fg(\sk,\, \hh_1) \cdot \fh(\sk,\, \hh_2) }{\hh = \hh_1 \sepcon \hh_2} \tag{purity of $\ff$}\\
        \eeqtag{Definition of $\sepcon$}
		& \ff(\sk,\, \hh) \cdot \bigl( \fg \sepcon \fh \bigr)(\sk,\, \hh) \\
        \eeqtag{algebra}
		& \bigl( \ff \cdot ( \fg \sepcon \fh ) \bigr)(\sk,\, \hh).
	\end{align}
\end{proof}
%
%

\subsection{Proof of Theorem~\ref{thm:intuitionistification} (Tightest Intuitionistic Expectations)}
\label{app:intuitionistification}

We have to show that 
\begin{enumerate}
        \item $\ff \sepcon 1$ is intuitionistic, i.e. for all $\hh \subseteq \hh'$, $(\ff \sepcon 1)(\sk,\hh) \leq (\ff \sepcon 1)(\sk,\hh')$. \label{app:intuitionistification:1}
        \item $\ff \preceq \ff \sepcon 1$. \label{app:intuitionistification:2}
        \item for all intuitionistic $\ff'$, $\ff \preceq \ff'$ implies $\ff \sepcon 1 \preceq \ff'$. \label{app:intuitionistification:3}
        \item $1 \sepimp \ff$ is intuitionistic. \label{app:intuitionistification:4}
        \item $1 \sepimp \ff \preceq \ff$. \label{app:intuitionistification:5}
        \item for all intuitionistic $\ff'$, $\ff' \preceq \ff$ implies $\ff' \preceq 1 \sepimp \ff$. \label{app:intuitionistification:6}
\end{enumerate}

\begin{proof}[Proof of Theorem~\ref{thm:intuitionistification}, (\ref{app:intuitionistification:1})]
\begin{align}
        & (\ff \sepcon 1)(\sk,\hh \sepcon \hh') \\
        \eeqtag{Definition of $\sepcon$}
        & \max_{\hh_1, \hh_2} \setcomp{\ff(\sk,\hh_1) \cdot 1(\sk,\hh_2)}{\hh \sepcon \hh' = \hh_1 \sepcon \hh_2} \\
        \ggeqtag{consider subset in which $\hh_1 = \hh$}
        & \max_{\hh_1,\hh_2} \setcomp{\ff(\sk,\hh_1) \cdot 1(\sk,\hh_2 \sepcon \hh')}{\hh = \hh_1 \sepcon \hh_2} \\
        \eeqtag{algebra}
        & \max_{\hh_1,\hh_2} \setcomp{\ff(\sk,\hh_1) \cdot 1(\sk,\hh_2)}{\hh = \hh_1 \sepcon \hh_2} \\
        \eeqtag{Definition of $\sepcon$}
        & (\ff \sepcon 1)(\sk,\hh).
\end{align}
\end{proof}

\begin{proof}[Proof of Theorem~\ref{thm:intuitionistification}, (\ref{app:intuitionistification:2})]
\begin{align}
        & (\ff \sepcon 1)(\sk,\hh) \\
        \eeqtag{Definition of $\sepcon$}
        & \max_{\hh_1,\hh_2} \setcomp{\ff(\sk,\hh_1) \cdot 1(\sk,\hh_2)}{\hh = \hh_1 \sepcon \hh_2} \\
        \eeqtag{algebra}
        & \max_{\hh_1,\hh_2} \setcomp{\ff(\sk,\hh_1)}{\hh = \hh_1 \sepcon \hh_2} \\
        \ggeqtag{consider subset in which $\hh_1 = \hh$}
        & \ff(\sk,\hh).
\end{align}
\end{proof}

\begin{proof}[Proof of Theorem~\ref{thm:intuitionistification}, (\ref{app:intuitionistification:3})]
Let $\ff'$ be an intuitionistic expectation with $\ff \preceq \ff'$. Then
\begin{align}
        & (\ff \sepcon 1)(\sk,\hh) \\
        \eeqtag{Definition of $\sepcon$}
        & \max_{\hh_1,\hh_2} \setcomp{\ff(\sk,\hh_1) \cdot 1(\sk,\hh_2)}{\hh = \hh_1 \sepcon \hh_2} \\
        \lleqtag{$\ff \preceq \ff'$}
        & \max_{\hh_1, \hh_2} \setcomp{\ff'(\sk,\hh_1) \cdot 1(\sk,\hh_2)}{\hh = \hh_1 \sepcon \hh_2} \\
        \eeqtag{algebra}
        & \max_{\hh_1, \hh_2} \setcomp{\ff'(\sk,\hh_1)}{\hh = \hh_1 \sepcon \hh_2} \\
        \eeqtag{$\ff'$ intuitionistic. Hence the maximum is attained for $\hh_1 = \hh$.}
        & \ff'(\sk,\hh).
\end{align}
\end{proof}

\begin{proof}[Proof of Theorem~\ref{thm:intuitionistification}, (\ref{app:intuitionistification:4})]
\begin{align}
        & (1 \sepimp \ff)(\sk,\hh \sepcon \hh') \\
        \eeqtag{Definition of $\sepimp$}
        & \inf_{\hh''} \left\{ \ff(\sk,\hh \sepcon \hh' \sepcon \hh'') ~|~ \hh \sepcon \hh' \disjoint \hh'', \sk,\hh'' \models 1 \right\} \\
        \eeqtag{$1$ is always satisfied}
        & \inf_{\hh''} \left\{ \ff(\sk,\hh \sepcon \hh' \sepcon \hh'') ~|~ \hh \sepcon \hh' \disjoint \hh'' \right\} \\
        \lleqtag{for the empty heap $\emptyheap$, we have $\hh \sepcon \hh' \disjoint \emptyheap$}
        & \ff(\sk,\hh \sepcon \hh' \sepcon \emptyheap) \\
        \eeqtag{Theorem~\ref{thm:sep-con-monoid}}
        & \ff(\sk,\hh \sepcon \hh').
\end{align}
\end{proof}

\begin{proof}[Proof of Theorem~\ref{thm:intuitionistification}, (\ref{app:intuitionistification:5})]
Follows directly from the proof of Theorem~\ref{thm:intuitionistification} (\ref{app:intuitionistification:4}) by setting $\hh' = \emptyheap$.
\end{proof}

\begin{proof}[Proof of Theorem~\ref{thm:intuitionistification}, (\ref{app:intuitionistification:6})]
Let $\ff'$ be an intuitionistic expectation with $\ff' \preceq \ff$. Then
\begin{align}
        & (1 \sepimp \ff)(\sk,\hh) \\
        \eeqtag{Definition of $\sepimp$}
        & \inf_{\hh'} \left\{ \ff(\sk,\hh \sepcon \hh') ~|~ \hh \disjoint \hh', \sk,\hh' \models 1 \right\} \\
        \eeqtag{$1$ is always satisfied}
        & \inf_{\hh'} \left\{ \ff(\sk,\hh \sepcon \hh') ~|~ \hh \disjoint \hh' \right\} \\
        \ggeqtag{$\ff' \preceq \ff$}
        & \inf_{\hh'} \left\{ \ff'(\sk,\hh \sepcon \hh') ~|~ \hh \disjoint \hh' \right\} \\
        \eeqtag{$\ff'$ intuitionistic. Hence the infimum is attained for $\hh' = \emptyheap$.}
        & \ff'(\sk,\hh \sepcon \emptyheap) \\
        \eeqtag{Theorem~\ref{thm:sep-con-monoid}}
        & \ff'(\sk,\hh).
\end{align}
\end{proof}

\subsection{Proof of Theorem~\ref{thm:qsl:heap-size} (Heap Size Laws)}
\label{app:qsl:heap-size}
  
\paragraph{Proof of Theorem~\ref{thm:qsl:heap-size}.~\ref{thm:qsl:heap-size:sepcon}}
We have to show that 
\begin{align}
   \singleton{\ee}{\ee'} \sepcon \heapSize \eeq \containsPointer{\ee}{\ee'} \cdot (\heapSize - 1).
\end{align}

\begin{proof}
\begin{align}
        & \singleton{\ee}{\ee'} \sepcon \heapSize \\
        \eeqtag{Definition of $\sepcon$}
        & \lambda(\sk,\hh)\mydot \max_{\hh_1, \hh_2} \setcomp{ \singleton{\ee}{\ee'}(\sk,\hh_1) \cdot \heapSize(\sk,\hh_2) }{ \hh = \hh_1 \sepcon \hh_2 } \\
        \eeqtag{Definition of $\heapSize$}
        & \lambda(\sk,\hh)\mydot \max_{\hh_1, \hh_2} \setcomp{ \singleton{\ee}{\ee'}(\sk,\hh_1) \cdot |\dom{\hh_2}| }{ \hh = \hh_1 \sepcon \hh_2 }  \\
        \eeqtag{$\dom{\hh} = \dom{\hh_1} + \dom{\hh_2}$}
        & \lambda(\sk,\hh)\mydot \max_{\hh_1,\hh_2} \setcomp{ \singleton{\ee}{\ee'}(\sk,\hh_1) \cdot (|\dom{\hh}| - |\dom{\hh_1}|) }{ \hh = \hh_1 \sepcon \hh_2 }  \\
        %
        %
        %
        \eeqtag{$\singleton{\ee}{\ee'}(\sk,\hh_1) \neq 0$ implies $\dom{\hh_1} = \{ \sk(\ee) \}$}
        & \lambda(\sk,\hh)\mydot \max_{\hh_1,\hh_2} \setcomp{ \singleton{\ee}{\ee'}(\sk,\hh_1) \cdot (|\dom{\hh}| - 1)}{ \hh = \hh_1 \sepcon \hh_2 }  \\
        \eeqtag{algebra}
        & \lambda(\sk,\hh)\mydot (|\dom{\hh}|-1) \cdot \max_{\hh_1,\hh_2} \setcomp{ \singleton{\ee}{\ee'}(\sk,\hh_1) }{ \hh = \hh_1 \sepcon \hh_2 }  \\
        \eeqtag{algebra}
        & \lambda(\sk,\hh)\mydot (|\dom{\hh}|-1) \cdot \max_{\hh_1,\hh_2} \setcomp{ \singleton{\ee}{\ee'}(\sk,\hh_1) \cdot 1(\sk,\hh_2) }{ \hh = \hh_1 \sepcon \hh_2 }  \\
        \eeqtag{Definition of $\sepcon$}
        & \lambda(\sk,\hh)\mydot (|\dom{\hh}|-1) \cdot (\singleton{\ee}{\ee'} \sepcon 1)(\sk,\hh) \\ 
        %
        \eeqtag{$\containsPointer{\ee}{\ee'} = \singleton{\ee}{\ee'} \sepcon 1$}
        & \lambda(\sk,\hh)\mydot (|\dom{\hh}|-1) \cdot \containsPointer{\ee}{\ee'}(\sk,\hh) \\
        \eeqtag{Definition of $\heapSize$}
        & \lambda(\sk,\hh)\mydot (\heapSize(\sk,\hh) - 1) \cdot \containsPointer{\ee}{\ee'}(\sk,\hh) \\
        \eeqtag{algebra}
        & \containsPointer{\ee}{\ee'} \cdot (\heapSize - 1).
\end{align}
\end{proof}

\paragraph{Proof of Theorem~\ref{thm:qsl:heap-size}.~\ref{thm:qsl:heap-size:sepimp}}
We have to show that 
\begin{align}
   \singleton{\ee}{\ee'} \sepimp \heapSize \eeq \heapSize + 1 + \containsPointer{\ee}{-} \cdot \infty.
\end{align}

\begin{proof}
\begin{align}
        & \singleton{\ee}{\ee'} \sepimp \heapSize \\
        \eeqtag{Definition of $\sepimp$}
        & \lambda(\sk,\hh)\mydot \inf_{\hh'} \setcomp{ \heapSize(\sk,\hh \sepcon \hh') }{ \hh \disjoint \hh' ~\text{and}~ (\sk,\hh) \models \singleton{\ee}{\ee'} } \\
        \eeqtag{Definition of $\heapSize$}
        & \lambda(\sk,\hh)\mydot \inf_{\hh'} \setcomp{ |\dom{\hh \sepcon \hh'}| }{ \hh \disjoint \hh' ~\text{and}~ (\sk,\hh') \models \singleton{\ee}{\ee'} } \\
        \eeqtag{for all $\hh'$ with $(\sk,\hh') \models \singleton{\ee}{\ee'}$, we have $|\dom{\hh'}| = 1$} 
        & \lambda(\sk,\hh)\mydot \inf_{\hh'} \left\{ |\dom{\hh}| + 1 ~|~ \hh \disjoint \hh', \sk,\hh' \models \singleton{\ee}{\ee'} \right\} \\
        \eeqtag{case distinction: $\sk(\ee) \in \dom{\hh}$ or $\sk(\ee) \notin \dom{\hh}$}
        & \lambda(\sk,\hh)\mydot 
        \containsPointer{\ee}{-}(\sk,\hh) \cdot \inf_{\hh'} \left\{ |\dom{\hh}| + 1 ~|~ \hh \disjoint \hh', \sk,\hh' \models \singleton{\ee}{\ee'} \right\} \\
        & \qquad + (1-\containsPointer{\ee}{-}(\sk,\hh)) \cdot \inf_{\hh'} \left\{ |\dom{\hh}| + 1 ~|~ \hh \disjoint \hh', \sk,\hh' \models \singleton{\ee}{\ee'} \right\} \notag \\
        \eeqtag{First case: $\containsPointer{\ee}{-}(\sk,\hh) = 1$ implies there is no $\hh' \disjoint \hh$}
        & \lambda(\sk,\hh)\mydot 
        \containsPointer{\ee}{-}(\sk,\hh) \cdot \inf_{\hh'} \emptyset \\
        & \qquad + (1-\containsPointer{\ee}{-}(\sk,\hh)) \cdot \inf_{\hh'} \left\{ |\dom{\hh}| + 1 ~|~ \hh \disjoint \hh', \sk,\hh' \models \singleton{\ee}{\ee'} \right\} \notag \\
        \eeqtag{Second case: $\containsPointer{\ee}{-}(\sk,\hh) = 0$ implies there is $\hh' \disjoint \hh$}
        & \lambda(\sk,\hh)\mydot \containsPointer{\ee}{-}(\sk,\hh) \cdot \inf_{\hh'} \emptyset + (1-\containsPointer{\ee}{-}(\sk,\hh)) \cdot (|\dom{\hh}|+1) \\
        \eeqtag{algebra}
        & \lambda(\sk,\hh)\mydot \containsPointer{\ee}{-}(\sk,\hh) \cdot \infty + (1-\containsPointer{\ee}{-}(\sk,\hh)) \cdot (|\dom{\hh}|+1) \\
        \eeqtag{Definition of $\heapSize$}
        & \lambda(\sk,\hh)\mydot \containsPointer{\ee}{-}(\sk,\hh) \cdot \infty + (1-\containsPointer{\ee}{-}(\sk,\hh)) \cdot (\heapSize(\sk,\hh)+1) \\
        \eeqtag{algebra}
        & \lambda(\sk,\hh)\mydot \containsPointer{\ee}{-}(\sk,\hh) \cdot (\infty+\heapSize(\sk,\hh)+1) + (1-\containsPointer{\ee}{-}(\sk,\hh)) \cdot (\heapSize(\sk,\hh)+1) \\
        %
        %
        %
        %
        \eeqtag{algebra}
        & \heapSize + 1 + \containsPointer{\ee}{-} \cdot \infty.
\end{align}
\end{proof}

\paragraph{Proof of Theorem~\ref{thm:qsl:heap-size}.~\ref{thm:qsl:heap-size:dist}}
We have to show that 
\begin{align}
    \left(\ff \sepcon \fg\right) \cdot \heapSize \ppreceq \left(\ff \cdot \heapSize\right) \sepcon \fg + \ff \sepcon \left(\fg \cdot \heapSize\right).
\end{align}

\begin{proof}
\begin{align}
        & \left( \ff \sepcon \fg \right) \cdot \heapSize \\
        \eeqtag{algebra}
        & \lambda(\sk,\hh)\mydot \left( \ff \sepcon \fg \right)(\sk,\hh) \cdot \heapSize(\sk,\hh) \\
        \eeqtag{Definition of $\sepcon$}
        & \lambda(\sk,\hh)\mydot \max \left\{ \ff(\sk,\hh_1) \cdot \fg(\sk,\hh_2) ~\middle|~ \hh = \hh_1 \sepcon \hh_2 \right\} \cdot \heapSize(\sk,\hh) \\
        \eeqtag{algebra}
        & \lambda(\sk,\hh)\mydot \max \left\{ \ff(\sk,\hh_1) \cdot \fg(\sk,\hh_2) \cdot \heapSize(\sk,\hh) ~\middle|~ \hh = \hh_1 \sepcon \hh_2 \right\}  \\
        \eeqtag{Definition of $\heapSize$}
        & \lambda(\sk,\hh)\mydot \max \left\{ \ff(\sk,\hh_1) \cdot \fg(\sk,\hh_2) \cdot |\dom{\hh}| ~\middle|~ \hh = \hh_1 \sepcon \hh_2 \right\}  \\
        \eeqtag{$|\dom{\hh}| = |\dom{\hh_1}| + |\dom{\hh_2}|$}
        & \lambda(\sk,\hh)\mydot \max \left\{ \ff(\sk,\hh_1) \cdot \fg(\sk,\hh_2) \cdot (|\dom{\hh_1}| + |\dom{\hh_2}|) ~\middle|~ \hh = \hh_1 \sepcon \hh_2 \right\}  \\
        \eeqtag{algebra}
        & \lambda(\sk,\hh)\mydot \max \{ \left( \ff(\sk,\hh_1) \cdot |\dom{\hh_1}| \right) \cdot \fg(\sk,\hh_2) \label{eq:qsl:heap-size:dist-before-triangle} \\
        & \qquad \qquad \qquad + \ff(\sk,\hh_1) \cdot \left(\fg(\sk,\hh_2) \cdot |\dom{\hh_2}|\right) ~|~ \hh = \hh_1 \sepcon \hh_2 \}  \notag \\
        \ppreceqtag{triangle inequality}
        & \lambda(\sk,\hh)\mydot \max \{ \left( \ff(\sk,\hh_1) \cdot |\dom{\hh_1}| \right) \cdot \fg(\sk,\hh_2) ~|~ \hh = \hh_1 \sepcon \hh_2 \} \\
        & \qquad \qquad \qquad + \max \{ \ff(\sk,\hh_1) \cdot \left(\fg(\sk,\hh_2) \cdot |\dom{\hh_2}|\right) ~|~ \hh = \hh_1 \sepcon \hh_2 \}  \notag \\
        \eeqtag{Definition of $\sepcon$, $\heapSize$}
        & \lambda(\sk,\hh)\mydot \left( (\ff \cdot \heapSize) \sepcon \fg\right)(\sk,\hh) + \left( \ff \sepcon (\fg \cdot \heapSize) \right)(\sk,\hh) \\
        \eeqtag{algebra}
        & (\ff \cdot \heapSize) \sepcon \fg + \ff \sepcon (\fg \cdot \heapSize).
\end{align}
\end{proof}

\paragraph{Proof of Theorem~\ref{thm:qsl:heap-size}.~\ref{thm:qsl:heap-size:dist-full}}
We have to show for domain-exact $\ff$ or $\fg$ that
\begin{align}
    \left(\ff \sepcon \fg\right) \cdot \heapSize \eeq \left(\ff \cdot \heapSize\right) \sepcon \fg + \ff \sepcon \left(\fg \cdot \heapSize\right).
\end{align}

\begin{proof}
The proof is analogous to the proof of Theorem~\ref{thm:qsl:heap-size}.~\ref{thm:qsl:heap-size:dist}.
However, instead of applying the triangle inequality to equation~\ref{eq:qsl:heap-size:dist-before-triangle}, we apply
apply Lemma~\ref{thm:domain-exact:unique-partitioning} to $\ff$ or $\fg$ (depending on whether $\ff$ or $\fg$ is domain-exact).
Since we then take a maximum over a singleton, we proceed as follows:
\begin{align}
        & \text{(continuing from equation~\ref{eq:qsl:heap-size:dist-before-triangle})} \\
        \eeqtag{Apply Lemma~\ref{thm:domain-exact:unique-partitioning} to $\ff$ or $\fg$, max over singleton} 
        & \lambda(\sk,\hh)\mydot \max \{ \left( \ff(\sk,\hh_1) \cdot |\dom{\hh_1}| \right) \cdot \fg(\sk,\hh_2) ~|~ \hh = \hh_1 \sepcon \hh_2 \} \\
        & \qquad \qquad \qquad + \max \{ \ff(\sk,\hh_1) \cdot \left(\fg(\sk,\hh_2) \cdot |\dom{\hh_2}|\right) ~|~ \hh = \hh_1 \sepcon \hh_2 \}  \notag \\
        \eeqtag{Definition of $\sepcon$}
        & \lambda(\sk,\hh)\mydot \left( (\ff \cdot \heapSize) \sepcon \fg\right)(\sk,\hh) + \left( \ff \sepcon (\fg \cdot \heapSize) \right)(\sk,\hh) \\
        \eeqtag{algebra}
        & (\ff \cdot \heapSize) \sepcon \fg + \ff \sepcon (\fg \cdot \heapSize).
\end{align}
\end{proof}

\subsection{Proof of Lemma~\ref{thm:ls-props} (Properties of List Segments)}
\label{app:ls-props}

Recall the definition of $\Lssymbol$ and $\Lensymbol$:

\begin{align}
        \Ls{\za}{\zb} \eeq & \underbrace{ \iverson{\za = \zb} \cdot \emp + \iverson{\za \neq \zb} \cdot \sup_{\zc \in \Ints} \singleton{\za}{\zc} \sepcon \Ls{\zc}{\zb} }_{ \eeq \ff_{\Lssymbol}(\za,\zb) } \\
        \Len{\za}{\zb} \eeq & \underbrace{\iverson{\za \neq \zb} \cdot \sup_{\zc \in \Ints} \singleton{\za}{\zc} \sepcon \left( \Ls{\zc}{\zb} + \Len{\zc}{\zb} \right)}_{\eeq \fg_{\Len{\za}{\zb}}}
\end{align}

By definition, we have 
\begin{align}
    \Ls{\za}{\zb} \eeq \lfp \fh . \lambda (\za,\zb)\mydot \ff_{\fh}(\za,\zb). 
    \label{eq:app:ls-props:ls}
\end{align}

\paragraph{Continuity of $\Ls{\za}{\zb}$ and $\Len{\za}{\zb}$}

We first note that the underlying functional is continuous.
\begin{lemma}
\label{thm:ls:continuous}
        For all sequences of $P_n \in \Rats^2 \to \E$, $n \in \Nats$, we have
        \begin{align*}
                \sup_{n \in \Nats} \lambda(\za,\zb)\mydot \ff_{P_n}(\za,\zb)
                \eeq
                \lambda(\za,\zb)\mydot \ff_{\sup_{n\in\Nats}P_n}(\za,\zb)
        \end{align*}
\end{lemma}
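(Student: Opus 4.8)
The plan is to prove continuity of the functional $P \mapsto \lambda(\za,\zb)\mydot \ff_P(\za,\zb)$ by unfolding its definition and pushing the supremum over $n$ inward, term by term. Since $\ff_P(\za,\zb) = \iverson{\za = \zb} \cdot \emp + \iverson{\za \neq \zb} \cdot \sup_{\zc \in \Ints} \singleton{\za}{\zc} \sepcon P(\zc,\zb)$, the first summand does not depend on $P$ at all, so the only work is showing that
\begin{align*}
  \sup_{n \in \Nats}\, \sup_{\zc \in \Ints} \singleton{\za}{\zc} \sepcon P_n(\zc,\zb)
  \eeq
  \sup_{\zc \in \Ints} \singleton{\za}{\zc} \sepcon \Bigl(\sup_{n \in \Nats} P_n(\zc,\zb)\Bigr)~.
\end{align*}
First I would swap the two suprema on the left-hand side (suprema over arbitrary index sets always commute), reducing the goal to $\sup_n \bigl(\singleton{\za}{\zc} \sepcon P_n(\zc,\zb)\bigr) = \singleton{\za}{\zc} \sepcon \sup_n P_n(\zc,\zb)$ for each fixed $\zc$.

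The crucial ingredient is therefore that quantitative separating conjunction is continuous in its right argument along $\omega$-chains. Here I would \emph{not} assume $P_1 \preceq P_2 \preceq \cdots$; instead I would note that by Theorem~\ref{thm:sepcon-monotonic} (monotonicity of $\sepcon$) the inequality ``$\succeq$'' is immediate, and for ``$\preceq$'' I would argue pointwise: for a fixed state $(\sk,\hh)$, $\bigl(\singleton{\za}{\zc} \sepcon \sup_n P_n(\zc,\zb)\bigr)(\sk,\hh)$ equals $\max_{\hh_1,\hh_2}\{\singleton{\za}{\zc}(\sk,\hh_1) \cdot \sup_n P_n(\zc,\zb)(\sk,\hh_2) \mid \hh = \hh_1 \sepcon \hh_2\}$, and since the maximum ranges over only \emph{finitely many} partitions of the fixed finite heap $\hh$, and multiplication by the bounded factor $\singleton{\za}{\zc}(\sk,\hh_1) \in \{0,1\}$ commutes with suprema, I can pull the $\sup_n$ out of the finite maximum. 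This is exactly the place where finiteness of heaps is used: it guarantees the maximum in the definition of $\sepcon$ is over a finite set, hence commutes with arbitrary suprema.

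With that pointwise identity in hand, the chain of equalities assembles directly: unfold $\ff_{P_n}$, distribute $\sup_n$ over the (pointwise) sum of two expectations, observe the first term is constant in $n$, commute $\sup_n$ with $\sup_{\zc}$, apply the $\sepcon$-continuity identity, and fold back up into $\ff_{\sup_n P_n}$. I would write this as a short \texttt{align} display with tags citing ``Definition of $\ff_{\cdot}$'', ``algebra'', ``$\sepcon$ commutes with $\sup$ over finite partitions'', and ``Definition of $\ff_\cdot$'' again.

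The main obstacle I anticipate is the $\sepcon$-continuity step: one must be careful that it is continuity in the \emph{right} argument only (the left argument $\singleton{\za}{\zc}$ is a fixed predicate here, so no issue arises), and one must justify commuting $\sup_n$ with the finite $\max$ over partitions — which is fine — as opposed to commuting it with an infimum, which would fail. A secondary subtlety is that $P_n$ takes values in $\E$, so the factor $\sup_n P_n(\zc,\zb)(\sk,\hh_2)$ may be $\infty$; but since the other factor is in $\{0,1\}$, the products are still well-behaved and the rearrangement $a \cdot \sup_n b_n = \sup_n (a \cdot b_n)$ holds in $\PosRealsInf$ for $a \in \{0,1\}$, so no genuine difficulty remains. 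Once Lemma~\ref{thm:ls:continuous} is established, continuity of $\Len{\za}{\zb}$'s functional follows by the same argument (the extra summand $\Len{\zc}{\zb}$ inside the $\sepcon$ is handled by the same distributivity of $\sepcon$ over $+$ on domain-exact arguments, or simply by the analogous pointwise computation), and Kleene's fixed point theorem then applies to both.
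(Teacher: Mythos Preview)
Your approach is essentially the same as the paper's: unfold $\ff_{P_n}$, push $\sup_n$ past the constant first summand, commute $\sup_n$ with $\sup_{\zc}$, then unfold $\sepcon$ and use finiteness of the set of heap partitions to commute $\sup_n$ with the $\max$, and fold back. One small slip: monotonicity of $\sepcon$ gives the direction $\sup_n(\singleton{\za}{\zc}\sepcon P_n) \preceq \singleton{\za}{\zc}\sepcon \sup_n P_n$, not ``$\succeq$'' as you wrote---but since your pointwise argument via finite partitions actually establishes equality outright, this labeling error does not create a gap.
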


\begin{proof}
\begin{align}
        &  \sup_{n \in \Nats} \lambda(\za,\zb)\mydot \ff_{P_n}(\za,\zb) \\
        \eeqtag{algebra}
        & \lambda(\za,\zb)\mydot \sup_{n \in \Nats} \ff_{P_n}(\za,\zb) \\
        \eeqtag{Definition of $\ff_{P_n}$}
        & \lambda(\za,\zb)\mydot \sup_{n \in \Nats} \left(
            \iverson{\za = \zb} \cdot \emp 
            + \iverson{\za \neq \zb} \cdot \sup_{\zc \in \Ints} \singleton{\za}{\zc} \sepcon P_n(\zc,\zb)
          \right) \\
        \eeqtag{algebra}
        & \lambda(\za,\zb)\mydot \left(
            \iverson{\za = \zb} \cdot \emp 
            + \sup_{n \in \Nats} \iverson{\za \neq \zb} \cdot \sup_{\zc \in \Ints} \singleton{\za}{\zc} \sepcon P_n(\zc,\zb)
          \right) \\
        \eeqtag{algebra}
        & \lambda(\za,\zb)\mydot \left(
            \iverson{\za = \zb} \cdot \emp 
            + \iverson{\za \neq \zb} \cdot \sup_{\zc \in \Ints} 
            \sup_{n \in \Nats} \singleton{\za}{\zc} \sepcon P_n(\zc,\zb)
          \right) \\
        \eeqtag{Definition of $\sepcon$}
        & \lambda(\za,\zb)\mydot (
            \iverson{\za = \zb} \cdot \emp 
            + \iverson{\za \neq \zb} \cdot \sup_{\zc \in \Ints} \\
        & \qquad
            \sup_{n \in \Nats} 
            \lambda(\sk,\hh)\mydot \max_{\hh_1,\hh_2} \setcomp{ \singleton{\za}{\zc}(\sk,\hh_1) \cdot P_n(\zc,\zb)(\sk,\hh_2) }{ \hh = \hh_1 \sepcon \hh_2 }
          ) \notag \\
        \eeqtag{algebra}
        & \lambda(\za,\zb)\mydot (
            \iverson{\za = \zb} \cdot \emp 
            + \iverson{\za \neq \zb} \cdot \sup_{\zc \in \Ints} \\
        & \qquad
            \lambda(\sk,\hh)\mydot \max_{\hh_1,\hh_2} \setcomp{
            \sup_{n \in \Nats} 
            \singleton{\za}{\zc}(\sk,\hh_1) \cdot P_n(\zc,\zb)(\sk,\hh_2) 
            }{ \hh = \hh_1 \sepcon \hh_2}
          ) \notag \\
        \eeqtag{algebra}
        & \lambda(\za,\zb)\mydot \big(
            \iverson{\za = \zb} \cdot \emp 
            + \iverson{\za \neq \zb} \cdot \sup_{\zc \in \Ints} \\
        & \qquad
            \lambda(\sk,\hh)\mydot \max_{\hh_1,\hh_2} \setcomp{
                \singleton{\za}{\zc}(\sk,\hh_1) \cdot 
                \sup_{n \in \Nats} P_n(\zc,\zb)(\sk,\hh_2) 
            }{ \hh = \hh_1 \sepcon \hh_2 }
          \big) \notag \\
        \eeqtag{Definition of $\sepcon$}
        & \lambda(\za,\zb)\mydot \left(
            \iverson{\za = \zb} \cdot \emp 
            + \iverson{\za \neq \zb} \cdot \sup_{\zc \in \Ints} 
            \singleton{\za}{\zc} \sepcon \sup_{n \in \Nats} P_n(\zc,\zb)
          \right) \\
        \eeqtag{Definition of $\ff_{.}$}
        &  \lambda(\za,\zb)\mydot \ff_{\sup_{n \in \Nats} P_n}(\za,\zb).
\end{align}
\end{proof}

Similarly, we show that the functional underlying the list-length quantity is continuous.
\begin{lemma}
\label{thm:len:continuous}
        For all sequences of $P_n \in \Rats^2 \to \E$, $n \in \Nats$, we have
        \begin{align*}
                \sup_{n \in \Nats} \lambda(\za,\zb)\mydot \fg_{P_n}(\za,\zb)
                \eeq
                \lambda(\za,\zb)\mydot \fg_{\sup_{n\in\Nats}P_n}(\za,\zb)
        \end{align*}
\end{lemma}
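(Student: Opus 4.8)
The plan is to prove Lemma~\ref{thm:len:continuous} by a direct pointwise calculation that mirrors the structure of the proof of Lemma~\ref{thm:ls:continuous}, pushing the supremum over $n$ through every operation appearing in $\fg_{P}(\za,\zb) = \iverson{\za \neq \zb} \cdot \sup_{\zc} \singleton{\za}{\zc} \sepcon (\Ls{\zc}{\zb} + P(\zc,\zb))$. The crucial point is that, compared to $\ff_P$, the functional $\fg_P$ has the extra summand $\Ls{\zc}{\zb}$ inside the separating conjunction; this summand does \emph{not} depend on $P$, so it behaves like a constant when forming $\sup_n$. Concretely, I would first commute the outer $\sup_n$ with $\lambda(\za,\zb)\mydot{}$, with the pointwise sum $\iverson{\za=\zb}\cdot\emp + {}$ (which is vacuous here since there is no base case, but present in the $\iverson{\za\neq\zb}$ factor the constant part drops through), and with the index-independent factor $\iverson{\za \neq \zb}$; then commute it with $\sup_{\zc \in \Ints}$ (suprema commute); then unfold the definition of $\sepcon$ as $\lambda(\sk,\hh)\mydot\max_{\hh_1,\hh_2}\{\ldots \mid \hh = \hh_1 \sepcon \hh_2\}$ and commute $\sup_n$ with $\max_{\hh_1,\hh_2}$ over the finitely many partitions; and finally observe that $\sup_n \big(\singleton{\za}{\zc}(\sk,\hh_1) \cdot (\Ls{\zc}{\zb}(\sk,\hh_2) + P_n(\zc,\zb)(\sk,\hh_2))\big) = \singleton{\za}{\zc}(\sk,\hh_1) \cdot (\Ls{\zc}{\zb}(\sk,\hh_2) + \sup_n P_n(\zc,\zb)(\sk,\hh_2))$, using that multiplication by a fixed non-negative value and addition of a fixed value are $\sup$-continuous on $\PosRealsInf$, and that the $P_n$ form an increasing chain (so the sup is monotone and distributes over the sum).

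The chain of rewrites will be assembled into a single \texttt{align} block of the form below, where each step is annotated with the justification, exactly as in the proof of Lemma~\ref{thm:ls:continuous}:
\begin{align*}
        & \sup_{n \in \Nats} \lambda(\za,\zb)\mydot \fg_{P_n}(\za,\zb) \\
        \eeq\ & \lambda(\za,\zb)\mydot \sup_{n \in \Nats} \left( \iverson{\za \neq \zb} \cdot \sup_{\zc} \singleton{\za}{\zc} \sepcon (\Ls{\zc}{\zb} + P_n(\zc,\zb)) \right) \\
        \eeq\ & \lambda(\za,\zb)\mydot \iverson{\za \neq \zb} \cdot \sup_{\zc} \sup_{n} \big( \singleton{\za}{\zc} \sepcon (\Ls{\zc}{\zb} + P_n(\zc,\zb)) \big) \\
        \eeq\ & \lambda(\za,\zb)\mydot \iverson{\za \neq \zb} \cdot \sup_{\zc} \singleton{\za}{\zc} \sepcon \left( \Ls{\zc}{\zb} + \sup_{n} P_n(\zc,\zb) \right) \\
        \eeq\ & \lambda(\za,\zb)\mydot \fg_{\sup_{n} P_n}(\za,\zb).
\end{align*}
I would expand the middle step with the explicit $\max_{\hh_1,\hh_2}$ unfolding of $\sepcon$ and the pointwise form of $+$, exactly parallel to the displayed calculation in the proof of Lemma~\ref{thm:ls:continuous}, so that the commutation of $\sup_n$ with $\max$ over partitions and with the algebraic operations is made visible rather than asserted.

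The main obstacle — and really the only subtle point — is justifying $\sup_n \big(c + P_n\big) = c + \sup_n P_n$ and $\sup_n (d \cdot P_n) = d \cdot \sup_n P_n$ in the extended non-negative reals $\PosRealsInf$, including the corner cases $c = \infty$ or $d = \infty$ or $d = 0$. Because the $P_n$ are drawn from the ascending chain over which $\lfp$ is computed (more precisely, because in the intended use they arise as an increasing $\omega$-chain), these identities hold; but if the lemma is stated for arbitrary sequences $P_n$ one should note that $\sup$ over an arbitrary family still commutes with addition of a fixed element and with multiplication by a fixed non-negative element in a complete lattice order on $\PosRealsInf$, which is a routine fact about the order structure. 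Since $\singleton{\za}{\zc}(\sk,\hh_1) \in \{0,1\}$ is bounded, the multiplication case is in fact trivial, so the only genuine content is the additive case $\sup_n (\Ls{\zc}{\zb}(\sk,\hh_2) + P_n(\zc,\zb)(\sk,\hh_2)) = \Ls{\zc}{\zb}(\sk,\hh_2) + \sup_n P_n(\zc,\zb)(\sk,\hh_2)$, which is immediate once one recalls that $x \mapsto c + x$ is a continuous (indeed sup-preserving) self-map of $\PosRealsInf$. I expect the rest of the proof to be a mechanical transcription of the corresponding steps in Lemma~\ref{thm:ls:continuous}.
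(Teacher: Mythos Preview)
Your proposal is correct and follows essentially the same approach as the paper's proof: both push $\sup_n$ through $\lambda(\za,\zb)$, the factor $\iverson{\za\neq\zb}$, the inner $\sup_{\zc}$, and the unfolded $\max_{\hh_1,\hh_2}$ of $\sepcon$, arriving at $\Ls{\zc}{\zb}(\sk,\hh_2) + \sup_n P_n(\zc,\zb)(\sk,\hh_2)$ since only $P_n$ depends on $n$. The paper's proof simply labels the key commutation steps ``algebra'' without your explicit discussion of why $\sup_n(c + x_n) = c + \sup_n x_n$ holds in $\PosRealsInf$, so your treatment is, if anything, slightly more detailed.
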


\begin{proof}
\begin{align}
        &  \sup_{n \in \Nats} \lambda(\za,\zb)\mydot \fg_{P_n}(\za,\zb) \\
        \eeqtag{algebra}
        & \lambda(\za,\zb)\mydot \sup_{n \in \Nats} \fg_{P_n}(\za,\zb) \\
        \eeqtag{Definition of $\fg_{P_n}$}
        & \lambda(\za,\zb)\mydot \sup_{n \in \Nats} \left(
              \iverson{\za \neq \zb} \cdot \sup_{\zc \in \Ints} \singleton{\za}{\zc} \sepcon \left(\Ls{\zc}{\zb} + P_n(\zc,\zb)\right)
          \right) \\
        \eeqtag{algebra}
        & \lambda(\za,\zb)\mydot \iverson{\za \neq \zb} \cdot \sup_{n \in \Nats} 
              \sup_{\zc \in \Ints} \singleton{\za}{\zc} \sepcon \left(\Ls{\zc}{\zb} + P_n(\zc,\zb)\right) \\
        \eeqtag{algebra}
        & \lambda(\za,\zb)\mydot \iverson{\za \neq \zb} \cdot \sup_{\zc \in \Ints} \sup_{n \in \Nats} 
              \singleton{\za}{\zc} \sepcon \left(\Ls{\zc}{\zb} + P_n(\zc,\zb)\right) \\
        \eeqtag{Definition of $\sepcon$}
        & \lambda(\za,\zb)\mydot \iverson{\za \neq \zb} \cdot 
        \sup_{\zc \in \Ints} \sup_{n \in \Nats} \lambda(\sk,\hh)\mydot \\
        & \qquad \max_{\hh_1,\hh_2} \setcomp{ \singleton{\za}{\zc}(\sk,\hh1) \cdot \left(\Ls{\zc}{\zb}(\sk,\hh_2) + P_n(\zc,\zb)(\sk,\hh_2)\right) }{\hh = \hh_1 \sepcon \hh_2} \notag \\
        \eeqtag{algebra}
        & \lambda(\za,\zb)\mydot \iverson{\za \neq \zb} \cdot \sup_{\zc \in \Ints} \lambda(\sk,\hh)\mydot \\
        & \qquad \max_{\hh_1,\hh_2} \setcomp{ \singleton{\za}{\zc}(\sk,\hh1) \cdot \left(\Ls{\zc}{\zb}(\sk,\hh_2) + \sup_{n \in \Nats} P_n(\zc,\zb)(\sk,\hh_2)\right) }{ \hh = \hh_1 \sepcon \hh_2 } \notag \\
        \eeqtag{Definition of $\sepcon$}
        & \lambda(\za,\zb)\mydot \iverson{\za \neq \zb} \cdot \sup_{\zc \in \Ints} \singleton{\za}{\zc} \sepcon \left(\Ls{\zc}{\zb} + \sup_{n \in \Nats} P_n(\zc,\zb)\right) \\
        \eeqtag{Definition of $\fg_{.}$}
        &  \lambda(\za,\zb)\mydot \fg_{\sup_{n \in \Nats} P_n}(\za,\zb).
\end{align}
\end{proof}

\paragraph{Proof of Lemma~\ref{thm:ls-props}.\ref{thm:ls-props:char}}

\begin{proof}
We show for all stack-heap pairs $(\sk,\hh)$ that $\Len{\za}{\zb}(\sk,\hh) = \left(\Ls{\za}{\zb} \cdot \heapSize\right)(\sk,\hh)$ by induction on the size of the heap $n = |\dom{\hh}|$.
For $n = 0$, we have
\begin{align}
        & \left(\Ls{\za}{\zb}(\sk,\hh) \cdot \heapSize\right)(\sk,\hh) \\
        \eeqtag{Definition of $\Ls{\za}{\zb}$}
        & \left( \left(\iverson{\za = \zb} \cdot \emp + \iverson{\za \neq \zb} \cdot \sup_{\gamma \in \Ints} \singleton{\za}{\zc} \sepcon \Ls{\zc}{\zb}\right) \cdot \heapSize \right)(\sk,\hh) \\
        \eeqtag{algebra}
        & \left( \iverson{\za = \zb} \cdot \emp \cdot \heapSize + \iverson{\za \neq \zb} \cdot \heapSize \cdot \sup_{\gamma \in \Ints} \singleton{\za}{\zc} \sepcon \Ls{\zc}{\zb} \right)(\sk,\hh) \\
        \eeqtag{by assumption $n = |\dom{\hh}| = 0$ and thus the second summand is $0$}
        & \left( \iverson{\za = \zb} \cdot \emp \cdot \heapSize \right)(\sk,\hh) \\
        \eeqtag{$\emp \cdot \heapSize = 0$}
        & 0 \\
        \eeqtag{$(\sup_{\zc \in \Ints} \singleton{\za}{\zc})(\sk,\hh) = 0$}
        & \left(\iverson{\za \neq \zb} \cdot \sup_{\zc \in \Ints} \singleton{\za}{\zc} \sepcon \left( \Ls{\zc}{\zb} + \Len{\zc}{\zb} \right)\right)(\sk,\hh) \\
        \eeqtag{Definition of $\Len{\za}{\zb}$}
        & \Len{\za}{\zb}(\sk,\hh).
\end{align}
For $n > 0$, we have
\begin{align}
        & \left(\Ls{\za}{\zb}(\sk,\hh) \cdot \heapSize\right)(\sk,\hh) \\
        \eeqtag{Definition of $\Ls{\za}{\zb}$}
        & \left( \left(\iverson{\za = \zb} \cdot \emp + \iverson{\za \neq \zb} \cdot \sup_{\gamma \in \Ints} \singleton{\za}{\zc} \sepcon \Ls{\zc}{\zb}\right) \cdot \heapSize \right)(\sk,\hh) \\
        \eeqtag{algebra}
        & \left( \iverson{\za = \zb} \cdot \emp \cdot \heapSize + \iverson{\za \neq \zb} \cdot \heapSize \cdot \sup_{\gamma \in \Ints} \singleton{\za}{\zc} \sepcon \Ls{\zc}{\zb} \right)(\sk,\hh) \\
        \eeqtag{$\emp \cdot \heapSize = 0$}
        & \left( \iverson{\za \neq \zb} \cdot \heapSize \cdot \sup_{\gamma \in \Ints} \singleton{\za}{\zc} \sepcon \Ls{\zc}{\zb} \right)(\sk,\hh) \\
        \eeqtag{algebra}
        & \left( \iverson{\za \neq \zb} \cdot \sup_{\gamma \in \Ints} \heapSize (\singleton{\za}{\zc} \sepcon \Ls{\zc}{\zb}) \right)(\sk,\hh) \\
        \eeqtag{Theorem~\ref{thm:qsl:heap-size}.\ref{thm:qsl:heap-size:dist-full}}
        & \left( \iverson{\za \neq \zb} \cdot \sup_{\gamma \in \Ints} \left(\left(\singleton{\za}{\zc} \cdot \heapSize\right) \sepcon \Ls{\zc}{\zb} + \singleton{\za}{\zc} \sepcon \left(\Ls{\zc}{\zb} \cdot \heapSize\right) \right)\right)(\sk,\hh) \\
        \eeqtag{$\singleton{\za}{\zc} \cdot \heapSize = \singleton{\za}{\zc}$}
        & \left( \iverson{\za \neq \zb} \cdot \sup_{\gamma \in \Ints} \left(\singleton{\za}{\zc} \sepcon \Ls{\zc}{\zb} + \singleton{\za}{\zc} \sepcon \left(\Ls{\zc}{\zb} \cdot \heapSize\right) \right)\right)(\sk,\hh) \\
        \eeqtag{I.H. (notice that the heap size is reduced by one for $\Ls{\gamma}{\beta} \cdot \heapSize$)}
        & \left( \iverson{\za \neq \zb} \cdot \sup_{\gamma \in \Ints} \left(\singleton{\za}{\zc} \sepcon \Ls{\zc}{\zb} + \singleton{\za}{\zc} \sepcon \Len{\zc}{\zb} \right) \right)(\sk,\hh) \\
        \eeqtag{Theorem~\ref{thm:sep-con-distrib}.\ref{thm:sep-con-distrib:sepcon-over-plus-full}}
        & \left( \iverson{\za \neq \zb} \cdot \sup_{\gamma \in \Ints} \singleton{\za}{\zc} \sepcon \left(\Ls{\zc}{\zb} + \Len{\zc}{\zb} \right) \right)(\sk,\hh) \\
        \eeqtag{Definition of $\Len{\za}{\zb}$}
        & \Len{\za}{\zb}(\sk,\hh).
\end{align}
\end{proof}

\paragraph{Proof of Lemma~\ref{thm:ls-props}.\ref{thm:ls-props:lsls}}
\begin{proof}
   By equation~\ref{eq:app:ls-props:ls} and Lemma~\ref{thm:ls:continuous}, 
   we may apply the Kleene fixed point theorem to obtain
   \begin{align}
           \Ls{\za}{\zb}
           \eeq
           \lfp \fh\mydot \lambda(\za,\zb)\mydot \ff_{\fh}(\za,\zb)
           \eeq
           \lambda(\za,\zb)\mydot \sup_{n \in \Nats} \ff^{n}_{0}(\za,\zb).
   \end{align}
   To complete the proof, we show by induction on $n \geq 1$ that
   \begin{align}
           \sup_{\zc \in \Ints} \ff^{n}_{0}(\za,\zc) \eeq \Ls{\zc}{\zb} \ppreceq \Ls{\za}{\zb}
   \end{align}
   For $n = 1$, we have 
   \begin{align}
           & \sup_{\zc \Ints} \ff^{1}_{0}(\za,\zc) \sepcon \Ls{\zc}{\zb} \\
           \eeqtag{Definition of $\ff_{0}$}
           & \sup_{\zc} \left(
             \iverson{\za = \zc} \cdot \emp + \iverson{\za \neq \zc} \cdot \sup_{\delta \in \Ints} \singleton{\za}{\delta} \sepcon 0
             \right) \sepcon \Ls{\zc}{\zb} \\
           \eeqtag{algebra}
           & \sup_{\zc \in \Ints} \left( \iverson{\za = \zc} \cdot \emp \right) \sepcon \Ls{\zc}{\zb} \\
           \eeqtag{algebra, Theorem~\ref{thm:sep-con-monoid}}
           & \Ls{\za}{\zb}.
   \end{align}
   For the induction step, we have
   \begin{align}
           & \sup_{\zc \in \Ints} \ff^{n+1}_{0}(\za,\zc) \sepcon \Ls{\zc}{\zb} \\
           \eeqtag{Definition of $\ff_{0}$}
           & \sup_{\zc \in \Ints} \left(
           \iverson{\za = \zc} \cdot \emp + \iverson{\za \neq \zc} \cdot \sup_{\delta \in \Ints} \singleton{\za}{\delta} \sepcon \ff^{n}_{0}(\delta,\zc)
             \right) \sepcon \Ls{\zc}{\zb} \\
           \eeqtag{Lemma~\ref{thm:plus-to-max}}
           & \sup_{\zc \in \Ints} 
           \max \{ 
              \iverson{\za = \zc} \cdot \emp,
              \iverson{\za \neq \zc} \cdot \sup_{\delta \in \Ints} \singleton{\za}{\delta} \sepcon \ff^{n}_{0}(\delta,\zc) 
           \} \sepcon \Ls{\zc}{\zb} \\
           \eeqtag{Theorem~\ref{thm:sep-con-distrib}.\ref{thm:sep-con-distrib:sepcon-over-max}}
           & \sup_{\zc \in \Ints} 
           \max \{ 
              (\iverson{\za = \zc} \cdot \emp) \sepcon \Ls{\zc}{\zb},
              (\iverson{\za \neq \zc} \cdot \sup_{\delta \in \Ints} \singleton{\za}{\delta} \sepcon \ff^{n}_{0}(\delta,\zc)) \sepcon \Ls{\zc}{\zb}
           \} \\
           \eeqtag{Theorem~\ref{thm:sep-con-algebra-pure}, algebra}
           & \sup_{\zc \in \Ints} 
           \max \{ 
              (\iverson{\za = \zc} \cdot \emp) \sepcon \Ls{\zc}{\zb},
              (\iverson{\za \neq \zc} \cdot (\sup_{\delta \in \Ints} \singleton{\za}{\delta} \sepcon (\ff^{n}_{0}(\delta,\zc) \sepcon \Ls{\zc}{\zb}))
           \} \\
           \eeqtag{I.H.}
           & \sup_{\zc \in \Ints} 
           \max \{ 
              (\iverson{\za = \zc} \cdot \emp) \sepcon \Ls{\zc}{\zb},
              (\iverson{\za \neq \zc} \cdot (\sup_{\delta \in \Ints} \singleton{\za}{\delta} \sepcon (\Ls{\delta}{\zb}))
           \} \\
           \eeqtag{Theorem~\ref{thm:sep-con-monoid}.\ref{thm:sep-con-monoid:neut}, algebra}
           & \sup_{\zc \in \Ints} 
           \max \{ 
              \Ls{\za}{\zb},
              \underbrace{\iverson{\za \neq \zc} \cdot \sup_{\delta \in \Ints} \singleton{\za}{\delta} \sepcon \Ls{\delta}{\zb}}_{\ppreceq \Ls{\za}{\zb}}
           \} \\
           \eeqtag{algebra}
           & \Ls{\za}{\zb}.
   \end{align}
\end{proof}

\newpage
\section{Appendix to Section~\ref{sec:wp} (Weakest Preexpectations)}
\label{app:sec:wp}
\subsection{Proof of Theorem~\ref{thm:wp:basic} (Basic Properties of $\wpsymbol$)}
\label{app:wp:basic}

Each of the properties of Theorem~\ref{thm:wp:basic} is proven individually below:
\begin{itemize}
        \item Monotonicity, i.e. Theorem~\ref{thm:wp:basic}.\ref{thm:wp:basic:monotonicity}, is shown in Appendix~\ref{app:wp:basic:monotonicity}, p.~\pageref{app:wp:basic:monotonicity}.
        \item (Super)Linearity, i.e. Theorems~\ref{thm:wp:basic}.\ref{thm:wp:basic:super-linearity} and~\ref{thm:wp:basic}.\ref{thm:wp:basic:linearity}, are shown in Appendix~\ref{app:wp:basic:linearity}, p.~\pageref{app:wp:basic:linearity}.
        \item Preservation of $0$, i.e. Theorem~\ref{thm:wp:basic}.\ref{thm:wp:basic:preservation-of-0}, is shown in Appendix~\ref{app:wp:basic:preservation-of-0}, p.~\pageref{app:wp:basic:preservation-of-0}.
        \item $1$-Boundedness, i.e. Theorem~\ref{thm:wp:basic}.\ref{thm:wp:basic:1-boundedness}, is shown in Appendix~\ref{app:wp:basic:1-boundedness}, p.~\pageref{app:wp:basic:1-boundedness}.
        \item Continuity, i.e. Theorem~\ref{thm:wp:basic}.\ref{thm:wp:basic:continuity}, is shown in Appendix~\ref{app:wp:basic:continuity}, p.~\pageref{app:wp:basic:continuity}.
\end{itemize}
\subsection{Proof of Theorem~\ref{thm:wp:basic}.\ref{thm:wp:basic:monotonicity} (Monotonicity)}
\label{app:wp:basic:monotonicity}
\begin{proof}
We show by induction on the structure of $\hpgcl$-programs that for all $\cc \in \hpgcl$, $\wp{\cc}{\cdot}$ is a monotone function.
That is, for all $\ff,\fg \in \E$ it holds that
\begin{align}
   \ff \ppreceq \fg \qimplies \wp{\cc}{\ff} \ppreceq \wp{\cc}{\fg}.
\end{align}
The fact that
\begin{align*}
	\charwp{\guard}{\cc}{\ff}(\fh) \eeq \iverson{\neg \guard} \cdot \fh + \iverson{\guard} \cdot \wp{\cc}{\fk}
\end{align*}
is monotonic for all $\fh \in \E$ then follows from monotonicity of $\wp{\cc}{\cdot}$.
\emph{The case} $\SKIP$. 
\begin{align}
        & \wp{\SKIP}{\ff} \\
        \eeqtag{Table \ref{table:wp}}
        & \ff \\
        \ppreceqtag{by assumption: $\ff \ppreceq \fg$}
        & \fg \\
        \eeqtag{Table \ref{table:wp}}
        & 
        \wp{\SKIP}{\fg}. 
\end{align}
%
%
%
%
\emph{The case} $\ASSIGN{x}{\ee}$. 
\begin{align}
        & \wp{\ASSIGN{x}{\ee}}{\ff} \\
        \eeqtag{Table \ref{table:wp}} 
        & \ff\subst{x}{\ee} \\
        \eeqtag{algebra}
        & \lambda(\sk,\hh) \mydot \ff\left(\sk\subst{x}{\ee}, \hh \right) \\
        \ppreceqtag{by assumption: $\ff \ppreceq \fg$, monotonicity of substitution}
        & \lambda(\sk,\hh) \mydot \fg\left(\sk\subst{x}{\ee}, \hh \right) \\
        \eeqtag{algebra}
        & \fg\subst{x}{\ee} \\
        \eeqtag{Table \ref{table:wp}} 
        & \wp{\ASSIGN{x}{\ee}}{\fg}.
\end{align}
Before we continue with the next cases, we prove the following intermediate result:
Let $A, B \subseteq \PosRealsInf$ such that for all $b \in B$ there is an $a \in A$ with $a \leq b$. 
It then holds that 
\begin{align}
 \inf A \lleq \inf B.
 \label{lemma:inf}
\end{align}
We distinguish two cases: If $B = \emptyset$, then $\inf B = \infty$ and hence $\inf A \leq \infty = \inf B$.
For the remaining case $B \neq \emptyset$, it suffices to show that $\inf A$ is a lower bound of $B$ which is necessarily smaller or equal
to the \emph{greatest} lower bound of $B$. Thus we have to discharge that $\inf A \leq b$ for all $b \in B$. For that, let $b \in B$. 
Now, due to the assumption, there is an $a \in A$ such that
$a \leq b$. By definition of $\inf$ we have that $\inf A \leq a$ and therefore, by transitivity, $\inf A \leq b$. \par
We now continue with the remaining cases.
\emph{The case} $\ALLOC{x}{\vec{\ee}}$. Let $(\sk,\hh) \in \States$. Then 
\begin{align}
        & \wp{\ALLOC{x}{\vec{e}}}{\ff}(\sk,\hh) \\
        \eeqtag{Table \ref{table:wp}}
        & \displaystyle\inf_{v \in \AVAILLOC{\vec{\ee}}}  \big( \singleton{v}{\vec{\ee}} \sepimp \ff\subst{x}{v} \big)(\sk,\hh) \\
        \eeqtag{Definition of $\sepimp$}
        & \displaystyle\inf_{v \in \AVAILLOC{\vec{\ee}}} \displaystyle\inf_{\hh'} 
            \big\{ \ff\subst{x}{v}(\sk, \hh \sepcon \hh') ~\mid~ \hh \disjoint \hh' ~\text{and}~ (\sk,\hh') \models \singleton{v}{\vec{\ee}} \} \\
        \lleqtag{$\ff(\sk\subst{x}{v},\hh\sepcon \hh') \leq \fg(\sk\subst{x}{v},\hh \sepcon \hh')$, then apply (\ref{lemma:inf})}
        & \displaystyle\inf_{v \in \AVAILLOC{\vec{\ee}}} \displaystyle\inf_{\hh'} 
        \big\{\fg\subst{x}{v}(\sk, \hh \sepcon \hh') ~\mid~ \hh \disjoint \hh' ~\text{and}~ (\sk,\hh') \models \singleton{v}{\vec{\ee}} \} \\
        \eeqtag{Definition of $\sepimp$}
        & \displaystyle\inf_{v \in \AVAILLOC{\vec{\ee}}} \big( \singleton{v}{\vec{\ee}} \sepimp \fg\subst{x}{v} \big) (\sk,\hh) \\
        \eeqtag{Table \ref{table:wp}}
        & \wp{\ALLOC{x}{\vec{e}}}{\fg}(\sk,\hh). 
\end{align}
\emph{The case} $\ASSIGNH{x}{\ee}$. Let $(\sk,\hh) \in \States$. We distinguish two cases: $\sk(\ee) \in \dom{\hh}$ and $\sk(\ee) \not\in \dom{\hh}$.
First, assume $\sk(\ee) \in \dom{\hh}$.
Suppose without loss of generality that $\hh(\sk(\ee))=v'$.
Moreover, let $h_{\ee,v'}$ denote the heap with $\dom{h_{\ee,v'}} = \{\sk(\ee) \}$ and $h_{\ee,v'}(\sk(\ee)) = v'$. 
Furthermore, let $\hh'$ be the heap such that 
$\hh' \sepcon h_{\ee,v'} = \hh$. We then have
\begin{align}
        & \wp{\ASSIGNH{x}{\ee}}{\ff}(\sk,\hh) \\
        \eeqtag{by assumption, see above}
        & \wp{\ASSIGNH{x}{\ee}}{\ff}(\sk,\hh' \sepcon h_{\ee,v'}) \\
        \eeqtag{Table \ref{table:wp}} 
        & \displaystyle\sup_{v \in \Ints}  \Big( \singleton{\ee}{v} \sepcon \bigl( \singleton{\ee}{v} \sepimp \ff\subst{x}{v} \bigr) \Big)(\sk, \hh' \sepcon h_{\ee,v'} )  \\
        \eeqtag{$\hh(\sk(\ee)) = v'$}
        & \singleton{\ee}{v'} \sepcon \bigl( \singleton{\ee}{v'} \sepimp \ff\subst{x}{v'} \bigr)(\sk, \hh' \sepcon h_{\ee,v'} ) \\
        \eeqtag{$\singleton{\ee}{v'}(\sk,h_{\ee,v'}) = 1$}
        & \bigl( \singleton{\ee}{v'} \sepimp \ff\subst{x}{v'} \bigr)(\sk, \hh') \\
        \eeqtag{$\sk(\ee) \not\in \dom{\hh'}$}
        & \ff\subst{x}{v'}(\sk, \hh' \sepcon h_{\ee,v'}) \\
        \lleqtag{$\ff \lleq \fg$ by assumption}
        & \fg\subst{x}{v'}(\sk, \hh' \sepcon h_{\ee,v'}) \\
        \eeqtag{$\sk(\ee) \not\in \dom{\hh'}$}
        & \bigl( \singleton{\ee}{v'} \sepimp \fg\subst{x}{v'} \bigr)(\sk, \hh') \\
        \eeqtag{$\singleton{\ee}{v'}(\sk,h_{\ee,v'}) = 1$}
        & \singleton{\ee}{v'} \sepcon \bigl( \singleton{\ee}{v'} \sepimp \fg\subst{x}{v'} \bigr)(\sk, \hh' \sepcon h_{\ee,v'} ) \\
        \eeqtag{$\hh(\sk(\ee)) = v'$}
        & \displaystyle\sup_{v \in \Ints} \Big( \singleton{\ee}{v} \sepcon \bigl( \singleton{\ee}{v} \sepimp \fg\subst{x}{v} \bigr) \Big)(\sk, \hh' \sepcon h_{\ee,v'} ) \\
        \eeqtag{Table \ref{table:wp}}
        & \wp{\ASSIGNH{x}{\ee}}{\fg}(\sk,\hh' \sepcon h_{\ee,v'}) \\
        \eeqtag{by assumption, see above}
        & \wp{\ASSIGNH{x}{\ee}}{\fg}(\sk,\hh).
\end{align}
Second, assume $\sk(\ee) \not\in \dom{\hh}$. Then
\begin{align}
        & \wp{\ASSIGNH{x}{\ee}}{\ff}(\sk,\hh) \\
        \eeqtag{Table \ref{table:wp}}
        & \displaystyle\sup_{v \in \Ints} \Big( \singleton{\ee}{v} \sepcon \bigl( \singleton{\ee}{v} \sepimp \ff\subst{x}{v} \bigr) \Big)(\sk, \hh ) \\
        \eeqtag{by assumption $\singleton{\ee}{v}(\sk,\hh) = 0$}
        & 0 \\
        \lleqtag{$0$ is least element of domain $\E$}
        & \wp{\ASSIGNH{x}{\ee}}{\fg}(\sk,\hh).
\end{align}
\emph{The case} $\HASSIGN{\ee}{\ee'}$. Let $(\sk,\hh) \in \States$. We distinguish two cases: $\sk(\ee) \in \dom{\hh}$ and $\sk(\ee) \not\in \dom{\hh}$. \\ \\
First, assume $\sk(\ee) \in \dom{\hh}$. 
Suppose without loss of generality that $\hh(\sk(\ee)) = s(v')$ and let $h_{\ee,v'}$ be the heap with $\dom{h_{\ee,v'}} = \{ \sk(\ee) \}$ and $h_{\ee,v'}(\sk(\ee)) = v'$.
Furthermore, let $\hh'$ be the heap with $\hh' \sepcon h_{\ee,v'} = \hh$. Finally, let $h_{\ee,\ee'}$ denote the heap with $\dom{h_{\ee,\ee'}} = \{\sk(\ee) \}$ and $h_{\ee,\ee'}(\sk(\ee)) = \sk(\ee')$. Then
\begin{align}
        & \wp{\HASSIGN{\ee}{\ee'}}{\ff}(\sk,\hh) \\
        \eeqtag{by assumption, see above}
        & \wp{\HASSIGN{\ee}{\ee'}}{\ff}(\sk,\hh' \sepcon h_{\ee,v'}) \\
        \eeqtag{Table \ref{table:wp}} 
        & \Big( \validpointer{\ee} \sepcon \bigl(\singleton{\ee}{\ee'} \sepimp \ff \bigr) \Big) (\sk,\hh' \sepcon h_{\ee,v'}) \\
        \eeqtag{by assumption, $\validpointer{\ee}(\sk,\hh) = 1$}
        & \bigl(\singleton{\ee}{\ee'} \sepimp \ff \bigr)(\sk,\hh') \\
        \eeqtag{$\sk(\ee) \not\in \dom{\hh'}$} 
        & \ff(\sk,\hh' \sepcon h_{\ee,\ee'}) \\
        \lleqtag{$\ff \ppreceq \fg$ by assumption}
        & \fg(\sk,\hh' \sepcon h_{\ee,\ee'}) \\
        \eeqtag{$\sk(\ee) \not\in \dom{\hh'}$} 
        & \bigl(\singleton{\ee}{\ee'} \sepimp \fg \bigr)(\sk,\hh') \\
        \eeqtag{by assumption, $\validpointer{\ee}(\sk,\hh) = 1$}
        & \Big( \validpointer{\ee} \sepcon \bigl(\singleton{\ee}{\ee'} \sepimp \fg \bigr) \Big) (\sk,\hh' \sepcon h_{\ee,v'}) \\
        \eeqtag{by assumption, $\validpointer{\ee}(\sk,\hh) = 1$}
        & \Big( \validpointer{\ee} \sepcon \bigl(\singleton{\ee}{\ee'} \sepimp \fg \bigr) \Big) (\sk,\hh) \\
        \eeqtag{Table \ref{table:wp}} 
        & \wp{\HASSIGN{\ee}{\ee'}}{\fg}(\sk,\hh).
\end{align}
Second, assume $\sk(\ee) \not\in \dom{\hh}$. Then
\begin{align}
        & \wp{\HASSIGN{\ee}{\ee'}}{\ff}(\sk,\hh) \\
        \eeqtag{Table \ref{table:wp}} 
        & \Big( \validpointer{\ee} \sepcon \bigl(\singleton{\ee}{\ee'} \sepimp \ff \bigr) \Big) (\sk,\hh) \\
        \eeqtag{by assumption}
        & 0 \\
        \lleqtag{$0$ is least element of domain $\E$}
        & \wp{\HASSIGN{\ee}{\ee'}}{\fg}(\sk,\hh).
\end{align}
\emph{The case} $\FREE{x}$. 
Let $(\sk,\hh) \in \States$. 
We distinguish two cases: $\sk(\ee) \in \dom{\hh}$ and $\sk(\ee) \not\in \dom{\hh}$. \\ \\
First, $\sk(\ee) \in \dom{\hh}$. 
Suppose without loss of generality that $\hh(\sk(\ee)) = v'$ and let $h_{\ee,v'}$ be the heap with $\dom{h_{\ee,v'}} = \{ \sk(\ee) \}$ and $h_{\ee,v'}(\sk(\ee)) = v'$.
Furthermore, let $\hh'$ be the heap with $\hh' \sepcon h_{\ee,v'} = \hh$. Then
\begin{align}
        & \wp{\FREE{x}}{\ff}(\sk,\hh) \\
        \eeqtag{by assumption, see above}
        & \wp{\FREE{x}}{\ff}(\sk, \hh' \sepcon h_{\ee,v'}) \\
        \eeqtag{Table \ref{table:wp}}
        & \big( \validpointer{\ee} \sepcon \ff \big) (\sk,\hh' \sepcon h_{\ee,v'}) \\
        \eeqtag{$\validpointer{\ee}(\sk,h_{\ee,v'}) = 1$}
        & \ff(\sk,\hh') \\
        \lleqtag{$\ff \preceq \fg$ by assumption}
        & \fg(\sk,\hh') \\
        \eeqtag{$\validpointer{\ee}(\sk,h_{\ee,v'}) = 1$}
        & \big( \validpointer{\ee} \sepcon \fg \big) (\sk,\hh' \sepcon h_{\ee,v'}) \\
        \eeqtag{by assumption, see above}
        & \big( \validpointer{\ee} \sepcon \fg \big) (\sk,\hh) \\
        \eeqtag{Table \ref{table:wp}}
        & \wp{\FREE{x}}{\fg}(\sk,\hh).
\end{align}
%
%
%
%
\emph{The case} $\COMPOSE{\cc_1}{\cc_2}$.
\begin{align}
        & \wp{\COMPOSE{\cc_1}{\cc_2}}{\ff} \\
        \eeqtag{Table~\ref{table:wp}} 
        & \wp{\cc_1}{\wp{\cc_2}{\ff}} \\
        \ppreceqtag{By I.H. on $\cc_2$ it holds that $\wp{\cc_2}{\ff} \preceq \wp{\cc_2}{\fg}$. Hence, I.H. on $\cc_1$ yields}
        & \wp{\cc_1}{\wp{\cc_2}{\fg}} \\
        \eeqtag{Table~\ref{table:wp}} 
        & \wp{\COMPOSE{\cc_1}{\cc_2}}{\fg}. 
\end{align}
\emph{The case} $\ITE{\guard}{\cc_1}{\cc_2}$. 
Let $(\sk,\hh) \in \States$. 
We distinguish two cases: $\iverson{\guard}(\sk,\hh) = 1$ and $\iverson{\neg \guard}(\sk,\hh) = 1$.

For $\iverson{\guard}(\sk,\hh) = 1$, consider the following:
\begin{align}
        & \wp{\ITE{\guard}{\cc_1}{\cc_2}}{\ff}(\sk,\hh) \\
        \eeqtag{Table~\ref{table:wp}} 
        & \big( \iverson{\guard} \cdot \wp{\cc_1}{\ff} + \iverson{\neg \guard} \cdot \wp{\cc_2}{\ff} \big)(\sk,\hh) \\
        \eeqtag{$\iverson{\guard}(\sk,\hh) = 1$ by assumption}
        & \wp{\cc_1}{\ff}(\sk,\hh) \\
        \lleqtag{I.H.\ on $\cc_1$} 
        & \wp{\cc_1}{\fg}(\sk,\hh) \\
        \eeqtag{$\iverson{\guard}(\sk,\hh) = 1$ by assumption}
        & \big( \iverson{\guard} \cdot \wp{\cc_1}{\fg} + \iverson{\neg \guard} \cdot \wp{\cc_2}{\fg} \big)(\sk,\hh) \\
        \eeqtag{Table~\ref{table:wp}} 
        & \wp{\ITE{\guard}{\cc_1}{\cc_2}}{\fg}(\sk,\hh).
\end{align}
For $\iverson{\neg \guard}(\sk,\hh) = 1$, the proof is analogous.
\emph{The case} $\PCHOICE{\cc_1}{p}{\cc_2}$. 
\begin{align}
        & \wp{\PCHOICE{\cc_1}{p}{\cc_2}}{\ff} \\
        \eeqtag{Table \ref{table:wp}}
        &\eeq p \cdot \underbrace{\wp{\cc_1}{\ff}}_{\preceq \wp{\cc_1}{\fg}} + (1- p) \cdot \underbrace{\wp{\cc_2}{\ff}}_{\preceq \wp{\cc_2}{\fg}} \\
        \ppreceqtag{I.H.\ on $\cc_1$ and I.H.\ on $\cc_2$} 
        & p \cdot \wp{\cc_1}{\fg} + (1- p) \cdot \wp{\cc_2}{\fg} \\
        \eeqtag{Table \ref{table:wp}}
        & \wp{\PCHOICE{\cc_1}{p}{\cc_2}}{\fg}.
\end{align}
\emph{The case} $\cc = \WHILEDO{\guard}{\cc_1}$. 
First notice that, since $\wp{\cc_1}{\cdot}$ is a monotone function by I.H., the functional
\begin{align}
   \charwp{\guard}{\cc_1}{\fh}(\fk) \eeq \iverson{\neg \guard} \cdot \fh + \iverson{\guard} \cdot \wp{\cc_1}{\fk}
\end{align}
is also monotonic for all $\fh \in \E$. Hence, by the constructive version of the fixed point theorem by Tarski and Knaster (cf.~\cite{cousot1979constructive}), 
for every expectation $\fh\in \E$, there is an ordinal $\oa$ such that
\begin{align}
   \wp{\WHILEDO{\guard}{\cc_1}}{\fh} \eeq \lfp \fk . \charwp{\guard}{\cc_1}{\fh}(\fk) \eeq \charwpn{\guard}{\cc_1}{\fh}{\oa}(0).
\end{align}
Thus, in order to prove that $\wp{\WHILEDO{\guard}{\cc_1}}{\cdot}$ is monotonic, it suffices to show that for $\ff,\fg \in \E$ with $\ff \preceq \fg$ and all ordinals $\oa$
\begin{align}
   \charwpn{\guard}{\cc_1}{\ff}{\oa}(0) \ppreceq \charwpn{\guard}{\cc_1}{\fg}{\oa}(0).
\end{align}
We proceed by transfinite induction on $\oa$.\\ \\
\emph{Induction Base $\oa=0$}. This case is trivial since 
\begin{align}
   \charwpn{\guard}{\cc_1}{\ff}{0}(0) \eeq 0 \preceq 0 \eeq \charwpn{\guard}{\cc_1}{\fg}{0}(0).
\end{align}
\emph{Successor Ordinals.} For successor ordinals, assume that $\charwpn{\guard}{\cc_1}{\ff}{\oa}(0) \preceq \charwpn{\guard}{\cc_1}{\fg}{\oa}(0)$. We derive
\begin{align}
        & \charwpn{\guard}{\cc_1}{\ff}{\oa+1}(0) \\
        \eeqtag{Definition of $\charwp{\guard}{\cc_1}{\ff}$}
        & \iverson{\neg \guard} \cdot \ff + \iverson{\guard} \cdot \wp{\cc_1}{\charwpn{\guard}{\cc_1}{\ff}{\oa}(0)} \\
        \ppreceqtag{$\ff \preceq \fg$, I.H.}
        & \iverson{\neg \guard} \cdot \fg + \iverson{\guard} \cdot \wp{\cc_1}{\charwpn{\guard}{\cc_1}{\fg}{\oa}(0)} \\
        \eeqtag{Definition of $\charwp{\guard}{\cc_1}{\ff}$}
        & \charwpn{\guard}{\cc_1}{\fg}{\oa+1}(0).
\end{align}
\emph{Limit Ordinals.}  
Let $\oa$ be a limit ordinal and for all $\ob < \oa$, $\charwp{\guard}{\cc_1}{\ff}^{\ob}(0) \preceq \charwp{\guard}{\cc_1}{\fg}^{\ob}(0)$. We have
\begin{align}
        & \charwpn{\guard}{\cc_1}{\ff}{\oa}(0) \\
        \eeqtag{Def.\ $\charwpn{\guard}{\cc_1}{\ff}{\oa}(0)$ for $\oa$ limit ordinal} 
        & \sup_{\ob < \oa} \charwpn{\guard}{\cc_1}{\ff}{\ob}(0) \\
        \ppreceqtag{I.H. on $\charwpn{\guard}{\cc_1}{\ff}{\ob}(0)$} 
        & \sup_{\ob < \oa} \charwpn{\guard}{\cc_1}{\fg}{\ob}(0) \\
        \eeqtag{Def.\ $\charwpn{\guard}{\cc_1}{\ff}{\oa}(0)$ for $\oa$ limit ordinal} 
        & \charwpn{\guard}{\cc_1}{\fg}{\oa}(0).
\end{align}
\end{proof}
\subsection{Proof of Theorems~\ref{thm:wp:basic}.\ref{thm:wp:basic:super-linearity} and~\ref{thm:wp:basic}.\ref{thm:wp:basic:linearity} (Linearity)}
\label{app:wp:basic:linearity}

\begin{proof}
By induction on the structure of a $\hpgcl$ program $\cc$. First, we consider the base cases.
Notice that in case we show linearity, super-linearity follows immediately.\\ \\
\emph{The case} $\cc = \SKIP$. We show linearity as follows:
\begin{align}
   &\wp{\SKIP}{a \cdot \ff + \fg } \\
   \eeqtag{Table \ref{table:wp}}
   & a \cdot \ff + \fg \\
   \eeqtag{Table \ref{table:wp}} 
   & a \cdot \wp{\SKIP}{\ff} + \wp{\SKIP}{\fg}~.
\end{align}
\emph{The case} $\cc = \ASSIGN{x}{\ee}$. We show linearity as follows:
\begin{align}
   &\wp{\ASSIGN{x}{\ee}}{a \cdot \ff + \fg } \\
   \eeqtag{Table \ref{table:wp}}
   & (a \cdot \ff + \fg)\subst{x}{\ee} \\
   \eeqtag{Substitution is distributive}
   &a \cdot \ff\subst{x}{\ee} + \fg\subst{x}{\ee} \\
   \eeqtag{Table \ref{table:wp}} 
   & a \cdot \wp{\ASSIGN{x}{\ee}}{\ff} + \wp{\ASSIGN{x}{\ee}}{\fg}~.
\end{align}
\emph{The case} $\cc = \ALLOC{x}{\vec{\ee}}$. We show super-linearity point-wise as follows:
We make use of two facts. First, for two subsets of the non-negative real numbers $A, B \subseteq \PosReals$ and all $C \subseteq A$, $D \subseteq B$ it holds that
\begin{align}\label{eqn:linearity-inf}
     \inf \left\{ a+b ~\mid~ a \in C, b \in D \right\} ~{} \geq {} ~
     \inf \left\{ a ~\mid~ a \in A\} \right\}
     + \inf \left\{ b ~\mid~  b \in B \right\}~.
\end{align}
Second, for every $A \subseteq \PosReals$ and every $\cc \in \PosReals$, we have
\begin{align}\label{eqn:linearity-inf-mult}
     \inf \left\{ \cc \cdot a ~\mid~ a \in A \right\} ~{} \geq {} ~
     \cc \cdot \inf \left\{ a ~\mid~ a \in A \right\}~.
\end{align}
Now let $(\sk,\hh) \in \States$. We have
\begin{align}
   &\wp{\ALLOC{x}{\vec{\ee}}}{a \cdot \ff + \fg}(\sk,\hh) \\
   \eeqtag{Table \ref{table:wp}}
   &\big( \displaystyle\inf_{v \in \AVAILLOC{\vec{\ee}}} \singleton{v}{\vec{\ee}} \sepimp (a \cdot \ff + \fg)\subst{x}{v} \big) (\sk,\hh) \\
   \eeqtag{Definition of $\sepimp$}
   & \displaystyle\inf_{v \in \AVAILLOC{\vec{\ee}}} 
        \inf_{\hh'} \big\{ (a \cdot \ff + \fg)\subst{x}{v} \big) (\sk,\hh \sepcon \hh') ~\mid~ \hh \disjoint \hh', (\sk,\hh') \models \singleton{v}{\vec{\ee}}  \big\} \\
   \eeqtag{Substitution distributes}
   & \displaystyle\inf_{v \in \AVAILLOC{\vec{\ee}}} 
        \inf_{\hh'} \big\{ (a \cdot \ff)\subst{x}{v} + \fg\subst{x}{v} \big) (\sk,\hh \sepcon \hh') ~\mid~ \hh \disjoint \hh', (\sk,\hh') \models \singleton{v}{\vec{\ee}}  \big\} \\
   \ggeqtag{Equation \ref{eqn:linearity-inf}}
   & \displaystyle\inf_{v \in \AVAILLOC{\vec{\ee}}} \big(
        \inf_{\hh'} \big\{ (a \cdot \ff)\subst{x}{v} \big) (\sk,\hh \sepcon \hh') ~\mid~ \hh \disjoint \hh', (\sk,\hh') \models \singleton{v}{\vec{\ee}}  \big\} \\
   &\quad + \inf_{\hh'} \big\{ \fg\subst{x}{v} \big) (\sk,\hh \sepcon \hh') ~\mid~ \hh \disjoint \hh', (\sk,\hh') \models \singleton{v}{\vec{\ee}}  \big\}  \big)
    \notag \\
   \ggeqtag{Equation \ref{eqn:linearity-inf}}
   & \displaystyle\inf_{v \in \AVAILLOC{\vec{\ee}}} 
        \inf_{\hh'} \big\{ (a \cdot \ff)\subst{x}{v} \big) (\sk,\hh \sepcon \hh') ~\mid~ \hh \disjoint \hh', (\sk,\hh') \models \singleton{v}{\vec{\ee}}  \big\} \\
   &\quad +  \displaystyle\inf_{v \in \AVAILLOC{\vec{\ee}}} \inf_{\hh'} \big\{ \fg\subst{x}{v} \big) (\sk,\hh \sepcon \hh') ~\mid~ \hh \disjoint \hh', (\sk,\hh') \models \singleton{v}{\vec{\ee}}  \big\}  
    \notag \\
   \ggeqtag{Equation \ref{eqn:linearity-inf-mult}}
   &\displaystyle\inf_{v \in \AVAILLOC{\vec{\ee}}} 
        a \cdot \inf_{\hh'} \big\{  \ff\subst{x}{v} \big) (\sk,\hh \sepcon \hh') ~\mid~ \hh \disjoint \hh', (\sk,\hh') \models \singleton{v}{\vec{\ee}}  \big\} \\
   &\quad +  \displaystyle\inf_{v \in \AVAILLOC{\vec{\ee}}} \inf_{\hh'} \big\{ \fg\subst{x}{v} \big) (\sk,\hh \sepcon \hh') ~\mid~ \hh \disjoint \hh', (\sk,\hh') \models \singleton{v}{\vec{\ee}}  \big\}  
    \notag \\
     \ggeqtag{Equation \ref{eqn:linearity-inf-mult}}
   &a \cdot \displaystyle\inf_{v \in \AVAILLOC{\vec{\ee}}} 
         \inf_{\hh'} \big\{  \ff\subst{x}{v} \big) (\sk,\hh \sepcon \hh') ~\mid~ \hh \disjoint \hh', (\sk,\hh') \models \singleton{v}{\vec{\ee}}  \big\} \\
   &\quad +  \displaystyle\inf_{v \in \AVAILLOC{\vec{\ee}}} \inf_{\hh'} \big\{ \fg\subst{x}{v} \big) (\sk,\hh \sepcon \hh') ~\mid~ \hh \disjoint \hh', (\sk,\hh') \models \singleton{v}{\vec{\ee}}  \big\}  
    \notag \\
    \ggeqtag{Definition of $\sepimp$}
    &a \cdot \big( \displaystyle\inf_{v \in \AVAILLOC{\vec{\ee}}} 
         \singleton{v}{\vec{\ee}} \sepimp \ff\subst{x}{v} \big) (\sk,\hh) \\
   &\quad +  \big( \displaystyle\inf_{v \in \AVAILLOC{\vec{\ee}}} 
         \singleton{v}{\vec{\ee}} \sepimp \fg\subst{x}{v} \big) (\sk,\hh)
    \notag \\
    \eeqtag{Table \ref{table:wp}}
    &a \cdot \wp{\ALLOC{x}{\vec{\ee}}}{\ff}(\sk,\hh)  + \wp{\ALLOC{x}{\vec{\ee}}}{\fg}(\sk,\hh)~.
\end{align}
\emph{The case $\cc = \ASSIGNH{x}{\ee}$}. We show linearity as follows:
\begin{align}
   &\wp{\ASSIGNH{x}{\ee}}{a \cdot \ff +\fg} \\
   \eeqtag{Table \ref{table:wp}}
   &\displaystyle\sup_{v \in \Ints} \singleton{\ee}{v} \sepcon \bigl( \singleton{\ee}{v} \sepimp (a \cdot \ff +\fg)\subst{x}{v} \bigr) \\
   \eeqtag{Alternative version of the rule for heap lookup} 
   &\displaystyle\sup_{v \in \Ints} \containsPointer{\ee}{v} \cdot (a \cdot \ff +\fg)\subst{x}{v} \bigr) \\
   \eeqtag{Substitution distributes}
   &\displaystyle\sup_{v \in \Ints} \containsPointer{\ee}{v} \cdot (a \cdot \ff\subst{x}{v} +\fg \subst{x}{v}) \bigr) \\
   \eeqtag{Distributivity of $\cdot$}
   &\displaystyle\sup_{v \in \Ints} \left( \containsPointer{\ee}{v} \cdot (a \cdot \ff\subst{x}{v})+   \containsPointer{\ee}{v} \cdot \fg\subst{x}{v} \right)  \\
   \eeqtag{$\forall (\sk,\hh) \,\exists$ at most one $v \in \Ints$ such that $\containsPointer{\ee}{v}(\sk,\hh) = 1$}
   &\displaystyle\sup_{v \in \Ints} \containsPointer{\ee}{v} \cdot (a \cdot \ff\subst{x}{v})+  \displaystyle\sup_{v \in \Ints}  \containsPointer{\ee}{v} \cdot \fg\subst{x}{v}   \\
   \eeqtag{$a$ does not depend on $v$}
   &a \cdot \displaystyle\sup_{v \in \Ints} \containsPointer{\ee}{v} \cdot \ff\subst{x}{v}+  \displaystyle\sup_{v \in \Ints}  \containsPointer{\ee}{v} \cdot \fg\subst{x}{v}   \\
   \eeqtag{Alternative version of the rule for heap lookup}
   &a \cdot \big( \displaystyle\sup_{v \in \Ints} \singleton{\ee}{v} \sepcon \bigl( \singleton{\ee}{v} \sepimp \ff\subst{x}{v} \bigr) \big)
     +  \displaystyle\sup_{v \in \Ints} \singleton{\ee}{v} \sepcon \bigl( \singleton{\ee}{v} \sepimp \fg\subst{x}{v} \bigr)  \\
   \eeqtag{Table \ref{table:wp}}
   &a \cdot \wp{\ASSIGNH{x}{\ee}}{\ff}
     +  \wp{\ASSIGNH{x}{\ee}}{\fg}~.  
\end{align}
\emph{The case $\cc = \HASSIGN{\ee}{\ee'}$}. We show linearity point-wise as follows:
Let $(\sk,\hh) \in \States$. We distinguish the cases $\sk(\ee) \in \dom{\hh}$ and $\sk(\ee) \not\in \dom{\hh}$. If $\sk(\ee) \not\in \dom{\hh}$, then
\begin{align}
     &\wp{\HASSIGN{\ee}{\ee'}}{a \cdot \ff +\fg }(\sk,\hh) \\
     \eeqtag{Table \ref{table:wp}}
     &\big( \validpointer{\ee} \sepcon \bigl(\singleton{\ee}{\ee'} \sepimp (a \cdot \ff +\fg) \bigr) \big) (\sk,\hh)  \\
     \eeqtag{$\sk(\ee) \not\in \dom{\hh}$}
     & 0 \\
     \eeqtag{$\sk(\ee) \not\in \dom{\hh}$}
     &a \cdot \big( \validpointer{\ee} \sepcon \bigl(\singleton{\ee}{\ee'} \sepimp \ff \bigr) \big) (\sk,\hh)
     +\big( \validpointer{\ee} \sepcon \bigl(\singleton{\ee}{\ee'} \sepimp \fg \bigr) \big) (\sk,\hh) \\
     \eeqtag{Table \ref{table:wp}}
     &a \cdot \wp{\HASSIGN{\ee}{\ee'}}{\ff}(\sk,\hh) + \wp{\HASSIGN{\ee}{\ee'}}{\fg}(\sk,\hh)~.
\end{align}
Now let $\sk(\ee) \in \dom{\hh}$. For two arithmetic expressions $\ee_1, \ee_2$, we denote by $\hh_{\ee_1,\ee_2}$ the heap 
with $\{ \sk(\ee_1) \} = \dom{\hh_{\ee_1,\ee_2}}$ and $\hh_{\ee_1,\ee_2}(\sk(\ee_1)) = \sk(\ee_2)$. The heap $\hh$ is thus of the form
$\hh = \hh' \sepcon \hh_{\ee,v}$ for some heap $\hh'$ and some $v \in \Ints$. We have
\begin{align}
     &\wp{\HASSIGN{\ee}{\ee'}}{a \cdot \ff +\fg }(\sk,\hh) \\
     \eeqtag{Table \ref{table:wp}}
     &\big( \validpointer{\ee} \sepcon \bigl(\singleton{\ee}{\ee'} \sepimp (a \cdot \ff +\fg) \bigr) \big) (\sk,\hh) \\
     \eeqtag{Assumption}
     &\big( \validpointer{\ee} \sepcon \bigl(\singleton{\ee}{\ee'} \sepimp (a \cdot \ff +\fg) \bigr) \big) (\sk,\hh' \sepcon \hh_{\ee,v}) \\
     \eeqtag{$(\validpointer{\ee} \sepcon u)(\sk,\hh' \sepcon \hh_{\ee,v}) = u (\sk,\hh')$ for all $u \in \E$}
     &\big( \bigl(\singleton{\ee}{\ee'} \sepimp (a \cdot \ff +\fg) \bigr) \big) (\sk,\hh') \\
     \eeqtag{$\sk(\ee) \not\in \dom{\hh'}$}
     &(a \cdot \ff +\fg) (\sk,\hh' \sepcon \hh_{\ee,\ee'}) \\
     \eeqtag{Definition of $\cdot$ and $+$ w.r.t.\ $\E$}
     &a \cdot \ff(\sk,\hh' \sepcon \hh_{\ee,\ee'}) + \fg(\sk,\hh' \sepcon \hh_{\ee,\ee'})  \\
     \eeqtag{$\sk(\ee) \not\in \dom{\hh'}$}
     &a \cdot \bigl(\singleton{\ee}{\ee'} \sepimp \ff \bigr) (\sk,\hh') + \bigl(\singleton{\ee}{\ee'} \sepimp  \fg \bigr) (\sk,\hh')  \\
     \eeqtag{$u (\sk,\hh') = (\validpointer{\ee} \sepcon u)(\sk,\hh' \sepcon \hh_{\ee,v})$ for all $u \in \E$}
     &a \cdot \big( \validpointer{\ee} \sepcon \bigl(\singleton{\ee}{\ee'} \sepimp \ff \bigr) \big) (\sk,\hh' \sepcon \hh_{\ee,\ee'}) \\
     &\quad + \big(  \validpointer{\ee} \sepcon \bigl(\singleton{\ee}{\ee'} \sepimp  \fg \bigr)\big) (\sk,\hh' \sepcon \hh_{\ee,\ee'}) 
     \notag \\
     \eeqtag{Table \ref{table:wp}}
     &a \cdot \wp{\HASSIGN{\ee}{\ee'}}{\ff}(\sk,\hh) + \wp{\HASSIGN{\ee}{\ee'}}{\fg}(\sk,\hh)~.
\end{align}
\emph{The case $\cc = \FREE{\ee}$}. We show linearity point-wise as follows:
We distinguish the cases $\sk(\ee) \in \dom{\hh}$ and $\sk(\ee) \not\in \dom{\hh}$. If $\sk(\ee) \not\in \dom{\hh}$, then
\begin{align}
    &\wp{\FREE{\ee}}{a \cdot \ff +\fg }(\sk,\hh) \\
    \eeqtag{Table \ref{table:wp}}
    &\big( \validpointer{\ee} \sepcon ( a \cdot \ff +\fg) \big) (\sk,\hh) \\
    \eeqtag{$\sk(\ee) \not\in \dom{\hh}$}
    &0 \\
    \eeqtag{$\sk(\ee) \not\in \dom{\hh}$} 
    & a \cdot \big( \validpointer{\ee} \sepcon \ff \big) (\sk,\hh) + \big( \validpointer{\ee} \sepcon \fg \big) (\sk,\hh) \\
    \eeqtag{Table \ref{table:wp}}
    & a \cdot \wp{\FREE{\ee}}{\ff} (\sk,\hh) + \wp{\FREE{\ee}}{\fg}(\sk,\hh)~.
\end{align}
If $\sk(\ee) \in \dom{\hh}$, then the heap $\hh$ is of the form $\hh = \hh' \sepcon \hh_{\ee, v}$ for some heap $\hh'$ and some $v \in \Ints$.
We have
\begin{align}
     &\wp{\FREE{\ee}}{a \cdot \ff +\fg }(\sk,\hh) \\
    \eeqtag{Table \ref{table:wp}}
    &\big( \validpointer{\ee} \sepcon ( a \cdot \ff +\fg) \big) (\sk,\hh) \\
    \eeqtag{Assumption}
    &\big( \validpointer{\ee} \sepcon ( a \cdot \ff +\fg) \big) (\sk,\hh' \sepcon \hh_{\ee, v}) \\
    \eeqtag{$(\validpointer{\ee} \sepcon u)(\sk,\hh' \sepcon \hh_{\ee, v}) = u (\sk,\hh')$ for all $u \in \E$}
    &( a \cdot \ff +\fg) (\sk,\hh') \\
    \eeqtag{Definition of $\cdot$ and $+$ w.r.t.\ $\E$}
    &a \cdot \ff(\sk,\hh') + \fg(\sk,\hh') \\
    \eeqtag{$u (\sk,\hh')= (\validpointer{\ee} \sepcon u)(\sk,\hh' \sepcon \hh_{\ee, v})$ for all $u \in \E$}
    &a \cdot (\validpointer{\ee} \sepcon \ff) (\sk,\hh' \sepcon \hh_{\ee, v}) + (\validpointer{\ee} \sepcon \fg) (\sk,\hh' \sepcon \hh_{\ee, v}) \\
    \eeqtag{Table \ref{table:wp}}
    &a \cdot \wp{\FREE{\ee}}{\ff}(\sk,\hh' \sepcon \hh_{\ee, v}) + \wp{\FREE{\ee}}{\fg}(\sk,\hh' \sepcon \hh_{\ee, v}) \\
    \eeqtag{Assumption}
    &a \cdot \wp{\FREE{\ee}}{\ff}(\sk,\hh) + \wp{\FREE{\ee}}{\fg}(\sk,\hh)~.
\end{align}
As the induction hypothesis now assume that for some arbitrary, but fixed, $\cc_1, \cc_2 \in \hpgcl$, all $\ff,\fg \in \E$ and
all $a \in \PosReals$ it holds that
\begin{align}
  &\wp{\cc_1}{a \cdot \ff +\fg} \eeq a \cdot \wp{\cc_1}{\ff} + \wp{\cc_1}{\fg}\\
  \text{and} \quad & \wp{\cc_2}{a \cdot \ff +\fg} \eeq a \cdot \wp{\cc_2}{\ff} + \wp{\cc_1}{\fg}~.
\end{align}
Moreover, assume that for some arbitrary, but fixed, $\cc_1', \cc_2' \in \hpgcl$ containing no instances of $\ALLOC{x}{\ee}$, all $\ff,\fg \in \E$, and
all $a \in \PosReals$ it holds that
\begin{align}
  &\wp{\cc_1'}{a \cdot \ff +\fg} \ssucceq a \cdot \wp{\cc_1'}{\ff} + \wp{\cc_1'}{\fg}\\
  \text{and}\quad & \wp{\cc_2'}{a \cdot \ff +\fg} \ssucceq a \cdot \wp{\cc_2'}{\ff} + \wp{\cc_2'}{\fg}~.
\end{align}
\emph{The case $\cc = \COMPOSE{\cc_1}{\cc_2}$}. We have
\begin{align}
    &\wp{\COMPOSE{\cc_1}{\cc_2}}{a\cdot \ff + \fg} \\
    \eeqtag{Table \ref{table:wp}}
    &\wp{\cc_1}{\wp{\cc_2}{a\cdot \ff + \fg}} \\
    \eeqtag{I.H.\ on $\cc_2$}
    &\wp{\cc_1}{a \cdot \wp{\cc_2}{\ff} + \wp{\cc_2}{\fg}} \\
    \eeqtag{I.H.\ on $\cc_1$}
    &a\cdot \wp{\cc_1}{\wp{\cc_2}{\ff}} + \wp{\cc_1}{\wp{\cc_2}{\fg}} \\
    \eeqtag{Table \ref{table:wp}}
    &a \cdot \wp{\COMPOSE{\cc_1}{\cc_2}}{\ff} + \wp{\COMPOSE{\cc_1}{\cc_2}}{\fg}~.
\end{align}
The proof for super-linearity is completely analogous. \\ \\
\noindent
\emph{The case $\cc = \ITE{\guard}{\cc_1}{\cc_2}$}. We have
\begin{align}
   &\wp{\ITE{\guard}{\cc_1}{\cc_2}}{a \cdot \ff +\fg} \\
   \eeqtag{Table \ref{table:wp}}
   &\iverson{\guard} \cdot \wp{\cc_1}{a \cdot \ff +\fg} + \iverson{\neg \guard} \cdot \wp{\cc_2}{a \cdot \ff +\fg} \\
   \eeqtag{I.H.\ on $\cc_1$}
   &\iverson{\guard} \cdot (a \cdot \wp{\cc_1}{\ff} +\wp{\cc_1}{\fg}) +\iverson{\neg \guard} \cdot \wp{\cc_2}{a \cdot \ff +\fg} \\
   \eeqtag{I.H.\ on $\cc_2$}
   &\iverson{\guard} \cdot (a \cdot \wp{\cc_1}{\ff} +\wp{\cc_1}{\fg}) + \iverson{\neg \guard} \cdot (a \cdot \wp{\cc_2}{\ff} +\wp{\cc_2}{\fg}) \\
   \eeqtag{Algebra}
   &a \cdot (\iverson{\guard} \cdot \wp{\cc_1}{\ff} + \iverson{\neg \guard} \cdot \wp{\cc_2}{\ff})
    + (\iverson{\guard} \cdot \wp{\cc_1}{\fg} + \iverson{\neg \guard} \cdot \wp{\cc_2}{\fg}) \\
   \eeqtag{Table \ref{table:wp}}
   &a \cdot \wp{\ITE{\guard}{\cc_1}{\cc_2}}{\ff} + \wp{\ITE{\guard}{\cc_1}{\cc_2}}{\fg}~.
\end{align}
The proof for super-linearity is completely analogous. \\ \\
\emph{The case $\cc = \PCHOICE{\cc_1}{p}{\cc_2}$}. We have
\begin{align}
   &\wp{\PCHOICE{\cc_1}{p}{\cc_2}}{a \cdot \ff + \fg} \\
   \eeqtag{Table \ref{table:wp}}
   &p \cdot \wp{\cc_1}{a \cdot \ff + \fg} + (1- p) \cdot \wp{\cc_2}{a \cdot \ff + \fg} \\
   \eeqtag{I.H.\ on $\cc_1$}
   &p \cdot (a \cdot \wp{\cc_1}{\ff} +\wp{\cc_1}{\fg}) + (1- p) \cdot \wp{\cc_2}{a \cdot \ff + \fg} \\
   \eeqtag{I.H.\ on $\cc_2$}
   &p \cdot (a \cdot \wp{\cc_1}{\ff} +\wp{\cc_1}{\fg}) + (1- p) \cdot (a \cdot \wp{\cc_2}{\ff} +\wp{\cc_2}{\fg}) \\
   \eeqtag{Algebra}
   & p \cdot a \cdot \wp{\cc_1}{\ff} + p \cdot \wp{\cc_1}{\fg} \\
   &\quad  + (1-p) \cdot a \cdot \wp{\cc_2}{\ff} + (1-p) \cdot \wp{\cc_2}{\fg} 
   \notag \\
   \eeqtag{Algebra}
   & a \cdot(p \cdot \wp{\cc_1}{\ff} + (1-p) \cdot \wp{\cc_2}{\ff}) \\
   & \quad + (p \cdot \wp{\cc_1}{\fg} + (1-p) \cdot \wp{\cc_2}{\fg}) 
   \notag \\
   \eeqtag{Table \ref{table:wp}}
   & a \cdot \wp{\PCHOICE{\cc_1}{p}{\cc_2}}{\ff} + \wp{\PCHOICE{\cc_1}{p}{\cc_2}}{\fg}~.
\end{align}
The proof for super-linearity is completely analogous. \\ \\
\emph{The case} $\cc = \WHILEDO{\guard}{\cc_1}$. We make use of the fact that there is an ordinal $\oa$ such that
\begin{align}
     \wp{\WHILEDO{\guard}{\cc_1}}{a \cdot \ff + \fg} \eeq \charwpn{\guard}{\cc_1}{a \cdot \ff + \fg}{\oa}(0)~.
\end{align}
Suppose for the moment that we already established the following:
\begin{align}\label{eqn:linearty-loop-assumption}
   \charwpn{\guard}{\cc_1}{a \cdot \ff + \fg}{\oc}(0) = a \cdot \charwpn{\guard}{\cc_1}{\ff}{\oc}(0) + \charwpn{\guard}{\cc_1}{\fg}{\oc}(0) 
   \quad \forall \, \text{ordinals} ~ \oc~.
\end{align}
Now let $\oa$, $\ob$, and $\oc$ be ordinals such that
%
\begin{align}
    & \charwpn{\guard}{\cc_1}{a \cdot \ff + \fg}{\oa}(0) \eeq \lfp X\mydot \charwp{\guard}{\cc_1}{a \cdot \ff + \fg}(X) \\
    &\charwpn{\guard}{\cc_1}{\ff}{\ob}(0) \eeq \lfp X\mydot \charwp{\guard}{\cc_1}{\ff}(X) \\
    &\charwpn{\guard}{\cc_1}{\fg}{\oc}(0) \eeq \lfp X\mydot \charwp{\guard}{\cc_1}{\fg}(X)~.
\end{align}
By choosing $\vartheta = \max \{\oa, \ob, \oc \}$, we obtain
\begin{align}
    &\wp{\WHILEDO{\guard}{\cc_1}}{a \cdot \ff + \fg}  \\
    \eeqtag{By assumption}
    &\charwpn{\guard}{\cc_1}{a \cdot \ff + \fg}{\oa}(0) \\
    \eeqtag{$\charwpn{\guard}{\cc_1}{a \cdot \ff + \fg}{\oa}(0)$ is a fixed point and $\vartheta \geq \oa$}
    &\charwpn{\guard}{\cc_1}{a \cdot \ff + \fg}{\vartheta}(0) \\
    \eeqtag{By Equation \ref{eqn:linearty-loop-assumption}}
    & a \cdot \charwpn{\guard}{\cc_1}{\ff}{\vartheta}(0) + \charwpn{\guard}{\cc_1}{\fg}{\vartheta}(0) \\
    \eeqtag{$\charwpn{\guard}{\cc_1}{\ff}{\ob}(0)$ is a fixed point and $\vartheta \geq \ob$}
    & a \cdot \charwpn{\guard}{\cc_1}{\ff}{\ob}(0) + \charwpn{\guard}{\cc_1}{\fg}{\vartheta}(0) \\
    \eeqtag{$\charwpn{\guard}{\cc_1}{\fg}{\oc}(0)$ is a fixed point and $\vartheta \geq \oc$}
    &a \cdot \charwpn{\guard}{\cc_1}{\ff}{\ob}(0) + \charwpn{\guard}{\cc_1}{\fg}{\oc}(0) \\
    \eeqtag{By assumption} 
    &a \cdot (\lfp X\mydot \charwp{\guard}{\cc_1}{\ff}(X)) + (\lfp X\mydot \charwp{\guard}{\cc_1}{\fg}(X)) \\
    \eeqtag{Table \ref{table:wp}}
    &a \cdot \wp{\WHILEDO{\guard}{\cc_1}}{\ff} + \wp{\WHILEDO{\guard}{\cc_1}}{\fg} ~.
\end{align}
Hence, it suffices to prove Equation \ref{eqn:linearty-loop-assumption}.
We proceed by transfinite induction on $\oc$. \\ \\
\noindent 
\emph{The case $\oc = 0$.} We have
\begin{align}
   &\charwpn{\guard}{\cc_1}{a \cdot \ff + \fg}{0}(0) \\
   \eeqtag{By definition}
   &0 \\
   \eeqtag{By definition}
   &a \cdot \charwpn{\guard}{\cc_1}{\ff}{0} + \charwpn{\guard}{\cc_1}{\fg}{0}~.
\end{align}
\emph{The case $\oc$ successor ordinal.} We have
\begin{align}
   &\charwpn{\guard}{\cc_1}{a \cdot \ff + \fg}{\oc +1 }(0) \\
   \eeqtag{By definition}
   &\charwp{\guard}{\cc_1}{a \cdot \ff + \fg}(\charwpn{\guard}{\cc_1}{a \cdot \ff + \fg}{\oc}(0)) \\
   \eeqtag{I.H. on $\oc$}
   &\charwp{\guard}{\cc_1}{a \cdot \ff + \fg}(a \cdot \charwpn{\guard}{\cc_1}{\ff}{\oc}(0) + \charwpn{\guard}{\cc_1}{\fg}{\oc}(0)) \\
   \eeqtag{Definition of $\charwp{\guard}{\cc_1}{a \cdot \ff + \fg}(\cdot)$}
   &\iverson{\neg \guard} \cdot (a \cdot \ff + \fg) 
   + \iverson{\guard} \cdot \wp{\cc_1}{a \cdot \charwpn{\guard}{\cc_1}{\ff}{\oc}(0) + \charwpn{\guard}{\cc_1}{\fg}{\oc}(0)} \\
   \eeqtag{I.H.\ on $\cc_1$}
   &\iverson{\neg \guard} \cdot (a \cdot \ff + \fg) 
   + \iverson{\guard} \cdot  (a\cdot \wp{\cc_1}{\charwpn{\guard}{\cc_1}{\ff}{\oc}(0)}  + \wp{\cc_1}{\charwpn{\guard}{\cc_1}{\fg}{\oc}(0)}) \\
   \eeqtag{Algebra}
   & a \cdot \big( \iverson{\guard} \cdot \wp{\cc_1}{\charwpn{\guard}{\cc_1}{\ff}{\oc}(0)} + \iverson{\neg \guard} \cdot \ff  \big) \\
   &\quad + \big( \iverson{\guard} \cdot \wp{\cc_1}{\charwpn{\guard}{\cc_1}{\fg}{\oc}(0)} + \iverson{\neg \guard} \cdot \fg \big) 
   \notag \\
   \eeqtag{By definition}
   &a \cdot \charwpn{\guard}{\cc_1}{\ff}{\oc +1}(0) + \charwpn{\guard}{\cc_1}{\fg}{\oc +1}(0)~.
\end{align}
\emph{The case $\oc$ limit ordinal.} Suppose Equation~\ref{eqn:linearty-loop-assumption} holds for all $\ob < \oc$. We have
\begin{align}
   &\charwpn{\guard}{\cc_1}{a \cdot \ff + \fg}{\oc}(0) \\
   \eeqtag{By definition}
   &\displaystyle\sup_{\ob < \oc} \charwpn{\guard}{\cc_1}{a \cdot \ff + \fg}{\ob}(0) \\
   \eeqtag{I.H.\ on $\delta$}
   &\displaystyle\sup_{\ob< \oc} \big( a \cdot \charwpn{\guard}{\cc_1}{\ff}{\ob}(0) + \charwpn{\guard}{\cc_1}{\fg}{\ob}(0) \big) \\
   \eeqtag{$\charwpn{\guard}{\cc_1}{\ff}{\delta}(0)$ and $\charwpn{\guard}{\cc_1}{\fg}{\delta}(0)$ monotonic in $\delta$}
   &a \cdot \big( \displaystyle\sup_{\ob < \oc}  \charwpn{\guard}{\cc_1}{\ff}{\ob}(0) \big) +
    \big( \displaystyle\sup_{\ob < \oc}  \charwpn{\guard}{\cc_1}{\fg}{\ob}(0) \big) \\
   \eeqtag{By definition}
   &a \cdot  \charwpn{\guard}{\cc_1}{\ff}{\oc}(0)  +
     \charwpn{\guard}{\cc_1}{\fg}{\oc}(0) ~.
\end{align}
The proof for super-linearity is completely analogous.
\end{proof}
\subsection{Proof of Theorem~\ref{thm:wp:basic}.\ref{thm:wp:basic:preservation-of-0} (Strictness)}
\label{app:wp:basic:preservation-of-0}
\begin{proof}
In order to show $\wp{\cc}{0} = 0$ consider the following:
\begin{align}
        & \wp{\cc}{0} \eeq \wp{\cc}{0 \cdot 0} \\
        \ppreceqtag{Theorem~\ref{thm:wp:basic}.\ref{thm:wp:basic:super-linearity}}
        & 0 \cdot \wp{\cc}{0} \eeq 0.
\end{align}
\end{proof}
\subsection{Proof of Theorem~\ref{thm:wp:basic}.\ref{thm:wp:basic:1-boundedness} (One-Boundedness)}
\label{app:wp:basic:1-boundedness}
\begin{proof}
In order to show $\wp{\cc}{\iverson{\preda}} \ppreceq 1$, first notice that a straightforward induction on the structure of \hpgcl programs 
yields that for each $\cc \in \hpgcl$, we have $\wp{\cc}{1} \preceq 1$.
Since $\iverson{\preda} \preceq 1$ and $\wpsymbol$ is monotone (Theorem~\ref{thm:wp:basic}.\ref{thm:wp:basic:monotonicity}), we then conclude $\wp{\cc}{\iverson{\preda}} \preceq 1$.
\end{proof}
\subsection{Proof of Theorem~\ref{thm:wp:basic}.\ref{thm:wp:basic:continuity} (Continuity)}
\label{app:wp:basic:continuity}
\begin{proof}
Assume $\cc$ does not contain an allocation statement, i.e. a statement of the form $\ALLOC{x}{\vec{\ee}}$.
We then have to show that for every increasing $\omega$-chain $\ff_1 \preceq \ff_2 \ppreceq \ldots$ in $\E$, we have
\[ \sup_{n} \wp{\cc}{\ff_n} \eeq \wp{\cc}{\sup_{n} \ff_n}.\]
We proceed by induction on the structure of $c$. \\ \\
\noindent
\emph{The case} $\cc = \SKIP$. We have
\begin{align}
   &\wp{\SKIP}{\sup_{n} \ff_n} \\
   \eeqtag{Table \ref{table:wp}}
   & \sup_{n} \ff_n \\
   \eeqtag{Table \ref{table:wp}}
   &\sup_{n} \wp{\SKIP}{\ff_n}.
\end{align}
\emph{The case} $\cc = \ASSIGN{x}{\ee}$. We have
\begin{align}
   &\wp{\ASSIGN{x}{\ee}}{\sup_{n} \ff_n} \\
   \eeqtag{Table \ref{table:wp}}
   & \big( \sup_{n} \ff_n \big) \subst{x}{\ee} \\
   \eeqtag{Substitution is distributive}
   &\sup_n \ff_n \subst{x}{\ee} \\
   \eeqtag{Table \ref{table:wp}}
   &\sup_n \wp{\ASSIGN{x}{\ee}}{\ff_n}.
\end{align}
\emph{The case} $\ASSIGNH{x}{\ee}$. We show continuity point-wise by distinguishing the cases
$s(\ee) \in \dom{h}$ and $s(\ee) \not\in \dom{h}$ for a given state $(\sk,\hh) \in \States$.

First, assume $\sk (\ee) \not\in \dom{h}$. We have
\begin{align}
   &\wp{\ASSIGNH{x}{\ee}}{\sup_{n} \ff_n}(\sk,\hh) \\
   \eeqtag{Table \ref{table:wp}}
   &\Big( \displaystyle\sup_{v \in \Ints} \singleton{\ee}{v} \sepcon \bigl( \singleton{\ee}{v} \sepimp \big( \sup_{n} \ff_n \big) \subst{x}{v} \bigr)
   \Big) (\sk,\hh) \\
   \eeqtag{Alternative version of the rule for heap lookup}
   &\Big( \displaystyle\sup_{v \in \Ints} \containsPointer{\ee}{v} \cdot \big( \sup_{n} \ff_n \big) \subst{x}{v} \Big) (\sk,\hh) \\
   \eeqtag{$\sk (\ee) \not\in \dom{\hh}$}
   & 0 \\
   \eeqtag{$\sk (\ee) \not\in \dom{\hh}$}
   & \sup_{n} \Big( \displaystyle\sup_{v \in \Ints} \containsPointer{\ee}{v} \cdot  \ff_n \subst{x}{v} \Big) (\sk , \hh ) \\
   \eeqtag{Alternative version of the rule for heap lookup}
   &\sup_{n} \wp{\ASSIGNH{x}{\ee}}{\ff_n}(\sk,\hh ).
\end{align}
Now assume $\sk (\ee) \in \dom{\hh}$ and $h(\sk (\ee)) = v'$. Then
\begin{align}
   &\wp{\ASSIGNH{x}{\ee}}{\sup_{n} \ff_n}(\sk ,\hh) \\
   \eeqtag{Table \ref{table:wp}}
   &\Big( \displaystyle\sup_{v \in \Ints} \singleton{\ee}{v} \sepcon \bigl( \singleton{\ee}{v} \sepimp \big( \sup_{n} \ff_n \big) \subst{x}{v} \bigr)
   \Big) (\sk ,\hh ) \\
   \eeqtag{Alternative version of the rule for heap lookup}
   &\Big( \displaystyle\sup_{v \in \Ints} \containsPointer{\ee}{v} \cdot \big( \sup_{n} \ff_n \big) \subst{x}{v} \Big) (\sk ,\hh ) \\
   \eeqtag{By assumption: $(\displaystyle\sup_{v \in \Ints} \containsPointer{\ee}{v})(s,h) = 1 =  \containsPointer{\ee}{v'}(s,h) $}
   &\big( \big( \sup_{n} \ff_n \big) \subst{x}{v'} \big) (\sk, \hh ) \\
   \eeqtag{Substitution is distributive}
   & \sup_{n} \big( \ff_n\subst{x}{v'} (s,h) \big) \\
   \eeqtag{By assumption: $(\displaystyle\sup_{v \in \Ints} \containsPointer{\ee}{v})(\sk ,\hh ) = 1 =  \containsPointer{\ee}{v'}(\sk , \hh ) $}
   &\sup_{n} \big( \displaystyle\sup_{v \in \Ints} \containsPointer{\ee}{v} \cdot \ff_n\subst{x}{v} \big)(\sk ,\hh ) \\
   \eeqtag{Alternative version of the rule for heap lookup}
   &\sup_{n} \wp{\ASSIGNH{x}{\ee}}{\ff_n}(\sk , \hh ).
\end{align}
\emph{The case} $\cc = \HASSIGN{\ee}{\ee'}$. We show continuity point-wise by distinguishing the cases
$\sk (\ee) \in \dom{\hh }$ and $\sk (\ee) \not\in \dom{\hh }$ for a given state $(\sk , \hh) \in \States$.

First, assume $\sk (\ee) \not\in \dom{\hh }$. We have
\begin{align}
   &\wp{\HASSIGN{\ee}{\ee'}}{\sup_{n} \ff_n}(\sk , \hh ) \\
   \eeqtag{Table \ref{table:wp}}
   &\Big( \validpointer{\ee} \sepcon \bigl(\singleton{\ee}{\ee'} \sepimp \big( \sup_{n} \ff_n \big)  \bigr) \Big) (\sk ,\hh ) \\
   \eeqtag{$\sk (\ee) \not\in \dom{\hh }$}
   & 0 \\
   \eeqtag{$\sk(\ee) \not\in \dom{\hh}$}
   &\sup_{n} \Big( \validpointer{\ee} \sepcon \bigl(\singleton{\ee}{\ee'} \sepimp \big( \ff_n \big)  \bigr) \Big) (\sk, \hh ) \\
   \eeqtag{Table \ref{table:wp}}
   &\sup_{n} \wp{\HASSIGN{\ee}{\ee'}}{\ff_n}(\sk , \hh ).
\end{align}
Now assume $\sk (\ee) \in \dom{\hh}$.
Moreover, given two arithmetic expressions $\ee$ and $\ee'$, let $h_{\ee,\ee'}$ denote the heap with $\dom{h_{\ee,\ee'}} = \{\sk(\ee) \}$ and $h_{\ee,v'}(\sk(\ee)) = s(\ee)$. 
The heap $\hh$ is thus of the form
$\hh = \hh' \sepcon h_{\ee,v'}$ for some value $v'$. This gives us
\begin{align}
   &\wp{\HASSIGN{\ee}{\ee'}}{\sup_{n} \ff_n}(\sk , \hh ) \\
   \eeqtag{Table \ref{table:wp}}
   &\Big( \validpointer{\ee} \sepcon \bigl(\singleton{\ee}{\ee'} \sepimp \big( \sup_{n} \ff_n \big)  \bigr) \Big) (\sk ,\hh ) \\
   \eeqtag{By assumption: $\hh = \hh' \sepcon h_{\ee,v'}$}
   &\Big( \validpointer{\ee} \sepcon \bigl(\singleton{\ee}{\ee'} \sepimp \big( \sup_{n} \ff_n \big)  \bigr) \Big) (\sk , \hh' \sepcon h_{\ee,v'} ) \\
   \eeqtag{$\validpointer{\ee}(\sk,h_{\ee,v'}) = 1$}
   &\bigl( \singleton{\ee}{\ee'} \sepimp \big( \sup_{n} \ff_n \big)  \bigr) (\sk , \hh' ) \\
   \eeqtag{$s(e) \not\in \dom{h'}$}
   & \sup_{n} \ff_n   (\sk , \hh' \sepcon h_{\ee,\ee'} ) \\
   \eeqtag{$s(e) \not\in \dom{h'}$}
   & \sup_{n} \bigl( (\singleton{\ee}{\ee'} \sepimp \ff_n)(\sk , \hh' )    \bigr)  \\
   \eeqtag{$\validpointer{\ee}(\sk,h_{\ee,v'}) = 1$ and $\sk{\ee} \not \in \hh'$}
   &\sup_{n} \bigl( \bigl( \validpointer{\ee} \sepcon (\singleton{\ee}{\ee'} \sepimp \ff_n) \bigr)  (\sk , \hh' \sepcon h_{\ee,v'} )  \bigr) \\
   \eeqtag{By assumption: $\hh = \hh' \sepcon h_{\ee,v'}$ }
   &\sup_{n} \bigl( \bigl( \validpointer{\ee} \sepcon (\singleton{\ee}{\ee'} \sepimp \ff_n) \bigr) (\sk , \hh ) \bigr)  \\
   \eeqtag{Table \ref{table:wp}}
   &\sup_{n} \wp{\HASSIGN{\ee}{\ee'}}{\ff_n}(\sk , \hh ).
\end{align}
\emph{The case} $\cc= \FREE{\ee}$.  We show continuity point-wise by distinguishing the cases
$\sk (\ee) \in \dom{\hh }$ and $\sk (\ee) \not\in \dom{\hh }$ for a given state $(\sk , \hh) \in \States$.

If $\sk(\ee) \not\in \dom{\hh}$, then
\begin{align}
   &\wp{\FREE{\ee}}{\sup_{n} \ff_n}( \sk, \hh) \\
   \eeqtag{Table \ref{table:wp}}
   &\big( \validpointer{\ee} \sepcon (\sup_{n} \ff_n ) \big) (\sk, \hh) \\
   \eeqtag{$\sk(\ee) \not\in \dom{\hh}$}
   &0 \\
   \eeqtag{$\sk(\ee) \not\in \dom{\hh}$}
   &\sup_{n} \big( \big( \validpointer{\ee} \sepcon  \ff_n  \big) \big) (\sk, \hh) \\
   \eeqtag{Table \ref{table:wp}}
   &\sup_{n} \wp{\FREE{\ee}}{\ff_n} (\sk, \hh).
\end{align}
Now suppose $\sk(\ee)\in \dom{\hh}$, i.e.\ the heap $\hh$ is of the form $\hh = \hh' \sepcon \hh_{\ee,v'}$ for some value $v'$. We have
\begin{align}
   &\wp{\FREE{\ee}}{\sup_{n} \ff_n}( \sk, \hh) \\
   \eeqtag{Table \ref{table:wp}}
   &\big( \validpointer{\ee} \sepcon (\sup_{n} \ff_n ) \big) (\sk, \hh) \\
   \eeqtag{By assumption: $\hh = \hh' \sepcon \hh_{\ee,v'}$}
   &\big( \validpointer{\ee} \sepcon (\sup_{n} \ff_n ) \big) (\sk, \hh' \sepcon \hh_{\ee,v'}) \\
   \eeqtag{$ \validpointer{\ee}(\sk, \hh_{\ee,v'}) = 1$}
   &\sup_{n} \ff_n (\sk, \hh' ) \\
   \eeqtag{$ \validpointer{\ee}(\sk, \hh_{\ee,v'}) = 1$ and $\sk{\ee} \not \in \hh'$}
   &\sup_{n} \big( \validpointer{\ee} \sepcon \ff_n  \big)(\sk, \hh' \sepcon \hh_{\ee,v'}) \\
   \eeqtag{Table \ref{table:wp}}
   &\sup_{n} \wp{\FREE{\ee}}{\ff_n}(\sk, \hh' \sepcon \hh_{\ee,v'}) \\
   \eeqtag{By assumption: $\hh = \hh' \sepcon \hh_{\ee,v'}$}
   &\sup_{n} \wp{\FREE{\ee}}{\ff_n}(\sk, \hh). 
\end{align}
As the induction hypothesis now assume that for some arbitrary, but fixed, $\cc_1, \cc_2 \in \hpgcl$ and all increasing $\omega$-chains 
$\ff_1 \preceq \ff_2 \ppreceq \ldots$ and $\fg_1 \preceq \fg_2 \ppreceq \ldots$ in $\E$ it holds that both
\begin{align}
   &\wp{\cc_1}{\sup_{n} \ff_n} \eeq \sup_{n} \wp{\cc_1}{ \ff_n}\\
   \text{and} \quad  &\wp{\cc_2}{\sup_{n} \fg_n} \eeq  \sup_{n} \wp{\cc_2}{ \fg_n}.
\end{align}
Furthermore, we make use of Lebesgue's Monotone Convergence Theorem (LMCT); see e.g.\ \cite[p.~567]{schechter:1996}.

\emph{The case} $\cc = \COMPOSE{\cc_1}{\cc_2}$. We have
\begin{align}
   &\wp{\COMPOSE{\cc_1}{\cc_2}}{\sup_{n} \ff_n} \\
   \eeqtag{Table \ref{table:wp}}
   &\wp{\cc_1}{\wp{\cc_2}{\sup_{n} \ff_n}} \\
   \eeqtag{I.H.\ on $\cc_2$}
   &\wp{\cc_1}{\sup_{n} \wp{\cc_2}{ \ff_n}} \\
   \eeqtag{By mon.\ of $\wpsymbol$, $(\wp{\cc_2}{ \ff_n})_{n \geq 1}$ is an increasing chain, then apply I.H. on $\cc_1$}
   &\sup_{n} \wp{\cc_1}{ \wp{\cc_2}{\ff_n}} \\
   \eeqtag{Table \ref{table:wp}}
   &\sup_{n} \wp{\COMPOSE{\cc_1}{\cc_2}}{\ff_n}.
\end{align}
\emph{The case} $\cc = \ITE{\guard}{\cc_1}{\cc_2}$. We have
\begin{align}
   &\wp{\ITE{\guard}{\cc_1}{\cc_2}}{\sup_n \ff_n} \\
   \eeqtag{Table \ref{table:wp}}
   &\iverson{\guard} \cdot \wp{\cc_1}{\sup_n \ff_n} + \iverson{\neg \guard} \cdot \wp{\cc_2}{\sup_n \ff_n} \\
   \eeqtag{I.H.\ on $\cc_1$ and $\cc_1$}
   &\iverson{\guard} \cdot \sup_n \wp{\cc_1}{\ff_n} + \iverson{\neg \guard} \cdot \sup_n \wp{\cc_2}{ \ff_n} \\
   \eeqtag{Both $(\wp{\cc_1}{\ff_n})_{n \geq 1}$ and  $(\wp{\cc_2}{ \ff_n})_{n \geq 1}$ are increasing chains, then apply LMCT}
   &\sup_n \big( \iverson{\guard} \cdot \wp{\cc_1}{\ff_n} + \iverson{\neg \guard} \cdot \wp{\cc_2}{ \ff_n} \big) \\
   \eeqtag{Table \ref{table:wp}}
   &\sup_n \wp{\ITE{\guard}{\cc_1}{\cc_2}}{ \ff_n}.
\end{align}
\emph{The case} $\cc = \PCHOICE{\cc_1}{\pp}{\cc_2}$. We have
\begin{align}
   &\wp{\PCHOICE{\cc_1}{\pp}{\cc_2}}{\sup_n \ff_n} \\
   \eeqtag{Table \ref{table:wp}}
   &\pp \cdot \wp{\cc_1}{\sup_n \ff_n} + (1- \pp) \cdot \wp{\cc_2}{\sup_n \ff_n} \\
   \eeqtag{I.H.\ on $\cc_1$ and $\cc_2$}
   &\pp \cdot \sup_n \wp{\cc_1}{\ff_n} + (1- \pp) \cdot \sup_n \wp{\cc_2}{\ff_n} \\
   \eeqtag{Both $(\wp{\cc_1}{\ff_n})_{n \geq 1}$ and  $(\wp{\cc_2}{ \ff_n})_{n \geq 1}$ are increasing chains, then apply LMCT}
   &\sup_n \big( \pp \cdot \wp{\cc_1}{\ff_n} + (1- \pp) \cdot  \wp{\cc_2}{ \ff_n} \big) \\
   \eeqtag{Table \ref{table:wp}}
   &\sup_n \wp{\PCHOICE{\cc_1}{\pp}{\cc_2}}{\ff_n}.
\end{align}
\emph{The case} $\cc = \WHILEDO{\guard}{\cc_1}$. Since for every $\ff \in \E$ there is an ordinal $\oa$ such that
\begin{align}
   \wp{\WHILEDO{\guard}{\cc_1}}{\ff} \eeq \lfp \fk . \charwp{\guard}{\cc_1}{\ff}(\fk) \eeq \charwpn{\guard}{\cc_1}{\ff}{\oa}(0),
\end{align}
it suffices to show that
\begin{align}
   \charwpn{\guard}{\cc_1}{\sup_n \ff_n}{\ob}(0) \eeq \sup_n \charwpn{\guard}{\cc_1}{\ff_n}{\ob}(0)
\end{align}
for all ordinals $\ob$. We proceed by transfinite induction on $\ob$.  \\ \\
\noindent
\emph{The case} $\ob =0$. This case is trivial since
\begin{align}
   &\charwpn{\guard}{\cc_1}{\sup_n \ff_n}{0}(0) \\
   \eeqtag{By definition}
   &0 \\
   \eeqtag{By definition}
   &\sup_n \charwpn{\guard}{\cc_1}{\ff_n}{0}(0).
\end{align}
\emph{The case $\ob$ successor ordinal.} We have
\begin{align}
   &\charwpn{\guard}{\cc_1}{\sup_n \ff_n}{\ob +1}(0) \\
   \eeqtag{By definition}
   &\charwp{\guard}{\cc_1}{\sup_n \ff_n}(\charwpn{\guard}{\cc_1}{\sup_n \ff_n}{\ob}(0)) \\
   \eeqtag{I.H.\ on $\ob$}
   &\charwp{\guard}{\cc_1}{\sup_n \ff_n}( \sup_n \charwpn{\guard}{\cc_1}{ \ff_n}{\ob}(0)) \\
   \eeqtag{By definition}
   &\iverson{\guard} \cdot \wp{\cc_1}{\sup_n \charwpn{\guard}{\cc_1}{ \ff_n}{\ob}(0)} + \iverson{\neg \guard} \cdot \sup_n \ff_n \\
   \eeqtag{By monotonicity $(\charwpn{\guard}{\cc_1}{ \ff_n}{\ob}(0))_{n \geq 1}$ is an increasing chain, then apply I.H.\ on $\cc_1$}
   &\iverson{\guard} \cdot \sup_n \wp{\cc_1}{\charwpn{\guard}{\cc_1}{ \ff_n}{\ob}(0)} + \iverson{\neg \guard} \cdot \sup_n \ff_n \\
   \eeqtag{By mon.\ $(\wp{\cc_1}{\charwpn{\guard}{\cc_1}{ \ff_n}{\ob}(0)})_{n \geq 1}$ is an increasing chain, then apply LMCT }
   &\sup_n \big( \iverson{\guard} \cdot \wp{\cc_1}{\charwpn{\guard}{\cc_1}{ \ff_n}{\ob}(0)} + \iverson{\neg \guard} \cdot \ff_n \big) \\
   \eeqtag{By definition}
   &\sup_n \charwp{\guard}{\cc_1}{\ff_n}(\charwpn{\guard}{\cc_1}{\ff_n}{\ob}(0)) \\
   \eeqtag{By definition}
   &\sup_n \charwpn{\guard}{\cc_1}{\ff_n}{\ob +1}(0).
\end{align}
\emph{The case $\ob$ limit ordinal.} We have
\begin{align}
    &\charwpn{\guard}{\cc_1}{\sup_n \ff_n}{\ob}(0) \\
    \eeqtag{By definition for $\ob$ limit ordinal}
    &\sup_{\oc < \ob} \charwpn{\guard}{\cc_1}{\sup_n \ff_n}{\oc}(0) \\
    \eeqtag{I.H.\ on $\oc$}
    &\sup_{\oc < \ob} \sup_n  \charwpn{\guard}{\cc_1}{\ff_n}{\oc}(0) \\
    \eeqtag{Commutativity of $\sup$}
    &\sup_n \sup_{\oc \leq \ob}  \charwpn{\guard}{\cc_1}{\ff_n}{\oc}(0) \\
    \eeqtag{By definition for $\ob$ limit ordinal}
    &\sup_n \charwpn{\guard}{\cc_1}{\ff_n}{\ob}(0).
\end{align} 

\end{proof}

\subsection{Counterexample for continuity of weakest preexpectations}\label{app:wp:continuity-counterexample}

Consider an $\omega$-chain of expectations $\ff_n = \iverson{1 \leq x \leq n}$. 
Moreover, let $\emptyheap$ be the empty heap. 
Then, for an arbitrary stack $\sk$, 
\begin{align*}
        & \wp{\ALLOC{x}{0}}{\textstyle \sup_{n} \ff_n}(\sk,\emptyheap) \\
        \eeqtag{\autoref{table:wp}}
        & \inf_{v \in \AVAILLOC{0}} \left(\singleton{v}{0} \sepimp (sup_{n}\ff_n)\subst{x}{v}\right)(\sk,\emptyheap) \\
        \eeqtag{$\dom{\emptyheap} = \emptyset$} 
        & \inf_{v \in \Nats} \left(\singleton{v}{0} \sepimp (sup_{n}\ff_n)\subst{x}{v}\right)(\sk,\emptyheap) \\
        \eeqtag{$\sup_{n} \iverson{1 \leq v \leq n} = \iverson{1 \leq v \leq \infty}$} 
        & \inf_{v \in \Nats} \iverson{0 \leq v \leq \infty}(\sk,\emptyheap) \\
        \eeqtag{algebra}
        & \inf_{v \in \Nats} 1 \eeq 1.
\end{align*}
However, if we swap application of the weakest preexpectation and the supremum, we obtain
\begin{align*}
        & \sup_{n \in \Nats} \wp{\ALLOC{x}{0}}{\ff_n}(\sk,\emptyheap) \\
        \eeqtag{\autoref{table:wp}} 
        & \sup_{n \in \Nats} \inf_{v \in \AVAILLOC{0}} \left(\singleton{v}{0} \sepimp \ff_n\subst{x}{v}\right)(\sk,\emptyheap) \\
        \eeqtag{$\dom{\emptyheap} = \emptyset$} 
        & \sup_{n \in \Nats} \inf_{v \in \Nats} \left(\singleton{v}{0} \sepimp \ff_n\subst{x}{v}\right)(\sk,\emptyheap) \\
        \eeqtag{Definition of $\ff_n$} 
        & \sup_{n \in \Nats} \inf_{v \in \Nats} \iverson{0 \leq v \leq n}(\sk,\emptyheap)  \\
        \eeqtag{we can always choose $v > n$}
        & 0.
\end{align*}
Hence, continuity breaks for the $\ALLOC{x}{\ee}$ statement.

\subsection{Modus Ponens for Single Points-to Predicates}\label{app:wand-reynolds}
\begin{lemma}\label{lem:wand-reynolds}
  Let $\ff \in \E$. Then 
  \[ \singleton{x}{\ee} \sepcon \left(\singleton{x}{\ee} \sepimp \ff\right) \eeq \containsPointer{x}{\ee} \cdot \ff~. \]
\end{lemma}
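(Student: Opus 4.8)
The statement is a ``single-cell'' instance of the modus-ponens-style simplification: when the separating conjunction carves out \emph{exactly} the cell $x \mapsto \ee$ that the magic wand subsequently re-inserts, the two operations collapse to the intuitionistic assertion $\containsPointer{x}{\ee}$ weighting $\ff$. The plan is to prove the equality pointwise: fix an arbitrary state $(\sk,\hh)$ and compare the two sides depending on whether $\hh$ contains a cell at address $\sk(x)$ carrying value $\sk(\ee)$, i.e.\ whether $\containsPointer{x}{\ee}(\sk,\hh) = 1$ or $= 0$.

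First I would unfold the left-hand side using the definitions of $\sepcon$ and $\sepimp$:
\[
  \bigl(\singleton{x}{\ee} \sepcon (\singleton{x}{\ee} \sepimp \ff)\bigr)(\sk,\hh)
  \eeq
  \max_{\hh_1,\hh_2}\setcomp{\singleton{x}{\ee}(\sk,\hh_1)\cdot \inf_{\hh'}\setcomp{\ff(\sk,\hh_2\sepcon\hh')}{\hh'\disjoint\hh_2,\ (\sk,\hh')\models \SLsingleton{x}{\ee}}}{\hh = \hh_1\sepcon\hh_2}.
\]
For any partition $\hh = \hh_1 \sepcon \hh_2$ the factor $\singleton{x}{\ee}(\sk,\hh_1)$ is nonzero only when $\hh_1 = \{\sk(x)\mapsto \sk(\ee)\}$ (this also forces $\sk(x)\in\Nats_{>0}$), in which case it equals $1$ and $\hh_2$ is uniquely determined as $\hh$ with that cell removed. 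So the outer $\max$ is really a selection. In the case $\containsPointer{x}{\ee}(\sk,\hh) = 0$, no such partition exists, so the $\max$ is over $\{0\}$ (or the empty set under the convention $\max\emptyset$ does not arise because the empty partition $\hh_1 = \hh_2$... actually $\hh_1 = \emptyheap$, $\hh_2 = \hh$ is always available and gives $0$), hence the left side is $0 = \containsPointer{x}{\ee}(\sk,\hh)\cdot\ff(\sk,\hh)$ as desired.

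In the case $\containsPointer{x}{\ee}(\sk,\hh) = 1$, let $\hh_1 = \{\sk(x)\mapsto\sk(\ee)\}$ and $\hh_2$ be the remainder, so $\hh = \hh_1\sepcon\hh_2$ and $\sk(x)\notin\dom{\hh_2}$. The left side then equals $\inf_{\hh'}\setcomp{\ff(\sk,\hh_2\sepcon\hh')}{\hh'\disjoint\hh_2,\ (\sk,\hh')\models\SLsingleton{x}{\ee}}$. The only heap $\hh'$ satisfying $(\sk,\hh')\models\SLsingleton{x}{\ee}$ is $\hh' = \hh_1$, and since $\sk(x)\notin\dom{\hh_2}$ we indeed have $\hh_1\disjoint\hh_2$, so the infimum is over the singleton $\{\ff(\sk,\hh_2\sepcon\hh_1)\} = \{\ff(\sk,\hh)\}$, giving $\ff(\sk,\hh) = \containsPointer{x}{\ee}(\sk,\hh)\cdot\ff(\sk,\hh)$. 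Combining the two cases yields the claim. I expect no serious obstacle here; the only point requiring care is the bookkeeping that in the first case the outer maximum is genuinely attained (by the trivial partition $\hh_1 = \emptyheap$), so the left side is well-defined and equals $0$, and that in the second case both the $\max$ and the $\inf$ range over (effectively) singletons so that no spurious larger or smaller values can appear. Alternatively, one could observe that this is exactly the specialization of the already-established heap-lookup identity $\sup_{v}\singleton{\ee}{v}\sepcon(\singleton{\ee}{v}\sepimp\ff\subst{x}{v}) = \sup_v \containsPointer{\ee}{v}\cdot\ff\subst{x}{v}$ when the address is a variable $x$ and the content expression $\ee$ is fixed rather than quantified, but giving the direct pointwise argument is cleaner and self-contained.
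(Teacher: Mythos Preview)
Your proposal is correct and follows essentially the same approach as the paper's proof: a pointwise case distinction on whether $\containsPointer{x}{\ee}(\sk,\hh)$ equals $0$ or $1$, observing in the first case that the outer $\max$ yields $0$ since no partition makes the first factor nonzero, and in the second case that both the $\max$ and the $\inf$ range over singletons determined by the unique cell $\{\sk(x)\mapsto\sk(\ee)\}$, yielding $\ff(\sk,\hh)$. Your additional remark that the trivial partition $\hh_1=\emptyheap$ ensures the $\max$ is well-defined is a nice bit of care the paper leaves implicit.
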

\begin{proof}
Let $(\sk,\hh)$ be a stack-heap pair.
We distinguish two cases.

First, assume $\containsPointer{x}{\ee}(\sk,\hh) = 0$. Then
\begin{align}
& \left(\singleton{x}{\ee} \sepcon \left(\singleton{x}{\ee} \sepimp \ff\right)\right)(\sk,\hh) \\
\eeqtag{Definition of $\sepcon$}
& \max_{\hh_1,\hh_2} \setcomp{ \singleton{x}{\ee}(\sk,\hh_1) \cdot \left(\singleton{x}{\ee} \sepimp \ff\right)(\sk,\hh_2)}{ \hh = \hh_1 \sepcon \hh_2 } \\
\eeqtag{by assumption: $\containsPointer{x}{\ee}(\sk,\hh) = 0$} 
& \max_{\hh_1,\hh_2} \setcomp{ 0 \cdot \left(\singleton{x}{\ee} \sepimp \ff\right)(\sk,\hh_2) }{ \hh = \hh_1 \sepcon \hh_2 } \\ 
\eeqtag{algebra}
& 0 \\
\eeqtag{by assumption: $\containsPointer{x}{\ee}(\sk,\hh) = 0$} 
& \left(\containsPointer{x}{\ee} \cdot \ff\right)(\sk,\hh).
\end{align}

For the second case, assume $\containsPointer{x}{\ee}(\sk,\hh) = 1$. 
Then there exist unique heaps $\hh_1', \hh_2'$ such that $\singleton{x}{\ee}(\sk,\hh_1') = 1$ and $\hh = \hh_1' \sepcon \hh_2'$.
Consequently, we have
\begin{align}
& \left(\singleton{x}{\ee} \sepcon \left(\singleton{x}{\ee} \sepimp \ff\right)\right)(\sk,\hh) \\
\eeqtag{Definition of $\sepcon$}
& \max_{\hh_1,\hh_2} \setcomp{ \singleton{x}{\ee}(\sk,\hh_1) \cdot \left(\singleton{x}{\ee} \sepimp \ff\right)(\sk,\hh_2) }{ \hh = \hh_1 \sepcon \hh_2 } \\
\eeqtag{$\singleton{x}{\ee}(\sk,\hh_1) = 0$ for $\hh_1 \neq \hh_1'$} 
& \singleton{x}{\ee}(\sk,\hh_1') \cdot \left(\singleton{x}{\ee} \sepimp \ff\right)(\sk,\hh_2') \\
\eeqtag{$\singleton{x}{\ee}(\sk,\hh_1') = 1$} 
& \left(\singleton{x}{\ee} \sepimp \ff\right)(\sk,\hh_2') \\ 
\eeqtag{Definition of $\sepimp$} 
& \inf_{\hh'} \left\{ \ff(\sk,\hh_2'\sepcon \hh') ~|~ \hh' \disjoint \hh_2' \textnormal{ and } \sk,\hh' \models \singleton{x}{\ee} \right\} \\
\eeqtag{$\sk,\hh' \models \singleton{x}{\ee}$ iff $\dom{\hh'} = \sk(x)$ and $\hh'(\sk(x))=\sk(\ee)$. Hence, $\hh' = \hh_1'$ } 
&  \ff(\sk,\hh_2'\sepcon \hh_1') \\
\eeqtag{$\hh = \hh_1' \sepcon \hh_2'$}
&  \ff(\sk,\hh) \\
\eeqtag{Assumption: $\containsPointer{x}{\ee}(\sk,\hh) = 1$} 
& \containsPointer{x}{\ee}(\sk,\hh) \cdot \ff(\sk,\hh) \\
\eeqtag{Definition of $\cdot$~} 
&  \left(\containsPointer{x}{\ee} \cdot \ff\right)(\sk,\hh). 
\end{align}
In both cases, we obtain the claim, i.e. $\singleton{x}{\ee} \sepcon \left(\singleton{x}{\ee} \sepimp \ff\right) \eeq \containsPointer{x}{\ee} \cdot \ff$.
\end{proof}
%


\subsection{Proof of Theorem~\ref{thm:wp:soundness} (Soundness of Weakest Preexpectations)}\label{app:wp:soundness}

\paragraph{Preliminaries} Let us first collect a few important facts about our operational semantics:
\begin{enumerate}
        \item The execution relation $\ExecSymbol$ determining our operational semantics together with reward function $\OpRew$ specifies a Markov decision process with rewards~\cite{DBLP:books/daglib/0020348}.
              To be precise, the set of states is given by program configurations $\OpStates$, the set of actions is $\Nats$, the probability transition function is $\ProbSymbol$, and the reward function is $\OpRew$. Each of these items has been introduced in Section~\ref{sec:wp:soundness}.
              A reader familiar with MDPs might want to add a sink state with zero reward and a self-loop with probability one.
              Then all goal configurations, which have no outgoing transitions so far, additionally get a single transition with action $0$ and probability $1$ to the sink state.
              We chose to omit a sink state to improve readability.
  \item The set of program configurations $\OpStates$ and the set of actions $\Nats$ are countable.
  \item The reflexive, transitive closure of execution relation $\ExecSymbol$---denoted by $\ExecSymbol^{*}$---is well-founded if restricted to configurations that occur in $\PathsFromTo{\cc,\sk,\hh}{\Scheduler}$ for any scheduler $\Scheduler$.
  \item Only goal configurations, i.e.. configurations in $\Target = \{ (\Term, \sigma) ~|~ \sigma \in \Sigma \}$ are assigned positive reward. Hence, all paths that do not reach a goal configuration contribute zero reward.
\end{enumerate}

Furthermore, let us denote the set of actions available at configuration $t \in \OpStates$ by 
\begin{align*}
        \OpAct{t} \eeq \left\{ a \in \OpActions ~|~ \exists t' \in \OpStates \, \exists \pp > 0 ~:~ \ExecSimple{t}{a,\pp}{t'} \right\}.
\end{align*}

We use the following characterization for expected rewards of Markov decision processes (cf.~\cite[Theorem 7.1.3]{puterman2005markov}), which has been adapted to our notation and the fact that only goal configurations have positive rewards: 

\begin{theorem}[Characterization of Expected Rewards]\label{thm:exprew}
    Let $\ff \in \E$ and $t \in \OpStates$.
    Then the least expected reward $\ExpRewC{\ff}{t}$ satisfies the following equation system: 
    \begin{itemize}
            \item If $t = (\cc, \sk, \hh) \in \Target$ then $\ExpRewC{\ff}{t} = \OpRew(t) = \ff(\sk,\hh)$.
            \item If $t = (\Fault, \sigma)$, $\sigma \in \States$, then $\ExpRewC{\ff}{t} = 0$.
            \item Otherwise, we have
            \begin{align*}
                \ExpRewC{\ff}{t} \eeq \inf_{a \in \OpAct{t}} \sum_{\ExecSimple{t}{a,\pp}{t'}} \pp \cdot \ExpRewC{\ff}{t'}.
            \end{align*}
    \end{itemize}
\end{theorem}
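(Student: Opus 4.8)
The plan is to verify directly that the map $t \mapsto \ExpRewC{\ff}{t}$ satisfies each of the three equations, working from the definition $\ExpRewC{\ff}{t} = \inf_{\Scheduler} V_{\Scheduler}(t)$, where I write $V_{\Scheduler}(t) = \sum_{t_1 \ldots t_m \in \PathsFromTo{t}{\Scheduler}} \Prob{t_1 \ldots t_m} \cdot \OpRew(t_m)$ for the expected reward under a \emph{fixed} scheduler $\Scheduler$. The first two cases are immediate from the operational semantics (Figure~\ref{table:op}): neither a goal configuration $t = (\Term, \sk, \hh) \in \Target$ nor a fault configuration $t = (\Fault, \sigma)$ has an outgoing transition. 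Hence for a goal configuration the \emph{only} element of $\PathsFromTo{t}{\Scheduler}$ is the length-one path $t$ itself (the guard $\forall k \in \{1,\ldots,m-1\}$ is vacuous and $t_1 = t_m = t \in \Target$), so $V_{\Scheduler}(t) = \OpRew(t) = \ff(\sk,\hh)$ for every $\Scheduler$ and thus $\ExpRewC{\ff}{t} = \ff(\sk,\hh)$; and for a fault configuration, which is not in $\Target$, no path can reach $\Target$, so $\PathsFromTo{t}{\Scheduler} = \emptyset$ and $V_{\Scheduler}(t) = \sum_{\emptyset} = 0$.

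For the remaining case, an interior configuration $t \notin \Target$ with $\OpAct{t} \neq \emptyset$, the key step I would establish first is a one-step decomposition of $V_{\Scheduler}$. For a scheduler $\Scheduler$ and a first successor $t'$, define the shifted scheduler $\Scheduler_{t \to t'}$ by $\Scheduler_{t \to t'}(w) = \Scheduler(t\,w)$. Since $t \notin \Target$, every reward-contributing path out of $t$ has length at least two and factors uniquely as $t$ followed by a path out of its first successor; re-indexing the (non-negative, hence unconditionally summable) path sum by this first transition gives, with $a = \Scheduler(t)$, the identity $V_{\Scheduler}(t) = \sum_{\ExecSimple{t}{a,\pp}{t'}} \pp \cdot V_{\Scheduler_{t\to t'}}(t')$. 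This uses that the leading factor of $\Prob{t_1 \ldots t_m}$ is exactly $\ProbSymbol(t, a, t')$ and that the remaining factors reassemble into $\Prob{\cdot}$ for the continuation under $\Scheduler_{t\to t'}$.

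From this identity the Bellman equation follows by two inequalities. For the lower bound, I would fix any $\Scheduler$ and set $a = \Scheduler(t)$; then $V_{\Scheduler}(t) = \sum \pp \cdot V_{\Scheduler_{t\to t'}}(t') \ggeq \sum \pp \cdot \ExpRewC{\ff}{t'} \ggeq \inf_{a' \in \OpAct{t}} \sum \pp \cdot \ExpRewC{\ff}{t'}$, and taking the infimum over $\Scheduler$ bounds $\ExpRewC{\ff}{t}$ below by the right-hand side. For the upper bound, I would fix $a \in \OpAct{t}$ and $\epsilon > 0$, pick for each successor $t'$ a scheduler $\Scheduler_{t'}$ with $V_{\Scheduler_{t'}}(t') \lleq \ExpRewC{\ff}{t'} + \epsilon$, and assemble the scheduler that selects $a$ at $t$ and thereafter follows $\Scheduler_{t'}$ in the branch entered through $t'$; its value is $\sum \pp \cdot V_{\Scheduler_{t'}}(t') \lleq \sum \pp \cdot \ExpRewC{\ff}{t'} + \epsilon$ because $\sum_{t'} \pp \lleq 1$, and letting $\epsilon \to 0$ and minimizing over $a$ yields the matching upper bound.

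The hard part will be the \emph{countably infinite nondeterminism} introduced by allocation: at such a configuration both $\OpAct{t}$ and the successor set are countably infinite and rewards range over $\PosRealsInf$, so the interchange of the infimum with the countable sum and the treatment of $\infty$-valued summands must be handled with care. Non-negativity of all summands (which licenses the unconditional re-indexing in the decomposition) together with the uniform bound $\sum_{t'} \pp \lleq 1$ (which keeps a single $\epsilon$ valid across infinitely many branches in the upper-bound direction) are precisely what make the argument go through; the subcase $\ExpRewC{\ff}{t'} = \infty$ with $\pp > 0$ makes both sides $\infty$ and is dispatched separately. Well-foundedness of $\ExecSymbol^{*}$ along goal-reaching paths (Preliminaries, item~3) ensures the shifted values are genuinely simpler so that no circularity arises; formally, the statement is the specialization of the standard optimality-equation characterization for Markov decision processes (\cite[Theorem 7.1.3]{puterman2005markov}) to our reward structure, in which only goal configurations carry positive reward.
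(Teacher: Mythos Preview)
Your argument is correct; the paper, by contrast, does not prove this theorem at all but simply cites it as the standard optimality equation for MDPs from \cite[Theorem~7.1.3]{puterman2005markov}, adapted to the setting where only goal configurations carry reward. So you have supplied a genuine proof where the paper defers to the literature.

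Two small remarks on your write-up. First, the closing appeal to well-foundedness of $\ExecSymbol^{*}$ is unnecessary here: you are not defining $\ExpRewC{\ff}{\cdot}$ recursively and then arguing termination, you are merely verifying that an independently defined quantity (the infimum over schedulers of a path sum) satisfies a one-step equation. No induction takes place, so no circularity can arise. Second, your worry about the allocation case slightly overstates the difficulty: the infimum in the Bellman equation ranges over actions, and for each fixed action $a$ the sum over $a$-successors is finite (indeed, for allocation, a single term). The countable nondeterminism lives entirely in the outer $\inf_a$, where it is handled automatically by your per-action upper-bound argument; there is no infimum--sum interchange to justify. These are cosmetic points; the substance of your two-inequality argument via shifted schedulers and the $\epsilon$-assembly is exactly the standard route and goes through cleanly.
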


Moreover, we need a few technical definitions.

\begin{definition}
  A function of type $\Fsymbol : \hpgcl \to (\E \to \E)$ is called an \emph{expectation transformer}. 
  We compare expectation transformers by pointwise application of $\leq$, i.e. $\Fsymbol \leq \Fsymbol'$ iff for all $\cc \in \hpgcl$, $\ff \in \E$, and $\sigma \in \States$, we have $\F{\cc}{\ff}(\sigma) \leq \Fsymbol'\llbracket\cc\rrbracket(\ff)(\sigma)$.
  \hfill$\triangle$
\end{definition}

Clearly, $\wpsymbol$ is an expectation transformer.
We next define an expectation transformer mapping each program $\cc$ and each expectation $\ff$ to the corresponding expected reward of our execution relation with respect to $\ff$ when running $\cc$ on a given initial state.
Consequently, we refer to this transformer as the operational semantics of $\hpgcl$-programs.

\begin{definition}[Operational Semantics of $\hpgcl$-Programs]\label{def:op}
   The \emph{operational semantics of $\hpgcl$-programs} is given by the expectation transformer
    \begin{align*}
            \Opsymbol ~:~ \hpgcl \to \E \to \E,
            \quad
            \Opf{\cc}{\ff}(\sigma) \eeq \ExpRewC{\ff}{\cc, \sigma}.
            \tag*{$\triangle$}
    \end{align*}
\end{definition}

The remaining two technical definitions are used to improve the proof structure.

\begin{definition}\label{def:ext}
        The \emph{extended expectation transformer} $\Ext{\Fsymbol}$ of expectation transformer $\Fsymbol$ is given by
        \begin{align*}
                & \Ext{\Fsymbol} ~:~ (\hpgcl \cup \{ \Term,\Fault \}) \to (\E \to \E) \\
                & \EF{\cc}{\ff} \eeq
                \begin{cases}
                    \ff & ~\text{if}~ \cc \eeq \Term \\
                    0 & ~\text{if}~ \cc \eeq \Fault \\
                    \F{\cc}{\ff} & ~\text{otherwise.}
                \end{cases}
        \end{align*}
  \hfill$\triangle$
\end{definition}

\begin{definition}\label{def:functional}
  $\Fsymbol$ is called an \hpgcl-\emph{functional} if and only if
  \begin{enumerate}
          \item $\Fsymbol$ is of type $\Fsymbol : \hpgcl \to (\E \to \E)$,
          \item for all $\cc \in \hpgcl$, $\ff \in \E$ and $\sigma \in \States$, we have
                \begin{align*}
                        \EF{\cc}{\ff}(\sigma) \eeq \inf_{n \in \OpAct{\cc, \sigma}} \sum_{\ExecSimple{\cc, \sigma}{n,\pp}{\cc', \sigma'}} \pp \cdot \EF{\cc'}{\ff}(\sigma')~.
                \end{align*}
  \end{enumerate}
  \hfill$\triangle$
\end{definition}

\paragraph{Soundness proof}
We are now in a position to show that $\wpsymbol$ is sound with respect to our operational semantics. 
The auxiliary results used within the proof below are found in Appendix~\ref{app:wp:soundness:auxiliaries}, p.~\pageref{app:wp:soundness:auxiliaries}.
Due to Definition~\ref{def:op}, our proof obligation can be conveniently restated as
\begin{align}
        \wpsymbol \eeq \Opsymbol. 
\end{align}

\begin{proof}[Proof of Theorem~\ref{thm:wp:soundness}]
        First, we show that our operational semantics $\Opsymbol$ is the \emph{least} $\hpgcl$-\emph{functional} with respect to pointwise application of the ordering $\leq$.
        That $\Opsymbol$ is an $\hpgcl$-functional follows from Theorem~\ref{thm:exprew} and a straightforward induction on the program structure.
        That $\Opsymbol$ is also the least $\hpgcl$ functional is proven by well-founded induction on the structure of $\hpgcl$-functionals (see Definition~\ref{def:functional}).
        Please confer Lemma~\ref{thm:op-least} for a detailed proof.

        Next, we show that our weakest preexpectation semantics $\wpsymbol$ is an $\hpgcl$-functional. This is shown by induction on the program structure. 
        Please confer Lemma~\ref{thm:wp-functional} for a detailed proof.
        Putting both results together, we immediately obtain $\Opsymbol \leq \wpsymbol$.

        To complete the soundness proof, we show the converse direction, i.e.\ $\wpsymbol \leq \Opsymbol$, by induction on the program structure.
        Please confer Lemma~\ref{thm:wp-leq-op} for a detailed proof.
\end{proof}

\subsection{Auxiliary Lemmas in the Proof of Theorem~\ref{thm:wp:soundness}}
\label{app:wp:soundness:auxiliaries}

\begin{lemma}\label{thm:op-least}
  $\Opsymbol$ is the least $\hpgcl$-functional with respect to $\leq$.
\end{lemma}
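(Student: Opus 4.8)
The goal is to show that the operational semantics transformer $\Opsymbol$ is the \emph{least} $\hpgcl$-functional. Recall (Definition~\ref{def:functional}) that an $\hpgcl$-functional $\Fsymbol$ must satisfy the fixed-point-style equation
\[
  \EF{\cc}{\ff}(\sigma) \eeq \inf_{n \in \OpAct{\cc, \sigma}} \sum_{\ExecSimple{\cc, \sigma}{n,\pp}{\cc', \sigma'}} \pp \cdot \EF{\cc'}{\ff}(\sigma')
\]
for all $\cc,\ff,\sigma$, where $\Ext{\Fsymbol}$ extends $\Fsymbol$ to $\Term$ (returning $\ff$) and $\Fault$ (returning $0$). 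We already know from Theorem~\ref{thm:exprew} that $\Opsymbol$ \emph{is} such a functional, since $\Opf{\cc}{\ff}(\sigma) = \ExpRewC{\ff}{\cc,\sigma}$ and the characterization of expected rewards gives exactly this equation (with the boundary cases matching $\EOpf{\Term}{\ff} = \ff$ and $\EOpf{\Fault}{\ff} = 0$). So the task is purely to prove \emph{minimality}: for any other $\hpgcl$-functional $\Fsymbol$ and all $\cc,\ff,\sigma$, we have $\Opf{\cc}{\ff}(\sigma) \leq \F{\cc}{\ff}(\sigma)$.

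\textbf{Key steps.}
First I would fix an arbitrary $\hpgcl$-functional $\Fsymbol$, an expectation $\ff \in \E$, an initial configuration $(\cc,\sigma)$, and an arbitrary scheduler $\Scheduler$. The plan is to show that $\F{\cc}{\ff}(\sigma)$ is an upper bound on the $\Scheduler$-indexed sum
\[
  \sum_{t_1 \ldots t_m \in \PathsFromTo{(\cc,\sigma)}{\Scheduler}} \Prob{t_1 \ldots t_m} \cdot \OpRew(t_m);
\]
since $\ExpRewC{\ff}{\cc,\sigma}$ is the infimum over all $\Scheduler$ of precisely these sums, this yields $\Opf{\cc}{\ff}(\sigma) \leq \F{\cc}{\ff}(\sigma)$. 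To establish the upper bound, I would use the well-foundedness of $\ExecSymbol^{*}$ restricted to configurations reachable under $\Scheduler$ (item 3 of the preliminaries): this gives a well-founded induction principle on configurations appearing in $\PathsFromTo{(\cc,\sigma)}{\Scheduler}$. More precisely, for a configuration $t$ reachable under $\Scheduler$ along a history $\vec{t}$, one compares the ``tail expected reward'' collected from $t$ onward against $\EF{t}{\ff}$ evaluated on the state component of $t$. The base cases are $t \in \Target$ (both sides equal $\ff$ at the state, using $\OpRew(t)=\ff(\sigma')$) and $t$ a fault configuration or a dead end (tail reward $0 \leq 0 = \EF{\Fault}{\ff}$, or $0$ if no path continues to $\Target$; here item 4 — only goal configurations carry positive reward — is what makes the dead-end case contribute zero). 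For the inductive step at a non-goal, non-fault $t = (\cc',\sigma')$, the scheduler picks some action $a = \Scheduler(\vec t)$, and the tail reward from $t$ factors as $\sum_{\ExecSimple{t}{a,\pp}{t'}} \pp \cdot (\text{tail reward from } t')$; applying the induction hypothesis to each successor $t'$ (each is strictly $\ExecSymbol$-below $t$ in the restricted relation) bounds this by $\sum_{\ExecSimple{t}{a,\pp}{t'}} \pp \cdot \EF{t'}{\ff}(\text{state of } t')$, which is at least the infimum over actions, and the defining equation for the $\hpgcl$-functional $\Fsymbol$ identifies that infimum with $\EF{\cc'}{\ff}(\sigma')$. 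Care is needed because the ``tail reward'' is a sum over infinitely many finite paths; I would phrase it as a supremum over finite truncations (reward collected within the first $k$ steps) and push the induction through each truncation level, or equivalently invoke monotone convergence to interchange the limit with the finite-support sums at each node.

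\textbf{Main obstacle.}
The delicate point is the interaction between the \emph{countably infinite branching} introduced by $\ALLOC{x}{\vec\ee}$ and the infimum in the functional equation. At an allocation configuration $\OpAct{t}$ is infinite, so the inductive step must handle $\inf_{n \in \OpAct{t}} \sum_{\ExecSimple{t}{n,\pp}{t'}} \pp \cdot \EF{t'}{\ff}$ where each inner sum is actually a single summand (allocation is deterministic given the chosen address, probability $1$). Fixing the scheduler's chosen action $a$ sidesteps the infimum on the $\Opsymbol$-side — one only needs ``$\geq \inf$'' — but one must be careful that the well-founded induction really does reach \emph{all} successors $t'$ of $t$ under that fixed $a$, and that summing/taking suprema of the per-path rewards is legitimate (no rearrangement issues, since everything is in $\PosRealsInf$ and all sums have countable, non-negative terms, so Tonelli applies). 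I expect the bookkeeping of ``reward collected up to depth $k$'' versus ``total reward'' — and showing the former converges upward to the latter so that the induction at finite depth transfers to the limit — to be the part that requires the most care; everything else is a routine structural unwinding of Definitions~\ref{def:functional} and~\ref{def:ext} against Theorem~\ref{thm:exprew}.
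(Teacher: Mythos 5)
Your reduction of minimality to a \emph{per-scheduler} bound is where the argument breaks. You propose to fix an arbitrary scheduler $\Scheduler$ and show that $\F{\cc}{\ff}(\sigma)$ upper-bounds the $\Scheduler$-indexed path sum; but this statement is false as soon as allocation is involved, because an $\hpgcl$-functional carries the \emph{demonic} infimum while a fixed scheduler may resolve the nondeterminism suboptimally. Concretely, take $\cc = \ALLOC{x}{0}$, $\ff = \iverson{x = 5}$ and the initial state $(\sk,\emptyheap)$: every $\hpgcl$-functional satisfies $\EF{\cc}{\ff}(\sk,\emptyheap) = \inf_{u \in \OpAct{\cc,\sk,\emptyheap}} \iverson{u = 5} = 0$, yet the scheduler that allocates address $5$ has path sum $1$. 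The same defect is visible inside your inductive step: after applying the induction hypothesis to the successors under the chosen action $a$ you arrive at $\sum_{\ExecSimple{t}{a,\pp}{t'}} \pp \cdot \EF{t'}{\ff}$, and you then relate this to $\EF{t}{\ff}$ by observing it is ``at least the infimum over actions'' --- but $\geq \inf$ is the wrong direction: you need the chosen action's sum to be \emph{at most} $\EF{t}{\ff}$, which is exactly what fails for demonically bad choices. Repairing this inside your framework would require exhibiting some scheduler whose value is at most $\F{\cc}{\ff}(\sigma)$ (an $\varepsilon$-optimal construction, since the infimum over countably many addresses need not be attained, with errors $\varepsilon\cdot 2^{-k}$ distributed along the run); your sketch contains no such construction.

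The paper sidesteps schedulers entirely: since $\Opsymbol$ is itself an $\hpgcl$-functional (Theorem~\ref{thm:exprew}), both $\Ext{\Opsymbol}$ and $\Ext{\Fsymbol}$ satisfy the \emph{same} equation with the infimum over actions, and the well-founded induction (over configurations occurring on goal-reaching paths, preliminary item~3) is run directly between them: the base cases $\Term$ and $\Fault$ follow from Definition~\ref{def:ext}, and in the step the induction hypothesis is applied \emph{inside} the infimum and the countable non-negative sum (legitimate by monotonicity), yielding $\EOpf{\cc}{\ff}(\sigma) = \EF{\cc}{\ff}(\sigma)$ on the terminating fragment; all remaining paths contribute zero reward to $\Opf{\cc}{\ff}(\sigma)$ (item~4), whence $\Opsymbol \leq \Fsymbol$. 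So the fix is to replace your per-scheduler bound by this configuration-level comparison of $\Opsymbol$ and $\Fsymbol$ --- the infimum then appears on both sides and no direction problem arises. The remaining ingredients of your sketch (the $\Term$/$\Fault$ base cases, using item~4 for dead ends, Tonelli/monotone convergence for the countable non-negative sums) are consistent with the paper's proof.
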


\begin{proof}
        Clearly, $\Opsymbol$ is an $\hpgcl$-functional due to Theorem~\ref{thm:exprew} and a straightforward induction on the structure of $\hpgcl$ programs.
        
        Next, consider the paths determined by $\Opf{\cc}{\ff}(\sigma)$.
        Every path $\pi$ starting in $(\cc, \sigma)$ that never reaches a goal configuration, i.e.. a configuration of the form $(\Term, \sigma')$, contributes zero reward.
        This is a direct consequence of the fact that only goal configurations may have a non-zero reward. Thus, every configuration belonging to path $\pi$ has a zero reward.
        We may thus restrict ourselves to paths reaching a goal configuration without changing the value of $\Opf{\cc}{\ff}(\sigma)$.

        Now, let $\Fsymbol$ be any $\hpgcl$-functional as of Definition~\ref{def:functional}.
        The probabilistic transition relation $\ExecSymbol$ is well-founded if we restrict it to configurations $(\cc', \sigma')$ reachable from the initial configuration, i.e..
        $(\cc, \sigma) ~\rightarrow^{*}~ (\cc', \sigma')$.
        We prove by induction with respect to this well-founded ordering that 
        \begin{align}
                \EOpf{\cc}{\ff}(\sigma) = \EF{\cc}{\ff}(\sigma).
        \end{align}
        For the two base cases, we have by Definition~\ref{def:ext}
        \begin{align}
            \EOpf{\Term}{\ff}(\sigma) \eeq \ff(\sigma) \eeq \EF{\Term}{\ff}(\sigma)
            \qand
            \EOpf{\Fault}{\ff}(\sigma) \eeq 0 \eeq \EF{\Fault}{\ff}(\sigma).
        \end{align}
        Otherwise, we have
        \begin{align}
                & \EF{\cc}{\ff}(\sigma) \\
                \eeqtag{Definition~\ref{def:functional}} 
                & \inf_{n \in \OpAct{\cc, \sigma}} \sum_{\ExecSimple{\cc, \sigma}{n,\pp}{\cc', \sigma'}} \pp \cdot \EF{\cc'}{\ff}(\sigma') \\
                \eeqtag{I.H.}
                & \inf_{n \in \OpAct{\cc, \sigma}} \sum_{\ExecSimple{\cc, \sigma}{n,\pp}{\cc', \sigma'}} \pp \cdot \EOpf{\cc'}{\ff}(\sigma') \\
                \eeqtag{Definition~\ref{def:op}, Theorem~\ref{thm:exprew}}
                & \EOpf{\cc}{\ff}(\sigma). 
        \end{align}
        Hence, $\EOpf{\cc}{\ff}(\sigma) = \EF{\cc}{\ff}(\sigma)$ if we consider only executions that successfully terminate, i.e.. paths that reach a goal configuration.
        Since all other paths given by $\Opf{\cc}{\ff}(\sigma)$ contribute zero reward, we conclude that $\EOpf{\cc}{\ff}(\sigma) \leq \EF{\cc}{\ff}(\sigma)$.
\end{proof}

\begin{lemma}\label{thm:wp-functional}
    $\wpsymbol$ is an $\hpgcl$-functional.
\end{lemma}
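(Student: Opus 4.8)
\textbf{Proof plan for Lemma~\ref{thm:wp-functional}.}
The plan is to verify, by structural induction on $\cc \in \hpgcl$, that $\wpsymbol$ satisfies condition (2) of Definition~\ref{def:functional}, namely
\[
        \EF{\cc}{\ff}(\sigma) \eeq \inf_{n \in \OpAct{\cc, \sigma}} \sum_{\ExecSimple{\cc, \sigma}{n,\pp}{\cc', \sigma'}} \pp \cdot \EF{\cc'}{\ff}(\sigma')
\]
with $\Fsymbol = \wpsymbol$, where $\Ext{\wpsymbol}$ extends $\wpsymbol$ to the two terminal tokens $\Term$ and $\Fault$ by $\EOpf{\Term}{\ff} = \ff$ and $\EOpf{\Fault}{\ff} = 0$ as in Definition~\ref{def:ext}. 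Condition (1) is immediate since $\wpsymbol$ has type $\hpgcl \to (\E\to\E)$. So for each program form one unfolds the right-hand side using the inference rules of Figure~\ref{table:op} for $\ExecSymbol$, and compares it against the corresponding entry of Table~\ref{table:wp}.

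For each case I would proceed as follows. For $\SKIP$ and $\ASSIGN{x}{\ee}$ there is a single deterministic transition with action $0$, probability $1$, to $(\Term,\sigma)$ respectively $(\Term,\sk\subst{x}{v},\hh)$, so the right-hand side collapses to $\ff(\sigma)$ respectively $\ff\subst{x}{\ee}(\sigma)$, matching Table~\ref{table:wp}. For $\ITE{\guard}{\cc_1}{\cc_2}$ the guard evaluation picks exactly one branch, yielding $\iverson{\guard}\cdot\wp{\cc_1}{\ff} + \iverson{\neg\guard}\cdot\wp{\cc_2}{\ff}$; for $\PCHOICE{\cc_1}{\pp}{\cc_2}$ there is one action with two outcomes of probabilities $\pp$ and $1-\pp$, giving the convex combination. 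For $\WHILEDO{\guard}{\cc'}$ the two transition rules give $\iverson{\neg\guard}\cdot\ff + \iverson{\guard}\cdot\wp{\COMPOSE{\cc'}{\WHILEDO{\guard}{\cc'}}}{\ff}$; using the sequential-composition entry and the fact that $\wp{\WHILEDO{\guard}{\cc'}}{\ff}$ is a fixed point of $\charwp{\guard}{\cc'}{\ff}$ (Table~\ref{table:wp}), this equals $\charwp{\guard}{\cc'}{\ff}(\wp{\WHILEDO{\guard}{\cc'}}{\ff}) = \wp{\WHILEDO{\guard}{\cc'}}{\ff}$. For $\COMPOSE{\cc_1}{\cc_2}$ one splits on the three rules for sequential composition (fault, termination of $\cc_1$, continuation of $\cc_1$), pulls the inner sum out using the induction hypothesis for $\cc_1$ applied to the \emph{postexpectation} $\wp{\cc_2}{\ff}$ together with $\EOpf{\Term}{\wp{\cc_2}{\ff}} = \wp{\cc_2}{\ff} = \wp{\COMPOSE{\SKIP}{\cc_2}}{\ff}$ and $\EOpf{\Fault}{\cdot} = 0 = \wp{\COMPOSE{\Fault\text{-config}}{\cc_2}}{\ff}$, and recovers $\wp{\cc_1}{\wp{\cc_2}{\ff}}$. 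The heap statements $\FREE{\ee}$, $\HASSIGN{\ee}{\ee'}$, $\ASSIGNH{x}{\ee}$, $\ALLOC{x}{\vec{\ee}}$ require a case split on whether the relevant address is allocated: in the ``fault'' case the sole transition goes to $(\Fault,\sigma)$ with reward $0$, and one checks that the Table~\ref{table:wp} expression also evaluates to $0$ on that state (since $\validpointer{\ee}(\sk,\hh) = 0$ or $\singleton{\ee}{v}(\sk,\hh) = 0$ for the relevant heap, Lemma~\ref{lem:wand-reynolds} being handy for $\ASSIGNH{x}{\ee}$); in the ``success'' case one unwinds the definitions of $\sepcon$ and $\sepimp$ on the uniquely determined subheap.

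The main obstacle is the $\ALLOC{x}{\vec{\ee}}$ case, because it is the only rule with an infinite action set: $\OpAct{\ALLOC{x}{\vec{\ee}},\sigma}$ ranges over all suitable base addresses $v \in \AVAILLOC{\vec{\ee}}$, each giving a single deterministic transition to $(\Term,\sk\subst{x}{v},\hh\sepcon\HeapSet{v\mapsto\vec{\ee}})$. One must show that
\[
        \inf_{v \in \AVAILLOC{\vec{\ee}}} \ff\bigl(\sk\subst{x}{v},\, \hh\sepcon\HeapSet{v\mapsto\vec{\ee}}\bigr)
        \eeq
        \inf_{v \in \AVAILLOC{\vec{\ee}}} \bigl(\singleton{v}{\vec{\ee}}\sepimp\ff\subst{x}{v}\bigr)(\sk,\hh)
\]
which reduces to checking that for each fixed $v \in \AVAILLOC{\vec{\ee}}$ the singleton heap $\HeapSet{v\mapsto\vec{\ee}}$ is the \emph{only} heap disjoint from $\hh$ satisfying $\singleton{v}{\vec{\ee}}$, so the inner infimum in the definition of $\sepimp$ is attained at exactly that extension and $\bigl(\singleton{v}{\vec{\ee}}\sepimp\ff\subst{x}{v}\bigr)(\sk,\hh) = \ff\subst{x}{v}(\sk,\hh\sepcon\HeapSet{v\mapsto\vec{\ee}}) = \ff(\sk\subst{x}{v},\hh\sepcon\HeapSet{v\mapsto\vec{\ee}})$. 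The bookkeeping that $\AVAILLOC{\vec{\ee}}$ (the available base addresses in heap $\hh$) coincides with the action set exposed by $\OpAct{\cdot}$ in Figure~\ref{table:op} is the part that needs care, but it is routine given the definitions; no limiting or continuity argument is needed here since the claim is a plain equality of infima of matching families.
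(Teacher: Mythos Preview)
Your plan is correct and matches the paper's proof: both proceed by structural induction on $\cc$, unfold the right-hand side using the rules of Figure~\ref{table:op}, and compare with Table~\ref{table:wp}; the induction hypothesis is invoked only in the $\COMPOSE{\cc_1}{\cc_2}$ case (applied to $\cc_1$ with postexpectation $\wp{\cc_2}{\ff}$), and for $\WHILEDO{\guard}{\cc'}$ both you and the paper rely on the fixed-point equation satisfied by the least fixed point.

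One point in the $\ALLOC{x}{\vec{\ee}}$ case needs sharpening. In this paper $\AVAILLOC{\vec{\ee}}$ expands to $\PosNats$, i.e.\ \emph{all} positive naturals, not the set of available base addresses. Hence the $\wpsymbol$-side infimum in Table~\ref{table:wp} ranges over $\PosNats$, whereas the operational action set $\OpAct{\ALLOC{x}{\vec{\ee}},\sk,\hh}$ is the strictly smaller set $M = \{u \in \PosNats \mid u,\ldots,u+|\vec{\ee}|-1 \notin \dom{\hh}\}$. So the two infima you display are over different index sets, and your claim that ``for each fixed $v \in \AVAILLOC{\vec{\ee}}$ the singleton heap $\HeapSet{v\mapsto\vec{\ee}}$ is the only heap disjoint from $\hh$ satisfying $\singleton{v}{\vec{\ee}}$'' fails when $v \notin M$: there is \emph{no} such heap, and $(\singleton{v}{\vec{\ee}} \sepimp \ff\subst{x}{v})(\sk,\hh) = \inf\emptyset = \infty$. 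The paper makes this step explicit by splitting $\PosNats = M \cup (\PosNats \setminus M)$ and observing that the $\infty$ contributions drop out of the outer infimum, so that $\inf_{v\in\PosNats}(\ldots) = \inf_{v\in M}(\ldots)$. This is the actual content of the ``bookkeeping'' you flag; once stated, it is indeed routine, but your phrasing that the two sets ``coincide'' is not what is going on.
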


\begin{proof}
  Clearly $\wpsymbol$ is of type $\hpgcl \to (\E \to \E)$.
  It thus remains to show for all $\cc \in \hpgcl$, $\ff \in \E$ and $\sigma \in \States$ that
  \begin{align}
          \extwp{\cc}{\ff}(\sigma) 
          \eeq 
          \inf_{n \in \OpAct{\cc, \sigma}} \sum_{\ExecSimple{\cc, \sigma}{n,\pp}{\cc', \sigma'}} \pp \cdot \extwp{\cc'}{\ff}(\sigma').
  \end{align}
  We proceed by induction on the structure of inference rules of our operational semantics (cf.\ Figure~\ref{table:op}).
  We group the cases by statement.

  \emph{The case} $\SKIP$. 
  \begin{align}
          & \inf_{n \in \OpAct{\SKIP, \sigma}} \sum_{\ExecSimple{\SKIP, \sigma}{n,\pp}{\cc', \sigma'}} \pp \cdot \extwp{\cc'}{\ff}(\sigma') \\
          \eeqtag{Definition of op. semantics (Figure~\ref{table:op})}
          & \inf_{n \in \OpAct{\SKIP, \sigma}} \sum_{\ExecSimple{\SKIP, \sigma}{0,1}{\Term, \sigma}} 1 \cdot \extwp{\Term}{\ff}(\sigma) \\
          \eeqtag{algebra}
          & 1 \cdot \extwp{\Term}{\ff}(\sigma)  \\
          \eeqtag{Definition~\ref{def:ext}, Definition of $\wpsymbol$} 
          & \extwp{\SKIP}{\ff}(\sigma). 
  \end{align}
  
  \emph{The case} $\ASSIGN{x}{\ee}$.
  \begin{align}
          & \inf_{n \in \OpAct{\ASSIGN{x}{\ee}, \sk,\hh}} \sum_{\ExecSimple{\ASSIGN{x}{\ee}, \sk,\hh}{n,\pp}{\cc', \sigma'}} \pp \cdot \extwp{\cc'}{\ff}(\sigma') \\
          \eeqtag{Definition of op. semantics (Figure~\ref{table:op})}
          & \inf_{n \in \OpAct{\ASSIGN{x}{\ee}, \sk,\hh}} \sum_{\Exec{\ASSIGN{x}{\ee}}{\sk}{\hh}{0}{1}{\Term}{\sk\subst{x}{\sk(\ee)}}{\hh}} 1 \cdot \extwp{\Term}{\ff}(\sk\subst{x}{\sk(\ee)},\hh) \\
          \eeqtag{algebra}
          & 1 \cdot \extwp{\Term}{\ff}(\sk\subst{x}{\sk(\ee)},\hh) \\
          \eeqtag{Definition~\ref{def:ext}} 
          & \ff(\sk\subst{x}{\sk(\ee)},\hh) \\
          \eeqtag{Definition of $\ff\subst{x}{\ee}$} 
          & \ff\subst{x}{\ee}(\sk,\hh) \\
          \eeqtag{Definition of $\wpsymbol$} 
          & \extwp{\ASSIGN{x}{\ee}}{\ff}(\sk,\hh). 
  \end{align}
  
  \emph{The case} $\ALLOC{x}{\ee_1,\ldots,\ee_n}$.
  \begin{align}
          & \inf_{n \in \OpAct{\ALLOC{x}{\ee_1,\ldots,\ee_n},\sk,\hh}} \sum_{\Exec{\ALLOC{x}{\ee_1,\ldots,\ee_n}}{\sk}{\hh}{n}{\pp}{\cc'}{\sk'}{\hh'}} \pp \cdot \extwp{\cc'}{\ff}(\sk',\hh') \\
          \eeqtag{Definition of op. semantics (Figure~\ref{table:op}), algebra}
          & \inf_{u \in \PosNats : u,u+1,\ldots,u+n-1 \notin \dom{\hh}} \extwp{\Term}{\ff}(\sk\subst{x}{u},\hh \sepcon \HeapSet{u \mapsto v_1 \ldots v_n}) \\
          %
          %
          %
          \eeqtag{Definition~\ref{def:ext}} 
          & \inf_{u \in \PosNats : u,u+1,\ldots,u+n-1 \notin \dom{\hh}} \ff(\sk\subst{x}{u},\hh \sepcon \singleton{u}{v_1 \ldots v_n}) \\
          %
          \eeqtag{Definition of $\ff\subst{x}{u}$} 
          & \inf_{u \in \PosNats : u,u+1,\ldots,u+n-1 \notin \dom{\hh}} \ff\subst{x}{u}(\sk,\hh \sepcon \singleton{u}{v_1 \ldots v_n}) \label{proof:soundness:alloc:1}
          %
\end{align}
Let
\begin{align}
        M \eeq \{ u \in \PosNats ~|~ u,u+1,\ldots,u+n-1 \notin \dom{\hh} \}
\end{align}
be the set of all possible choices for address $u$ such that the block $u,u+1,\ldots,u+n-1$ can be allocated without overlapping with heap $\hh$.
Moreover, consider the expectation
\begin{align}
        f(u) \eeq \left(\singleton{u}{v_1 \ldots v_n} \sepimp \ff\subst{x}{u}\right)(\sk,\hh)~.
\end{align}
By definition of $\sepimp$, we obtain that for each $u \in M$,
\begin{align}
        f(u) \eeq \ff\subst{x}{u}(\sk,\hh \sepcon \singleton{u}{v_1 \ldots v_n})~. \label{proof:soundness:alloc:2}
\end{align}
Moreover, we have
\begin{align}
        & \inf_{u \in \PosNats} f(u) \\
        \eeqtag{algebra}
        & \min \{ \inf_{u \in M} f(u),~ \inf_{u \in \PosNats \setminus M} f(u) \} \\
        \eeqtag{for each $u \in \PosNats \setminus M$, $f(u) = \infty$ by definition of $\sepimp$}
        & \min \{ \inf_{u \in M} f(u),~ \infty \} \\
        \eeqtag{$\infty$ is the largest element of the lattice $(\E,{\preceq})$} 
        & \inf_{u \in \PosNats} f(u) \eeq \inf_{u \in M} f(u)~. \label{proof:soundness:alloc:3}
\end{align}
Hence, we can continue at equation~(\ref{proof:soundness:alloc:1}) as follows:
\begin{align}
        & \text{continuing from equation~(\ref{proof:soundness:alloc:1})} \\
        \eeqtag{by equation~(\ref{proof:soundness:alloc:2})}
        & \inf_{u \in \PosNats : u,u+1,\ldots,u+n-1 \notin \dom{\hh}} \left(\singleton{u}{v_1 \ldots v_n} \sepimp \ff\subst{x}{u}\right)(\sk,\hh) \\
        \eeqtag{by equation~(\ref{proof:soundness:alloc:3})}
        & \inf_{u \in \PosNats} \left(\singleton{u}{v_1 \ldots v_n} \sepimp \ff\subst{x}{u}\right)(\sk,\hh) \\
        \eeqtag{Definition of $\wpsymbol$}
        & \extwp{\ALLOC{x}{\ee_1, \ldots, \ee_n}}{\ff}(\sk,\hh)~. 
\end{align}
  
  \emph{The case} $\HASSIGN{\ee}{\ee'}$.
  Let $\sk(\ee') = v$. 
  We have to distinguish two cases: 
  
  First, assume $\sk(\ee)=u \in \dom{\hh}$. Then
  \begin{align}
          & \inf_{n \in \OpAct{\HASSIGN{\ee}{\ee'},\sk,\hh}} \sum_{\Exec{\HASSIGN{\ee}{\ee'}}{\sk}{\hh}{n}{\pp}{\cc'}{\sk'}{\hh'}} \pp \cdot \extwp{\cc'}{\ff}(\sk',\hh') \\
          \eeqtag{Definition of op. semantics (Figure~\ref{table:op})}
          & \inf_{n \in \OpAct{\HASSIGN{\ee}{\ee'},\sk,\hh}} \sum_{\Exec{\HASSIGN{\ee}{\ee'}}{\sk}{\hh}{0}{1}{\Term}{\sk}{\hh\subst{u}{v}}} \extwp{\Term}{\ff}(\sk,\hh\subst{u}{v}) \\
          \eeqtag{algebra} 
          & \extwp{\Term}{\ff}(\sk,\hh\subst{u}{v}) \\
          \eeqtag{Definition~\ref{def:ext}} 
          & \ff(\sk,\hh\subst{u}{v}) \\
          \eeqtag{$u \in \dom{\hh}$ by assumption} 
          & \left(\validpointer{u} \sepcon (\singleton{u}{v} \sepimp \ff)\right)(\sk,\hh) \\
          \eeqtag{$\sk(\ee)=u,\sk(\ee')=v$ by assumption} 
          & \left(\validpointer{\ee} \sepcon (\singleton{\ee}{\ee'} \sepimp \ff)\right)(\sk,\hh) \\
          \eeqtag{Definition of $\wpsymbol$}
          & \extwp{\HASSIGN{\ee}{\ee'}}{\ff}(\sk,\hh). 
  \end{align}
  Second, assume $\sk(\ee)=u \notin \dom{\hh}$. Then
  \begin{align}
          & \inf_{n \in \OpAct{\HASSIGN{\ee}{\ee'},\sk,\hh}} \sum_{\Exec{\HASSIGN{\ee}{\ee'}}{\sk}{\hh}{n}{\pp}{\cc'}{\sk'}{\hh'}} \pp \cdot \extwp{\cc'}{\ff}(\sk',\hh') \\
          \eeqtag{Definition of op. semantics (Figure~\ref{table:op})}
          & \inf_{n \in \OpAct{\HASSIGN{\ee}{\ee'}, \sk,\hh}} \sum_{\Exec{\HASSIGN{\ee}{\ee'}}{\sk}{\hh}{0}{1}{\Fault}{\sk}{\hh}} \extwp{\Fault}{\ff}(\sk,\hh) \\
          \eeqtag{algebra} 
          & \extwp{\Fault}{\ff}(\sk,\hh) \\
          \eeqtag{Definition~\ref{def:ext}} 
          & 0 \\
          \eeqtag{$u \notin \dom{\hh}$ by assumption} 
          & \left(\validpointer{u} \sepcon (\singleton{u}{v} \sepimp \ff)\right)(\sk,\hh) \\
          \eeqtag{$\sk(\ee)=u,\sk(\ee')=v$ by assumption} 
          & \left(\validpointer{\ee} \sepcon (\singleton{\ee}{\ee'} \sepimp \ff)\right)(\sk,\hh) \\
          \eeqtag{Definition of $\wpsymbol$}
          & \extwp{\HASSIGN{\ee}{\ee'}}{\ff}(\sk,\hh). 
  \end{align}

  \emph{The case} $\ASSIGNH{x}{\ee}$.
  We have to distinguish two cases: First, assume $\sk(\ee)=u \in \dom{\hh}$. Then, for $\hh(u) = v$, we have
  \begin{align}
          & \inf_{n \in \OpAct{\ASSIGNH{x}{\ee},\sk,\hh}} \sum_{\Exec{\ASSIGNH{x}{\ee}}{\sk}{\hh}{n}{\pp}{\cc'}{\sk'}{\hh'}} \pp \cdot \extwp{\cc'}{\ff}(\sk',\hh') \\
          \eeqtag{Definition of op. semantics (Figure~\ref{table:op})}
          & \inf_{n \in \OpAct{\ASSIGNH{x}{\ee}, \sk,\hh}} \sum_{\Exec{\ASSIGNH{x}{\ee}}{\sk}{\hh}{0}{1}{\Term}{\sk\subst{x}{v}}{\hh}} \extwp{\Term}{\ff}(\sk\subst{x}{v},\hh) \\
          \eeqtag{algebra} 
          & \extwp{\Term}{\ff}(\sk\subst{x}{v},\hh) \\
          \eeqtag{Definition~\ref{def:ext}} 
          & \ff(\sk\subst{x}{v},\hh) \\
          \eeqtag{algebra} 
          & \ff\subst{x}{v}(\sk,\hh) \\
          \eeqtag{$\sk(\ee) = u \in \dom{\hh}, \hh(u) = v$} 
          & (\sup_{v \in \Ints} \singleton{\ee}{v} \sepcon (\singleton{\ee}{v} \sepimp \ff\subst{x}{v}))(\sk,\hh) \\
          \eeqtag{Definition of $\wpsymbol$}
          & \extwp{\ASSIGNH{x}{\ee}}{\ff}(\sk,\hh). 
  \end{align}
  Second, assume $\sk(\ee)=u \notin \dom{\hh}$. Then
  \begin{align}
          & \inf_{n \in \OpAct{\ASSIGNH{x}{\ee},\sk,\hh}} \sum_{\Exec{\ASSIGNH{x}{\ee}}{\sk}{\hh}{n}{\pp}{\cc'}{\sk'}{\hh'}} \pp \cdot \extwp{\cc'}{\ff}(\sk',\hh') \\
          \eeqtag{Definition of op. semantics (Figure~\ref{table:op})}
          & \inf_{n \in \OpAct{\ASSIGNH{x}{\ee}, \sk,\hh}} \sum_{\Exec{\ASSIGNH{x}{\ee}}{\sk}{\hh}{0}{1}{\Fault}{\sk}{\hh}} \extwp{\Fault}{\ff}(\sk,\hh) \\
          \eeqtag{Algebra} 
          & \extwp{\Fault}{\ff}(\sk,\hh) \\
          \eeqtag{Definition~\ref{def:ext}} 
          & 0 \\
          \eeqtag{$\sk(\ee) \notin \dom{\hh}$} 
          & (\sup_{v \in \Ints}  \singleton{\ee}{v} \sepcon (\singleton{\ee}{v} \sepimp \ff\subst{x}{v}))(\sk,\hh) \\
          \eeqtag{Definition of $\wpsymbol$}
          & \extwp{\ASSIGNH{x}{\ee}}{\ff}(\sk,\hh). 
  \end{align}
  
  \emph{The case} $\FREE{x}$.
  We have to distinguish two cases: The heap is of the form $\hh \sepcon \HeapSet{\sk(x) \mapsto v}$ or the heap is not of this form.
  In the first case, we have
  \begin{align}
          & \inf_{n \in \OpAct{\FREE{x}, \sk,\hh \sepcon \HeapSet{\sk(x) \mapsto v}}} \sum_{\Exec{\FREE{x}}{\sk}{\hh \sepcon \HeapSet{\sk(x) \mapsto v}}{n}{\pp}{\cc'}{\sk'}{\hh'}} \pp \cdot \extwp{\cc'}{\ff}(\sk',\hh') \\
          \eeqtag{Definition of op. semantics (Figure~\ref{table:op})}
          & \inf_{n \in \OpAct{\FREE{x}, \sk,\hh \sepcon \HeapSet{\sk(x) \mapsto v}}} \sum_{\Exec{\FREE{x}}{\sk}{\hh \sepcon \HeapSet{\sk(x) \mapsto v}}{0}{1}{\Term}{\sk}{\hh}} 1 \cdot \extwp{\Term}{\ff}(\sk,\hh) \\
          \eeqtag{algebra} 
          & \extwp{\Term}{\ff}(\sk,\hh) \\
          \eeqtag{Definition~\ref{def:ext}} 
          & \ff(\sk,\hh) \\
          \eeqtag{$\ff(\sk,\hh) = (\singleton{u}{v} \sepcon \ff)(\sk,\hh\sepcon \singleton{u}{v})$ if $u \notin \dom{\hh}$} 
          & (\validpointer{\sk(x)} \sepcon \ff)(\sk,\hh \sepcon \singleton{\sk(x)}{v}) \\
          \eeqtag{Definition of $\wpsymbol$}
          & \extwp{\FREE{x}}{\ff}. 
  \end{align}
  Otherwise, we have
  \begin{align}
          & \inf_{n \in \OpAct{\FREE{x}, \sk,\hh}} \sum_{\Exec{\FREE{x}}{\sk}{\hh}{n}{\pp}{\cc'}{\sk'}{\hh'}} \pp \cdot \extwp{\cc'}{\ff}(\sk',\hh') \\
          \eeqtag{Definition of op. semantics (Figure~\ref{table:op})}
          & \inf_{n \in \OpAct{\FREE{x}, \sk,\hh}} \sum_{\Exec{\FREE{x}}{\sk}{\hh}{0}{1}{\Fault}{\sk}{\hh}} 1 \cdot \extwp{\Fault}{\ff}(\sk,\hh) \\
          \eeqtag{algebra}
          & \extwp{\Fault}{\ff}(\sk,\hh) \\
          \eeqtag{Definition~\ref{def:ext}} 
          & 0 \\
          \eeqtag{$\sk(x) \notin \dom{\hh}$ by assumption}
          & (\validpointer{\sk(x)} \sepcon \ff)(\sk,\hh) \\
          \eeqtag{Definition of $\wpsymbol$}
          & \extwp{\FREE{x}}{\ff}. 
  \end{align}

  \emph{The case} $\PCHOICE{\cc_1}{\pp}{\cc_2}$.
  \begin{align}
          & \inf_{n \in \OpAct{\PCHOICE{\cc_1}{\pp}{\cc_2}, \sigma}} \sum_{\ExecSimple{\PCHOICE{\cc_1}{\pp}{\cc_2}, \sigma}{n,q}{\cc', \sigma'}} q \cdot \extwp{\cc'}{\ff}(\sigma') \\
          \eeqtag{Definition of op. semantics (Figure~\ref{table:op})}
          & \inf_{n \in \OpAct{\PCHOICE{\cc_1}{\pp}{\cc_2}, \sigma}} \sum_{\ExecSimple{\PCHOICE{\cc_1}{\pp}{\cc_2}, \sigma}{0,q}{\cc', \sigma}} q \cdot \extwp{\cc'}{\ff}(\sigma) \\
          \eeqtag{Definition of op. semantics (Figure~\ref{table:op}), algebra}
          & \pp \cdot \extwp{\cc_1}{\ff}(\sigma) + (1-\pp) \cdot \extwp{\cc_2}{\ff}(\sigma) \\
          \eeqtag{Definition of $\wpsymbol$}
          & \extwp{\PCHOICE{\cc_1}{\pp}{\cc_2}}{\ff}(\sigma). 
  \end{align}
  
  \emph{The case} $\COMPOSE{\cc_1}{\cc_2}$.
  First, note that for every $\hpgcl$-program $\cc_1$, we have either 
  \begin{enumerate}
    \item $\ExecSimple{\cc_1, \sigma}{n,\pp}{\cc_1', \sigma'}$, where $\cc_1' \in \hpgcl$, or 
    \item $\ExecSimple{\cc_1, \sigma}{n,\pp}{\Term, \sigma'}$, or
    \item $\ExecSimple{\cc_1, \sigma}{n,\pp}{\Fault, \sigma'}$.
  \end{enumerate}
  In other words, within a single step, a $\hpgcl$-program either proceeds execution, terminates or fails due to a memory error, but it never goes into multiple of these successor configurations.
  We thus have to distinguish three mutually exclusive cases.
  In the first case, we have
  \begin{align}
          & \inf_{n \in \OpAct{\COMPOSE{\cc_1}{\cc_2}, \sigma}} \sum_{\ExecSimple{\COMPOSE{\cc_1}{\cc_2}, \sigma}{n,\pp}{\cc', \sigma'}} \pp \cdot \extwp{\cc'}{\ff}(\sigma') \\
          \eeqtag{Definition of op. semantics (Figure~\ref{table:op}), case assumption}
          & \inf_{n \in \OpAct{\COMPOSE{\cc_1}{\cc_2}, \sigma}} 
            \sum_{\ExecSimple{\COMPOSE{\cc_1}{\cc_2}, \sigma}{n,\pp}{\COMPOSE{\cc_1'}{\cc_2}, \sigma'}} \pp \cdot \extwp{\COMPOSE{\cc_1'}{\cc_2}}{\ff}(\sigma') \\
          \eeqtag{Definition of $\Ext{\wpsymbol}$} 
          & \inf_{n \in \OpAct{\COMPOSE{\cc_1}{\cc_2}, \sigma}} 
            \sum_{\ExecSimple{\COMPOSE{\cc_1}{\cc_2}, \sigma}{n,\pp}{\COMPOSE{\cc_1'}{\cc_2}, \sigma'}} \pp \cdot \extwp{\cc_1'}{\extwp{\cc_2}{\ff}}(\sigma') \\
          \eeqtag{Definition of op. semantics (Figure~\ref{table:op})}
          & \inf_{n \in \OpAct{\cc_1, \sigma}} \sum_{\ExecSimple{\cc_1, \sigma}{n,\pp}{\cc_1', \sigma'}} \pp \cdot \extwp{\cc_1'}{\extwp{\cc_2}{\ff}}(\sigma')  \\
          \eeqtag{I.H.} 
          & \extwp{\cc_1}{\extwp{\cc_2}{\ff}}(\sigma) \\
          \eeqtag{Definition of $\Ext{\wpsymbol}$}
          & \extwp{\COMPOSE{\cc_1}{\cc_2}}{\ff}(\sigma).
  \end{align}
  In the second case, we have
  \begin{align}
          & \inf_{n \in \OpAct{\COMPOSE{\cc_1}{\cc_2}, \sigma}} \sum_{\ExecSimple{\COMPOSE{\cc_1}{\cc_2}, \sigma}{n,\pp}{\cc', \sigma'}} \pp \cdot \extwp{\cc'}{\ff}(\sigma') \\
          \eeqtag{Definition of op. semantics (Figure~\ref{table:op}), case assumption}
          & \inf_{n \in \OpAct{\COMPOSE{\cc_1}{\cc_2}, \sigma}} \sum_{\ExecSimple{\COMPOSE{\cc_1}{\cc_2}, \sigma}{n,\pp}{\cc_2, \sigma'}} \pp \cdot \extwp{\cc_2}{\ff}(\sigma') \\
          \eeqtag{Definition of op. semantics (Figure~\ref{table:op})}
          & \inf_{n \in \OpAct{\COMPOSE{\cc_1}{\cc_2}, \sigma}} \sum_{\ExecSimple{\cc_1, \sigma}{n,\pp}{\Term, \sigma'}} \pp \cdot \extwp{\cc_2}{\ff}(\sigma') \\
          \eeqtag{Definition~\ref{def:ext}} 
          & \inf_{n \in \OpAct{\cc_1, \sigma}} \sum_{\ExecSimple{\cc_1, \sigma}{n,\pp}{\Term, \sigma'}} \pp \cdot \extwp{\Term}{\extwp{\cc_2}{\ff}}(\sigma') \\
          \eeqtag{I.H.} 
          & \extwp{\cc_1}{\extwp{\cc_2}{\ff}}(\sigma) \\
          \eeqtag{Definition of $\Ext{\wpsymbol}$}
          & \extwp{\COMPOSE{\cc_1}{\cc_2}}{\ff}(\sigma).
  \end{align}
  In the third case, we have
  \begin{align}
          & \inf_{n \in \OpAct{\COMPOSE{\cc_1}{\cc_2}, \sigma}} \sum_{\ExecSimple{\COMPOSE{\cc_1}{\cc_2}, \sigma}{n,\pp}{\cc', \sigma'}} \pp \cdot \extwp{\cc'}{\ff}(\sigma') \\
          \eeqtag{Definition of op. semantics (Figure~\ref{table:op}), case assumption}
          & \inf_{n \in \OpAct{\COMPOSE{\cc_1}{\cc_2}, \sigma}} \sum_{\ExecSimple{\COMPOSE{\cc_1}{\cc_2}, \sigma}{n,\pp}{\Fault, \sigma'}} \pp \cdot \extwp{\Fault}{\ff}(\sigma') \\
          \eeqtag{Definition of op. semantics (Figure~\ref{table:op})}
          & \inf_{n \in \OpAct{\COMPOSE{\cc_1}{\cc_2}, \sigma}} \sum_{\ExecSimple{\cc_1, \sigma}{n,\pp}{\Fault, \sigma'}} \pp \cdot \extwp{\Fault}{\ff}(\sigma') \\
          \eeqtag{$\extwp{\Fault}{\fg} = 0$ for all $\fg$}
          & \inf_{n \in \OpAct{\cc_1, \sigma}} \sum_{\ExecSimple{\cc_1, \sigma}{n,\pp}{\Fault, \sigma'}} \pp \cdot \extwp{\Fault}{\extwp{\cc_2}{\ff}}(\sigma') \\
          \eeqtag{I.H.} 
          & \extwp{\cc_1}{\extwp{\cc_2}{\ff}}(\sigma) \\
          \eeqtag{Definition of $\Ext{\wpsymbol}$}
          & \extwp{\COMPOSE{\cc_1}{\cc_2}}{\ff}(\sigma). 
  \end{align}
  
  \emph{The case} $\ITE{\guard}{\cc_1}{\cc_2}$.
  We have to distinguish two cases: $\sk(\guard) = \false$ and $\sk(\guard) = \true$.

  If $\sk(\guard) = \false$ then
  \begin{align}
          & \inf_{n \in \OpAct{\ITE{\guard}{\cc_1}{\cc_1}, \sigma}} 
            \sum_{\ExecSimple{\ITE{\guard}{\cc_1}{\cc_2}, \sigma}{0,1}{\cc', \sigma'}} \extwp{\cc'}{\ff}(\sigma') \\
          \eeqtag{Definition of op. semantics (Figure~\ref{table:op}), case assumption}
          & \inf_{n \in \OpAct{\ITE{\guard}{\cc_1}{\cc_1}, \sigma}} 
            \sum_{\ExecSimple{\ITE{\guard}{\cc_1}{\cc_2}, \sigma}{0,1}{\cc_2, \sigma}} \extwp{\cc_2}{\ff}(\sigma) \\
          \eeqtag{algebra} 
          & \extwp{\cc_2}{\ff}(\sigma) \\
          \eeqtag{$\iverson{\guard}(\sigma) = 0$ by assumption} 
          & \left(\iverson{\guard} \cdot \extwp{\cc_1}{\ff} + \iverson{\neg \guard} \cdot \extwp{\cc_2}{\ff}\right)(\sigma) \\
          \eeqtag{Definition of $\wpsymbol$}
          & \extwp{\ITE{\guard}{\cc_1}{\cc_2}}{\ff}(\sigma). 
  \end{align}
  If $\sk(\guard) = \true$ then
  \begin{align}
          & \inf_{n \in \OpAct{\ITE{\guard}{\cc_1}{\cc_1}, \sigma}} 
            \sum_{\ExecSimple{\ITE{\guard}{\cc_1}{\cc_2}, \sigma}{0,1}{\cc', \sigma'}} \extwp{\cc'}{\ff}(\sigma') \\
          \eeqtag{Definition of op. semantics (Figure~\ref{table:op}), case assumption}
          & \inf_{n \in \OpAct{\ITE{\guard}{\cc_1}{\cc_2}, \sigma}} \sum_{\ExecSimple{\ITE{\guard}{\cc_1}{\cc_2}, \sigma}{0,1}{\cc_1, \sigma}} \extwp{\cc_1}{\ff}(\sigma) \\
          \eeqtag{algebra} 
          & \extwp{\cc_1}{\ff}(\sigma) \\
          \eeqtag{$\iverson{\guard}(\sigma) = 1$ by assumption} 
          & \left(\iverson{\guard} \cdot \extwp{\cc_1}{\ff} + \iverson{\neg \guard} \cdot \extwp{\cc_2}{\ff}\right)(\sigma) \\
          \eeqtag{Definition of $\wpsymbol$}
          & \extwp{\ITE{\guard}{\cc_1}{\cc_2}}{\ff}(\sigma). 
  \end{align}

  \emph{The case} $\WHILEDO{\guard}{\cc'}$.
  We have to distinguish two cases: $\sk(\guard) = \false$ and $\sk(\guard) = \true$.

  If $\sk(\guard) = \false$ then
  \begin{align}
          & \inf_{n \in \OpAct{\WHILEDO{\guard}{\cc'}, \sigma}} \sum_{\ExecSimple{\WHILEDO{\guard}{\cc'}, \sigma}{n,\pp}{\cc'', \sigma'}} \extwp{\cc''}{\ff}(\sigma') \\
          \eeqtag{Definition of op. semantics (Figure~\ref{table:op})}
          & \inf_{n \in \OpAct{\WHILEDO{\guard}{\cc'}, \sigma}} \sum_{\ExecSimple{\WHILEDO{\guard}{\cc'}, \sigma}{0,1}{\Term, \sigma}} \extwp{\Term}{\ff}(\sigma) \\
          \eeqtag{algebra}
          & \extwp{\Term}{\ff}(\sigma) \\
          \eeqtag{Definition~\ref{def:ext}} 
          & \ff(\sigma) \\
          \eeqtag{$\iverson{\guard}(\sigma) = 0$ by assumption} 
          & \left(\iverson{\neg \guard} \cdot \ff + \iverson{\guard} \cdot \extwp{\COMPOSE{\cc'}{\WHILEDO{\guard}{\cc'}}}{\ff}\right)(\sigma) \\
          \eeqtag{Definition of $\wpsymbol$}
          & \extwp{\WHILEDO{\guard}{\cc'}}{\ff}(\sigma). 
  \end{align}
  
  Conversely, if $\sk(\guard) = \true$ then
  \begin{align}
          & \inf_{n \in \OpAct{\WHILEDO{\guard}{\cc'}, \sigma}} \sum_{\ExecSimple{\WHILEDO{\guard}{\cc'}, \sigma}{n,\pp}{\cc'', \sigma'}} \extwp{\cc''}{\ff}(\sigma') \\
          \eeqtag{Definition of op. semantics (Figure~\ref{table:op})}
          & \inf_{n \in \OpAct{\WHILEDO{\guard}{\cc'}, \sigma}} \sum_{\ExecSimple{\WHILEDO{\guard}{\cc'}, \sigma}{0,1}{\COMPOSE{\cc'}{\WHILEDO{\guard}{\cc'}}, \sigma}} \\
          \eeqtag{algebra} 
          & \extwp{\COMPOSE{\cc'}{\WHILEDO{\guard}{\cc'}}}{\ff}(\sigma) \\
          \eeqtag{$\iverson{\guard}(\sigma) = 1$ by assumption} 
          & \left(\iverson{\neg \guard} \cdot \ff + \iverson{\guard} \cdot \extwp{\COMPOSE{\cc'}{\WHILEDO{\guard}{\cc'}}}{\ff}\right)(\sigma) \\
          \eeqtag{Definition of $\wpsymbol$}
          & \extwp{\WHILEDO{\guard}{\cc'}}{\ff}(\sigma). 
  \end{align}
\end{proof}

\begin{lemma}\label{thm:op:while}
    $\EOpf{\WHILEDO{\guard}{\cc}}{\ff} = \iverson{\neg \guard} \cdot \ff + \iverson{\guard} \cdot \EOpf{\COMPOSE{\cc}{\WHILEDO{\guard}{\cc}}}{\ff}$.
\end{lemma}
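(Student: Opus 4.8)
The plan is to unfold the definition of the operational semantics for a loop and use the characterization of expected rewards (Theorem~\ref{thm:exprew}) together with the operational inference rules for $\WHILEDO{\guard}{\cc}$ in Figure~\ref{table:op}. The key observation is that a while-loop configuration $(\WHILEDO{\guard}{\cc}, \sk, \hh)$ is never a goal configuration and never a fault configuration, so Theorem~\ref{thm:exprew} applies in its third case: $\EOpf{\WHILEDO{\guard}{\cc}}{\ff}(\sk,\hh) = \inf_{a \in \OpAct{\WHILEDO{\guard}{\cc}, \sk, \hh}} \sum_{\ExecSimple{(\WHILEDO{\guard}{\cc}, \sk, \hh)}{a,\pp}{t'}} \pp \cdot \EOpf{t'}{\ff}$. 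From Figure~\ref{table:op}, there is exactly one outgoing transition (with action $0$ and probability $1$), and its target depends on whether $\sk(\guard)$ is $\true$ or $\false$.

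First I would fix an arbitrary state $(\sk,\hh)$ and split into the two cases. If $\sk(\guard) = \false$, the unique successor is $(\Term, \sk, \hh)$, so the sum collapses to $\EOpf{\Term}{\ff}(\sk,\hh)$, which by Definition~\ref{def:ext} (or directly Theorem~\ref{thm:exprew}, first bullet) equals $\ff(\sk,\hh)$; since $\iverson{\neg\guard}(\sk,\hh) = 1$ and $\iverson{\guard}(\sk,\hh) = 0$, this matches the right-hand side. If $\sk(\guard) = \true$, the unique successor is $(\COMPOSE{\cc}{\WHILEDO{\guard}{\cc}}, \sk, \hh)$, so the sum collapses to $\EOpf{\COMPOSE{\cc}{\WHILEDO{\guard}{\cc}}}{\ff}(\sk,\hh) = \Opf{\COMPOSE{\cc}{\WHILEDO{\guard}{\cc}}}{\ff}(\sk,\hh)$ (using that this is an ordinary $\hpgcl$ program, so $\Ext{\Opsymbol}$ agrees with $\Opsymbol$); since now $\iverson{\guard}(\sk,\hh) = 1$ and $\iverson{\neg\guard}(\sk,\hh) = 0$, this again matches the right-hand side. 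Combining the two cases gives the claimed pointwise equality, hence the equality of expectations.

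There is really no hard part here — the statement is essentially a direct consequence of Theorem~\ref{thm:exprew} specialized to the loop rule, and the $\inf$ over actions is trivial because $\OpAct{\WHILEDO{\guard}{\cc}, \sk, \hh}$ is the singleton $\{0\}$. The only mild care needed is to be precise about which configurations are goal/fault configurations (so that the correct bullet of Theorem~\ref{thm:exprew} is invoked) and to note that $\COMPOSE{\cc}{\WHILEDO{\guard}{\cc}} \in \hpgcl$, so $\EOpf{\COMPOSE{\cc}{\WHILEDO{\guard}{\cc}}}{\ff} = \Opf{\COMPOSE{\cc}{\WHILEDO{\guard}{\cc}}}{\ff}$ by Definition~\ref{def:ext}. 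This lemma is presumably used in the inductive argument that $\wpsymbol = \Opsymbol$ (Theorem~\ref{thm:wp:soundness}), specifically in reconciling the least-fixed-point characterization of $\wp{\WHILEDO{\guard}{\cc}}{\ff}$ with the operational semantics, so keeping the statement in this unrolled form is exactly what is needed downstream.
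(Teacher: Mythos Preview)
Your proposal is correct and matches the paper's proof essentially step for step: both fix a state $(\sk,\hh)$, invoke Theorem~\ref{thm:exprew} (the third bullet, since a loop configuration is neither goal nor fault), use the unique action-$0$ transition from Figure~\ref{table:op} in each of the two cases $\sk(\guard)=\false$ and $\sk(\guard)=\true$, and then match the result against the Iverson-bracket expression. Your explicit remarks that $\OpAct{\WHILEDO{\guard}{\cc},\sk,\hh}=\{0\}$ and that $\Ext{\Opsymbol}$ agrees with $\Opsymbol$ on $\hpgcl$ programs are correct and make the argument slightly more explicit than the paper's version.
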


\begin{proof}
Let $(\sk,\hh) \in \States$. We distinguish two cases: $\sk(\guard) = \false$ and $\sk(\guard) = \true$.

If $\sk(\guard) = \false$, we have
\begin{align}
        & \EOpf{\WHILEDO{\guard}{\cc}}{\ff}(\sk,\hh) \\
        \eeqtag{Theorem~\ref{thm:exprew}, Definition of op. semantics (Figure~\ref{table:op})} 
        & \sum_{\Exec{\WHILEDO{\guard}{\cc}}{\sk}{\hh}{0}{1}{\Term}{\sk}{\hh}} \EOpf{\Term}{\ff}(\sk,\hh) \\
        \eeqtag{algebra, Definition~\ref{def:ext}} 
        & \ff(\sk,\hh) \\
        \eeqtag{$\sk(\guard) = 1 - \sk(\neg \guard) = 0$ by assumption}
        & \left(\iverson{\neg \guard} \cdot \ff + \iverson{\guard} \cdot \EOpf{\COMPOSE{\cc}{\WHILEDO{\guard}{\cc}}}{\ff}\right)(\sk,\hh). 
\end{align}

If $\sk(\guard) = \true$, we have
\begin{align}
        & \EOpf{\WHILEDO{\guard}{\cc}}{\ff}(\sk,\hh) \\
        \eeqtag{Theorem~\ref{thm:exprew}, Definition of op. semantics (Figure~\ref{table:op})} 
        & \sum_{\Exec{\WHILEDO{\guard}{\cc}}{\sk}{\hh}{0}{1}{\COMPOSE{\cc}{\WHILEDO{\guard}{\cc}}}{\sk}{\hh}} \EOpf{\COMPOSE{\cc}{\WHILEDO{\guard}{\cc}}}{\ff}(\sk,\hh) \\
        \eeqtag{algebra, Definition~\ref{def:ext}} 
        & \EOpf{\COMPOSE{\cc}{\WHILEDO{\guard}{\cc}}}{\ff}(\sk,\hh) \\
        \eeqtag{$\sk(\guard) = 1 - \sk(\neg \guard) = 1$ by assumption}
        & \left(\iverson{\neg \guard} \cdot \ff + \iverson{\guard} \cdot \EOpf{\COMPOSE{\cc}{\WHILEDO{\guard}{\cc}}}{\ff}\right)(\sk,\hh). 
\end{align}
\end{proof}

\begin{lemma}\label{thm:op:compose}
    $\EOpf{\COMPOSE{\cc_1}{\cc_2}}{\ff} = \EOpf{\cc_1}{\EOpf{\cc_2}{\ff}}$.
\end{lemma}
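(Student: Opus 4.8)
\textbf{Proof plan for Lemma~\ref{thm:op:compose}.}
The plan is to unfold the operational semantics on both sides and reconcile them by a well-founded induction on the execution relation $\ExecSymbol$. Recall from Definition~\ref{def:op} that $\EOpf{\COMPOSE{\cc_1}{\cc_2}}{\ff}(\sigma) = \ExpRewC{\ff}{\COMPOSE{\cc_1}{\cc_2}, \sigma}$, so by the characterization of expected rewards (Theorem~\ref{thm:exprew}) this quantity equals $\inf_{n} \sum_{\ExecSimple{\COMPOSE{\cc_1}{\cc_2}, \sigma}{n,\pp}{t'}} \pp \cdot \EOpf{t'}{\ff}$ whenever $(\COMPOSE{\cc_1}{\cc_2}, \sigma)$ is not a goal or fault configuration (which is always the case for a sequential composition). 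The key structural observation, already used in the proof of Lemma~\ref{thm:wp-functional}, is that in one step $\cc_1$ either (i) continues as some $\cc_1' \in \hpgcl$, (ii) terminates ($\Term$), or (iii) faults ($\Fault$), and the transition rules for $\COMPOSE{\cc_1}{\cc_2}$ in Figure~\ref{table:op} mirror these three cases exactly, with $(\COMPOSE{\cc_1}{\cc_2}, \sigma) \to (\COMPOSE{\cc_1'}{\cc_2}, \sigma')$, $(\COMPOSE{\cc_1}{\cc_2}, \sigma) \to (\cc_2, \sigma')$, and $(\COMPOSE{\cc_1}{\cc_2}, \sigma) \to (\Fault, \sigma)$ respectively, preserving actions and probabilities.

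First I would establish a uniform bridge between the expected reward of $\COMPOSE{\cc_1}{\cc_2}$ and that of $\cc_1$ "with postexpectation replaced by $\EOpf{\cc_2}{\ff}$." Concretely, I claim that for every configuration $(\cc_1, \sigma)$ reachable along some terminating path,
\[
    \Ext{\Opsymbol}\llbracket \COMPOSE{\cc_1}{\cc_2} \rrbracket(\ff)(\sigma) \eeq \Ext{\Opsymbol}\llbracket \cc_1 \rrbracket\bigl(\Ext{\Opsymbol}\llbracket \cc_2 \rrbracket(\ff)\bigr)(\sigma),
\]
where $\Ext{\Opsymbol}$ is the extended transformer from Definition~\ref{def:ext} (so $\Ext{\Opsymbol}\llbracket \Term \rrbracket(\fg) = \fg$ and $\Ext{\Opsymbol}\llbracket \Fault \rrbracket(\fg) = 0$, with $\COMPOSE{\Term}{\cc_2}$ and $\COMPOSE{\Fault}{\cc_2}$ read through this convention). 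The two base cases hold by Definition~\ref{def:ext}: for $\cc_1 = \Term$ both sides equal $\EOpf{\cc_2}{\ff}(\sigma)$, and for $\cc_1 = \Fault$ both sides equal $0$. For the inductive step, I apply Theorem~\ref{thm:exprew} to rewrite the left-hand side as an $\inf$ over a $\sum$ of successor configurations, split that sum according to the three-case structure above, invoke the induction hypothesis on each successor (each of which is strictly smaller in the well-founded restriction of $\ExecSymbol^{*}$ to configurations on terminating paths), and then recognize the result as the corresponding $\inf$-$\sum$ expression for $\cc_1$ with postexpectation $\EOpf{\cc_2}{\ff}$—again via Theorem~\ref{thm:exprew}. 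This is essentially the computation in the $\COMPOSE{\cc_1}{\cc_2}$ case of Lemma~\ref{thm:wp-functional}, but carried out for $\Opsymbol$ instead of $\wpsymbol$.

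The statement of the lemma then follows by instantiating the bridge at $\cc_1$ itself (the top-level program is a genuine $\hpgcl$-program, so we land in the third branch of Definition~\ref{def:ext} on both sides), yielding $\Opf{\COMPOSE{\cc_1}{\cc_2}}{\ff} = \Opf{\cc_1}{\Opf{\cc_2}{\ff}}$. The main obstacle I anticipate is justifying the well-foundedness of the induction carefully: the relation $\ExecSymbol^{*}$ is only well-founded when restricted to configurations that actually occur on a finite path reaching a goal configuration (as noted in the preliminaries of Appendix~\ref{app:wp:soundness}), and one must argue that paths through $\COMPOSE{\cc_1}{\cc_2}$ which never leave the ``$\cc_1$ part'' (i.e.\ diverge inside $\cc_1$) contribute zero reward on both sides and can therefore be discarded before the induction—exactly as in Lemma~\ref{thm:op-least}. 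Once that restriction is in place, the case analysis is routine and each step is a direct application of Theorem~\ref{thm:exprew} together with the transition rules of Figure~\ref{table:op}.
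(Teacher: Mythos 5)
Your proposal is correct and follows essentially the same route as the paper: a well-founded induction over the execution relation (equivalently, the structure of the inference rules for sequential composition), splitting into the three cases where $\cc_1$ steps to $\Term$, to $\Fault$, or to some $\cc_1'$, and rewriting both sides via Theorem~\ref{thm:exprew} and Definition~\ref{def:ext}. Your extra care in discarding non-terminating paths (zero reward on both sides, as in Lemma~\ref{thm:op-least}) only makes explicit the well-foundedness assumption the paper states in its preliminaries.
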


\begin{proof}
  By induction on the structure of inference rules (cf.\ Figure~\ref{table:op}) for sequential composition.

  There are two base cases:
  
  First, consider $\ExecSimple{\cc_1, \sigma}{a,p}{\Term, \sigma'}$. Then
  \begin{align}
          & \EOpf{\COMPOSE{\cc_1}{\cc_2}}{\ff}(\sigma) \\
          \eeqtag{Definition of op. semantics (Figure~\ref{table:op})}
          & \inf_{a \in \OpAct{\COMPOSE{\cc_1}{\cc_2}, \sigma}} \sum_{\ExecSimple{\COMPOSE{\cc_1}{\cc_2}, \sigma}{a,\pp}{\cc_2, \sigma'}} \pp \cdot \EOpf{\cc_2}{\ff}(\sigma') \\
          \eeqtag{Definition of op. semantics (Figure~\ref{table:op}), Definition~\ref{def:ext}}
          & \inf_{a \in \OpAct{\cc_1, \sigma}} \sum_{\ExecSimple{\cc_1, \sigma}{a,\pp}{\Term, \sigma'}} \pp \cdot \EOpf{\Term}{\EOpf{\cc_2}{\ff}}(\sigma') \\
          \eeqtag{Theorem~\ref{thm:exprew}}
          & \EOpf{\cc_1}{\EOpf{\cc_2}{\ff}}(\sigma).
  \end{align}
  
  Second, consider $\ExecSimple{\cc_1, \sigma}{a,\pp}{\Fault, \sigma}$.
  \begin{align}
          & \EOpf{\COMPOSE{\cc_1}{\cc_2}}{\ff}(\sigma) \\
          \eeqtag{Definition of op. semantics (Figure~\ref{table:op})}
          & \inf_{a \in \OpAct{\COMPOSE{\cc_1}{\cc_2}, \sigma}} \sum_{\ExecSimple{\COMPOSE{\cc_1}{\cc_2}, \sigma}{a,\pp}{\Fault, \sigma}} \pp \cdot \EOpf{\Fault}{\ff}(\sigma) \\
          \eeqtag{Definition~\ref{def:ext}} 
          & \inf_{a \in \OpAct{\COMPOSE{\cc_1}{\cc_2}, \sigma}} \sum_{\ExecSimple{\COMPOSE{\cc_1}{\cc_2}, \sigma}{a,\pp}{\Fault, \sigma}} \pp \cdot \EOpf{\Fault}{\EOpf{\cc_2}{\ff}}(\sigma) \\
          \eeqtag{Definition of op. semantics (Figure~\ref{table:op})}
          & \inf_{a \in \OpAct{\cc_1, \sigma}} \sum_{\ExecSimple{\cc_1, \sigma}{a,\pp}{\Fault, \sigma}} \pp \cdot \EOpf{\Fault}{\EOpf{\cc_2}{\ff}}(\sigma) \\
          \eeqtag{Theorem~\ref{thm:exprew}}
          & \EOpf{\cc_1}{\EOpf{\cc_2}{\ff}}(\sigma).
  \end{align}

  For the composite case, assume $\ExecSimple{\cc_1, \sigma}{a,\pp}{\cc_1', \sigma'}$.
  \begin{align}
          & \EOpf{\COMPOSE{\cc_1}{\cc_2}}{\ff}(\sigma) \\
          \eeqtag{Definition of op. semantics (Figure~\ref{table:op})}
          & \inf_{a \in \OpAct{\COMPOSE{\cc_1}{\cc_2}, \sigma}} 
            \sum_{\ExecSimple{\COMPOSE{\cc_1}{\cc_2}, \sigma}{a,\pp}{\COMPOSE{\cc_1'}{\cc_2}, \sigma'}} \pp \cdot \EOpf{\COMPOSE{\cc_1'}{\cc_2}}{\ff}(\sigma') \\
          \eeqtag{I.H.}
          & \inf_{a \in \OpAct{\COMPOSE{\cc_1}{\cc_2}, \sigma}} 
            \sum_{\ExecSimple{\COMPOSE{\cc_1}{\cc_2}, \sigma}{a,\pp}{\COMPOSE{\cc_1'}{\cc_2}, \sigma'}} \pp \cdot \EOpf{\cc_1'}{\EOpf{\cc_2}{\ff}}(\sigma') \\
          \eeqtag{Definition of op. semantics (Figure~\ref{table:op}), using premise}
          & \inf_{a \in \OpAct{\cc_1, \sigma}} \sum_{\ExecSimple{\cc_1, \sigma}{a,\pp}{\cc_1', \sigma'}} \pp \cdot \EOpf{\cc_1'}{\EOpf{\cc_2}{\ff}}(\sigma') \\
          \eeqtag{Theorem~\ref{thm:exprew}}
          & \EOpf{\cc_1}{\EOpf{\cc_2}{\ff}}(\sigma).
  \end{align}
  
\end{proof}

\begin{lemma}\label{thm:wp-leq-op}
        $\Ext{\wpsymbol} \leq \Ext{\Opsymbol}$.
\end{lemma}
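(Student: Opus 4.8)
\textbf{Proof plan for Lemma~\ref{thm:wp-leq-op} ($\Ext{\wpsymbol} \leq \Ext{\Opsymbol}$).}
The plan is to proceed by structural induction on $\hpgcl$-programs $\cc$, showing $\extwp{\cc}{\ff}(\sigma) \leq \EOpf{\cc}{\ff}(\sigma)$ for all $\ff \in \E$ and $\sigma \in \States$; the extended-transformer cases $\cc = \Term$ and $\cc = \Fault$ are immediate from Definition~\ref{def:ext} (both sides equal $\ff$ and $0$ respectively), so only genuine programs need treatment. For the atomic heap-free statements $\SKIP$ and $\ASSIGN{x}{\ee}$, and for the probabilistic choice $\PCHOICE{\cc_1}{\pp}{\cc_2}$ and conditional $\ITE{\guard}{\cc_1}{\cc_2}$, the argument is a direct unfolding: by Lemma~\ref{thm:wp-functional} ($\wpsymbol$ is an $\hpgcl$-functional) the left-hand side satisfies the defining fixpoint equation, while Theorem~\ref{thm:exprew} together with the operational rules of Figure~\ref{table:op} gives the matching equation for $\EOpf{\cc}{\ff}$, so equality (hence $\leq$) follows from the induction hypotheses on the subprograms. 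For sequential composition, I would use $\extwp{\COMPOSE{\cc_1}{\cc_2}}{\ff} = \extwp{\cc_1}{\extwp{\cc_2}{\ff}}$ (Definition of $\wpsymbol$), then the I.H. on $\cc_2$ to bound $\extwp{\cc_2}{\ff} \leq \EOpf{\cc_2}{\ff}$, then monotonicity of $\wpsymbol$ (Theorem~\ref{thm:wp:basic}.\ref{thm:wp:basic:monotonicity}) to get $\extwp{\cc_1}{\extwp{\cc_2}{\ff}} \leq \extwp{\cc_1}{\EOpf{\cc_2}{\ff}}$, then the I.H. on $\cc_1$ to reach $\EOpf{\cc_1}{\EOpf{\cc_2}{\ff}}$, which equals $\EOpf{\COMPOSE{\cc_1}{\cc_2}}{\ff}$ by Lemma~\ref{thm:op:compose}.

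The heap-manipulating atomic statements ($\FREE{\ee}$, $\HASSIGN{\ee}{\ee'}$, $\ASSIGNH{x}{\ee}$, $\ALLOC{x}{\vec{\ee}}$) are handled pointwise by the same unfolding: Lemma~\ref{thm:wp-functional} already establishes that $\Ext{\wpsymbol}$ satisfies the functional equation for each of these, and $\Ext{\Opsymbol}$ satisfies it by Theorem~\ref{thm:exprew}; since these statements take a single step and I.H. is unnecessary at leaves, in fact equality holds. (For $\ALLOC$ one carries the infimum over free addresses exactly as in the proof of Lemma~\ref{thm:wp-functional}; nothing new is needed because that lemma already did the delicate $\sepimp$/infimum bookkeeping.)

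The genuinely non-routine case is the loop $\cc = \WHILEDO{\guard}{\cc'}$, where we cannot simply read off equality because $\extwp{\WHILEDO{\guard}{\cc'}}{\ff}$ is defined as the \emph{least} fixpoint $\lfp \fg\mydot \charwp{\guard}{\cc'}{\ff}(\fg)$, given by the transfinite iteration $\charwpn{\guard}{\cc'}{\ff}{\oa}(0)$. The plan is to show by transfinite induction on the ordinal $\oa$ that $\charwpn{\guard}{\cc'}{\ff}{\oa}(0) \leq \EOpf{\WHILEDO{\guard}{\cc'}}{\ff}$ for every $\oa$; taking the ordinal at which the iteration stabilizes then yields the claim. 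The base case $\oa = 0$ is trivial since $0$ is the least element. For a successor ordinal, we expand $\charwpn{\guard}{\cc'}{\ff}{\oa+1}(0) = \iverson{\neg\guard}\cdot\ff + \iverson{\guard}\cdot\wp{\cc'}{\charwpn{\guard}{\cc'}{\ff}{\oa}(0)}$, apply the ordinal I.H. to bound the inner argument by $\EOpf{\WHILEDO{\guard}{\cc'}}{\ff}$, apply monotonicity of $\wpsymbol$, then the structural I.H. on $\cc'$ to replace $\wp{\cc'}{\cdot}$ by $\EOpf{\cc'}{\cdot}$, obtaining $\iverson{\neg\guard}\cdot\ff + \iverson{\guard}\cdot\EOpf{\cc'}{\EOpf{\WHILEDO{\guard}{\cc'}}{\ff}}$, which by Lemma~\ref{thm:op:compose} equals $\iverson{\neg\guard}\cdot\ff + \iverson{\guard}\cdot\EOpf{\COMPOSE{\cc'}{\WHILEDO{\guard}{\cc'}}}{\ff}$, and by Lemma~\ref{thm:op:while} this is exactly $\EOpf{\WHILEDO{\guard}{\cc'}}{\ff}$. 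For a limit ordinal, $\charwpn{\guard}{\cc'}{\ff}{\oa}(0) = \sup_{\ob<\oa}\charwpn{\guard}{\cc'}{\ff}{\ob}(0)$, and each term of the supremum is bounded by $\EOpf{\WHILEDO{\guard}{\cc'}}{\ff}$ by the ordinal I.H., so the supremum is too. I expect the main subtlety to be purely bookkeeping — making sure the transfinite induction on $\oa$ is nested correctly inside the structural induction on programs, and invoking Lemmas~\ref{thm:op:while} and~\ref{thm:op:compose} at the right points — rather than any deep obstacle; continuity of $\wpsymbol$ is \emph{not} needed here precisely because we work with the least fixpoint from below and only ever need inequalities.
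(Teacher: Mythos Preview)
Your proposal is correct and structurally close to the paper's proof. The only substantive difference is the loop case: where you set up a transfinite induction on~$\oa$ to show $\charwpn{\guard}{\cc'}{\ff}{\oa}(0) \leq \EOpf{\WHILEDO{\guard}{\cc'}}{\ff}$ for every ordinal, the paper instead shows in one step that $\fk \coloneqq \EOpf{\WHILEDO{\guard}{\cc'}}{\ff}$ is a \emph{prefixed point} of $\charwp{\guard}{\cc'}{\ff}$, i.e.\ $\charwp{\guard}{\cc'}{\ff}(\fk) \preceq \fk$, by applying the structural I.H.\ on~$\cc'$ to get $\wp{\cc'}{\fk} \preceq \Opf{\cc'}{\fk}$ and then invoking Lemmas~\ref{thm:op:compose} and~\ref{thm:op:while}. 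Park induction (the least fixed point lies below every prefixed point) then immediately yields $\wp{\WHILEDO{\guard}{\cc'}}{\ff} \preceq \fk$. Your three-case transfinite induction unfolds exactly the same inequality, just more laboriously; the prefixed-point argument avoids the limit-ordinal bookkeeping altogether and never needs monotonicity of $\wpsymbol$ in this case. For the remaining cases (atomic statements, composition, conditional, probabilistic choice) your plan and the paper's coincide.
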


\begin{proof}
  By induction on the structure of $\hpgcl$ programs.

  \emph{The base cases} $\SKIP$, $\ASSIGN{x}{\ee}$, $\ALLOC{x}{\ee_1,\ldots,\ee_n}$, $\HASSIGN{\ee}{\ee'}$, $\ASSIGNH{x}{\ee}$, $\FREE{x}$.
  Let $\cc$ be one of the above base cases. We distinguish two disjoint cases (cf. Figure~\ref{table:op}): $\cc$ successfully terminates in one step or $\cc$ leads to a memory fault in one step (if possible).

  First, assume $\cc$ successfully terminates. Then
    \begin{align}
            & \extwp{\cc}{\ff}(\sigma) \\
            \eeqtag{Lemma~\ref{thm:wp-functional}, assumption} 
            & \inf_{n \in \OpAct{\cc, \sigma}} \sum_{\ExecSimple{\cc,\sigma}{n,\pp}{\Term, \sigma'}} \pp \cdot \extwp{\Term}{\ff}(\sigma') \\
            \eeqtag{Definition~\ref{def:ext}} 
            & \inf_{n \in \OpAct{\cc, \sigma}} \sum_{\ExecSimple{\cc,\sigma}{n,\pp}{\Term, \sigma'}} \pp \cdot \ff(\sigma') \\
            \eeqtag{Definition~\ref{def:ext}} 
            & \inf_{n \in \OpAct{\cc, \sigma}} \sum_{\ExecSimple{\cc,\sigma}{n,\pp}{\Term, \sigma'}} \pp \cdot \EOpf{\Term}{\ff}(\sigma') \\
            \eeqtag{Lemma~\ref{thm:op-least}}
            & \EOpf{\cc}{\ff}(\sigma). 
    \end{align}
  Now, assume $\cc$ leads to a memory fault. Then
    \begin{align}
            & \extwp{\cc}{\ff}(\sigma) \\
            \eeqtag{Lemma~\ref{thm:wp-functional}, assumption} 
            & \inf_{n \in \OpAct{\cc, \sigma}} \sum_{\ExecSimple{\cc,\sigma}{n,\pp}{\Fault, \sigma}} \pp \cdot \extwp{\Fault}{\ff}(\sigma') \\
            \eeqtag{Definition~\ref{def:ext}} 
            & \inf_{n \in \OpAct{\cc, \sigma}} \sum_{\ExecSimple{\cc,\sigma}{n,\pp}{\Fault, \sigma'}} \pp \cdot 0 \\
            \eeqtag{Definition~\ref{def:ext}} 
            & \inf_{n \in \OpAct{\cc, \sigma}} \sum_{\ExecSimple{\cc,\sigma}{n,\pp}{\Fault, \sigma'}} \pp \cdot \EOpf{\Fault}{\ff}(\sigma') \\
            \eeqtag{Lemma~\ref{thm:op-least}}
            & \EOpf{\cc}{\ff}(\sigma). 
    \end{align}

  \emph{The case $\COMPOSE{\cc_1}{\cc_2}$}
  \begin{align}
          & \extwp{\COMPOSE{\cc_1}{\cc_2}}{\ff} \\
          \eeqtag{Definition of $\wpsymbol$} 
          & \extwp{\cc_1}{\extwp{\cc_2}{\ff}} \\
          \ppreceqtag{I.H. on $\cc_2$} 
          & \extwp{\cc_1}{\EOpf{\cc_2}{\ff}} \\
          \ppreceqtag{I.H. on $\cc_1$} 
          & \EOpf{\cc_1}{\EOpf{\cc_2}{\ff}} \\
          \eeqtag{Lemma~\ref{thm:op:compose}}
          & \EOpf{\COMPOSE{\cc_1}{\cc_2}}{\ff}. 
  \end{align}
  
  \emph{The case $\ITE{\guard}{\cc_1}{\cc_2}$}.
  \begin{align}
          & \extwp{\ITE{\guard}{\cc_1}{\cc_2}}{\ff}(\sigma) \\
          \eeqtag{Lemma~\ref{thm:wp-functional}} 
          & \inf_{n \in \OpAct{\ITE{\guard}{\cc_1}{\cc_2}, \sigma}} 
            \sum_{\ExecSimple{\ITE{\guard}{\cc_1}{\cc_2},\sigma}{n,\pp}{\cc', \sigma}} \pp \cdot \extwp{\cc'}{\ff}(\sigma) \\
          \ppreceqtag{I.H.} 
          & \inf_{n \in \OpAct{\ITE{\guard}{\cc_1}{\cc_2}, \sigma}} 
            \sum_{\ExecSimple{\ITE{\guard}{\cc_1}{\cc_2},\sigma}{n,\pp}{\cc', \sigma}} \pp \cdot \EOpf{\cc'}{\ff}(\sigma) \\
          \eeqtag{Lemma~\ref{thm:op-least}}
          & \EOpf{\ITE{\guard}{\cc_1}{\cc_2}}{\ff}(\sigma). 
  \end{align}
  
  \emph{The case $\PCHOICE{\cc_1}{\pp}{\cc_2}$}.
  \begin{align}
          & \wp{\PCHOICE{\cc_1}{\pp}{\cc_2}}{\ff}(\sigma) \\
          \eeqtag{Definition $\wpsymbol$} \\
          & \pp \cdot \wp{\cc_1}{\ff}(\sigma) + (1-\pp) \cdot \wp{\cc_2}{\ff}(\sigma) \\
          \ppreceqtag{I.H.} 
          & \pp \cdot \Opf{\cc_1}{\ff}(\sigma) + (1-\pp) \cdot \Opf{\cc_2}{\ff}(\sigma) \\
          \eeqtag{Definition of op. semantics (Figure~\ref{table:op})}
          & \sum_{\ExecSimple{\PCHOICE{\cc_1}{\pp}{\cc_2}, \sigma}{0, q}{\cc', \sigma}} q \cdot \Opf{\cc'}{\ff}(\sigma) \\
          \eeqtag{Theorem~\ref{thm:exprew}}
          & \Opf{\PCHOICE{\cc_1}{\pp}{\cc_2}}{\ff}(\sigma). 
  \end{align}

  \emph{The case $\WHILEDO{\guard}{\cc}$}.
  Recall that $\wp{\WHILEDO{\guard}{\cc}}{\ff} = \lfp \fh. \charwp{\guard}{\cc}{\ff}(\fh)$, where the function $\charwp{\guard}{\cc}{\ff}(\fh)$ is given by 
  \begin{align}
          \charwp{\guard}{\cc}{\ff}(\fh) \eeq \iverson{\neg \guard} \cdot \ff + \iverson{\guard} \cdot \wp{\cc}{\fh}. \label{eq:op:charwp}
  \end{align}
  Now, let $\fk = \EOpf{\WHILEDO{\guard}{\cc}}{\ff}$. Then
  \begin{align}
          & \charwp{\guard}{\cc}{\ff}(\fk) \\
          \eeqtag{by (\ref{eq:op:charwp})}
          & \iverson{\neg \guard} \cdot \ff + \iverson{\guard} \cdot \wp{\cc}{\fk} \\
          \ppreceqtag{I.H.}
          & \iverson{\neg \guard} \cdot \ff + \iverson{\guard} \cdot \Opf{\cc}{\fk} \\
          \eeqtag{Lemma~\ref{thm:op:while}}
          & \fk. 
  \end{align}
  Hence, $\fk$ is a prefixed point of $F_{\ff}(\fh)$. Consequently,
  \begin{align}
          \wp{\WHILEDO{\guard}{\cc}}{\ff} \eeq \lfp \fh. \charwp{\guard}{\cc}{\ff}(\fh) \ppreceq \fk \eeq \Opf{\WHILEDO{\guard}{\cc}}{\ff}.
  \end{align}
\end{proof}

\subsection{Conservativity of \QSL as a verification system}\label{app:qsl:conservativity:wp}

For a non-probabilistic $\hpgcl$ program $\cc$ and a postcondition $\preda \in \SL$ in classical separation logic, we denote the classical \emph{weakest precondition} (cf.~\cite{DBLP:books/ph/Dijkstra76,DBLP:conf/popl/KrebbersTB17,DBLP:conf/lics/Reynolds02})  of program $\cc$ with respect to postcondition $\preda$ by $\SLwp{\cc}{\preda}$. 

The proof of Theorem~\ref{thm:qsl:conservativity:wp} relies on the following auxiliary result.

\begin{lemma}\label{thm:sl:conservativity}
    Let $\cc \in \hpgcl$ be a non-probabilistic program.
    Then, for all classical separation logic formulas $\preda \in \SL$, we have
     $\qslemb{\SLwp{\cc}{\preda}} \eeq \wp{\cc}{\qslemb{\preda}}$.
\end{lemma}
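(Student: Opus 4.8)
The plan is to prove Lemma~\ref{thm:sl:conservativity} by structural induction on the non-probabilistic program $\cc$, since this is exactly the skeleton needed to then derive Theorem~\ref{thm:qsl:conservativity:wp} via the standard equivalence $\{\preda\}\,\cc\,\{\predb\}$ is valid iff $\preda \Rightarrow \SLwp{\cc}{\predb}$ together with Theorem~\ref{thm:qsl:conservativity:language}. The inductive invariant is that $\qslemb{\SLwp{\cc}{\preda}} = \wp{\cc}{\qslemb{\preda}}$ holds for every $\preda \in \SL$. The key observation making each case work is that the classical backward-reasoning rules of \cite{DBLP:conf/lics/Reynolds02} for $\SLwp{\cdot}{\cdot}$ have exactly the same shape as our rules in Table~\ref{table:wp} once one replaces Boolean connectives by their quantitative analogs: classical $\wedge$ by $\cdot$, classical $\sepcon$ and $\sepimp$ by the quantitative $\sepcon$ and $\sepimp$, classical $\exists$/$\forall$ by $\sup$/$\inf$, and substitution by substitution. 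So the whole proof amounts to checking that $\qslemb{\cdot}$ commutes with each rule, which is exactly what the clauses of Definition~\ref{def:embedding-sl-qsl} assert, plus Theorem~\ref{thm:qsl:conservativity:language} to handle the atomic/Boolean parts.

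Concretely, I would proceed as follows. For $\SKIP$ and assignment $\ASSIGN{x}{\ee}$ the rules are literally the identity and syntactic substitution, and one uses that $\qslemb{\preda\subst{\ee}{x}} = \qslemb{\preda}\subst{\ee}{x}$ (immediate from the inductive definition of $\qslemb{\cdot}$, since atoms are handled by their Iverson brackets and all connectives commute with substitution). For sequential composition $\COMPOSE{\cc_1}{\cc_2}$ one has $\SLwp{\cc_1\SEMI\cc_2}{\preda} = \SLwp{\cc_1}{\SLwp{\cc_2}{\preda}}$ and applies the induction hypothesis twice. For $\ITE{\guard}{\cc_1}{\cc_2}$, the classical rule gives $(\guard \wedge \SLwp{\cc_1}{\preda}) \vee (\neg\guard \wedge \SLwp{\cc_2}{\preda})$; embedding this and using that $\guard$ and $\neg\guard$ are mutually exclusive, so that the $\max$ from embedding $\vee$ coincides with $+$ (the fact noted right after Theorem~\ref{thm:sep-con-distrib}), together with $\qslemb{\guard} = \iverson{\guard}$ from Theorem~\ref{thm:qsl:conservativity:language}, yields exactly $\iverson{\guard}\cdot\wp{\cc_1}{\qslemb{\preda}} + \iverson{\neg\guard}\cdot\wp{\cc_2}{\qslemb{\preda}}$. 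For $\FREE{\ee}$ the classical rule is $\validpointer{\ee} \sepcon \preda$, for heap mutation $\validpointer{\ee} \sepcon (\singleton{\ee}{\ee'} \sepimp \preda)$, for allocation the $\forall v \notin \dom{\cdot}$-quantified $\singleton{v}{\vec\ee} \sepimp \preda\subst{v}{x}$, and for lookup the $\exists v$-quantified form; in each case applying the clauses of Definition~\ref{def:embedding-sl-qsl} (embedding $\sepcon$, $\sepimp$, $\exists$ as $\sup$, $\forall$ as $\inf$) and the induction hypothesis on the subformula gives the corresponding row of Table~\ref{table:wp}. One must also check $\qslemb{\validpointer{\ee}} = \validpointer{\ee}$ and $\qslemb{\singleton{\ee}{\ee'}} = \singleton{\ee}{\ee'}$, which is just the atomic clause.

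The loop case $\WHILEDO{\guard}{\cc'}$ is where I expect the main obstacle. Classically, $\SLwp{\WHILEDO{\guard}{\cc'}}{\preda}$ is a greatest/least fixed point (depending on the correctness notion) of the functional $\fh \mapsto (\neg\guard \wedge \preda) \vee (\guard \wedge \SLwp{\cc'}{\fh})$, while in our setting $\wp{\WHILEDO{\guard}{\cc'}}{\qslemb{\preda}}$ is the least fixed point of $\charwp{\guard}{\cc'}{\qslemb{\preda}}$; for \emph{total} correctness both are least fixed points (in the respective lattices of predicates and of $\Eone$). The plan is to show, by transfinite induction on the ordinal approximants and using the constructive Tarski--Knaster theorem, that $\qslemb{\cdot}$ maps the $\alpha$-th approximant of the classical functional (starting from $\false$) to the $\alpha$-th approximant of $\charwp{\guard}{\cc'}{\qslemb{\preda}}$ (starting from $0$), for every ordinal $\alpha$: the base case uses $\qslemb{\false} = 0$, the successor case reuses the conditional computation above plus the induction hypothesis on $\cc'$, and the limit case uses that $\qslemb{\cdot}$ commutes with the relevant suprema of $\omega$-chains of predicates (a disjunction of an increasing chain, embedded, is the supremum of the embeddings, since each approximant is a genuine predicate so the $\max$-interpretation of $\vee$ lifts to a $\sup$). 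Once the approximants match at every ordinal, taking the fixed point on both sides — which is reached at some countable ordinal by Theorem~\ref{thm:upper-invariant}'s underlying fixed-point theorem — gives the loop case. The delicate point to get right is that one must work in $\Eone$ (or equivalently restrict to $\{0,1\}$-valued expectations throughout, which is justified by Theorem~\ref{thm:qsl:conservativity:language}.\ref{thm:qsl:conservativity:language:0-1} since every $\qslemb{\preda}$ is $\{0,1\}$-valued and $\wp{\cc'}{\cdot}$ preserves $\{0,1\}$-valuedness on predicates by Theorem~\ref{thm:wp:basic}.\ref{thm:wp:basic:1-boundedness} and strictness), so that the $\SL$-to-$\QSL$ correspondence of $\sepimp$ (which is only faithful on $\Eone$, cf.\ Theorem~\ref{thm:qsl:conservativity:language}.\ref{thm:qsl:conservativity:language:equivalence}) actually applies and the infimum over the empty set behaves as $\true$ rather than $\infty$.
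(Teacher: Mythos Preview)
Your proposal is correct and follows essentially the same approach as the paper: structural induction on the non-probabilistic program, with each heap-manipulating case handled by unfolding the classical Reynolds rule and applying the clauses of Definition~\ref{def:embedding-sl-qsl}, the conditional handled via the mutual-exclusivity observation that turns $\max$ into $+$, and the loop handled by a transfinite induction showing $\qslemb{\Psi^\beta(\false)} = \Phi^\beta(0)$ for all ordinals $\beta$ using the constructive Tarski--Knaster theorem. You have also correctly flagged the $\Eone$-versus-$\E$ subtlety for $\sepimp$, which the paper likewise relies on (and which also matters in the allocation case, where the classical $\forall v$ embeds to an infimum over all of $\Ints$ and one uses that unsatisfiable points-to premises contribute the top element and hence do not lower the infimum).
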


\begin{proof}
    By induction on the structure of $\hpgcl$ programs (excluding probabilistic choice).

    \emph{The case $\SKIP$:}
    \begin{align}
            & \qslemb{\SLwp{\SKIP}{\preda}} \\
            \eeqtag{Definition of weakest preconditions}
            & \qslemb{\preda} \\
            \eeqtag{Definition of weakest preexpectations}
            & \wp{\SKIP}{\qslemb{\preda}}~.
    \end{align}

    \emph{The case $\ASSIGN{x}{\ee}$:}
    \begin{align}
            & \qslemb{\SLwp{\ASSIGN{x}{\ee}}{\preda}} \\
            \eeqtag{Definition of weakest preconditions}
            & \qslemb{\preda\subst{x}{\ee}} \\
            \eeqtag{Substitution distributes over embedding}
            & \qslemb{\preda}\subst{x}{\ee} \\
            \eeqtag{Definition of weakest preexpectations}
            & \wp{\ASSIGN{x}{\ee}}{\qslemb{\preda}}~.
    \end{align}

    \emph{The case $\ASSIGNH{x}{\ee}$:}
    \begin{align}
            & \qslemb{\SLwp{\ASSIGNH{x}{\ee}}{\preda}} \\
            \eeqtag{Definition of weakest preconditions}
            & \qslemb{\exists z\colon \SLsingleton{\ee}{z} \sepcon \left(\SLsingleton{\ee}{z} \sepimp \preda\subst{x}{z}\right)} \\
            \eeqtag{applying embedding of \SL into \QSL}
            & \sup_{v \in \Ints} \singleton{\ee}{v} \sepcon \left(\singleton{\ee}{v} \sepimp \qslemb{\preda}\subst{x}{v}\right) \\
            \eeqtag{Definition of weakest preexpectations}
            & \wp{\ASSIGNH{x}{\ee}}{\qslemb{\preda}}~.
    \end{align}

    \emph{The case $\HASSIGN{\ee}{\ee'}$:}
    \begin{align}
            & \qslemb{\SLwp{\HASSIGN{\ee}{\ee'}}{\preda}} \\
            \eeqtag{Definition of weakest preconditions}
            & \qslemb{\SLvalidpointer{\ee} \sepcon \left(\SLsingleton{\ee}{\ee'} \sepimp \preda\right)} \\
            \eeqtag{applying embedding of \SL into \QSL}
            & \validpointer{\ee} \sepcon \left(\singleton{\ee}{\ee'} \sepimp \qslemb{\preda}\right) \\
            \eeqtag{Definition of weakest preexpectations}
            & \wp{\HASSIGN{\ee}{\ee'}}{\qslemb{\preda}}~.
    \end{align}

    \emph{The case $\FREE{x}$:}
    \begin{align}
            & \qslemb{\SLwp{\FREE{x}}{\preda}} \\
            \eeqtag{Definition of weakest preconditions}
            & \qslemb{\SLvalidpointer{x} \sepcon \preda} \\
            \eeqtag{applying embedding of \SL into \QSL}
            & \validpointer{x} \sepcon \qslemb{\preda} \\
            \eeqtag{Definition of weakest preexpecations}
            & \wp{\FREE{x}}{\qslemb{\preda}}~.
    \end{align}

    \emph{The case $\ALLOC{x}{\vec{e}}$:}
    \begin{align}
            & \qslemb{\SLwp{\ALLOC{x}{\vec{e}}}{\preda}} \\
            \eeqtag{Definition of weakest preconditions}
            & \qslemb{\forall z\colon \singleton{z}{\vec{e}} \sepimp \preda\subst{x}{z}} \\
            \eeqtag{applying embedding of \SL into \QSL}
            & \inf_{v \in \Ints} \singleton{v}{\vec{e}} \sepimp \qslemb{\preda}\subst{x}{v} \\
            \eeqtag{Expectation space is $\Eone$}
            & \inf_{v \in \AVAILLOC{\vec{e}}} \singleton{v}{\vec{e}} \sepimp \qslemb{\preda}\subst{x}{v} \\
            \eeqtag{Definition of weakest preexpectations}
            & \wp{\ALLOC{x}{\vec{e}}}{\qslemb{\preda}}~.
    \end{align}

    \emph{The case $\COMPOSE{\cc_1}{\cc_2}$:}
    \begin{align}
            & \qslemb{\SLwp{\COMPOSE{\cc_1}{\cc_2}}{\preda}} \\
            \eeqtag{Definition of weakest preconditions}
            & \qslemb{\SLwp{\cc_1}{\SLwp{\cc_2}{\preda}}} \\
            \eeqtag{I.H.}
            & \wp{\cc_1}{\qslemb{\SLwp{\cc_2}{\preda}}} \\
            \eeqtag{I.H.}
            & \wp{\cc_1}{\wp{\cc_2}{\qslemb{\preda}}} \\
            \eeqtag{Definition of weakest preexpectations}
            & \wp{\COMPOSE{\cc_1}{\cc_2}}{\qslemb{\preda}}~.
    \end{align}

    \emph{The case $\ITE{\guard}{\cc_1}{\cc_2}$:}
    \begin{align}
            & \qslemb{\SLwp{\ITE{\guard}{\cc_1}{\cc_2}}{\preda}} \\
            \eeqtag{Definition of weakest preconditions}
            & \qslemb{\left(\guard \wedge \SLwp{\cc_1}{\preda}\right) \vee \left(\neg \guard \wedge \SLwp{\cc_2}{\preda}\right)} \\
            \eeqtag{applying embedding of \SL into \QSL}
            & \max\{ \iverson{\guard} \cdot \qslemb{\SLwp{\cc_1}{\preda}}, \iverson{\neg \guard} \cdot \qslemb{\SLwp{\cc_2}{\preda}} \} \\
            \eeqtag{$\max$ amounts to $+$ for a set of mutually exclusive expectations}
            & \iverson{\guard} \cdot \qslemb{\SLwp{\cc_1}{\preda}} \,+\, \iverson{\neg \guard} \cdot \qslemb{\SLwp{\cc_2}{\preda}} \\
            \eeqtag{I.H. (twice)}
            & \iverson{\guard} \cdot \wp{\cc_1}{\qslemb{\preda}} \,+\, \iverson{\neg \guard} \cdot \wp{\cc_2}{\qslemb{\preda}} \\
            \eeqtag{Definition of weakest preexpectations}
            & \wp{\ITE{\guard}{\cc_1}{\cc_2}}{\qslemb{\preda}}~.
    \end{align}

    \emph{The case $\WHILEDO{\guard}{\cc}$:}
    By definition of weakest preconditions and weakest preexpectations, we have
    \begin{align}
            \SLwp{\WHILEDO{\guard}{\cc}}{\preda} \eeq & \lfp \predb\mydot \underbrace{\left(\neg \guard \wedge \preda\right) \vee \left(\guard \wedge \SLwp{\cc}{\predb}\right)}_{\eeq \Psi(\predb)} \\
            \wp{\WHILEDO{\guard}{\cc}}{\qslemb{\preda}} \eeq & \lfp \ff\mydot \underbrace{\iverson{\neg \guard} \cdot \qslemb{\preda} + \iverson{\guard} \wp{\cc}{\ff}}_{\eeq \Phi(\ff)}
    \end{align}
    Since both $\Psi$ and $\Phi$ are monotone, we may use the Tarski-Knaster fixed point theorem (cf.~\cite{cousot1979constructive}): There exists an ordinal $\oa$ such that 
    \begin{align}
            \SLwp{\WHILEDO{\guard}{\cc}}{\preda} \eeq \Psi^{\oa}(\false) \qquad\text{and}\qquad \wp{\WHILEDO{\guard}{\cc}}{\qslemb{\preda}} \eeq \Phi^{\oa}(0)~.
    \end{align}
    We proceed by showing by transfinite induction that for all ordinals $\ob$, we have
    \begin{align}
            \qslemb{\Psi^{\ob}(\false)} \eeq \Phi^{\ob}(0)~.
    \end{align}
    In particular, for $\ob = \oa$, this means that
    \begin{align}
            \qslemb{\SLwp{\WHILEDO{\guard}{\cc}}{\preda}} \eeq \qslemb{\Psi^{\oa}(\false)} \eeq \Phi^{\oa}(0) \eeq \wp{\WHILEDO{\guard}{\cc}}{\qslemb{\preda}}~.
    \end{align}

    For $\ob = 0$, we have
    \begin{align}
            & \qslemb{\Psi^{0}(\false)} \\
            \eeqtag{Definition of $\Psi^{0}$}
            & \qslemb{\false} \\
            \eeqtag{Definition~\ref{def:embedding-sl-qsl}}
            & 0 \\
            \eeqtag{Definition of $\Phi^{0}$}
            & \Phi^{0}(0)~.
    \end{align}
    For a successor ordinal $\ob+1$, we have
    \begin{align}
            & \qslemb{\Psi^{\ob+1}(\false)} \\
            \eeqtag{Definition of $\Psi^{\ob+1}$}
            & \qslemb{\Psi(\Psi^{\ob}(\false))} \\
            \eeqtag{Definition of $\Psi$}
            & \qslemb{\left(\neg \guard \wedge \preda\right) \vee \left(\guard \wedge \SLwp{\cc}{\Psi^{\ob}(\false)}\right)} \\
            \eeqtag{Definition~\ref{def:embedding-sl-qsl}}
            & \iverson{\neg \guard} \cdot \qslemb{\preda} + \iverson{\guard} \cdot \qslemb{\SLwp{\cc}{\Psi^{\ob}(\false)}} \\
            \eeqtag{outer I.H.}
            & \iverson{\neg \guard} \cdot \qslemb{\preda} + \iverson{\guard} \cdot \wp{\cc}{\qslemb{\Psi^{\ob}(\false)}} \\
            \eeqtag{inner I.H.}
            & \iverson{\neg \guard} \cdot \qslemb{\preda} + \iverson{\guard} \cdot \wp{\cc}{\Phi^{\ob}(0)} \\
            \eeqtag{Definition of $\Phi$}
            & \Phi(\Phi^{\ob}(0)) \\
            \eeqtag{Definition of $\Phi^{\ob+1}$}
            & \Phi^{\ob+1}(0)~.
    \end{align}
    For a limit ordinal $\ob$, we have
    \begin{align}
            & \qslemb{\Psi^{\ob}(\false)} \\
            \eeqtag{Definition of $\Psi^{\ob}$ for a limit ordinal $\ob$}
            & \qslemb{\sup_{\oc < \oa} \Psi^{\oc}(\false)} \\
            \eeqtag{Supremum distributes over embedding}
            & \sup_{\oc < \oa} \qslemb{\Psi^{\oc}(\false)} \\
            \eeqtag{inner I.H.}
            & \sup_{\oc < \oa} \Phi^{\oc}(0) \\
            \eeqtag{Definition of $\Phi^{\ob}$ for a limit ordinal $\ob$}
            & \Phi^{\ob}(0)~.
    \end{align}

\end{proof}

%
%
%
%
%
\begin{proof}[Proof of Theorem~\ref{thm:qsl:conservativity:wp}]
    We first notice a standard fact for Hoare triples in relation to weakest preconditions:
    \begin{align}
            \{ \,\preda\, \}\,\cc\,\{\,\predb\,\}~\text{is valid for total correctness} \quad\text{iff}\quad \preda \Longrightarrow \underline{\wp{\cc}{\predb}}~,
    \end{align}
    It thus suffices to prove that
    \begin{align}
            \preda \Longrightarrow \underline{\wp{\cc}{\predb}}
            \quad\text{iff}\quad \qslemb{\preda} \ppreceq \wp{\cc}{\qslemb{\predb}}~.
    \end{align}
    Since, by Theorem~\ref{thm:qsl:conservativity:language}.\ref{thm:qsl:conservativity:language:0-1},
    $\qslemb{\preda}(\sk,\hh) \in \{0,1\}$, it suffices to distinguish two cases.
    First, assume $\qslemb{\preda}(\sk,\hh) = 0$ and consequently $(\sk,\hh) \not\models \preda$ by Theorem~\ref{thm:qsl:conservativity:language}.\ref{thm:qsl:conservativity:language:equivalence}.
    Then we immediately obtain
    \begin{align}
            (\sk,\hh) \models \preda \Longrightarrow \SLwp{\cc}{\predb} \quad\text{and}\quad \qslemb{\preda}(\sk,\hh) \lleq \wp{\cc}{\qslemb{\predb}}~.
    \end{align}
    Second, assume $\qslemb{\preda}(\sk,\hh) = 1$ and consequently $(\sk,\hh) \models \preda$. 
    Then
    \begin{align}
            & (\sk,\hh) \models \preda \Longrightarrow \SLwp{\cc}{\predb} \\
            \leftrighttag{assumption}
            & (\sk,\hh) \models \SLwp{\cc}{\predb} \\
            \leftrighttag{Theorem~\ref{thm:qsl:conservativity:language}.\ref{thm:qsl:conservativity:language:equivalence}}
            & \qslemb{\SLwp{\cc}{\predb}}(\sk,\hh) \eeq 1 \\
            \leftrighttag{Lemma~\ref{thm:sl:conservativity}} 
            & \wp{\cc}{\qslemb{\predb}}(\sk,\hh) \eeq 1 \\
            \leftrighttag{assumption}
            & \qslemb{\preda}(\sk,\hh) \lleq \wp{\cc}{\qslemb{\predb}}(\sk,\hh)~.
    \end{align}
\end{proof}

%
\subsection{Proof of Theorem~\ref{thm:frame-rules} (Frame Rule)}
\label{app:frame-rule}
\begin{proof}
   We show Theorem~\ref{thm:frame-rules} by induction on the structure of $\hpgcl$ programs. 

   \emph{The case $\SKIP$}
   \begin{align}
           & \wp{\SKIP}{\ff} \sepcon \fg \\
           \eeqtag{Table~\ref{table:wp}}
           & \ff \sepcon \fg \\
           \eeqtag{Table~\ref{table:wp}}
           & \wp{\SKIP}{\ff \sepcon \fg}.
   \end{align}
%
%
   \emph{The case $\ASSIGN{x}{\ee}$}
   \begin{align}
           & \wp{\ASSIGN{x}{\ee}}{\ff} \sepcon \fg \\
           \eeqtag{Table~\ref{table:wp}}
           & \ff\subst{x}{\ee} \sepcon \fg \\
           \eeqtag{$x \in \Mod{\ASSIGN{x}{\ee}}$. Hence, $x \notin \Vars(\fg)$} 
           & \ff\subst{x}{\ee} \sepcon \fg\subst{x}{\ee} \\ 
           \eeqtag{algebra}
           & \left(\ff \sepcon \fg\right)\subst{x}{\ee} \\
           \eeqtag{Table~\ref{table:wp}}
           & \wp{\ASSIGN{x}{\ee}}{\ff \sepcon \fg}.
   \end{align}

   \emph{The case $\ALLOC{x}{\vec{\ee}}$}
   \begin{align}
           & \wp{\ALLOC{x}{\vec{\ee}}}{\ff \sepcon \fg}  \\
           \eeqtag{Table~\ref{table:wp}}
           & \inf_{v \in \AVAILLOC{\vec{\ee}}} \left\{ \singleton{v}{\vec{\ee}} \sepimp \left(\ff\sepcon \fg\right)\subst{x}{v} \right\} \\
           \eeqtag{$x \notin \Vars(\fg)$} 
           & \inf_{v \in \AVAILLOC{\vec{\ee}}} \left\{ \singleton{v}{\vec{\ee}} \sepimp \left(\ff\subst{x}{v} \sepcon \fg\right) \right\} \\ 
           \eeqtag{Definition of $\sepimp$}
           & \lambda (\sk,\hh) \mydot \inf_{v \in \AVAILLOC{\vec{\ee}}} \inf_{\hh'} 
                  \left\{ \left(\ff\subst{x}{v} \sepcon \fg \right)(\sk,\hh \sepcon \hh') ~|~ \hh' \disjoint \hh \textnormal{ and } (\sk,\hh') \models \singleton{v}{\vec{\ee}} \right\} \\
           \eeqtag{Definition of $\sepcon$}
           & \lambda (\sk,\hh) \mydot \inf_{v \in \AVAILLOC{\vec{\ee}}} \inf_{\hh'} \big\{ 
                \max_{\hh_1,\hh_2} \setcomp{ \ff\subst{x}{v}(\sk,\hh_1) \cdot \fg(\sk,\hh_2) }{ \hh \sepcon \hh' = \hh_1 \sepcon \hh_2 } \\
           & \qquad \qquad ~|~ \hh' \disjoint \hh \textnormal{ and } (\sk,\hh') \models \singleton{v}{\vec{\ee}} \big\} \notag \\
           \ssucceqtag{choose $\hh' \subseteq \hh_1$}
           & \lambda (\sk,\hh) \mydot \inf_{v \in \AVAILLOC{\vec{\ee}}} \inf_{\hh'} \big\{ 
                \max_{\hh_1,\hh_2} \setcomp{ \ff\subst{x}{v}(\sk,\hh_1 \sepcon \hh') \cdot \fg(\sk,\hh_2) }{ \hh = \hh_1 \sepcon \hh_2 } \\
           & \qquad \qquad ~|~ \hh' \disjoint \hh \textnormal{ and } (\sk,\hh') \models \singleton{v}{\vec{\ee}} \big\} \notag \\
           \eeqtag{replace $\max$ by $\sup$ for non-empty finite set}
           %
           %
           %
           & \lambda (\sk,\hh) \mydot \inf_{v \in \AVAILLOC{\vec{\ee}}} \inf_{\hh'} \big\{ 
                \sup_{\hh_1,\hh_2} \setcomp{ \ff\subst{x}{v}(\sk,\hh_1 \sepcon \hh') \cdot \fg(\sk,\hh_2) }{ \hh = \hh_1 \sepcon \hh_2 } \\
           & \qquad \qquad ~|~ \hh' \disjoint \hh \textnormal{ and } (\sk,\hh') \models \singleton{v}{\vec{\ee}} \big\} \notag \\
           \ssucceqtag{$\inf_{a \in A} \sup_{b \in B} f(a,b) \geq \sup_{b \in B} \inf_{a \in A} f(a,b)$ twice} 
           & \lambda (\sk,\hh) \mydot \sup_{\hh_1,\hh_2} \big\{ \\
           & \qquad \inf_{v \in \AVAILLOC{\vec{\ee}}} \inf_{\hh'} \setcomp{ \ff\subst{x}{v}(\sk,\hh_1 \sepcon \hh') \cdot \fg(\sk,\hh_2) }{ \hh' \disjoint \hh \textnormal{ and } (\sk,\hh') \models \singleton{v}{\vec{\ee}}} \notag \\
           & \big|~ \hh = \hh_1 \sepcon \hh_2 \big\} \notag \\
           %
           %
           \eeqtag{algebra ($\fg$ does not depend on $\hh'$)}
           & \lambda (\sk,\hh)\mydot \sup_{\hh_1,\hh_2} \big\{ \inf_{v \in \AVAILLOC{\vec{\ee}}} \inf_{\hh'} \\
           & \qquad \qquad \left\{ \ff\subst{x}{v}(\sk,\hh_1 \sepcon \hh') ~|~ \hh' \disjoint \hh_1 \textnormal{ and } (\sk,\hh') \models \singleton{v}{\vec{\ee}} \right\} \cdot \fg(\sk,\hh_2) \notag \\
           & \qquad ~|~ \hh = \hh_1 \sepcon \hh_2 \big\}
           \notag \\
           \eeqtag{Definition of $\sepimp$}
           & \lambda (\sk,\hh)\mydot \sup_{\hh_1,\hh_2} \left\{ \inf_{v \in \AVAILLOC{\vec{\ee}}}  
             \left(\singleton{v}{\vec{\ee}} \sepimp \ff\subst{x}{v}\right)(\sk,\hh_1) \cdot \fg(\sk,\hh_2) ~\middle|~ \hh = \hh_1 \sepcon \hh_2 \right\} \\
           \eeqtag{supremum is attained (the set of partitions $\hh = \hh_1 \sepcon \hh_1$ is non-empty and finite)}
           & \lambda (\sk,\hh)\mydot \max_{\hh_1,\hh_2} \left\{ \inf_{v \in \AVAILLOC{\vec{\ee}}}  
             \left(\singleton{v}{\vec{\ee}} \sepimp \ff\subst{x}{v}\right)(\sk,\hh_1) \cdot \fg(\sk,\hh_2) ~\middle|~ \hh = \hh_1 \sepcon \hh_2 \right\} \\
           \eeqtag{Definition of $\sepcon$}
           & \left(\inf_{v \in \AVAILLOC{\vec{\ee}}} \singleton{v}{\vec{\ee}} \sepimp \ff\subst{x}{v} \right) \sepcon \fg \\
           \eeqtag{Table~\ref{table:wp}}
           & \wp{\ALLOC{x}{\vec{\ee}}}{\ff} \sepcon \fg. 
   \end{align}
   \emph{The case $\HASSIGN{x}{\ee}$}
   \begin{align}
           & \wp{\HASSIGN{x}{\ee}}{\ff \sepcon \fg} \\
           \eeqtag{Table~\ref{table:wp}}
           & \validpointer{x} \sepcon \left(\singleton{x}{\ee} \sepimp (\ff \sepcon \fg) \right) \\
           \eeqtag{algebra}
           & \validpointer{x} \sepcon \lambda (\sk,\hh) \mydot \left(\singleton{x}{\ee} \sepimp (\ff \sepcon \fg) \right)(\sk,\hh) \\
           \eeqtag{Definition of $\sepimp$}
           & \validpointer{x} \sepcon \lambda (\sk,\hh) \mydot \big( \\
           & \qquad \inf_{\hh'} \setcomp{ (\ff \sepcon \fg)(\sk,\hh \sepcon \hh') }{ \hh \disjoint \hh', (\sk,\hh') \models \singleton{x}{\ee} } \notag \\
           & \big) \notag \\
           \eeqtag{Definition of $\sepcon$}
           & \validpointer{x} \sepcon \lambda (\sk,\hh) \mydot \big( \\
           & \qquad \inf_{\hh'} \setcomp{ \max_{\hh_1,\hh_2} \setcomp{ \ff(\sk,\hh_1) \cdot \fg(\sk,\hh_2) }{ \hh \sepcon \hh' = \hh_1 \sepcon \hh_2 } }{ \hh \disjoint \hh', (\sk,\hh') \models \singleton{x}{\ee} } \notag \\
           & \big) \notag \\
           \eeqtag{replace $\max$ by $\sup$ for non-empty finite set}
           %
           %
           %
           & \validpointer{x} \sepcon \lambda (\sk,\hh) \mydot \big( \\
           & \qquad \inf_{\hh'} \setcomp{ \sup_{\hh_1,\hh_2} \setcomp{ \ff(\sk,\hh_1) \cdot \fg(\sk,\hh_2) }{ \hh \sepcon \hh' = \hh_1 \sepcon \hh_2 } }{ \hh \disjoint \hh', (\sk,\hh') \models \singleton{x}{\ee} } \notag \\
           & \big) \notag \\
           \ssucceqtag{choose $\hh' \subseteq \hh_1$} 
           & \validpointer{x} \sepcon \lambda (\sk,\hh) \mydot \big( \\
           & \qquad \inf_{\hh'} \setcomp{ \sup_{\hh_1,\hh_2} \setcomp{ \ff(\sk,\hh_1 \sepcon \hh') \cdot \fg(\sk,\hh_2) }{ \hh = \hh_1 \sepcon \hh_2 } }{ \hh \disjoint \hh', (\sk,\hh') \models \singleton{x}{\ee} } \notag \\
           & \big) \notag \\
           %
           \ssucceqtag{$\inf_{a \in A} \sup_{b \in B} f(a,b) \geq \sup_{b \in B} \inf_{a \in A} f(a,b)$} 
           & \validpointer{x} \sepcon \lambda (\sk,\hh) \mydot \big( \\
           & \qquad 
           \sup_{\hh_1,\hh_2} \setcomp{ 
                \inf_{\hh'} \setcomp{ \ff(\sk,\hh_1 \sepcon \hh') \cdot \fg(\sk,\hh_2) }{ \hh \disjoint \hh', (\sk,\hh') \models \singleton{x}{\ee}}
            }{ \hh = \hh_1 \sepcon \hh_2 } \notag \\
           & \big) \notag \\
           %
           %
           \eeqtag{algebra ($\fg$ does not depend on $\hh'$)}
           & \validpointer{x} \sepcon \lambda (\sk,\hh) \mydot \big( \\
           & \qquad 
           \sup_{\hh_1,\hh_2} \setcomp{ 
                \inf_{\hh'} \setcomp{ \ff(\sk,\hh_1 \sepcon \hh') }{ \hh \disjoint \hh', (\sk,\hh') \models \singleton{x}{\ee}} \cdot \fg(\sk,\hh_2)
            }{ \hh = \hh_1 \sepcon \hh_2 } \notag \\
           & \big) \notag \\
           %
           \eeqtag{Definition of $\sepimp$}
           & \validpointer{x} \sepcon \lambda (\sk,\hh) \mydot \sup_{\hh_1,\hh_2} \setcomp{ (\singleton{x}{\ee} \sepimp \ff)(\sk, \hh_1) \cdot \fg(\sk,\hh_2) }{ \hh = \hh_1 \sepcon \hh_2 } \\
           %
           \eeqtag{supremum is attained (the set of partitions $\hh = \hh_1 \sepcon \hh_1$ is non-empty and finite)}
           & \validpointer{x} \sepcon \lambda (\sk,\hh) \mydot \max_{\hh_1,\hh_2} \setcomp{ (\singleton{x}{\ee} \sepimp \ff)(\sk, \hh_1) \cdot \fg(\sk,\hh_2) }{ \hh = \hh_1 \sepcon \hh_2 } \\
           \eeqtag{Definition of $\sepcon$}
           & \validpointer{x} \sepcon (\singleton{x}{\ee} \sepimp \ff) \sepcon \fg \\
           %
           %
           %
           %
           %
           \eeqtag{Table~\ref{table:wp}}
           & \wp{\HASSIGN{x}{\ee}}{\ff} \sepcon \fg.
   \end{align}
   \emph{The case $\ASSIGNH{x}{\ee}$}
   \begin{align}
           & \wp{\ASSIGNH{x}{\ee}}{\ff \sepcon \fg} \\
           \eeqtag{Table~\ref{table:wp}}
           & \sup_{v \in \Ints} \left\{ \singleton{\ee}{v} \sepcon \left( \singleton{\ee}{v} \sepimp (\ff\sepcon \fg)\subst{x}{v} \right) \right\} \\
           \eeqtag{$x \notin \Vars(\fg)$} 
           & \sup_{v \in \Ints} \left\{ \singleton{\ee}{v} \sepcon \left( \singleton{\ee}{v} \sepimp (\ff\subst{x}{v} \sepcon \fg) \right) \right\} \\
           \eeqtag{Lemma~\ref{lem:wand-reynolds}} 
           & \sup_{v \in \Ints} \left\{ \containsPointer{\ee}{v} \cdot (\ff\subst{x}{v} \sepcon \fg) \right\} \\
           \eeqtag{Definition of $\sepcon$}
           & \lambda (\sk,\hh) \mydot \sup_{v \in \Ints} \max_{\hh_1,\hh_2} \left\{ \containsPointer{\ee}{v}(\sk,\hh) \cdot (\ff\subst{x}{v}(\sk,\hh_1) \cdot \fg(\sk,\hh_2)) ~|~ \hh = \hh_1 \sepcon \hh_2 \right\} \\
           \eeqtag{algebra}
           & \lambda (\sk,\hh) \mydot \sup_{v \in \Ints} \max_{\hh_1,\hh_2} \left\{ (\containsPointer{\ee}{v}(\sk,\hh) \cdot \ff\subst{x}{v}(\sk,\hh_1)) \cdot \fg(\sk,\hh_2) ~|~ \hh = \hh_1 \sepcon \hh_2 \right\} \\
           \ssucceqtag{take subset in which $\containsPointer{\ee}{v}$ is evaluated in $\hh_1$ instead of $\hh$}
           & \lambda (\sk,\hh) \mydot \sup_{v \in \Ints} \max_{\hh_1,\hh_2} \left\{ (\containsPointer{\ee}{v}(\sk,\hh_1) \cdot \ff\subst{x}{v}(\sk,\hh_1)) \cdot \fg(\sk,\hh_2) ~|~ \hh = \hh_1 \sepcon \hh_2 \right\} \\
           \eeqtag{Definition of $\sepcon$}
           & \sup_{v \in \Ints} \left\{ (\containsPointer{\ee}{v} \cdot \ff\subst{x}{v}) \sepcon \fg \right\} \\
           \eeqtag{$v$ fresh, does not occur in $\fg$}
           & \sup_{v \in \Ints} \left\{ \containsPointer{\ee}{v} \cdot \ff\subst{x}{v} \right\} \sepcon \fg \\
           \eeqtag{Table~\ref{table:wp}}
           & \wp{\ASSIGNH{x}{\ee}}{\ff} \sepcon \fg.
   \end{align}
   \emph{The case $\FREE{x}$}
   \begin{align}
           & \wp{\FREE{x}}{\ff} \sepcon \fg \\
           \eeqtag{Table~\ref{table:wp}}
           & \left(\validpointer{x} \sepcon \ff\right) \sepcon \fg \\
           \eeqtag{Theorem~\ref{thm:sep-con-monoid}.\ref{thm:sep-con-monoid:ass}} 
           & \validpointer{x} \sepcon \left(\ff \sepcon \fg\right) \\
           \eeqtag{Table~\ref{table:wp}}
           & \wp{\FREE{x}}{\ff \sepcon \fg}.
   \end{align}
   \emph{The case $\COMPOSE{\cc_1}{\cc_2}$}
   \begin{align}
           & \wp{\COMPOSE{\cc_1}{\cc_2}}{\ff} \sepcon \fg \\
           \eeqtag{Table~\ref{table:wp}}
           & \wp{\cc_1}{\wp{\cc_2}{\ff}} \sepcon \fg \\
           \ppreceqtag{I.H. on $\cc_1$}
           & \wp{\cc_1}{\wp{\cc_2}{\ff} \sepcon \fg} \\
           \ppreceqtag{I.H. on $\cc_2$} 
           & \wp{\cc_1}{\wp{\cc_2}{\ff \sepcon \fg}} \\
           \eeqtag{Table~\ref{table:wp}}
           \eeq & \wp{\COMPOSE{\cc_1}{\cc_1}}{\ff \sepcon \fg}.
   \end{align}
   \emph{The case $\PCHOICE{\cc_1}{p}{\cc_2}$}
   \begin{align}
           & \wp{\PCHOICE{\cc_1}{p}{\cc_2}}{\ff} \sepcon \fg \\
           \eeqtag{Table~\ref{table:wp}}
           & \left(p \cdot \wp{\cc_1}{\ff} + (1-p) \cdot \wp{\cc_2}{\ff}\right) \sepcon \fg \\
           \ppreceqtag{Theorem~\ref{thm:sep-con-distrib}.\ref{thm:sep-con-distrib:sepcon-over-plus}}
           & \left(p \cdot \wp{\cc_1}{\ff}\right) \sepcon \fg + \left((1-p) \cdot \wp{\cc_2}{\ff}\right) \sepcon \fg \\
           \eeqtag{Theorem~\ref{thm:sep-con-algebra-pure}.3}
           & p \cdot \left(\wp{\cc_1}{\ff} \sepcon \fg\right) + (1-p) \cdot \left(\wp{\cc_2}{\ff} \sepcon \fg\right) \\
           \ppreceqtag{I.H. for $\cc_1$ and $\cc_2$} 
           & p \cdot \wp{\cc_1}{\ff \sepcon \fg} + (1-p) \cdot \wp{\cc_2}{\ff \sepcon \fg} \\ 
           \eeqtag{Table~\ref{table:wp}}
           & \wp{\PCHOICE{\cc_1}{p}{\cc_2}}{\ff \sepcon \fg}.
   \end{align}
   \emph{The case $\ITE{\guard}{\cc_1}{\cc_2}$}
   \begin{align}
           & \wp{\ITE{\guard}{\cc_1}{\cc_2}}{\ff} \sepcon \fg \\
           \eeqtag{Table~\ref{table:wp}}
           & \left(\iverson{\guard} \cdot \wp{\cc_1}{\ff} + \iverson{\neg \guard} \cdot \wp{\guard_2}{\ff}\right) \sepcon \fg \\
           \ppreceqtag{Theorem~\ref{thm:sep-con-distrib}.\ref{thm:sep-con-distrib:sepcon-over-plus}} 
           & \left(\iverson{\guard} \cdot \wp{\cc_1}{\ff}\right) \sepcon \fg + \left(\iverson{\neg \guard} \cdot \wp{\cc_2}{\ff}\right) \sepcon \fg \\ 
           \eeqtag{Theorem~\ref{thm:sep-con-algebra-pure}.3} 
           & \iverson{\guard} \cdot \left(\wp{\cc_1}{\ff} \sepcon \fg\right) + \iverson{\neg \guard} \cdot \wp{\cc_2}{\ff} \sepcon \fg \\
           \ppreceqtag{I.H. for $\cc_1$ and $\cc_2$} 
           & \iverson{\guard} \cdot \wp{\cc_1}{\ff \sepcon \fg} + \iverson{\neg \guard} \cdot \wp{\cc_2}{\ff \sepcon \fg} \\
           \eeqtag{Table~\ref{table:wp}}
           & \wp{\ITE{\guard}{\cc_1}{\cc_2}}{\ff \sepcon \fg}.
   \end{align}
   
   \emph{The case $\WHILEDO{\guard}{\cc}$}
   Recall the functional $\charwp{\guard}{\cc}{\fg}$ determining the unrollings of loop $\WHILEDO{\guard}{\cc}$ with respect to $\ff \in \E$ given by
   \begin{align}
           \charwp{\guard}{\cc}{\ff}(\fh) \eeq \iverson{\neg \guard} \cdot \ff + \iverson{\guard} \cdot \wp{\cc}{\fh}. \label{eq:proof:framerule:charwp}
   \end{align}
   Then, by Table~\ref{table:wp}, we have
   \begin{align}
           \wp{\WHILEDO{\guard}{\cc}}{\ff \sepcon \fg} \eeq \lfp \fh\mydot \charwp{\guard}{\cc}{\ff\sepcon \fg}(\fh).
   \end{align}
   Let $\Ord$ be the class of ordinals.
   By a constructive version of Tarski'\sk fixed point theorem (cf.~\cite{cousot1979constructive})
   we know that this fixed point exists and we have
   \begin{align}
           \lfp \fh\mydot \charwp{\guard}{\cc}{\ff\sepcon \fg}(\fh) \eeq \sup_{\oa \in \textit{Ord}} \charwpn{\guard}{\cc}{\ff \sepcon \fg}{\oa}(0).
   \end{align}
   In particular, there is some ordinal for which the least fixed point is reached.
   To complete the proof, we show that
   \begin{align}
           \forall \oa \in \Ord ~\colon~ \charwpn{\guard}{\cc}{\ff\sepcon \fg}{\oa}(0) \ssucceq \charwpn{\guard}{\cc}{\ff}{\oa}(0) \sepcon \fg
   \end{align}
   by transfinite induction on $\oa$.
  
    The case $\oa = 0$ is trivial.
    For $\oa = 1$, we have
   \begin{align}
           & \charwp{\guard}{\cc}{\ff\sepcon \fg}(0) \\
           \eeqtag{by equation~(\ref{eq:proof:framerule:charwp})}
           & \iverson{\neg \guard} \cdot (\ff\sepcon \fg) + \iverson{\guard} \cdot \wp{\cc}{0} \\
           \eeqtag{$\wp{\cc}{0} = 0$}
           & \iverson{\neg \guard} \cdot (\ff\sepcon \fg) \\
           \eeqtag{Theorem~\ref{thm:sep-con-algebra-pure}.3}
           & \left( \iverson{\neg \guard} \cdot \ff \right) \sepcon \fg \\
           \eeqtag{by equation~(\ref{eq:proof:framerule:charwp}), as above}
           & \charwp{\guard}{\cc}{\ff}(0) \sepcon \fg.
   \end{align}
   For successor ordinals, assume that $\charwpn{\guard}{\cc}{\ff\sepcon \fg}{\oa}(0) \ssucceq \charwpn{\guard}{\cc}{\ff}{\oa}(0) \sepcon \fg$.
   Then
   \begin{align}
           & \charwpn{\guard}{\cc}{\ff \sepcon \fg}{\oa+1}(0) \\
           \eeqtag{by definition: $\charwpn{\guard}{\cc}{\ff \sepcon \fg}{\oa+1}(0) \eeq \charwp{\guard}{\cc}{\ff \sepcon \fg}\left( \charwpn{\guard}{\cc}{\ff \sepcon \fg}{\oa}(0) \right)$}
           & \charwp{\guard}{\cc}{\ff \sepcon \fg}\left( \charwpn{\guard}{\cc}{\ff \sepcon \fg}{\oa}(0) \right) \\
           \eeqtag{by equation~(\ref{eq:proof:framerule:charwp})}
           & \iverson{\neg \guard} \cdot (\ff\sepcon \fg) + \iverson{\guard} \cdot \wp{\cc}{\charwpn{\guard}{\cc}{\ff \sepcon \fg}{\oa}(0)} \\
           \ssucceqtag{I.H.}
           & \iverson{\neg \guard} \cdot (\ff\sepcon \fg) + \iverson{\guard} \cdot \wp{\cc}{\charwpn{\guard}{\cc}{\ff}{\oa}(0) \sepcon \fg} \\
           \ssucceqtag{I.H. of outer induction}
           & \iverson{\neg \guard} \cdot (\ff\sepcon \fg) + \iverson{\guard} \cdot \left(\wp{\cc}{\charwpn{\guard}{\cc}{\ff}{\oa}(0)} \sepcon \fg\right) \\
           \eeqtag{Theorem~\ref{thm:sep-con-algebra-pure}.3} 
           & \left(\iverson{\neg \guard} \cdot \ff \right) \sepcon \fg + \left(\iverson{\guard} \cdot \wp{\cc}{\charwpn{\guard}{\cc}{\ff}{\oa}(0)}\right) \sepcon \fg \\
           \ssucceqtag{Theorem~\ref{thm:sep-con-distrib}.\ref{thm:sep-con-distrib:sepcon-over-plus}}
           & \left(\iverson{\neg \guard} \cdot \ff + \iverson{\guard} \cdot \left(\wp{\cc}{\charwpn{\guard}{\cc}{\ff}{\oa}(0)} \right)\right) \sepcon \fg \\
           \eeqtag{by equation~(\ref{eq:proof:framerule:charwp})}
           & \charwp{\guard}{\cc}{\ff \sepcon \fg}\left( \charwpn{\guard}{\cc}{\ff \sepcon \fg}{\oa}(0) \right) \\
           \eeqtag{by definition: $\charwpn{\guard}{\cc}{\ff \sepcon \fg}{\oa+1}(0) \eeq \charwp{\guard}{\cc}{\ff \sepcon \fg}\left( \charwpn{\guard}{\cc}{\ff \sepcon \fg}{\oa}(0) \right)$}
           & \charwpn{\guard}{\cc}{\ff}{\oa+1}(0) \sepcon \fg.
   \end{align}
   Finally, let $\oa$ be a limit ordinal and assume for all $\ob < \oa$ that
   $\charwpn{\guard}{\cc}{\ff \sepcon \fg}{\ob}(0) \ssucceq \charwpn{\guard}{\cc}{\ff}{\ob}(0) \sepcon \fg$.
   Then 
   \begin{align}
           & \charwpn{\guard}{\cc}{\ff \sepcon \fg}{\oa}(0) \\
           \eeqtag{Definition of $\charwpn{\guard}{\cc}{\ff \sepcon \fg}{\oa}(0)$ for $\oa$ limit ordinal}
           & \sup_{\ob < \oa} \charwpn{\guard}{\cc}{\ff \sepcon \fg}{\ob}(0) \\
           \ssucceqtag{I.H.}
           & \sup_{\ob < \oa} \left( \charwpn{\guard}{\cc}{\ff}{\ob}(0) \sepcon \fg \right) \\
           \eeqtag{algebra}
           & \sup_{\ob < \oa} \lambda(\sk,\hh)\mydot \left( \charwpn{\guard}{\cc}{\ff}{\ob}(0) \sepcon \fg \right)(\sk,\hh) \\
           \eeqtag{algebra}
           & \lambda(\sk,\hh) \mydot \sup_{\ob < \oa} \left( \charwpn{\guard}{\cc}{\ff}{\ob}(0) \sepcon \fg\right)(\sk,\hh) \\
           \eeqtag{Definition of $\sepcon$}
           & \lambda(\sk,\hh) \mydot \sup_{\ob < \oa} \max_{\hh_1, \hh_2} \setcomp{ \charwpn{\guard}{\cc}{\ff}{\ob}(0)(\sk,\hh_1) \cdot \fg(\sk,\hh_2) }{\hh = \hh_1 \sepcon \hh_2} \\
           \eeqtag{replace $\max$ by $\sup$ for non-empty finite set}
           & \lambda(\sk,\hh) \mydot \sup_{\ob < \oa} \sup_{\hh_1, \hh_2} \setcomp{ \charwpn{\guard}{\cc}{\ff}{\ob}(0)(\sk,\hh_1) \cdot \fg(\sk,\hh_2) }{\hh = \hh_1 \sepcon \hh_2} \\
           \eeqtag{commute suprema}
           & \lambda(\sk,\hh) \mydot \sup_{\hh_1, \hh_2} \setcomp{ \sup_{\ob < \oa} \charwpn{\guard}{\cc}{\ff}{\ob}(0)(\sk,\hh_1) \cdot \fg(\sk,\hh_2) }{\hh = \hh_1 \sepcon \hh_2} \\
           \eeqtag{algebra ($\fg$ does not depend on $\ob$)}
           & \lambda(\sk,\hh) \mydot \sup_{\hh_1,\hh_2} \setcomp{ \fg(\sk,\hh_2) \cdot \sup_{\ob < \oa} \charwpn{\guard}{\cc}{\ff}{\ob}(0)(\sk,\hh_1) }{ \hh = \hh_1 \sepcon \hh_2 } \\
           \eeqtag{Definition of $\charwpn{\guard}{\cc}{\ff}{\oa}(0)$ for $\oa$ limit ordinal}
           & \lambda(\sk,\hh) \mydot \sup_{\hh_1,\hh_2} \setcomp{ \fg(\sk,\hh_2) \cdot \charwpn{\guard}{\cc}{\ff}{\oa}(0)(\sk,\hh_1) }{ \hh = \hh_1 \sepcon \hh_2 } \\
           \eeqtag{supremum is attained (the set of partitions $\hh = \hh_1 \sepcon \hh_1$ is non-empty and finite)}
           & \lambda(\sk,\hh) \mydot \max_{\hh_1,\hh_2} \setcomp{ \fg(\sk,\hh_2) \cdot \charwpn{\guard}{\cc}{\ff}{\oa}(0)(\sk,\hh_1) }{ \hh = \hh_1 \sepcon \hh_2 } \\
           \eeqtag{Definition of $\sepcon$}
           & \fg \sepcon \charwpn{\guard}{\cc}{\ff}{\oa}(0) \\
           \eeqtag{commutativity of $\sepcon$}
           & \charwpn{\guard}{\cc}{\ff}{\oa}(0) \sepcon \fg.
   \end{align}
\end{proof}

\subsection{Proof of Theorem~\ref{thm:wlp-frame-rules} (Frame Rule for Weakest Liberal Preexpectations)}
\label{app:wlp-frame-rule}
\begin{proof}
    We show Theorem~\ref{thm:wlp-frame-rules} by induction on the structure of $\hpgcl$ programs $\cc$.
    As specified in Section~\ref{sec:wp:landscape}, the only difference between $\wpsymbol$ and $\wlpsymbol$ is---apart from renaming $\wpsymbol$ by $\wlpsymbol$---the definition for loops:
    \begin{align}
            \wlp{\WHILEDO{\guard}{\cc}}{\ff} \eeq \gfp \fh\mydot \charwp{\guard}{\cc}{\ff}(\fh), 
    \end{align}
    where 
    \begin{align}
            \charwp{\guard}{\cc}{\ff}(\fh) \eeq \iverson{\neg \guard} \cdot \ff + \iverson{\guard} \cdot \wlp{\cc}{\fh}. \label{eq:proof:framerule:charwlp}
    \end{align}
    We thus only consider loops. All other cases are analogous to the proof of Theorem~\ref{thm:frame-rules} (see Appendix~\ref{app:frame-rule}, p.~\pageref{app:frame-rule}).

   \emph{The case $\WHILEDO{\guard}{\cc}$.}
   Let $\Ord$ be the class of ordinals.
   By a constructive version of Tarski's fixed point theorem (cf.~\cite{cousot1979constructive})
   we know that this fixed point exists and we have
   \begin{align}
           \gfp \fh\mydot \charwp{\guard}{\cc}{\ff\sepcon \fg}(\fh) \eeq \inf_{\oa \in \textit{Ord}} \charwpn{\guard}{\cc}{\ff \sepcon \fg}{\oa}(1).
   \end{align}
   In particular, there is some ordinal for which the greatest fixed point  is reached.
   To complete the proof, we show that
   \begin{align}
           \forall \oa \in \Ord ~\colon~ \charwpn{\guard}{\cc}{\ff\sepcon \fg}{\oa}(1) \ssucceq \charwpn{\guard}{\cc}{\ff}{\oa}(1) \sepcon \fg
   \end{align}
   by transfinite induction on $\oa$.
  
    The case $\oa = 0$ is trivial.
    For $\oa = 1$, we have
   \begin{align}
           & \charwp{\guard}{\cc}{\ff\sepcon \fg}(1) \\
           \eeqtag{by equation~(\ref{eq:proof:framerule:charwlp})}
           & \iverson{\neg \guard} \cdot (\ff\sepcon \fg) + \iverson{\guard} \cdot \wlp{\cc}{1} \\
           \eeqtag{Theorem~\ref{thm:sep-con-algebra-pure}.3}
           & \left( \iverson{\neg \guard} \cdot \ff \right) \sepcon \fg + \iverson{\guard} \cdot \wlp{\cc}{1} \\
           \ssucceqtag{$\fg \in \Eone$} 
           & \left( \iverson{\neg \guard} \cdot \ff \right) \sepcon \fg + \left(\iverson{\guard} \cdot \wlp{\cc}{1}\right) \sepcon \fg \\
           \ssucceqtag{Theorem~\ref{thm:sep-con-distrib}.\ref{thm:sep-con-distrib:sepcon-over-plus}}
           & \left( \iverson{\neg \guard} \cdot \ff + \iverson{\guard} \cdot \wlp{\cc}{1}\right) \sepcon \fg \\
           \eeqtag{by equation~(\ref{eq:proof:framerule:charwlp})}
           & \charwp{\guard}{\cc}{\ff}(1) \sepcon \fg.
   \end{align}
   For successor ordinals, assume that $\charwpn{\guard}{\cc}{\ff\sepcon \fg}{\oa}(1) \ssucceq \charwpn{\guard}{\cc}{\ff}{\oa}(1) \sepcon \fg$.
   Then
   \begin{align}
           & \charwpn{\guard}{\cc}{\ff \sepcon \fg}{\oa+1}(1) \\
           \eeqtag{by definition: $\charwpn{\guard}{\cc}{\ff \sepcon \fg}{\oa+1}(1) \eeq \charwp{\guard}{\cc}{\ff \sepcon \fg}\left( \charwpn{\guard}{\cc}{\ff \sepcon \fg}{\oa}(1) \right)$}
           & \charwp{\guard}{\cc}{\ff \sepcon \fg}\left( \charwpn{\guard}{\cc}{\ff \sepcon \fg}{\oa}(1) \right) \\
           \eeqtag{by equation~(\ref{eq:proof:framerule:charwlp})}
           & \iverson{\neg \guard} \cdot (\ff\sepcon \fg) + \iverson{\guard} \cdot \wlp{\cc}{\charwpn{\guard}{\cc}{\ff \sepcon \fg}{\oa}(1)} \\
           \ssucceqtag{I.H.}
           & \iverson{\neg \guard} \cdot (\ff\sepcon \fg) + \iverson{\guard} \cdot \wlp{\cc}{\charwpn{\guard}{\cc}{\ff}{\oa}(1) \sepcon \fg} \\
           \ssucceqtag{I.H. of outer induction (on the program structure)}
           & \iverson{\neg \guard} \cdot (\ff\sepcon \fg) + \iverson{\guard} \cdot \left(\wlp{\cc}{\charwpn{\guard}{\cc}{\ff}{\oa}(1)} \sepcon \fg\right) \\
           \eeqtag{Theorem~\ref{thm:sep-con-algebra-pure}.3} 
           & \left(\iverson{\neg \guard} \cdot \ff \right) \sepcon \fg + \left(\iverson{\guard} \cdot \wlp{\cc}{\charwpn{\guard}{\cc}{\ff}{\oa}(1)}\right) \sepcon \fg \\
           \ssucceqtag{Theorem~\ref{thm:sep-con-distrib}.\ref{thm:sep-con-distrib:sepcon-over-plus}}
           & \left(\iverson{\neg \guard} \cdot \ff + \iverson{\guard} \cdot \left(\wp{\cc}{\charwpn{\guard}{\cc}{\ff}{\oa}(1)} \right)\right) \sepcon \fg \\
           \eeqtag{by equation~(\ref{eq:proof:framerule:charwlp})}
           & \charwp{\guard}{\cc}{\ff \sepcon \fg}\left( \charwpn{\guard}{\cc}{\ff \sepcon \fg}{\oa}(1) \right) \\
           \eeqtag{by definition: $\charwpn{\guard}{\cc}{\ff \sepcon \fg}{\oa+1}(1) \eeq \charwp{\guard}{\cc}{\ff \sepcon \fg}\left( \charwpn{\guard}{\cc}{\ff \sepcon \fg}{\oa}(1) \right)$}
           & \charwpn{\guard}{\cc}{\ff}{\oa+1}(1) \sepcon \fg.
   \end{align}
   Finally, let $\oa$ be a limit ordinal and assume for all $\ob < \oa$ that
   \begin{align} 
     \charwpn{\guard}{\cc}{\ff \sepcon \fg}{\ob}(1) \ssucceq \charwpn{\guard}{\cc}{\ff}{\ob}(1) \sepcon \fg.
   \end{align} 
   Then 
   \begin{align}
           & \charwpn{\guard}{\cc}{\ff \sepcon \fg}{\oa}(1) \\
           \eeqtag{Definition of $\charwpn{\guard}{\cc}{\ff \sepcon \fg}{\oa}(1)$ for $\oa$ limit ordinal}
           & \sup_{\ob < \oa} \charwpn{\guard}{\cc}{\ff \sepcon \fg}{\ob}(1) \\
           \ssucceqtag{I.H.}
           & \sup_{\ob < \oa} \left( \charwpn{\guard}{\cc}{\ff}{\ob}(1) \sepcon \fg \right) \\
           \eeqtag{algebra}
           & \sup_{\ob < \oa} \lambda(\sk,\hh)\mydot \left( \charwpn{\guard}{\cc}{\ff}{\ob}(1) \sepcon \fg \right)(\sk,\hh) \\
           \eeqtag{algebra}
           & \lambda(\sk,\hh) \mydot \sup_{\ob < \oa} \left( \charwpn{\guard}{\cc}{\ff}{\ob}(1) \sepcon \fg\right)(\sk,\hh) \\
           \eeqtag{Definition of $\sepcon$}
           & \lambda(\sk,\hh) \mydot \sup_{\ob < \oa} \max_{\hh_1,\hh_2} \setcomp{ \charwpn{\guard}{\cc}{\ff}{\ob}(1)(\sk,\hh_1) \cdot \fg(\sk,\hh_2) }{ \hh = \hh_1 \sepcon \hh_2 } \\
           \eeqtag{replace $\max$ by $\sup$ (over a finite non-empty set)}
           & \lambda(\sk,\hh) \mydot \sup_{\ob < \oa} \sup_{\hh_1,\hh_2} \setcomp{ \charwpn{\guard}{\cc}{\ff}{\ob}(1)(\sk,\hh_1) \cdot \fg(\sk,\hh_2) }{ \hh = \hh_1 \sepcon \hh_2 } \\
           \eeqtag{commute suprema}
           & \lambda(\sk,\hh) \mydot \sup_{\hh_1,\hh_2} \setcomp{ \sup_{\ob < \oa} \charwpn{\guard}{\cc}{\ff}{\ob}(1)(\sk,\hh_1) \cdot \fg(\sk,\hh_2) }{ \hh = \hh_1 \sepcon \hh_2 } \\
           %
           %
           \eeqtag{algebra ($\fg$ does not depend on $\ob$) }
           & \lambda(\sk,\hh) \mydot \sup_{\hh_1,\hh_2} \setcomp{ \fg(\sk,\hh_2) \cdot \sup_{\ob < \oa} \charwpn{\guard}{\cc}{\ff}{\ob}(1)(\sk,\hh_1) }{ \hh = \hh_1 \sepcon \hh_2 } \\
           \eeqtag{Definition of $\charwpn{\guard}{\cc}{\ff}{\oa}(1)$ for $\oa$ limit ordinal}
           & \lambda(\sk,\hh) \mydot \sup_{\hh_1,\hh_2} \setcomp{ \fg(\sk,\hh_2) \cdot \charwpn{\guard}{\cc}{\ff}{\oa}(1)(\sk,\hh_1) }{ \hh = \hh_1 \sepcon \hh_2 } \\
           \eeqtag{supremum is attained (the set of partitions $\hh = \hh_1 \sepcon \hh_1$ is non-empty and finite)}
           & \lambda(\sk,\hh) \mydot \max_{\hh_1,\hh_2} \setcomp{ \fg(\sk,\hh_2) \cdot \charwpn{\guard}{\cc}{\ff}{\oa}(1)(\sk,\hh_1) }{ \hh = \hh_1 \sepcon \hh_2 } \\
           \eeqtag{Definition of $\sepcon$}
           & \fg \sepcon \charwpn{\guard}{\cc}{\ff}{\oa}(1) \\
           \eeqtag{commutativity of $\sepcon$}
           & \charwpn{\guard}{\cc}{\ff}{\oa}(1) \sepcon \fg.
   \end{align}
\end{proof}

\subsection{Proof of Theorem~\ref{thm:wp-duality} (Duality of Weakest Preexpectations)}
\label{app:wp-duality}

Each of the statements in Theorem~\ref{thm:wp-duality} is proven by induction on the structure of $\hpgcl$ programs. 
We consider the relationship between $\wpsymbol$ and $\awlepsymbol$ in detail. The other relationships are shown analogously.
According to Section~\ref{sec:wp:landscape}, $\awlepsymbol$ is given by the rules in Table~\ref{table:awlep}.
In particular, we have 
\begin{align}
        \ff \esepcon \fg \eeq & \lambda (\sk,\hh) \mydot \min_{\hh_1,\hh_2} \setcomp{ 1-\ff(\sk,\hh_1) + \ff(\sk,\hh_1) \cdot \fg(\sk,\hh_2) }{ \hh = \hh_1 \sepcon \hh_2 } \\
        \iverson{\varphi} \esepimp \fg \eeq & \lambda (\sk,\hh) \mydot \sup_{\hh'} \left\{ \fg(\sk,\hh \sepcon \hh') ~|~ \hh \disjoint \hh', (\sk,\hh') \models \varphi \right\}
\end{align}

\begin{table*}[t]
\renewcommand{\arraystretch}{1.5}
\begin{tabular}{@{\hspace{1em}}l@{\hspace{2em}}l}
	\hline\hline
	$\boldsymbol{\cc}$			& $\boldsymbol{\textbf{\textsf{awlep}}\,\left \llbracket \cc\right\rrbracket  \left(\ff \right)}$ \\
	\hline\hline
	$\SKIP$					& $\ff$ 																					\\
	$\ASSIGN{x}{\ee}$			& $\ff\subst{x}{\ee}$ \\
    $\ALLOC{x}{\vec{\ee}}$	& $\displaystyle\lambda(\sk,\hh)\mydot \sup_{v \in \PosNats:v,v+1,\ldots,v+|\vec{\ee}|-1 \notin \dom{\hh}} \left(\singleton{v}{\vec{\ee}} \esepimp \ff\subst{x}{v}\right)(\sk,\hh)$ \\
	$\ASSIGNH{x}{\ee}$			& $\displaystyle\inf_{v \in \Ints} \singleton{\ee}{v} \esepcon \bigl( \singleton{\ee}{v} \esepimp \ff\subst{x}{v} \bigr)$ \\
	$\HASSIGN{\ee}{\ee'}$			& $\validpointer{\ee} \esepcon \bigl(\singleton{\ee}{\ee'} \esepimp \ff \bigr)$ \\
	$\FREE{\ee}$				& $\validpointer{\ee} \esepcon \ff$ \\
	$\COMPOSE{\cc_1}{\cc_2}$		& $\awlep{\cc_1}{\vphantom{\big(}\awlep{\cc_2}{\ff}}$ \\
	$\ITE{\guard}{\cc_1}{\cc_2}$		& $\iverson{\guard} \cdot \awlep{\cc_1}{\ff} + \iverson{\neg \guard} \cdot \awlep{\cc_2}{\ff}$ \\
	$\PCHOICE{\cc_1}{\pp}{\cc_2}$		& $\pp \cdot \awlep{\cc_1}{\ff} + (1- \pp) \cdot \awlep{\cc_2}{\ff}$ \\
	$\WHILEDO{\guard}{\cc'}$		& $\gfp \fh\mydot \iverson{\neg \guard} \cdot \ff + \iverson{\guard} \cdot \awlep{\cc'}{\fh}$ \\
	\hline\hline
\end{tabular}
\caption{Rules for the angelic weakest liberal preexpectation transformer with extrinsic memory safety. Here $\ff \in \Eone$ is a (post)expectation, 
$\ff\subst{x}{v} =  \lambda (\sk,\hh)\mydot \ff(\sk\subst{x}{\sk(v)}, \hh)$ is the ``syntactic replacement'' of $x$ by $v$ in $\ff$, and
$\vec{\ee} = (\ee_1,\ldots,\ee_n)$ is a tuple of expressions. 
Moreover, $\AVAILLOC{\vec{\ee}} = \lambda(\sk,\hh)\mydot \{ v \in \Nats ~|~ v,v+1,\ldots,v+|\vec{\ee}|-1 \notin \dom{\hh} \}$ collects all memory locations for allocation of $\vec{\ee}$ in heap $\hh$ and $\gfp \fh\mydot \Phi(\fh)$ is the greatest fixed point of $\Phi$.
}
\label{table:awlep}
\end{table*}

Now, our goal is to show that
\begin{align}
  \wp{\cc}{\ff} \eeq 1 - \awlep{\cc}{1-\ff}.
\end{align}

\begin{proof}
We proceed by induction on the structure of $\hpgcl$-programs.

\emph{The case $\SKIP$}
\begin{align}
        & \wp{\SKIP}{\ff} \\
        \eeqtag{Table~\ref{table:wp}}
        & \ff \\
        \eeqtag{algebra}
        & 1 - (1-\ff) \\
        \eeqtag{Table~\ref{table:awlep}}
        & 1 - \awlep{\SKIP}{1-\ff}.
\end{align}

\emph{The case $\ASSIGN{x}{\ee}$}
\begin{align}
        & \wp{\ASSIGN{x}{\ee}}{\ff} \\
        \eeqtag{Table~\ref{table:wp}} 
        & \ff\subst{x}{\ee} \\
        \eeqtag{algebra}
        & 1 - (1-\ff\subst{x}{\ee}) \\
        \eeqtag{algebra}
        & 1 - (1-\ff)\subst{x}{\ee} \\
        \eeqtag{Table~\ref{table:awlep}} 
        & 1 - \awlep{\ASSIGN{x}{\ee}}{1-\ff}.
\end{align}

\emph{The case $\ALLOC{x}{\vec{\ee}}$}
\begin{align}
        & 1 - \awlep{\ALLOC{x}{\vec{\ee}}}{1-\ff} \\
        \eeqtag{Table~\ref{table:awlep}} 
        & \lambda(\sk,\hh)\mydot 1 - 
        \sup_{v \in \PosNats : v,v+1,\ldots,v+|\vec{\ee}|-1 \notin \dom{\hh}} \left(\singleton{v}{\vec{\ee}} \esepimp (1-\ff)\subst{x}{v}\right)(\sk,\hh) \\
        \eeqtag{algebra}
        & \lambda(\sk,\hh)\mydot 1 - 
        \sup_{v \in \PosNats : v,v+1,\ldots,v+|\vec{\ee}|-1 \notin \dom{\hh}} \left(\singleton{v}{\vec{\ee}} \esepimp (1-\ff\subst{x}{v})\right)(\sk,\hh) \\
        %
        \eeqtag{algebra}
        & \lambda(\sk,\hh)\mydot \inf_{v \in \PosNats : v,v+1,\ldots,v+|\vec{\ee}|-1 \notin \dom{\hh}} 
        1 - \left(\singleton{v}{\vec{\ee}} \esepimp (1-\ff\subst{x}{v})\right)(\sk,\hh) \\
        %
        \eeqtag{Definition of $\esepimp$}
        & \lambda(\sk,\hh)\mydot \inf_{v \in \PosNats : v,v+1,\ldots,v+|\vec{\ee}|-1 \notin \dom{\hh}}  
          1 - \sup_{\hh'} \left\{ 1 - \ff\subst{x}{v}(\sk,\hh \sepcon \hh') ~|~ \hh \disjoint \hh', (\sk,\hh') \models \singleton{v}{\vec{\ee}} \right\} \\
        %
        \eeqtag{algebra}
        & \lambda(\sk,\hh)\mydot \inf_{v \in \PosNats : v,v+1,\ldots,v+|\vec{\ee}|-1 \notin \dom{\hh}} 
        1 - \left(1 - \inf_{\hh'} \left\{ \ff\subst{x}{v}(\sk,\hh \sepcon \hh') ~|~ \hh \disjoint \hh', (\sk,\hh') \models \singleton{v}{\vec{\ee}} \right\}\right) \\
        %
        \eeqtag{algebra}
        & \lambda(\sk,\hh)\mydot \inf_{v \in \PosNats : v,v+1,\ldots,v+|\vec{\ee}|-1 \notin \dom{\hh}} 
          \inf_{\hh'} \left\{ \ff\subst{x}{v}(\sk,\hh \sepcon \hh') ~|~ \hh \disjoint \hh', (\sk,\hh') \models \singleton{v}{\vec{\ee}} \right\} \\
        %
        %
        \eeqtag{using equations (\ref{proof:soundness:alloc:2}) and (\ref{proof:soundness:alloc:3})}
        & \lambda(\sk,\hh)\mydot \inf_{v \in \PosNats}
          \inf_{\hh'} \left\{ \ff\subst{x}{v}(\sk,\hh \sepcon \hh') ~|~ \hh \disjoint \hh', (\sk,\hh') \models \singleton{v}{\vec{\ee}} \right\} \\
        \eeqtag{Definition $\sepimp$}
        & \inf_{v \in \AVAILLOC{\vec{\ee}}} \singleton{v}{\vec{\ee}} \sepimp \ff\subst{x}{v} \\
        \eeqtag{Table~\ref{table:wp}} 
        & \wp{\ALLOC{x}{\vec{\ee}}}{\ff}. 
\end{align}

\emph{The case $\ASSIGNH{x}{\ee}$}
\begin{align}
        & 1 - \awlep{\ASSIGNH{x}{\ee}}{1-\ff} \\
        \eeqtag{Table~\ref{table:awlep}} 
        & 1 - \sup_{v \in \Ints} \singleton{\ee}{v} \esepcon \bigl( \singleton{\ee}{v} \esepimp (1-\ff)\subst{x}{v} \bigr) \\
        \eeqtag{algebra}
        & \inf_{v \in \Ints} 1 - \singleton{\ee}{v} \esepcon \bigl( \singleton{\ee}{v} \esepimp 1-\ff\subst{x}{v} \bigr) \\
        \eeqtag{Definition~$\esepcon$} 
        & \inf_{v \in \Ints} 1 - \lambda (\sk,\hh) \mydot \min_{\hh_1,\hh_2} \big\{ \\
        & \qquad 1 - \singleton{\ee}{v}(\sk,\hh_1) + \singleton{\ee}{v}(\sk,\hh_1) \cdot \bigl( \singleton{\ee}{v} \esepimp 1-\ff\subst{x}{v} \bigr)(\sk,\hh_2) \notag \\ 
        & ~\big|~ \hh = \hh_1 \sepcon \hh_2 \big\} \notag \\
        \eeqtag{Definition~$\esepimp$} 
        & \inf_{v \in \Ints} 1 - \lambda (\sk,\hh) \mydot \min_{\hh_1,\hh_2} \big\{ \\
        & \qquad 1 - \singleton{\ee}{v}(\sk,\hh_1) + \singleton{\ee}{v}(\sk,\hh_1) ~\cdot \notag\\
        & \qquad \qquad \sup_{\hh'} \left\{ 1-\ff\subst{x}{v}(\sk,\hh_2\sepcon \hh') ~|~ \hh \disjoint \hh', (\sk,\hh') \models \singleton{\ee}{v} \right\} \notag\\
        & ~\big|~ \hh = \hh_1 \sepcon \hh_2 \big\} \notag \\
        \eeqtag{algebra}
        & \inf_{v \in \Ints} 1 - \lambda (\sk,\hh) \mydot \min_{\hh_1,\hh_2} \big\{ \\
        & \qquad 1 - \singleton{\ee}{v}(\sk,\hh_1) + \singleton{\ee}{v}(\sk,\hh_1) ~\cdot 1 \notag\\
        & \qquad \qquad - \singleton{\ee}{v}(\sk,\hh_1) \cdot \inf_{\hh'} \left\{ \ff\subst{x}{v}(\sk,\hh_2\sepcon \hh') ~|~ \hh \disjoint \hh', (\sk,\hh') \models \singleton{\ee}{v} \right\} \notag\\
        & ~\big|~ \hh = \hh_1 \sepcon \hh_2 \big\} \notag \\
        \eeqtag{algebra}
        & \inf_{v \in \Ints} 1 - \lambda (\sk,\hh) \mydot \min_{\hh_1,\hh_2} \big\{ \\
        & \qquad 1 - \singleton{\ee}{v}(\sk,\hh_1) \cdot \inf_{\hh'} \left\{ \ff\subst{x}{v}(\sk,\hh_2\sepcon \hh') ~|~ \hh \disjoint \hh', (\sk,\hh') \models \singleton{\ee}{v} \right\} \notag\\
        & ~\big|~ \hh = \hh_1 \sepcon \hh_2 \big\} \notag \\
        \eeqtag{algebra}
        & \inf_{v \in \Ints} 1 - 1 + \lambda (\sk,\hh) \mydot \max_{\hh_1,\hh_2} \big\{ \\
        & \qquad \singleton{\ee}{v}(\sk,\hh_1) \cdot \inf_{\hh'} \left\{ \ff\subst{x}{v}(\sk,\hh_2\sepcon \hh') ~|~ \hh \disjoint \hh', (\sk,\hh') \models \singleton{\ee}{v} \right\} \notag\\
        & ~\big|~ \hh = \hh_1 \sepcon \hh_2 \big\} \notag \\
        \eeqtag{algebra}
        & \inf_{v \in \Ints} \lambda (\sk,\hh) \mydot \max_{\hh_1,\hh_2} \big\{ \\
        & \qquad \singleton{\ee}{v}(\sk,\hh_1) \cdot \inf_{\hh'} \left\{ \ff\subst{x}{v}(\sk,\hh_2\sepcon \hh') ~|~ \hh \disjoint \hh', (\sk,\hh') \models \singleton{\ee}{v} \right\} \notag\\
        & ~\big|~ \hh = \hh_1 \sepcon \hh_2 \big\} \notag \\
        \eeqtag{Definition $\sepimp$} 
        & \inf_{v \in \Ints} \lambda (\sk,\hh) \mydot \max_{\hh_1,\hh_2} \setcomp{ \singleton{\ee}{v}(\sk,\hh_1) \cdot (\singleton{\ee}{v} \sepimp \ff\subst{x}{v})(\sk,\hh_2) }{\hh = \hh_1 \sepcon \hh_2} \\
        \eeqtag{Definition $\sepcon$} 
        & \inf_{v \in \Ints} \singleton{\ee}{v} \sepcon (\singleton{\ee}{v} \sepimp \ff\subst{x}{v})(\sk,\hh_2) \\
        \eeqtag{Table~\ref{table:wp}}
        & \wp{\ASSIGNH{x}{\ee}}{\ff}. 
\end{align}

\emph{The case $\HASSIGN{\ee}{\ee'}$}
\begin{align}
        & 1 - \awlep{\HASSIGN{\ee}{\ee'}}{1-\ff} \\
        \eeqtag{Table~\ref{table:awlep}} 
        & 1 - \validpointer{\ee} \esepcon (\singleton{\ee}{\ee'} \esepimp 1-\ff) \\
        \eeqtag{Definition of $\esepcon$} 
        & 1 - \lambda(\sk,\hh)\mydot \min_{\hh_1,\hh_2} \big\{ 1 - \validpointer{\ee}(\sk,\hh_1) \\
        & \qquad + \validpointer{\ee}(\sk,\hh_1) \cdot (\singleton{\ee}{\ee'} \esepimp 1-\ff)(\sk,\hh_2) \notag\\
        & ~\big|~ \hh = \hh_1 \sepcon \hh_2 \big\} \notag \\
        \eeqtag{Definition of $\esepimp$} 
        & 1 - \lambda(\sk,\hh)\mydot \min_{\hh_1,\hh_2} \big\{ 1 - \validpointer{\ee}(\sk,\hh_1) \\
        & \qquad + \validpointer{\ee}(\sk,\hh_1) \cdot \sup_{\hh'} \left\{ 1 - \ff(\sk,\hh_2 \sepcon \hh') ~|~ \hh \disjoint \hh', (\sk,\hh') \models \singleton{\ee}{\ee'} \right\} \notag\\
        & ~\big|~ \hh = \hh_1 \sepcon \hh_2 \big\} \notag \\
        \eeqtag{algebra} 
        & 1 - \lambda(\sk,\hh)\mydot \min_{\hh_1,\hh_2} \big\{ 1 \\
        & \qquad - \validpointer{\ee}(\sk,\hh_1) \cdot \inf_{\hh'} \left\{ \ff(\sk,\hh_2 \sepcon \hh') ~|~ \hh \disjoint \hh', (\sk,\hh') \models \singleton{\ee}{\ee'} \right\} \notag\\
        & ~\big|~ \hh = \hh_1 \sepcon \hh_2 \big\} \notag \\
        \eeqtag{algebra}
        & \lambda(\sk,\hh)\mydot \max_{\hh_1,\hh_2} \big\{ \\
        & \qquad \validpointer{\ee}(\sk,\hh_1) \cdot \inf_{\hh'} \left\{ \ff(\sk,\hh_2 \sepcon \hh') ~|~ \hh \disjoint \hh', (\sk,\hh') \models \singleton{\ee}{\ee'} \right\} \notag \\
        & ~\big|~ \hh = \hh_1 \sepcon \hh_2 \big\} \notag \\
        \eeqtag{Definition of $\sepimp$} 
        & \lambda(\sk,\hh)\mydot \max_{\hh_1,\hh_2} \setcomp{ \validpointer{\ee}(\sk,\hh_1) \cdot (\singleton{\ee}{\ee'} \sepimp \ff)(\sk,\hh_2) }{ \hh = \hh_1 \sepcon \hh_2 } \\
        \eeqtag{Definition of $\sepcon$} 
        & \validpointer{\ee} \sepcon (\singleton{\ee}{\ee'} \sepimp \ff) \\
        \eeqtag{Table~\ref{table:wp}}
        & \wp{\HASSIGN{\ee}{\ee'}}{\ff}.
\end{align}

\emph{The case $\FREE{\ee}$}
\begin{align}
        & 1 - \awlep{\FREE{x}}{1-\ff} \\
        \eeqtag{Table~\ref{table:awlep}}
        & 1 - \validpointer{\ee} \esepcon (1-\ff) \\
        \eeqtag{Definition of $\esepcon$}
        & 1 - \lambda (\sk,\hh) \mydot \min_{\hh_1,\hh_2} \setcomp{ 1 - \validpointer{\ee}(\sk,\hh_1) + \singleton{\ee}{v}(\sk,\hh_1) \cdot (1-\ff)(\sk,\hh_2) }{ \hh = \hh_1 \sepcon \hh_2 } \\
        \eeqtag{algebra}
        & 1 - \lambda (\sk,\hh) \mydot \min_{\hh_1,\hh_2} \big\{ \\
        & \qquad 1 - \validpointer{\ee}(\sk,\hh_1) + \validpointer{\ee}(\sk,\hh_1) - \validpointer{\ee}(\sk,\hh_1) \cdot \ff(\sk,\hh_2) \notag\\
        & \big| \hh = \hh_1 \sepcon \hh_2 \big\}  \notag \\
        \eeqtag{algebra}
        & 1 - \lambda (\sk,\hh) \mydot \min_{\hh_1,\hh_2} \setcomp{ 1 - \validpointer{\ee}(\sk,\hh_1) \cdot \ff(\sk,\hh_2) }{ \hh = \hh_1 \sepcon \hh_2 } \\
        \eeqtag{algebra}
        & 1 - 1 + \lambda (\sk,\hh) \mydot \max_{\hh_1,\hh_2} \setcomp{ \validpointer{\ee}(\sk,\hh_1) \cdot \ff(\sk,\hh_2) }{ \hh = \hh_1 \sepcon \hh_2 } \\
        \eeqtag{algebra}
        & \lambda (\sk,\hh) \mydot \max_{\hh_1,\hh_2} \setcomp{ \validpointer{\ee}(\sk,\hh_1) \cdot \ff(\sk,\hh_2) }{ \hh = \hh_1 \sepcon \hh_2 } \\
        \eeqtag{Definition of $\sepcon$}
        & \validpointer{\ee} \sepcon \ff \\
        \eeqtag{Table~\ref{table:wp}}
        & \wp{\FREE{x}}{\ff}.
\end{align}

\emph{The case $\COMPOSE{\cc_1}{\cc_2}$}
\begin{align}
        & \wp{\COMPOSE{\cc_1}{\cc_2}}{\ff} \\
        \eeqtag{Table~\ref{table:wp}} 
        & \wp{\cc_1}{\wp{\cc_2}{\ff}} \\
        \eeqtag{I.H.}
        & 1 - \awlep{\cc_1}{1 - \wp{\cc_2}{\ff}} \\
        \eeqtag{I.H.}
        & 1 - \awlep{\cc_1}{1 - (1 - \awlep{\cc_2}{1-\ff})} \\
        \eeqtag{algebra}
        & 1 - \awlep{\cc_1}{\awlep{\cc_2}{1-\ff}} \\
        \eeqtag{Table~\ref{table:awlep}} 
        & 1 - \awlep{\COMPOSE{\cc_1}{\cc_2}}{1-\ff}.
\end{align}

\emph{The case $\ITE{\guard}{\cc_1}{\cc_2}$}
\begin{align}
        & \wp{\ITE{\guard}{\cc_1}{\cc_2}}{\ff} \\
        \eeqtag{Table~\ref{table:wp}} 
        & \iverson{\guard} \cdot \wp{\cc_1}{\ff} + \iverson{\neg \guard} \cdot \wp{\cc_2}{\ff} \\
        \eeqtag{I.H.}
        & \iverson{\guard} \cdot (1 - \awlep{\cc_1}{1-\ff}) + \iverson{\neg \guard} \cdot \wp{\cc_2}{\ff} \\
        \eeqtag{I.H.}
        & \iverson{\guard} \cdot (1 - \awlep{\cc_1}{1-\ff}) + \iverson{\neg \guard} \cdot (1 - \awlep{\cc_2}{1-\ff}) \\
        \eeqtag{algebra}
        & \underbrace{(\iverson{\guard} + \iverson{\neg \guard})}_{\eeq 1} - \iverson{\guard} \cdot \awlep{\cc_1}{1-\ff} - \iverson{\neg \guard} \cdot \awlep{\cc_2}{1-\ff} \\
        \eeqtag{algebra}
        & 1 - (\iverson{\guard} \cdot \awlep{\cc_1}{1-\ff} + \iverson{\neg \guard} \cdot \awlep{\cc_2}{1-\ff}) \\
        \eeqtag{Table~\ref{table:awlep}}
        & 1 - \awlep{\ITE{\guard}{\cc_1}{\cc_2}}{1-\ff}.
\end{align}

\emph{The case $\PCHOICE{\cc_1}{\pp}{\cc_2}$}
\begin{align}
        & \wp{\PCHOICE{\cc_1}{\pp}{\cc_2}}{\ff} \\
        \eeqtag{Table~\ref{table:wp}} 
        & \pp \cdot \wp{\cc_1}{\ff} + (1-\pp) \cdot \wp{\cc_2}{\ff} \\
        \eeqtag{I.H.}
        & \pp \cdot (1-\awlep{\cc_1}{1-\ff}) + (1-\pp) \cdot \wp{\cc_2}{\ff} \\
        \eeqtag{I.H.}
        & \pp \cdot (1-\awlep{\cc_1}{1-\ff}) + (1-\pp) \cdot (1-\awlep{\cc_2}{1-\ff}) \\
        \eeqtag{algebra}
        & \underbrace{(\pp + (1-\pp))}_{\eeq 1} - \pp \cdot \awlep{\cc_1}{1-\ff} - (1-\pp) \cdot \awlep{\cc_2}{1-\ff} \\
        \eeqtag{algebra}
        & 1 - (\pp \cdot \awlep{\cc_1}{1-\ff} + (1-\pp) \cdot \awlep{\cc_2}{1-\ff}) \\
        \eeqtag{Table~\ref{table:awlep}}
        & 1 - \awlep{\PCHOICE{\cc_1}{\pp}{\cc_2}}{1-\ff}.
\end{align}

\emph{The case $\WHILEDO{\guard}{\cc}$}
Recall that $\wp{\WHILEDO{\guard}{\cc}}{\ff} = \lfp \fh . F_{\ff}{\fh}$, where
\begin{align}
        F_{\ff}(\fh) \eeq \iverson{\neg \guard} \cdot \ff + \iverson{\guard} \cdot \wp{\cc}{\fh}.
\end{align}
Moreover, we have $\awlep{\WHILEDO{\guard}{\cc}}{1-\ff} = \gfp \fh. G_{1-\ff}{\fh}$, where
\begin{align}
        G_{1-\ff}(\fh) \eeq \iverson{\neg \guard} \cdot (1-\ff) + \iverson{\guard} \cdot \awlep{\cc}{\fh}.
\end{align}

Then, using a constructive version of the Tarski and Knaster fixed point theorem (cf.~\cite{cousot1979constructive}), it suffices to show that
\begin{align}
        \sup_{\oa \in \Ord} F_{\ff}^{\oa}(0) \eeq 1 - \inf_{\oa \in \Ord} G_{1-\ff}^{\oa}(1),
\end{align}
where $\Ord$ denotes the class of all ordinals.
We proceed by transfinite induction on $\oa \in \Ord$ to show 
\begin{align}
        F_{\ff}^{\oa}(0) \eeq 1 - G_{1-\ff}^{\oa}(1).
\end{align}

For $\oa = 0$, we have
\begin{align}
        F_{\ff}^{0}(0) \eeq 0 \eeq 1 - 1 \eeq 1 - G_{1-\ff}^{0}(1).
\end{align}

If $\oa$ is a successor ordinal, we have
\begin{align}
        & F_{\ff}^{\oa+1}(0) \\
        \eeqtag{Definition of $F_{\ff}^{\oa+1}$} 
        & F_{\ff}(F_{\ff}^{\oa}(0)) \\
        \eeqtag{I.H.}
        & F_{\ff}(1 - G_{1-\ff}^{\oa}(1)) \\
        \eeqtag{Definition of $F_{\ff}$} 
        & \iverson{\neg\guard} \cdot \ff + \iverson{\guard} \cdot \wp{\cc}{1-G_{1-\ff}^{\oa}(1)} \\
        \eeqtag{outer I.H.}
        & \iverson{\neg\guard} \cdot \ff + \iverson{\guard} \cdot \left(1 - \awlep{\cc}{G_{1-\ff}^{\oa}(1)}\right) \\
        \eeqtag{algebra}
        & \iverson{\neg \guard} - (\iverson{\neg\guard} \cdot (1-\ff)) + \iverson{\guard} \cdot \left(1 - \awlep{\cc}{G_{1-\ff}^{\oa}(1)}\right) \\
        \eeqtag{algebra}
        & 1 - \left( \iverson{\neg\guard} \cdot (1-\ff) + \iverson{\guard} \cdot \awlep{\cc}{G_{1-\ff}^{\oa}(1)}\right) \\
        \eeqtag{Definition of $G_{1-\ff}$} 
        & 1 - G_{1-\ff}(G_{1-\ff}^{\oa}(1)) \\
        \eeqtag{Definition of $G_{1-\ff}^{\oa+1}$}
        & 1 - G_{1-\ff}^{\oa+1}(1). 
\end{align}

If $\oa$ is a limit ordinal, we have
\begin{align}
        & F_{\ff}^{\oa}(0) \\
        \eeqtag{Definition of $F_{\ff}^{\oa}$} 
        & \sup_{\ob < \oa} F_{\ff}^{\ob}(0) \\
        \eeqtag{I.H.}
        & \sup_{\ob < \oa} 1 - G_{1-\ff}^{\ob}(1) \\
        \eeqtag{algebra}
        & 1 - \inf_{\ob < \oa} G_{1-\ff}^{\ob}(1)  \\
        \eeqtag{Definition of $G_{1-\ff}^{\oa}$}
        & 1 - G_{1-\ff}^{\oa}(1). 
\end{align}

Hence, for all ordinals $\oa \in \Ord$, we have
\begin{align}
        F_{\ff}^{\oa}(0) \eeq 1 - G_{1-\ff}^{\oa}(1)
\end{align}
and thus also $\wp{\WHILEDO{\guard}{\cc}}{\ff} = 1 - \awlep{\WHILEDO{\guard}{\cc}}{1-\ff}$.
\end{proof}

\newpage
\section{Appendix to Section~\ref{sec:extensions} (Beyond \hpgcl Programs)}
\label{app:sec:extensions}

\subsection{Incorporating Recursive Procedure Calls}\label{app.extensions:recursion}

\paragraph{Syntax}
To incorporate procedure calls with parameters and local variables, the syntax of \hpgcl programs has to be adapted.
To this end, let $\ProcNames$ be a set of \emph{procedure names}. 
Then the set $\rhpgcl$ of \emph{recursive} \hpgcl programs $d$ is given by the following context-free grammar:
\begin{align*}
    d ~\longrightarrow~ & \cc \\
    ~|~ & \ProcDeclIs{P}{\vec{x}}{\vec{y}}{\cc} \SEMI d \tag{\textbf{procedure declarations}} \\
    \cc ~\longrightarrow~ & \SKIP \tag{effectless program} \\
	~|~ & \ASSIGN{x}{\ee} \tag{assignment} \\
	~|~ & \COMPOSE{\cc}{\cc} \tag{sequential composition} \\
	~|~ & \ITE{\guard}{\cc}{\cc}\quad{} & \tag{conditional choice} \\
	~|~ & \WHILEDO{\guard}{\cc} & \tag{loop} \\
	~|~ & \PCHOICE{\cc}{\pp}{\cc} \tag{probabilistic choice}\\
	~|~ & \ALLOC{x}{\ee_1,\, \ldots,\, \ee_n} \tag{allocation} \\
	~|~ & \HASSIGN{\ee}{\ee'} \tag{mutation} \\
	~|~ & \ASSIGNH{x}{\ee} \tag{lookup} \\
	~|~ & \FREE{\ee} \tag{deallocation} \\
    ~|~ & \ProcCall{P}{\vec{\ee}}{\vec{y}}, \tag{\textbf{procedure call}}  \\
\end{align*}
%
%
where $\ProcName{P} \in \ProcNames$, $\vec{x}$ is a tuple of variables $\ee, \ee', \ee_1,\ldots,\ee_n$ are arithmetic expressions, $n \in \Nats$, $\vec{\ee}$ is a tuple of arithmetic expressions with $|\vec{\ee}| = |\vec{x}|$,
$\guard$ is a predicate, i.\ee. an expression over variables evaluating to either true or false, and
$\pp \in [0,\, 1] \cap \Rats$ is a probability. 

Let us briefly consider the added program statements.
The statement
\begin{align}
    \ProcCall{P}{\vec{\ee}}{\vec{y}}
\end{align}
calls a procedure $\ProcName{P}$ with call-by-value parameters $\vec{\ee}$.

%

The meaning of such a procedure call is specified by a preceding procedure declaration 
\begin{align}
    \ProcDeclIs{P}{\vec{x}}{\vec{y}}{\cc} 
\end{align}
to declare a procedure $\ProcName{P}$ with parameters $\vec{x}$ and procedure body $\cc$.
All variables in $\cc$ that do not occur in $\vec{x}$ are considered local variables initialized with $0$.
Since a procedure declaration by itself does not modify any variables, we set
\begin{align}
        \Mod{\ProcDeclIs{P}{\vec{x}}{\vec{y}}{\cc}} \eeq \emptyset.
\end{align}

\paragraph{Static semantics}
For simplicity, we require that each procedure name is declared at most once and that every procedure only calls previously declared procedures or itself. Hence, we do not consider mutual recursion.
Moreover, we assume that the number of parameters passed to a procedure matches with the number of declared parameters.
Furthermore, as stated before, we require for a procedure $\ProcDeclIs{P}{\vec{x}}{\vec{y}}{\cc}$ that variables $\vec{x}$ are not modified by $\cc$.

\paragraph{Local variables}
Towards a formal semantics of \rhpgcl programs, we have to define how local variable of procedures are incorporated in our previous notion of program states.
As is standard in denotational semantics, we extend the type of our wp-calculus by a call-stack representation for local variables,
Thus, we split our previous notion of stacks, i.e. evaluations of variables of the form $\sk : \Vars \to \Ints$, 
into \emph{stores} mapping locations on the call stack---represented by natural numbers---to values
and \emph{variable environments} mapping variables to locations on the call stack.
Moreover, stores are assumed to contain a special symbol $\Next$  holding the next free location.
Formally, we introduce the sets:
\begin{align}
        \Stores \eeq & \left\{ \store ~\middle|~ \store : N \cup \{ \Next \} \to \Ints, N \subseteq \Nats, |N| < \infty \right\} \tag{stores} \\
        \VarEnv \eeq & \left\{ \varenv ~\middle|~ \varenv : \Vars \to \Nats \right\} \tag{variable environments}
\end{align}
Since guards, arithmetic expressions, etc. are evaluated by stacks in $\Stacks$, we introduce a function that recovers our original notion of stacks $\sk : \Vars \to \Ints$ used throughout the paper:
\begin{align}
        \stack : \VarEnv \to \Stores \to \Stacks, \quad \stack(\varenv)(\store) \eeq \lookup.
\end{align}
Since expressions are not allowed to depend on the heap, we consider arithmetic expressions $\ee$ and Iverson brackets for predicates $\iverson{\preda}$ as functions
\begin{align}
   \ee : \Stacks\to \Ints \qand \iverson{\preda} : \Stacks \to \{0,1\}.
\end{align}
Then the corresponding evaluation functions for a given variable environment (but an arbitrary store) are given by
\begin{align}
        \ee \circ \stack(\varenv) : \Stores \to \Ints \qand \iverson{\preda} \circ \stack(\varenv) : \Stores \to \{0,1\}.
\end{align}

\paragraph{Expectations}
We also have to adapt the continuations used within our $\wpsymbol$-calculus, i.e. the notion of expectations.
Originally, an expectations maps stack-heap pairs, to positive real numbers or infinity.
In our new setting, in which stacks are split into variable environments and stores, the domain of our continuations consists of \emph{store-heap} pairs instead.
Hence, we consider the set of expectations
\begin{align}
        \Estore \eeq \{ \ff ~|~ \ff : \Stores \times \Heaps \to \PosRealsInf \}.
\end{align}

To enable local reasoning, we restrict ourselves to expectations that cannot measure quantities across variable environments, e.g. measuring the size of a store. Formally:
\begin{definition}\label{def:admissible-exp}
        Let $\varenv \in \VarEnv$ be a variable environment with $V = \{\ell \in \Nats ~|~ \exists x . \varenv(x) = \ell \}$ finite.
        Then an expectation $\ff \in \Estore$ is \emph{admissible for $\varenv$} if and only if
        \begin{align*}
                \forall \hh \, \forall \store_1 \forall \store_2 \colon \dom{\store_1} \cap \dom{\store_2} \subseteq V 
                ~\text{implies}~ \ff(\store_1,\hh) \eeq \ff(\store_2,\hh).
        \end{align*}
        The set of \emph{admissible expectations} for $\varenv$ is denoted by $\Eadm$.
\end{definition}
In particular, given a variable environment $\varenv$ and a ``classical'' expectation $\ff \in \E$ as used throughout the paper, we obtain an admissible expectation in $\fg \in \Eadm$ as follows:
\begin{align}
        \fg \eeq \lambda(\store,\hh)\mydot \ff(\stack(\varenv)(\store),\hh).
\end{align}
Admissible expectations thus suffice to express expectations considered in our original setting.

\paragraph{Semantics of procedure declarations}
A \emph{procedure environment} is a mapping from procedure names in $\ProcNames$ together with values for its parameters to an expectation transformer.
Consequently, the \emph{set of procedure environments} is given by
\begin{align}
        \ProcEnv \eeq \{ \procenv ~|~ \procenv : \ProcNames \to \Ints^{*} \to \left(\Estore \to \Estore\right) \},
\end{align}
where $\Ints^{*}$ denotes the set of all sequences over integers.
If the number of parameters does not match the parameter list in the procedure's declaration, we require that 
$\procenvP{P}{\vec{z}}{\vec{z_2}}$ is undefined.
The semantics of procedure declarations is then defined in terms of a transformer
\begin{align}
        \DeclPSym \colon d \to \VarEnv \to \left(\ProcEnv \to \ProcEnv\right)
\end{align}
given by
\begin{align} 
        \label{eq:recursion:declp}
        \DeclP{\varepsilon}{\varenv}{\procenv} \eeq & \procenv \\
        \DeclP{\ProcDeclIs{P}{\vec{x}}{\vec{y}}{\cc}\SEMI d}{\varenv}{\procenv} \eeq & \DeclP{d}{\varenv}{\procenv\subst{\ProcName{P}}{\lfp f. \ProcPsi{\varenv}{\procenv}(f)}},
\end{align}
where $\lfp f. \ProcPsi{\varenv}{\procenv}(f)$ denotes the least fixed point of $\ProcPsi{\varenv}{\procenv}$.
To describe this transformer formally, let 
\begin{itemize}
  \item $\vec{a} = a_1 \ldots a_i \in \Ints^{i}$ be the values supplied to call-by-value parameters in $\vec{x}$, and
  \item $\{z_1,\ldots,z_k\}$ be the set of all procedure-local variables that are not parameters, i.e. variables that occur in $\Vars(\cc)$, but neither in $\vec{x}$. 
\end{itemize}
Then the transformer 
\begin{align}
    \ProcPsi{\varenv}{\procenv}\colon \left(\Ints^{*} \to (\Estore \to \Estore)\right) \to \left(\Ints^{*} \to (\Estore \to \Estore)\right)
\end{align}
is given by
\begin{align}
        \label{eq:recursion:psi}
        \ProcPsi{\varenv}{\procenv}(f) \eeq & \lambda\vec{a}\mydot \lambda\ff\mydot \lambda(\store,\hh)\mydot \wpsymbol\llbracket \cc,\varenv',\procenv\subst{\ProcName{P}}{f}\rrbracket(\ff)(\store',\hh) \\
        \varenv' \eeq & \varenv''\subst{x_1}{\ell_1}\ldots\subst{x_i}{\ell_k} \tag{add call-by-value parameters} \\
        \varenv'' \eeq & \varenv\subst{z_1}{r_1}\ldots\subst{z_k}{r_k} \tag{add local variables} \\
        \store' \eeq & \store''\subst{\ell_1}{a_1}\ldots\subst{\ell_i}{a_i} \tag{initialize call-by-value parameters} \\
        \store'' \eeq & \store\subst{r_1}{0}\ldots\subst{r_k}{0}\subst{\Next}{\store(\Next)+i+k}, \tag{initialize local variables with $0$}
\end{align}
where $\ell_n = \store(\Next)+(n-1)$, $1 \leq n \leq i$, and $r_n = \store(\Next) + i + n-1$, $1 \leq n \leq k$,
are new locations for local variables and call-by-value parameters.
Intuitively, $\varenv''$ and $\store''$ account for the local variables that are initialized with $0$.
In $\varenv'$ and $\store'$ , we additionally update the values assigned to procedure parameters $\vec{x}$.
Furthermore,
\begin{align}
        \wpsymbol : \hpgcl \to \VarEnv \to \ProcEnv \to (\Estore \to \Estore)
\end{align}
is the $\wpsymbol$-semantics for \rhpgcl programs, which we introduce next.

\paragraph{Weakest preexpectation semantics of $\rhpgcl$}
Our updated semantics for $\rhpgcl$ programs is defined inductively as shown in Table~\ref{table:rec-wp}.
Notice that the semantics of all original \hpgcl statements remains unchanged (except for the use of variable environments and stores which requires an additional indirection using the $\stack$ function).
The semantics of procedure declarations first updates the procedure environment as introduced above (see equation~\ref{eq:recursion:psi}).
The semantics of procedure calls then boils down to applying the current procedure environment to an evaluation of the procedure's parameters.
\begin{table*}[t]
\renewcommand{\arraystretch}{1.5}
\begin{tabular}{@{\hspace{0.5em}}l@{\hspace{2em}}l}
	\hline\hline
    $\boldsymbol{d}$			& $\boldsymbol{\textbf{\textsf{wp}}\,\left \llbracket d, \varenv, \procenv\right\rrbracket\left(\ff\right)}$ \\
	\hline\hline
    $\cc$                                           & $\wp{\cc,\varenv,\procenv}{\ff}$ (see below) \\
    $\ProcDeclIs{P}{\vec{x}}{\vec{y}}{\cc} \SEMI d$ & $\wp{d,\varenv,\procenv'}{\ff}$ \\
                                                  & \qquad where $\DeclP{d}{\varenv}{\procenv} \eeq \procenv'$ \\
	\hline\hline
    $\boldsymbol{\cc}$			& $\boldsymbol{\textbf{\textsf{wp}}\,\left \llbracket \cc, \varenv, \procenv\right\rrbracket\left(\ff\right)}$ \\
	\hline\hline
    $\SKIP$					& $\ff$ 																					\\
    $\ASSIGN{x}{\ee}$			& $\ff\subst{\varenv(x)}{\evv{\ee}}$ \\
    $\ALLOC{x}{\vec{\ee}}$	& $\displaystyle\inf_{v \in \AVAILLOC{\vec{\ee}}} \singleton{v}{\evv{\vec{\ee}}} \sepimp \ff\subst{\varenv(x)}{v}$ \\
    $\ASSIGNH{x}{\ee}$			& $\displaystyle\sup_{v \in \Ints} \singleton{\evv{\ee}}{v} \sepcon \bigl( \singleton{\evv{\ee}}{v} \sepimp \ff\subst{\varenv(x)}{v} \bigr)$ \\
    $\HASSIGN{\ee}{\ee'}$			& $\validpointer{\evv{\ee}} \sepcon \bigl(\singleton{\evv{\ee}}{\evv{\ee'}} \sepimp \ff \bigr)$ \\
    $\FREE{\ee}$				& $\validpointer{\evv{\ee}} \sepcon \ff$ \\
    $\COMPOSE{\cc_1}{\cc_2}$		& $\wp{\cc_1,\varenv,\procenv}{\wp{\cc_2,\varenv,\procenv}{\ff}}$ \\
    $\ITE{\guard}{\cc_1}{\cc_2}$		& $(\evv{\iverson{\guard}}) \cdot \wp{\cc_1,\varenv,\procenv}{\ff}$ \\ 
                                  & \qquad $+ (\evv{\iverson{\neg \guard}}) \cdot \wp{\cc_2,\varenv,\procenv}{\ff}$ \\
	$\PCHOICE{\cc_1}{\pp}{\cc_2}$		& $\pp \cdot \wp{\cc_1,\varenv,\procenv}{\ff} + (1- \pp) \cdot \wp{\cc_2,\varenv,\procenv}{\ff}$ \\
    $\WHILEDO{\guard}{\cc'}$		& $\lfp \fg\mydot (\evv{\iverson{\neg \guard}}) \cdot \ff + (\evv{\iverson{\guard}}) \cdot \wp{\cc',\varenv,\procenv}{\fg}$ \\
    $\ProcCall{P}{\vec{\ee}}{\vec{y}}$ & $\lambda(\store,\hh) \mydot \procenvP{P}{\evv{\vec{\ee}}(\store)}{\varenv(\vec{y})}(\ff)(\store,\hh)$ \\
	\hline\hline
\end{tabular}
\caption{%
Rules of the weakest preexpectation transformer for \rhpgcl programs. Here $\varenv \in \VarEnv$ is a variable environment, $\procenv \in \ProcEnv$ is a procedure environment, and $\ff \in \Estore$ is a (post)expectation.
$\ff\subst{x}{\ee} =  \lambda (\store,\hh)\mydot \ff(\store\subst{\ell}{\store(\ee)}, \hh)$ is the ``syntactic replacement'' of $x$ by $\ee$ in $\ff$, where $\store(\ee)$ is the evaluation of expression $\ee : \Stores \to \Ints$. In particular, if $\ee = v \in \Ints$, we assume the constant function given by $v(\store) = v$.
$\vec{\ee} = (\ee_1,\ldots,\ee_n)$ is a tuple of expressions, and $\vec{x}$, $\vec{y}$ are tuples of variables. 
We write $\evv{\vec{\ee}}$ as a shortcut for $(\evv{\ee_1},\ldots,\evv{\ee_n})$.
Moreover, $\AVAILLOC{\vec{\ee}} = \lambda(\store,\hh)\mydot \{ v \in \Nats ~|~ v,v+1,\ldots,v+|\vec{\ee}|-1 \notin \dom{\hh} \}$ collects all suitable memory locations for allocation of $\vec{\ee}$ in heap $\hh$.
}
\label{table:rec-wp}
\end{table*}
We conclude our introduction of procedure calls and local variables with a few facts that each can be shown by induction on the program structure.
\begin{proposition}
The following facts hold for \rhpgcl programs:
\begin{itemize}
    \item $(\Estore,\preceq)$ is a complete lattice for 
            \[ \ff \preceq \fg \qiff \forall (\store,\hh) \colon \ff(\store,\hh) \leq \fg(\store,\hh). \]
    \item $\wp{\cc,\varenv,\procenv} \colon \Estore \to \Estore$ is monotone with respect to $\preceq$.
    \item $(\ProcEnv,\preceq)$ is a complete lattice for 
            \[ \procenv \preceq \procenv' \qiff \forall \ProcName{P} \in \ProcNames \forall \vec{z_1} \forall \vec{z_2} \colon 
                    \procenvP{P}{\vec{z_1}}{\vec{z_2}} 
                    \ppreceq \procenvPrime{P}{\vec{z_1}}{\vec{z_2}} 
            \]
    \item $\ProcPsi{\varenv}{\procenv} \colon \Ints^{*} \to \Estore \to \Estore$ is a monotone function with respect to $\preceq$. 
\end{itemize}
\end{proposition}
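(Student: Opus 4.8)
The final statement to prove is the Proposition collecting four facts about $\rhpgcl$ programs: that $(\Estore,\preceq)$ is a complete lattice, that $\wp{\cc,\varenv,\procenv}$ is monotone, that $(\ProcEnv,\preceq)$ is a complete lattice, and that $\ProcPsi{\varenv}{\procenv}$ is monotone.

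The plan is to treat the four items in the order listed, since later items depend on earlier ones. First I would establish that $(\Estore,\preceq)$ is a complete lattice. This is essentially the same argument already used for $(\E,\preceq)$ in Section~\ref{sec:expectations}: pointwise order on functions valued in the complete lattice $\PosRealsInf$ yields a complete lattice, with suprema and infima computed pointwise, least element $\lambda(\store,\hh)\mydot 0$ and greatest element $\lambda(\store,\hh)\mydot \infty$. Nothing new happens here; the domain $\Stores \times \Heaps$ plays exactly the role $\States$ played before.

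Next I would prove monotonicity of $\wp{\cc,\varenv,\procenv}$ by structural induction on $\cc$, with the procedure environment $\procenv$ and variable environment $\varenv$ fixed throughout. Most cases are a line-by-line rehash of Appendix~\ref{app:wp:basic:monotonicity}: the atomic statements use monotonicity of substitution, of $\sepcon$ (Theorem~\ref{thm:sepcon-monotonic}), of $\sepimp$ in its second argument, and of $\inf/\sup$; sequential composition, conditional choice, and probabilistic choice follow from the induction hypotheses and monotonicity of $+$ and $\cdot$; the loop case uses the Tarski--Knaster argument exactly as before, noting that $\charwpsym$ built from $\wp{\cc',\varenv,\procenv}{\cdot}$ is monotone. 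The only genuinely new case is the procedure call $\ProcCall{P}{\vec{\ee}}{\vec{y}}$: here $\wp{\ProcCall{P}{\vec{\ee}}{\vec{y}},\varenv,\procenv}(\ff)(\store,\hh) = \procenvP{P}{\evv{\vec{\ee}}(\store)}{\varenv(\vec{y})}(\ff)(\store,\hh)$, so monotonicity in $\ff$ reduces to monotonicity of the expectation transformer stored in $\procenv$ at that procedure name and those arguments. So this case is not self-contained: it holds provided every transformer in $\procenv$ is monotone. I would therefore phrase the statement as: for all monotone procedure environments $\procenv$ (meaning every $\procenvP{P}{\vec{z_1}}{\vec{z_2}}$ is monotone), $\wp{\cc,\varenv,\procenv}$ is monotone --- and then observe that the procedure environments actually produced by $\DeclPSym$ are monotone, because $\ProcPsi{\varenv}{\procenv}$ preserves monotonicity (item four) and the least fixed point of a monotone operator on the complete lattice of \emph{monotone} transformers is itself monotone. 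This circularity between items two and four is exactly the main obstacle: one has to set up the simultaneous induction carefully, or equivalently restrict attention to the sub-lattice of monotone transformers from the outset and check $\ProcPsi{\varenv}{\procenv}$ maps it to itself.

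For the third item, $(\ProcEnv,\preceq)$ being a complete lattice: again this is pointwise order, now over $\ProcNames \times \Ints^{*}$ into the complete lattice of transformers $\Estore \to \Estore$ ordered pointwise (which is complete by item one). One subtlety is that $\procenvP{P}{\vec{z_1}}{\vec{z_2}}$ may be undefined when arities mismatch; the clean fix is to treat such entries as a fixed constant (say the everywhere-$0$ transformer) or to work in the product over only well-typed indices, and I would say a word to that effect. Finally, for monotonicity of $\ProcPsi{\varenv}{\procenv} : (\Ints^{*} \to (\Estore \to \Estore)) \to (\Ints^{*} \to (\Estore \to \Estore))$ in its argument $f$: unfolding the definition in equation~(\ref{eq:recursion:psi}), $\ProcPsi{\varenv}{\procenv}(f) = \lambda\vec{a}\mydot\lambda\ff\mydot\lambda(\store,\hh)\mydot \wpsymbol\llbracket\cc,\varenv',\procenv\subst{\ProcName{P}}{f}\rrbracket(\ff)(\store',\hh)$, and only the procedure environment $\procenv\subst{\ProcName{P}}{f}$ depends on $f$; so $f \preceq f'$ gives $\procenv\subst{\ProcName{P}}{f} \preceq \procenv\subst{\ProcName{P}}{f'}$, and the claim follows by a second, easy structural induction showing that $\wp{\cc,\varenv,\cdot}$ is monotone in the \emph{procedure environment} argument (every rule in Table~\ref{table:rec-wp} is built from monotone operations, and the procedure-call rule is directly monotone in $\procenv$). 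Since $\varenv'$, $\store'$ are determined by $\varenv,\procenv,\store$ alone, they are unaffected. I would present items one and three briefly, spend the bulk of the write-up on the induction for item two (highlighting the procedure-call case), and then deduce item four from the environment-monotonicity lemma, making explicit how the well-foundedness of the "only previously declared procedures" restriction lets the simultaneous induction terminate.
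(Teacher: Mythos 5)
Your proposal is correct and follows essentially the route the paper intends: the paper only asserts that each item "can be shown by induction on the program structure," and your plan (pointwise complete-lattice arguments, structural induction for monotonicity of the transformer, and reduction of monotonicity of the declaration functional to monotonicity in the procedure-environment argument) matches the lifting scheme the paper uses for its other $\rhpgcl$ results. Your explicit treatment of the procedure-call case via restriction to monotone procedure environments merely spells out a dependency the paper leaves implicit, so no substantive difference or gap remains.
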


In particular, by the Tarski-Knaster fixed point theorem, the least fixed point $\lfp f. \ProcPsi{\varenv}{\procenv}(f)$ used in the semantics of procedure declarations exists and is given by
\begin{align}
        \lfp f. \ProcPsi{\varenv}{\procenv} \eeq \sup_{\oa \in \Ord} \ProcPsi{\varenv}{\procenv}^{\oa}(0),
\end{align}
where $\Ord$ is the set of ordinals and $0$ is the smallest element of the lattice.

\subsection{Proof Rule for Recursion}

The proof rules to deal with recursion presented in the paper are standard (cf.~\cite{DBLP:journals/fac/Hesselink94}).
Let us briefly discuss how these proof rules are connected to our formalization of procedures.
A formal proof is outside the scope of this paper.
We refer the interested reader to~\cite{DBLP:conf/lics/OlmedoKKM16} for a formal correctness proof of this rule (for a simpler probabilistic programming language).

For simplicity, we consider only a single procedure, say $\ProcName{P}$, which is declared by the statement $\ProcDeclIs{P}{\vec{x}}{\vec{y}}{\cc}$.
Moreover, let us fix a variable environment $\varenv \in \VarEnv$ and a procedure environment $\procenv \in \ProcEnv$. 
By definition of the $\wpsymbol$ semantics of procedure declaration, we obtain a procedure environment $\procenv'$ given by
\begin{align}
    \procenv' \eeq \procenv\subst{\ProcName{P}}{
        \lfp f \mydot \ProcPsi{\varenv}{\procenv}{f}
    }.
\end{align}
The semantics of calls  of procedure $\ProcName{P}$ with respect to $\varenv$ and $\procenv$ is then given by 
\begin{align}
        & \wp{\ProcCall{P}{\vec{\ee}}{\vec{y}}, \varenv, \procenv'}{\ff} \\
        \eeqtag{Table~\ref{table:rec-wp}}
        & \lambda(\store,\hh) \mydot \procenvPrime{P}{\evv{\vec{\ee}}(\store)}{\varenv(\vec{y})}(\ff)(\store,\hh) \\
        \eeqtag{Definition of $\procenv'$}
        &\lambda(\store,\hh) \mydot \left(\lfp f\mydot\ProcPsi{\varenv}{\procenv}{f}\right)\left(\evv{\vec{\ee}}(\store)\right)(\ff)(\store,\hh).
\end{align}
Now, if $g$ is a pre-fixed point of $\ProcPsi{\varenv}{\procenv}$, i.e. $\ProcPsi{\varenv}{\procenv}{g} \preceq g$, then $\lfp f \mydot \ProcPsi{\varenv}{\procenv} \preceq g$.
By definition of $\ProcPsi{\varenv}{\procenv}$ (see equation~\ref{eq:recursion:psi}) the fact that $g$ is a pre-fixed point means that
\begin{align}
    \lambda\vec{a}\mydot \lambda\ff\mydot \lambda(\store,\hh)\mydot \wpsymbol\llbracket \cc,\varenv',\procenv\subst{\ProcName{P}}{g}\rrbracket(\ff)(\store',\hh) \ppreceq g
\end{align}
Consequently, we obtain the rule
\begin{align}
\infer{
        \forall \vec{\ee} \colon
        \wpC{\ProcCall{P}{\vec{\ee}}{\vec{y}}, \varenv, \procenv'} \ppreceq 
        \lambda(\store,\hh) \mydot g\left(\evv{\vec{\ee}}(\store)\right)(\ff)(\store,\hh)
  }{
        \forall \vec{a} \colon \lambda\ff\mydot \lambda(\store,\hh)\mydot \wpsymbol\llbracket \cc,\varenv',\procenv\subst{\ProcName{P}}{g}\rrbracket(\ff)(\store',\hh) \ppreceq g(\vec{a}),
  }
\end{align}
where we replaced the lambdas for parameters $\vec{a}$ by universal quantifiers.
Now, let $I(\vec{x}) \in \E$ be a classical expectation that depends on $\vec{x}$. 
We can then define a corresponding function $g$ as follows:
\begin{align}
        g \eeq \lambda \vec{x} \mydot \lambda(\store,\hh) \mydot \, I(\stack(\store)(\vec{x}))(\stack(\store), \hh).
\end{align}
Inserting this definition in the above proof rule for a fixed $X \in \Estore$ and evaluations of parameters $\vec{a} = \stack{\varenv}(\store)(\vec{\ee})$ in the definition of $\varenv'$, we obtain
\begin{align}
\infer{
        \forall \vec{\ee} \colon 
        \wp{\ProcCall{P}{\vec{\ee}}{\vec{y}}, \varenv, \procenv'}{\ff} \ppreceq 
        I(\vec{\ee})
  }{
        \forall \vec{\ee} \colon \lambda\ff\mydot \lambda(\store,\hh)\mydot \wpsymbol\llbracket \cc,\varenv',\procenv\subst{\ProcName{P}}{g}\rrbracket(\ff)(\store',\hh) \ppreceq I(\vec{\ee}),
  }
\end{align}
In the next step, let us remove the fixed procedure environment $\procenv$. To highlight that $\procenv$ is updated by $g$ for procedure $\ProcName{P}$, write the premise of the above rule as
\begin{align}
  \forall \vec{e} \colon \wp{\ProcCall{P}{\vec{\ee}}{\vec{y}}, \varenv'}{\ff} \ppreceq \inv(\vec{\ee}) ~\Vdash~ \wp{\cc, \varenv'}{\ff} \ppreceq \inv(\vec{x})
\end{align}
Finally, since variable environment $\varenv$ is fixed, let us remove it from our notation as well. We then obtain the proof rule presented in the paper, i.e.
\begin{align*}
\infer[\text{[rec]}]{
     \forall \vec{\ee} \colon \wp{\ProcCall{P}{\vec{\ee}}{\vec{y}}}{\ff} \ppreceq \inv(\vec{\ee}),
  }{
     \forall \vec{\ee} \colon \wp{\ProcCall{P}{\vec{\ee}}{\vec{y}}}{\ff} \ppreceq \inv(\vec{\ee}) ~\Vdash~ \wp{\cc}{\ff} \ppreceq \inv(\vec{x})
  }
\end{align*}

Analogously, if we consider weakest \emph{liberal} preexpectations, i.e. take greatest instead of least fixed points, we obtain a proof rule for lower bounds:
\begin{align*}
\infer[\text{[rec]}]{
        \forall \vec{\ee} \colon \wlp{\ProcCall{P}{\vec{\ee}}{\vec{y}}}{\ff} \ssucceq \inv(\vec{\ee})
  }{
        \forall \vec{\ee} \colon \wlp{\ProcCall{P}{\vec{\ee}}{\vec{y}}}{\ff} \ssucceq \inv(\vec{\ee}) ~\Vdash~ \wlp{\cc}{\ff} \ssucceq \inv(\vec{x}).
  }
\end{align*}

\subsection{Lifting Properties of $\wpsymbol$ to \rhpgcl}
All previously introduced properties of $\wpsymbol$ for \hpgcl programs that have been shown by structural induction on the program structure can be lifted to $\wpsymbol$ for \rhpgcl programs, i.e. programs with recursive procedures. 
In this section, we briefly explain the main steps to adapt our proofs to account for recursion.

Suppose we want to show that for all $\rhpgcl$ programs $d$ (including procedure declarations) and (admissible) expectations $\ff \in \Eadm$, $\wp{d,\varenv,\procenv}{\ff}$ has a property of interest, say $\texttt{Prop}$. 
By assumption, we have already shown $\texttt{Prop}$ for all \hpgcl programs $\cc$ by induction on the program structure.
We then proceed in three steps:
\begin{enumerate}
        \item\label{app:recursion:lifting:1}
              First, we show for all \hpgcl programs $\cc$ with procedure calls (but without declarations) that 
                $\wp{\cc,\varenv,\procenv}{\ff}$ has property \texttt{Prop} \emph{if procedure environment $\procenv$ satisfies property \texttt{Prop} for every procedure used in $\cc$} by induction on the program structure.
              For all cases except procedure calls, this is analogous to our proof for \hpgcl programs.
              For procedure calls, we apply the procedure environment, which, by assumption, satisfies \texttt{Prop}.
        \item\label{app:recursion:lifting:2}
              Next, \emph{assuming a given procedure environment already satisfies \texttt{Prop} for every procedure occurring in an additional procedure declaration}, we show that the transformer for this procedure declaration satisfies property \texttt{Prop}.
              Formally, if $M$ is the set of procedure calls used in the body $\cc$ of $\ProcDeclIs{P}{\vec{x}}{\vec{y}}{\cc}$, we show by transfinite induction that
              \begin{align}
                      \forall \ProcName{Q} \in M \setminus \{\ProcName{P}\} \colon \procenv(\ProcName{Q}) ~\text{satisfies}~\texttt{Prop} 
                      \qimplies 
                      \forall \oa \in \Ord \colon \ProcPsi{\varenv}{\procenv}^{\oa}(0)~\text{satisfies}~\texttt{Prop},
              \end{align}
              where $\ProcPsi{\varenv}{\procenv}$ is the functional used to determine the semantics of procedure \ProcName{P} in the definition of procedure declarations (see equations~\ref{eq:recursion:declp} and~\ref{eq:recursion:psi}).
        \item\label{app:recursion:lifting:3}
              Finally, since every procedure may only call already declared procedures or itself in its procedure body, the premise of the above property is initially satisfied for every procedure environment.
              We then show by a (rather straightforward) induction on the structure of $\rhpgcl$ programs $d$ that
              \begin{align}
                  \wp{d,\varenv,\procenv}{\ff}~\text{satisfies property}~\texttt{Prop}.
              \end{align}
\end{enumerate}
We do not explicitly perform the above steps for every statement that has been proven by induction on the structure of \hpgcl programs before.
Let us, however, consider the frame rule in detail as an example of the above scheme.
The proofs for linearity of $\wpsymbol$, etc. are very similar.
\subsection{Lifting the Frame Rule to \rhpgcl}
Since expectations in $\Estore$ are functions of the form $\ff : \Stores \times \Heaps \to \PosRealsInf$, let us first update our notion of variables that ``occur'' in $\ff$
with respect to a given variable environment $\varenv \in \VarEnv$:
\begin{align*}
        \Vars_{\varenv}(\ff) \eeq \left\{ x \in \Vars ~|~ \exists(\store,\hh) \, \exists v,v' \in \Ints \colon 
        \ff(\store\subst{\varenv(x)}{v},\hh) \neq \ff(\store\subst{\varenv(x)}{v'},\hh) \right\}
\end{align*}
%
%
\begin{theorem}[Quantitative Frame Rule for \rhpgcl]\label{thm:rhpgcl-frame-rules}
  For every $\rhpgcl$-program $d$, variable environment $\varenv \in \VarEnv$, procedure environment $\procenv \in \ProcEnv$,
  and admissible expectations $\ff,\fg \in \Eadm$ with $\Mod{\cc} \cap \Vars_{\varenv}(\fg) = \emptyset$, we have
  \begin{align*}
          \wp{d,\varenv,\procenv}{\ff} \sepcon \fg \ppreceq \wp{d,\varenv,\procenv}{\ff \sepcon \fg}.
  \end{align*}
\end{theorem}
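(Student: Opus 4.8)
The plan is to prove Theorem~\ref{thm:rhpgcl-frame-rules} by lifting the frame rule for \hpgcl (Theorem~\ref{thm:frame-rules}) through the three-step scheme sketched in the appendix. Throughout, ``property \texttt{Prop}'' will be: for all admissible $\ff,\fg \in \Eadm$ with $\Mod{\cdot} \cap \Vars_{\varenv}(\fg) = \emptyset$, the transformer distributes over $\sepcon$ on the right in the sense $\wp{\cdot,\varenv,\procenv}{\ff} \sepcon \fg \ppreceq \wp{\cdot,\varenv,\procenv}{\ff \sepcon \fg}$. One preliminary remark: since $\Vars_{\varenv}(\fg) = \emptyset$ for the relevant $\fg$ and $\Mod{d} = \emptyset$ whenever $d$ is a procedure declaration or a procedure call (parameters are passed by value), the side condition is automatically satisfied for those constructs, so the only genuinely new case is the \emph{transfer of \texttt{Prop} through a procedure environment}.

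First I would re-run the structural induction of Appendix~\ref{app:frame-rule} on the grammar of $\cc \in \hpgcl$ \emph{with procedure calls}, but now relative to an arbitrary variable environment $\varenv$ and procedure environment $\procenv$. The atomic and composite cases go through essentially verbatim: the only bookkeeping difference is that assignments, loop guards, etc.\ are now evaluated via $\stack(\varenv)$, and substitutions act on locations $\varenv(x)$ rather than variables $x$ directly. Admissibility of $\ff$ and $\fg$ (Definition~\ref{def:admissible-exp}) is exactly what guarantees that the manipulations of locations behave like the variable substitutions in the original proof; in particular admissible expectations are closed under $\sepcon$, which one should note explicitly. The new base case is $\ProcCall{P}{\vec{\ee}}{\vec{y}}$: here $\wp{\ProcCall{P}{\vec{\ee}}{\vec{y}},\varenv,\procenv}{\ff}(\store,\hh) = \procenvP{P}{\evv{\vec{\ee}}(\store)}{\varenv(\vec{y})}(\ff)(\store,\hh)$, so the claim holds \emph{provided the procedure environment $\procenv$ already satisfies \texttt{Prop}}, i.e.\ provided for every sequence $\vec{a}$ and every admissible pair, $\procenvP{P}{\vec{a}}{\cdot}(\ff) \sepcon \fg \ppreceq \procenvP{P}{\vec{a}}{\cdot}(\ff \sepcon \fg)$. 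Thus Step~\ref{app:recursion:lifting:1} reduces \texttt{Prop} for programs-with-calls to \texttt{Prop} for the underlying environment.

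Next I would carry out Step~\ref{app:recursion:lifting:2}: assuming $\procenv(\ProcName{Q})$ satisfies \texttt{Prop} for every procedure $\ProcName{Q} \neq \ProcName{P}$ called inside the body $\cc$ of $\ProcDeclIs{P}{\vec{x}}{\vec{y}}{\cc}$, show by transfinite induction that every iterate $\ProcPsi{\varenv}{\procenv}^{\oa}(0)$ satisfies \texttt{Prop}, hence so does the least fixed point $\lfp f\mydot \ProcPsi{\varenv}{\procenv}(f) = \sup_{\oa \in \Ord} \ProcPsi{\varenv}{\procenv}^{\oa}(0)$. The base case $\oa = 0$ is trivial since $0 \sepcon \fg = 0 \ppreceq \wpsymbol(\cdots)(\fg)$. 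For a successor $\oa+1$, $\ProcPsi{\varenv}{\procenv}^{\oa+1}(0) = \lambda\vec{a}.\lambda\ff.\lambda(\store,\hh).\, \wpsymbol\llbracket \cc,\varenv',\procenv\subst{\ProcName{P}}{\ProcPsi{\varenv}{\procenv}^{\oa}(0)}\rrbracket(\ff)(\store',\hh)$, where $\varenv',\store'$ are the parameter/local-variable extensions from equation~(\ref{eq:recursion:psi}); by the induction hypothesis the updated procedure environment $\procenv\subst{\ProcName{P}}{\ProcPsi{\varenv}{\procenv}^{\oa}(0)}$ satisfies \texttt{Prop} at $\ProcName{P}$ (and at the other $\ProcName{Q}$'s by assumption), so Step~\ref{app:recursion:lifting:1} applies to the body $\cc$ relative to $\varenv'$ — here one must check $\Mod{\cc}\cap\Vars_{\varenv'}(\fg) = \emptyset$, which follows because $\fg$ does not mention the freshly allocated locations $\varenv'(\vec{x})$ and $\varenv'(\vec{z})$, and $\Mod{\cc}$ only touches locals and parameters. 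For a limit ordinal $\oa$, use $\ProcPsi{\varenv}{\procenv}^{\oa}(0) = \sup_{\ob < \oa} \ProcPsi{\varenv}{\procenv}^{\ob}(0)$ together with the fact that $\underdot{\phantom{X}}\sepcon\fg$ commutes with suprema of chains (because $\sepcon$ is a supremum over the finitely many partitions of a given heap, exactly as in the limit-ordinal case of Appendix~\ref{app:frame-rule}), so $\big(\sup_{\ob<\oa}\ProcPsi{\varenv}{\procenv}^{\ob}(0)(\vec a)(\ff)\big)\sepcon\fg = \sup_{\ob<\oa}\big(\ProcPsi{\varenv}{\procenv}^{\ob}(0)(\vec a)(\ff)\sepcon\fg\big) \ppreceq \sup_{\ob<\oa}\ProcPsi{\varenv}{\procenv}^{\ob}(0)(\vec a)(\ff\sepcon\fg)$.

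Finally, Step~\ref{app:recursion:lifting:3} is a short outer induction on the structure of $\rhpgcl$ programs $d$: for $d = \cc$ a plain program, the initial procedure environment vacuously satisfies \texttt{Prop} (no procedures are called, or all called procedures were declared earlier and, by the induction hypothesis for the declaration prefix, their entries satisfy \texttt{Prop}), and Step~\ref{app:recursion:lifting:1} concludes; for $d = \ProcDeclIs{P}{\vec{x}}{\vec{y}}{\cc}\SEMI d'$, Step~\ref{app:recursion:lifting:2} shows that $\DeclP{d}{\varenv}{\procenv}$ updates $\procenv$ to a new environment still satisfying \texttt{Prop} at every declared procedure, and one invokes the induction hypothesis on $d'$. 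I expect the main obstacle to be the limit-ordinal case of Step~\ref{app:recursion:lifting:2}: one has to be careful that the supremum defining the least fixed point of $\ProcPsi{\varenv}{\procenv}$ is over an ascending chain (which it is, by monotonicity of $\ProcPsi{\varenv}{\procenv}$), that $\sepcon$ genuinely commutes with such suprema pointwise, and that admissibility is preserved along the chain and by $\sepcon$ — none of this is deep, but it is where the bookkeeping with $\varenv'$, $\store'$, and the location-renaming is most delicate, and it is the place where a naive attempt could silently use continuity of $\wpsymbol$, which fails in the presence of allocation (Section~\ref{sec:wp:alloc}); the argument must instead rest only on monotonicity plus the finite-partition structure of $\sepcon$.
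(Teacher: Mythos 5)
Your proposal follows the same three-step lifting scheme the paper itself uses (structural induction on programs with calls relative to a procedure environment satisfying the property, transfinite induction on the iterates $\ProcPsi{\varenv}{\procenv}^{\oa}(0)$ with base, successor, and limit cases, then an outer induction on $\rhpgcl$ declarations), and the key technical points you flag --- admissibility of $\fg$ to neutralize the store extension $\store'$ in the successor case, and commuting $\sepcon$ with the supremum via the finite-partition structure rather than continuity of $\wpsymbol$ --- are exactly the moves in the paper's proof. The argument is correct and essentially identical to the paper's.
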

\begin{proof}
We proceed according to the scheme to lift our results from \hpgcl to \rhpgcl . \\
\noindent \textbf{Step~\ref{app:recursion:lifting:1}.} 
Let $\procenv$ be a procedure environment such that for every procedure call $\ProcName{P}$ and parameters $\vec{z}$ occurring in a program $\cc$ and all expectations $\ff,\fg \in \Estore$  with 
$\Mod{\cc} \cap \Vars_{\varenv}(\fg) = \emptyset$, we have 
\begin{align}
    \procenvP{P}{\vec{z}}{\vec{z_2}}(\ff) \sepcon \fg \preceq \procenvP{P}{\vec{z}}{\vec{z_2}}(\ff \sepcon \fg).
\end{align}
We then show by induction on $\cc$, i.e. \hpgcl programs with procedure calls, that 
\begin{align}
        \wp{\cc,\varenv,\procenv}{\ff} \sepcon \fg \ppreceq \wp{\cc,\varenv,\procenv}{\ff \sepcon \fg}. \label{eq:recursive:frame:1}
\end{align}
For all cases except procedure calls the proof is analogous to the proof of the quantitative frame rule for \hpgcl programs, see Theorem~\ref{thm:frame-rules}.
For procedure calls, i.e. $\cc = \ProcCall{P}{\vec{\ee}}{\vec{y}}$, we have
\begin{align}
        & \wp{\ProcCall{P}{\vec{\ee}}{\vec{y}},\varenv,\procenv}{\ff \sepcon \fg} \\
        \eeqtag{Table~\ref{table:rec-wp}}
        & \lambda(\store,\hh) \mydot \procenvP{P}{\evv{\vec{e}}}{\varenv(\vec{y})}(\ff \sepcon \fg)(\store,\hh) \\
        \ssucceqtag{Assumption from above}
        & \lambda(\store,\hh) \mydot \left(\procenvP{P}{\evv{\vec{\ee}}}{\varenv(\vec{y})}(\ff) \sepcon \fg\right)(\store,\hh) \\
        \eeqtag{algebra}
        & \left(\lambda(\store,\hh) \mydot \procenvP{P}{\evv{\vec{\ee}}}{\store(\vec{y})}(\ff)(\store,\hh)\right) \sepcon \fg \\
        \eeqtag{Table~\ref{table:rec-wp}}
        & \wp{\ProcCall{P}{\vec{\ee}}{\vec{y}},\varenv,\procenv}{\ff} \sepcon \fg.
\end{align}
\noindent \textbf{Step~\ref{app:recursion:lifting:2}.} 
Now, let $\ProcDeclIs{P}{\vec{\ee}}{\vec{y}}{\cc}$ be a procedure declaration and $\ProcPsi{\varenv}{\procenv}$ be the corresponding functional such that
\begin{align}
        \DeclP{\ProcDeclIs{P}{\vec{\ee}}{\vec{y}}{\cc}}{\varenv}{\procenv} 
        \eeq \underbrace{\procenv\subst{P}{\lfp f\mydot \ProcPsi{\varenv}{\procenv}(f)}}_{\eeq \procenv'}.
\end{align}
Moreover, assume that for all procedure names $\ProcName{Q}$ except $\ProcName{P}$ that occur in $\cc$ and parameters $\vec{z}$ that 
\begin{align}
        \procenvP{Q}{\vec{z}}{\vec{z_2}}(\ff) \sepcon \fg \ppreceq \procenvP{Q}{\vec{z}}{\vec{z_2}}(\ff \sepcon \fg).
\end{align}
Our goal is to show that for every procedure $\ProcName{Q}$ that occurs in $\cc$ (including $\ProcName{P}$) that (for all suitable parameters $\vec{z}$)
\begin{align}
        \procenvP{Q}{\vec{z}}{\vec{z_2}}(\ff) \sepcon \fg \ppreceq \procenvP{Q}{\vec{z}}{\vec{z_2}}(\ff \sepcon \fg). \label{eq:recursive:frame:2}
\end{align}
Since $\procenv' = \procenv\subst{P}{\lfp f \mydot \ProcPsi{\varenv}{\procenv}(f)}$ we only have to show that
\begin{align}
    \procenvP{P}{\vec{z}}{\vec{z_2}}(\ff \sepcon \fg) 
    \eeq (\lfp f\mydot \ProcPsi{\varenv}{\procenv}(f))(\vec{z})(\ff \sepcon \fg) 
    \ssucceq (\lfp f\mydot \ProcPsi{\varenv}{\procenv}(f))(\vec{z})(\ff) \sepcon \fg.
\end{align}
Furthermore, by the Tarski-Knaster fixed point theorem, we have 
\begin{align}
        \lfp f\mydot \ProcPsi{\varenv}{\procenv}(f) = \sup_{\oa \in \Ord} \ProcPsi{\varenv}{\procenv}^{\oa}(0).
\end{align}
It thus suffices to show that for all ordinals $\oa$ and all suitable parameters $\vec{z}$ that
\begin{align}
        \ProcPsi{\varenv}{\procenv}^{\alpha}(0)(\vec{z})(\ff \sepcon \fg)  \ssucceq
        \ProcPsi{\varenv}{\procenv}^{\alpha}(0)(\vec{z})(\ff) \sepcon \fg.
\end{align}
Since parameters $\vec{z}$ never change in the proof below, let us write 
$\ProcPsi{\varenv}{\procenv}(\ff)$ instead of the more convoluted $\ProcPsi{\varenv}{\procenv}(\vec{z})(\ff)$.
We then proceed by transfinite induction on $\oa$.

\emph{The case $\oa = 0$.}
\begin{align}
        & \ProcPsi{\varenv}{\procenv}^{0}(0)(\ff \sepcon \fg) \\
        \eeqtag{$\ProcPsi{\varenv}{\procenv}^{0}(0) = 0$}
        & 0 \\
        \eeqtag{algebra}
        & 0 \sepcon \fg \\
        \eeqtag{$\ProcPsi{\varenv}{\procenv}^{0}(0) = 0$}
        & \ProcPsi{\varenv}{\procenv}^{0}(0)(\ff) \sepcon \fg.
\end{align}
\emph{The case $\oa$ successor ordinal.}
\begin{align}
        & \ProcPsi{\varenv}{\procenv}^{\oa+1}(0)(\ff \sepcon \fg) \\
        \eeqtag{$\ProcPsi{\varenv}{\procenv}^{\oa+1}(0) = \ProcPsi{\varenv}{\procenv}(\ProcPsi{\varenv}{\procenv}^{\oa}(0))$}
        & \ProcPsi{\varenv}{\procenv}\left(\ProcPsi{\varenv}{\procenv}^{\oa}(0)\right)(\ff \sepcon \fg) \\
        \eeqtag{Definition of $\ProcPsi{\varenv}{\procenv}$}
        & \lambda(\store,\hh)\mydot \wpC{\cc,\varenv',\procenv\subst{\ProcName{P}}{\ProcPsi{\varenv}{\procenv}^{\oa}(0)}}(\ff \sepcon \fg)(\store',\hh) \\
        \ssucceqtag{equation~\ref{eq:recursive:frame:1} (premise is satisfied by I.H.)}
        & \lambda(\store,\hh)\mydot \left(\wpC{\cc,\varenv',\procenv\subst{\ProcName{P}}{\ProcPsi{\varenv}{\procenv}^{\oa}(0)}}(\ff) \sepcon \fg \right)(\store',\hh) \\
        \eeqtag{algebra}
        & \lambda(\store,\hh)\mydot \left(
             \left(\lambda(\store_1,\hh_1)\mydot \wpC{\cc,\varenv',\procenv\subst{\ProcName{P}}{\ProcPsi{\varenv}{\procenv}^{\oa}(0)}}(\ff)(\store_1,\hh_1) \right)
             \sepcon \left(\lambda(\store_2,\hh_2)\mydot \fg(\store_2,\hh_2) \right)
        \right)(\store',\hh) \\
        \eeqtag{algebra}
        &\left(\lambda(\store,\hh_1)\mydot \wpC{\cc,\varenv',\procenv\subst{\ProcName{P}}{\ProcPsi{\varenv}{\procenv}^{\oa}(0)}}(\ff)(\store',\hh_1) \right)
        \sepcon \left(\lambda(\store,\hh_2)\mydot \fg(\store',\hh_2) \right) \\
        \eeqtag{$\fg \in \Eadm$. By Definition~\ref{def:admissible-exp} and equation~\ref{eq:recursion:psi} this means $\fg(\store',\hh_2) = \fg(\store,\hh_2)$}
        &\left(\lambda(\store,\hh_1)\mydot \wpC{\cc,\varenv',\procenv\subst{\ProcName{P}}{\ProcPsi{\varenv}{\procenv}^{\oa}(0)}}(\ff)(\store',\hh_1) \right)
        \sepcon \left(\lambda(\store,\hh_2)\mydot \fg(\store,\hh_2) \right) \\
        \eeqtag{algebra}
        &\left(\lambda(\store,\hh_1)\mydot \wpC{\cc,\varenv',\procenv\subst{\ProcName{P}}{\ProcPsi{\varenv}{\procenv}^{\oa}(0)}}(\ff)(\store',\hh_1) \right)
        \sepcon \fg \\
        \eeqtag{Definition of $\ProcPsi{\varenv}{\procenv}$}
        & \ProcPsi{\varenv}{\procenv}\left(\ProcPsi{\varenv}{\procenv}^{\oa}(0)\right)(\ff) \sepcon \fg \\
        \eeqtag{$\ProcPsi{\varenv}{\procenv}^{\oa+1}(0) = \ProcPsi{\varenv}{\procenv}(\ProcPsi{\varenv}{\procenv}^{\oa}(0))$}
        & \ProcPsi{\varenv}{\procenv}^{\oa+1}(0)(\ff) \sepcon \fg.
\end{align}

\emph{The case $\oa$ limit ordinal.}
\begin{align}
        & \ProcPsi{\varenv}{\procenv}^{\oa}(0)(\ff \sepcon \fg) \\
        \eeqtag{Definition of $\ProcPsi{\varenv}{\procenv}^{\oa}(0)$ for $\oa$ limit ordinal}
        & \sup_{\ob < \oa} \ProcPsi{\varenv}{\procenv}^{\ob}(0)(\ff \sepcon \fg) \\
        \ssucceqtag{I.H.}
        & \sup_{\ob < \oa} \left(\ProcPsi{\varenv}{\procenv}^{\ob}(0)(\ff) \sepcon \fg\right) \\
        \eeqtag{by definition of $\sepcon$ and algebra (analogously to proof of Theorem~\ref{thm:frame-rules})}
        & \left( \sup_{\ob < \oa} \ProcPsi{\varenv}{\procenv}^{\ob}(0)(\ff) \right) \sepcon \fg \\
        \eeqtag{Definition of $\ProcPsi{\varenv}{\procenv}^{\oa}(0)$ for $\oa$ limit ordinal}
        & \ProcPsi{\varenv}{\procenv}^{\oa}(0)(\ff) \sepcon \fg.
\end{align}
\noindent \textbf{Step~\ref{app:recursion:lifting:3}.}
We are now in a position to prove Theorem~\ref{thm:rhpgcl-frame-rules} , i.e. for all $d \in \rhpgcl$, $\varenv \in \VarEnv$, $\procenv \in \ProcEnv$, and $\ff,\fg \in \Eadm$ with
$\Mod{d} \cap \Vars_{\varenv}(\fg) = \emptyset$, we have
\begin{align}
        \wp{d,\varenv,\procenv}{\ff \sepcon \fg} \ssucceq
        \wp{d,\varenv,\procenv}{\ff} \sepcon \fg.
\end{align}

We proceed by induction on the structure of \rhpgcl programs.
More precisely, we show that the claim holds if the initial procedure environment $\procenv$ satisfies equation~\ref{eq:recursive:frame:2} for all procedure calls in program $d \in \rhpgcl$ that have not been declared in $d$.
Since there are no procedure calls without a preceding declaration in a complete \rhpgcl program, this implies the claim.

The base case, i.e. $d = \cc$, is covered by our first step, see equation~\ref{eq:recursive:frame:1}.

Otherwise, if $d = \COMPOSE{\ProcDeclIs{P}{\vec{x}}{\vec{y}}{\cc}}{d'}$, we have
\begin{align}
        & \wp{\COMPOSE{\ProcDeclIs{P}{\vec{x}}{\vec{y}}{\cc}}{d'}, \varenv, \procenv}{\ff \sepcon \fg} \\
        \eeqtag{Table~\ref{table:rec-wp}}
        & \wp{d', \varenv, \procenv'}{\ff \sepcon \fg} \\
        \ssucceqtag{by step 2, see equation~\ref{eq:recursive:frame:2}, we may apply the I.H.}
        & \wp{d', \varenv, \procenv'}{\ff} \sepcon \fg \\
        \eeqtag{by definition of $\procenv'$, see Table~\ref{table:rec-wp}}
        & \wp{\COMPOSE{\ProcDeclIs{P}{\vec{x}}{\vec{y}}{\cc}}{d'}, \varenv, \procenv}{\ff} \sepcon \fg.
\end{align}
Hence, the quantitative frame rule also holds in the presence of recursion.
\end{proof}

\subsection{Incorporating Random Number Generators}\label{app:extensions:random-numbers}

Technically, the statement $\ASSIGNUNIFORM{x}{\ee}{\ee'}$ is syntactic sugar for \hpgcl, because we can write a program with the same behavior.
Intuitively, such a program first generates $\ee' - \ee$ many random bits by flipping coins in a loop. 
The program then checks whether exactly one bit is set to one.
If yes, then the result is the number between $\ee$ and $\ee'$ at that position.
Otherwise, we perform rejection sampling and start all over again.
A corresponding \hpgcl program is found below.

\begin{lstlisting}[escapeinside={(*}{*)}, showstringspaces=false]
r := -1; // stores the final result
l := e' - e; // length
while(r == -1) { // rejection sampling
    x := 0; // stores randomly generated bits
    i := 0;
    while(i < l) { // generate l random bits in x  
        i := i+1;
        x := 2*x;
        { skip } [0.5] { x:= x+1 }
    }
    y := 1;
    j := 0;
    while(y < x && j < l) { // check whether x is a power of two
        j := j +1;
        y := 2 * y;
    }
    if(y == x) {
         r := j; // position found, terminate
    }
}
r := e + j; // fetch actual value
\end{lstlisting}

In particular, notice that $\sk(e') < \sk(e)$ implies that the above program does not terminate, i.e. the weakest preexpectation will be $0$.
Analogously, our $\wp{\ASSIGNUNIFORM{x}{\ee}{\ee'}}{\iverson{\ee' < \ee} \cdot \ff} = 0$ according to our direct definition presented in the paper.

Since $\ASSIGNUNIFORM{x}{\ee}{\ee'}$ is syntactic sugar, all results shown for \hpgcl transfer automatically. 
However, since we did not explicitly show correctness of the above program, let us briefly check that Theorems~\ref{thm:wp:basic},~\ref{thm:batz}, and~\ref{thm:frame-rules} also hold for $\ASSIGNUNIFORM{x}{\ee}{\ee'}$.

\paragraph{Correctness of Theorem~\ref{thm:wp:basic}}
For \emph{linearity}, i.e. Theorem~\ref{thm:wp:basic}.\ref{thm:wp:basic:super-linearity},\ref{thm:wp:basic:linearity}, consider the following:
\begin{align}
        & \wp{\ASSIGNUNIFORM{x}{\ee}{\ee'}}{k \cdot \ff + \fg} \\
        \eeqtag{Definition of $\wpsymbol$}
        & \lambda(\sk,\hh)\mydot \frac{1}{\sk(\ee') - \sk(\ee)} \cdot \sum_{\ell = \sk(\ee)}^{\sk(\ee')} \left(k \cdot \ff + \fg\right)\subst{x}{\ell}(\sk,\hh) \\
        \eeqtag{algebra}
        & k \cdot \lambda(\sk,\hh)\mydot \frac{1}{\sk(\ee') - \sk(\ee)} \cdot \left( 
        \sum_{\ell = \sk(\ee)}^{\sk(\ee')} \ff\subst{x}{\ell}(\sk,\hh)
        + \sum_{\ell = \sk(\ee)}^{\sk(\ee')} \fg\subst{x}{\ell}(\sk,\hh) \right) \\
        \eeqtag{algebra}
        & k \cdot \lambda(\sk,\hh)\mydot \frac{1}{\sk(\ee') - \sk(\ee)} \cdot \sum_{\ell = \sk(\ee)}^{\sk(\ee')} \ff\subst{x}{\ell}(\sk,\hh) \\
        & \qquad + \lambda(\sk,\hh)\mydot \frac{1}{\sk(\ee') - \sk(\ee)} \cdot \sum_{\ell = \sk(\ee)}^{\sk(\ee')} \fg\subst{x}{\ell}(\sk,\hh) \notag \\
        \eeqtag{Definition of $\wpsymbol$}
        & k \cdot \wp{\ASSIGNUNIFORM{x}{\ee}{\ee'}}{\ff} + \wp{\ASSIGNUNIFORM{x}{\ee}{\ee'}}{\fg}.
\end{align}
For \emph{monotonicity}, i.e.\ Theorem~\ref{thm:wp:basic}.\ref{thm:wp:basic:monotonicity}, let $\ff, \fg \in \E$ with $\ff \preceq \fg$. We have
\begin{align}
        & \wp{\ASSIGNUNIFORM{x}{\ee}{\ee'}}{\ff} \\
        \eeqtag{Definition of $\wpsymbol$}
        & \lambda(\sk,\hh)\mydot \frac{1}{\sk(\ee') - \sk(\ee)} \cdot \sum_{\ell = \sk(\ee)}^{\sk(\ee')} \ff \subst{x}{\ell}(\sk,\hh) \\
        \lleqtag{By assumption: $\ff(s,h) < \fg (s,h)$ for all $(s,h) \in \States$}
        & \lambda(\sk,\hh)\mydot \frac{1}{\sk(\ee') - \sk(\ee)} \cdot \sum_{\ell = \sk(\ee)}^{\sk(\ee')} \fg \subst{x}{\ell}(\sk,\hh) \\
        \eeqtag{Definition of $\wpsymbol$}
        & \wp{\ASSIGNUNIFORM{x}{\ee}{\ee'}}{\fg}.
\end{align}
For \emph{continuity}, i.e.\ Theorem~\ref{thm:wp:basic}.\ref{thm:wp:basic:continuity}, let $\ff_1 \preceq \ff_2 \ppreceq \ldots$ be an increasing $\omega$-chain in $\E$. 
The proof relies on Lebesgue's Monotone Convergence Theorem (LMCT); see e.g.\ \cite[p.~567]{schechter:1996}.
\begin{align}   
   & \wp{\ASSIGNUNIFORM{x}{\ee}{\ee'}}{\sup_n \ff_n} \\
   \eeqtag{Definition of $\wpsymbol$}
   & \lambda(\sk,\hh)\mydot \frac{1}{\sk(\ee') - \sk(\ee)} \cdot \sum_{\ell = \sk(\ee)}^{\sk(\ee')} (\sup_n \ff_n) \subst{x}{\ell}(\sk,\hh) \\
   \eeqtag{Definition of substitution and supremum over $\E$}
   & \lambda(\sk,\hh)\mydot \frac{1}{\sk(\ee') - \sk(\ee)} \cdot \sum_{\ell = \sk(\ee)}^{\sk(\ee')} \sup_n \ff_n(\sk\subst{x}{\ell},\hh) \\
   \eeqtag{LMCT}
   & \lambda(\sk,\hh)\mydot \frac{1}{\sk(\ee') - \sk(\ee)} \cdot \sup_n  \sum_{\ell = \sk(\ee)}^{\sk(\ee')} \ff_n(\sk\subst{x}{\ell},\hh) \\
   \eeqtag{Algebra}
   & \lambda(\sk,\hh)\mydot \sup_n \frac{1}{\sk(\ee') - \sk(\ee)} \cdot   \sum_{\ell = \sk(\ee)}^{\sk(\ee')} \ff_n(\sk\subst{x}{\ell},\hh) \\
   \eeqtag{Definition of supremum over $\E$}
   &\sup_n \lambda(\sk,\hh)\mydot  \frac{1}{\sk(\ee') - \sk(\ee)} \cdot   \sum_{\ell = \sk(\ee)}^{\sk(\ee')} \ff_n(\sk\subst{x}{\ell},\hh) \\
   \eeqtag{Definition of substitution}
   &\sup_n \lambda(\sk,\hh)\mydot  \frac{1}{\sk(\ee') - \sk(\ee)} \cdot   \sum_{\ell = \sk(\ee)}^{\sk(\ee')} \ff_n\subst{x}{\ell} (\sk, \hh) \\
   \eeqtag{Definition of $\wpsymbol$}
   &\sup_n \wp{\ASSIGNUNIFORM{x}{\ee}{\ee'}}{ \ff_n}.
\end{align}
%
%
\paragraph{Correctness of Theorem~\ref{thm:batz}}
Let $\ff,\fg \in \E$ such that $\fg$ is pure and $\Vars (\fg) \cap \{x\} = \emptyset$. Then
\begin{align}
   & \wp{\ASSIGNUNIFORM{x}{\ee}{\ee'}}{\ff \cdot \fg} \\
   \eeqtag{Definition of $\wpsymbol$}
   & \lambda(\sk,\hh)\mydot \frac{1}{\sk(\ee') - \sk(\ee)} \cdot \sum_{\ell = \sk(\ee)}^{\sk(\ee')} (\ff \cdot \fg) \subst{x}{\ell}(\sk,\hh) \\
   \eeqtag{By assumption: $x$ does not occur in $\fg$}
   & \lambda(\sk,\hh)\mydot \frac{1}{\sk(\ee') - \sk(\ee)} \cdot \sum_{\ell = \sk(\ee)}^{\sk(\ee')} \ff \subst{x}{\ell}(\sk,\hh) \cdot \fg(\sk, \hh)  \\
   \eeqtag{$\fg(\sk,\hh)$ does not depend on $k$}
   & \lambda(\sk,\hh)\mydot \frac{1}{\sk(\ee') - \sk(\ee)} \cdot \fg(\sk, \hh) \cdot \sum_{\ell = \sk(\ee)}^{\sk(\ee')} \ff \subst{x}{\ell}(\sk,\hh)  \\
   \eeqtag{Algebra}
   &\fg \cdot  \lambda(\sk,\hh)\mydot \frac{1}{\sk(\ee') - \sk(\ee)} \cdot \sum_{\ell = \sk(\ee)}^{\sk(\ee')} \ff \subst{x}{\ell}(\sk,\hh)  \\
   \eeqtag{Definition of $\wpsymbol$}
   &\fg \cdot \wp{\ASSIGNUNIFORM{x}{\ee}{\ee'}}{\ff}.
\end{align}
%
%
\paragraph{Correctness of Theorem~\ref{thm:frame-rules}}
Let $\ff, \fg \in \E$ with $\{ x \} \cap \Vars(\fg) = \emptyset$. We have
\begin{align}
   & \wp{\ASSIGNUNIFORM{x}{\ee}{\ee'}}{\ff \sepcon\fg} \\
   \eeqtag{Definition of $\wpsymbol$}
   & \lambda(\sk,\hh)\mydot \frac{1}{\sk(\ee') - \sk(\ee)} \cdot \sum_{\ell = \sk(\ee)}^{\sk(\ee')} (\ff \sepcon \fg) \subst{x}{\ell}(\sk,\hh) \\
   \eeqtag{$x$ does not occur in $\fg$}
   & \lambda(\sk,\hh)\mydot \frac{1}{\sk(\ee') - \sk(\ee)} \cdot \sum_{\ell = \sk(\ee)}^{\sk(\ee')} (\ff\subst{x}{\ell} \sepcon \fg) (\sk,\hh) \\
   \eeqtag{Algebra}
   & \lambda(\sk,\hh)\mydot \frac{1}{\sk(\ee') - \sk(\ee)} \cdot \big( \sum_{\ell = \sk(\ee)}^{\sk(\ee')} (\ff\subst{x}{\ell} \sepcon \fg) \big) (\sk,\hh) \\
   \ssucceqtag{Subdistributivity of $\sepcon$ over $+$ (Theorem \ref{thm:sep-con-distrib}.\ref{thm:sep-con-distrib:sepcon-over-plus})}
   & \lambda(\sk,\hh)\mydot \frac{1}{\sk(\ee') - \sk(\ee)} \cdot \big( \fg \sepcon \sum_{\ell = \sk(\ee)}^{\sk(\ee')} \ff\subst{x}{\ell} \big) (\sk,\hh) \\
   \eeqtag{Algebra}
   &\fg \sepcon \lambda(\sk,\hh)\mydot \frac{1}{\sk(\ee') - \sk(\ee)} \cdot \sum_{\ell = \sk(\ee)}^{\sk(\ee')} \ff\subst{x}{\ell}  (\sk,\hh) \\
   \eeqtag{Definition of $\wpsymbol$}
   &\fg \sepcon \wp{\ASSIGNUNIFORM{x}{\ee}{\ee'}}{\ff}.
\end{align}

\newpage
\section{Appendix to Section~\ref{sec:case-studies} (Case Studies)}
\label{app:sec:case-studies}
\subsection{Verification of Invariant for Lossy List Reversal}\label{app:case-studies:lossy-reversal}

Recall the invariant proposed in the paper:
\begin{align*}
        \inv \eeq \Len{\lsRev}{0} \sepcon \Ls{\lsHead}{0} + \sfrac{1}{2} \cdot \iverson{\lsHead \neq 0} \cdot \left( \Len{\lsHead}{0} \sepcon \Ls{\lsRev}{0} \right).
\end{align*}
Moreover, let $\cc$ be the loop body in procedure $\csPLossyReversal$. We then have to show that
\begin{align}
        \charwp{\lsHead \neq 0}{\cc}{\Len{\lsRev}{0}}\left(\inv\right) \eeq \iverson{\lsHead \neq 0} \cdot \wp{\cc}{\inv} + \iverson{\lsHead = 0} \cdot \Len{\lsRev}{0} \ppreceq \inv
\end{align}
in order to prove that $\inv$ is indeed an upper invariant, i.e. $\wp{\WHILEDO{\lsHead \neq 0}{\cc}}{\Len{\lsRev}{0}} \preceq \inv$.

\paragraph{Weakest preexpectation of loop body}

We first consider $\wp{\cc}{\ff}$ for an arbitrary expectation $\ff \in \E$:
\begin{align}
        & \wp{\cc}{\ff} \\
        \eeqtag{Let $\cc = \COMPOSE{\cc_1}{\ASSIGN{\lsHead}{\lsTmp}}$, apply Table~\ref{table:wp}}
        &\wp{\cc_1}{\ff\subst{\lsHead}{\lsTmp}} \\
        \eeqtag{Let $\cc_1 = \COMPOSE{\cc_2}{\PCHOICE{\cc_3}{\sfrac{1}{2}}{\FREE{\lsHead}}}$, apply Table~\ref{table:wp}}
        & \wp{\cc_2}{\sfrac{1}{2} \cdot \wp{\cc_3}{\ff\subst{\lsHead}{\lsTmp}} + \sfrac{1}{2} \cdot \wp{\FREE{\lsHead}}{\ff\subst{\lsHead}{\lsTmp}}} \\
        \eeqtag{Table~\ref{table:wp}}
        & \wp{\cc_2}{\sfrac{1}{2} \cdot \wp{\cc_3}{\ff\subst{\lsHead}{\lsTmp}} + \sfrac{1}{2} \cdot \validpointer{\lsHead} \sepcon \ff\subst{\lsHead}{\lsTmp}} \\
        \eeqtag{Let $\cc_3 = \COMPOSE{\HASSIGN{\lsHead}{\lsRev}}{\ASSIGN{\lsRev}{\lsHead}}$, apply Table~\ref{table:wp}}
        & \wp{\cc_2}{\sfrac{1}{2} \cdot \wp{\HASSIGN{\lsHead}{\lsRev}}{\ff\subst{\lsHead}{\lsTmp}\subst{\lsRev}{\lsHead}} + \sfrac{1}{2} \cdot \validpointer{\lsHead} \sepcon \ff\subst{\lsHead}{\lsTmp}} \\
        \eeqtag{Table~\ref{table:wp}}
        & \wp{\cc_2}{\sfrac{1}{2} \cdot \validpointer{\lsHead} \sepcon \left( \singleton{\lsHead}{\lsRev} \sepimp \ff\subst{\lsHead}{\lsTmp}\subst{\lsRev}{\lsHead} \right) + \sfrac{1}{2} \cdot \validpointer{\lsHead} \sepcon \ff\subst{\lsHead}{\lsTmp}} \\
        \eeqtag{$\cc_2 = \ASSIGNH{\lsTmp}{\lsHead}$, apply Table~\ref{table:wp}}
        & \sup_{v \in \Ints} \singleton{\lsHead}{v} \sepcon \big(\singleton{\lsHead}{v} \sepimp \\
        & \qquad \sfrac{1}{2} \cdot \validpointer{\lsHead} \sepcon \left( \singleton{\lsHead}{\lsRev} \sepimp \ff\subst{\lsHead}{\lsTmp}\subst{\lsRev}{\lsHead}\subst{\lsTmp}{v} \right) + \sfrac{1}{2} \cdot \validpointer{\lsHead} \sepcon \ff\subst{\lsHead}{\lsTmp}\subst{\lsTmp}{v} \big) \notag \\
        \eeqtag{Lemma~\ref{lem:wand-reynolds}}
        & \sup_{v \in \Ints} \containsPointer{\lsHead}{v} \cdot \big( \sfrac{1}{2} \cdot \validpointer{\lsHead} \sepcon \left( \singleton{\lsHead}{\lsRev} \sepimp \ff\subst{\lsHead}{\lsTmp}\subst{\lsRev}{\lsHead}\subst{\lsTmp}{v} \right) \\
        & \qquad + \sfrac{1}{2} \cdot \validpointer{\lsHead} \sepcon \ff\subst{\lsHead}{\lsTmp}\subst{\lsTmp}{v} \big) \notag \\
        \eeqtag{$\containsPointer{\lsHead}{v} \cdot \validpointer{\lsHead} = \singleton{\lsHead}{v}$}
        & \sup_{v \in \Ints} \sfrac{1}{2} \cdot \singleton{\lsHead}{v} \sepcon \left( \singleton{\lsHead}{\lsRev} \sepimp \ff\subst{\lsHead}{\lsTmp}\subst{\lsRev}{\lsHead}\subst{\lsTmp}{v} \right) + \sfrac{1}{2} \cdot \singleton{\lsHead}{v} \sepcon \ff\subst{\lsHead}{\lsTmp}\subst{\lsTmp}{v} \\
\end{align}

\paragraph{Invariant verification}
\begin{align}
        & \charwp{\lsHead \neq 0}{\cc}{\Len{\lsRev}{0}}\left(\inv\right) \\
        \eeqtag{Definition of $\charwp{\lsHead\neq0}{\cc}{\Len{\lsRev}{0}}$}
        & \iverson{\lsHead \neq 0} \cdot \wp{\cc}{\inv} + \iverson{\lsHead = 0} \cdot \Len{\lsRev}{0} \\
        \eeqtag{by above computation}
        & \iverson{\lsHead \neq 0} \cdot \sup_{v \in \Ints} \big( \sfrac{1}{2} \cdot \singleton{\lsHead}{v} \sepcon \left( \singleton{\lsHead}{\lsRev} \sepimp \inv\subst{\lsHead}{\lsTmp}\subst{\lsRev}{\lsHead}\subst{\lsTmp}{v} \right) \\
        & \qquad + \sfrac{1}{2} \cdot \singleton{\lsHead}{v} \sepcon \inv\subst{\lsHead}{\lsTmp}\subst{\lsTmp}{v} \notag \\
        & \big) + \iverson{\lsHead = 0} \cdot \Len{\lsRev}{0} \notag \\
        \eeqtag{Definition of $\inv$}
        & \iverson{\lsHead \neq 0} \cdot \sup_{v \in \Ints} \big( \sfrac{1}{2} \cdot \singleton{\lsHead}{v} \sepcon \big( \singleton{\lsHead}{\lsRev} \label{eq:lossy-rev:before-entailment} \\
        & \qquad \quad \sepimp \left( \Len{\lsHead}{0} \sepcon \Ls{v}{0} + \iverson{v \neq 0} \cdot \sfrac{1}{2} \cdot (\Len{v}{0} \sepcon \Ls{\lsHead}{0}) \right) \notag \\
        & \qquad + \sfrac{1}{2} \cdot \singleton{\lsHead}{v} \sepcon \left( \Len{\lsRev}{0} \sepcon \Ls{v}{0} + \iverson{v \neq 0} \cdot \sfrac{1}{2} \cdot (\Len{v}{0} \sepcon \Ls{\lsRev}{0}) \right) \notag \\ 
        & \big) + \iverson{\lsHead = 0} \cdot \Len{\lsRev}{0} \notag \\
\end{align}
It then remains to prove that expectation $\ff$ in equation~\ref{eq:lossy-rev:before-entailment} entails our invariant $\inv$, i.e. $\ff \preceq \inv$.
To this end, we proceed as follows:
\begin{align}
        \eeqtag{continuing from equation~\ref{eq:lossy-rev:before-entailment}, Theorem~\ref{thm:sep-con-distrib}.\ref{thm:sep-con-distrib:sepcon-over-plus-full}}
        & \iverson{\lsHead \neq 0} \cdot \sup_{v \in \Ints} \big( \sfrac{1}{2} \cdot \singleton{\lsHead}{v} \sepcon \big( \singleton{\lsHead}{\lsRev} \\
        & \qquad \quad \sepimp \left( \Len{\lsHead}{0} \sepcon \Ls{v}{0} + \iverson{v \neq 0} \cdot \sfrac{1}{2} \cdot (\Len{v}{0} \sepcon \Ls{\lsHead}{0}) \right) \notag \\
        & \qquad + \sfrac{1}{2} \cdot \singleton{\lsHead}{v} \sepcon \Len{\lsRev}{0} \sepcon \Ls{v}{0} \notag \\
        & \qquad + \sfrac{1}{2} \cdot \singleton{\lsHead}{v} \sepcon (\iverson{v \neq 0} \cdot \sfrac{1}{2} \cdot (\Len{v}{0} \sepcon \Ls{\lsRev}{0})) \notag \\ 
        & \big) + \iverson{\lsHead = 0} \cdot \Len{\lsRev}{0} \notag \\
        \eeqtag{Lemma~\ref{thm:single-pointer-wand-plus}}
        & \iverson{\lsHead \neq 0} \cdot \sup_{v \in \Ints} \big( \\
        & \qquad \sfrac{1}{2} \cdot \singleton{\lsHead}{v} \sepcon \left( \singleton{\lsHead}{\lsRev} \sepimp \left( \Len{\lsHead}{0} \sepcon \Ls{v}{0} \right) \right)  \notag \\
        & \qquad + \sfrac{1}{2} \cdot \singleton{\lsHead}{v} \sepcon \big( \singleton{\lsHead}{\lsRev} \sepimp \left(\iverson{v \neq 0} \cdot \sfrac{1}{2} \cdot (\Len{v}{0} \sepcon \Ls{\lsHead}{0}) \right) \notag \\
        & \qquad + \sfrac{1}{2} \cdot \singleton{\lsHead}{v} \sepcon \Len{\lsRev}{0} \sepcon \Ls{v}{0} \notag \\
        & \qquad + \sfrac{1}{2} \cdot \singleton{\lsHead}{v} \sepcon ( \iverson{v \neq 0} \cdot \sfrac{1}{2} \cdot (\Len{v}{0} \sepcon \Ls{\lsRev}{0})) \notag \\ 
        & \big) + \iverson{\lsHead = 0} \cdot \Len{\lsRev}{0} \notag \\
        \eeqtag{algebra}
        & \sup_{v \in \Ints} \big( \\
        & \qquad 
          \underbrace{
                  \sfrac{1}{2} \cdot \iverson{\lsHead \neq 0} \cdot \singleton{\lsHead}{v} \sepcon \left( \singleton{\lsHead}{\lsRev} \sepimp \left( \Len{\lsHead}{0} \sepcon \Ls{v}{0} \right) \right)
          }_{
                  \eeq \sfrac{1}{2} \cdot \iverson{\lsHead \neq 0} \cdot (\Ls{\lsHead}{0} \sepcon (\Len{\lsRev}{0} + \Ls{\lsRev}{0}))
          }   
          \label{eq:lossy-rev:case-1} \\
        & \qquad +
          \underbrace{
                  \sfrac{1}{2} \cdot \iverson{\lsHead \neq 0} \cdot \singleton{\lsHead}{v} \sepcon \left( \singleton{\lsHead}{\lsRev} \sepimp \iverson{v \neq 0} \cdot \sfrac{1}{2} \cdot (\Len{v}{0} \sepcon \Ls{\lsHead}{0}) \right)
          }_{
                  \ppreceq \sfrac{1}{4} \cdot \iverson{\lsHead \neq 0} \cdot (\Ls{\lsRev}{0} \sepcon (\Len{\lsHead}{0} - \Ls{\lsHead}{0}))
          }
          \label{eq:lossy-rev:case-2} \\
        & \qquad + 
          \underbrace{
                  \sfrac{1}{2} \cdot \iverson{\lsHead \neq 0} \cdot \singleton{\lsHead}{v} \sepcon \Len{\lsRev}{0} \sepcon \Ls{v}{0} 
          }_{
                  \ppreceq \sfrac{1}{2} \cdot \iverson{\lsHead \neq 0} \cdot (\Len{\lsRev}{0} \sepcon \Ls{\lsHead}{0}) 
          }
          \label{eq:lossy-rev:case-3} \\
        & \qquad + 
          \underbrace{
                  \sfrac{1}{2} \cdot \iverson{\lsHead \neq 0} \cdot \singleton{\lsHead}{v} \sepcon ( \iverson{v \neq 0} \cdot \sfrac{1}{2} \cdot (\Len{v}{0} \sepcon \Ls{\lsRev}{0})) 
          }_{
                  \ppreceq \sfrac{1}{4} \cdot \iverson{\lsHead \neq 0} \cdot (\Ls{\lsRev}{0} \sepcon (\Len{\lsHead}{0} - \Ls{\lsHead}{0}))
          }
          \label{eq:lossy-rev:case-4} \\
        & \big) + \iverson{\lsHead = 0} \cdot \Len{\lsRev}{0} \notag \\
        \ppreceqtag{Each of the above properties is considered separately below.}
        & \sup_{v \in \Ints} \big( \\
        & \qquad \sfrac{1}{2} \cdot \iverson{\lsHead \neq 0} \cdot (\Ls{\lsHead}{0} \sepcon (\Len{\lsRev}{0} + \Ls{\lsRev}{0})) \notag \\
        & \qquad + \sfrac{1}{4} \cdot \iverson{\lsHead \neq 0} \cdot (\Ls{\lsRev}{0} \sepcon (\Len{\lsHead}{0} - \Ls{\lsHead}{0})) \notag \\
        & \qquad + \sfrac{1}{2} \cdot \iverson{\lsHead \neq 0} \cdot (\Len{\lsRev}{0} \sepcon \Ls{\lsHead}{0})  \notag \\
        & \qquad + \sfrac{1}{4} \cdot \iverson{\lsHead \neq 0} \cdot (\Ls{\lsRev}{0} \sepcon (\Len{\lsHead}{0} - \Ls{\lsHead}{0})) \notag \\
        & \big) + \iverson{\lsHead = 0} \cdot \Len{\lsRev}{0} \notag \\
        \eeqtag{algebra}
        & \sfrac{1}{2} \cdot \iverson{\lsHead \neq 0} \cdot (\Len{\lsRev}{0} \sepcon \Ls{\lsHead}{0}) \\
        & + \sfrac{1}{2} \cdot \iverson{\lsHead \neq 0} \cdot (\Ls{\lsHead}{0} \sepcon (\Len{\lsRev}{0} + \Ls{\lsRev}{0})) \notag \\
        & + \sfrac{1}{2} \cdot \iverson{\lsHead \neq 0} \cdot (\Ls{\lsRev}{0} \sepcon (\Len{\lsHead}{0} - \Ls{\lsHead}{0})) \notag \\
        & + \iverson{\lsHead = 0} \cdot \Len{\lsRev}{0} \notag \\
        \eeqtag{Lemma~\ref{thm:sepcon-distrib-domain-disjoint},~\ref{thm:ls:domain-disjoint}}
        & \sfrac{1}{2} \cdot \iverson{\lsHead \neq 0} \cdot (\Len{\lsRev}{0} \sepcon \Ls{\lsHead}{0}) \\
        & + \sfrac{1}{2} \cdot \iverson{\lsHead \neq 0} \cdot \left( \Ls{\lsHead}{0} \sepcon \Len{\lsRev}{0} + \Ls{\lsHead}{0} \sepcon \Ls{\lsRev}{0} \right) \notag \\
        & + \sfrac{1}{2} \cdot \iverson{\lsHead \neq 0} \cdot \left( \Ls{\lsRev}{0} \sepcon \Len{\lsHead}{0} - \Ls{\lsRev}{0} \sepcon \Ls{\lsHead}{0} \right) \notag \\
        & + \iverson{\lsHead = 0} \cdot \Len{\lsRev}{0} \notag \\
        \eeqtag{Theorem~\ref{thm:sep-con-monoid}, algebra}
        & \iverson{\lsHead \neq 0} \cdot (\Len{\lsRev}{0} \sepcon \Ls{\lsHead}{0}) \\
        & + \sfrac{1}{2} \cdot \iverson{\lsHead \neq 0} \cdot \left( \Ls{\lsRev}{0} \sepcon \Len{\lsHead}{0} \right) \notag \\
        & \underbrace{- \sfrac{1}{2} \cdot \iverson{\lsHead \neq 0} \cdot \left(  \Ls{\lsRev}{0} \sepcon \Ls{\lsHead}{0} \right) + \sfrac{1}{2} \cdot \iverson{\lsHead \neq 0} \cdot \left( \Ls{\lsHead}{0} \sepcon \Ls{\lsRev}{0} \right)}_{\eeq 0} \notag \\
        & + \iverson{\lsHead = 0} \cdot \Len{\lsRev}{0} \notag \\
        \eeqtag{algebra}
        & \iverson{\lsHead \neq 0} \cdot (\Len{\lsRev}{0} \sepcon \Ls{\lsHead}{0}) + \sfrac{1}{2} \cdot \iverson{\lsHead \neq 0} \cdot \left( \Ls{\lsRev}{0} \sepcon \Len{\lsHead}{0} \right) \\
        & + \iverson{\lsHead = 0} \cdot \Len{\lsRev}{0} \notag \\
        \eeqtag{Theorem~\ref{thm:sep-con-monoid}, algebra}
        & \iverson{\lsHead \neq 0} \cdot (\Len{\lsRev}{0} \sepcon \Ls{\lsHead}{0}) + \sfrac{1}{2} \cdot \iverson{\lsHead \neq 0} \cdot \left( \Ls{\lsRev}{0} \sepcon \Len{\lsHead}{0} \right) \\
        & + \iverson{\lsHead = 0} \cdot \left(\Len{\lsRev}{0} \sepcon \left(\iverson{\lsHead = 0} \cdot \emp\right)\right) \notag \\
        \eeqtag{by definition of $\Ls{\lsHead}{0}$, $\iverson{\lsHead = 0} \cdot \Ls{\lsHead}{0} = \iverson{\lsHead=0} \cdot \emp$}
        & \iverson{\lsHead \neq 0} \cdot (\Len{\lsRev}{0} \sepcon \Ls{\lsHead}{0}) + \sfrac{1}{2} \cdot \iverson{\lsHead \neq 0} \cdot \left( \Ls{\lsRev}{0} \sepcon \Len{\lsHead}{0} \right) \\
        & + \iverson{\lsHead = 0} \cdot (\Len{\lsRev}{0} \sepcon \Ls{\lsHead}{0}) \notag \\
        \eeqtag{algebra}
        & \Len{\lsRev}{0} \sepcon \Ls{\lsHead}{0} + \sfrac{1}{2} \cdot \iverson{\lsHead \neq 0} \cdot \left( \Ls{\lsRev}{0} \sepcon \Len{\lsHead}{0} \right) \\
        \eeqtag{Definition of $\inv$}
        & \inv.
\end{align}
To conclude the proof, we verify the relationships used in equations~\ref{eq:lossy-rev:case-1}--\ref{eq:lossy-rev:case-4}.

\paragraph{Verification of equation~\ref{eq:lossy-rev:case-1}}
\begin{align}
        & \sfrac{1}{2} \cdot \iverson{\lsHead \neq 0} \cdot \singleton{\lsHead}{v} \sepcon \left( \singleton{\lsHead}{\lsRev} \sepimp \Len{\lsHead}{0} \sepcon \Ls{v}{0} \right) \\
        \eeqtag{Lemma~\ref{thm:single-pointer-wand:sepcon}, Theorem~\ref{thm:sep-con-monoid}}
        & \sfrac{1}{2} \cdot \iverson{\lsHead \neq 0} \cdot \Ls{v}{0} \sepcon \underbrace{\singleton{\lsHead}{v} \sepcon \left( \singleton{\lsHead}{\lsRev} \sepimp \Len{\lsHead}{0} \right)}_{\eeq \singleton{\lsHead}{v} \sepcon (\iverson{\lsHead \neq 0} \cdot(\Ls{\lsRev}{0}+\Len{\lsRev}{0}))}  \\
        \eeqtag{Lemma~\ref{thm:list-length:sepimp-simple}}
        & \sfrac{1}{2} \cdot \underbrace{\iverson{\lsHead \neq 0} \cdot \Ls{v}{0} \sepcon \singleton{\lsHead}{v}}_{\preceq \Ls{\lsHead}{0}} \sepcon (\iverson{\lsHead \neq 0} \cdot(\Ls{\lsRev}{0}+\Len{\lsRev}{0}))  \\
        \ppreceqtag{Definition of $\Ls{\lsHead}{0}$}
        & \sfrac{1}{2} \cdot \Ls{\lsHead}{0} \sepcon (\iverson{\lsHead \neq 0} \cdot(\Ls{\lsRev}{0}+\Len{\lsRev}{0}))  \\
        \eeqtag{algebra, Theorem~\ref{thm:sep-con-algebra-pure}}
        & \sfrac{1}{2} \cdot \iverson{\lsHead \neq 0} \cdot (\Ls{\lsHead}{0} \sepcon (\Len{\lsRev}{0} + \Ls{\lsRev}{0})).
\end{align}

\paragraph{Verification of equation~\ref{eq:lossy-rev:case-2}}
\begin{align}
        & \sfrac{1}{2} \cdot \iverson{\lsHead \neq 0} \cdot \singleton{\lsHead}{v} \sepcon \left( \singleton{\lsHead}{\lsRev} \sepimp \iverson{v \neq 0} \cdot \sfrac{1}{2} \cdot (\Len{v}{0} \sepcon \Ls{\lsHead}{0}) \right) \\
        \eeqtag{Theorem~\ref{thm:sep-con-algebra-pure}, Lemma~\ref{thm:single-pointer-wand:pure}}
        & \sfrac{1}{2} \cdot \iverson{\lsHead \neq 0} \cdot \singleton{\lsHead}{v} \sepcon \left( \singleton{\lsHead}{\lsRev} \sepimp ((\iverson{v \neq 0} \cdot \sfrac{1}{2} \cdot \Len{v}{0}) \sepcon (\iverson{\lsHead \neq 0} \cdot \Ls{\lsHead}{0})) \right) \\
        \eeqtag{Lemma~\ref{thm:single-pointer-wand:sepcon}, Theorem~\ref{thm:sep-con-monoid}}
        & \sfrac{1}{2} \cdot \iverson{\lsHead \neq 0} \cdot \singleton{\lsHead}{v} \sepcon (\iverson{v \neq 0} \cdot \sfrac{1}{2} \cdot \Len{v}{0}) 
          \sepcon \left( \singleton{\lsHead}{\lsRev} \sepimp \iverson{\lsHead \neq 0} \cdot \Ls{\lsHead}{0} \right) \\
        \eeqtag{algebra}
        & \sfrac{1}{4} \cdot \iverson{\lsHead \neq 0} \cdot \iverson{v \neq 0} \cdot \singleton{\lsHead}{v} \sepcon \Len{v}{0} 
          \sepcon \left( \singleton{\lsHead}{\lsRev} \sepimp \iverson{\lsHead \neq 0} \cdot \Ls{\lsHead}{0} \right) \\
        \eeqtag{Definition of $\Ls{\lsHead}{0}$}
        & \sfrac{1}{4} \cdot \iverson{\lsHead \neq 0} \cdot \iverson{v \neq 0} \cdot \singleton{\lsHead}{v} \sepcon \Len{v}{0} \\
        & \qquad \sepcon \left( \singleton{\lsHead}{\lsRev} \sepimp \iverson{\lsHead \neq 0} \cdot \sup_{\za \in \Ints} \singleton{\lsHead}{\za} \sepcon \Ls{\za}{0} \right) \notag \\
        \eeqtag{Lemma~\ref{thm:misc:sepimp-contains}, algebra}
        & \sfrac{1}{4} \cdot \iverson{\lsHead \neq 0} \cdot \iverson{v \neq 0} \cdot \singleton{\lsHead}{v} \sepcon \Len{v}{0} \\
        & \qquad \sepcon \left( \singleton{\lsHead}{\lsRev} \sepimp \singleton{\lsHead}{\lsRev} \sepcon \left(\iverson{\lsHead \neq 0} \cdot \Ls{\lsRev}{0} \right)\right) \notag \\
        \eeqtag{Lemma~\ref{thm:list-length:sepimp-simple}, algebra using $\singleton{\lsHead}{v} \sepcon \ldots$}
        & \sfrac{1}{4} \cdot \iverson{\lsHead \neq 0} \cdot \iverson{v \neq 0} \cdot \underbrace{\singleton{\lsHead}{v} \sepcon \Len{v}{0}}_{\preceq \iverson{\lsHead \neq 0} \cdot (\Len{\lsHead}{0} - \Ls{\lsHead}{0})}
        \sepcon \left(\iverson{\lsHead \neq 0} \cdot \Ls{\lsRev}{0} \right) \\
        \ppreceqtag{Lemma~\ref{thm:list-length:sepcon}}
        & \sfrac{1}{4} \cdot \iverson{\lsHead \neq 0} \cdot \iverson{v \neq 0} \cdot (\iverson{\lsHead \neq 0} \cdot (\Len{\lsHead}{0} - \Ls{\lsHead}{0}))
        \sepcon \left(\iverson{\lsHead \neq 0} \cdot \Ls{\lsRev}{0} \right) \\
        \ppreceqtag{$\iverson{\lsHead \neq 0}, \iverson{v \neq 0} \preceq 1$}
        & \sfrac{1}{4} \cdot \iverson{\lsHead \neq 0} \cdot (\Len{\lsHead}{0} - \Ls{\lsHead}{0}) \sepcon \Ls{\lsRev}{0} \\
        \eeqtag{algebra}
        & \sfrac{1}{4} \cdot \iverson{\lsHead \neq 0} \cdot (\Ls{\lsRev}{0} \sepcon (\Len{\lsHead}{0} - \Ls{\lsHead}{0})).
\end{align}

\paragraph{Verification of equation~\ref{eq:lossy-rev:case-3}}
\begin{align}
        & \sfrac{1}{2} \cdot \iverson{\lsHead \neq 0} \cdot \singleton{\lsHead}{v} \sepcon \Len{\lsRev}{0} \sepcon \Ls{v}{0} \\
        \eeqtag{$\iverson{\lsHead \neq 0} = \iverson{\lsHead \neq 0} \cdot \iverson{\lsHead \neq 0}$, algebra (Theorem~\ref{thm:sep-con-monoid}, Theorem~\ref{thm:sep-con-algebra-pure})}
        & \sfrac{1}{2} \cdot \iverson{\lsHead \neq 0} \cdot \Len{\lsRev}{0} \sepcon (\iverson{\lsHead \neq 0} \cdot \singleton{\lsHead}{v} \sepcon \Ls{v}{0}) \\
        \eeqtag{$\iverson{\lsHead \neq 0} \cdot \iverson{\lsHead = 0} = 0$}
        & \sfrac{1}{2} \cdot \iverson{\lsHead \neq 0} \cdot \Len{\lsRev}{0} \sepcon (\iverson{\lsHead = 0} \cdot \emp + \iverson{\lsHead \neq 0} \cdot \singleton{\lsHead}{v} \sepcon \Ls{v}{0}) \\
        \ppreceqtag{Definition of $\Ls{\lsHead}{0}$}
        & \sfrac{1}{2} \cdot \iverson{\lsHead \neq 0} \cdot (\Len{\lsRev}{0} \sepcon \Ls{\lsHead}{0}).
\end{align}

\paragraph{Verification of equation~\ref{eq:lossy-rev:case-4}}
\begin{align}
        & \sfrac{1}{2} \cdot \iverson{\lsHead \neq 0} \cdot \singleton{\lsHead}{v} \sepcon ( \iverson{v \neq 0} \cdot \sfrac{1}{2} \cdot (\Len{v}{0} \sepcon \Ls{\lsRev}{0})) \\
        \eeqtag{Theorem~\ref{thm:sep-con-algebra-pure}, algebra}
        & \sfrac{1}{4} \cdot \underbrace{\iverson{v \neq 0}}_{\preceq 1} \cdot \iverson{\lsHead \neq 0} \cdot (\singleton{\lsHead}{v} \sepcon \Len{v}{0} \sepcon \Ls{\lsRev}{0}) \\
        \ppreceqtag{algebra}
        & \sfrac{1}{4} \cdot \iverson{\lsHead \neq 0} \cdot (\underbrace{\singleton{\lsHead}{v} \sepcon \Len{v}{0}}_{\preceq \iverson{\lsHead \neq 0} \cdot (\Len{\lsHead}{0}-\Ls{\lsHead}{0})} \sepcon \Ls{\lsRev}{0}) \\
        \ppreceqtag{Lemma~\ref{thm:list-length:sepcon}}
        & \sfrac{1}{4} \cdot \iverson{\lsHead \neq 0} \cdot ((\iverson{\lsHead \neq 0} \cdot (\Len{\lsHead}{0}-\Ls{\lsHead}{0})) \sepcon \Ls{\lsRev}{0}) \\
        \eeqtag{algebra}
        & \sfrac{1}{4} \cdot \iverson{\lsHead \neq 0} \cdot ((\Len{\lsHead}{0}-\Ls{\lsHead}{0}) \sepcon \Ls{\lsRev}{0}) \\
        \eeqtag{algebra}
        & \sfrac{1}{4} \cdot \iverson{\lsHead \neq 0} \cdot (\Ls{\lsRev}{0} \sepcon (\Len{\lsHead}{0} - \Ls{\lsHead}{0})).
\end{align}

\subsection{Probability of Successful Garbage Collection}\label{app:tree-delete}
We use of the following lemma:
\begin{lemma}\label{app:lem:tree-combine}
Let $\ff,\fg \in \E$ and $\pp \in \Rats$. Then
\begin{align*}
        \left(\ff \cdot \pp^{\heapSize}\right) \sepcon \left(\fg \cdot \pp^{\heapSize}\right) 
        \eeq 
        \pp^{\heapSize} \cdot \left(\ff \sepcon \fg\right).
\end{align*}
\end{lemma}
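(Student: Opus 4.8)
The plan is to prove the identity pointwise: fix an arbitrary state $(\sk,\hh)$ and show that the two sides agree there. Everything reduces to the elementary observation that heap size is additive under disjoint union, i.e.\ $\hh = \hh_1 \sepcon \hh_2$ forces $|\dom{\hh}| = |\dom{\hh_1}| + |\dom{\hh_2}|$, and hence $\pp^{|\dom{\hh}|} = \pp^{|\dom{\hh_1}|} \cdot \pp^{|\dom{\hh_2}|}$. This is exactly what lets the ``size weighting'' factor $\pp^{\heapSize}$ migrate freely across the separating conjunction, since after the split the two copies of $\pp^{\heapSize}$ recombine into a single factor that no longer depends on how $\hh$ was partitioned.

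Concretely, I would start from the left-hand side, unfold the definition of $\sepcon$ and of pointwise multiplication, use the additivity above to collect the two $\pp$-powers, pull the now partition-independent constant $\pp^{|\dom{\hh}|}$ out of the maximum, and finally re-fold the definitions of $\sepcon$, $\heapSize$ and $\cdot$:
\begin{align*}
  \left(\left(\ff \cdot \pp^{\heapSize}\right) \sepcon \left(\fg \cdot \pp^{\heapSize}\right)\right)(\sk,\hh)
  &= \max_{\hh_1,\hh_2} \setcomp{ \ff(\sk,\hh_1) \cdot \pp^{|\dom{\hh_1}|} \cdot \fg(\sk,\hh_2) \cdot \pp^{|\dom{\hh_2}|} }{ \hh = \hh_1 \sepcon \hh_2 } \\
  &= \pp^{|\dom{\hh}|} \cdot \max_{\hh_1,\hh_2} \setcomp{ \ff(\sk,\hh_1) \cdot \fg(\sk,\hh_2) }{ \hh = \hh_1 \sepcon \hh_2 } \\
  &= \left(\pp^{\heapSize} \cdot \left(\ff \sepcon \fg\right)\right)(\sk,\hh).
\end{align*}
The middle equality is the one carrying the actual content, namely that scaling by the non-negative constant $\pp^{|\dom{\hh}|}$ distributes over the maximum.

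The only steps deserving a word of care --- and they are exactly where I expect a careful reader to look --- are: (a) that $|\dom{\cdot}|$ is additive under $\sepcon$, which is immediate from disjointness of the domains of $\hh_1$ and $\hh_2$; and (b) that scaling commutes with $\max$ in the middle step, which requires $\pp^{|\dom{\hh}|} \geq 0$. Since $\pp$ is a probability (in our application it arises as $1-p$ with $p \in [0,1]$), $\pp^{|\dom{\hh}|}$ is a non-negative real for the fixed finite heap $\hh$, so this is fine; had we wished to state the lemma for an arbitrary $\pp \in \Rats$ we would have to restrict to $\pp \geq 0$, since a negative constant would turn the $\max$ into a $\min$. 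No continuity or $\infty$-subtleties arise because $\heapSize$ always takes finite natural-number values, so $\pp^{\heapSize}$ is everywhere finite. Thus I do not anticipate a real obstacle here: this is a short, self-contained computation whose entire substance is the additivity of heap size under the disjoint union.
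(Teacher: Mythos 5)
Your proof is correct and is essentially identical to the paper's own argument: both proceed pointwise, unfold $\sepcon$, use additivity of $|\dom{\cdot}|$ under disjoint union to merge the two $\pp$-powers, pull the partition-independent factor $\pp^{\heapSize(\sk,\hh)}$ out of the maximum, and refold. Your side remark that pulling a constant out of the $\max$ needs $\pp \geq 0$ (so the lemma as stated with $\pp \in \Rats$ implicitly assumes non-negativity, which holds in the paper's application where $\pp = 1-p$) is a fair observation, but otherwise there is nothing to add.
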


\begin{proof}
Let $(\sk,\hh)$ be a stack-heap pair. Then
\begin{align}
        & \left(\left(\ff \cdot \pp^{\heapSize}\right) \sepcon \left(\fg \cdot \pp^{\heapSize}\right)\right)(\sk,\hh) \\ 
        \eeqtag{Definition of $\sepcon$}
        & \max_{\hh_1,\hh_2} \left\{ \left(\ff \cdot \pp^{\heapSize}\right)(\sk,\hh_1) \cdot \left(\fg \cdot \pp^{\heapSize}\right)(\sk,\hh_2) ~|~ \hh = \hh_1 \sepcon \hh_2 \right\} \\ 
        \eeqtag{algebra}
        & \max_{\hh_1,\hh_2} \left\{ \ff(\sk,\hh_1) \cdot \pp^{\heapSize(\sk,\hh_1)} \cdot \fg(\sk,\hh_2) \cdot \pp^{\heapSize(\sk,\hh_2)} ~|~ \hh = \hh_1 \sepcon \hh_2 \right\} \\ 
        \eeqtag{algebra}
        & \max_{\hh_1,\hh_2} \left\{ \ff(\sk,\hh_1) \cdot \fg(\sk,\hh_2) \cdot \pp^{\heapSize(\sk,\hh_1)+\heapSize(\sk,\hh_2)} ~|~ \hh = \hh_1 \sepcon \hh_2 \right\} \\ 
        \eeqtag{algebra}
        & \max_{\hh_1,\hh_2} \left\{ \ff(\sk,\hh_1) \cdot \fg(\sk,\hh_2) \cdot \pp^{\heapSize(\sk,\hh)} ~|~ \hh = \hh_1 \sepcon \hh_2 \right\} \\ 
        \eeqtag{Definition of $\sepcon$}
        & \pp^{\heapSize}(\sk,\hh) \cdot \max_{\hh_1,\hh_2} \left\{ \ff(\sk,\hh_1) \cdot \fg(\sk,\hh_2) ~|~ \hh = \hh_1 \sepcon \hh_2 \right\} \\ 
        \eeqtag{algebra}
        & \pp^{\heapSize}(\sk,\hh) \cdot \left(\ff \sepcon \fg\right)(\sk,\hh) \\
        \eeqtag{algebra}
        & \left(\pp^{\heapSize} \cdot \left(\ff \sepcon \fg\right)\right)(\sk,\hh).
\end{align}

Now, recall from Figure~\ref{fig:faulty-gc}, p.~\pageref{fig:faulty-gc}, the procedure $\ProcName{delete}$. 
Additionally, we write $\textit{body}$ to refer to the procedure's body and $\textit{block}$ to refer to the program contained in the right branch of the probabilistic choice, respectively.

We are confronted with the following proof obligation:
Assuming 
\begin{align}
        \forall y \colon \wlp{\ProcCall{delete}{y}{}}{\emp} 
        \ssucceq
        \underbrace{\Tree{y} \cdot \left(1-\pp\right)^{\heapSize/2}}_{=: t(y)}.
        \label{eq:proof:delete1}
\end{align}
we have to show that 
\begin{align}
    \wlp{\textit{body}}{\emp} 
    \ssucceq 
    \underbrace{\Tree{x} \cdot \left(1-\pp\right)^{\heapSize/2}}_{= t(x)}.
\end{align}

We proceed as follows:

\begin{align}
        & \wlp{\textit{body}}{\emp} \\
        \eeqtag{Table~\ref{table:wp}} 
        & \wlp{\ITE{x \neq \nil}{\PCHOICE{\SKIP}{\pp}{\textit{block}}}{\SKIP} }{\emp} \\
        \eeqtag{Table~\ref{table:wp}} 
        & \iverson{x \neq \nil} \cdot \wlp{\PCHOICE{\SKIP}{\pp}{\textit{block}}}{\emp} + \iverson{x=0} \cdot \emp \\
        \eeqtag{Table~\ref{table:wp}} 
        & \iverson{x \neq \nil} \cdot \left( \pp \cdot \emp + (1-\pp) \cdot \wlp{\textit{block}}{\emp} \right) + \iverson{x=0} \cdot \emp \\
        \eeqtag{algebra}
        & \iverson{x \neq \nil} \cdot \left( \pp \cdot \emp \right) + \iverson{x \neq \nil} \cdot \left( (1-\pp) \cdot \wlp{\textit{block}}{\emp} \right) + \iverson{x=0} \cdot \emp \\
        \ssucceqtag{$\iverson{x \neq \nil} \geq 0$}
        & \iverson{x \neq \nil} \cdot \left( (1-\pp) \cdot \wlp{\textit{block}}{\emp} \right) \\
        \eeqtag{Let $\textit{block} = \COMPOSE{\cc_1}{\COMPOSE{\FREE{x}}{\FREE{x+1}}}$} 
        & \iverson{x \neq \nil} \cdot (1-\pp) \cdot \wlp{\cc_1\SEMI\FREE{x}\SEMI\FREE{x+1}}{\emp} + \iverson{x=0} \cdot \emp \\
        \eeqtag{Table~\ref{table:wp}} 
        & \iverson{x \neq \nil} \cdot (1-\pp) \cdot \wlp{\cc_1}{\singleton{x}{-,-} \sepcon \emp} + \iverson{x=0} \cdot \emp  \\
        \eeqtag{Let $\cc_1 = \COMPOSE{\cc_2}{\ProcCall{delete}{r}{}}$} 
        & \iverson{x \neq \nil} \cdot (1-\pp) \cdot \wlp{\cc_2\SEMI\ProcCall{delete}{r}{}}{\singleton{x}{-,-} \sepcon \emp} + \iverson{x=0} \cdot \emp  \\
        \ssucceqtag{Frame rule (Theorem~\ref{thm:frame-rules}) and (\ref{eq:proof:delete1})} 
        & \iverson{x \neq \nil} \cdot (1-\pp) \cdot \wlp{\cc_2}{\singleton{x}{-,-} \sepcon t(r)} + \iverson{x=0} \cdot \emp  \\
        \eeqtag{Theorem~\ref{thm:sep-con-monoid}.\ref{thm:sep-con-monoid:neut}, let $\cc_2 = \COMPOSE{\cc_3}{\ProcCall{delete}{l}{}}$} 
        & \iverson{x \neq \nil} \cdot (1-\pp) \cdot \wlp{\cc_3\SEMI\ProcCall{delete}{l}{}}{\singleton{x}{-,-} \sepcon t(r) \sepcon \emp} \\
        & \qquad + \iverson{x=0} \cdot \emp \notag \\
        \ssucceqtag{Frame rule (Theorem~\ref{thm:frame-rules}) and (\ref{eq:proof:delete1})} 
        & \iverson{x \neq \nil} \cdot (1-\pp) \cdot \wlp{\cc_3}{\singleton{x}{-,-} \sepcon t(r) \sepcon t(l)} + \iverson{x=0} \cdot \emp  \\
        \eeqtag{Theorem~\ref{thm:sep-con-monoid}.\ref{thm:sep-con-monoid:ass}} 
        & \iverson{x \neq \nil} \cdot (1-\pp) \cdot \wlp{\cc_3}{\singleton{x}{-,-} \sepcon (t(r) \sepcon t(l))} + \iverson{x=0} \cdot \emp  \\
        \eeqtag{Lemma~\ref{app:lem:tree-combine}} 
        & \iverson{x \neq \nil} \cdot (1-\pp) \\
        & \quad \cdot \wlp{\cc_3}{\singleton{x}{-,-} \sepcon \left((1-\pp)^{\heapSize/2} \cdot (\Tree{r} \sepcon \Tree{l})\right)} \notag \\
        & \qquad + \iverson{x=0} \cdot \emp \notag \\
        \eeqtag{$\cc_3 = \COMPOSE{\ASSIGNH{l}{x}}{\ASSIGNH{r}{x+1}}$}
        & \iverson{x \neq \nil} \cdot (1-\pp) \\
        & \quad \cdot \wlp{\ASSIGNH{l}{x}\SEMI\ASSIGNH{r}{x+1}}{\singleton{x}{-,-} \sepcon \left((1-\pp)^{\heapSize/2} \cdot (\Tree{r} \sepcon \Tree{l})\right)} \notag \\
        & \qquad + \iverson{x=0} \cdot \emp \notag \\
        \eeqtag{Table~\ref{table:wp}, Lemma~\ref{lem:wand-reynolds}} 
        & \iverson{x \neq \nil} \cdot (1-\pp) \\
        & \quad \cdot \wlp{\ASSIGNH{l}{x}}{\sup_{v\in\Ints} \containsPointer{x+1}{v} \cdot \left(\singleton{x}{-,-} \sepcon \left((1-\pp)^{\heapSize/2} \cdot (\Tree{v} \sepcon \Tree{l})\right)\right)} \notag \\
        & \qquad + \iverson{x=0} \cdot \emp \notag \\
        \eeqtag{Table~\ref{table:wp}, Lemma~\ref{lem:wand-reynolds}} 
        & \iverson{x \neq \nil} \cdot (1-\pp) \\
        & \quad \cdot \sup_{u,v\in\Ints} \containsPointer{x}{u,v} \cdot \left(\singleton{x}{-,-} \sepcon \left((1-\pp)^{\heapSize/2} \cdot (\Tree{v} \sepcon \Tree{u})\right)\right) + \iverson{x=0} \cdot \emp \notag \\
        \eeqtag{$\containsPointer{a}{b} \cdot (\validpointer{a} \sepcon \ff) = \singleton{a}{b} \sepcon \ff$} 
        & \iverson{x \neq \nil} \cdot (1-\pp) \\
        & \quad \cdot \sup_{u,v\in\Ints} \singleton{x}{u,v} \sepcon \left((1-\pp)^{\heapSize/2} \cdot (\Tree{v} \sepcon \Tree{u})\right) + \iverson{x=0} \cdot \emp \notag \\
        \eeqtag{Theorem~\ref{thm:sep-con-algebra-pure}.3}
        & \iverson{x \neq \nil} \\
        & \quad \cdot \sup_{u,v\in\Ints}\left((1-\pp) \cdot \singleton{x}{u,v}\right) \sepcon \left((1-\pp)^{\heapSize/2} \cdot (\Tree{v} \sepcon \Tree{u})\right) + \iverson{x=0} \cdot \emp \notag \\
        \eeqtag{$\singleton{x}{u,v} \cdot (1-\pp) = \singleton{x}{u,v} \cdot (1-\pp)^{\heapSize/2}$} 
        & \iverson{x \neq \nil} \\
        & \quad \cdot \sup_{u,v\in\Ints} \left((1-\pp)^{\heapSize/2} \cdot \singleton{x}{u,v}\right) \sepcon \left((1-\pp)^{\heapSize/2} \cdot (\Tree{v} \sepcon \Tree{u})\right) + \iverson{x=0} \cdot \emp \notag \\
        \eeqtag{Lemma~\ref{app:lem:tree-combine}} 
        & \iverson{x \neq \nil} \\
        & \quad \cdot \sup_{u,v\in\Ints} \left((1-\pp)^{\heapSize/2} \cdot (\singleton{x}{u,v} \sepcon \Tree{v} \sepcon \Tree{u})\right) + \iverson{x=0} \cdot \emp \notag \\
        \eeqtag{algebra}
        & \iverson{x \neq \nil} \cdot (1-\pp)^{\heapSize/2} \cdot \sup_{u,v\in\Ints} \singleton{x}{u,v} \sepcon \Tree{v} \sepcon \Tree{u}) + \iverson{x=0} \cdot \emp \\  
        \eeqtag{algebra}
        & (1-\pp)^{\heapSize/2} \cdot \left(\iverson{x \neq \nil} \cdot \sup_{u,v\in\Ints} \singleton{x}{u,v} \sepcon \Tree{v} \sepcon \Tree{u}\right) + \iverson{x=0} \cdot \emp \\  
        \eeqtag{$\singleton{x}{u,v}$ implies $\iverson{x \neq nil} = 1$}   
        & (1-\pp)^{\heapSize/2} \cdot \left(\sup_{u,v\in\Ints} \singleton{x}{u,v} \sepcon \Tree{v} \sepcon \Tree{u}\right) + \iverson{x=0} \cdot \emp \\
        \eeqtag{algebra}
        \eeq & (1-\pp)^{\heapSize/2} \cdot \left(\sup_{u,v\in\Ints} \singleton{x}{u,v} \sepcon \Tree{v} \sepcon \Tree{u}\right) + \iverson{x=0} \cdot \emp \cdot \underbrace{(1-\pp)^{\heapSize/2}}_{\text{=1 due to $\emp$}} \\
        \eeqtag{algebra}
        \eeq & (1-\pp)^{\heapSize/2} \cdot \left(\sup_{u,v\in\Ints} \singleton{x}{u,v} \sepcon \Tree{v} \sepcon \Tree{u} + \iverson{x=0} \cdot \emp \right) \\
        \eeqtag{Definition of $\Tree{x}$}
        \eeq & (1-\pp)^{\heapSize/2} \cdot \Tree{x}. 
\end{align}
\end{proof}

\subsection{Invariant Verification for Section~\ref{sec:list-length}}\label{app:list-length}
Recall the definition of our proposed invariant $\inv$:
\begin{align}
  \inv \eeq \Len{x}{\nil} + \iverson{c = 1}~.
\end{align}
To show that $\inv$ is an (upper) invariant of the loop in program $c_{\textrm{list}}$ with respect to postexpectation $\Sll{x}{\nil} \cdot \heapSize$, we have to prove that
\begin{align}
        &\charwp{\cc=1}{\textrm{loopBody}}{\inv} \\
        \eeq &\iverson{\cc \neq 1} \cdot \Len{x}{\nil} + \iverson{\cc = 1} \cdot \wp{\textrm{loopBody}}{\Len{x}{\nil} + \iverson{\cc = 1}} \\
        {}~\overset{!}{\preceq}~{}& \inv
\end{align}
where $\textrm{loopBody}$ denotes the loop body of $c_{\textrm{list}}$, i.e.
\begin{align}
    \textrm{loopBody} \eeq \PCHOICE{\ASSIGN{\cc}{0}}{0.5}{\ASSIGN{\cc}{1}\SEMI\ALLOC{x}{x}}~.
\end{align}
Since $\iverson{\cc \neq 1} \cdot \iverson{\cc = 1} = 0$, we subdivide our proof obligation into
\begin{enumerate}
   \item $\iverson{\cc \neq 1} \cdot \Len{x}{\nil} \preceq \inv$, and \label{eqn:list-length-to-show-1}
   \item $\iverson{\cc = 1} \cdot \wp{\textrm{loopBody}}{\Len{x}{\nil} + \iverson{\cc = 1}} \preceq \inv$. \label{eqn:list-length-to-show-2}
\end{enumerate}
The validity of (\ref{eqn:list-length-to-show-1}) is immediate since 
\begin{align}
   &\iverson{\cc \neq 1} \cdot \Len{x}{\nil} \\
   \ppreceqtag{$\iverson{\cc \neq 1} \preceq 1$}
   &\Len{x}{\nil} \\
   \ppreceqtag{$0 \preceq \iverson{c = 1}$}
   &\Len{x}{\nil} + \iverson{c = 1}.
\end{align}
For the validity of (\ref{eqn:list-length-to-show-2}), we first compute
\begin{align}
        & \wp{\ALLOC{x}{x}}{\Len{x}{0}} 
        \label{eqn:increase-list-length} \\
        \eeqtag{Definition of $\wpsymbol$}
        & \inf_{v \in \AVAILLOC{x}} \singleton{v}{x} \sepimp \Len{v}{0} \\
        \eeqtag{Definition of $\Lensymbol$}
        & \inf_{v \in \AVAILLOC{x}} \singleton{v}{x} \sepimp \iverson{v \neq 0} \cdot \sup_{\alpha} \singleton{v}{\alpha} \sepcon (\Ls{\alpha}{0} + \Len{\alpha}{0}) \\
        \eeqtag{Lemma~\ref{thm:misc:sepimp-contains}} 
        & \inf_{v \in \AVAILLOC{x}} \singleton{v}{x} \sepimp \containsPointer{v}{x} \cdot \iverson{v \neq 0} \cdot \sup_{\alpha} \singleton{v}{\alpha} \sepcon (\Ls{\alpha}{0} + \Len{\alpha}{0}) \\
        \eeqtag{algebra}
        & \inf_{v \in \AVAILLOC{x}} \singleton{v}{x} \sepimp \underbrace{\iverson{v \neq 0}}_{\eeq 1} \cdot \singleton{v}{x} \sepcon (\Ls{x}{0} + \Len{x}{0}) \\
        \eeqtag{algebra}
        & \inf_{v \in \AVAILLOC{x}} \singleton{v}{x} \sepimp \singleton{v}{x} \sepcon (\Ls{x}{0} + \Len{x}{0}) \\
        \eeqtag{Lemma~\ref{thm:misc:sepimp-sepcon}}
        & \inf_{v \in \AVAILLOC{x}} \containsPointer{v}{-} \cdot \infty + (1-\containsPointer{v}{-}) \cdot (\Ls{x}{0} + \Len{x}{0}) \\
        \eeqtag{Lemma \ref{lem:inf-over-addresses}}
        & \Ls{x}{0} + \Len{x}{0}. 
\end{align}
Using this result, we proceed as follows:
\begin{align}
   &\iverson{\cc = 1} \cdot \wp{\textrm{loopBody}}{\Len{x}{\nil} + \iverson{\cc = 1}} \\
   \eeqtag{Theorem \ref{thm:wp:basic} (\ref{thm:wp:basic:linearity})}
   &\iverson{\cc = 1} \cdot \big(  \wp{\textrm{loopBody}}{\Len{x}{\nil}} + \wp{\textrm{loopBody}}{\iverson{\cc = 1}} \big) \\
   \eeqtag{Definition of $\textrm{loopBody}$}
   &\iverson{\cc = 1} \cdot \big(  \wp{\PCHOICE{\ASSIGN{\cc}{0}}{0.5}{\ASSIGN{\cc}{1}\SEMI\ALLOC{x}{x}}}{\Len{x}{\nil}} \\
   &\qquad \qquad + \wp{\PCHOICE{\ASSIGN{\cc}{0}}{0.5}{\ASSIGN{\cc}{1}\SEMI\ALLOC{x}{x}}}{\iverson{\cc = 1}} \big) \notag \\
   \eeqtag{Table \ref{table:wp}}
   &\iverson{\cc = 1} \cdot \big(  \wp{\PCHOICE{\ASSIGN{\cc}{0}}{0.5}{\ASSIGN{\cc}{1}\SEMI\ALLOC{x}{x}}}{\Len{x}{\nil}} \\
   &\qquad  \qquad+ 0.5 \cdot \wp{\ASSIGN{\cc}{0}}{\iverson{c=1}} + 0.5 \cdot \wp{\ASSIGN{\cc}{1}\SEMI\ALLOC{x}{x}}{\iverson{\cc =1}} \big) \notag \\
   \eeqtag{By Table \ref{table:wp}: $\wp{\ASSIGN{\cc}{0}}{\iverson{c=1}} = \iverson{c=1}\subst{c}{0} =0$}
   &\iverson{\cc = 1} \cdot \big(  \wp{\PCHOICE{\ASSIGN{\cc}{0}}{0.5}{\ASSIGN{\cc}{1}\SEMI\ALLOC{x}{x}}}{\Len{x}{\nil}} \\
   &\qquad \qquad + 0.5 \cdot \wp{\ASSIGN{\cc}{1}\SEMI\ALLOC{x}{x}}{\iverson{\cc =1}} \big) \notag \\
   \eeqtag{Table \ref{table:wp}}
   &\iverson{\cc = 1} \cdot \big(  \wp{\PCHOICE{\ASSIGN{\cc}{0}}{0.5}{\ASSIGN{\cc}{1}\SEMI\ALLOC{x}{x}}}{\Len{x}{\nil}} \\
   &\qquad \qquad + 0.5 \cdot \wp{\ASSIGN{\cc}{1}\SEMI\ALLOC{x}{x}}{\iverson{\cc =1}} \big) \notag \\
   \eeqtag{Table \ref{table:wp}, $\iverson{\cc = 1}\subst{\cc}{1} =1$}
   &\iverson{\cc = 1} \cdot \big(  \wp{\PCHOICE{\ASSIGN{\cc}{0}}{0.5}{\ASSIGN{\cc}{1}\SEMI\ALLOC{x}{x}}}{\Len{x}{\nil}} \\
   &\qquad \qquad + 0.5 \cdot \inf_{v \in \AVAILLOC{x}} \singleton{v}{x} \sepimp 1 \big)  \notag \\
   \eeqtag{Lemma \ref{lem:wand-pure-expectation-on-rhs}}
   &\iverson{\cc = 1} \cdot \big(  \wp{\PCHOICE{\ASSIGN{\cc}{0}}{0.5}{\ASSIGN{\cc}{1}\SEMI\ALLOC{x}{x}}}{\Len{x}{\nil}} \\
   &\qquad \qquad +  0.5 \cdot \inf_{v \in \AVAILLOC{x}} (\containsPointer{v}{-} \cdot \infty + (1-\containsPointer{v}{-}) \cdot 1 \big)   \notag \\
   \eeqtag{Lemma \ref{lem:inf-over-addresses}}
   &\iverson{\cc = 1} \cdot \big(  \wp{\PCHOICE{\ASSIGN{\cc}{0}}{0.5}{\ASSIGN{\cc}{1}\SEMI\ALLOC{x}{x}}}{\Len{x}{\nil}} \\
   &\qquad \qquad +  0.5 \cdot 1   \notag \\
   \eeqtag{Algebra}
   &\iverson{\cc = 1} \cdot \big(  \wp{\PCHOICE{\ASSIGN{\cc}{0}}{0.5}{\ASSIGN{\cc}{1}\SEMI\ALLOC{x}{x}}}{\Len{x}{\nil}} \\
   &\qquad \qquad +  0.5    \notag \\
   \eeqtag{Table \ref{table:wp}}
   &\iverson{\cc = 1} \cdot \big(  0.5 \cdot \wp{\ASSIGN{\cc}{0}}{\Len{x}{\nil}} + 0.5 \cdot \wp{\ASSIGN{\cc}{1}\SEMI\ALLOC{x}{x}}{\Len{x}{\nil}} \\
   &\qquad \qquad + 0.5 \big)   \notag \\
   \eeqtag{Table \ref{table:wp}, $\cc$ does not occur in $\Len{x}{\nil}$}
   &\iverson{\cc = 1} \cdot \big(  0.5 \cdot \Len{x}{\nil} + 0.5 \cdot \wp{\ASSIGN{\cc}{1}}{\wp{\ALLOC{x}{x}}{\Len{x}{\nil}}} \\
   &\qquad \qquad + 0.5 \big)   \notag \\
   \eeqtag{Equation \ref{eqn:increase-list-length}}
   &\iverson{\cc = 1} \cdot \big(  0.5 \cdot \Len{x}{\nil} + 0.5 \cdot \wp{\ASSIGN{\cc}{1}}{\Ls{x}{0} + \Len{x}{0}} \\
   &\qquad \qquad + 0.5 \big)   \notag \\
   \eeqtag{$\cc$ does not occur in $\Ls{x}{0} + \Len{x}{0}$}
   &\iverson{\cc = 1} \cdot \big(  0.5 \cdot \Len{x}{\nil} + 0.5 \cdot (\Ls{x}{0} + \Len{x}{0}) 
   + 0.5 \big)    \\
   \ppreceqtag{$\Ls{x}{0} \preceq 1$}
   &\iverson{\cc = 1} \cdot \big(  0.5 \cdot \Len{x}{\nil} + 0.5 \cdot (1 + \Len{x}{0}) 
    + 0.5 \big)    \\
   \eeqtag{Algebra}
   &\iverson{\cc = 1} \cdot \big(   \Len{x}{\nil} + 1 \big)  \\
   \eeqtag{Algebra}
   &\iverson{\cc = 1} \cdot \Len{x}{\nil} + \iverson{\cc = 1}  \\
   \ppreceqtag{$\iverson{\cc = 1} \preceq 1$}
   &\Len{x}{\nil} + \iverson{\cc = 1}.
\end{align}
This completes the proof.

\subsection{Verification of Invariant for Randomize Array}\label{app:case-studies:randomize-array}
Recall the invariant $I$ proposed in the paper:

\begin{align}
   I \eeq & \iverson{0 \leq i < n} \cdot \frac{1}{(n-i)!} \cdot \bbigsepcon{k=0}{i-1} \singleton{\aarray+k}{\alpha_k}
             \sepcon \sum\limits_{\pi \in \perm{i}{n-1}} \bbigsepcon{k=i}{n-1} \singleton{\aarray+k}{\alpha_{\pi(k)}}  \\
          &\quad + \iverson{\neg (0 \leq i < n)} \cdot \singleton{\aarray}{\alpha_0,\ldots, \alpha_{n-1}} \notag
\end{align}
\noindent
In order to verify $I$ as an invariant of loop $\crand$ w.r.t.\ postexpectation $\singleton{\aarray}{\alpha_0,\ldots, \alpha_{n-1}}$,
we have to show that
\begin{align}
  &\charwp{0 \leq i < n}{\cbody}{\singleton{\aarray}{\alpha_0, \ldots, \alpha_{n-1}}}(I) \\
  \eeq&\iverson{0 \leq i < n} \cdot \wp{\cbody}{I} + \iverson{\neg (0 \leq i < n)} \cdot \singleton{\aarray}{\alpha_0,\ldots, \alpha_{n-1}} 
  \notag \\
  {}~\overset{!}{\preceq}~{} & I  .
  \notag
\end{align}
Since $\iverson{0 \leq i < n} \cdot \iverson{\neg (0 \leq i < n)} = 0$, we subdivide our proof obligation into
\begin{enumerate}
    \item $\iverson{\neg (0 \leq i < n)} \cdot \singleton{\aarray}{\alpha_0,\ldots, \alpha_{n-1}} \preceq I$,~\text{and} \label{eqn:rand_array_to_show_1}
    \item $\iverson{0 \leq i < n} \cdot \wp{\cbody}{I} \preceq I$. \label{eqn:rand_array_to_show_2}
\end{enumerate}
\emph{Proof of \ref{eqn:rand_array_to_show_1}}. We have 
\begin{align}
   &\iverson{\neg (0 \leq i < n)} \cdot \singleton{\aarray}{\alpha_0,\ldots, \alpha_{n-1}} \\
   \ppreceqtag{$0 \preceq X$ for all $X \in \E$}
   &\iverson{0 \leq i < n} \cdot \frac{1}{(n-i)!} \cdot \bbigsepcon{k=0}{i-1} \singleton{\aarray+k}{\alpha_k}
             \sepcon \sum\limits_{\pi \in \perm{i}{n-1}} \bbigsepcon{k=i}{n-1} \singleton{\aarray+k}{\alpha_{\pi(k)}}
             \notag \\
              &\quad + \iverson{\neg (0 \leq i < n)} \cdot \singleton{\aarray}{\alpha_0,\ldots, \alpha_{n-1}}
              \notag \\
   \eeqtag{Definition of $I$}
   & I.
   \notag
\end{align}
\emph{Proof of \ref{eqn:rand_array_to_show_2}}. 
%
%
%
Let 
$\cbody = \COMPOSE{\cc_1}{\COMPOSE{\cc_2}{\cc_3}}$. Moreover, let 
\begin{align}
   \FinPermutations{q} \eeq& \bigcup_{\substack{p \in \Nats \\ p \leq q}} \bigcup_{\substack{r \in \Nats \\ r \leq p}} \big\{ f ~\mid~ f: \{r,\ldots,p \} \mapsto \{r,\ldots,p \} \big\},~\text{and} \\
   \Permutations \eeq& \bigcup_{q \in \Nats} \FinPermutations{q}, \label{eqn:def-permutations}
\end{align}
We proceed as follows:
\begin{align}
   &\iverson{0 \leq i < n} \cdot \wp{\cbody}{I} 
   \label{eqn:rand_array_main_eqns} \\
   %
   %
   %
   \eeqtag{Definition of $I$}
   &\iverson{0\leq i <n} \cdot \wpsymbol\llbracket \cbody \rrbracket \big( 
      \iverson{0 \leq i < n} \cdot \frac{1}{(n-i)!}
      \\
          & \quad \cdot \bbigsepcon{k=0}{i-1} \singleton{\aarray+k}{\alpha_k}
             \sepcon \sum\limits_{\pi \in \perm{i}{n-1}} \bbigsepcon{k=i}{n-1} \singleton{\aarray+k}{\alpha_{\pi(k)}} \notag \\
          &\qquad + \iverson{\neg (0 \leq i < n)} \cdot \singleton{\aarray}{\alpha_0,\ldots, \alpha_{n-1}} \big) \notag \\
   \eeqtag{Let $\cbody = \cc_2 \SEMI \ASSIGN{i}{i+1}$}
   &\iverson{0\leq i <n} \cdot \wpsymbol\llbracket \cc_2 \SEMI \ASSIGN{i}{i+1}\rrbracket \big( 
      \iverson{0 \leq i < n} \cdot \frac{1}{(n-i)!}
      \\
          & \quad \cdot \bbigsepcon{k=0}{i-1} \singleton{\aarray+k}{\alpha_k}
             \sepcon \sum\limits_{\pi \in \perm{i}{n-1}} \bbigsepcon{k=i}{n-1} \singleton{\aarray+k}{\alpha_{\pi(k)}} \notag \\
          &\qquad + \iverson{\neg (0 \leq i < n)} \cdot \singleton{\aarray}{\alpha_0,\ldots, \alpha_{n-1}} \big) \notag \\
   \eeqtag{Table \ref{table:wp}: substituting $i$ by $i+1$}
   &\iverson{0\leq i <n} \cdot \wpsymbol\llbracket \cc_2 \rrbracket \big(  
      \iverson{0 \leq i+1 < n} \cdot \frac{1}{(n-i-1)!}
      \\
          & \quad \cdot \bbigsepcon{k=0}{i} \singleton{\aarray+k}{\alpha_k}
             \sepcon \sum\limits_{\pi \in \perm{i+1}{n-1}} \bbigsepcon{k=i+1}{n-1} \singleton{\aarray+k}{\alpha_{\pi(k)}} \notag \\
          &\qquad + \iverson{\neg (0 \leq i+1 < n)} \cdot \singleton{\aarray}{\alpha_0,\ldots, \alpha_{n-1}} \big) \notag \\
   \eeqtag{Pure Frame Rule (Theorem \ref{thm:batz}) on $\iverson{0\leq i <n})$}
   &\wpsymbol\llbracket \cc_2 \rrbracket \big( \iverson{0\leq i <n} \cdot \big(  
      \iverson{0 \leq i+1 < n} \cdot \frac{1}{(n-i-1)!} 
      \\
          & \quad \cdot \bbigsepcon{k=0}{i} \singleton{\aarray+k}{\alpha_k}
             \sepcon \sum\limits_{\pi \in \perm{i+1}{n-1}} \bbigsepcon{k=i+1}{n-1} \singleton{\aarray+k}{\alpha_{\pi(k)}} \notag \\
          &\qquad + \iverson{\neg (0 \leq i+1 < n)} \cdot \singleton{\aarray}{\alpha_0,\ldots, \alpha_{n-1}} \big) \big) \notag \\
   \eeqtag{$\iverson{0\leq i <n} \cdot \iverson{\neg (0 \leq i+1 < n)} = 0$}
   &\wpsymbol\llbracket \cc_2 \rrbracket \big( \iverson{0\leq i <n}  
       \iverson{0 \leq i+1 < n} \cdot \frac{1}{(n-i-1)!} 
       \\
          & \quad \cdot  \bbigsepcon{k=0}{i} \singleton{\aarray+k}{\alpha_k}
             \sepcon \sum\limits_{\pi \in \perm{i+1}{n-1}} \bbigsepcon{k=i+1}{n-1} \singleton{\aarray+k}{\alpha_{\pi(k)}} \big) \notag \\
   \ppreceqtag{$\iverson{0 \leq i+1 < n} \preceq 1$, then apply monotonicity (Theorem \ref{thm:wp:basic} (\ref{thm:wp:basic:monotonicity}))}
   &\wpsymbol\llbracket \cc_2 \rrbracket \big(  
      \iverson{0\leq i <n} \cdot \frac{1}{(n-i-1)!} 
      \\
          & \quad \cdot \bbigsepcon{k=0}{i} \singleton{\aarray+k}{\alpha_k}
             \sepcon \sum\limits_{\pi \in \perm{i+1}{n-1}} \bbigsepcon{k=i+1}{n-1} \singleton{\aarray+k}{\alpha_{\pi(k)}} \big) \notag \\
   \eeqtag{$\bbigsepcon{k=0}{i} \singleton{\aarray+k}{\alpha_k}$ is domain exact, then apply Theorem \ref{thm:sep-con-distrib} (\ref{thm:sep-con-distrib:sepcon-over-plus-full})}
   &\wpsymbol\llbracket \cc_2 \rrbracket \big( 
    \iverson{0\leq i <n} \cdot \frac{1}{(n-i-1)!} 
    \\
          & \quad \cdot \sum\limits_{\pi \in \perm{i+1}{n-1}}
           \bbigsepcon{k=0}{i} \singleton{\aarray+k}{\alpha_k}
             \sepcon \bbigsepcon{k=i+1}{n-1} \singleton{\aarray+k}{\alpha_{\pi(k)}} \big) \notag \\
   \eeqtag{Pure Frame Rule (Theorem \ref{thm:batz}) on $\frac{1}{(n-i-1)!}$}
   &\frac{1}{(n-i-1)!} \cdot \wpsymbol\llbracket \cc_2 \rrbracket \big(  
      \iverson{0\leq i <n}
      \\
          & \quad \cdot \sum\limits_{\pi \in \perm{i+1}{n-1}}
           \bbigsepcon{k=0}{i} \singleton{\aarray+k}{\alpha_k}
             \sepcon \bbigsepcon{k=i+1}{n-1} \singleton{\aarray+k}{\alpha_{\pi(k)}} \big) \notag \\
   \eeqtag{Rewrite sum using Equation \ref{eqn:def-permutations} and $0\leq i <n$}
   &\frac{1}{(n-i-1)!} \cdot \wpsymbol\llbracket \cc_2 \rrbracket \big( 
             \iverson{0\leq i <n} \cdot \sum\limits_{\pi \in \Permutations} \iverson{\pi \in \perm{i+1}{n-1}}  \\
             &\quad \cdot
           \bbigsepcon{k=0}{i} \singleton{\aarray+k}{\alpha_k}
             \sepcon \bbigsepcon{k=i+1}{n-1} \singleton{\aarray+k}{\alpha_{\pi(k)}} \big) \notag \\
   \eeqtag{Algebra}
   &\frac{1}{(n-i-1)!} \cdot \wpsymbol\llbracket \cc_2 \rrbracket \big( 
             \iverson{0\leq i <n} \cdot \sum\limits_{k=0}^{\infty} \sum\limits_{\pi \in \FinPermutations{k}} \iverson{\pi \in \perm{i+1}{n-1}}  \\
             &\quad \cdot
           \bbigsepcon{k=0}{i} \singleton{\aarray+k}{\alpha_k}
             \sepcon \bbigsepcon{k=i+1}{n-1} \singleton{\aarray+k}{\alpha_{\pi(k)}} \big) \notag \\
   \eeqtag{Algebra}
   &\frac{1}{(n-i-1)!} \cdot \wpsymbol\llbracket \cc_2 \rrbracket \big( 
             \iverson{0\leq i <n} \cdot \sup_{l \in \Nats} \sum\limits_{k=0}^{l} \sum\limits_{\pi \in \FinPermutations{k}} \iverson{\pi \in \perm{i+1}{n-1}}  \\
             &\quad \cdot
           \bbigsepcon{k=0}{i} \singleton{\aarray+k}{\alpha_k}
             \sepcon \bbigsepcon{k=i+1}{n-1} \singleton{\aarray+k}{\alpha_{\pi(k)}} \big) \notag \\
   \eeqtag{Continuity of $\wpsymbol$ (Theorem \ref{thm:wp:basic} (\ref{thm:wp:basic:continuity}))}
   &\frac{1}{(n-i-1)!} \cdot \sup_{l \in \Nats} \wpsymbol\llbracket \cc_2 \rrbracket \big(
             \iverson{0\leq i <n} \cdot \sum\limits_{k=0}^{l} \sum\limits_{\pi \in \FinPermutations{k}} \iverson{\pi \in \perm{i+1}{n-1}}  \\
             &\quad \cdot
           \bbigsepcon{k=0}{i} \singleton{\aarray+k}{\alpha_k}
             \sepcon \bbigsepcon{k=i+1}{n-1} \singleton{\aarray+k}{\alpha_{\pi(k)}} \big) \notag \\
   \eeqtag{Linearity of $\wpsymbol$ (Theorem \ref{thm:wp:basic} (\ref{thm:wp:basic:linearity}))}
   &\frac{1}{(n-i-1)!} \cdot \sup_{l \in \Nats} \sum\limits_{k=0}^{l} \sum\limits_{\pi \in \FinPermutations{k}} \wpsymbol\llbracket \cc_2 \rrbracket \big(
             \iverson{0\leq i <n} \cdot  \iverson{\pi \in \perm{i+1}{n-1}} \\
             &\quad \cdot
           \bbigsepcon{k=0}{i} \singleton{\aarray+k}{\alpha_k}
             \sepcon \bbigsepcon{k=i+1}{n-1} \singleton{\aarray+k}{\alpha_{\pi(k)}} \big) \notag \\
   \eeqtag{Pure Frame Rule (Theorem \ref{thm:batz}) on $\iverson{\pi \in \perm{i+1}{n-1}}$}
   &\frac{1}{(n-i-1)!} \cdot \sup_{n \in \Nats} \sum\limits_{k=0}^{l} \sum\limits_{\pi \in \FinPermutations{k}} \iverson{\pi \in \perm{i+1}{n-1}} \cdot \wpsymbol\llbracket \cc_2 \rrbracket \big(
             \iverson{0\leq i <n}  \\
             &\quad \cdot
           \bbigsepcon{k=0}{i} \singleton{\aarray+k}{\alpha_k}
             \sepcon \bbigsepcon{k=i+1}{n-1} \singleton{\aarray+k}{\alpha_{\pi(k)}} \big) \notag \\
   \eeqtag{Rewrite sum as above}
   &\frac{1}{(n-i-1)!} \cdot \sum\limits_{\pi \in \perm{i+1}{n-1}} \wpsymbol\llbracket \cc_2 \rrbracket \big(
             \iverson{0\leq i <n}    \\
             &\quad \cdot
           \bbigsepcon{k=0}{i} \singleton{\aarray+k}{\alpha_k}
             \sepcon \bbigsepcon{k=i+1}{n-1} \singleton{\aarray+k}{\alpha_{\pi(k)}} \big) \notag \\
   \eeqtag{$0 \leq i$}
   &\frac{1}{(n-i-1)!} \cdot \sum\limits_{\pi \in \perm{i+1}{n-1}} \wpsymbol\llbracket \cc_2 \rrbracket \big(  \\
          & \quad    \iverson{0\leq i <n} \cdot 
           \bbigsepcon{k=0}{i-1} \singleton{\aarray+k}{\alpha_k}
             \sepcon \singleton{\aarray+i}{\alpha_i}
             \sepcon \bbigsepcon{k=i+1}{n-1} \singleton{\aarray+k}{\alpha_{\pi(k)}} \big) \notag \\
   \eeqtag{Pure Frame Rule (Theorem \ref{thm:batz}) on $\iverson{0\leq i <n}$}    
   &\iverson{0\leq i <n} \cdot \frac{1}{(n-i-1)!} \cdot \sum\limits_{\pi \in \perm{i+1}{n-1}} \wpsymbol\llbracket \cc_2 \rrbracket \big(  \\
          & \quad    \underbrace{
           \bbigsepcon{k=0}{i-1} \singleton{\aarray+k}{\alpha_k}
             \sepcon \singleton{\aarray+i}{\alpha_i}
             \sepcon \bbigsepcon{k=i+1}{n-1} \singleton{\aarray+k}{\alpha_{\pi(k)}}}_{\eqqcolon \fg}\big) \notag \\
   \eeqtag{Case distinction: $\iverson{i=j} + \iverson{i < j} + \iverson{i>j} = 1$}    
   &\iverson{0\leq i <n} \cdot \frac{1}{(n-i-1)!} \cdot \sum\limits_{\pi \in \perm{i+1}{n-1}} \wpsymbol\llbracket \cc_2 \rrbracket \big(   \iverson{i=j} \cdot \fg  +  \iverson{i > j} \cdot \fg  + \iverson{i<j} \cdot \fg \big) \\
    \eeqtag{Linearity of $\wpsymbol$ (Theorem \ref{thm:wp:basic} (\ref{thm:wp:basic:linearity}))}    
   &\iverson{0\leq i <n} \cdot \frac{1}{(n-i-1)!}   \\
          & \quad  \cdot \sum\limits_{\pi \in \perm{i+1}{n-1}} \underbrace{\wpsymbol\llbracket \cc_2 \rrbracket \big( \iverson{i=j} \cdot \fg \big)}_{\eqqcolon \fh_1}
           + \underbrace{\wpsymbol\llbracket \COMPOSE{\cc_1}{\cc_2} \rrbracket \big( \iverson{i > j} \cdot \fg \big)}_{\eqqcolon \fh_2}
             + \underbrace{\wpsymbol\llbracket \COMPOSE{\cc_1}{\cc_2} \rrbracket \big( \iverson{i<j} \cdot \fg \big)}_{\eqqcolon \fh_3} 
             \notag
\end{align}
We continue by calculating $\fh_1$, $\fh_2$, and $\fh_3$ separately. For $\fh_1$, we have
\begin{align}
   &\fh_1 \\
   \eeqtag{Definition of $\fh_1$}
   &\wpsymbol\llbracket \cc_2 \rrbracket \big( \iverson{i=j} \cdot \fg \big) \\
   \eeqtag{Let $\cc_2 = \cc_1 \SEMI \ProcCall{swap}{\aarray, i , j}{}$}
   &\wpsymbol\llbracket \cc_1 \SEMI \ProcCall{swap}{\aarray, i , j}{} \rrbracket \big( \iverson{i=j} \cdot \fg \big) \\
   \eeqtag{Table \ref{table:wp}}
   &\wpsymbol\llbracket \cc_1 \rrbracket \big( \wp{\ProcCall{swap}{\aarray, i , j}{}}{ \iverson{i=j} \cdot \fg } \big) \\
   \eeqtag{Lemma \ref{lem:swap-i-equal-j}}
   &\wpsymbol\llbracket \cc_1 \rrbracket \big(  \iverson{i=j} \cdot \fg  \big)\\
   \eeqtag{$\cc_1 = \ASSIGNUNIFORM{j}{i}{n-1}$, definition of $\wpsymbol$ for random assignments}
   &\frac{1}{n-i} \cdot \sum\limits_{k=i}^{n-1} (\iverson{i=j} \cdot \fg) \subst{j}{k}  \\
   \eeqtag{$\iverson{i=j}\subst{j}{k} = 0$ for $k \neq i$}
   &\frac{1}{n-i} \cdot (\iverson{i=j} \cdot \fg) \subst{j}{i} \\
   \eeqtag{$\iverson{i=j} \subst{j}{i} = 1$ and $j$ does not occur in $\fg$}
   &\frac{1}{n-i} \cdot \fg. 
\end{align}
For $\fh_2$, we have
\begin{align}
   &\fh_2 \\
   \eeqtag{Definition of $\fh_2$}
   &\wpsymbol\llbracket \cc_2 \rrbracket \big( \iverson{i>j} \cdot \fg \big) \\
   \eeqtag{Let $\cc_2 = \cc_1 \SEMI \ProcCall{swap}{\aarray, i , j}{}$}
   &\wpsymbol\llbracket \cc_1 \SEMI \ProcCall{swap}{\aarray, i , j}{} \rrbracket \big( \iverson{i>j} \cdot \fg \big) \\
   \eeqtag{Table \ref{table:wp}}
   &\wpsymbol\llbracket \cc_1 \rrbracket \big( \wp{\ProcCall{swap}{\aarray, i , j}{}}{ \iverson{i>j} \cdot \fg } \big)\\
   \eeqtag{Pure Frame Rule (Theorem \ref{thm:batz}) on $\iverson{i>j}$}
   &\wpsymbol\llbracket \cc_1 \rrbracket \big( \iverson{i>j} \cdot \wp{\ProcCall{swap}{\aarray, i , j}{}}{ \fg } \big) \\
   \eeqtag{$\cc_1 = \ASSIGNUNIFORM{j}{i}{n-1}$, definition of $\wpsymbol$ for random assignments}
   &\frac{1}{n-i} \cdot \sum\limits_{k=i}^{n-1} (\iverson{i>j} \cdot \wp{\cc_2}{ \fg }) \subst{j}{k}  \\
   \eeqtag{$\iverson{i>j}\subst{j}{k} = 0$ for $k\geq i$}
   & 0.
\end{align}
For $\fh_3$, we have
\begin{align}
   &\fh_3 \\
   \eeqtag{Definition of $\fh_3$}
   &\wpsymbol\llbracket \cc_2 \rrbracket \big( \iverson{i<j} \cdot \fg \big)\\
   \eeqtag{Let $\cc_2 = \cc_1 \SEMI \ProcCall{swap}{\aarray, i , j}{}$}
   &\wpsymbol\llbracket \cc_1 \SEMI \ProcCall{swap}{\aarray, i , j}{} \rrbracket \big( \iverson{i<j} \cdot \fg \big)\\
   \eeqtag{Table \ref{table:wp}}
   &\wpsymbol\llbracket \cc_1 \rrbracket \big( \wpsymbol\llbracket \ProcCall{swap}{\aarray, i , j}{} \rrbracket \big( \iverson{i<j} \cdot \fg \big) \big)\\
   \eeqtag{Case distinction: $\iverson{j \leq n-1} + \iverson{j > n-1} = 1$}
   &\wpsymbol\llbracket \cc_1 \rrbracket \big( \wpsymbol\llbracket \ProcCall{swap}{\aarray, i , j}{} \rrbracket \big(\iverson{j \leq n-1} \cdot \iverson{i<j} \cdot \fg
        + \iverson{j > n-1} \cdot \iverson{i<j} \cdot \fg \big) \big)  \\
   \eeqtag{Linearity of $\wpsymbol$ (Theorem \ref{thm:wp:basic} (\ref{thm:wp:basic:linearity}))}
   &\wpsymbol\llbracket \cc_1 \rrbracket \big( \wpsymbol\llbracket \ProcCall{swap}{\aarray, i , j}{} \rrbracket \big(\iverson{j \leq n-1} \cdot \iverson{i<j} \cdot \fg \big) \\
   &\quad     + \wpsymbol\llbracket \ProcCall{swap}{\aarray, i , j}{} \rrbracket \big( \iverson{j > n-1} \cdot \iverson{i<j} \cdot \fg \big) \big) \\
   \eeqtag{Linearity of $\wpsymbol$ (Theorem \ref{thm:wp:basic} (\ref{thm:wp:basic:linearity}))}
   &\wpsymbol\llbracket \cc_1 \rrbracket \big( \wpsymbol\llbracket \ProcCall{swap}{\aarray, i , j}{} \rrbracket \big(\iverson{j \leq n-1} \cdot \iverson{i<j} \cdot \fg \big) \big) \\
   &\quad + \wpsymbol\llbracket \cc_1 \rrbracket \big(    \wpsymbol\llbracket \ProcCall{swap}{\aarray, i , j}{} \rrbracket \big( \iverson{j > n-1} \cdot \iverson{i<j} \cdot \fg \big) \big)
   \notag \\
   \eeqtag{Theorem \ref{thm:batz} on $\iverson{j > n-1}$}
   &\wpsymbol\llbracket \cc_1 \rrbracket \big( \wpsymbol\llbracket \ProcCall{swap}{\aarray, i , j}{} \rrbracket \big(\iverson{j \leq n-1} \cdot \iverson{i<j} \cdot \fg \big) \big) \\
   &\quad + \wpsymbol\llbracket \cc_1 \rrbracket \big(  \iverson{j > n-1} \cdot \wpsymbol\llbracket \ProcCall{swap}{\aarray, i , j}{} \rrbracket \big(  \iverson{i<j} \cdot \fg \big) \big)
   \notag \\
   \eeqtag{$\cc_1 = \ASSIGNUNIFORM{j}{i}{n-1}$, definition of $\wpsymbol$ for random assignments}
   &\wpsymbol\llbracket \cc_1 \rrbracket \big( \wpsymbol\llbracket \ProcCall{swap}{\aarray, i , j}{} \rrbracket \big(\iverson{j \leq n-1} \cdot \iverson{i<j} \cdot \fg \big) \big)  \\
   &\quad + \frac{1}{n-i} \cdot \sum\limits_{k=i}^{n-1}  \big(\iverson{j > n-1} \cdot \wpsymbol\llbracket \ProcCall{swap}{\aarray, i , j}{} \rrbracket \big( \iverson{j > n-1} \cdot \iverson{i<j} \cdot \fg \big) \big)\subst{j}{k}
   \notag \\
   \eeqtag{$\iverson{j > n-1}\subst{j}{k} = 0$ for $k \leq n-1$}
   &\wpsymbol\llbracket \cc_1 \rrbracket \big( \wpsymbol\llbracket \ProcCall{swap}{\aarray, i , j}{} \rrbracket \big(\iverson{j \leq n-1} \cdot \iverson{i<j} \cdot \fg \big) \big)  \\
   &\quad + 0
    \notag \\
   \eeqtag{Neutrality of $0$ w.r.t.\ $+$}
   &\wpsymbol\llbracket \cc_1 \rrbracket \big( \wpsymbol\llbracket \ProcCall{swap}{\aarray, i , j}{} \rrbracket \big(\iverson{j \leq n-1} \cdot \iverson{i<j} \cdot \fg \big) \big) \\
   \eeqtag{Definition of $\fg$}
   &\wpsymbol\llbracket \cc_1 \rrbracket \big( \wpsymbol\llbracket \ProcCall{swap}{\aarray, i , j}{} \rrbracket \big(\iverson{j \leq n-1} \cdot \iverson{i<j}  
     \cdot \bbigsepcon{k=0}{i-1} \singleton{\aarray+k}{\alpha_k}\\
     &\qquad \quad
             \sepcon \singleton{\aarray+i}{\alpha_i} 
    \sepcon \bbigsepcon{k=i+1}{n-1} \singleton{\aarray+k}{\alpha_{\pi(k)}}
     \big) \big)
     \notag \\
   \eeqtag{$i < j \leq n-1$}
   &\wpsymbol\llbracket \cc_1 \rrbracket \big( \wpsymbol\llbracket \ProcCall{swap}{\aarray, i , j}{} \rrbracket \big(\iverson{j \leq n-1} \cdot \iverson{i<j}
    \cdot 
    \bbigsepcon{k=0}{i-1} \singleton{\aarray+k}{\alpha_k}\\
  & \qquad \quad
             \sepcon \singleton{\aarray+i}{\alpha_i}  \sepcon \singleton{\aarray+j}{\alpha_{\pi(j)}}
             \sepcon \bbigsepcon{\substack{k=i+1 \\ k \neq j}}{n-1} \singleton{\aarray+k}{\alpha_{\pi(k)}}
     \big) \big)
     \notag \\
   \eeqtag{Pure Frame Rule (Theorem \ref{thm:batz}) on $\iverson{j \leq n-1} \cdot \iverson{i<j}$}
   &\wpsymbol\llbracket \cc_1 \rrbracket \big( \iverson{j \leq n-1} \cdot \iverson{i<j} \cdot \wpsymbol\llbracket \ProcCall{swap}{\aarray, i , j}{} \rrbracket \big( 
    \bbigsepcon{k=0}{i-1} \singleton{\aarray+k}{\alpha_k}\\
   &\qquad \quad
             \sepcon \singleton{\aarray+i}{\alpha_i}   \sepcon \singleton{\aarray+j}{\alpha_{\pi(j)}}
             \sepcon \bbigsepcon{\substack{k=i+1 \\ k \neq j}}{n-1} \singleton{\aarray+k}{\alpha_{\pi(k)}}
     \big) \big)
     \notag \\
   \eeqtag{Lemma \ref{lem:swap-i-unequal-j}}
   &\wpsymbol\llbracket \cc_1 \rrbracket \big( \iverson{j \leq n-1} \cdot \iverson{i<j}   
    \cdot \bbigsepcon{k=0}{i-1} \singleton{\aarray+k}{\alpha_k}
             \sepcon \singleton{\aarray+i}{\alpha_{\pi(j)}}  \\
   &\qquad \quad \sepcon \singleton{\aarray+j}{\alpha_i}
             \sepcon \bbigsepcon{\substack{k=i+1 \\ k \neq j}}{n-1} \singleton{\aarray+k}{\alpha_{\pi(k)}}
      \big) 
      \notag \\
   \eeqtag{$\cc_1 = \ASSIGNUNIFORM{j}{i}{n-1}$, definition of $\wpsymbol$ for random assignments}
   &\frac{1}{n-i} \cdot \sum\limits_{u=i}^{n-1} \big( \iverson{j \leq n-1} \cdot \iverson{i<j}  
    \cdot  \bbigsepcon{k=0}{i-1} \singleton{\aarray+k}{\alpha_k}
             \sepcon \singleton{\aarray+i}{\alpha_{\pi(j)}}  \\
   & \qquad \quad \sepcon \singleton{\aarray+j}{\alpha_i}
             \sepcon \bbigsepcon{\substack{k=i+1 \\ k \neq j}}{n-1} \singleton{\aarray+k}{\alpha_{\pi(k)}}
      \big)\subst{j}{u}
      \notag \\
   \eeqtag{$\iverson{j \leq n-1} \subst{j}{u} = 1$ for $u \leq n-1$ and $\iverson{i<j}\subst{j}{u} = 0$ for $u=i$}
   &\frac{1}{n-i} \cdot \sum\limits_{u=i+1}^{n-1}   
    \big( \bbigsepcon{k=0}{i-1} \singleton{\aarray+k}{\alpha_k}
             \sepcon \singleton{\aarray+i}{\alpha_{\pi(j)}} \\
   & \qquad \quad \sepcon \singleton{\aarray+j}{\alpha_i}
             \sepcon \bbigsepcon{\substack{k=i+1 \\ k \neq j}}{n-1} \singleton{\aarray+k}{\alpha_{\pi(k)}}
      \big)\subst{j}{u}
      \notag \\
   \eeqtag{Applying the substitution}
   &\frac{1}{n-i} \cdot \sum\limits_{u=i+1}^{n-1}   
     \bbigsepcon{k=0}{i-1} \singleton{\aarray+k}{\alpha_k}
             \sepcon \singleton{\aarray+i}{\alpha_{\pi(u)}} \\
   &\qquad \quad \sepcon \singleton{\aarray+u}{\alpha_i}
             \sepcon \bbigsepcon{\substack{k=i+1 \\ k \neq u}}{n-1} \singleton{\aarray+k}{\alpha_{\pi(k)}}.
             \notag
\end{align}
Using our calculations for $\fh_1, \fh_2$, and $\fh_3$, we continue with Equation~\ref{eqn:rand_array_main_eqns}:
\begin{align}
   &\iverson{0\leq i <n} \cdot \frac{1}{(n-i-1)!} \cdot \sum\limits_{\pi \in \perm{i+1}{n-1}}  
          \fh_1 + \fh_2 + \fh_3 \\
   \eeqtag{Plugging in the results for $\fh_1, \fh_2$, and $\fh_3$ (omitting $\fh_2 = 0$)}
   &\iverson{0\leq i <n} \cdot \frac{1}{(n-i-1)!}\cdot \sum\limits_{\pi \in \perm{i+1}{n-1}}   \\
   &\quad  \Big( \big( \frac{1}{n-i} \cdot \bbigsepcon{k=0}{i-1} \singleton{\aarray+k}{\alpha_k}
             \sepcon \singleton{\aarray+i}{\alpha_i}
             \sepcon \bbigsepcon{k=i+1}{n-1} \singleton{\aarray+k}{\alpha_{\pi(k)}} \big) 
             \notag \\
   &\quad +\big( \frac{1}{n-i} \cdot \sum\limits_{u=i+1}^{n-1}    \bbigsepcon{k=0}{i-1} \singleton{\aarray+k}{\alpha_k}
    \sepcon \singleton{\aarray+i}{\alpha_{\pi(u)}}
    \notag \\
   &\qquad  \sepcon \singleton{\aarray+u}{\alpha_i}
             \sepcon \bbigsepcon{\substack{k=i+1 \\ k \neq u}}{n-1} \singleton{\aarray+k}{\alpha_{\pi(k)}} \big) \Big) 
             \notag\\
   \eeqtag{Multiplying out $\frac{1}{n-i}$ and using $\frac{1}{(n-i-1)!} \cdot \frac{1}{n-i} = \frac{1}{(n-i)!}$}
   &\iverson{0\leq i <n} \cdot \frac{1}{(n-i)!}\cdot \sum\limits_{\pi \in \perm{i+1}{n-1}}  \\
   &\quad  \Big( \big(  \bbigsepcon{k=0}{i-1} \singleton{\aarray+k}{\alpha_k}
             \sepcon \singleton{\aarray+i}{\alpha_i}
             \sepcon \bbigsepcon{k=i+1}{n-1} \singleton{\aarray+k}{\alpha_{\pi(k)}} \big)
             \notag \\
   &\quad +\big(  \sum\limits_{u=i+1}^{n-1}    \bbigsepcon{k=0}{i-1} \singleton{\aarray+k}{\alpha_k} 
    \sepcon \singleton{\aarray+i}{\alpha_{\pi(u)}}
    \notag \\
   &\qquad  \sepcon \singleton{\aarray+u}{\alpha_i}
             \sepcon \bbigsepcon{\substack{k=i+1 \\ k \neq u}}{n-1} \singleton{\aarray+k}{\alpha_{\pi(k)}} \big) \Big)
             \notag \\
   \eeqtag{Pulling apart sum}
   &\iverson{0\leq i <n} \cdot \frac{1}{(n-i)!} \\
   &\quad \cdot \Big( \sum\limits_{\pi \in \perm{i+1}{n-1}}   \big(  \bbigsepcon{k=0}{i-1} \singleton{\aarray+k}{\alpha_k}
             \sepcon \singleton{\aarray+i}{\alpha_i}
             \sepcon \bbigsepcon{k=i+1}{n-1} \singleton{\aarray+k}{\alpha_{\pi(k)}} \big) 
             \notag \\
   &\quad +  \sum\limits_{\pi \in \perm{i+1}{n-1}}   \sum\limits_{u=i+1}^{n-1}    \bbigsepcon{k=0}{i-1} \singleton{\aarray+k}{\alpha_k} 
     \sepcon \singleton{\aarray+i}{\alpha_{\pi(u)}} 
     \notag \\
   &\qquad \quad\sepcon \singleton{\aarray+u}{\alpha_i}
             \sepcon \bbigsepcon{\substack{k=i+1 \\ k \neq u}}{n-1} \singleton{\aarray+k}{\alpha_{\pi(k)}}  \Big) 
             \notag \\
   \eeqtag{Swap sums, $\bbigsepcon{k=0}{i-1} \singleton{\aarray+k}{\alpha_k}$ is domain exact, then apply Theorem \ref{thm:sep-con-distrib} (\ref{thm:sep-con-distrib:sepcon-over-plus-full})}
   &\iverson{0\leq i <n} \cdot \frac{1}{(n-i)!}  \\
   &\quad \cdot \Big( \sum\limits_{\pi \in \perm{i+1}{n-1}}   \big(  \bbigsepcon{k=0}{i-1} \singleton{\aarray+k}{\alpha_k}
             \sepcon \singleton{\aarray+i}{\alpha_i}
             \sepcon \bbigsepcon{k=i+1}{n-1} \singleton{\aarray+k}{\alpha_{\pi(k)}} \big) 
             \notag \\
   &\quad +   \big(  \sum\limits_{u=i+1}^{n-1}    \bbigsepcon{k=0}{i-1} \singleton{\aarray+k}{\alpha_k}
     \sepcon \sum\limits_{\pi \in \perm{i+1}{n-1}} \singleton{\aarray+i}{\alpha_{\pi(u)}}
     \notag \\
   &\qquad \quad \sepcon \singleton{\aarray+u}{\alpha_i}
             \sepcon  \bbigsepcon{\substack{k=i+1 \\ k \neq u}}{n-1} \singleton{\aarray+k}{\alpha_{\pi(k)}} \big) \Big) 
             \notag \\
   \eeqtag{$i+1 \leq u \leq n-1$ and reorder sums by fixing $\pi(u) =i$ and $\pi(i)=u$}
   &\iverson{0\leq i <n} \cdot \frac{1}{(n-i)!}  \\
   &\quad \cdot \Big( \sum\limits_{\pi \in \perm{i+1}{n-1}}   \big(  \bbigsepcon{k=0}{i-1} \singleton{\aarray+k}{\alpha_k}
             \sepcon \singleton{\aarray+i}{\alpha_i}
             \sepcon \bbigsepcon{k=i+1}{n-1} \singleton{\aarray+k}{\alpha_{\pi(k)}} \big) 
             \notag \\
   &\quad +   \big(  \sum\limits_{u=i+1}^{n-1} \bbigsepcon{k=0}{i-1} \singleton{\aarray+k}{\alpha_k}
   \sepcon  \sum\limits_{\substack{\pi \in \perm{i}{n-1} \\ \pi (u) = i \\ \pi (i) = u}}
              \bbigsepcon{\substack{k=i}}{n-1} \singleton{\aarray+k}{\alpha_{\pi(k)}} \big) \Big) 
             \notag \\
   \eeqtag{$\bbigsepcon{k=0}{i-1} \singleton{\aarray+k}{\alpha_k}$ is domain exact, then apply Theorem \ref{thm:sep-con-distrib} (\ref{thm:sep-con-distrib:sepcon-over-plus-full})}
   &\iverson{0\leq i <n} \cdot \frac{1}{(n-i)!}  \\
   &\quad \cdot \Big( \sum\limits_{\pi \in \perm{i+1}{n-1}}   \big(  \bbigsepcon{k=0}{i-1} \singleton{\aarray+k}{\alpha_k}
             \sepcon \singleton{\aarray+i}{\alpha_i}
             \sepcon \bbigsepcon{k=i+1}{n-1} \singleton{\aarray+k}{\alpha_{\pi(k)}} \big) 
             \notag \\
   &\quad +   \big(  \sum\limits_{u=i+1}^{n-1} \sum\limits_{\substack{\pi \in \perm{i}{n-1} \\ \pi (u) = i \\ \pi (i) = u}} \bbigsepcon{k=0}{i-1} \singleton{\aarray+k}{\alpha_k}
   \sepcon  
              \bbigsepcon{\substack{k=i}}{n-1} \singleton{\aarray+k}{\alpha_{\pi(k)}} \big) \Big) 
             \notag \\
   \eeqtag{$u$ ranges from $i+1$ to $n-1$}
   &\iverson{0\leq i <n} \cdot \frac{1}{(n-i)!}  \\
   &\quad \cdot \Big( \sum\limits_{\pi \in \perm{i+1}{n-1}}   \big(  \bbigsepcon{k=0}{i-1} \singleton{\aarray+k}{\alpha_k}
             \sepcon \singleton{\aarray+i}{\alpha_i}
             \sepcon \bbigsepcon{k=i+1}{n-1} \singleton{\aarray+k}{\alpha_{\pi(k)}} \big) 
             \notag \\
   &\quad +   \big( \sum\limits_{\substack{\pi \in \perm{i}{n-1} \\ \pi (i) \neq i}}   \bbigsepcon{k=0}{i-1} \singleton{\aarray+k}{\alpha_k}
   \sepcon  
              \bbigsepcon{\substack{k=i}}{n-1} \singleton{\aarray+k}{\alpha_{\pi(k)}} \big) \Big) 
             \notag \\
   \eeqtag{fixing $\pi(i)=i$}
   &\iverson{0\leq i <n} \cdot \frac{1}{(n-i)!}  \\
   &\quad \cdot \Big( \sum\limits_{\substack{\pi \in \perm{i}{n-1}\\ \pi(i)=i}}   \big(  \bbigsepcon{k=0}{i-1} \singleton{\aarray+k}{\alpha_k}
             \sepcon \bbigsepcon{k=i}{n-1} \singleton{\aarray+k}{\alpha_{\pi(k)}} \big)
             \notag \\
   &\quad +   \big( \sum\limits_{\substack{\pi \in \perm{i}{n-1} \\ \pi (i) \neq i}}   \bbigsepcon{k=0}{i-1} \singleton{\aarray+k}{\alpha_k}
   \sepcon  
              \bbigsepcon{\substack{k=i}}{n-1} \singleton{\aarray+k}{\alpha_{\pi(k)}} \big) \Big) 
             \notag \\
   \eeqtag{$\sum\limits_{\substack{\pi \in \perm{i}{n-1} \\ \pi(i) =i}} X + \sum\limits_{\substack{\pi \in \perm{i}{n-1} \\ \pi(i) \neq i}} X
           = \sum\limits_{\substack{\pi \in \perm{i}{n-1} }} X$}
   &\iverson{0\leq i <n} \cdot \frac{1}{(n-i)!} \cdot \sum\limits_{\pi \in \perm{i}{n-1} }  
    \big( \bbigsepcon{k=0}{i-1} \singleton{\aarray+k}{\alpha_k}
             \sepcon \bbigsepcon{k=i}{n-1} \singleton{\aarray+k}{\alpha_{\pi(k)}} \big)  \\
   \ppreceqtag{$0 \preceq \iverson{\neg (0 \leq i < n)} \cdot \singleton{\aarray}{\alpha_0,\ldots, \alpha_{n-1}}$}
   &\iverson{0\leq i <n} \cdot \frac{1}{(n-i)!} \cdot \sum\limits_{\pi \in \perm{i}{n-1} }  
    \big( \bbigsepcon{k=0}{i-1} \singleton{\aarray+k}{\alpha_k}
             \sepcon \bbigsepcon{k=i}{n-1} \singleton{\aarray+k}{\alpha_{\pi(k)}} \big)  \\
   &\quad + \iverson{\neg (0 \leq i < n)} \cdot \singleton{\aarray}{\alpha_0,\ldots, \alpha_{n-1}} 
   \notag \\
   \eeqtag{Definition of $I$}
   & I.
\end{align}
This completes the proof. \hfill \qed
\paragraph{Auxiliary Results}
Let us first provide the exact implementation of procedure $\ProcName{swap}$:
\begin{align*}
        & \ProcDecl{swap}{\aarray, i, j} {} \\
        & \qquad \ASSIGNH{y}{\aarray + i}\SEMI \\
        & \qquad \ASSIGNH{z}{\aarray + j}\SEMI \\
        & \qquad \HASSIGN{\aarray + i}{z}\SEMI \\
        & \qquad \HASSIGN{\aarray + j}{y} \\
        & \}
\end{align*}
Notice that analyzing procedure $\ProcName{swap}$ amounts to analyzing its body, because the procedure is not recursive. 
\begin{lemma}\label{lem:swap-i-unequal-j}
Let $\ff \in \E$ \CHANGED{such that $y,z \notin \Vars(\ff)$}. We have
\begin{align*}
  &\wp{\ProcCall{swap}{\aarray, i , j}{}}{\ff \sepcon \singleton{\aarray+i}{\alpha} \sepcon \singleton{\aarray+j}{\beta}} \\
  {}\eeq{}~& \ff \sepcon \singleton{\aarray+j}{\alpha} \sepcon \singleton{\aarray+i}{\beta}.
\end{align*}
\end{lemma}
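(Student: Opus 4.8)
The plan is to simply unfold the definition of $\wpsymbol$ for the (non-recursive) procedure $\ProcName{swap}$, i.e.\ to compute the weakest preexpectation of its body
$\ASSIGNH{y}{\aarray+i}\SEMI\ASSIGNH{z}{\aarray+j}\SEMI\HASSIGN{\aarray+i}{z}\SEMI\HASSIGN{\aarray+j}{y}$
with respect to the postexpectation $\ff \sepcon \singleton{\aarray+i}{\alpha} \sepcon \singleton{\aarray+j}{\beta}$, and to show that this equals $\ff \sepcon \singleton{\aarray+j}{\alpha} \sepcon \singleton{\aarray+i}{\beta}$. I would work backwards through the four statements using the rules in \autoref{table:wp}, together with the auxiliary Lemma~\ref{lem:wand-reynolds} (modus ponens for single points-to predicates) which lets me rewrite the heap-lookup rule as $\wp{\ASSIGNH{x}{\ee}}{\fk} = \sup_{v \in \Ints} \containsPointer{\ee}{v} \cdot \fk\subst{x}{v}$, and the heap-mutation rule $\wp{\HASSIGN{\ee}{\ee'}}{\fk} = \validpointer{\ee} \sepcon (\singleton{\ee}{\ee'} \sepimp \fk)$.

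First I would handle the two mutations $\HASSIGN{\aarray+j}{y}$ and $\HASSIGN{\aarray+i}{z}$. Applying the mutation rule to $\HASSIGN{\aarray+j}{y}$ with postexpectation $\ff \sepcon \singleton{\aarray+i}{\alpha} \sepcon \singleton{\aarray+j}{\beta}$ gives $\validpointer{\aarray+j} \sepcon (\singleton{\aarray+j}{y} \sepimp (\ff \sepcon \singleton{\aarray+i}{\alpha} \sepcon \singleton{\aarray+j}{\beta}))$; here I would use commutativity and associativity of $\sepcon$ (Theorem~\ref{thm:sep-con-monoid}) to isolate the $\singleton{\aarray+j}{\ldots}$ cell, note that the separating implication $\singleton{\aarray+j}{y} \sepimp (\cdots \sepcon \singleton{\aarray+j}{\beta})$ forces $y = \beta$ on the portion that matters, and collapse the $\validpointer{\aarray+j} \sepcon (\singleton{\aarray+j}{y} \sepimp \cdots)$ pattern back into $\singleton{\aarray+j}{y}$ via Lemma~\ref{lem:wand-reynolds} (which states $\singleton{x}{\ee} \sepcon (\singleton{x}{\ee} \sepimp \ff) = \containsPointer{x}{\ee} \cdot \ff$, applied in the other direction). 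The net effect is that after these two mutations the intermediate expectation is $\ff \sepcon \singleton{\aarray+j}{z} \sepcon \singleton{\aarray+i}{z}$ evaluated symbolically; more precisely I expect the preexpectation of $\COMPOSE{\HASSIGN{\aarray+i}{z}}{\HASSIGN{\aarray+j}{y}}$ to work out to $\containsPointer{\aarray+i}{-} \cdot \containsPointer{\aarray+j}{-} \cdot (\ff \sepcon \singleton{\aarray+j}{y} \sepcon \singleton{\aarray+i}{z})$, with the two $\containsPointer{}{}$ factors encoding that both cells must already be allocated. Then the two lookups $\ASSIGNH{y}{\aarray+i}$ and $\ASSIGNH{z}{\aarray+j}$, handled via the rewritten lookup rule, substitute $y \mapsto \alpha$ and $z \mapsto \beta$ respectively (using that the cells $\aarray+i$ and $\aarray+j$ hold exactly $\alpha$ and $\beta$ in the postexpectation), and the $\containsPointer{}{}$ side-conditions get absorbed since $\singleton{\aarray+i}{\alpha}$ already implies $\containsPointer{\aarray+i}{-}$, etc. The hypothesis $y,z \notin \Vars(\ff)$ is needed so that the substitutions $\subst{y}{\alpha}$ and $\subst{z}{\beta}$ do not touch the frame $\ff$.

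The main obstacle, I expect, is the bookkeeping with the separating conjunctions and implications: I must carefully track which subheap each $\singleton{}{}$ predicate pins down, repeatedly reassociate and commute (Theorem~\ref{thm:sep-con-monoid}), and justify each collapse of a $\validpointer{\ee} \sepcon (\singleton{\ee}{\ee'} \sepimp \cdots)$ or $\singleton{\ee}{v} \sepcon (\singleton{\ee}{v} \sepimp \cdots)$ pattern — these rely on the fact that the relevant predicates are domain-exact, so the partition realizing the $\max$ in $\sepcon$ and the extension realizing the $\inf$ in $\sepimp$ are unique (Lemma~\ref{thm:domain-exact:unique-partitioning} and Lemma~\ref{lem:wand-reynolds}). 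A subtle point is the case split on whether cells $\aarray+i$ and $\aarray+j$ are actually present: when they are not, both sides evaluate to $0$, so the claimed equality still holds; when they are present, the computation above goes through. I would organize the proof as a single backward $\wpsymbol$-computation with an $\eeq$-chain, inserting one or two $\Assert{}$-style intermediate expectations for readability, and invoking Lemma~\ref{lem:wand-reynolds}, Theorem~\ref{thm:sep-con-monoid}, and (for moving the pure $\containsPointer{}{}$ factors around) Theorem~\ref{thm:sep-con-algebra-pure} at the appropriate points.
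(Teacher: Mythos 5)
Your high-level plan — compute the weakest preexpectation backwards through the four statements of $\ProcName{swap}$'s body using the mutation rule and the contains-pointer form of the lookup rule (Lemma~\ref{lem:wand-reynolds}), with a case split on whether both cells are allocated — is the same route the paper takes; the paper merely discharges the resulting nested $\sepcon$/$\sepimp$ expression pointwise on an explicit decomposition $\hh = \hh' \sepcon \{\sk(\aarray+i)\mapsto v_i\} \sepcon \{\sk(\aarray+j)\mapsto v_j\}$ rather than by algebraic laws. However, your key intermediate claim is wrong. The correct value of $\wp{\COMPOSE{\HASSIGN{\aarray+i}{z}}{\HASSIGN{\aarray+j}{y}}}{\ff \sepcon \singleton{\aarray+i}{\alpha} \sepcon \singleton{\aarray+j}{\beta}}$ is $\iverson{z=\alpha}\cdot\iverson{y=\beta}\cdot\bigl(\ff \sepcon \validpointer{\aarray+i} \sepcon \validpointer{\aarray+j}\bigr)$: a mutation overwrites, so the \emph{current} contents of the two cells are unconstrained, while the \emph{written} values $z,y$ must equal the contents $\alpha,\beta$ demanded by the postexpectation. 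Your formula $\containsPointer{\aarray+i}{-}\cdot\containsPointer{\aarray+j}{-}\cdot(\ff \sepcon \singleton{\aarray+j}{y} \sepcon \singleton{\aarray+i}{z})$ instead pins the current contents to $y$ and $z$ and drops the constraints $y=\beta$, $z=\alpha$ entirely; if you push it through the two lookups you obtain an expectation forcing both cells to hold the \emph{same} value (the content of $\aarray+i$), not the claimed right-hand side, so the computation as written would derive a false identity.

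The source of the error is the attempted ``collapse'' of $\validpointer{\aarray+j} \sepcon \bigl(\singleton{\aarray+j}{y} \sepimp \cdots\bigr)$ via Lemma~\ref{lem:wand-reynolds}: that lemma applies only when the carved-out and re-added singletons specify the \emph{same} content, which is precisely what fails in the mutation pattern $\validpointer{\ee} \sepcon (\singleton{\ee}{\ee'} \sepimp \cdots)$. Relatedly, the lookups do not ``substitute $y\mapsto\alpha$ and $z\mapsto\beta$'': $\ASSIGNH{y}{\aarray+i}$ and $\ASSIGNH{z}{\aarray+j}$ bind $y,z$ to the \emph{pre}-state contents of $\aarray+i$ and $\aarray+j$, which the constraints $y=\beta$, $z=\alpha$ from the corrected intermediate then force to be $\beta$ and $\alpha$ respectively, yielding $\containsPointer{\aarray+i}{\beta}\cdot\containsPointer{\aarray+j}{\alpha}\cdot(\ff\sepcon\validpointer{\aarray+i}\sepcon\validpointer{\aarray+j}) \eeq \ff \sepcon \singleton{\aarray+j}{\alpha} \sepcon \singleton{\aarray+i}{\beta}$ as desired. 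With the intermediate expectation corrected in this way (and the zero cases handled as you describe, matching the paper's case distinction $\sk(\aarray+i)\notin\dom{\hh}$ or $v_j\neq\alpha$ or $v_i\neq\beta$), your algebraic variant of the argument does go through; but you would also need to justify the mutation-pattern simplification directly from the definitions of $\sepcon$ and $\sepimp$, since the paper provides no ready-made lemma for the mismatched-content case.
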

\begin{proof}
   Let $\cc$ denote the body of procedure $\ProcName{swap}$.
   Using the rules depicted in Table \ref{table:wp} and the alternative version of the rule for heap lookup, we compute
   \begin{align}
      &\wp{\ProcCall{swap}{\aarray, i , j}{}}{\ff \sepcon \singleton{\aarray+i}{\alpha} \sepcon \singleton{\aarray+j}{\beta}}
        \label{eqn:swap-wp-1} \\
      \eeqtag{$\ProcName{swap}$ is not recursive}
      &\wp{\cc}{\ff \sepcon \singleton{\aarray+i}{\alpha} \sepcon \singleton{\aarray+j}{\beta}} \\
      \eeqtag{Table \ref{table:wp}}
      &\displaystyle\sup_{v_1 \in \Ints} \containsPointer{\aarray +i}{v_1} 
        \cdot \displaystyle\sup_{v_2 \in \Ints} \containsPointer{\aarray +j}{v_2} \\
      &\quad \cdot \validpointer{\aarray +i} \sepcon \big( \singleton{\aarray +i}{v_2} \notag \\
      &\quad \sepimp
        \big( 
        \validpointer{\aarray +j} \sepcon \big( \singleton{\aarray +j}{v_1} \notag \\
      &\quad  \sepimp
        \big(
        \ff \sepcon \singleton{\aarray+i}{\alpha} \sepcon \singleton{\aarray+j}{\beta}
        \big)        
        \big)
        \big)
        \big) \notag
   \end{align}
Now let $(\sk,\hh) \in \States$. We distinguish the cases $s(\aarray +i) \not\in \dom{\hh} \vee \sk (\aarray +j) \not\in \dom{\hh}$ and $\sk (\aarray +i) \in \dom{\hh} \wedge \sk(\aarray +j) \in \dom{\hh}$. For the first case, we have
\begin{align}
   &\Big( \displaystyle\sup_{v_1 \in \Ints} \containsPointer{\aarray +i}{v_1} 
        \cdot \displaystyle\sup_{v_2 \in \Ints} \containsPointer{\aarray +j}{v_2} \\
      &\quad \cdot \validpointer{\aarray +i} \sepcon \big( \singleton{\aarray +i}{v_2} \notag \\
      &\quad \sepimp
        \big( 
        \validpointer{\aarray +j} \sepcon \big( \singleton{\aarray +j}{v_1} \notag \\
      &\quad  \sepimp
        \big(
        \ff \sepcon \singleton{\aarray+i}{\alpha} \sepcon \singleton{\aarray+j}{\beta}
        \big)        
        \big)
        \big)
        \big) \Big) (\sk, \hh ) \notag \\
    \eeqtag{By assumption: $\displaystyle\sup_{v_1 \in \Ints} \containsPointer{\aarray +i}{v_1}(\sk ,\hh ) = 0$ or $\displaystyle\sup_{v_2 \in \Ints} \containsPointer{\aarray +j}{v_2}(\sk ,\hh ) = 0$}
    & 0 \\
    \eeqtag{By assumption: $\singleton{\aarray+i}{\alpha}(\sk, \hh_1) = 0$ or $\singleton{\aarray+j}{\alpha}(s,h_1) = 0$ for $\hh_1 \sepcon \hh_2 = \hh$}
    &\big( \ff \sepcon \singleton{\aarray+i}{\alpha} \sepcon \singleton{\aarray+j}{\beta} \big) (\sk ,\hh ).
\end{align}
For the second case, suppose w.l.o.g.\ that $\hh(\sk(\aarray +i)) = v_i$ and $\hh(\sk(\aarray +j)) = v_j$.
The heap $h$ is thus of the form $\hh = \hh' \sepcon \HeapSet{\sk(\aarray +i) \mapsto v_i} \sepcon \HeapSet{\sk(\aarray +j) \mapsto v_j}$ 
for some heap $\hh'$. This yields
\begin{align}
   &\Big( \displaystyle\sup_{v_1 \in \Ints} \containsPointer{\aarray +i}{v_1} 
        \cdot \displaystyle\sup_{v_2 \in \Ints} \containsPointer{\aarray +j}{v_2} \\
      &\quad \cdot \validpointer{\aarray +i} \sepcon \big( \singleton{\aarray +i}{v_2} \notag \\
      &\quad \sepimp
        \big( 
        \validpointer{\aarray +j} \sepcon \big( \singleton{\aarray +j}{v_1} \notag \\
      &\quad  \sepimp
        \big(
        \ff \sepcon \singleton{\aarray+i}{\alpha} \sepcon \singleton{\aarray+j}{\beta}
        \big)        
        \big)
        \big)
        \big) \Big) (\sk, \hh ) \notag \\
   \eeqtag{By assumption: $\hh = \hh' \sepcon \HeapSet{\sk(\aarray +i) \mapsto v_i} \sepcon \HeapSet{\sk(\aarray +j) \mapsto v_j}$}
   &\Big( \displaystyle\sup_{v_1 \in \Ints} \containsPointer{\aarray +i}{v_1} 
        \cdot \displaystyle\sup_{v_2 \in \Ints} \containsPointer{\aarray +j}{v_2} \\
      &\quad \cdot \validpointer{\aarray +i} \sepcon \big( \singleton{\aarray +i}{v_2} \notag \\
      &\quad \sepimp
        \big( 
        \validpointer{\aarray +j} \sepcon \big( \singleton{\aarray +j}{v_1} \notag \\
      &\quad  \sepimp
        \big(
        \ff \sepcon \singleton{\aarray+i}{\alpha} \sepcon \singleton{\aarray+j}{\beta}
        \big)        
        \big)
        \big)
        \big) \Big) \notag \\
      & \quad (\sk, \hh' \sepcon \HeapSet{\sk(\aarray +i) \mapsto v_i} \sepcon \HeapSet{\sk(\aarray +j) \mapsto v_j} ) \notag \\
   \eeqtag{$\displaystyle\sup_{v_1 \in \Ints} \containsPointer{\aarray +i}{v_1}(\sk, \hh) = \containsPointer{\aarray +i}{v_i}(\sk, \hh) = 1$}
   &\Big(  \displaystyle\sup_{v_2 \in \Ints} \containsPointer{\aarray +j}{v_2} \\
      &\quad \cdot \validpointer{\aarray +i} \sepcon \big( \singleton{\aarray +i}{v_2} \notag \\
      &\quad \sepimp
        \big( 
        \validpointer{\aarray +j} \sepcon \big( \singleton{\aarray +j}{v_i} \notag \\
      &\quad  \sepimp
        \big(
        \ff \sepcon \singleton{\aarray+i}{\alpha} \sepcon \singleton{\aarray+j}{\beta}
        \big)        
        \big)
        \big)
        \big) \Big) \notag \\
      & \quad (\sk, \hh' \sepcon \HeapSet{\sk(\aarray +i) \mapsto v_i} \sepcon \HeapSet{\sk(\aarray +j) \mapsto v_j} ) \notag \\
   \eeqtag{$\displaystyle\sup_{v_2 \in \Ints} \containsPointer{\aarray +j}{v_2}(\sk, \hh) = \containsPointer{\aarray +j}{v_j}(\sk, \hh) = 1$}
   &\Big( \validpointer{\aarray +i} \sepcon \big( \singleton{\aarray +i}{v_j}  \\
      &\quad \sepimp
        \big( 
        \validpointer{\aarray +j} \sepcon \big( \singleton{\aarray +j}{v_i} \notag \\
      &\quad  \sepimp
        \big(
        \ff \sepcon \singleton{\aarray+i}{\alpha} \sepcon \singleton{\aarray+j}{\beta}
        \big)        
        \big)
        \big)
        \big) \Big) \notag \\
      & \quad (\sk, \hh' \sepcon \HeapSet{\sk(\aarray +i) \mapsto v_i} \sepcon \HeapSet{\sk(\aarray +j) \mapsto v_j} ) \notag \\
   \eeqtag{$\validpointer{\aarray +i}(\sk, \HeapSet{\sk(\aarray +i) \mapsto v_i}) = 1$}
   &\Big(  \singleton{\aarray +i}{v_j}  \\
      &\quad \sepimp
        \big( 
        \validpointer{\aarray +j} \sepcon \big( \singleton{\aarray +j}{v_i} \notag \\
      &\quad  \sepimp
        \big(
        \ff \sepcon \singleton{\aarray+i}{\alpha} \sepcon \singleton{\aarray+j}{\beta}
        \big)        
        \big)
        \big) \Big) \notag \\
      & \quad (\sk, \hh'   \sepcon \HeapSet{\sk(\aarray +j) \mapsto v_j} ) \notag \\
   \eeqtag{$s(\aarray +i ) \not\in \dom{\hh' \sepcon \HeapSet{\sk(\aarray +j) \mapsto v_j}}$}
   &\Big( 
        \validpointer{\aarray +j} \sepcon \big( \singleton{\aarray +j}{v_i}  \\
      &\quad  \sepimp
        \big(
        \ff \sepcon \singleton{\aarray+i}{\alpha} \sepcon \singleton{\aarray+j}{\beta}
        \big)        
        \big) \Big) \notag \\
      & \quad (\sk, \hh' \sepcon  \HeapSet{\sk(\aarray +i) \mapsto v_j} \sepcon \HeapSet{\sk(\aarray +j) \mapsto v_j} ) \notag \\
   \eeqtag{$\validpointer{\aarray +j}(\sk, \HeapSet{\sk(\aarray +j) \mapsto v_j}) = 1$}
   &\Big( 
       \singleton{\aarray +j}{v_i}  \\
      &\quad  \sepimp
        \big(
        \ff \sepcon \singleton{\aarray+i}{\alpha} \sepcon \singleton{\aarray+j}{\beta}   
        \big) \Big) \notag \\
      & \quad (\sk, \hh' \sepcon  \HeapSet{\sk(\aarray +i) \mapsto v_j} ) \notag \\
   \eeqtag{$s(\aarray +j ) \not\in \dom{\hh' \sepcon \HeapSet{\sk(\aarray +i) \mapsto v_j}}$}
   &\Big( 
        \ff \sepcon \singleton{\aarray+i}{\alpha} \sepcon \singleton{\aarray+j}{\beta}   
         \Big) \notag \\
      & \quad (\sk, \hh' \sepcon  \HeapSet{\sk(\aarray +i) \mapsto v_j} \sepcon \HeapSet{\sk(\aarray +j) \mapsto v_i} ). \notag 
\end{align}
Now, if $v_j \neq \alpha$ or $v_i \neq \beta$, then clearly
\begin{align}
   &\Big( 
        \ff \sepcon \singleton{\aarray+i}{\alpha} \sepcon \singleton{\aarray+j}{\beta}   
         \Big)  \\
      & \quad (\sk, \hh' \sepcon  \HeapSet{\sk(\aarray +i) \mapsto v_j} \sepcon \HeapSet{\sk(\aarray +j) \mapsto v_i} ) \notag \\
   \eeqtag{$v_j \neq \alpha$ or $v_i \neq \beta$}
   0 \\
   \eeqtag{$v_j \neq \alpha$ or $v_i \neq \beta$}
   &\Big( 
        \ff \sepcon \singleton{\aarray+j}{\alpha} \sepcon \singleton{\aarray+i}{\beta}   
         \Big)  \\
      & \quad (\sk, \hh' \sepcon  \HeapSet{\sk(\aarray +i) \mapsto v_i} \sepcon \HeapSet{\sk(\aarray +j) \mapsto v_j} ). \notag \\
   \eeqtag{By assumption: $\hh = \hh' \sepcon  \HeapSet{\sk(\aarray +i) \mapsto v_i} \sepcon \HeapSet{\sk(\aarray +j) \mapsto v_j}$}
   &\Big( 
        \ff \sepcon \singleton{\aarray+j}{\alpha} \sepcon \singleton{\aarray+i}{\beta}   
         \Big)   (\sk, \hh ). 
\end{align}
Otherwise, i.e.\ if $v_j = \alpha$ and $v_i = \beta$, then

\begin{align}
   &\Big( 
        \ff \sepcon \singleton{\aarray+i}{\alpha} \sepcon \singleton{\aarray+j}{\beta}   
         \Big)  \\
      & \quad(\sk, \hh' \sepcon  \HeapSet{\sk(\aarray +i) \mapsto v_j} \sepcon \HeapSet{\sk(\aarray +j) \mapsto v_i} ) \notag \\
   \eeqtag{$\singleton{\aarray+i}{\alpha}(\sk,\HeapSet{\sk(\aarray +i) \mapsto v_j}) = 1$, if $v_j = \alpha$}
   &\Big( 
        \ff \sepcon \singleton{\aarray+j}{\beta}   
         \Big)  \\
      & \quad (\sk, \hh' \sepcon \HeapSet{\sk(\aarray +j) \mapsto v_i} ) \notag \\
   \eeqtag{$\singleton{\aarray+j}{\beta}(\sk,\HeapSet{\sk(\aarray +j) \mapsto v_i}) = 1$, if $v_i = \beta$}
   &
        \ff    
      (\sk, \hh'  )  \\
   \eeqtag{$\singleton{\aarray+j}{\alpha}(\sk,\HeapSet{\sk(\aarray +j) \mapsto v_j}) = 1$, if $v_j = \alpha$}
   &\Big( 
        \ff  \sepcon \singleton{\aarray+j}{\alpha}
         \Big)  \\
      &  \quad (\sk, \hh' \sepcon \HeapSet{\sk(\aarray +j) \mapsto v_j}  ) \notag \\
   \eeqtag{$\singleton{\aarray+i}{\beta}(\sk,\HeapSet{\sk(\aarray +i) \mapsto v_i}) = 1$, if $v_i = \beta$}
   &\Big( 
        \ff  \sepcon \singleton{\aarray+j}{\alpha} \sepcon \singleton{\aarray+i}{\beta}
         \Big)  \\
      &\quad (\sk, \hh' \sepcon \HeapSet{\sk(\aarray +i) \mapsto v_i} \sepcon \HeapSet{\sk(\aarray +j) \mapsto v_j}  ). \notag \\
   \eeqtag{By assumption: $\hh = \hh' \sepcon  \HeapSet{\sk(\aarray +i) \mapsto v_i} \sepcon \HeapSet{\sk(\aarray +j) \mapsto v_j}$}
   &\Big( 
        \ff \sepcon \singleton{\aarray+j}{\alpha} \sepcon \singleton{\aarray+i}{\beta}   
         \Big)  \quad (\sk, \hh ). 
\end{align}
This completes the proof.
\end{proof}
\begin{lemma}
  \label{lem:swap-i-equal-j}
   Let $\ff \in \E$ \CHANGED{such that $y,z \notin \Vars(\ff)$}. We have
   \begin{align*}
      &\wp{\ProcCall{swap}{\aarray, i , j}{}}{\iverson{i=j} \cdot \big(\ff \sepcon \singleton{\aarray+i}{\alpha} \big)} \\
   {}\eeq{}~& \iverson{i=j} \cdot \big(\ff \sepcon \singleton{\aarray+i}{\alpha} \big)
   \end{align*}
\end{lemma}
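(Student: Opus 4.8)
\textbf{Proof plan for Lemma~\ref{lem:swap-i-equal-j}.}
The plan is to reduce the claim to a calculation with the $\wpsymbol$-rules from Table~\ref{table:wp} applied to the (non-recursive) body $\cc$ of procedure $\ProcName{swap}$, exactly as in the proof of Lemma~\ref{lem:swap-i-unequal-j}, but exploiting the guard $\iverson{i = j}$ to collapse everything to the identity. Since $\ProcName{swap}$ is not recursive, $\wp{\ProcCall{swap}{\aarray, i, j}{}}{\fg} = \wp{\cc}{\fg}$ where $\cc$ is the sequence $\COMPOSE{\ASSIGNH{y}{\aarray+i}}{\COMPOSE{\ASSIGNH{z}{\aarray+j}}{\COMPOSE{\HASSIGN{\aarray+i}{z}}{\HASSIGN{\aarray+j}{y}}}}$. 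First I would unfold $\wp{\cc}{\iverson{i=j} \cdot (\ff \sepcon \singleton{\aarray+i}{\alpha})}$ using the rules in Table~\ref{table:wp}, applying the alternative heap-lookup rule (via Lemma~\ref{lem:wand-reynolds}) for the two $\ASSIGNH{\cdot}{\cdot}$ statements and the heap-mutation rule for the two $\HASSIGN{\cdot}{\cdot}$ statements, together with $\ff\subst{y}{\cdot}\subst{z}{\cdot} = \ff$ since $y, z \notin \Vars(\ff)$.

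The key simplification is that when $\sk(i) = \sk(j)$, the addresses $\aarray+i$ and $\aarray+j$ denote the same cell. Starting from a postexpectation supported on heaps where that single cell holds $\alpha$ (plus an $\ff$-part), the first lookup $\ASSIGNH{y}{\aarray+i}$ forces $y = \alpha$, the second lookup $\ASSIGNH{z}{\aarray+j}$ forces $z = \alpha$ as well (same address), the mutation $\HASSIGN{\aarray+i}{z}$ writes $z = \alpha$ back to that cell (no change), and $\HASSIGN{\aarray+j}{y}$ writes $y = \alpha$ back again (no change). So on states with $\sk(i) = \sk(j)$, the net effect on the heap is the identity, and the precondition collapses to $\ff \sepcon \singleton{\aarray+i}{\alpha}$ itself. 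On states with $\sk(i) \neq \sk(j)$, the factor $\iverson{i=j}$ is $0$ both before and after (it is a pure expectation unaffected by the statements of $\cc$, which modify only $y$, $z$, and the heap), so both sides evaluate to $0$. A clean way to organize this is to pull $\iverson{i=j}$ out front using the pure frame rule (Theorem~\ref{thm:batz}), reducing the obligation to $\iverson{i=j} \cdot \wp{\cc}{\ff \sepcon \singleton{\aarray+i}{\alpha}} = \iverson{i=j} \cdot (\ff \sepcon \singleton{\aarray+i}{\alpha})$, and then doing the heap computation pointwise on states satisfying $\sk(i) = \sk(j)$, distinguishing $\sk(\aarray+i) \in \dom{\hh}$ (where the argument above applies) from $\sk(\aarray+i) \notin \dom{\hh}$ (where both sides are $0$ since $\singleton{\aarray+i}{\alpha}$ vanishes on every subheap).

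The main obstacle will be bookkeeping the four-step chain of substitutions and $\sepcon/\sepimp$ manipulations correctly, since one must track that $y$ and $z$ both get value $\alpha$ and that the two subsequent mutations exactly restore the single shared cell; a subtle point is handling the intermediate states in which the cell $\aarray+i = \aarray+j$ momentarily holds $z = \alpha$ before being overwritten by $y = \alpha$ — this is fine precisely because both writes put $\alpha$ there. The argument is otherwise routine and parallels Lemma~\ref{lem:swap-i-unequal-j}, with the simplification that there is only one heap cell in play rather than two. I would also note explicitly that $\sk(i) = \sk(j)$ implies $\sk(\aarray+i) = \sk(\aarray+j)$, which is what makes the two lookups agree and the two mutations commute trivially.
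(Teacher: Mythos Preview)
Your proposal is correct and follows essentially the same route as the paper: unfold the $\wpsymbol$ of the (non-recursive) body of $\ProcName{swap}$ using Table~\ref{table:wp} and the alternative lookup rule, then do a pointwise case analysis on whether $\sk(\aarray+i) \in \dom{\hh}$, and in the allocated case trace through the four heap operations to see that the net effect is the identity when $\sk(i) = \sk(j)$.

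The one organizational difference is that you pull $\iverson{i=j}$ out in front via the pure frame rule (Theorem~\ref{thm:batz}) before unfolding, whereas the paper keeps $\iverson{i=j}$ inside the postexpectation throughout and consequently has to handle the $\iverson{i=j}(\sk,\hh)=0$ case explicitly, with a further split on whether $\sk(\aarray+j)\in\dom{\hh}$. Your organization is cleaner and cuts out those extra subcases; the core heap computation in the $\sk(i)=\sk(j)$ case is identical in both. One small point to sharpen in your write-up: your forward-execution narrative (``the first lookup forces $y=\alpha$'') presupposes the cell already holds $\alpha$, but the actual $\wpsymbol$ argument works for any initial cell content $v_i$ --- the two lookups set $y=z=v_i$ and the two mutations write $v_i$ back, so the heap is unchanged regardless of $v_i$, and then $y,z\notin\Vars(\ff)$ finishes it.
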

\begin{proof}
      Let $\cc$ denote the body of procedure $\ProcName{swap}$.
   Using the rules depicted in Table \ref{table:wp} and the alternative version of the rule for heap lookup, we compute
   \begin{align}
      &\wp{\ProcCall{swap}{\aarray, i , j}{}}{\iverson{i=j} \cdot \big(\ff \sepcon \singleton{\aarray+i}{\alpha} \big)}
        \label{eqn:swap-wp-1} \\
      \eeqtag{$\ProcName{swap}$ is not recursive}
      &\wp{\cc}{\iverson{i=j} \cdot \big(\ff \sepcon \singleton{\aarray+i}{\alpha} \big)} \\
      \eeqtag{Table \ref{table:wp}}
      &\displaystyle\sup_{v_1 \in \Ints} \containsPointer{\aarray +i}{v_1} 
        \cdot \displaystyle\sup_{v_2 \in \Ints} \containsPointer{\aarray +j}{v_2} \\
      &\quad \cdot \validpointer{\aarray +i} \sepcon \big( \singleton{\aarray +i}{v_2} \notag \\
      &\quad \sepimp
        \big( 
        \validpointer{\aarray +j} \sepcon \big( \singleton{\aarray +j}{v_1} \notag \\
      &\quad  \sepimp
        \big(
        \iverson{i=j} \cdot \big(\ff \sepcon \singleton{\aarray+i}{\alpha} \big)
        \big)        
        \big)
        \big)
        \big) \notag
   \end{align}
Now let $(\sk,\hh) \in \States$. We distinguish the cases $s(\aarray +i) \not\in \dom{\hh}$ and $\sk (\aarray +i) \in \dom{\hh}$. For the first case, we have
\begin{align}
   &\Big( \displaystyle\sup_{v_1 \in \Ints} \containsPointer{\aarray +i}{v_1} 
        \cdot \displaystyle\sup_{v_2 \in \Ints} \containsPointer{\aarray +j}{v_2} \\
      &\quad \cdot \validpointer{\aarray +i} \sepcon \big( \singleton{\aarray +i}{v_2} \notag \\
      &\quad \sepimp
        \big( 
        \validpointer{\aarray +j} \sepcon \big( \singleton{\aarray +j}{v_1} \notag \\
      &\quad  \sepimp
        \big(
        \iverson{i=j} \cdot \big(\ff \sepcon \singleton{\aarray+i}{\alpha} \big)
        \big)        
        \big)
        \big)
        \big) \Big) (\sk, \hh ) \notag \\
    \eeqtag{By assumption: $\displaystyle\sup_{v_1 \in \Ints} \containsPointer{\aarray +i}{v_1}(\sk ,\hh ) = 0$}
    & 0 \\
    \eeqtag{By assumption: $\singleton{\aarray+i}{\alpha}(\sk, \hh_1) = 0$ for all $\hh_1 \sepcon \hh_2 = h$}
    &\big( \iverson{i=j} \cdot \big(\ff \sepcon \singleton{\aarray+i}{\alpha} \big) \big) (\sk ,\hh ).
\end{align}
For the second case, i.e.\ $\sk (\aarray +i) \in \dom{\hh}$, suppose w.l.o.g.\ that $\hh(\sk(\aarray +i)) = v_i$.
The heap $h$ is thus of the form $h = h' \sepcon \HeapSet{\sk(\aarray +i) \mapsto v_i}$ for some heap $h'$.
Again, we distinguish two cases: $\iverson{i = j}(s,h)=0$ and $\iverson{i = j}(s,h)=1$.
If $\iverson{i = j}(s,h)=0$, then either $\sk (\aarray +j) \not\in \dom{\hh}$ or $\sk (\aarray +j) \in \dom{\hh}$.
For $\sk (\aarray +j) \not\in \dom{\hh}$, we have
\begin{align}
   &\Big( \displaystyle\sup_{v_1 \in \Ints} \containsPointer{\aarray +i}{v_1} 
        \cdot \displaystyle\sup_{v_2 \in \Ints} \containsPointer{\aarray +j}{v_2} \\
      &\quad \cdot \validpointer{\aarray +i} \sepcon \big( \singleton{\aarray +i}{v_2} \notag \\
      &\quad \sepimp
        \big( 
        \validpointer{\aarray +j} \sepcon \big( \singleton{\aarray +j}{v_1} \notag \\
      &\quad  \sepimp
        \big(
        \iverson{i=j} \cdot \big(\ff \sepcon \singleton{\aarray+i}{\alpha} \big)
        \big)        
        \big)
        \big)
        \big) \Big) (\sk, \hh ) \notag \\
   \eeqtag{$\displaystyle\sup_{v_1 \in \Ints} \containsPointer{\aarray +i}{v_1}(\sk, \hh ) = \containsPointer{\aarray +i}{v_i} = 1$}
   &\Big(  \displaystyle\sup_{v_2 \in \Ints} \containsPointer{\aarray +j}{v_2} \\
      &\quad \cdot \validpointer{\aarray +i} \sepcon \big( \singleton{\aarray +i}{v_2} \notag \\
      &\quad \sepimp
        \big( 
        \validpointer{\aarray +j} \sepcon \big( \singleton{\aarray +j}{v_i} \notag \\
      &\quad  \sepimp
        \big(
        \iverson{i=j} \cdot \big(\ff \sepcon \singleton{\aarray+i}{\alpha} \big)
        \big)        
        \big)
        \big)
        \big) \Big) (\sk, \hh ) \notag \\
   \eeqtag{By assumption: $\sk (\aarray +j) \not\in \dom{\hh}$}
   &0 \\
   \eeqtag{By assumption: $\iverson{i = j}(s,h)=0$}
   &\big( \iverson{i=j} \cdot \big(\ff \sepcon \singleton{\aarray+i}{\alpha} \big) \big) (\sk ,\hh ).
\end{align}
For $\sk (\aarray +j) \in \dom{\hh}$, suppose w.l.o.g.\ that $\hh (\sk (\aarray +j) ) = v_j$, which implies that the heap 
$\hh$ is of the form $\hh = \hh'' \sepcon \HeapSet{\sk(\aarray +i) \mapsto v_i} \sepcon \HeapSet{\sk(\aarray +j) \mapsto v_j}$ for some heap $\hh''$.
This yields
\begin{align}
 &\Big( \displaystyle\sup_{v_1 \in \Ints} \containsPointer{\aarray +i}{v_1} 
        \cdot \displaystyle\sup_{v_2 \in \Ints} \containsPointer{\aarray +j}{v_2} \\
      &\quad \cdot \validpointer{\aarray +i} \sepcon \big( \singleton{\aarray +i}{v_2} \notag \\
      &\quad \sepimp
        \big( 
        \validpointer{\aarray +j} \sepcon \big( \singleton{\aarray +j}{v_1} \notag \\
      &\quad  \sepimp
        \big(
         \iverson{i=j} \cdot \big(\ff \sepcon \singleton{\aarray+i}{\alpha}  \big)
        \big)        
        \big)
        \big)
        \big) \Big) (\sk, \hh ) \notag \\
   \eeqtag{By assumption: $\hh = \hh'' \sepcon \HeapSet{\sk(\aarray +i) \mapsto v_i} \sepcon \HeapSet{\sk(\aarray +j) \mapsto v_j}$}
   &\Big( \displaystyle\sup_{v_1 \in \Ints} \containsPointer{\aarray +i}{v_1} 
        \cdot \displaystyle\sup_{v_2 \in \Ints} \containsPointer{\aarray +j}{v_2} \\
      &\quad \cdot \validpointer{\aarray +i} \sepcon \big( \singleton{\aarray +i}{v_2} \notag \\
      &\quad \sepimp
        \big( 
        \validpointer{\aarray +j} \sepcon \big( \singleton{\aarray +j}{v_1} \notag \\
      &\quad  \sepimp
        \big(
         \iverson{i=j} \cdot \big(\ff \sepcon \singleton{\aarray+i}{\alpha} \big)
        \big)        
        \big)
        \big)
        \big) \Big) \notag \\
      & \quad (\sk, \hh'' \sepcon \HeapSet{\sk(\aarray +i) \mapsto v_i} \sepcon \HeapSet{\sk(\aarray +j) \mapsto v_j} ) \notag \\
   \eeqtag{$\displaystyle\sup_{v_1 \in \Ints} \containsPointer{\aarray +i}{v_1}(\sk, \hh) = \containsPointer{\aarray +i}{v_i}(\sk, \hh) = 1$}
   &\Big(  \displaystyle\sup_{v_2 \in \Ints} \containsPointer{\aarray +j}{v_2} \\
      &\quad \cdot \validpointer{\aarray +i} \sepcon \big( \singleton{\aarray +i}{v_2} \notag \\
      &\quad \sepimp
        \big( 
        \validpointer{\aarray +j} \sepcon \big( \singleton{\aarray +j}{v_i} \notag \\
      &\quad  \sepimp
        \big(
         \iverson{i=j} \cdot \big(\ff \sepcon \singleton{\aarray+i}{\alpha}  \big)
        \big)        
        \big)
        \big)
        \big) \Big) \notag \\
      & \quad (\sk, \hh'' \sepcon \HeapSet{\sk(\aarray +i) \mapsto v_i} \sepcon \HeapSet{\sk(\aarray +j) \mapsto v_j} ) \notag \\
   \eeqtag{$\displaystyle\sup_{v_2 \in \Ints} \containsPointer{\aarray +j}{v_2}(\sk, \hh) = \containsPointer{\aarray +j}{v_j}(\sk, \hh) = 1$}
   &\Big( \validpointer{\aarray +i} \sepcon \big( \singleton{\aarray +i}{v_j}  \\
      &\quad \sepimp
        \big( 
        \validpointer{\aarray +j} \sepcon \big( \singleton{\aarray +j}{v_i} \notag \\
      &\quad  \sepimp
        \big(
         \iverson{i=j} \cdot \big(\ff \sepcon \singleton{\aarray+i}{\alpha}  \big)
        \big)        
        \big)
        \big)
        \big) \Big) \notag \\
      & \quad (\sk, \hh'' \sepcon \HeapSet{\sk(\aarray +i) \mapsto v_i} \sepcon \HeapSet{\sk(\aarray +j) \mapsto v_j} ) \notag \\
   \eeqtag{$\validpointer{\aarray +i}(\sk, \HeapSet{\sk(\aarray +i) \mapsto v_i}) = 1$}
   &\Big(  \singleton{\aarray +i}{v_j}  \\
      &\quad \sepimp
        \big( 
        \validpointer{\aarray +j} \sepcon \big( \singleton{\aarray +j}{v_i} \notag \\
      &\quad \sepimp
        \big(
         \iverson{i=j} \cdot \big(\ff \sepcon \singleton{\aarray+i}{\alpha}  \big)
        \big)        
        \big)
        \big) \Big) \notag \\
      & \quad (\sk, \hh''   \sepcon \HeapSet{\sk(\aarray +j) \mapsto v_j} ) \notag \\
   \eeqtag{$s(\aarray +i ) \not\in \dom{\hh' \sepcon \HeapSet{\sk(\aarray +j) \mapsto v_j}}$}
   &\Big( 
        \validpointer{\aarray +j} \sepcon \big( \singleton{\aarray +j}{v_i}  \\
      &\quad  \sepimp
        \big(
         \iverson{i=j} \cdot \big(\ff \sepcon \singleton{\aarray+i}{\alpha}  \big)
        \big)        
        \big) \Big) \notag \\
      & \quad (\sk, \hh'' \sepcon  \HeapSet{\sk(\aarray +i) \mapsto v_j} \sepcon \HeapSet{\sk(\aarray +j) \mapsto v_j} ) \notag \\
   \eeqtag{$\validpointer{\aarray +j}(\sk, \HeapSet{\sk(\aarray +j) \mapsto v_j}) = 1$}
   &\Big( 
       \singleton{\aarray +j}{v_i}  \\
      &\quad  \sepimp
        \big(
         \iverson{i=j} \cdot \big(\ff \sepcon \singleton{\aarray+i}{\alpha} \big) 
        \big) \Big) \notag \\
      & \quad (\sk, \hh'' \sepcon  \HeapSet{\sk(\aarray +i) \mapsto v_j} ) \notag \\
   \eeqtag{$s(\aarray +j ) \not\in \dom{\hh' \sepcon \HeapSet{\sk(\aarray +i) \mapsto v_j}}$}
   &\Big( 
        \iverson{i=j} \cdot \big(\ff \sepcon \singleton{\aarray+i}{\alpha}  \big)  
         \Big) \notag \\
      & \quad (\sk, \hh'' \sepcon  \HeapSet{\sk(\aarray +i) \mapsto v_j} \sepcon \HeapSet{\sk(\aarray +j) \mapsto v_i} ) \\
   \eeqtag{By assumption: $\iverson{i=j}(\sk, \hh'' \sepcon  \HeapSet{\sk(\aarray +i) \mapsto v_j} \sepcon \HeapSet{\sk(\aarray +j) \mapsto v_i} ) = 0$}
   & 0 \\
   \eeqtag{By assumption: $\iverson{i=j}(\sk, \hh) = 0$}
   &\Big( \iverson{i=j} \cdot \big(\ff \sepcon \singleton{\aarray+i}{\alpha} \big) \Big) (\sk, \hh).
\end{align}
Finally, if $\iverson{i =j}(\sk, \hh) =1$, we get
\begin{align}
 &\Big( \displaystyle\sup_{v_1 \in \Ints} \containsPointer{\aarray +i}{v_1} 
        \cdot \displaystyle\sup_{v_2 \in \Ints} \containsPointer{\aarray +j}{v_2} \\
      &\quad \cdot \validpointer{\aarray +i} \sepcon \big( \singleton{\aarray +i}{v_2} \notag \\
      &\quad \sepimp
        \big( 
        \validpointer{\aarray +j} \sepcon \big( \singleton{\aarray +j}{v_1} \notag \\
      &\quad  \sepimp
        \big(
         \iverson{i=j} \cdot \big(\ff \sepcon \singleton{\aarray+i}{\alpha}  \big)
        \big)        
        \big)
        \big)
        \big) \Big) (\sk, \hh ) \notag \\
   \eeqtag{By assumption: $\hh = \hh' \sepcon \HeapSet{\sk(\aarray +i) \mapsto v_i}$}
   &\Big( \displaystyle\sup_{v_1 \in \Ints} \containsPointer{\aarray +i}{v_1} 
        \cdot \displaystyle\sup_{v_2 \in \Ints} \containsPointer{\aarray +j}{v_2} \\
      &\quad \cdot \validpointer{\aarray +i} \sepcon \big( \singleton{\aarray +i}{v_2} \notag \\
      &\quad \sepimp
        \big( 
        \validpointer{\aarray +j} \sepcon \big( \singleton{\aarray +j}{v_1} \notag \\
      &\quad  \sepimp
        \big(
         \iverson{i=j} \cdot \big(\ff \sepcon \singleton{\aarray+i}{\alpha} \big)
        \big)        
        \big)
        \big)
        \big) \Big) \notag \\
      & \quad (\sk, \hh' \sepcon \HeapSet{\sk(\aarray +i) \mapsto v_i}  ) \notag \\
   \eeqtag{$\displaystyle\sup_{v_1 \in \Ints} \containsPointer{\aarray +i}{v_1}(\sk, \hh) = \containsPointer{\aarray +i}{v_i}(\sk, \hh) = 1$}
   &\Big(  \displaystyle\sup_{v_2 \in \Ints} \containsPointer{\aarray +j}{v_2} \\
      &\quad \cdot \validpointer{\aarray +i} \sepcon \big( \singleton{\aarray +i}{v_2} \notag \\
      &\quad \sepimp
        \big( 
        \validpointer{\aarray +j} \sepcon \big( \singleton{\aarray +j}{v_i} \notag \\
      &\quad  \sepimp
        \big(
         \iverson{i=j} \cdot \big(\ff \sepcon \singleton{\aarray+i}{\alpha}  \big)
        \big)        
        \big)
        \big)
        \big) \Big) \notag \\
      & \quad (\sk, \hh' \sepcon \HeapSet{\sk(\aarray +i) \mapsto v_i} ) \notag \\
   \eeqtag{Since $s(i) = s(j)$: $\displaystyle\sup_{v_2 \in \Ints} \containsPointer{\aarray +j}{v_2} (\sk, \hh) =  \displaystyle\sup_{v_1 \in \Ints} \containsPointer{\aarray +i}{v_1}(\sk, \hh)$}
   &\Big( \validpointer{\aarray +i} \sepcon \big( \singleton{\aarray +i}{v_i}  \\
      &\quad \sepimp
        \big( 
        \validpointer{\aarray +j} \sepcon \big( \singleton{\aarray +j}{v_i} \notag \\
      &\quad  \sepimp
        \big(
         \iverson{i=j} \cdot \big(\ff \sepcon \singleton{\aarray+i}{\alpha}  \big)
        \big)        
        \big)
        \big)
        \big) \Big) \notag \\
      & \quad (\sk, \hh' \sepcon \HeapSet{\sk(\aarray +i) \mapsto v_i}  ) \notag \\
   \eeqtag{$\validpointer{\aarray +i}(\sk, \HeapSet{\sk(\aarray +i) \mapsto v_i}) = 1$}
   &\Big(  \singleton{\aarray +i}{v_i}  \\
      &\quad \sepimp
        \big( 
        \validpointer{\aarray +j} \sepcon \big( \singleton{\aarray +j}{v_i} \notag \\
      &\quad \sepimp
        \big(
         \iverson{i=j} \cdot \big(\ff \sepcon \singleton{\aarray+i}{\alpha}  \big)
        \big)        
        \big)
        \big) \Big) \notag \\
      & \quad (\sk, \hh'  ) \notag \\
   \eeqtag{$s(\aarray +i ) \not\in \dom{\hh'}$}
   &\Big( 
        \validpointer{\aarray +j} \sepcon \big( \singleton{\aarray +j}{v_i}  \\
      &\quad  \sepimp
        \big(
         \iverson{i=j} \cdot \big(\ff \sepcon \singleton{\aarray+i}{\alpha}  \big)
        \big)        
        \big) \Big) \notag \\
      & \quad (\sk, \hh' \sepcon  \HeapSet{\sk(\aarray +i) \mapsto v_i}  ) \notag \\
   \eeqtag{Since $s(i) = s(j)$: $\validpointer{\aarray +j}(\sk, \HeapSet{\sk(\aarray +i) \mapsto v_i}) = 1$}
   &\Big( 
       \singleton{\aarray +j}{v_i}  \\
      &\quad  \sepimp
        \big(
         \iverson{i=j} \cdot \big(\ff \sepcon \singleton{\aarray+i}{\alpha} \big) 
        \big) \Big) \notag \\
      & \quad (\sk, \hh'  ) \notag \\
   \eeqtag{Since $s(i) = s(j)$: $s(\aarray +j ) \not\in \dom{\hh'}$}
   &\Big( 
        \iverson{i=j} \cdot \big(\ff \sepcon \singleton{\aarray+i}{\alpha}  \big)  
         \Big) \notag \\
      & \quad (\sk, \hh' \sepcon  \HeapSet{\sk(\aarray +i) \mapsto v_i}) \\
   \eeqtag{By assumption: $h = \hh' \sepcon  \HeapSet{\sk(\aarray +i) \mapsto v_i}$}
   &\Big( 
         \iverson{i=j} \cdot \big(\ff \sepcon \singleton{\aarray+i}{\alpha}  \big)  
         \Big) \notag  (\sk, \hh ) 
\end{align}
This completes the proof.
\end{proof}

\newpage
\section{Additional Simple Inference Rules}\label{app:sec:misc}

This section collects a few rather straightforward facts to compute with expectations in \QSL.

\begin{lemma}\label{thm:misc:single-sepimp:mult}
    $\singleton{\ee}{\ee'} \sepimp (\ff \cdot \fg) \eeq (\singleton{\ee}{\ee'} \sepimp \ff) \cdot (\singleton{\ee}{\ee'} \sepimp \fg)$.
\end{lemma}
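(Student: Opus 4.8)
The plan is to prove the identity pointwise, i.e.\ fix an arbitrary state $(\sk,\hh)$ and show that both sides agree there. Unfolding the definition of $\sepimp$ for a points-to predicate (Definition ``Quantitative Separating Implication''), the left-hand side equals
$
\inf_{\hh'} \setcomp{(\ff \cdot \fg)(\sk,\hh \sepcon \hh')}{\hh' \disjoint \hh \textnormal{ and } (\sk,\hh') \models \ee \mapsto \ee'}
$,
while the right-hand side is the product of two such infima over the \emph{same} index set. The crucial structural observation is that the index set
$\setcomp{\hh'}{\hh' \disjoint \hh \textnormal{ and } (\sk,\hh') \models \ee \mapsto \ee'}$
is \emph{either empty or a singleton}: by definition of $\singleton{\ee}{\ee'}$, any $\hh'$ satisfying it must have $\dom{\hh'} = \{\sk(\ee)\}$ and $\hh'(\sk(\ee)) = \sk(\ee')$, so $\hh'$ is uniquely determined, and its admissibility additionally requires $\sk(\ee) \notin \dom{\hh}$.

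First I would handle the empty case: if there is no admissible $\hh'$ (either $\sk(\ee) \notin \Nats_{>0}$ or $\sk(\ee) \in \dom{\hh}$), then every infimum above is taken over the empty set and hence equals $\infty$. So the left-hand side is $\infty$ and the right-hand side is $\infty \cdot \infty = \infty$; here I should be slightly careful about the convention for multiplication on $\PosRealsInf$ and simply note $\infty \cdot \infty = \infty$. Second, in the singleton case, let $\hh'_0$ be the unique admissible heap. Then each infimum collapses to the single value obtained at $\hh'_0$, so the left-hand side evaluates to $(\ff \cdot \fg)(\sk,\hh \sepcon \hh'_0) = \ff(\sk,\hh\sepcon\hh'_0)\cdot\fg(\sk,\hh\sepcon\hh'_0)$ by definition of pointwise multiplication, and the right-hand side evaluates to $\ff(\sk,\hh\sepcon\hh'_0)\cdot\fg(\sk,\hh\sepcon\hh'_0)$ as well. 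Equality is then immediate.

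I would package the argument as a short case distinction exactly as above, citing Definition ``Quantitative Separating Implication'' for unfolding $\sepimp$ and the definition of the points-to predicate $\singleton{\ee}{\ee'}$ for the singleton observation. This is essentially the same ``maximum/infimum over a set of size at most one'' device used in Lemma~\ref{thm:domain-exact:unique-partitioning} and Lemma~\ref{lem:wand-reynolds}, so the style should mirror those proofs. I expect no genuine obstacle; the only delicate point is getting the $\infty$-arithmetic convention right in the empty case, and making sure the statement is read over the domain $(\E,{\preceq})$ where $\singleton{\ee}{\ee'}$ is a genuine predicate so that the restriction in the definition of $\sepimp$ applies. A secondary minor point is to make explicit that ``admissible $\hh'$ exists'' is equivalent to $\sk(\ee) \in \Nats_{>0} \setminus \dom{\hh}$, which is what drives the two cases.
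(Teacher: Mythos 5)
Your proof is correct and follows essentially the same route as the paper's: unfold $\sepimp$, observe that the set of admissible extensions $\hh'$ has cardinality at most one, and split the (empty-or-singleton) infimum into a product — the paper compresses your two cases into the single annotation $|\{\hh' \mid \hh \disjoint \hh',\, (\sk,\hh') \models \singleton{\ee}{\ee'}\}| \leq 1$, with your explicit treatment of the empty case (where $\inf \emptyset = \infty$ and $\infty \cdot \infty = \infty$) being the part it leaves implicit. No changes needed.
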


\begin{proof}
\begin{align}
        & \singleton{\ee}{\ee'} \sepimp (\ff \cdot \fg) \\
        \eeqtag{Definition of $\sepimp$}
        & \lambda(\sk,\hh)\mydot \inf_{\hh'} \{ (\ff \cdot \fg)(\sk,\hh \sepcon \hh') ~|~ \hh \disjoint \hh', \sk,\hh' \models \singleton{\ee}{\ee'} \} \\
        \eeqtag{algebra} 
        & \lambda(\sk,\hh)\mydot \inf_{\hh'} \{ (\ff(\sk,\hh \sepcon \hh') \cdot \fg(\sk,\hh \sepcon \hh') ~|~ \hh \disjoint \hh', \sk,\hh' \models \singleton{\ee}{\ee'} \} \\
        \eeqtag{$|\{\hh' ~|~ \hh \disjoint \hh', \sk,\hh' \models \singleton{\ee}{\ee'}\}| \leq 1$}
        & \lambda(\sk,\hh)\mydot \inf_{\hh'} \{ (\ff(\sk,\hh \sepcon \hh') ~|~ \hh \disjoint \hh', \sk,\hh' \models \singleton{\ee}{\ee'} \} \\
        & \quad \cdot \inf_{\hh'} \{ (\ff(\sk,\hh \sepcon \hh') \cdot \fg(\sk,\hh \sepcon \hh') ~|~ \hh \disjoint \hh', \sk,\hh' \models \singleton{\ee}{\ee'} \} \notag \\
        \eeqtag{Definition of $\sepimp$}
        & (\singleton{\ee}{\ee'} \sepimp \ff) \cdot (\singleton{\ee}{\ee'} \sepimp \fg).
\end{align}
\end{proof}

\begin{lemma}\label{thm:misc:sepimp-contains}
        $\singleton{\ee}{\ee'} \sepimp \ff \eeq \singleton{\ee}{\ee'} \sepimp \containsPointer{\ee}{\ee'} \cdot \ff$.
\end{lemma}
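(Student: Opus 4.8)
\textbf{Proof plan for Lemma~\ref{thm:misc:sepimp-contains}.} The plan is to unfold both sides of the claimed equality $\singleton{\ee}{\ee'} \sepimp \ff = \singleton{\ee}{\ee'} \sepimp \containsPointer{\ee}{\ee'} \cdot \ff$ pointwise and show that, on every state $(\sk,\hh)$, the two infima range over exactly the same quantities. The crucial observation is that the infimum defining $\singleton{\ee}{\ee'} \sepimp \underdot{\phantom{\ff}}$ is taken over all heap extensions $\hh'$ with $\hh \disjoint \hh'$ and $(\sk,\hh') \models \singleton{\ee}{\ee'}$; for every such $\hh'$ we have $\dom{\hh'} = \{\sk(\ee)\}$ and $\hh'(\sk(\ee)) = \sk(\ee')$, so the combined heap $\hh \sepcon \hh'$ \emph{contains} the cell $\sk(\ee) \mapsto \sk(\ee')$. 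Hence $\containsPointer{\ee}{\ee'}(\sk, \hh \sepcon \hh') = 1$ for every admissible $\hh'$, which makes $\containsPointer{\ee}{\ee'} \cdot \ff$ and $\ff$ agree on all heaps that actually occur in the infimum.

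First I would fix an arbitrary state $(\sk,\hh)$ and expand the right-hand side using the definition of $\sepimp$ (Definition of Quantitative Separating Implication):
\begin{align*}
        \left(\singleton{\ee}{\ee'} \sepimp \containsPointer{\ee}{\ee'} \cdot \ff\right)(\sk,\hh)
        \eeq \inf_{\hh'} \setcomp{\left(\containsPointer{\ee}{\ee'} \cdot \ff\right)(\sk, \hh \sepcon \hh')}{\hh' \disjoint \hh ~\textnormal{ and }~ (\sk,\hh') \models \singleton{\ee}{\ee'}}.
\end{align*}
Then I would argue that for every $\hh'$ satisfying the side conditions $\hh' \disjoint \hh$ and $(\sk,\hh') \models \singleton{\ee}{\ee'}$, the definition of the points-to predicate $\singleton{\ee}{\ee'}$ forces $\dom{\hh'} = \{\sk(\ee)\}$ (with $\sk(\ee) \in \Nats_{>0}$) and $\hh'(\sk(\ee)) = \sk(\ee')$; consequently $\sk(\ee) \in \dom{\hh \sepcon \hh'}$ and $(\hh \sepcon \hh')(\sk(\ee)) = \sk(\ee')$, which is exactly the condition for $\containsPointer{\ee}{\ee'}(\sk, \hh \sepcon \hh') = 1$ (using $\containsPointer{\ee}{\ee'} = \singleton{\ee}{\ee'} \sepcon 1$ and the characterization of that intuitionistic predicate given after Theorem~\ref{thm:intuitionistification}). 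Therefore $\left(\containsPointer{\ee}{\ee'} \cdot \ff\right)(\sk, \hh \sepcon \hh') = 1 \cdot \ff(\sk, \hh \sepcon \hh') = \ff(\sk, \hh \sepcon \hh')$ for every such $\hh'$, so the two sets over which the infima are taken coincide element-by-element, yielding equality of the infima; if the set is empty, both sides evaluate to $\infty = \inf \emptyset$ and the equality holds trivially. Folding the definition of $\sepimp$ back in on the left then finishes the argument.

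I do not expect any real obstacle here: the whole argument is a one-line observation ("the extension heap is exactly a single cell $\ee \mapsto \ee'$, hence the intuitionistic predicate $\containsPointer{\ee}{\ee'}$ is automatically true on the conjoined heap") wrapped in routine bookkeeping of the definitions. The only point requiring mild care is the empty-set / $\infty$ corner case, which I would dispatch explicitly to keep the proof honest, but it is immediate from $\inf \emptyset = \infty$ in the lattice $(\E, {\preceq})$. In the write-up I would present this as a short chain of equalities at a fixed $(\sk,\hh)$, tagged by "Definition of $\sepimp$", "$(\sk,\hh') \models \singleton{\ee}{\ee'}$ implies $\containsPointer{\ee}{\ee'}(\sk,\hh\sepcon\hh') = 1$", and "Definition of $\sepimp$" again.
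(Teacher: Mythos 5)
Your proposal is correct and follows essentially the same route as the paper's proof: the paper rewrites $\ff$ as $1 \cdot \ff$, unfolds $\sepimp$, and replaces the factor $1$ by $\containsPointer{\ee}{\ee'}$ under the infimum precisely because every admissible extension $\hh'$ satisfies $\singleton{\ee}{\ee'}$, so the conjoined heap makes $\containsPointer{\ee}{\ee'}$ evaluate to $1$ — exactly your key observation. Your explicit treatment of the empty-set / $\inf \emptyset = \infty$ case is a harmless extra remark; since both infima range over the same set of heaps, it is covered automatically.
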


\begin{proof}
\begin{align}
        & \singleton{\ee}{\ee'} \sepimp \ff \\
        \eeqtag{algebra}
        & \singleton{\ee}{\ee'} \sepimp (1 \cdot \ff) \\
        \eeqtag{Definition of $\sepimp$}
        & \lambda(\sk,\hh)\mydot \inf_{\hh'} \left\{ (1 \cdot \ff)(\sk,\hh \sepcon \hh') ~\middle|~ \hh \disjoint \hh', \sk,\hh' \models \singleton{\ee}{\ee'} \right\} \\
        \eeqtag{algebra}
        & \lambda(\sk,\hh)\mydot \inf_{\hh'} \left\{ (\underbrace{\containsPointer{\ee}{\ee'}}_{\eeq 1} \cdot \ff)(\sk,\hh \sepcon \hh') ~\middle|~ \hh \disjoint \hh', \sk,\hh' \models \singleton{\ee}{\ee'} \right\} \\
        \eeqtag{Definition of $\sepimp$}
        & \singleton{\ee}{\ee'} \sepimp \containsPointer{\ee}{\ee'} \cdot \ff.
\end{align}
\end{proof}

\begin{lemma}\label{thm:misc:sepimp-sepcon}
        $\singleton{\ee}{\ee'} \sepimp (\singleton{\ee}{\ee'} \sepcon \ff) \eeq \containsPointer{\ee}{-} \cdot \infty + (1-\containsPointer{\ee}{-}) \cdot \ff$.
\end{lemma}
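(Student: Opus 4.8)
The plan is to prove Lemma~\ref{thm:misc:sepimp-sepcon} by a pointwise case analysis on whether the address $\sk(\ee)$ is already allocated in the heap $\hh$, i.e.\ whether $\containsPointer{\ee}{-}(\sk,\hh) = 1$ or $\containsPointer{\ee}{-}(\sk,\hh) = 0$. The right-hand side is designed exactly so that it equals $\infty$ in the first case and $\ff(\sk,\hh)$ in the second, so the bulk of the work is computing the left-hand side $\left(\singleton{\ee}{\ee'} \sepimp (\singleton{\ee}{\ee'} \sepcon \ff)\right)(\sk,\hh)$ in each case.

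First I would unfold the definition of $\sepimp$: by Definition of quantitative separating implication, the left-hand side evaluated at $(\sk,\hh)$ is
\begin{align*}
        \inf_{\hh'} \setcomp{\left(\singleton{\ee}{\ee'} \sepcon \ff\right)(\sk, \hh \sepcon \hh')}{\hh' \disjoint \hh ~\text{and}~ (\sk,\hh') \models \singleton{\ee}{\ee'}}.
\end{align*}
In the case $\containsPointer{\ee}{-}(\sk,\hh) = 1$, the address $\sk(\ee)$ lies in $\dom{\hh}$, so there is \emph{no} heap $\hh'$ with $\hh' \disjoint \hh$ and $(\sk,\hh') \models \singleton{\ee}{\ee'}$ (such an $\hh'$ would need $\dom{\hh'} = \{\sk(\ee)\}$, contradicting disjointness). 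Hence the infimum is over the empty set, which by our conventions on $\PosRealsInf$ equals $\infty$, matching the right-hand side $1 \cdot \infty + 0 \cdot \ff = \infty$. In the case $\containsPointer{\ee}{-}(\sk,\hh) = 0$, there is a \emph{unique} admissible $\hh'$, namely $\hh' = \HeapSet{\sk(\ee) \mapsto \sk(\ee')}$, so the infimum collapses to the single value $\left(\singleton{\ee}{\ee'} \sepcon \ff\right)(\sk, \hh \sepcon \hh')$. I then compute this using the definition of $\sepcon$: since $\singleton{\ee}{\ee'}$ is domain-exact and $\sk(\ee) \notin \dom{\hh}$, the only partition of $\hh \sepcon \hh'$ that makes $\singleton{\ee}{\ee'}$ non-zero on its first component is $(\hh', \hh)$, giving the value $\singleton{\ee}{\ee'}(\sk,\hh') \cdot \ff(\sk,\hh) = 1 \cdot \ff(\sk,\hh) = \ff(\sk,\hh)$, which matches the right-hand side $0 \cdot \infty + 1 \cdot \ff(\sk,\hh) = \ff(\sk,\hh)$.

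The final step is to assemble the two cases into a single pointwise identity: writing $b = \containsPointer{\ee}{-}(\sk,\hh) \in \{0,1\}$, both sides equal $b \cdot \infty + (1-b) \cdot \ff(\sk,\hh)$, hence they agree everywhere. I do not anticipate a genuine obstacle here; the one point requiring a little care is the empty-infimum convention $\inf \emptyset = \infty$ in the lattice $\PosRealsInf$ (exactly as used in the discussion of backward compatibility of $\sepimp$), and making sure that in the non-allocated case the admissible extension $\hh'$ really is unique and that the $\sepcon$ on the extended heap likewise has a unique non-vanishing partition. Both are immediate from domain-exactness of $\singleton{\ee}{\ee'}$ (Lemma~\ref{thm:domain-exact:unique-partitioning}) and the fact that $\singleton{\ee}{\ee'}$ forces $\dom{\hh'} = \{\sk(\ee)\}$.
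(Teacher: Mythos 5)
Your proof is correct and takes essentially the same route as the paper's: unfold the definitions of $\sepimp$ and $\sepcon$, note that any admissible extension $\hh'$ is forced to be $\HeapSet{\sk(\ee) \mapsto \sk(\ee')}$ (so the infimum is over the empty set, hence $\infty$, exactly when $\sk(\ee) \in \dom{\hh}$, and otherwise collapses to $\ff(\sk,\hh)$ via the unique non-vanishing partition), and read off the right-hand side. The only cosmetic difference is that you perform an explicit upfront case split on $\containsPointer{\ee}{-}(\sk,\hh)$ whereas the paper carries out one computation and observes the empty-versus-singleton dichotomy at the end.
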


\begin{proof}
\begin{align}
        & \singleton{\ee}{\ee'} \sepimp (\singleton{\ee}{\ee'} \sepcon \ff) \\
        \eeqtag{Definition of $\sepimp$}
        & \lambda(\sk,\hh)\mydot \inf_{\hh'} \left\{ (\singleton{\ee}{\ee'} \sepcon \ff)(\sk,\hh \sepcon \hh') ~\middle|~ \hh \disjoint \hh', \sk,\hh' \models \singleton{\ee}{\ee'} \right\} \\
        \eeqtag{Definition of $\sepcon$}
        & \lambda(\sk,\hh)\mydot \inf_{\hh'} \left\{ \max_{\hh_1, \hh_2} \setcomp{ \singleton{\ee}{\ee'}(\sk,\hh_1) \cdot \ff(\sk,\hh_2) }{ \hh \sepcon \hh' = \hh_1 \sepcon \hh_2 } ~\middle|~ \hh \disjoint \hh', \sk,\hh' \models \singleton{\ee}{\ee'} \right\} \\
        \eeqtag{By definition of $\singleton{\ee}{\ee'}$ the $\max$ is attained for $\hh_1 = \hh'$}
        & \lambda(\sk,\hh)\mydot \inf_{\hh'} \underbrace{\left\{ \ff(\sk,\hh) ~\middle|~ \hh \disjoint \hh', \hh' = \singleton{\sk(\ee)}{\sk(\ee')}  \right\}}_{ \eeq M } \\
        \eeqtag{ $M = \emptyset$ iff $\sk(\ee) \in \dom{\hh}$ }
        & \containsPointer{\ee}{-} \cdot \infty + (1-\containsPointer{\ee}{-}) \cdot \ff.
\end{align}
\end{proof}

\begin{lemma}\label{thm:plus-to-max}
    Let $\ff,\fg \in \E$ such that $\ff \cdot \fg = 0$. Then $\ff + \fg = \max\{\ff,\fg\}$.
\end{lemma}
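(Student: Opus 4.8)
\textbf{Proof plan for Lemma~\ref{thm:plus-to-max}.} The statement is purely pointwise, so the plan is to fix an arbitrary state $(\sk,\hh) \in \States$ and show that $(\ff + \fg)(\sk,\hh) = \max\{\ff,\fg\}(\sk,\hh)$, working entirely with the real (or $\infty$-valued) numbers $a = \ff(\sk,\hh)$ and $b = \fg(\sk,\hh)$. The hypothesis $\ff \cdot \fg = 0$ evaluated at $(\sk,\hh)$ gives $a \cdot b = 0$, hence $a = 0$ or $b = 0$ (using that $\PosRealsInf$ has no nonzero zero-divisors, which holds with the convention $0 \cdot \infty = 0$ that the paper already uses, e.g.\ in the $\sepcon$ calculations).

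First I would dispose of the case $a = 0$: then $a + b = b = \max\{0, b\}$ since $b \geq 0$. Symmetrically, if $b = 0$, then $a + b = a = \max\{a, 0\}$. In either case the pointwise identity holds, and since $(\sk,\hh)$ was arbitrary, we conclude $\ff + \fg = \max\{\ff,\fg\}$ by extensionality of expectations. One subtlety worth a sentence: if both $a = 0$ and $b = 0$ the two cases agree trivially, so there is no inconsistency.

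This is essentially a one-line argument once the zero-divisor observation is made, so I do not anticipate any real obstacle; the only thing to be slightly careful about is the treatment of $\infty$. Concretely, if $a = \infty$ then the hypothesis forces $b = 0$ (since $\infty \cdot b = 0$ requires $b = 0$), and then $a + b = \infty = \max\{\infty, 0\}$, which is consistent. I would state the convention $0 \cdot \infty = 0$ explicitly (or note it is the one used throughout the paper) so that ``$ab = 0 \implies a = 0 \text{ or } b = 0$'' is unambiguous over $\PosRealsInf$. With that, the proof is complete.

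\begin{proof}
Fix a state $(\sk,\hh) \in \States$ and write $a = \ff(\sk,\hh)$ and $b = \fg(\sk,\hh)$, where $a, b \in \PosRealsInf$. By assumption $\ff \cdot \fg = 0$, so $a \cdot b = 0$. Over $\PosRealsInf$ (with the convention $0 \cdot \infty = 0$) this implies $a = 0$ or $b = 0$. If $a = 0$, then $a + b = b = \max\{0, b\}$ since $0 \leq b$. If $b = 0$, then $a + b = a = \max\{a, 0\}$ since $0 \leq a$. In both cases $(\ff + \fg)(\sk,\hh) = \max\{\ff, \fg\}(\sk,\hh)$. Since $(\sk,\hh)$ was arbitrary, $\ff + \fg = \max\{\ff, \fg\}$.
\end{proof}
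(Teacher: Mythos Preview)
Your proof is correct and follows essentially the same approach as the paper: both fix an arbitrary state, use the zero-product property to conclude one of the two values is $0$, and then observe that in each case the sum coincides with the maximum. Your version is in fact slightly more careful in explicitly flagging the $0 \cdot \infty = 0$ convention needed for the zero-divisor step over $\PosRealsInf$.
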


\begin{proof}
    Let $(\sk,\hh) \in \States$. Then
    \begin{align}
              & (\ff \cdot \fg)(\sk,\hh) = 0 \\
    \iimpliestag{Definition of $\cdot$}
              & \ff(\sk,\hh) \cdot \fg(\sk,\hh) = 0 \\
    \iimpliestag{Algebra}
              & \ff(\sk,\hh) = 0 \qor \fg(\sk,\hh) = 0 \\
    \iimpliestag{Addition of both expectations}
              & (\ff+\fg)(\sk,\hh) = \begin{cases} \ff(\sk,\hh) & ~\text{if}~ \fg(\sk,\hh) = 0 \\ \fg(\sk,\hh) & ~\text{if}~ \ff(\sk,\hh) = 0 \end{cases} \\
    \iimpliestag{$0$ is least element of $\E$}
              & (\ff+\fg)(\sk,\hh) = \begin{cases} \ff(\sk,\hh) & ~\text{if}~ \ff(\sk,\hh) \geq \fg(\sk,\hh) \\ \fg(\sk,\hh) & ~\text{if}~ \fg(\sk,\hh) > \ff(\sk,\hh) \end{cases} \\
    \iimpliestag{Definition of $\max$}
              & (\ff+\fg)(\sk,\hh) = \max\{\ff,\fg\}(\sk,\hh)
    \end{align}
    Hence, $\ff + \fg = \max\{\ff,\fg\}$.
\end{proof}

\begin{lemma}\label{thm:single-pointer-wand-plus}
        Let $\ff,\fg \in \E$. Then 
        \begin{align*}
                \singleton{\ee}{\ee'} \sepimp (\ff + \fg) \eeq \singleton{\ee}{\ee'} \sepimp \ff + \singleton{\ee}{\ee'} \sepimp \fg.
        \end{align*}
\end{lemma}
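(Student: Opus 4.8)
The statement to prove is that the quantitative separating implication distributes over pointwise addition when the left-hand side is a \emph{single} points-to predicate $\singleton{\ee}{\ee'}$, i.e.
\begin{align*}
        \singleton{\ee}{\ee'} \sepimp (\ff + \fg) \eeq \singleton{\ee}{\ee'} \sepimp \ff + \singleton{\ee}{\ee'} \sepimp \fg .
\end{align*}
The key structural observation is that $\singleton{\ee}{\ee'}$ is a \emph{domain-exact} predicate: for a fixed stack $\sk$, the only heap $\hh'$ that can satisfy $(\sk,\hh') \models \singleton{\ee}{\ee'}$ is the singleton heap $\{\sk(\ee) \mapsto \sk(\ee')\}$ (which exists only when $\sk(\ee) \in \Nats_{>0}$). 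Hence for each state $(\sk,\hh)$, the set $\setcomp{\hh'}{\hh \disjoint \hh' \text{ and } (\sk,\hh') \models \singleton{\ee}{\ee'}}$ has \emph{at most one} element, so the infimum in the definition of $\sepimp$ is either over a singleton (and is therefore attained) or over the empty set (and is $\infty$).

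The plan is to fix a state $(\sk,\hh)$ and argue by case distinction on this set. First I would unfold the definition of $\sepimp$ on the left-hand side, obtaining $\inf_{\hh'} \setcomp{(\ff+\fg)(\sk,\hh\sepcon\hh')}{\hh\disjoint\hh',(\sk,\hh')\models\singleton{\ee}{\ee'}}$, and likewise unfold the two summands on the right-hand side. In the case where no admissible $\hh'$ exists (either $\sk(\ee) \notin \Nats_{>0}$ or $\sk(\ee) \in \dom{\hh}$), every one of these three infima is $\inf\emptyset = \infty$, and since $\infty + \infty = \infty$ in $\PosRealsInf$, both sides equal $\infty$. In the complementary case, let $\hh'$ be the unique admissible heap; then each infimum collapses to a single term, and the equation reduces to the pointwise additivity of $+$ on real numbers, namely $(\ff+\fg)(\sk,\hh\sepcon\hh') = \ff(\sk,\hh\sepcon\hh') + \fg(\sk,\hh\sepcon\hh')$, which is immediate from the definition of $+$ on $\E$. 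One can streamline the bookkeeping by invoking Lemma~\ref{thm:domain-exact:unique-partitioning} (or rather its analog for the ``extension'' side), noting that $\singleton{\ee}{\ee'}$ is domain-exact; alternatively, phrase the whole argument in terms of the indicator $\containsPointer{\ee}{-}$, writing both sides as $\containsPointer{\ee}{-} \cdot \infty$ plus a $\containsPointer{\ee}{-}$-weighted contribution, exactly as in the proof of Theorem~\ref{thm:qsl:heap-size}.\ref{thm:qsl:heap-size:sepimp}.

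I expect the main (minor) obstacle to be handling the $\infty$ arithmetic cleanly: one must be careful that $\inf\emptyset = \infty$ on \emph{all three} infima simultaneously, which is exactly what the domain-exactness of $\singleton{\ee}{\ee'}$ guarantees — the emptiness condition ($\sk(\ee) \in \dom{\hh}$ or $\sk(\ee) \notin \Nats_{>0}$) depends only on $\singleton{\ee}{\ee'}$ and $\hh$, not on $\ff$ or $\fg$. There are no continuity or fixed-point subtleties here since no program transformer is involved; the entire argument is a two-line case split followed by a pointwise computation, in the same spirit as Lemmas~\ref{thm:misc:single-sepimp:mult} and~\ref{thm:misc:sepimp-contains} which exploit the same uniqueness-of-extension phenomenon for single points-to predicates.
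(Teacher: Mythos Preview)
Your proposal is correct and follows essentially the same approach as the paper: unfold the definition of $\sepimp$, observe that for a single points-to predicate there is \emph{at most one} admissible extension $\hh'$ (namely $\{\sk(\ee)\mapsto\sk(\ee')\}$), so the infimum of the sum splits into the sum of infima, and fold back. Your explicit treatment of the empty-extension case ($\infty+\infty=\infty$) is slightly more detailed than the paper's version, which leaves that case implicit in the ``at most one $\hh'$'' justification, but the argument is the same.
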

\begin{proof}
\begin{align}
        & \singleton{\ee}{\ee'} \sepimp (\ff + \fg) \\
        \eeqtag{Definition of $\sepimp$}
        & \lambda(\sk,\hh)\mydot \inf_{\hh'} \left\{ \ff(\sk,\hh \sepcon \hh') + \fg(\sk,\hh \sepcon \hh') ~|~ \hh \disjoint \hh', (\sk,\hh) \models \singleton{\ee}{\ee'} \right\} \\
        \eeqtag{there ex. at most one $\hh'$ s.t. $\hh \disjoint \hh'$ and $(\sk,\hh') \models \singleton{\ee}{\ee'}$, namely $\{ \sk(\ee) \mapsto \sk(\ee') \}$}
        & \lambda(\sk,\hh)\mydot \inf_{\hh'} \left\{ \ff(\sk,\hh \sepcon \hh') ~|~ \hh \disjoint \hh', (\sk,\hh) \models \singleton{\ee}{\ee'} \right\} \\
        & \qquad + \inf_{\hh'} \left\{ \fg(\sk,\hh \sepcon \hh') ~|~ \hh \disjoint \hh', (\sk,\hh) \models \singleton{\ee}{\ee'} \right\} \notag \\
        \eeqtag{Definition of $\sepimp$}
        & \singleton{\ee}{\ee'} \sepimp \ff + \singleton{\ee}{\ee'} \sepimp \fg.
\end{align}
\end{proof}

\begin{lemma}\label{thm:single-pointer-wand:sepcon}
    Let $\ff,\fg \in \E$. 
    Moreover, assume $\containsPointer{\ee}{\ee'} \cdot \fg = 0$ or $\containsPointer{\neg\ee}{\ee'} \cdot \ff = 0$. 
    Then
    \begin{align*}
            \singleton{\ee}{\ee'} \sepimp (\ff \sepcon \fg) 
            \eeq \containsPointer{\ee}{\ee'} \cdot \infty + (1-\containsPointer{\ee}{\ee'}) \cdot \left(\left( \singleton{e}{e'} \sepimp \ff\right) \sepcon \fg\right).
    \end{align*}
\end{lemma}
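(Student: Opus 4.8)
The statement to prove is Lemma~\ref{thm:single-pointer-wand:sepcon}: under the side condition that $\containsPointer{\ee}{\ee'} \cdot \fg = 0$ or $\containsPointer{\neg\ee}{\ee'} \cdot \ff = 0$, we have
\begin{align*}
        \singleton{\ee}{\ee'} \sepimp (\ff \sepcon \fg)
        \eeq \containsPointer{\ee}{\ee'} \cdot \infty + (1-\containsPointer{\ee}{\ee'}) \cdot \left(\left(\singleton{\ee}{\ee'} \sepimp \ff\right) \sepcon \fg\right).
\end{align*}
I would prove this pointwise, fixing a stack-heap pair $(\sk,\hh)$ and doing a case split on whether $\sk(\ee) \in \dom{\hh}$, i.e.\ on the value of $\containsPointer{\ee}{\ee'}$ evaluated at $(\sk,\hh)$. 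Actually the cleaner split is whether $\sk(\ee) \in \dom{\hh}$ at all: if so, then there is \emph{no} heap $\hh' \disjoint \hh$ with $(\sk,\hh') \models \singleton{\ee}{\ee'}$, so the infimum in $\singleton{\ee}{\ee'} \sepimp (\ff \sepcon \fg)$ is over the empty set, yielding $\infty$; and on the right-hand side $\containsPointer{\ee}{-}(\sk,\hh) = 1$, hence the first summand is $\infty$. (Note: I'd need to be slightly careful, since the RHS uses $\containsPointer{\ee}{\ee'}$, not $\containsPointer{\ee}{-}$ — if $\sk(\ee) \in \dom{\hh}$ but $\hh(\sk(\ee)) \neq \sk(\ee')$, then $\containsPointer{\ee}{\ee'}(\sk,\hh) = 0$ yet the LHS is still $\infty$ because $\hh$ and $\{\sk(\ee) \mapsto \sk(\ee')\}$ are not disjoint. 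This suggests the lemma as literally stated may require $\containsPointer{\ee}{\ee'}$ to be replaced by $\containsPointer{\ee}{-}$, or the side condition rules this out; I would check the authors' intent, and most likely the correct hypothesis makes $\containsPointer{\ee}{\ee'}$ behave like $\containsPointer{\ee}{-}$ here, or one simply works with the $\sk(\ee) \notin \dom{\hh}$ versus $\in \dom{\hh}$ dichotomy directly.)

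In the main case $\sk(\ee) \notin \dom{\hh}$, there is a \emph{unique} admissible extension $\hh' = \{\sk(\ee) \mapsto \sk(\ee')\}$, so the infimum collapses to a single term: $\singleton{\ee}{\ee'} \sepimp (\ff \sepcon \fg)$ evaluated at $(\sk,\hh)$ equals $(\ff \sepcon \fg)(\sk, \hh \sepcon \hh')$. Expanding the definition of $\sepcon$, this is $\max_{\hh_1,\hh_2}\{\ff(\sk,\hh_1) \cdot \fg(\sk,\hh_2) \mid \hh \sepcon \hh' = \hh_1 \sepcon \hh_2\}$. The key combinatorial observation is where the newly added cell $\hh'$ (with address $\sk(\ee)$) can go: either into the $\ff$-part $\hh_1$ or into the $\fg$-part $\hh_2$. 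The side condition is precisely what rules out one of these two possibilities in a value-carrying way — if $\containsPointer{\neg\ee}{\ee'}\cdot\ff = 0$ then $\ff$ vanishes unless its heap contains the cell $\sk(\ee)\mapsto\sk(\ee')$, forcing $\hh'$ into $\hh_1$; if $\containsPointer{\ee}{\ee'}\cdot\fg = 0$ then $\fg$ vanishes whenever its heap contains that cell, forcing $\hh'$ out of $\hh_2$ and hence into $\hh_1$. Either way, every partition achieving a nonzero product must put $\hh'$ into the $\ff$-part. So the max ranges only over partitions $\hh \sepcon \hh' = (\hh_1' \sepcon \hh') \sepcon \hh_2$ with $\hh = \hh_1' \sepcon \hh_2$, giving $\max_{\hh_1',\hh_2}\{\ff(\sk,\hh_1'\sepcon\hh')\cdot\fg(\sk,\hh_2) \mid \hh = \hh_1'\sepcon\hh_2\}$. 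Recognizing $\ff(\sk,\hh_1'\sepcon\hh') = (\singleton{\ee}{\ee'}\sepimp\ff)(\sk,\hh_1')$ (again using uniqueness of the extension and that $\sk(\ee)\notin\dom{\hh_1'}$), this is exactly $((\singleton{\ee}{\ee'}\sepimp\ff)\sepcon\fg)(\sk,\hh)$, which matches the RHS since $\containsPointer{\ee}{\ee'}(\sk,\hh)=0$ in this case.

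The main obstacle is the force-the-cell-into-the-$\ff$-part argument: I need to argue carefully that \emph{every} partition $\hh\sepcon\hh' = \hh_1\sepcon\hh_2$ either places $\hh'$ entirely in $\hh_1$, or contributes a zero product under the side condition — and conversely that restricting to the former loses nothing (i.e.\ $\leq$ and $\geq$ both hold for the maxima, which is routine once the support argument is in place). A subtle point is that $\hh'$ is a single cell with a fixed address $\sk(\ee)$, so it cannot be split, meaning $\hh'\subseteq\hh_1$ or $\hh'\subseteq\hh_2$ is forced; then the side-condition hypothesis eliminates the second alternative for the purpose of the maximum. I would also double-check the degenerate subcase where $\sk(\ee)\notin\Nats_{>0}$ (so $\singleton{\ee}{\ee'}$ can never be satisfied and the infimum is over $\emptyset$, giving $\infty$) — this should align with the RHS, and if not, it again points to the $\containsPointer{\ee}{\ee'}$-versus-$\containsPointer{\ee}{-}$ discrepancy that I would flag and resolve in consultation with the surrounding statements.
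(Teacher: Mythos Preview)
Your approach is essentially the paper's: unfold $\sepimp$ and $\sepcon$, observe that the only admissible extension is the single cell $\hh'=\{\sk(\ee)\mapsto\sk(\ee')\}$, case-split on whether such an $\hh'$ is disjoint from $\hh$, and in the nontrivial case use the side condition to argue that any partition of $\hh\sepcon\hh'$ with nonzero product must place $\hh'$ in the $\ff$-side (the paper compresses this into the one-line justification ``by assumption the $\max$ is $0$ \ldots\ or we have $\hh'\subseteq\hh_1$''). Your flag about $\containsPointer{\ee}{\ee'}$ versus $\containsPointer{\ee}{-}$ is well-taken: the paper's proof silently identifies ``no admissible $\hh'$'' with $\containsPointer{\ee}{\ee'}=1$, which overlooks the subcase $\sk(\ee)\in\dom{\hh}$ with $\hh(\sk(\ee))\neq\sk(\ee')$; as you suspect, the statement (and the paper's argument) really needs $\containsPointer{\ee}{-}$ there, and the uses of the lemma elsewhere in the paper are consistent with that reading.
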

\begin{proof}
\begin{align}
        & \singleton{\ee}{\ee'} \sepimp (\ff \sepcon \fg) \\
        \eeqtag{Definition of $\sepimp$}
        & \lambda(\sk,\hh)\mydot \inf_{\hh'} \setcomp{ (\ff \sepcon \fg)(\sk,\hh \sepcon \hh') }{ \hh \disjoint \hh', (\sk,\hh') \models \singleton{\ee}{\ee'} }  \\
        %
        \eeqtag{Definition of $\sepcon$}
        & \lambda(\sk,\hh)\mydot \inf_{\hh'} \setcomp{ \max_{\hh_1,\hh_2} \setcomp{ \ff(\sk,\hh_1) \cdot \fg(\sk,\hh_2) }{\hh \sepcon \hh' = \hh_1 \sepcon \hh_2} }{ \hh \disjoint \hh', (\sk,\hh') \models \singleton{\ee}{\ee'} }  \\
        %
        \eeqtag{$\hh' = \{ \sk(\ee) \mapsto \sk(\ee') \}$. By assumption the $\max$ is $0$ (if an $\hh'$ exists) or we have $\hh' \subseteq \hh_1$}
        & \lambda(\sk,\hh)\mydot \inf_{\hh'} \setcomp{ \max_{\hh_1,\hh_2} \setcomp{ \ff(\sk,\hh_1 \sepcon \hh') \cdot \fg(\sk,\hh_2) }{\hh = \hh_1 \sepcon \hh_2} }{ \hh \disjoint \hh', (\sk,\hh') \models \singleton{\ee}{\ee'} }  \\
        %
        \eeqtag{there is at most one choice for $\hh'$, namely $\hh' = \{ \sk(\ee) \mapsto \sk(\ee') \}$. Otherwise we get $\infty$.}
        & \containsPointer{\ee}{\ee'} \cdot \infty + (1-\containsPointer{\ee}{\ee'}) \cdot \lambda(\sk,\hh)\mydot \inf_{\hh'} \big\{ \notag \\
        & \qquad \max_{\hh_1,\hh_2} \setcomp{ \ff(\sk,\hh_1 \sepcon \hh') \cdot \fg(\sk,\hh_2) }{\hh = \hh_1 \sepcon \hh_2} \notag \\
        & ~\big|~ \hh \disjoint \hh', (\sk,\hh') \models \singleton{\ee}{\ee'} \big\} \notag \\
        \eeqtag{infimum over singleton set}
        & \containsPointer{\ee}{\ee'} \cdot \infty + (1-\containsPointer{\ee}{\ee'}) \cdot \lambda(\sk,\hh)\mydot \max_{\hh_1,\hh_2} \big\{ \notag \\
        & \qquad \inf_{\hh'} \setcomp{ \ff(\sk,\hh_1 \sepcon \hh') \cdot \fg(\sk,\hh_2) }{ \hh \disjoint \hh', (\sk,\hh') \models \singleton{\ee}{\ee'} } \notag \\
        & ~\big|~ \hh = \hh_1 \sepcon \hh_2 \big\} \notag \\
        \eeqtag{algebra}
        & \containsPointer{\ee}{\ee'} \cdot \infty + (1-\containsPointer{\ee}{\ee'}) \cdot \lambda(\sk,\hh)\mydot \max_{\hh_1,\hh_2} \big\{ \notag \\
        & \qquad \inf_{\hh'} \setcomp{ \ff(\sk,\hh_1 \sepcon \hh') }{ \hh \disjoint \hh', (\sk,\hh') \models \singleton{\ee}{\ee'} } \cdot \fg(\sk,\hh_2) \notag \\
        & ~\big|~ \hh = \hh_1 \sepcon \hh_2 \big\} \notag \\
        \eeqtag{Definition of $\sepimp$}
        & \containsPointer{\ee}{\ee'} \cdot \infty \\
        & + (1-\containsPointer{\ee}{\ee'}) \cdot \lambda(\sk,\hh)\mydot \max_{\hh_1,\hh_2} \big\{ (\singleton{\ee}{\ee'} \sepimp \ff)(\sk,\hh_1) \cdot \fg(\sk,\hh_2) ~\big|~ \hh = \hh_1 \sepcon \hh_2 \big\} \notag \\
        \eeqtag{Definition of $\sepcon$}
        & \containsPointer{\ee}{\ee'} \cdot \infty + (1-\containsPointer{\ee}{\ee'}) \cdot \left(\left( \singleton{e}{e'} \sepimp \ff\right) \sepcon \fg\right).
\end{align}
\end{proof}

\begin{lemma}\label{thm:single-pointer-wand:pure}
  Let $\preda \in \E$ be a pure predicate. Then
  \begin{align*}
          \preda \cdot (\predb \sepimp \ff) \eeq \preda \cdot (\predb \sepimp \ff \cdot \preda)
  \end{align*}
\end{lemma}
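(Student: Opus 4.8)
The plan is to prove the identity pointwise, via a case distinction on the value of the pure predicate $\preda$ at the state under consideration, reducing everything to a comparison of the two infima that define $\sepimp$. Fix an arbitrary state $(\sk,\hh) \in \States$. Since $\preda$ is a predicate we have $\preda(\sk,\hh) \in \{0,1\}$, so only two cases arise; purity of $\preda$ (constancy in the heap argument) is the only structural fact I will use. In the case $\preda(\sk,\hh) = 0$ both sides vanish: $\bigl(\preda \cdot (\predb \sepimp \ff)\bigr)(\sk,\hh) = 0 \cdot (\predb \sepimp \ff)(\sk,\hh) = 0 = 0 \cdot (\predb \sepimp \ff \cdot \preda)(\sk,\hh) = \bigl(\preda \cdot (\predb \sepimp \ff \cdot \preda)\bigr)(\sk,\hh)$, where I invoke the convention $0 \cdot \infty = 0$ in $\PosRealsInf$ (needed because $\predb \sepimp \ff$ may be $\infty$ when no admissible heap extension exists).

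For the case $\preda(\sk,\hh) = 1$, both sides reduce to $(\predb \sepimp \ff)(\sk,\hh)$ and $(\predb \sepimp \ff \cdot \preda)(\sk,\hh)$, respectively, so it remains to show these coincide. Unfolding the definition of $\sepimp$, both are infima taken over the \emph{same} index set $\{\hh' \mid \hh' \disjoint \hh \text{ and } (\sk,\hh') \models \predb\}$. For any such $\hh'$, purity of $\preda$ gives $\preda(\sk,\hh \sepcon \hh') = \preda(\sk,\hh) = 1$, hence $(\ff \cdot \preda)(\sk,\hh \sepcon \hh') = \ff(\sk,\hh \sepcon \hh') \cdot 1 = \ff(\sk,\hh \sepcon \hh')$. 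Thus the families being minimised are literally equal, and therefore so are their infima — including the degenerate subcase where the index set is empty and both infima equal $\infty$. Combining the two cases yields the lemma.

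I do not expect a genuine obstacle: the argument is a routine case split on a $\{0,1\}$-valued quantity plus a single appeal to purity. The only point requiring care is the bookkeeping around the ``missing'' branches of $\sepimp$ — using $0 \cdot \infty = 0$ consistently when $\preda(\sk,\hh) = 0$, and noting that the empty-index-set situation is handled uniformly when $\preda(\sk,\hh) = 1$.
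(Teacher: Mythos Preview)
Your proof is correct and follows essentially the same approach as the paper: a pointwise argument using that $\preda$ is $\{0,1\}$-valued and pure to transport $\preda(\sk,\hh)$ inside the infimum defining $\sepimp$. The paper presents this as a single algebraic chain (duplicating $\preda$ via idempotence and pushing one copy into the infimum before invoking purity) rather than an explicit case split, but the content is identical.
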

\begin{proof}
\begin{align*}
        & \preda \cdot (\predb \sepimp \ff) \\
        \eeqtag{algebra}
        & \lambda(\sk,\hh)\mydot \preda(\sk,\hh) \cdot (\predb \sepimp \ff)(\sk,\hh) \\
        \eeqtag{$\preda$ is a predicate}
        & \lambda(\sk,\hh)\mydot \preda(\sk,\hh) \cdot \preda(\sk,\hh) \cdot (\predb \sepimp \ff)(\sk,\hh) \\
        \eeqtag{Definition of $\sepimp$}
        & \lambda(\sk,\hh)\mydot \preda(\sk,\hh) \cdot \preda(\sk,\hh) \cdot (\inf_{\hh'} \left\{ \ff(\sk,\hh \sepcon \hh') ~|~ \hh \disjoint \hh', (\sk,\hh') \models \predb \right\}) \\
        \eeqtag{$\preda(\sk,\hh)$ is a constant w.r.t. $\inf_{\hh'}$, algebra}
        & \lambda(\sk,\hh)\mydot \preda(\sk,\hh) \cdot (\inf_{\hh'} \left\{ \preda(\sk,\hh) \cdot \ff(\sk,\hh \sepcon \hh') ~|~ \hh \disjoint \hh', (\sk,\hh') \models \predb \right\}) \\
        \eeqtag{$\preda$ is pure}
        & \lambda(\sk,\hh)\mydot \preda(\sk,\hh) \cdot (\inf_{\hh'} \left\{ \preda(\sk,\hh \sepcon \hh') \cdot \ff(\sk,\hh \sepcon \hh') ~|~ \hh \disjoint \hh', (\sk,\hh') \models \predb \right\}) \\
        \eeqtag{Definition of $\sepimp$}
        & \lambda(\sk,\hh)\mydot \preda(\sk,\hh) \cdot (\predb \sepimp (\preda \cdot \ff))(\sk,\hh) \\
        \eeqtag{algebra}
        & \preda \cdot (\predb \sepimp \ff \cdot \preda).
\end{align*}
\end{proof}

\begin{lemma}\label{thm:list-length:sepcon}
  \begin{align*}
          \singleton{\ee}{\ee'} \sepcon \Len{\ee'}{\ee''} \ppreceq \iverson{\ee \neq \ee''} \cdot (\Len{\ee}{\ee''} - \Ls{\ee}{\ee''}).
  \end{align*}
\end{lemma}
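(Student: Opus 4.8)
The plan is to unfold the definition of $\Len{\ee'}{\ee''}$, recall the alternative characterization $\Len{\ee'}{\ee''} = \Ls{\ee'}{\ee''} \cdot \heapSize$ from Lemma~\ref{thm:ls-props}.\ref{thm:ls-props:char}, and then reason about what happens when we prepend a single cell $\singleton{\ee}{\ee'}$ in front of a list segment from $\ee'$ to $\ee''$. Intuitively, a heap carved as $\singleton{\ee}{\ee'} \sepcon (\text{list from }\ee'\text{ to }\ee'')$ is itself a list from $\ee$ to $\ee''$, one cell longer than the list from $\ee'$ to $\ee''$; so $\singleton{\ee}{\ee'} \sepcon \Len{\ee'}{\ee''}$ should be bounded by $\Len{\ee}{\ee''} - \Ls{\ee}{\ee''}$ (the $-\Ls{\ee}{\ee''}$ term subtracts the $+1$ for the prepended cell from $\Len{\ee}{\ee''}$, which counts all cells including the new head, and crucially $\Ls{\ee}{\ee''} = 1$ precisely when the heap is exactly such a list). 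The guard $\iverson{\ee \neq \ee''}$ appears because a list from $\ee$ to $\ee''$ with $\ee = \ee''$ is empty and cannot contain a cell at address $\ee$.

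First I would do a case split on whether the prepended cell actually extends the list. Using the recursive equation for $\Lensymbol$, namely $\Len{\ee}{\ee''} = \iverson{\ee \neq \ee''} \cdot \sup_{\gamma} \singleton{\ee}{\gamma} \sepcon (\Ls{\gamma}{\ee''} + \Len{\gamma}{\ee''})$, I would instantiate $\gamma = \ee'$ (which is legitimate since $\sup$ dominates any particular choice) to get
\begin{align*}
  \iverson{\ee \neq \ee''} \cdot \left( \singleton{\ee}{\ee'} \sepcon \Ls{\ee'}{\ee''} + \singleton{\ee}{\ee'} \sepcon \Len{\ee'}{\ee''} \right) \ppreceq \Len{\ee}{\ee''},
\end{align*}
where I used sub-distributivity of $\sepcon$ over $+$ (Theorem~\ref{thm:sep-con-distrib}.\ref{thm:sep-con-distrib:sepcon-over-plus}) backwards — actually here I need distributivity in the favorable direction, so I would instead start from $\Len{\ee}{\ee''}$ and lower-bound it by the $\gamma = \ee'$ summand, then split that summand. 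Rearranging, $\iverson{\ee \neq \ee''} \cdot \singleton{\ee}{\ee'} \sepcon \Len{\ee'}{\ee''} \ppreceq \Len{\ee}{\ee''} - \iverson{\ee \neq \ee''} \cdot \singleton{\ee}{\ee'} \sepcon \Ls{\ee'}{\ee''}$, and the key remaining fact is that $\singleton{\ee}{\ee'} \sepcon \Ls{\ee'}{\ee''} \ssucceq \iverson{\ee \neq \ee''} \cdot \Ls{\ee}{\ee''}$ — indeed equality holds for this particular $\gamma$-summand up to the $\iverson{\ee \neq \ee''}$ factor, since by the recursive definition of $\Lssymbol$, $\Ls{\ee}{\ee''} = \iverson{\ee \neq \ee''} \cdot \sup_{\gamma} \singleton{\ee}{\gamma} \sepcon \Ls{\gamma}{\ee''} \ssucceq \iverson{\ee \neq \ee''} \cdot \singleton{\ee}{\ee'} \sepcon \Ls{\ee'}{\ee''}$. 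Combining, and using that $\singleton{\ee}{\ee'}$ already forces $\ee \in \Nats_{>0}$ so prepending always gives $\ee \neq \ee''$ is not automatic — hence the guard is genuinely needed on the right — I obtain the claimed inequality.

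The main obstacle I anticipate is handling the interaction between the $\iverson{\ee \neq \ee''}$ guards and the separating conjunction cleanly, since $\singleton{\ee}{\ee'}$ on the left does not by itself guarantee $\ee \neq \ee''$ (we could have $\ee = \ee''$ with a self-pointing cell), so I must either argue that on such states $\Len{\ee'}{\ee''}$ contributes in a way still bounded by the right-hand side, or carry the guard through via the pure-expectation algebra (Theorem~\ref{thm:sep-con-algebra-pure}) and the fact that $\iverson{\ee \neq \ee''}$ is pure. A secondary subtlety is that the inequality involves a subtraction $\Len{\ee}{\ee''} - \Ls{\ee}{\ee''}$ in $\PosRealsInf$, which is truncated at zero; I need to confirm that whenever the left-hand side is positive we are in a heap where $\Ls{\ee}{\ee''} = 1$ and $\Len{\ee}{\ee''} \geq 1$, so the subtraction behaves as ordinary subtraction there, and elsewhere the left-hand side vanishes. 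I expect a clean heap-level argument (fix $(\sk,\hh)$, case on whether $\hh$ is exactly a list from $\ee$ to $\ee''$ after carving the head) to be the most transparent route, falling back on the algebraic identities only where convenient.
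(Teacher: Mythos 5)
Your route (unfolding the recursive equation for $\Lensymbol$ at $\gamma=\ee'$ and then rearranging a subtraction) is genuinely different from the paper's, which instead rewrites $\Len{\ee'}{\ee''}=\Ls{\ee'}{\ee''}\cdot\heapSize$ via Lemma~\ref{thm:ls-props}.\ref{thm:ls-props:char}, factors out $\heapSize-\containsPointer{\ee}{\ee'}$, and only then compares with $\Ls{\ee}{\ee''}$ --- but as written your chain does not go through. A minor point first: splitting the $\gamma=\ee'$ summand requires the \emph{full} distributivity law for the domain-exact expectation $\singleton{\ee}{\ee'}$ (Theorem~\ref{thm:sep-con-distrib}.\ref{thm:sep-con-distrib:sepcon-over-plus-full}); sub-distributivity points the wrong way, as you half-notice but never resolve. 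The serious problem is your ``key remaining fact'' $\singleton{\ee}{\ee'}\sepcon\Ls{\ee'}{\ee''}\succeq\iverson{\ee\neq\ee''}\cdot\Ls{\ee}{\ee''}$: this is indeed exactly what the subtraction step needs, but the justification you give proves the \emph{converse} (the supremum in the definition of $\Ls{\ee}{\ee''}$ dominates its $\gamma=\ee'$ summand, i.e. $\iverson{\ee\neq\ee''}\cdot\singleton{\ee}{\ee'}\sepcon\Ls{\ee'}{\ee''}\preceq\Ls{\ee}{\ee''}$), and the direction you actually need is false as a general entailment: for $\sk(\ee)=1$, $\sk(\ee')=5$, $\sk(\ee'')=0$ and $\hh=\{1\mapsto 0\}$ we have $\Ls{\ee}{\ee''}(\sk,\hh)=1$ but $(\singleton{\ee}{\ee'}\sepcon\Ls{\ee'}{\ee''})(\sk,\hh)=0$. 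It only becomes true after restricting pointwise to states where the left-hand side of the lemma is positive (there the head cell's content is forced to equal $\sk(\ee')$), i.e. you must fall back on the heap-level case analysis you defer to at the very end; the purely algebraic rearrangement is unsound.

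The ``main obstacle'' you flag is in fact fatal to the statement as given, so no amount of care will close that case: take $\sk(\ee)=\sk(\ee'')=1$, $\sk(\ee')=2$ and $\hh=\{1\mapsto 2,\,2\mapsto 1\}$; then $\Len{\ee'}{\ee''}(\sk,\{2\mapsto 1\})=1$, so the left-hand side equals $1$, while the right-hand side is $0$ because $\iverson{\ee\neq\ee''}=0$ (and $\Ls{\ee}{\ee''}$ rejects the cyclic heap anyway). For what it is worth, the paper's own derivation has the same blind spot: its step labelled ``Definition of $\Ls{\ee}{\ee''}$'' silently uses $\singleton{\ee}{\ee'}\sepcon\Ls{\ee'}{\ee''}\preceq\Ls{\ee}{\ee''}$, which also needs $\ee\neq\ee''$ and fails on this heap. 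The lemma is sound only in the form in which it is actually applied, namely with $\ee''=0$ (an address that can never be allocated, so $\singleton{\ee}{\ee'}$ forces $\sk(\ee)\in\Nats_{>0}\neq\sk(\ee'')$), or equivalently after multiplying the left-hand side by $\iverson{\ee\neq\ee''}$. With that guard added, your pointwise fallback does finish the job: on states where the left side is positive the heap is $\{\sk(\ee)\mapsto\sk(\ee')\}$ joined with a list from $\ee'$ to $\ee''$ of length $n$, and both sides evaluate to $n$.
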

\begin{proof}
\begin{align}
        & \singleton{\ee}{\ee'} \sepcon \Len{\ee'}{\ee''} \\
        \eeqtag{Lemma~\ref{thm:ls-props}.\ref{thm:ls-props:char}}
        & \singleton{\ee}{\ee'} \sepcon (\Ls{\ee'}{\ee''} \cdot \heapSize) \\
        \eeqtag{Definition of $\sepcon$}
        & \lambda(\sk,\hh)\mydot \max_{\hh_1,\hh_2} \setcomp{ \singleton{\ee}{\ee'}(\sk,\hh_1) \cdot (\Ls{\ee'}{\ee''} \cdot \heapSize)(\sk,\hh_2)}{\hh = \hh_1 \sepcon \hh_2} \\
        \eeqtag{algebra, Definition of $\heapSize$}
        & \lambda(\sk,\hh)\mydot \max_{\hh_1,\hh_2} \{ \singleton{\ee}{\ee'}(\sk,\hh_1) \cdot (\Ls{\ee'}{\ee''}(\sk,\hh_2) \cdot \underbrace{|\dom{\hh_2}|}_{\eeq |\dom{h}| - |\dom{h_1}|} ~|~ \hh = \hh_1 \sepcon \hh_2 \} \\
        \eeqtag{$\hh_1 = \{ \sk(\ee) \mapsto \sk(\ee') \}$ or $\max_{\hh_1,\hh_2} \ldots = 0$.}
        %
        & \lambda(\sk,\hh)\mydot \max_{\hh_1,\hh_2} \setcomp{\singleton{\ee}{\ee'}(\sk,\hh_1) \cdot \Ls{\ee'}{\ee''}(\sk,\hh_2) \cdot (|\dom{h}|-1)}{\hh = \hh_1 \sepcon \hh_2} \\
        \eeqtag{$\singleton{\ee}{\ee'} \cdot \ldots = \singleton{\ee}{\ee'} \cdot \ldots \cdot \containsPointer{\ee}{\ee'}$, algebra}
        & \lambda(\sk,\hh)\mydot \max_{\hh_1,\hh_2} \big\{ \\
        & \qquad \singleton{\ee}{\ee'}(\sk,\hh_1) \cdot \Ls{\ee'}{\ee''}(\sk,\hh_2) \cdot (|\dom{h}|-\containsPointer{\ee}{\ee'}(\sk,\hh)) \notag \\
        & ~\big|~ \hh = \hh_1 \sepcon \hh_2\} \\
        %
        \eeqtag{algebra}
        & \lambda(\sk,\hh)\mydot  (|\dom{h}|-\containsPointer{\ee}{\ee'}(\sk,\hh)) \\
        & \qquad \cdot \max_{\hh_1,\hh_2} \setcomp{ \singleton{\ee}{\ee'}(\sk,\hh_1) \cdot \Ls{\ee'}{\ee''}(\sk,\hh_2) }{ \hh = \hh_1 \sepcon \hh_2 } \notag \\
        \eeqtag{Definition of $\sepcon$, $\heapSize$}
        & (\heapSize-\containsPointer{\ee}{\ee'}) \cdot \left( \singleton{\ee}{\ee'} \sepcon \Ls{\ee'}{\ee''} \right) \\
        \ppreceqtag{Definition of $\Ls{\ee}{\ee''}$}
        & (\heapSize-\containsPointer{\ee}{\ee'}) \cdot \Ls{\ee}{\ee''} \\
        \eeqtag{algebra}
        & \Ls{\ee}{\ee''} \cdot \heapSize - \Ls{\ee}{\ee''} \cdot \containsPointer{\ee}{\ee'} \\
        \eeqtag{Definition of $\Ls{\ee}{\ee''}$, $\containsPointer{\ee}{\ee'}$ implies $\ee \neq \ee''$}
        & \Ls{\ee}{\ee''} \cdot \heapSize - \Ls{\ee}{\ee''} \cdot \iverson{\ee \neq \ee''} \cdot \underbrace{\containsPointer{\ee}{\ee'}}_{\preceq 1} \\
        \ppreceqtag{algebra}
        & \Ls{\ee}{\ee''} \cdot \heapSize - \Ls{\ee}{\ee''} \cdot \iverson{\ee \neq \ee''} \\
        \eeqtag{Lemma~\ref{thm:ls-props}.\ref{thm:ls-props:char}, Definition of $\Len{\ee}{\ee''}$}
        & \iverson{\ee \neq \ee''} \cdot \Len{\ee}{\ee''} - \Ls{\ee}{\ee''} \cdot \iverson{\ee \neq \ee''} \\
        \eeqtag{algebra}
        & \iverson{\ee \neq \ee''} \cdot (\Len{\ee}{\ee''} - \Ls{\ee}{\ee''}).
\end{align}
\end{proof}
\begin{lemma}\label{thm:list-length:sepimp}
\begin{align*}
        & \singleton{\ee}{\ee'} \sepimp \Len{\ee}{\ee''} \\
        \eeq & \containsPointer{\ee}{\ee'} \cdot \infty + (1-\containsPointer{\ee}{\ee'}) \cdot \iverson{\ee \neq \ee''} \cdot \left(\Ls{\ee'}{\ee''} + \Len{\ee'}{\ee''} \right).
\end{align*}
\end{lemma}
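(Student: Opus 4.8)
The plan is to unfold the definition of $\sepimp$ and then perform a case analysis on whether the address $\sk(\ee)$ is already allocated in the heap $\hh$. Concretely, fix a state $(\sk,\hh)$ and observe that
\[
  \left(\singleton{\ee}{\ee'} \sepimp \Len{\ee}{\ee''}\right)(\sk,\hh)
  \eeq \inf_{\hh'} \setcomp{\Len{\ee}{\ee''}(\sk,\hh \sepcon \hh')}{\hh' \disjoint \hh ~\textnormal{and}~ (\sk,\hh') \models \singleton{\ee}{\ee'}}.
\]
The only candidate for $\hh'$ is the singleton heap $\{\sk(\ee) \mapsto \sk(\ee')\}$, so the infimum is over a set that has at most one element. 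If $\sk(\ee) \in \dom{\hh}$, i.e.\ $\containsPointer{\ee}{\ee'}(\sk,\hh) = 1$ is possible (more precisely $\containsPointer{\ee}{-}(\sk,\hh) = 1$), then no disjoint extension $\hh'$ exists and the infimum over the empty set is $\infty$; this is exactly the first summand. If $\sk(\ee) \notin \dom{\hh}$, the extension exists and we are left with evaluating $\Len{\ee}{\ee''}$ on the heap $\hh \sepcon \{\sk(\ee) \mapsto \sk(\ee')\}$.

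For the second case, I would unfold the recursive definition of $\Len{\ee}{\ee''}$ once:
\[
  \Len{\ee}{\ee''} \eeq \iverson{\ee \neq \ee''} \cdot \sup_{\zc}~ \singleton{\ee}{\zc} \sepcon \left(\Ls{\zc}{\ee''} + \Len{\zc}{\ee''}\right).
\]
Evaluated in $\hh \sepcon \{\sk(\ee) \mapsto \sk(\ee')\}$, the separating conjunction $\singleton{\ee}{\zc} \sepcon (\ldots)$ forces the unique partition in which the single cell at $\sk(\ee)$ is carved out, pinning $\zc = \sk(\ee')$ and leaving $\hh$ for the remaining factor. So the supremum collapses to the single term $\left(\Ls{\ee'}{\ee''} + \Len{\ee'}{\ee''}\right)(\sk,\hh)$, provided $\sk(\ee) \neq \sk(\ee'')$; if $\sk(\ee) = \sk(\ee'')$ the Iverson bracket kills everything and we get $0$, which matches $\iverson{\ee \neq \ee''} \cdot (\ldots)$. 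Combining the two cases via the indicator $\containsPointer{\ee}{\ee'}$ (noting that $\containsPointer{\ee}{\ee'}$ and $\containsPointer{\ee}{-}$ behave identically on the relevant states, since the extension check only depends on $\dom{\hh}$) yields the stated identity. Several of the subexpression manipulations here are already packaged: Lemma~\ref{thm:single-pointer-wand:sepcon} handles exactly the shape $\singleton{\ee}{\ee'} \sepimp (\ff \sepcon \fg)$, and Lemma~\ref{thm:single-pointer-wand-plus} lets me push $\sepimp$ through the $+$ inside $\Len{}{}$.

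The cleanest route, rather than raw case analysis, is probably: first rewrite $\Len{\ee}{\ee''}$ by one unfolding, then use the characterization $\Len{\ee}{\ee''} = \Ls{\ee}{\ee''} \cdot \heapSize$ from Lemma~\ref{thm:ls-props}.\ref{thm:ls-props:char} together with the heap-size law in Theorem~\ref{thm:qsl:heap-size}.\ref{thm:qsl:heap-size:sepimp} to compute $\singleton{\ee}{\ee'} \sepimp \heapSize$, and distribute $\sepimp$ over the product via Lemma~\ref{thm:misc:single-sepimp:mult}. That reduces the whole thing to computing $\singleton{\ee}{\ee'} \sepimp \Ls{\ee}{\ee''}$ and assembling. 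I expect the main obstacle to be bookkeeping around the $\infty$ term: one has to be careful that $\infty \cdot 0$ and $\infty + x$ interact correctly in $\PosRealsInf$, and that the indicator $\containsPointer{\ee}{\ee'}$ versus $\containsPointer{\ee}{-}$ distinction does not matter because whenever $\sk(\ee) \in \dom{\hh}$ the extension set is empty regardless of the stored value. Handling the degenerate corner $\sk(\ee) = \sk(\ee'')$ uniformly (so that the $\iverson{\ee \neq \ee''}$ factor appears on the right-hand side exactly once and not doubled) is the other place where a routine-looking step can go wrong, so I would isolate it as a small sub-case rather than trying to absorb it into algebra.
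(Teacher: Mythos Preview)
Your direct semantic argument—unfold $\sepimp$, case-split on whether $\sk(\ee)\in\dom{\hh}$, then unfold $\Len{\ee}{\ee''}$ once on the extended heap—is correct and more elementary than the paper's route. The paper works purely algebraically inside the wand: it rewrites $\Len{\ee}{\ee''}$ as $\Ls{\ee}{\ee''}\cdot\heapSize$ (Lemma~\ref{thm:ls-props}.\ref{thm:ls-props:char}), slips a $\containsPointer{\ee}{\ee'}$ factor under the wand via Lemma~\ref{thm:misc:sepimp-contains}, unfolds $\Ls$ once so the argument becomes $\iverson{\ee\neq\ee''}\cdot\heapSize\cdot(\singleton{\ee}{\ee'}\sepcon\Ls{\ee'}{\ee''})$, distributes $\heapSize$ across $\sepcon$ using Theorem~\ref{thm:qsl:heap-size}.\ref{thm:qsl:heap-size:dist-full} (the $\sepcon$ distribution law, not the $\sepimp$ heap-size law you cite), regroups with Theorem~\ref{thm:sep-con-distrib}.\ref{thm:sep-con-distrib:sepcon-over-plus-full} into the shape $\singleton{\ee}{\ee'}\sepimp(\singleton{\ee}{\ee'}\sepcon\ldots)$, and cancels via Lemma~\ref{thm:misc:sepimp-sepcon}. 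Your alternative algebraic suggestion (split the product via Lemma~\ref{thm:misc:single-sepimp:mult} and invoke Theorem~\ref{thm:qsl:heap-size}.\ref{thm:qsl:heap-size:sepimp}) is a third viable route, but it would still require computing $\singleton{\ee}{\ee'}\sepimp\Ls{\ee}{\ee''}$ separately, which the paper sidesteps by keeping the product intact until the inner $\singleton{\ee}{\ee'}\sepcon(\cdot)$ emerges. Your pointwise approach needs none of the auxiliary lemmas; the paper's route stays within the equational calculus it is showcasing.

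One correction to your indicator remark: you are right that the empty-extension case is governed by $\containsPointer{\ee}{-}$, but $\containsPointer{\ee}{\ee'}$ and $\containsPointer{\ee}{-}$ do \emph{not} ``behave identically on the relevant states''—they disagree precisely when $\sk(\ee)\in\dom{\hh}$ with $\hh(\sk(\ee))\neq\sk(\ee')$, and there the left-hand side is $\infty$ while the right-hand side as stated is finite. The paper's own last step invokes Lemma~\ref{thm:misc:sepimp-sepcon}, whose conclusion carries $\containsPointer{\ee}{-}$, so the $\containsPointer{\ee}{\ee'}$ in the lemma statement appears to be a slip there as well; your instinct to flag it was correct, only the justification needs sharpening.
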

\begin{proof}
\begin{align}
        & \singleton{\ee}{\ee'} \sepimp \Len{\ee}{\ee''} \\
        \eeqtag{Lemma~\ref{thm:ls-props}.\ref{thm:ls-props:char}}
        & \singleton{\ee}{\ee'} \sepimp (\Ls{\ee}{\ee''} \cdot \heapSize) \\
        \eeqtag{Lemma~\ref{thm:misc:sepimp-contains}}
        & \singleton{\ee}{\ee'} \sepimp (\Ls{\ee}{\ee''} \cdot \heapSize \cdot \containsPointer{\ee}{\ee'}) \\
        \eeqtag{Definition of $\Ls{\ee}{\ee''}$}
        & \singleton{\ee}{\ee'} \sepimp \left(\left(\iverson{\ee = \ee''} \cdot \emp + \iverson{\ee \neq \ee''} \cdot \sup_{\za \in \Ints} \singleton{\ee}{\za} \sepcon \Ls{\za}{\ee''} \right)\cdot \heapSize \cdot \containsPointer{\ee}{\ee'} \right) \\
        \eeqtag{algebra}
        & \singleton{\ee}{\ee'} \sepimp \left(\iverson{\ee \neq \ee''} \cdot \heapSize \cdot \left( \singleton{\ee}{\ee'} \sepcon \Ls{\ee'}{\ee''} \right) \right) \\
        \eeqtag{Theorem~\ref{thm:qsl:heap-size}.\ref{thm:qsl:heap-size:dist-full}}
        & \singleton{\ee}{\ee'} \sepimp \left(\iverson{\ee \neq \ee''} \cdot \left( (\singleton{\ee}{\ee'} \cdot \underbrace{\heapSize}_{\eeq 1}) \sepcon \Ls{\ee'}{\ee''} + \singleton{\ee}{\ee'} \sepcon \underbrace{(\Ls{\ee'}{\ee''} \cdot \heapSize)}_{\eeq \Len{\ee'}{\ee''}} \right) \right) \\
        \eeqtag{algebra, Definition of $\Len{\ee'}{\ee''}$}
        & \singleton{\ee}{\ee'} \sepimp \left(\iverson{\ee \neq \ee''} \cdot \left( \singleton{\ee}{\ee'} \sepcon \Ls{\ee'}{\ee''} + \singleton{\ee}{\ee'} \sepcon \Len{\ee'}{\ee''}  \right) \right) \\
        \eeqtag{Theorem~\ref{thm:sep-con-distrib}.\ref{thm:sep-con-distrib:sepcon-over-plus-full}, algebra}
        & \singleton{\ee}{\ee'} \sepimp \left(\singleton{\ee}{\ee'} \sepcon \left( \iverson{\ee \neq \ee''} \cdot \left(\Ls{\ee'}{\ee''} + \Len{\ee'}{\ee''} \right)  \right) \right) \\
        \eeqtag{Lemma~\ref{thm:misc:sepimp-sepcon}}
        & \containsPointer{\ee}{\ee'} \cdot \infty + (1-\containsPointer{\ee}{\ee'}) \cdot \iverson{\ee \neq \ee''} \cdot \left(\Ls{\ee'}{\ee''} + \Len{\ee'}{\ee''} \right).
\end{align}
\end{proof}
\begin{lemma}\label{thm:list-length:sepimp-simple}
\begin{align*}
        & \singleton{\ee}{\ee'''} \sepcon \left(\singleton{\ee}{\ee'} \sepimp \Len{\ee}{\ee''}\right) \\
        \eeq & \singleton{\ee}{\ee'''} \sepcon \left( \iverson{\ee \neq \ee''} \cdot \left( \Ls{\ee'}{\ee''} + \Len{\ee'}{\ee''} \right) \right).
\end{align*}
\end{lemma}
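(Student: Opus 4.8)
The plan is to reduce everything to the already-established characterization of $\singleton{\ee}{\ee'} \sepimp \Len{\ee}{\ee''}$ from Lemma~\ref{thm:list-length:sepimp} and then exploit that the left operand $\singleton{\ee}{\ee'''}$ of the outer separating conjunction is a points-to predicate, hence domain-exact. Concretely, I would first rewrite
\[
  \singleton{\ee}{\ee'} \sepimp \Len{\ee}{\ee''}
  \eeq
  \containsPointer{\ee}{\ee'} \cdot \infty + (1-\containsPointer{\ee}{\ee'}) \cdot \iverson{\ee \neq \ee''} \cdot \left(\Ls{\ee'}{\ee''} + \Len{\ee'}{\ee''}\right)
\]
using Lemma~\ref{thm:list-length:sepimp}, substitute this into the left-hand side of the claim, and then apply the full distributivity law (Theorem~\ref{thm:sep-con-distrib}.\ref{thm:sep-con-distrib:sepcon-over-plus-full}), which is applicable precisely because $\singleton{\ee}{\ee'''}$ is domain-exact. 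This splits the left-hand side into
\[
  \singleton{\ee}{\ee'''} \sepcon \left(\containsPointer{\ee}{\ee'} \cdot \infty\right)
  \;+\;
  \singleton{\ee}{\ee'''} \sepcon \left((1-\containsPointer{\ee}{\ee'}) \cdot \iverson{\ee \neq \ee''} \cdot \left(\Ls{\ee'}{\ee''} + \Len{\ee'}{\ee''}\right)\right).
\]

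Next I would evaluate each summand pointwise. The governing observation is that whenever $\hh = \hh_1 \sepcon \hh_2$ and $\singleton{\ee}{\ee'''}(\sk,\hh_1) = 1$, we have $\dom{\hh_1} = \{\sk(\ee)\}$, so $\sk(\ee) \notin \dom{\hh_2}$, which forces $\containsPointer{\ee}{\ee'}(\sk,\hh_2) = 0$. For the first summand this means every product in the $\max$ defining $\sepcon$ is of the form $1 \cdot (0 \cdot \infty) = 0$ (or $0 \cdot \infty = 0$ on the partitions where $\singleton{\ee}{\ee'''}$ is already $0$), so $\singleton{\ee}{\ee'''} \sepcon (\containsPointer{\ee}{\ee'} \cdot \infty) = 0$. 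For the second summand the same observation gives $(1-\containsPointer{\ee}{\ee'})(\sk,\hh_2) = 1$ on exactly the partitions that contribute to the $\max$, so that factor can be dropped: $\singleton{\ee}{\ee'''} \sepcon \left((1-\containsPointer{\ee}{\ee'}) \cdot \fg\right) = \singleton{\ee}{\ee'''} \sepcon \fg$ with $\fg = \iverson{\ee \neq \ee''} \cdot (\Ls{\ee'}{\ee''} + \Len{\ee'}{\ee''})$. Finally, neutrality of $0$ with respect to $+$ (using the monoid structure from Theorem~\ref{thm:sep-con-monoid} only for rewriting, if needed) yields the right-hand side.

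I expect the main obstacle to be the careful pointwise handling of the two summands rather than any deep structural argument: one has to be precise about the convention $0 \cdot \infty = 0$ in the first summand and about the fact that in the definition of $\sepcon$ as a maximum over partitions, it genuinely suffices to reason on the partitions $\hh = \hh_1 \sepcon \hh_2$ with $\singleton{\ee}{\ee'''}(\sk,\hh_1) = 1$ (since all other partitions contribute $0$). I would phrase both of these as a single auxiliary observation — namely that $\singleton{\ee}{\ee'''} \sepcon (\predb \cdot \fg) = \singleton{\ee}{\ee'''} \sepcon (\predb' \cdot \fg)$ whenever $\predb$ and $\predb'$ are predicates agreeing on all heaps not containing address $\sk(\ee)$ — and then instantiate it with $\predb \in \{\containsPointer{\ee}{\ee'},\, 1-\containsPointer{\ee}{\ee'}\}$ and $\predb' \in \{0,\, 1\}$ respectively. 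With that observation in hand the remaining steps are routine algebra.

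\begin{proof}[Proof sketch of Lemma~\ref{thm:list-length:sepimp-simple}]
Apply Lemma~\ref{thm:list-length:sepimp} to the inner separating implication, distribute the outer $\singleton{\ee}{\ee'''} \sepcon {}$ over the resulting sum via Theorem~\ref{thm:sep-con-distrib}.\ref{thm:sep-con-distrib:sepcon-over-plus-full} (valid since $\singleton{\ee}{\ee'''}$ is domain-exact), observe that $\singleton{\ee}{\ee'''}$ occupies address $\sk(\ee)$ so that $\containsPointer{\ee}{\ee'}$ vanishes on every complementary heap contributing to the defining maximum of $\sepcon$, conclude that the $\containsPointer{\ee}{\ee'} \cdot \infty$ summand is $0$ and that the factor $1-\containsPointer{\ee}{\ee'}$ in the remaining summand can be removed, and finish by neutrality of $0$ with respect to $+$.
\end{proof}
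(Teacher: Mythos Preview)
Your proposal is correct and follows essentially the same approach as the paper: apply Lemma~\ref{thm:list-length:sepimp}, distribute via Theorem~\ref{thm:sep-con-distrib}.\ref{thm:sep-con-distrib:sepcon-over-plus-full} using domain-exactness of $\singleton{\ee}{\ee'''}$, and then eliminate the $\containsPointer{\ee}{\ee'}$ factors by observing that $\singleton{\ee}{\ee'''}$ occupies address $\sk(\ee)$ on every contributing partition. The paper handles the last step slightly differently---it pulls out the pure constant $\infty$ via Theorem~\ref{thm:sep-con-algebra-pure} to get $(\singleton{\ee}{\ee'''} \sepcon \containsPointer{\ee}{\ee'}) \cdot \infty = 0$ and then invokes $\validpointer{\ee} \sepcon \containsPointer{\ee}{-} = 0$---whereas you argue directly at the level of the defining $\max$; your pointwise argument is arguably cleaner and more self-contained, but both routes are the same idea.
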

\begin{proof}
\begin{align}
        & \singleton{\ee}{\ee'''} \sepcon \left(\singleton{\ee}{\ee'} \sepimp \Len{\ee}{\ee''} \right) \\
        \eeqtag{Lemma~\ref{thm:list-length:sepimp}}
        & \singleton{\ee}{\ee'''} \sepcon \big( \containsPointer{\ee}{\ee'} \cdot \infty + (1-\containsPointer{\ee}{\ee'}) \cdot \iverson{\ee \neq \ee''} \cdot \left(\Ls{\ee'}{\ee''} + \Len{\ee'}{\ee''} \right) \big) \\
        \eeqtag{Theorem~\ref{thm:sep-con-distrib}.\ref{thm:sep-con-distrib:sepcon-over-plus-full}}
        & \singleton{\ee}{\ee'''} \sepcon (\containsPointer{\ee}{\ee'} \cdot \infty) \\
        & \qquad + \singleton{\ee}{\ee'''} \sepcon \left( (1-\containsPointer{\ee}{\ee'}) \cdot \iverson{\ee \neq \ee''} \cdot \left(\Ls{\ee'}{\ee''} + \Len{\ee'}{\ee''} \right) \right) \notag \\
        \eeqtag{Theorem~\ref{thm:sep-con-algebra-pure}}
        & \underbrace{(\singleton{\ee}{\ee'''} \sepcon \containsPointer{\ee}{\ee'})}_{\eeq 0} \cdot \infty \\
        & \qquad + \underbrace{\singleton{\ee}{\ee'''} \sepcon \big( (1-\containsPointer{\ee}{\ee'})}_{\eeq 0} \cdot \iverson{\ee \neq \ee''} \cdot \left(\Ls{\ee'}{\ee''} + \Len{\ee'}{\ee''} \right) \big) \notag \\
        \eeqtag{$\validpointer{\ee} \sepcon \containsPointer{\ee}{-} = 0$}
        & \singleton{\ee}{\ee'''} \sepcon \left( \iverson{\ee \neq \ee''} \cdot \left( \Ls{\ee'}{\ee''} + \Len{\ee'}{\ee''} \right) \right).
\end{align}
\end{proof}

\begin{definition}\label{def:domain-disjoint}
        Two predicates $\preda,\predb \in \E$ are \emph{domain-disjoint} iff 
        \begin{align*}
                \forall \sk \in \Stacks \forall \hh \in \Heaps \colon |\{ (\hh_1,\hh_2) ~|~ \hh = \hh_1 \sepcon \hh_2, (\sk,\hh_1) \models \preda, (\sk,\hh_2) \models \predb \}| \leq 1.
        \end{align*}
\end{definition}

\begin{lemma}\label{thm:ls:domain-disjoint}
  $\Ls{\ee}{0}$ and $\Ls{\ee'}{0}$ are domain-disjoint.
\end{lemma}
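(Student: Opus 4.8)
The plan is to prove domain-disjointness of $\Ls{\ee}{0}$ and $\Ls{\ee'}{0}$ by characterizing precisely which heaps satisfy a list-segment predicate $\Ls{\za}{0}$. Recall from Section~\ref{sec:qsl:recursive} that $\Ls{\za}{0}(\sk,\hh) = 1$ iff $\hh$ consists \emph{exactly} of a null-terminated singly-linked list with head $\za$. The key observation is that for a fixed stack $\sk$ and a fixed head value $\za = \sk(\za)$, the heap $\hh$ with $\Ls{\za}{0}(\sk,\hh) = 1$ — if it exists — is \emph{uniquely determined}: starting from address $\za$, each subsequent address is obtained by dereferencing the current cell, and the list is terminated at $0$. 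Hence $\dom{\hh}$ is determined, and so is $\hh$ itself.

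First I would prove a lemma: for any stack $\sk$ and any value $v$, there is at most one heap $\hh$ with $\Ls{v}{0}(\sk,\hh) = 1$. This follows by unfolding the least-fixed-point definition of $\Lssymbol$ (Example~\ref{ex:qsl:ls}) and arguing by induction on the length of the list: if $v = 0$ then $\hh = \emptyheap$ is forced; if $v \neq 0$ then $\hh$ must decompose as $\{v \mapsto \zc\} \sepcon \hh'$ for the \emph{unique} $\zc$ stored at address $v$ in $\hh$ (so $\zc$ is determined once $v$ and $\hh$ are given), with $\Ls{\zc}{0}(\sk,\hh') = 1$; an easy argument on heap size (or a well-founded induction, since lists are finite) forces $\hh'$ to be unique as well, hence $\hh$ is unique.

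Next I would use this to establish domain-disjointness in the sense of Definition~\ref{def:domain-disjoint}. Suppose $\hh = \hh_1 \sepcon \hh_2 = \hh_1' \sepcon \hh_2'$ are two decompositions with $\Ls{\sk(\ee)}{0}(\sk,\hh_1) = 1$, $\Ls{\sk(\ee')}{0}(\sk,\hh_2) = 1$ and likewise for the primed heaps. By the uniqueness lemma applied with head value $\sk(\ee)$, we get $\hh_1 = \hh_1'$; similarly $\hh_2 = \hh_2'$. Hence the two decompositions coincide, so there is at most one such pair, which is exactly what domain-disjointness requires.

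The main obstacle I anticipate is making the uniqueness argument rigorous in the presence of the least-fixed-point semantics: one must be careful that $\Ls{v}{0}(\sk,\hh) = 1$ genuinely implies the structural decomposition $\hh = \{v \mapsto \zc\} \sepcon \hh'$ with $\Ls{\zc}{0}(\sk,\hh') = 1$, rather than only the converse. This direction needs the fact that $\Ls{v}{0}$ is the \emph{least} fixed point, so that membership in it is witnessed by a finite unfolding — equivalently, by Kleene-style iteration (as in Lemma~\ref{thm:ls:continuous}, which gives continuity of the underlying functional $\ff_{\Lssymbol}$). Once that structural inversion principle is in place, the induction on list length (or on $|\dom{\hh}|$) is routine, and the domain-disjointness conclusion is immediate. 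A secondary minor point is handling the degenerate cases where $\sk(\ee) \notin \Nats_{>0}$ or $\sk(\ee) = 0$, where the list is empty or nonexistent; these are absorbed cleanly by the base case of the induction.
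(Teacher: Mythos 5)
There is a genuine gap: your auxiliary lemma is false as stated. For a fixed stack $\sk$ and a fixed value $v \in \Nats_{>0}$ there are in general \emph{many} heaps satisfying $\Ls{v}{0}$: for instance $\{\, v \mapsto 0 \,\}$ and $\{\, v \mapsto w,\, w \mapsto 0 \,\}$ (for any address $w \notin \{0,v\}$) are both null-terminated lists with head $v$, so $\Ls{v}{0}$ evaluates to $1$ on both — indeed on infinitely many heaps, of different sizes, so $\Ls{v}{0}$ is not even domain-exact. Your induction does not actually prove the claimed uniqueness: in the step ``the unique $\zc$ stored at address $v$ in $\hh$'', the successor $\zc$ is unique only \emph{relative to one given heap} $\hh$, which is useless when comparing two different heaps $\hh \neq \hh''$ that both satisfy $\Ls{v}{0}$ — they may simply store different successors at $v$. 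Consequently the inference ``$\hh_1 = \hh_1'$'' in your second paragraph is unsupported, and the whole reduction collapses.

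What domain-disjointness actually requires is the \emph{relativized} uniqueness statement: for a fixed ambient heap $\hh$, there is at most one pair $(\hh_1,\hh_2)$ with $\hh = \hh_1 \sepcon \hh_2$, $\Ls{\ee}{0}(\sk,\hh_1) = 1$ and $\Ls{\ee'}{0}(\sk,\hh_2) = 1$; note that once $\hh_1$ is pinned down, $\hh_2$ is determined by the splitting, so only uniqueness of the first component \emph{inside} $\hh$ is needed. This relativized claim is true precisely because a subheap of $\hh$ satisfying the list predicate must agree with $\hh$: its content at $\sk(\ee)$ is forced to be $\hh(\sk(\ee))$, and so on along the list. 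The paper proves it by induction on $|\dom{\hh}|$: if $\sk(\ee) = 0$ then $\hh_1 = \emptyheap$ is forced; otherwise one unfolds the fixed-point equation once (your concern about the inversion direction is resolved simply because the least fixed point \emph{satisfies} the defining equation as an equality, so no Kleene/continuity argument is needed here), splits off the uniquely determined singleton cell at $\sk(\ee)$, and applies the induction hypothesis to the strictly smaller remaining heap. If you recast your induction as an induction on $|\dom{\hh}|$ establishing this uniqueness-of-decomposition claim rather than uniqueness of the satisfying heap, the rest of your outline goes through and essentially coincides with the paper's proof.
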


\begin{proof}
Let $\sk \in \Stacks$ be a stack.
For a given heap $\hh$, let us define
\begin{align}
        M(\sk,\hh) \eeq |\{ (\hh_1,\hh_2) ~|~ \hh = \hh_1 \sepcon \hh_2, (\sk,\hh_1) \models \preda, (\sk,\hh_2) \models \predb \}|.
\end{align}
We show for all heaps $\hh \in \Heaps$ that $M(\sk,\hh) \leq 1$ by induction on $n = |\dom{\hh}|$.

For $n = 0$, $\hh$ is the empty heap $\emptyheap$. Then $\hh = \emptyheap \sepcon \emptyheap$ is the only possible partitioning of $\hh$.
Hence, $M(\sk,\hh) \leq 1$.

For $n > 0$, we know that $\hh \neq \emptyheap$. 
Furthermore, assume $(\hh_1,\hh_2) \in M(\sk,\hh)$. If no such pair exists, we have $M(\sk,\hh) = 0$ and there is nothing to show.
We distinguish two cases: 

First, assume $\sk(\ee) = 0$. 
Since $(\sk,\hh_1) \models \Ls{\ee}{0}$, we have
\begin{align}
        1 \eeq \Ls{\ee}{0}(\sk,\hh_1) \eeq & 
        \underbrace{\iverson{\ee = 0}(\sk,\hh_1)}_{\eeq 1} \cdot \emp(\sk,\hh_1) \\
        & + \underbrace{\iverson{\ee \neq 0}(\sk,\hh_1)}_{\eeq 0} \cdot \left(\sup_{\za \in \Ints} \singleton{\ee}{\za} \sepcon \Ls{\za}{0}\right)(\sk,\hh_1).
\end{align}
Hence, $\hh_1 = \emptyheap$. Since $\hh = \hh_1 \sepcon \hh_2$, we then know that $\hh_2 = \hh$.
Consequently, $M(\sk,\hh) \leq 1$.

Now, assume $\sk(\ee) \neq 0$.
Since $(\sk,\hh_1) \models \Ls{\ee}{0}$, we have
\begin{align}
        & 1 \eeq \Ls{\ee}{0}(\sk,\hh_1) \\
        \eeqtag{Definition of $\Ls{\ee}{0}$}
        & \underbrace{\iverson{\ee = 1}(\sk,\hh_1)}_{\eeq 0} \cdot \emp(\sk,\hh_1) + \underbrace{\iverson{\ee \neq 0}(\sk,\hh_1)}_{\eeq 1} \cdot \left(\sup_{\za \in \Ints} \singleton{\ee}{\za} \sepcon \Ls{\za}{0}\right)(\sk,\hh_1) \\
        \eeqtag{by assumption}
        & \left(\sup_{\za \in \Ints} \singleton{\ee}{\za} \sepcon \Ls{\za}{0}\right)(\sk,\hh_1) \\
        \eeqtag{Definition of $\sepcon$}
        & \sup_{\za \in \Ints} \max_{\hh_3,\hh_4} \setcomp{ \singleton{\ee}{\za}(\sk,\hh_3) \cdot \Ls{\za}{0}(\sk,\hh_4) }{ \hh_1 = \hh_3 \sepcon \hh_4 } \\
\end{align}
Now, $\hh_3 = \{ \sk(\ee) \mapsto \za \}$ is the only possible choice such that $\singleton{\ee}{\za}(\sk,\hh_3) = 1$.
Hence, $|\dom{\hh_4}| = |\dom{\hh_1}| - 1 < |\dom{\hh}|$ 
We may thus apply the induction hypothesis, to conclude that $M(\sk,\hh_4 \sepcon \hh_2) \leq 1$ for expectations $\Ls{\za}{0}$ and $\Ls{\ee'}{0}$.
In other words, there is at most one possible choice for heap $\hh_4$ such that $\Ls{\za}{0}(\sk,\hh_4) = 1$.
Hence, $M(\sk,\hh) \leq 1$.
\end{proof}

\begin{lemma}\label{thm:sepcon-distrib-domain-disjoint}
  Let $\ff,\fg \in \E$.
  Moreover, let $\preda,\predb \in \E$ be domain-disjoint predicates. Then
  \begin{align*}
          \preda \sepcon (\predb \cdot \ff + \predb \cdot \fg) \eeq \preda \sepcon (\predb \cdot \ff) + \preda \sepcon (\predb \cdot \fg).
  \end{align*}
\end{lemma}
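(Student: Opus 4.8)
The plan is to establish the equality by proving the two inequalities separately. The direction $\preceq$ is free: applying Theorem~\ref{thm:sep-con-distrib}.\ref{thm:sep-con-distrib:sepcon-over-plus} with the two expectations $\predb \cdot \ff$ and $\predb \cdot \fg$ in place of $\fg$ and $\fh$ immediately gives $\preda \sepcon (\predb\cdot\ff + \predb\cdot\fg) \ppreceq \preda \sepcon (\predb\cdot\ff) + \preda \sepcon (\predb\cdot\fg)$, and this uses no hypothesis on $\preda,\predb$. So the whole content of the lemma is the converse inequality, and domain-disjointness (Definition~\ref{def:domain-disjoint}) is exactly what is needed to upgrade sub-distributivity to full distributivity in this case.

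For the converse I would argue pointwise. Fixing a stack-heap pair $(\sk,\hh)$ and unfolding the definition of $\sepcon$ on the left, together with distributivity of $\cdot$ over $+$, one gets
\[
\bigl(\preda \sepcon (\predb\cdot\ff + \predb\cdot\fg)\bigr)(\sk,\hh) = \max_{\hh_1,\hh_2}\Bigl\{\, \preda(\sk,\hh_1)\predb(\sk,\hh_2)\ff(\sk,\hh_2) + \preda(\sk,\hh_1)\predb(\sk,\hh_2)\fg(\sk,\hh_2) \;\Big|\; \hh = \hh_1\sepcon\hh_2 \Bigr\}.
\]
Since $\preda$ and $\predb$ are predicates, the factor $\preda(\sk,\hh_1)\predb(\sk,\hh_2)$ equals $1$ exactly when $(\sk,\hh_1)\models\preda$ and $(\sk,\hh_2)\models\predb$, and $0$ otherwise; and by Definition~\ref{def:domain-disjoint} there is at most one partition $\hh = \hh_1\sepcon\hh_2$ for which this happens. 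Hence every element of the set over which the maximum is taken is either the zero pair or comes from one single distinguished partition $(\hh_1^{\star},\hh_2^{\star})$ — precisely the situation in which $\max\{a_i + b_i\}$ coincides with $\max\{a_i\} + \max\{b_i\}$ (both collapse to $0$ or to the distinguished pair). Splitting the maximum accordingly then yields
\[
\max_{\hh_1,\hh_2}\bigl\{\preda(\sk,\hh_1)\predb(\sk,\hh_2)\ff(\sk,\hh_2)\bigr\} + \max_{\hh_1,\hh_2}\bigl\{\preda(\sk,\hh_1)\predb(\sk,\hh_2)\fg(\sk,\hh_2)\bigr\} = \bigl(\preda \sepcon (\predb\cdot\ff)\bigr)(\sk,\hh) + \bigl(\preda \sepcon (\predb\cdot\fg)\bigr)(\sk,\hh).
\]
This step mirrors the ``maximum over a singleton'' move used in the proofs of Theorem~\ref{thm:sep-con-distrib}.\ref{thm:sep-con-distrib:sepcon-over-plus-full} and Lemma~\ref{thm:domain-exact:unique-partitioning}, with Lemma~\ref{thm:ls:domain-disjoint} supplying the domain-disjointness hypothesis in the intended application.

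The only genuine obstacle is making the $\max$-splitting step airtight, in particular the corner case where no partition of $\hh$ puts a $\preda$-heap on the left and a $\predb$-heap on the right: then every summand is $0+0$, both sides of the identity evaluate to $0$, and the claim is trivial. I expect everything else to be routine algebraic bookkeeping over $\PosRealsInf$ (where $\infty + \infty = \infty$, so $\infty$-values of $\ff$ or $\fg$ cause no difficulty), closely paralleling the existing distributivity proofs. A short auxiliary observation — ``the maximum of a set all but one of whose elements is the zero pair splits over $+$'' — or simply an inline case distinction on whether the distinguished partition exists, would cleanly discharge it.
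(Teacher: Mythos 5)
Your proposal is correct and follows essentially the same route as the paper's proof: unfold $\sepcon$ pointwise, distribute $\cdot$ over $+$, and use domain-disjointness to split the maximum because at most one partition contributes a nonzero value (the paper's ``$\max$ over singleton'' step). Packaging the easy direction via Theorem~\ref{thm:sep-con-distrib}.\ref{thm:sep-con-distrib:sepcon-over-plus} is only a cosmetic difference from the paper's single equality chain.
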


\begin{proof}
  \begin{align}
          & \preda \sepcon (\predb \cdot \ff + \predb \cdot \fg) \\
          \eeqtag{Definition of $\sepcon$}
          & \lambda(\sk,\hh)\mydot \max_{\hh_1,\hh_2} \setcomp{ \preda(\sk,\hh_1) \cdot (\predb(\sk,\hh_2) \cdot \ff(\sk,\hh_2) + \predb(\sk,\hh_2) \cdot \fg(\sk,\hh_2)) }{\hh = \hh_1 \sepcon \hh_2} \\
          \eeqtag{algebra}
          & \lambda(\sk,\hh)\mydot \max_{\hh_1,\hh_2} \big\{ \underbrace{\preda(\sk,\hh_1) \cdot \predb(\sk,\hh_2) \cdot \ff(\sk,\hh_2)}_{\geq 0~\text{for at most one choice of $\hh_1,\hh_2$}}
  + \underbrace{\preda(\sk,\hh_1) \cdot \predb(\sk,\hh_2) \cdot \fg(\sk,\hh_2))}_{\geq 0~\text{for at most one choice of $\hh_1,\hh_2$}} ~\big|~ \hh = \hh_1 \sepcon \hh_2 \big\} \\
          \eeqtag{$\preda,\predb$ domain-disjoint, $\max$ over singleton}
          & \lambda(\sk,\hh)\mydot \max_{\hh_1,\hh_2} \setcomp{ \preda(\sk,\hh_1) \cdot \predb(\sk,\hh_2) \cdot \ff(\sk,\hh_2) }{ \hh = \hh_1 \sepcon \hh_2 } \\
          & \qquad + \max_{\hh_1,\hh_2} \setcomp{ \preda(\sk,\hh_1) \cdot \predb(\sk,\hh_2) \cdot \fg(\sk,\hh_2) }{ \hh = \hh_1 \sepcon \hh_2 } \notag \\
          \eeqtag{algebra}
          & \lambda(\sk,\hh)\mydot \max_{\hh_1,\hh_2} \setcomp{ \preda(\sk,\hh_1) \cdot \predb(\sk,\hh_2) \cdot \ff(\sk,\hh_2) }{ \hh = \hh_1 \sepcon \hh_2 } \\
          & \qquad + \lambda(\sk,\hh)\mydot \max_{\hh_1,\hh_2} \setcomp{ \preda(\sk,\hh_1) \cdot \predb(\sk,\hh_2) \cdot \fg(\sk,\hh_2) }{\hh = \hh_1 \sepcon \hh_2 } \notag \\
          \eeqtag{Definition of $\sepcon$}
          & \preda \sepcon (\predb \cdot \ff) + \preda \sepcon (\predb \cdot \fg).
  \end{align}
\end{proof}
\begin{theorem}[Pure Frame Rule]\label{thm:batz}
      Let $\cc \in \hpgcl$ and $\ff,\fg \in \E$ such that $\fg$ is pure and $\Vars (\fg) \cap \Mod{\cc} = \emptyset$.
      Then $\wp{\cc}{\fg \cdot \ff} \eeq  \fg \cdot \wp{\cc}{\ff}$.
\end{theorem}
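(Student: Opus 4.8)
The plan is to proceed by structural induction on $\cc \in \hpgcl$, following exactly the template already used for the quantitative frame rule (Theorem~\ref{thm:frame-rules}) but exploiting the stronger hypothesis that $\fg$ is \emph{pure}. The key point is that purity lets us use Theorem~\ref{thm:sep-con-algebra-pure}(1)--(3): whenever $\fg$ is pure we have $\fg \cdot \ff \preceq \fg \sepcon \ff$ and, crucially, $(\fg \cdot \ff) \sepcon \fh = \fg \cdot (\ff \sepcon \fh)$, and also $\fg \cdot \ff = \fg \sepcon \ff$ when both are pure. So wherever the frame-rule proof had to settle for an inequality (because $\sepcon$ and $+$ are only subdistributive), multiplication by a pure factor distributes exactly, and we can hope to get an \emph{equality}.

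First I would dispatch the straightforward base cases. For $\SKIP$ it is immediate. For $\ASSIGN{x}{\ee}$ we use that $x \notin \Vars(\fg)$, so $(\fg \cdot \ff)\subst{x}{\ee} = \fg\subst{x}{\ee} \cdot \ff\subst{x}{\ee} = \fg \cdot \ff\subst{x}{\ee}$, since substitution distributes over $\cdot$ and $\fg$ is unaffected by the substitution. For $\FREE{\ee}$ we have $\wp{\FREE{\ee}}{\fg \cdot \ff} = \validpointer{\ee} \sepcon (\fg \cdot \ff)$, and since $\fg$ is pure, Theorem~\ref{thm:sep-con-algebra-pure}(3) gives $\validpointer{\ee} \sepcon (\fg \cdot \ff) = \fg \cdot (\validpointer{\ee} \sepcon \ff) = \fg \cdot \wp{\FREE{\ee}}{\ff}$. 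For $\HASSIGN{\ee}{\ee'}$ and $\ASSIGNH{x}{\ee}$ I expect the reasoning to go through similarly: pull $\fg$ out past each $\sepcon$ using Theorem~\ref{thm:sep-con-algebra-pure}(3) and past each $\sepimp$ using Lemma~\ref{thm:single-pointer-wand:pure} (which handles $\fg \cdot (\predb \sepimp \ff)$), being careful that the fresh value $v$ bound by $\sup_{v}$ or $\inf_{v}$ does not occur in $\fg$; for lookup one can alternatively use the equivalent form $\sup_{v} \containsPointer{\ee}{v} \cdot \ff\subst{x}{v}$ and pull out the constant-in-$v$ factor $\fg$. For $\ALLOC{x}{\vec{\ee}}$ the rule is $\inf_{v \in \AVAILLOC{\vec{\ee}}} \singleton{v}{\vec{\ee}} \sepimp \ff\subst{x}{v}$; I would use Lemma~\ref{thm:single-pointer-wand:pure} again together with the fact that $\fg$ is a nonnegative constant with respect to the infimum over $v$, so it factors out of the $\inf$.

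For the composite cases the inductive step is routine. Sequential composition: $\wp{\COMPOSE{\cc_1}{\cc_2}}{\fg \cdot \ff} = \wp{\cc_1}{\wp{\cc_2}{\fg \cdot \ff}} = \wp{\cc_1}{\fg \cdot \wp{\cc_2}{\ff}}$ by the inner I.H.\ (note $\Vars(\fg) \cap \Mod{\cc_2} = \emptyset$), then $= \fg \cdot \wp{\cc_1}{\wp{\cc_2}{\ff}}$ by the outer I.H. Conditional and probabilistic choice follow from linearity of $\wpsymbol$ (Theorem~\ref{thm:wp:basic}) and the I.H.\ on both branches, pulling out the common factor $\fg$ at the end. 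The loop case $\WHILEDO{\guard}{\cc'}$ is where I expect the main obstacle: we must show $\charwpn{\guard}{\cc'}{\fg \cdot \ff}{\oa}(0) = \fg \cdot \charwpn{\guard}{\cc'}{\ff}{\oa}(0)$ for every ordinal $\oa$ by transfinite induction, using the constructive Tarski--Knaster theorem (as in Theorem~\ref{thm:frame-rules}). The successor step needs $\iverson{\neg\guard} \cdot (\fg \cdot \ff) + \iverson{\guard} \cdot \wp{\cc'}{\fg \cdot \charwpn{\guard}{\cc'}{\ff}{\oa}(0)} = \fg \cdot (\iverson{\neg\guard}\cdot\ff + \iverson{\guard}\cdot\wp{\cc'}{\charwpn{\guard}{\cc'}{\ff}{\oa}(0)})$, which holds by the I.H.\ on $\cc'$, distributivity of $\cdot$ over $+$, and commutativity of $\cdot$; the limit step needs $\fg$ to commute with suprema of chains, which it does since $\fg$ is a fixed nonnegative constant at each state. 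The delicate bookkeeping is making sure that multiplication by the possibly-$\infty$ constant $\fg(\sk,\hh)$ genuinely commutes with $+$, $\sup$, and the $\inf$/$\sup$ over values everywhere it is invoked; this is where I would be most careful, but I expect no genuine difficulty since all quantities live in $\PosRealsInf$ and $\fg$ is state-wise constant. Finally, the statement is phrased as an equality rather than the $\preceq$ of the frame rule precisely because each subdistributivity inequality in the frame-rule proof becomes an equality here, by Theorem~\ref{thm:sep-con-algebra-pure}(3) and Lemma~\ref{thm:single-pointer-wand:pure}.
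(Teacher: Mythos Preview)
Your plan is essentially the paper's: structural induction on $\cc$, with a transfinite induction at the loop case showing $\charwpn{\guard}{\cc'}{\fg\cdot\ff}{\oa}(0)=\fg\cdot\charwpn{\guard}{\cc'}{\ff}{\oa}(0)$ for every ordinal $\oa$. The non-heap cases and the inductive cases go through exactly as you describe.

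One caveat on the heap-manipulating atomic commands. Lemma~\ref{thm:single-pointer-wand:pure} is stated for a pure \emph{predicate} $\preda$ (its proof uses $\preda\cdot\preda=\preda$), not for an arbitrary pure $\fg\in\E$, and its conclusion $\preda\cdot(\predb\sepimp\ff)=\preda\cdot(\predb\sepimp\ff\cdot\preda)$ does not directly yield the identity $\predb\sepimp(\fg\cdot\ff)=\fg\cdot(\predb\sepimp\ff)$ you need; indeed the latter identity can fail at states where the extension set of the $\sepimp$ is empty and $\fg$ vanishes. The paper sidesteps this by handling $\ALLOC{x}{\vec\ee}$, $\HASSIGN{\ee}{\ee'}$, and $\FREE{\ee}$ via direct pointwise case analysis on whether $\sk(\ee)\in\dom{\hh}$: purity is used only in the concrete form $\fg(\sk,\hh\sepcon\hh')=\fg(\sk,\hh)$, after which the now-constant factor is pulled out of the (nonempty) infimum. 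Your algebraic route via Theorem~\ref{thm:sep-con-algebra-pure}(3) works cleanly for $\FREE{\ee}$ and is arguably nicer there than the paper's case split; for $\HASSIGN{\ee}{\ee'}$ and $\ALLOC{x}{\vec\ee}$ you should either do the pointwise expansion as the paper does, or first establish the needed pull-out lemma for pure expectations and $\sepimp$ in the relevant contexts (where the surrounding $\validpointer{\ee}\sepcon(-)$, respectively the outer $\inf_{v}$, guarantees the problematic empty-extension case is harmless). For lookup your suggested use of the alternative form $\sup_v\containsPointer{\ee}{v}\cdot\ff\subst{x}{v}$ is exactly what the paper does.
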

\begin{proof}
       Let $\cc \in \hpgcl$ and $\ff,\fg \in \E$ such that $\fg$ is pure and $\Vars (\fg) \cap \Mod{\cc} = \emptyset$.
       Then we have to show that $\wp{\cc}{\fg \cdot \ff} \eeq  \fg \cdot \wp{\cc}{\ff}$.

       By induction on the structure of $\cc$. \\ \\
       \textbf{The case} $\cc = \SKIP$. We have
       \begin{align}
              &\wp{\SKIP}{\fg \cdot \ff} \\
              \eeqtag{Table \ref{table:wp}}      
              &\fg \cdot \ff \\
              \eeqtag{Table \ref{table:wp}}            
              &\fg \cdot \wp{\SKIP}{\ff}~. 
       \end{align} \\
      \textbf{The case} $\cc = \ASSIGN{x}{\ee}$. We have
      \begin{align}
         &\wp{\ASSIGN{x}{\ee}}{\fg \cdot \ff} \\
         \eeqtag{Table \ref{table:wp}}
         & (\fg \cdot \ff) \subst{x}{\ee} \\
         \eeqtag{Substitution distributes}
         & \left( \fg \subst{x}{\ee} \right) \cdot \left( \ff \subst{x}{\ee} \right) \\
         \eeqtag{By assumption $\Vars (\fg) \cap \{ x \} = \emptyset$}
         & \fg \cdot \ff \subst{x}{\ee} \\
         \eeqtag{Table \ref{table:wp}}
         & \fg \cdot \wp{\ASSIGN{x}{\ee}}{\ff}~.
      \end{align}
      \textbf{The case} $\cc = \ALLOC{x}{\vec{\ee}}$.  In this case we make use of the fact that for $\emptyset \neq A \subseteq \PosReals$ and $\cc \in \PosReals$ it holds that
      \begin{align}
      \label{eqn:unaffected-inf}
          \inf \left\{\cc \cdot a ~\mid~ a \in A \right\} \eeq \cc \cdot \inf A~.
      \end{align}
      We then prove this case point-wise as follows:
      \begin{align}
         &\wp{\ALLOC{x}{\vec{\ee}}}{\fg \cdot \ff}(\sk,\hh) \\
         \eeqtag{Table \ref{table:wp}}
         &\big( \displaystyle\inf_{v \in \AVAILLOC{\vec{\ee}}} \singleton{v}{\vec{\ee}} \sepimp (\fg \cdot \ff)\subst{x}{v} \big) (\sk,\hh) \\
         \eeqtag{Definition of $\sepimp$}
         &\displaystyle\inf_{v \in \AVAILLOC{\vec{\ee}}} \displaystyle\inf_{\hh'}
         \left\{ \big( (\fg \cdot \ff)\subst{x}{v} \big) (\sk,\hh \sepcon \hh') ~\mid~ \hh' \disjoint \hh, \, \hh' \models \singleton{v}{\vec{\ee}}  \right\} \\
         \eeqtag{Substitution distributes and definition of $\cdot$ w.r.t.\ $\E$}
         &\displaystyle\inf_{v \in \AVAILLOC{\vec{\ee}}} \displaystyle\inf_{\hh'}
         \left\{ \left( \fg \subst{x}{v} \right)(\sk,\hh \sepcon \hh') \cdot \left( \ff \subst{x}{v} \right)(\sk,\hh \sepcon \hh') ~\mid~ \hh' \disjoint \hh, \, \hh' \models \singleton{v}{\vec{\ee}}  \right\} \\
         \eeqtag{By assumption $\Vars (\fg) \cap \{ x \} = \emptyset$} 
         &\displaystyle\inf_{v \in \AVAILLOC{\vec{\ee}}} \displaystyle\inf_{\hh'} \left\{ \fg(\sk,\hh \sepcon \hh') \cdot \left( \ff \subst{x}{v} \right)(\sk,\hh \sepcon \hh') ~\mid~ \hh' \disjoint \hh, \, \hh' \models \singleton{v}{\vec{\ee}}  \right\} \\
         \eeqtag{$\fg$ is pure} 
         &\displaystyle\inf_{v \in \AVAILLOC{\vec{\ee}}} \displaystyle\inf_{\hh'} \left\{ \fg(\sk,\hh) \cdot \left( \ff \subst{x}{v} \right)(\sk,\hh \sepcon \hh') ~\mid~ \hh' \disjoint \hh, \, \hh' \models \singleton{v}{\vec{\ee}}  \right\} \\
         \eeqtag{By Equation \ref{eqn:unaffected-inf}}
         &\displaystyle\inf_{v \in \AVAILLOC{\vec{\ee}}} \big\{ \fg(\sk,\hh) \cdot \displaystyle\inf_{\hh'} \left\{ \left( \ff \subst{x}{v} \right)(\sk,\hh \sepcon \hh') ~\mid~ \hh' \disjoint \hh, \, \hh' \models \singleton{v}{\vec{\ee}}  \right\} \big\} \\
         \eeqtag{By Equation \ref{eqn:unaffected-inf}}
         &\fg(\sk,\hh) \cdot \displaystyle\inf_{v \in \AVAILLOC{\vec{\ee}}}   \displaystyle\inf_{\hh'} \left\{ \left( \ff \subst{x}{v} \right)(\sk,\hh \sepcon \hh') ~\mid~ \hh' \disjoint \hh, \, \hh' \models \singleton{v}{\vec{\ee}}  \right\} \\
         \eeqtag{Definition of $\sepimp$}
         &\fg(\sk,\hh) \cdot \displaystyle\inf_{v \in \AVAILLOC{\vec{\ee}}} \singleton{v}{\vec{\ee}} \sepimp \ff\subst{x}{v} \\
         \eeqtag{Table \ref{table:wp}}
         &\fg(\sk,\hh) \cdot \wp{\ALLOC{x}{\vec{\ee}}}{\ff}(\sk,\hh)~.
      \end{align}
      \textbf{The case} $\ASSIGNH{x}{\ee}$. We have
      \begin{align}
         &\wp{\ASSIGNH{x}{\ee}}{\fg \cdot \ff} \\
         \eeqtag{Table \ref{table:wp}}
         &\displaystyle\sup_{v \in \Ints} \singleton{\ee}{v} \sepcon \bigl( \singleton{\ee}{v} \sepimp (\fg \cdot \ff)\subst{x}{v} \bigr) \\
         \eeqtag{Alternative version of the rule for heap lookup}
         &\displaystyle\sup_{v \in \Ints} \singleton{\ee}{v} \cdot (\fg \cdot \ff)\subst{x}{v} \\
         \eeqtag{Substitution distributes}
         &\displaystyle\sup_{v \in \Ints} \singleton{\ee}{v} \cdot \big( (\fg\subst{x}{v}) \cdot (\ff\subst{x}{v}) \big) \\
         \eeqtag{By assumption $\Vars (\fg) \cap \{ x \} = \emptyset$} 
         &\displaystyle\sup_{v \in \Ints} \singleton{\ee}{v} \cdot \big( \fg \cdot (\ff\subst{x}{v}) \big) \\
         \eeqtag{$\fg$ does not depend on $v$}
         & \fg \cdot \displaystyle\sup_{v \in \Ints} \singleton{\ee}{v} \cdot   (\ff\subst{x}{v}) \\
         \eeqtag{Alternative version of the rule for heap lookup}
         &\fg \cdot \displaystyle\sup_{v \in \Ints} \singleton{\ee}{v} \sepcon \bigl( \singleton{\ee}{v} \sepimp \ff\subst{x}{v} \bigr) \\
         &\eeqtag{Table \ref{table:wp}}
         &\fg \cdot \wp{\ASSIGNH{x}{\ee}}{\ff}~.
      \end{align}
\textbf{The case $\cc = \HASSIGN{\ee}{\ee'}$}. We prove the claim point-wise as follows:
Let $(\sk,\hh) \in \States$. We distinguish the cases $\sk(\ee) \in \dom{\hh}$ and $\sk(\ee) \not\in \dom{\hh}$. If $\sk(\ee) \not\in \dom{\hh}$, then
\begin{align}
     &\wp{\HASSIGN{\ee}{\ee'}}{\fg \cdot \ff}(\sk,\hh) \\
     \eeqtag{Table \ref{table:wp}}
     &\big( \validpointer{\ee} \sepcon \bigl(\singleton{\ee}{\ee'} \sepimp (\fg \cdot \ff) \bigr) \big) (\sk,\hh)  \\
     \eeqtag{$\sk(\ee) \not\in \dom{\hh}$}
     & 0 \\
     \eeqtag{$\sk(\ee) \not\in \dom{\hh}$}
     &\fg(\sk,\hh) \cdot \big( \validpointer{\ee} \sepcon \bigl(\singleton{\ee}{\ee'} \sepimp \ff \bigr) \big) (\sk,\hh)
      \\
     \eeqtag{Table \ref{table:wp}}
     &\fg(\sk,\hh) \cdot \wp{\HASSIGN{\ee}{\ee'}}{\ff}(\sk,\hh)~.
\end{align}
Now let $\sk(\ee) \in \dom{\hh}$. For two arithmetic expressions $\ee_1, \ee_2$, we denote by $\hh_{\ee_1,\ee_2}$ the heap 
with $\{ \sk(\ee_1) \} = \dom{\hh_{\ee_1,\ee_2}}$ and $\hh_{\ee_1,\ee_2}(\sk(\ee_1)) = \sk(\ee_2)$. The heap $\hh$ is thus of the form
$\hh = \hh' \sepcon \hh_{\ee,v}$ for some heap $\hh'$ and some $v \in \Ints$. We have
\begin{align}
     &\wp{\HASSIGN{\ee}{\ee'}}{\fg \cdot \ff }(\sk,\hh) \\
     \eeqtag{Table \ref{table:wp}}
     &\big( \validpointer{\ee} \sepcon \bigl(\singleton{\ee}{\ee'} \sepimp (\fg \cdot \ff) \bigr) \big) (\sk,\hh) \\
     \eeqtag{By assumption}
     &\big( \validpointer{\ee} \sepcon \bigl(\singleton{\ee}{\ee'} \sepimp (\fg \cdot \ff) \bigr) \big) (\sk,\hh' \sepcon \hh_{\ee,v}) \\
     \eeqtag{$(\validpointer{\ee} \sepcon u)(\sk,\hh' \sepcon \hh_{\ee,v}) = u (\sk,\hh')$ for all $u \in \E$}
     & \bigl(\singleton{\ee}{\ee'} \sepimp (\fg \cdot \ff) \bigr)  (\sk,\hh') \\
     \eeqtag{$\sk(\ee) \not\in \dom{\hh'}$}
     &(\fg \cdot \ff) (\sk,\hh' \sepcon \hh_{\ee,\ee'}) \\
     \eeqtag{Definition of $\cdot$ w.r.t.\ $\E$}
     &\fg(\sk,\hh' \sepcon \hh_{\ee,\ee'})  \cdot \ff(\sk,\hh' \sepcon \hh_{\ee,\ee'})  \\
     \eeqtag{$\fg$ is pure}
     &\fg(\sk,\hh)  \cdot \ff(\sk,\hh' \sepcon \hh_{\ee,\ee'}) \\
     \eeqtag{$\sk(\ee) \not\in \dom{\hh'}$}
     &\fg(\sk,\hh)  \cdot  \bigl(\singleton{\ee}{\ee'} \sepimp \ff \bigr)  (\sk,\hh') \\
     \eeqtag{$u (\sk,\hh') = (\validpointer{\ee} \sepcon u)(\sk,\hh' \sepcon \hh_{\ee,v})$ for all $u \in \E$}
     &\fg(\sk,\hh)  \cdot \big( \validpointer{\ee} \sepcon \bigl(\singleton{\ee}{\ee'} \sepimp \ff \bigr) \big) (\sk,\hh' \sepcon \hh_{\ee,v}) \\
     \eeqtag{By assumption}
     &\fg(\sk,\hh)  \cdot \big( \validpointer{\ee} \sepcon \bigl(\singleton{\ee}{\ee'} \sepimp \ff \bigr) \big) (\sk,\hh) \\
     \eeqtag{Table \ref{table:wp}}
     &\fg(\sk,\hh) \cdot \wp{\HASSIGN{\ee}{\ee'}}{\ff}(\sk,\hh)~.
\end{align}
\textbf{The case $\cc = \FREE{\ee}$}. We show the claim point-wise as follows:
We distinguish the cases $\sk(\ee) \in \dom{\hh}$ and $\sk(\ee) \not\in \dom{\hh}$. If $\sk(\ee) \not\in \dom{\hh}$, then
\begin{align}
    &\wp{\FREE{\ee}}{\fg \cdot \ff }(\sk,\hh) \\
    \eeqtag{Table \ref{table:wp}}
    &\big( \validpointer{\ee} \sepcon (\fg \cdot \ff) \big) (\sk,\hh) \\
    \eeqtag{$\sk(\ee) \not\in \dom{\hh}$}
    &0 \\
    \eeqtag{$\sk(\ee) \not\in \dom{\hh}$} 
    & \fg(\sk,\hh) \cdot \big( \validpointer{\ee} \sepcon \ff \big) (\sk,\hh)  \\
    \eeqtag{Table \ref{table:wp}}
    & a \cdot \wp{\FREE{\ee}}{\ff} (\sk,\hh) + \wp{\FREE{\ee}}{\fg}(\sk,\hh)~.
\end{align}
If $\sk(\ee) \in \dom{\hh}$, then the heap $\hh$ is of the form $\hh = \hh' \sepcon \hh_{\ee, v}$ for some heap $\hh'$ and some $v \in \Ints$.
We have
\begin{align}
     &\wp{\FREE{\ee}}{ \fg \cdot \ff}(\sk,\hh) \\
    \eeqtag{Table \ref{table:wp}}
    &\big( \validpointer{\ee} \sepcon ( \fg \cdot \ff) \big) (\sk,\hh) \\
    \eeqtag{By assumption}
    &\big( \validpointer{\ee} \sepcon ( \fg \cdot \ff ) \big) (\sk,\hh' \sepcon \hh_{\ee, v}) \\
    \eeqtag{$(\validpointer{\ee} \sepcon u)(\sk,\hh' \sepcon \hh_{\ee, v}) = u (\sk,\hh')$ for all $u \in \E$}
    &( \fg \cdot \ff) (\sk,\hh') \\
    \eeqtag{Definition of $\cdot$ w.r.t.\ $\E$}
    & \fg(\sk,\hh') \cdot \ff(\sk,\hh')  \\
    \eeqtag{$\fg$ is pure}
    &\fg(\sk,\hh) \cdot \ff(\sk,\hh') \\
    \eeqtag{$u (\sk,\hh')= (\validpointer{\ee} \sepcon u)(\sk,\hh' \sepcon \hh_{\ee, v})$ for all $u \in \E$}
    &\fg(\sk,\hh) \cdot (\validpointer{\ee} \sepcon \ff) (\sk,\hh' \sepcon \hh_{\ee, v}) \\
    \eeqtag{By assumption}
    &\fg(\sk,\hh) \cdot (\validpointer{\ee} \sepcon \ff) (\sk,\hh) \\
    \eeqtag{By Table \ref{table:wp}}
    &\fg(\sk,\hh) \cdot \wp{\FREE{\ee}}{\ff}(\sk,\hh)~.
\end{align}
As the induction hypothesis now suppose that for some arbitrary, but fixed, $\cc_1, \cc_2 \in \rhpgcl$, all $\ff \in \E$, all pure $\fg_1, \fg_2 \in \E$ with
$\Vars (\fg_1) \cap \Mod{\cc_1} = \emptyset$ and $\Vars (\fg_2) \cap \Mod{\cc_2} = \emptyset$, all variable environments $\varenv \in \VarEnv$, and 
all procedure environments $\procenv \in \ProcEnv$ satisfying the premise it holds that both
\begin{align}
   &\wp{\cc_1, \varenv, \procenv}{\fg_1 \cdot \ff} \eeq \fg_1 \cdot \wp{\cc_1}{\ff} \\
   \text{and} \quad 
   &\wp{\cc_2}{\fg_2 \cdot \ff} \eeq \fg_2 \cdot \wp{\cc_2}{\ff}~. 
\end{align}
\textbf{The case} $\cc = \COMPOSE{\cc_1}{\cc_2}$. We have
\begin{align}
   &\wp{\COMPOSE{\cc_1}{\cc_2}}{\fg \cdot \ff} \\
   \eeqtag{Table \ref{table:wp}}
   &\wp{\cc_1}{\wp{\cc_2}{\fg \cdot \ff}} \\
   \eeqtag{I.H.\ on $\cc_2$}
   &\wp{\cc_1}{\fg \cdot \wp{\cc_2}{\ff}} \\
   \eeqtag{I.H.\ on $\cc_1$}
   &\fg \cdot \wp{\cc_1}{\wp{\cc_2}{\ff}} \\
   \eeqtag{Table \ref{table:wp}}
   &\fg \cdot \wp{\COMPOSE{\cc_1}{\cc_2}}{\ff}~.
\end{align}
\textbf{The case} $\cc = \ITE{\guard}{\cc_1}{\cc_2}$. We have
\begin{align}
   &\wp{\ITE{\guard}{\cc_1}{\cc_2}}{\fg \cdot \ff} \\
   \eeqtag{Table \ref{table:wp}}
   &\iverson{\guard} \cdot \wp{\cc_1}{\fg \cdot \ff} + \iverson{\neg \guard} \cdot \wp{\cc_2}{\fg \cdot \ff} \\
   \eeqtag{I.H.\ on $\cc_1$ and I.H.\ on $\cc_2$}
   &\iverson{\guard} \cdot \fg \cdot \wp{\cc_1}{\ff} + \iverson{\neg \guard} \cdot \fg \cdot \wp{\cc_2}{\ff} \\
   \eeqtag{Algebra, $\iverson{\guard} + \iverson{\neg \guard} = 1$}
   &\fg \cdot \big( \iverson{\guard} \cdot \wp{\cc_1}{\ff} + \iverson{\neg \guard} \cdot \wp{\cc_2}{\ff} \big) \\
   \eeqtag{Table \ref{table:wp}}
   &\fg \cdot \wp{\ITE{\guard}{\cc_1}{\cc_2}}{\ff}~.
\end{align}
\textbf{The case} $\cc = \WHILEDO{\guard}{\cc_1}$. Due to the fact that there is an ordinal $\oa$ such that
\begin{align}
   &\wp{\WHILEDO{\guard}{\cc_1}}{\fg \cdot \ff} \eeq \charwpn{\guard}{\cc_1}{\fg \cdot \ff}{\oa}(0)~,
\end{align}
it suffices to show that 
\begin{align}
   \charwpn{\guard}{\cc_1}{\fg \cdot \ff}{\oc}(0) \eeq \fg \cdot \charwpn{\guard}{\cc_1}{\ff}{\oc}(0) \quad \forall ~ \text{ordinals} ~ \oc~.
\end{align}
We proceed by transfinite induction on $\oc$. \\ \\
\noindent
\emph{The case $\oc = 0$}. This case is trivial since
\begin{align}
   &\charwpn{\guard}{\cc_1}{\fg \cdot \ff}{0}(0) \\
   \eeqtag{By definition}
   & 0 \\
   \eeqtag{By definition}
   &\fg \cdot \charwpn{\guard}{\cc_1}{\ff}{0}(0)~.
\end{align}
\emph{The case $\oc$ successor ordinal.} We have
\begin{align}
   &\charwpn{\guard}{\cc_1}{\fg \cdot \ff}{\oc + 1}(0) \\
   \eeqtag{By definition}
   & \charwp{\guard}{\cc_1}{\fg \cdot \ff}(\charwpn{\guard}{\cc_1}{\fg \cdot \ff}{\oc}(0)) \\
   \eeqtag{I.H.\ on $\oc$}
   & \charwp{\guard}{\cc_1}{\fg \cdot \ff}(\fg \cdot \charwpn{\guard}{\cc_1}{\ff}{\oc}(0)) \\
   \eeqtag{By definition}
   &\iverson{\neg\guard} \cdot \fg \cdot \ff + \iverson{\guard} \cdot \wp{\cc_1}{\fg \cdot \charwpn{\guard}{\cc_1}{\ff}{\oc}(0)} \\
   \eeqtag{I.H.\ on $\cc_1$}
   &\iverson{\neg\guard} \cdot \fg \cdot \ff + \iverson{\guard} \cdot \fg \cdot \wp{\cc_1}{\charwpn{\guard}{\cc_1}{\ff}{\oc}(0)} \\
   \eeqtag{Algebra}
   &\fg \cdot \big( \iverson{\neg\guard} \cdot \ff + \iverson{\guard} \cdot \wp{\cc_1}{\charwpn{\guard}{\cc_1}{\ff}{\oc}(0)} \big) \\
   \eeqtag{By definition}
   &\fg \cdot \charwp{\guard}{\cc_1}{\ff} ( \charwpn{\guard}{\cc_1}{\ff}{\oc}(0) ) \\
   \eeqtag{By definition}
   &\fg \cdot \charwpn{\guard}{\cc_1}{\ff}{\oc + 1}(0)~.
\end{align}
\emph{The case $\oc$ limit ordinal.} We have
\begin{align}
   &\charwpn{\guard}{\cc_1}{\fg \cdot \ff}{\oc}(0) \\
   \eeqtag{By definition}
   &\sup_{\ob < \oc} \charwpn{\guard}{\cc_1}{\fg \cdot \ff}{\ob}(0) \\
   \eeqtag{I.H.\ on $\ob$}
   &\sup_{\ob < \oc} \, \big(  \fg \cdot \charwpn{\guard}{\cc_1}{\ff}{\ob}(0) \big) \\
   \eeqtag{$\fg$ does not depend on $\ob$}
   &\fg \cdot \sup_{\ob < \oc} \charwpn{\guard}{\cc_1}{\ff}{\ob}(0) \\
   \eeqtag{By definition}
   &\fg \cdot \charwpn{\guard}{\cc_1}{\ff}{\oc}(0)~.
\end{align}
\end{proof}
\begin{lemma}
   \label{lem:wand-pure-expectation-on-rhs}
   Let $\ff \in \E$ be pure and let $\ee,\ee'$ be arithmetic expressions. We have
   \begin{align*}
      \singleton{\ee}{\ee'} \sepimp \ff \eeq \containsPointer{\ee}{-} \cdot \infty + (1 - \containsPointer{\ee}{-}) \cdot \ff
   \end{align*}
\end{lemma}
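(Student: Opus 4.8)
The statement to prove is that for a pure expectation $\ff \in \E$,
\[
   \singleton{\ee}{\ee'} \sepimp \ff \eeq \containsPointer{\ee}{-} \cdot \infty + (1 - \containsPointer{\ee}{-}) \cdot \ff~.
\]
The plan is to unfold the definition of $\sepimp$ pointwise and then split on whether or not the address $\sk(\ee)$ is already allocated in the heap, i.e.\ on the two values of $\containsPointer{\ee}{-}(\sk,\hh) \in \{0,1\}$. In the first case the set of admissible heap extensions is empty, so the infimum yields $\infty$; in the second case the extension is uniquely determined and, since $\ff$ is pure, the value of $\ff$ on the extended heap coincides with its value on the original heap.

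First I would fix a state $(\sk,\hh)$ and expand
\[
   \bigl(\singleton{\ee}{\ee'} \sepimp \ff\bigr)(\sk,\hh) \eeq \inf_{\hh'} \setcomp{\ff(\sk,\hh \sepcon \hh')}{\hh' \disjoint \hh ~\text{and}~ (\sk,\hh') \models \singleton{\ee}{\ee'}}~.
\]
The key observation is that $(\sk,\hh') \models \singleton{\ee}{\ee'}$ holds exactly for the single heap $\hh' = \{\sk(\ee) \mapsto \sk(\ee')\}$ (assuming $\sk(\ee) \in \Nats_{>0}$; otherwise no such $\hh'$ exists at all). Hence the index set of the infimum has at most one element, and it is nonempty iff that single heap is disjoint from $\hh$, i.e.\ iff $\sk(\ee) \notin \dom{\hh}$, i.e.\ iff $\containsPointer{\ee}{-}(\sk,\hh) = 0$. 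This is essentially the same case analysis already carried out for $\heapSize$ in the proof of Theorem~\ref{thm:qsl:heap-size}.\ref{thm:qsl:heap-size:sepimp} and in Lemma~\ref{thm:misc:sepimp-sepcon}.

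Then I would treat the two cases. If $\containsPointer{\ee}{-}(\sk,\hh) = 1$, the index set is empty, so $\inf \emptyset = \infty$, which matches the right-hand side since $\containsPointer{\ee}{-}(\sk,\hh) \cdot \infty + (1 - \containsPointer{\ee}{-}(\sk,\hh)) \cdot \ff(\sk,\hh) = \infty$. If $\containsPointer{\ee}{-}(\sk,\hh) = 0$, the infimum is over the singleton $\{\ff(\sk,\hh \sepcon \{\sk(\ee) \mapsto \sk(\ee')\})\}$, and purity of $\ff$ gives $\ff(\sk,\hh \sepcon \{\sk(\ee) \mapsto \sk(\ee')\}) = \ff(\sk,\hh)$, which again matches the right-hand side. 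Combining both cases yields the claimed equality. I would either present this as a short chain of equalities using the case-split factorization $\lambda(\sk,\hh)\mydot \containsPointer{\ee}{-}(\sk,\hh) \cdot (\dots) + (1-\containsPointer{\ee}{-}(\sk,\hh)) \cdot (\dots)$ (as in the proof of Theorem~\ref{thm:qsl:heap-size}.\ref{thm:qsl:heap-size:sepimp}), or spell out the two cases explicitly.

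\textbf{Main obstacle.} There is no real obstacle here; the only point requiring a little care is the bookkeeping of the corner case $\sk(\ee) \notin \Nats_{>0}$ (where $\singleton{\ee}{\ee'}$ is unsatisfiable for structural reasons and the infimum is $\infty$, consistent with $\containsPointer{\ee}{-}(\sk,\hh) = 0$ only if we also note that then $\ff$ would still appear — but in fact $\sk(\ee) \notin \dom{\hh}$ holds and the extension heap simply does not exist, so this subcase also gives $\infty$; one must verify the right-hand side evaluates to $\infty$ there too, which it does not unless $\containsPointer{\ee}{-}(\sk,\hh)=1$). I would handle this by observing that when $\sk(\ee)\notin\Nats_{>0}$ the statement should really be read with the convention already used throughout the paper that $\singleton{\ee}{\ee'}$ is then the everywhere-$0$ predicate, so no admissible $\hh'$ exists and both sides are $\infty$; alternatively, restrict attention to $\sk(\ee) \in \Nats_{>0}$, which is the only interesting case and the one actually used in Appendix~\ref{app:list-length}. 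Everything else is routine pointwise reasoning.
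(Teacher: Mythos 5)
Your proof is correct and follows essentially the same route as the paper's: a pointwise case split on whether $\sk(\ee) \in \dom{\hh}$ (equivalently $\containsPointer{\ee}{-}(\sk,\hh) \in \{0,1\}$), with $\inf\emptyset = \infty$ in the allocated case and purity of $\ff$ collapsing the singleton infimum to $\ff(\sk,\hh)$ in the other. As for your corner case, the paper's own proof silently makes the assumption you fall back on (its unallocated case simply asserts that a disjoint $\hh'$ with $(\sk,\hh') \models \singleton{\ee}{\ee'}$ exists, which requires $\sk(\ee) \in \Nats_{>0}$), so restricting to valid addresses is the right resolution --- only note that your remark that ``both sides are $\infty$'' when $\sk(\ee) \notin \Nats_{>0}$ is not accurate under the paper's definitions, since there $\containsPointer{\ee}{-}(\sk,\hh) = 0$ and the right-hand side evaluates to $\ff(\sk,\hh)$.
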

\begin{proof}
   Let $(\sk,\hh ) \in \States$. We distinguish the cases $\sk (\ee) \in \dom{\hh}$ and $\sk (\ee) \not\in \dom{\hh}$.
   For the first case, we have
   \begin{align}
       &\big( \singleton{\ee}{\ee'} \sepimp \ff \big) (\sk,\hh ) \\
       \eeqtag{By assumption: $\sk (\ee) \in \dom{\hh}$}
       & \infty \\
       \eeqtag{$\containsPointer{\ee}{-}(\sk,\hh ) = 1$}
       & \big( \containsPointer{\ee}{-} \cdot \infty \big)(\sk,\hh ) \\
       \eeqtag{$(1 - \containsPointer{\ee}{-})(\sk,\hh ) = 0$}
       & \big( \containsPointer{\ee}{-} \cdot \infty + (1 - \containsPointer{\ee}{-}) \cdot \ff \big) (\sk,\hh ).
   \end{align}
   For the second case, i.e.\ $\sk (\ee) \not\in \dom{\hh}$, we get
   \begin{align}
      &\big( \singleton{\ee}{\ee'} \sepimp \ff \big) (\sk,\hh ) \\
      \eeqtag{Definition of $\sepimp$}
      & \inf_{\hh'}~ \setcomp{\ff(\sk, \hh \sepcon \hh')}{\hh' \disjoint \hh ~\textnormal{ and }~ (\sk, \hh') \models \singleton{\ee}{\ee'}} \\
      \eeqtag{$\ff$ is pure}
      & \inf_{\hh'}~ \setcomp{\ff(\sk, \hh )}{\hh' \disjoint \hh ~\textnormal{ and }~ (\sk, \hh') \models \singleton{\ee}{\ee'}} \\
      \eeqtag{$\sk (\ee) \not\in \dom{\hh}$ so there is a $\hh'$ with $\hh' \disjoint \hh ~\textnormal{ and }~ (\sk, \hh') \models \singleton{\ee}{\ee'}$}
      &\ff(\sk, \hh ) \\
      \eeqtag{$(1 - \containsPointer{\ee}{-})(\sk, \hh) = 1$}
      & \big( (1 - \containsPointer{\ee}{-}) \cdot \ff \big)(\sk, \hh) \\
      \eeqtag{$\containsPointer{\ee}{-}(\sk, \hh) = 0$}
      &\big( \containsPointer{\ee}{-} \cdot \infty + (1 - \containsPointer{\ee}{-}) \cdot \ff \big) (\sk, \hh).
   \end{align}
\end{proof}
\begin{lemma}
\label{lem:inf-over-addresses}
   Let $\ff \in \E$ and let $\ee$ be an arithmetic expression. We have
   \begin{align*}
      \displaystyle\inf_{v \in \AVAILLOC{\ee}} \containsPointer{v}{-} \cdot \infty + (1 - \containsPointer{v}{-}) \cdot \ff \eeq \ff
   \end{align*}
\end{lemma}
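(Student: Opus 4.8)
\textbf{Proof plan for Lemma~\ref{lem:inf-over-addresses}.}
The plan is to unfold the infimum pointwise and split it according to whether the candidate address lies in the heap's domain or not. Fix a state $(\sk,\hh)$ and recall that $\AVAILLOC{\ee}$ (evaluated at $(\sk,\hh)$) collects exactly those $v \in \PosNats$ with $v,v+1,\ldots,v+|\ee|-1 \notin \dom{\hh}$; for a single expression $\ee$ this is simply the set $\setcomp{v \in \PosNats}{v \notin \dom{\hh}}$. For such $v$ we have $\containsPointer{v}{-}(\sk,\hh) = 0$, so the summand $\containsPointer{v}{-} \cdot \infty$ vanishes and $(1-\containsPointer{v}{-}) \cdot \ff$ simplifies to $\ff(\sk,\hh)$, a value independent of $v$. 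Hence the infimum over $v \in \AVAILLOC{\ee}$ of the given expression is just $\ff(\sk,\hh)$ — provided the index set is nonempty.

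First I would observe that $\AVAILLOC{\ee}(\sk,\hh)$ is always nonempty: since $\dom{\hh}$ is finite (heaps are finite maps by definition of $\Heaps$), there are infinitely many $v \in \PosNats$ with $v \notin \dom{\hh}$. This is the only place where finiteness of the heap is used, and it is exactly what rules out the degenerate $\inf \emptyset = \infty$ outcome. After establishing nonemptiness, the computation is a short chain: for every $v$ in the index set, the evaluated expression equals $\ff(\sk,\hh)$, so the pointwise infimum equals $\ff(\sk,\hh)$, and since $(\sk,\hh)$ was arbitrary the equality of expectations follows.

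I do not expect any real obstacle here; the statement is essentially a bookkeeping consequence of the fact that there is always a free address available and that the ``memory fault penalty'' term $\containsPointer{v}{-}\cdot\infty$ is switched off precisely for the addresses over which we take the infimum. The one subtlety worth writing out carefully is the nonemptiness argument, since without it the lemma would be false (the infimum over the empty set would be $\infty$, not $\ff$). I would present the proof as: fix $(\sk,\hh)$; note $\AVAILLOC{\ee}(\sk,\hh) \neq \emptyset$ because $\dom{\hh}$ is finite; for each $v \in \AVAILLOC{\ee}(\sk,\hh)$ use $\containsPointer{v}{-}(\sk,\hh)=0$ to rewrite the summand as $\ff(\sk,\hh)$; conclude the infimum is $\ff(\sk,\hh)$; and finally lift to an equality of expectations.
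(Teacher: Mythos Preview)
Your approach is essentially the paper's: use finiteness of $\dom{\hh}$ to guarantee a free address exists, and observe that at such an address the summand collapses to $\ff(\sk,\hh)$. One correction on notation: in this paper the macro $\AVAILLOC{\ee}$ simply expands to $\PosNats$ (the earlier state-dependent definition you describe was removed), so the infimum ranges over \emph{all} positive addresses, not only the free ones. Consequently your claim ``for every $v$ in the index set, $\containsPointer{v}{-}(\sk,\hh)=0$'' is not literally true; for $v\in\dom{\hh}$ the summand evaluates to $\infty$. This is harmless---those terms are $\geq \ff(\sk,\hh)$ and hence do not lower the infimum---but you should add that one-line observation to close the argument. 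The paper's proof is equally terse on this point: it simply picks a free $v$ and declares equality, leaving the ``other addresses give $\infty$'' direction implicit.
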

\begin{proof}
Since for every state $(\sk ,\hh )$, the domain of $\hh$ is finite, i.e.\ $|\dom{\hh}|< \infty$, there is an address $v$ such that $(1 - \containsPointer{v}{-})(\sk ,\hh ) = 1$.
Hence, it holds that
\begin{align}
    &\big( \displaystyle\inf_{v \in \AVAILLOC{\ee}} \containsPointer{v}{-} \cdot \infty + (1 - \containsPointer{v}{-}) \cdot \ff \big) (\sk, \hh) \\
    \eeqtag{Choose $v$ such that $(1 - \containsPointer{v}{-})(\sk ,\hh ) = 1$}
    &\big( (1 - \containsPointer{v}{-}) \cdot \ff \big) (\sk, \hh) \\
    \eeqtag{$(1 - \containsPointer{v}{-})(\sk ,\hh ) = 1$}
    &\ff(\sk, \hh).
\end{align}
\end{proof}

%

\end{document}